\newcounter{sarrow}
\newcounter{sarrow1}
\newcommand\xnrsquigarrow[1]{%
\stepcounter{sarrow1}%
\mathrel{\begin{tikzpicture}[baseline= {( $ (current bounding box.south) + (0,-0.5ex) $ )}]
\node[inner sep=.5ex] (\thesarrow) {$\scriptstyle #1$};
\path[draw,<-,decorate,
  decoration={zigzag,amplitude=0.7pt,segment length=1.2mm,pre=lineto,pre length=4pt}]
    (\thesarrow1.south east) -- (\thesarrow1.south west);
    $\slashedarrowfill@\relbar\relbar/$
    \end{tikzpicture}}%
}
\def\slashedarrowfill@#1#2#3#4#5{%
  $\m@th\thickmuskip0mu\medmuskip\thickmuskip\thinmuskip\thickmuskip
   \relax#5#1\mkern-7mu%
   \cleaders\hbox{$#5\mkern-2mu#2\mkern-2mu$}\hfill
   \mathclap{#3}\mathclap{#2}%
   \cleaders\hbox{$#5\mkern-2mu#2\mkern-2mu$}\hfill
   \mkern-7mu#4$%
}
\def\rightslashedarrowfillb@{%
  \slashedarrowfill@\relbar\relbar/\rightarrow}
\newcommand\xnrightarrow[2][]{%
  \ext@arrow 0055{\rightslashedarrowfillb@}{#1}{#2}}
\def\rightslashedarrowfille@{%
  \slashedarrowfill@\relbar\relbar/\twoheadrightarrow}
\newcommand\xntworightarrow[2][]{%
  \ext@arrow 0055{\rightslashedarrowfille@}{#1}{#2}}
\def\rightslashedarrowfillg@{%
  \slashedarrowfill@\relbar\relbar{\raisebox{.12em}{}}\twoheadrightarrow}
\newcommand\xtworightarrow[2][]{%
  \ext@arrow 0055{\rightslashedarrowfillg@}{#1}{#2}}
\def\rightslashedarrowfillx@{%
  \slashedarrowfill@\Relbar\Relbar/\rightrightarrows}
\newcommand\xnTworightarrow[2][]{%
  \ext@arrow 0055{\rightslashedarrowfillx@}{#1}{#2}}
\def\rightslashedarrowfilly@{%
  \slashedarrowfill@\Relbar\Relbar{\raisebox{.12em}{}}\rightrightarrows}
\newcommand\xTworightarrow[2][]{%
  \ext@arrow 0055{\rightslashedarrowfilly@}{#1}{#2}}
\tikzset{nomorepostaction/.code=\let\tikz@postactions\pgfutil@empty}
\newtheorem{theorem}{Theorem}[section]
\newtheorem{definition}[theorem]{Definition}
\begin{document}
%%% tile page %%%%%%%%%%%%%%%%%%%%%%%%%%%%%%%%%%%%%%%%%%%%%%%%%%%%%%%%%%%%%%%%%

\begin{titlepage}
\thispagestyle{empty}

\hrule
\begin{center}
{\bf\LARGE Secure Process Algebra}

\vspace{0.7cm}
--- Yong Wang ---

\vspace{2cm}
\begin{figure}[!htbp]
 \centering
 \includegraphics[width=1.0\textwidth]{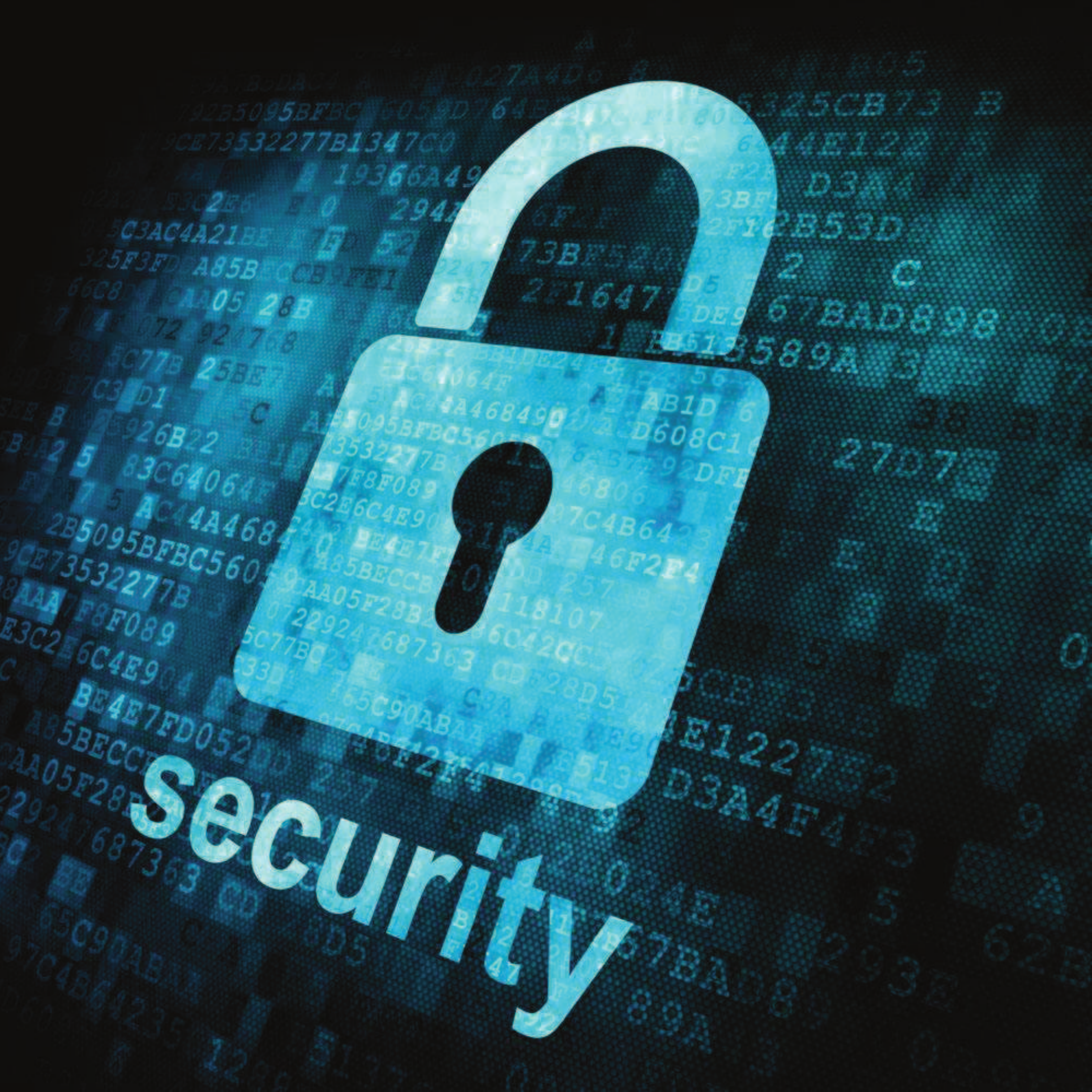}
\end{figure}

\end{center}
\end{titlepage}

\newpage %%%%%%%%%%%%%%%%%%%%%%%%%%%%%%%%%%%%%%%%%%%%%%%%%%%%%%%%%%%%%%%%%%%%%%

\setcounter{page}{1}\pagenumbering{roman}

\tableofcontents

\newpage

\setcounter{page}{1}\pagenumbering{arabic}

        \section{Introduction}\label{intro}

A security protocol \cite{SP} includes some computational operations, some cryptographic operations (for examples, symmetric encryption/decryption, asymmetric encryption/decryption, hash
function, digital signatures, message authentication codes, random sequence generations, XOR operations, etc), some communication operations to exchanging data, and also the computational
logics among these operations.

Design a perfectly practical security protocol is a quite complex task, because of the open network environments and the complex security requirements against various known and unknown
attacks. How to design a security protocol usually heavenly depends on the experiences of security engineering. More for experiences, formal verifications can be used in the design of
security protocols to satisfy the main goal of the security protocol.

There are many formal verification tools to support the verifications of security protocols, such as BAN logic \cite{BAN} and those works based on process algebra. In the work based on process
algebra, there are works based on pi-calculus, such as spi-calculus \cite{SPI} and the applied pi-calculus \cite{AppPI}. The work based on process algebra has some advantages: they describe
the security protocols in a programming style, and have firmly theoretic foundations.

Based on our previous work on truly concurrent process algebras APTC \cite{APTC}, we use it to verify the security protocols. This work (called Secure APTC, abbreviated SAPTC) have the following
advantages in verifying security protocols:

\begin{enumerate}
  \item It has a firmly theoretic foundations, including equational logics, structured operational semantics, and axiomatizations between them;
  \item It has rich expressive powers to describe security protocols. Cryptographic operations are modeled as atomic actions and can be extended, explicit parallelism and communication
  mechanism to modeling communication operations and principals, rich computational properties to describing computational logics in the security protocols, including conditional guards,
  alternative composition, sequential composition, parallelism and communication, encapsulation and deadlock, recursion, abstraction.
  \item Especially by abstraction, it is convenient and obvious to observe the relations between the inputs and outputs of a security protocol, including the relations without any attack,
  the relations under each known attack, and the relations under unknown attacks if the unknown attacks can be described.
\end{enumerate}

This manuscript is organized as follows. In chapter \ref{aptc} and \ref{dmaptc}, we briefly introduce truly concurrent process algebra APTC and data manipulation in APTC. We extend APTC to
SAPTC to describe cryptographic properties in chapter \ref{saptc}. Then we introduce the cases of verifying security protocols, including key exchange related protocols in chapter
\ref{ke}, authentication protocols in chapter \ref{auth}, key exchange and authentication mixed protocols in chapter \ref{aopp}, other protocols in chapter \ref{aoop}, digital cash protocols in chapter \ref{aodcp},
and secure elections protocols in chapter \ref{aosep}.

\newpage\section{Truly Concurrent Process Algebra}\label{aptc}

In this chapter, we introduce the preliminaries on truly concurrent process algebra \cite{APTC}, which is based on truly concurrent operational semantics.

APTC eliminates the differences of structures of transition system, event structure, etc, and discusses their behavioral equivalences. It considers that there are two kinds of causality
relations: the chronological order modeled by the sequential composition and the causal order between different parallel branches modeled by the communication merge. It also considers
that there exist two kinds of confliction relations: the structural confliction modeled by the alternative composition and the conflictions in different parallel branches which should
be eliminated. Based on conservative extension, there are four modules in APTC: BATC (Basic Algebra for True Concurrency), APTC (Algebra for Parallelism in True Concurrency), recursion
and abstraction.

\subsection{Operational Semantics}\label{OS}

The semantics of $ACP$ is based on bisimulation/rooted branching bisimulation equivalences, and the modularity of $ACP$ relies on the concept of conservative extension, for the conveniences, we introduce some concepts and conclusions on them.

\begin{definition}[Bisimulation]
A bisimulation relation $R$ is a binary relation on processes such that: (1) if $p R q$ and $p\xrightarrow{a}p'$ then $q\xrightarrow{a}q'$ with $p' R q'$; (2) if $p R q$ and $q\xrightarrow{a}q'$ then $p\xrightarrow{a}p'$ with $p' R q'$; (3) if $p R q$ and $pP$, then $qP$; (4) if $p R q$ and $qP$, then $pP$. Two processes $p$ and $q$ are bisimilar, denoted by $p\sim_{HM} q$, if there is a bisimulation relation $R$ such that $p R q$.
\end{definition}

\begin{definition}[Congruence]
Let $\Sigma$ be a signature. An equivalence relation $R$ on $\mathcal{T}(\Sigma)$ is a congruence if for each $f\in\Sigma$, if $s_i R t_i$ for $i\in\{1,\cdots,ar(f)\}$, then $f(s_1,\cdots,s_{ar(f)}) R f(t_1,\cdots,t_{ar(f)})$.
\end{definition}

\begin{definition}[Branching bisimulation]
A branching bisimulation relation $R$ is a binary relation on the collection of processes such that: (1) if $p R q$ and $p\xrightarrow{a}p'$ then either $a\equiv \tau$ and $p' R q$ or there is a sequence of (zero or more) $\tau$-transitions $q\xrightarrow{\tau}\cdots\xrightarrow{\tau}q_0$ such that $p R q_0$ and $q_0\xrightarrow{a}q'$ with $p' R q'$; (2) if $p R q$ and $q\xrightarrow{a}q'$ then either $a\equiv \tau$ and $p R q'$ or there is a sequence of (zero or more) $\tau$-transitions $p\xrightarrow{\tau}\cdots\xrightarrow{\tau}p_0$ such that $p_0 R q$ and $p_0\xrightarrow{a}p'$ with $p' R q'$; (3) if $p R q$ and $pP$, then there is a sequence of (zero or more) $\tau$-transitions $q\xrightarrow{\tau}\cdots\xrightarrow{\tau}q_0$ such that $p R q_0$ and $q_0P$; (4) if $p R q$ and $qP$, then there is a sequence of (zero or more) $\tau$-transitions $p\xrightarrow{\tau}\cdots\xrightarrow{\tau}p_0$ such that $p_0 R q$ and $p_0P$. Two processes $p$ and $q$ are branching bisimilar, denoted by $p\approx_{bHM} q$, if there is a branching bisimulation relation $R$ such that $p R q$.
\end{definition}

\begin{definition}[Rooted branching bisimulation]
A rooted branching bisimulation relation $R$ is a binary relation on processes such that: (1) if $p R q$ and $p\xrightarrow{a}p'$ then $q\xrightarrow{a}q'$ with $p'\approx_{bHM} q'$; (2) if $p R q$ and $q\xrightarrow{a}q'$ then $p\xrightarrow{a}p'$ with $p'\approx_{bHM} q'$; (3) if $p R q$ and $pP$, then $qP$; (4) if $p R q$ and $qP$, then $pP$. Two processes $p$ and $q$ are rooted branching bisimilar, denoted by $p\approx_{rbHM} q$, if there is a rooted branching bisimulation relation $R$ such that $p R q$.
\end{definition}

\begin{definition}[Conservative extension]
Let $T_0$ and $T_1$ be TSSs (transition system specifications) over signatures $\Sigma_0$ and $\Sigma_1$, respectively. The TSS $T_0\oplus T_1$ is a conservative extension of $T_0$ if the LTSs (labeled transition systems) generated by $T_0$ and $T_0\oplus T_1$ contain exactly the same transitions $t\xrightarrow{a}t'$ and $tP$ with $t\in \mathcal{T}(\Sigma_0)$.
\end{definition}

\begin{definition}[Source-dependency]
The source-dependent variables in a transition rule of $\rho$ are defined inductively as follows: (1) all variables in the source of $\rho$ are source-dependent; (2) if $t\xrightarrow{a}t'$ is a premise of $\rho$ and all variables in $t$ are source-dependent, then all variables in $t'$ are source-dependent. A transition rule is source-dependent if all its variables are. A TSS is source-dependent if all its rules are.
\end{definition}

\begin{definition}[Freshness]
Let $T_0$ and $T_1$ be TSSs over signatures $\Sigma_0$ and $\Sigma_1$, respectively. A term in $\mathbb{T}(T_0\oplus T_1)$ is said to be fresh if it contains a function symbol from $\Sigma_1\setminus\Sigma_0$. Similarly, a transition label or predicate symbol in $T_1$ is fresh if it does not occur in $T_0$.
\end{definition}

\begin{theorem}[Conservative extension]\label{TCE}
Let $T_0$ and $T_1$ be TSSs over signatures $\Sigma_0$ and $\Sigma_1$, respectively, where $T_0$ and $T_0\oplus T_1$ are positive after reduction. Under the following conditions, $T_0\oplus T_1$ is a conservative extension of $T_0$. (1) $T_0$ is source-dependent. (2) For each $\rho\in T_1$, either the source of $\rho$ is fresh, or $\rho$ has a premise of the form $t\xrightarrow{a}t'$ or $tP$, where $t\in \mathbb{T}(\Sigma_0)$, all variables in $t$ occur in the source of $\rho$ and $t'$, $a$ or $P$ is fresh.
\end{theorem}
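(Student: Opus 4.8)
The plan is to prove this by exhibiting an explicit bisimulation between the LTSs generated by $T_0$ and by $T_0 \oplus T_1$, restricted to terms in $\mathcal{T}(\Sigma_0)$. Since conservative extension requires that exactly the same transitions $t \xrightarrow{a} t'$ and predicates $tP$ hold for closed terms $t \in \mathcal{T}(\Sigma_0)$, I would establish two inclusions. The easy direction is that every transition provable in $T_0$ is also provable in $T_0 \oplus T_1$: this is immediate because adding rules can only add provable transitions, so $T_0$'s proofs carry over verbatim. The substantive direction is the converse — that no \emph{new} transitions on old terms are introduced by the rules of $T_1$.

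First I would set up the induction on the structure of a proof (well-founded derivation tree) of a transition $t \xrightarrow{a} t'$ or a predicate $tP$ in $T_0 \oplus T_1$, where $t \in \mathcal{T}(\Sigma_0)$. The goal is to show that the last rule applied must in fact belong to $T_0$, and that all its premises again concern $\Sigma_0$-terms, so the induction hypothesis applies and the whole proof can be replayed inside $T_0$. The key case analysis is on which rule $\rho$ concludes the derivation. Since $t$ is a $\Sigma_0$-term, its head symbol lies in $\Sigma_0$, so the source of $\rho$ cannot be fresh. By condition (2), $\rho$ must therefore have a premise of the form $t_1 \xrightarrow{a_1} t_1'$ or $t_1 P_1$ with $t_1 \in \mathbb{T}(\Sigma_0)$ and all variables of $t_1$ occurring in the source of $\rho$ and in $t_1'$, $a_1$, or $P_1$ fresh. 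Here is where source-dependency of $T_0$ (condition (1)) enters: matching the source of $\rho$ against the $\Sigma_0$-term $t$ instantiates the source variables to $\Sigma_0$-terms, and source-dependency propagates this so that every variable in the premise is bound to a $\Sigma_0$-term.

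The main obstacle — and the reason the hypotheses are arranged as they are — is the interaction between freshness and the premises of $T_1$'s rules. If $\rho \in T_1$ had its source instantiated to a $\Sigma_0$-term yet concluded a genuinely new transition, the fresh premise forces a contradiction: the premise's label or predicate $a_1/P_1$ is fresh, hence cannot hold of the $\Sigma_0$-term $t_1$ in the LTS of $T_0$; but by the induction hypothesis any derivable transition or predicate on a $\Sigma_0$-term reduces to a $T_0$-derivation, which can only carry non-fresh labels. This rules out $\rho \in T_1$ as the concluding rule for an old-term transition, forcing $\rho \in T_0$. I would handle the ``positive after reduction'' hypothesis carefully here, since it is what guarantees the derivations are well-founded and the negative premises (if any) can be eliminated so that this inductive argument on positive proof trees is valid; this is the technical point most likely to require attention rather than the structural induction itself.

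Having shown the concluding rule lies in $T_0$ with all premises on $\Sigma_0$-terms, I would close the induction by invoking the induction hypothesis on each premise to obtain $T_0$-derivations, then reassembling them under $\rho \in T_0$ to get a $T_0$-proof of $t \xrightarrow{a} t'$ (resp.\ $tP$). Combined with the easy forward inclusion, this yields that $T_0$ and $T_0 \oplus T_1$ generate exactly the same transitions and predicates on $\mathcal{T}(\Sigma_0)$, which is precisely the defining property of a conservative extension.
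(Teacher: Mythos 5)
The paper does not actually prove this theorem: Theorem \ref{TCE} is quoted as a preliminary from the standard SOS literature, so there is no in-paper proof to compare yours against. On its own merits, your argument follows the standard route (induction on well-founded proof trees of transitions with closed $\Sigma_0$-sources; freshness of either the source or a premise excludes rules of $T_1$ as the last rule applied; source-dependency of $T_0$ lets the induction hypothesis propagate through the premises of a $T_0$-rule), and that skeleton is sound. One cosmetic point: conservative extension asserts that the two LTSs have \emph{identical} transitions and predicates on $\mathcal{T}(\Sigma_0)$, not merely bisimilar ones, so the opening promise to ``exhibit a bisimulation'' should be replaced by the two set inclusions you in fact argue.

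The genuine gap is your treatment of the forward inclusion. You claim it is ``immediate because adding rules can only add provable transitions, so $T_0$'s proofs carry over verbatim.'' For TSSs with negative premises this is false: provability is not monotone under adding rules, because a rule of $T_0$ with a negative premise $u\nrightarrow^{a}$ may fire in $T_0$ yet be blocked in $T_0\oplus T_1$ if the new rules prove some $u\xrightarrow{a}u''$. This is precisely the situation the ``positive after reduction'' hypothesis exists to control, and it is why the standard proof establishes the two inclusions simultaneously (or derives the forward one from the backward one): the negative premises occurring in a $T_0$-proof concern closed $\Sigma_0$-terms, and they continue to hold in $T_0\oplus T_1$ exactly because the backward direction shows that no new transitions on $\Sigma_0$-terms become derivable. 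You correctly flag ``positive after reduction'' as the delicate point for the backward induction, but the same issue already invalidates the ``verbatim'' forward direction; as written, that half of the proof does not stand on its own and must be folded into the main induction.
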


\subsection{Proof Techniques}\label{PT}

In this subsection, we introduce the concepts and conclusions about elimination, which is very important in the proof of completeness theorem.

\begin{definition}[Elimination property]
Let a process algebra with a defined set of basic terms as a subset of the set of closed terms over the process algebra. Then the process algebra has the elimination to basic terms property if for every closed term $s$ of the algebra, there exists a basic term $t$ of the algebra such that the algebra$\vdash s=t$.
\end{definition}

\begin{definition}[Strongly normalizing]
A term $s_0$ is called strongly normalizing if does not an infinite series of reductions beginning in $s_0$.
\end{definition}

\begin{definition}
We write $s>_{lpo} t$ if $s\rightarrow^+ t$ where $\rightarrow^+$ is the transitive closure of the reduction relation defined by the transition rules of an algebra.
\end{definition}

\begin{theorem}[Strong normalization]\label{SN}
Let a term rewriting system (TRS) with finitely many rewriting rules and let $>$ be a well-founded ordering on the signature of the corresponding algebra. If $s>_{lpo} t$ for each rewriting rule $s\rightarrow t$ in the TRS, then the term rewriting system is strongly normalizing.
\end{theorem}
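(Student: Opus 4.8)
The plan is to reduce strong normalization to the well-foundedness of the order $>_{lpo}$ by showing that every single reduction step is strictly $>_{lpo}$-decreasing. First I would recall the standard factorization of a rewrite step: whenever $u\rightarrow v$ in the TRS, there is a rewriting rule $s\rightarrow t$, a context $C[\cdot]$, and a substitution $\sigma$ such that $u\equiv C[\sigma(s)]$ and $v\equiv C[\sigma(t)]$. Thus, in order to conclude that $u>_{lpo}v$ for every step, it suffices to establish two closure properties of $>_{lpo}$ together with the hypothesis $s>_{lpo}t$ for each rule: (i) \emph{stability}, i.e. $s>_{lpo}t$ implies $\sigma(s)>_{lpo}\sigma(t)$ for every substitution $\sigma$; and (ii) \emph{monotonicity}, i.e. $p>_{lpo}q$ implies $C[p]>_{lpo}C[q]$ for every context $C[\cdot]$. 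Granting (i) and (ii), the assumption on the rules gives $\sigma(s)>_{lpo}\sigma(t)$ and hence $C[\sigma(s)]>_{lpo}C[\sigma(t)]$, so $u\rightarrow v$ entails $u>_{lpo}v$.

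The two closure properties I would prove by induction on the structure of terms, following the recursive clauses defining the lexicographic path order from the well-founded precedence $>$ on the (finite) signature. Stability is routine because the defining clauses are preserved when variables are replaced by terms: substitution commutes with the subterm and head-symbol comparisons that constitute the definition of $>_{lpo}$. Monotonicity reduces, by an easy induction on the depth of the hole in $C[\cdot]$, to the one-step case of comparing $f(\dots,p,\dots)$ with $f(\dots,q,\dots)$ under a single function symbol $f$, which is handled directly by the clause of $>_{lpo}$ that compares arguments of equal head symbols. These inductions are entirely mechanical once the defining clauses are written out, and I would not belabor them.

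The genuine obstacle is the \emph{well-foundedness} of $>_{lpo}$ itself, which is what ultimately makes the argument go through. Here I would invoke the classical route: assuming toward a contradiction an infinite $>_{lpo}$-descending chain, pass to a \emph{minimal bad sequence} (minimal at each position with respect to term size), and then apply Kruskal's tree theorem, whose hypotheses are met because the signature is finite and the precedence $>$ is well-founded. The homeomorphic embedding guaranteed by Kruskal's theorem contradicts the properties of a minimal bad sequence with respect to $>_{lpo}$, ruling out any infinite descending chain. With well-foundedness in hand, the theorem follows immediately: an infinite reduction sequence $s_0\rightarrow s_1\rightarrow\cdots$ would, by the first two paragraphs, yield an infinite descending chain $s_0>_{lpo}s_1>_{lpo}\cdots$, which is impossible. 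Hence no term admits an infinite series of reductions, and by the Definition of strong normalization the TRS is strongly normalizing. I expect the Kruskal/minimal-bad-sequence step to be the technical heart of the argument, whereas stability, monotonicity, and the final contradiction are straightforward.
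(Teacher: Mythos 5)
The paper states Theorem \ref{SN} as a background result and gives no proof of it at all, so there is no in-paper argument to compare yours against; the result is simply imported from the standard term-rewriting literature. Your proposal is the classical proof of that standard result (essentially Dershowitz's theorem on recursive path orderings): factor each rewrite step as $C[\sigma(s)]\rightarrow C[\sigma(t)]$, use stability under substitution and monotonicity under contexts to make every step strictly $>_{lpo}$-decreasing, and obtain well-foundedness of $>_{lpo}$ from Kruskal's tree theorem via a minimal bad sequence. That is a correct and complete plan, and you have rightly identified the Kruskal step as the technical heart while treating the two closure properties as routine structural inductions.

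One caveat: everything in your argument presupposes the usual recursive, syntactic definition of the lexicographic path order generated by the well-founded precedence $>$ on the signature (head-symbol comparison, subterm clauses, lexicographic comparison of arguments under equal head symbols). The paper's own Definition of $>_{lpo}$ is different --- it declares $s>_{lpo}t$ to mean $s\rightarrow^{+}t$, the transitive closure of the reduction relation. Under that literal reading the theorem is false (the one-rule TRS $a\rightarrow a$ satisfies the hypothesis yet is not strongly normalizing), and your stability, monotonicity and Kruskal inductions have nothing to attach to, since they proceed on the defining clauses of the genuine LPO. So your proof establishes the intended, standard theorem, but to make it fit this paper you would first have to replace the paper's Definition of $>_{lpo}$ with the standard recursive one.
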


\subsection{Basic Algebra for True Concurrency}

BATC has sequential composition $\cdot$ and alternative composition $+$ to capture the chronological ordered causality and the structural confliction. The constants are ranged over $A$,
the set of atomic actions. The algebraic laws on $\cdot$ and $+$ are sound and complete modulo truly concurrent bisimulation equivalences (including pomset bisimulation, step
bisimulation, hp-bisimulation and hhp-bisimulation).

\begin{definition}[Prime event structure with silent event]\label{PES}
Let $\Lambda$ be a fixed set of labels, ranged over $a,b,c,\cdots$ and $\tau$. A ($\Lambda$-labelled) prime event structure with silent event $\tau$ is a tuple
$\mathcal{E}=\langle \mathbb{E}, \leq, \sharp, \lambda\rangle$, where $\mathbb{E}$ is a denumerable set of events, including the silent event $\tau$. Let
$\hat{\mathbb{E}}=\mathbb{E}\backslash\{\tau\}$, exactly excluding $\tau$, it is obvious that $\hat{\tau^*}=\epsilon$, where $\epsilon$ is the empty event.
Let $\lambda:\mathbb{E}\rightarrow\Lambda$ be a labelling function and let $\lambda(\tau)=\tau$. And $\leq$, $\sharp$ are binary relations on $\mathbb{E}$,
called causality and conflict respectively, such that:

\begin{enumerate}
  \item $\leq$ is a partial order and $\lceil e \rceil = \{e'\in \mathbb{E}|e'\leq e\}$ is finite for all $e\in \mathbb{E}$. It is easy to see that
  $e\leq\tau^*\leq e'=e\leq\tau\leq\cdots\leq\tau\leq e'$, then $e\leq e'$.
  \item $\sharp$ is irreflexive, symmetric and hereditary with respect to $\leq$, that is, for all $e,e',e''\in \mathbb{E}$, if $e\sharp e'\leq e''$, then $e\sharp e''$.
\end{enumerate}

Then, the concepts of consistency and concurrency can be drawn from the above definition:

\begin{enumerate}
  \item $e,e'\in \mathbb{E}$ are consistent, denoted as $e\frown e'$, if $\neg(e\sharp e')$. A subset $X\subseteq \mathbb{E}$ is called consistent, if $e\frown e'$ for all
  $e,e'\in X$.
  \item $e,e'\in \mathbb{E}$ are concurrent, denoted as $e\parallel e'$, if $\neg(e\leq e')$, $\neg(e'\leq e)$, and $\neg(e\sharp e')$.
\end{enumerate}
\end{definition}

\begin{definition}[Configuration]
Let $\mathcal{E}$ be a PES. A (finite) configuration in $\mathcal{E}$ is a (finite) consistent subset of events $C\subseteq \mathcal{E}$, closed with respect to causality
(i.e. $\lceil C\rceil=C$). The set of finite configurations of $\mathcal{E}$ is denoted by $\mathcal{C}(\mathcal{E})$. We let $\hat{C}=C\backslash\{\tau\}$.
\end{definition}

A consistent subset of $X\subseteq \mathbb{E}$ of events can be seen as a pomset. Given $X, Y\subseteq \mathbb{E}$, $\hat{X}\sim \hat{Y}$ if $\hat{X}$ and $\hat{Y}$ are
isomorphic as pomsets. In the following of the paper, we say $C_1\sim C_2$, we mean $\hat{C_1}\sim\hat{C_2}$.

\begin{definition}[Pomset transitions and step]
Let $\mathcal{E}$ be a PES and let $C\in\mathcal{C}(\mathcal{E})$, and $\emptyset\neq X\subseteq \mathbb{E}$, if $C\cap X=\emptyset$ and $C'=C\cup X\in\mathcal{C}(\mathcal{E})$,
then $C\xrightarrow{X} C'$ is called a pomset transition from $C$ to $C'$. When the events in $X$ are pairwise concurrent, we say that $C\xrightarrow{X}C'$ is a step.
\end{definition}

\begin{definition}[Pomset, step bisimulation]\label{PSB}
Let $\mathcal{E}_1$, $\mathcal{E}_2$ be PESs. A pomset bisimulation is a relation $R\subseteq\mathcal{C}(\mathcal{E}_1)\times\mathcal{C}(\mathcal{E}_2)$, such that if
$(C_1,C_2)\in R$, and $C_1\xrightarrow{X_1}C_1'$ then $C_2\xrightarrow{X_2}C_2'$, with $X_1\subseteq \mathbb{E}_1$, $X_2\subseteq \mathbb{E}_2$, $X_1\sim X_2$ and $(C_1',C_2')\in R$,
and vice-versa. We say that $\mathcal{E}_1$, $\mathcal{E}_2$ are pomset bisimilar, written $\mathcal{E}_1\sim_p\mathcal{E}_2$, if there exists a pomset bisimulation $R$, such that
$(\emptyset,\emptyset)\in R$. By replacing pomset transitions with steps, we can get the definition of step bisimulation. When PESs $\mathcal{E}_1$ and $\mathcal{E}_2$ are step
bisimilar, we write $\mathcal{E}_1\sim_s\mathcal{E}_2$.
\end{definition}

\begin{definition}[Posetal product]
Given two PESs $\mathcal{E}_1$, $\mathcal{E}_2$, the posetal product of their configurations, denoted $\mathcal{C}(\mathcal{E}_1)\overline{\times}\mathcal{C}(\mathcal{E}_2)$,
is defined as

$$\{(C_1,f,C_2)|C_1\in\mathcal{C}(\mathcal{E}_1),C_2\in\mathcal{C}(\mathcal{E}_2),f:C_1\rightarrow C_2 \textrm{ isomorphism}\}.$$

A subset $R\subseteq\mathcal{C}(\mathcal{E}_1)\overline{\times}\mathcal{C}(\mathcal{E}_2)$ is called a posetal relation. We say that $R$ is downward closed when for any
$(C_1,f,C_2),(C_1',f',C_2')\in \mathcal{C}(\mathcal{E}_1)\overline{\times}\mathcal{C}(\mathcal{E}_2)$, if $(C_1,f,C_2)\subseteq (C_1',f',C_2')$ pointwise and $(C_1',f',C_2')\in R$,
then $(C_1,f,C_2)\in R$.

For $f:X_1\rightarrow X_2$, we define $f[x_1\mapsto x_2]:X_1\cup\{x_1\}\rightarrow X_2\cup\{x_2\}$, $z\in X_1\cup\{x_1\}$,(1)$f[x_1\mapsto x_2](z)=
x_2$,if $z=x_1$;(2)$f[x_1\mapsto x_2](z)=f(z)$, otherwise. Where $X_1\subseteq \mathbb{E}_1$, $X_2\subseteq \mathbb{E}_2$, $x_1\in \mathbb{E}_1$, $x_2\in \mathbb{E}_2$.
\end{definition}

\begin{definition}[(Hereditary) history-preserving bisimulation]\label{HHPB}
A history-preserving (hp-) bisimulation is a posetal relation $R\subseteq\mathcal{C}(\mathcal{E}_1)\overline{\times}\mathcal{C}(\mathcal{E}_2)$ such that if $(C_1,f,C_2)\in R$,
and $C_1\xrightarrow{e_1} C_1'$, then $C_2\xrightarrow{e_2} C_2'$, with $(C_1',f[e_1\mapsto e_2],C_2')\in R$, and vice-versa. $\mathcal{E}_1,\mathcal{E}_2$ are history-preserving
(hp-)bisimilar and are written $\mathcal{E}_1\sim_{hp}\mathcal{E}_2$ if there exists a hp-bisimulation $R$ such that $(\emptyset,\emptyset,\emptyset)\in R$.

A hereditary history-preserving (hhp-)bisimulation is a downward closed hp-bisimulation. $\mathcal{E}_1,\mathcal{E}_2$ are hereditary history-preserving (hhp-)bisimilar and are
written $\mathcal{E}_1\sim_{hhp}\mathcal{E}_2$.
\end{definition}

In the following, let $e_1, e_2, e_1', e_2'\in \mathbb{E}$, and let variables $x,y,z$ range over the set of terms for true concurrency, $p,q,s$ range over the set of closed terms.
The set of axioms of BATC consists of the laws given in Table \ref{AxiomsForBATC}.

\begin{center}
    \begin{table}
        \begin{tabular}{@{}ll@{}}
            \hline No. &Axiom\\
            $A1$ & $x+ y = y+ x$\\
            $A2$ & $(x+ y)+ z = x+ (y+ z)$\\
            $A3$ & $x+ x = x$\\
            $A4$ & $(x+ y)\cdot z = x\cdot z + y\cdot z$\\
            $A5$ & $(x\cdot y)\cdot z = x\cdot(y\cdot z)$\\
        \end{tabular}
        \caption{Axioms of BATC}
        \label{AxiomsForBATC}
    \end{table}
\end{center}

We give the operational transition rules of operators $\cdot$ and $+$ as Table \ref{TRForBATC} shows. And the predicate $\xrightarrow{e}\surd$ represents successful termination after
execution of the event $e$.

\begin{center}
    \begin{table}
        $$\frac{}{e\xrightarrow{e}\surd}$$
        $$\frac{x\xrightarrow{e}\surd}{x+ y\xrightarrow{e}\surd} \quad\frac{x\xrightarrow{e}x'}{x+ y\xrightarrow{e}x'} \quad\frac{y\xrightarrow{e}\surd}{x+ y\xrightarrow{e}\surd}
        \quad\frac{y\xrightarrow{e}y'}{x+ y\xrightarrow{e}y'}$$
        $$\frac{x\xrightarrow{e}\surd}{x\cdot y\xrightarrow{e} y} \quad\frac{x\xrightarrow{e}x'}{x\cdot y\xrightarrow{e}x'\cdot y}$$
        \caption{Transition rules of BATC}
        \label{TRForBATC}
    \end{table}
\end{center}

\begin{theorem}[Soundness of BATC modulo truly concurrent bisimulation equivalences]\label{SBATC}
The axiomatization of BATC is sound modulo truly concurrent bisimulation equivalences $\sim_{p}$, $\sim_{s}$, $\sim_{hp}$ and $\sim_{hhp}$. That is,

\begin{enumerate}
  \item let $x$ and $y$ be BATC terms. If BATC $\vdash x=y$, then $x\sim_{p} y$;
  \item let $x$ and $y$ be BATC terms. If BATC $\vdash x=y$, then $x\sim_{s} y$;
  \item let $x$ and $y$ be BATC terms. If BATC $\vdash x=y$, then $x\sim_{hp} y$;
  \item let $x$ and $y$ be BATC terms. If BATC $\vdash x=y$, then $x\sim_{hhp} y$.
\end{enumerate}

\end{theorem}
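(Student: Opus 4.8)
The plan is to prove soundness by the standard route for structural operational semantics: establish that each of the four truly concurrent bisimulation equivalences is a \emph{congruence} with respect to the BATC operators $\cdot$ and $+$, and then verify each axiom $A1$--$A5$ individually by exhibiting an explicit witnessing bisimulation. Because soundness means \textbf{provable equality implies semantic equivalence}, and provable equality is generated by the axioms under substitution, reflexivity, symmetry, transitivity, and congruence, it suffices to check that (i) each equivalence relation respects the operators, and (ii) the two sides of each axiom are related under the equivalence. Parts (1)--(4) of the theorem differ only in which equivalence is used, so the four cases run in parallel, and I would present the argument once and remark that it specializes to each of $\sim_p,\sim_s,\sim_{hp},\sim_{hhp}$.

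First I would address congruence. For each operator $f\in\{\cdot,+\}$ and each equivalence $\sim$ from the list, I must show that if $x_1\sim y_1$ and $x_2\sim y_2$ then $f(x_1,x_2)\sim f(y_1,y_2)$. The clean way is to invoke a congruence format result: the transition rules in Table~\ref{TRForBATC} are in a well-behaved format (each rule has a single operator in the source, variables as arguments, and the targets are built from those variables), so the induced equivalences are automatically congruences. If I want a self-contained argument instead, I would build the witnessing relation directly: given bisimulations $R_1$ witnessing $x_i\sim y_i$, form the relation $R=\{(f(s_1,s_2),f(t_1,t_2)) : s_i\,R_i\,t_i\}$ (closed suitably under the residual structure) and check the transfer conditions by inspecting which rules in Table~\ref{TRForBATC} can fire on $f(s_1,s_2)$. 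For the posetal (hp- and hhp-) cases the relation must additionally track the order-isomorphism $f$ between configurations, and for hhp I must verify downward closure is preserved.

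Next I would verify the individual axioms. For $A1$ (commutativity of $+$) and $A3$ (idempotence of $+$), the symmetry of the $+$-rules makes the identity-on-reachable-states relation (together with the diagonal) a bisimulation; $A2$ (associativity of $+$) is similar. For $A5$ (associativity of $\cdot$) I would exhibit the relation pairing $(x\cdot y)\cdot z$ with $x\cdot(y\cdot z)$ and all their matching residuals, checking that the two $\cdot$-rules produce identical labels and carry residuals into again-related pairs. Axiom $A4$, right distributivity $(x+y)\cdot z = x\cdot z + y\cdot z$, is the most substantive: I would set up a relation that matches an initial move on the left — which by the $+$- and $\cdot$-rules commits to either the $x$-branch or the $y$-branch and then continues with $z$ — against the corresponding branch on the right, verifying that successful-termination transitions $\xrightarrow{e}\surd$ on a summand correctly roll over to the continuation $z$ in both terms.

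The main obstacle I expect is the posetal cases, $\sim_{hp}$ and especially $\sim_{hhp}$: here bisimilarity is not merely a relation on configurations but carries the explicit isomorphism $f$ between the executed partial orders, so for each axiom I must track how $f$ extends under single-event moves (via the $f[e_1\mapsto e_2]$ update of Definition~\ref{HHPB}) and confirm that the causal/concurrency structure on the two sides genuinely agrees — in particular that $A4$ and $A5$ do not create or destroy any ordering or concurrency between events. For $\sim_{hhp}$ the extra burden is closing the witnessing posetal relation downward, which I would handle by taking the downward closure of the relation constructed for the hp-case and checking it remains an hp-bisimulation. The pomset and step cases ($\sim_p,\sim_s$) are comparatively routine once congruence is in hand, since a pomset/step move can always be decomposed to match the single-action rules of Table~\ref{TRForBATC}.
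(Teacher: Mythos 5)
Your plan is the standard and correct one: since each of $\sim_{p}$, $\sim_{s}$, $\sim_{hp}$ and $\sim_{hhp}$ is a congruence with respect to $+$ and $\cdot$, soundness reduces to verifying each axiom $A1$--$A5$ by an explicit witnessing (posetal) bisimulation, the only delicate points being the maintenance of the isomorphism $f$ via $f[e_1\mapsto e_2]$ in the hp/hhp cases and downward closure for hhp. Note that this paper states Theorem~\ref{SBATC} as a preliminary without proof, importing it from the cited APTC reference, where the argument proceeds in exactly the way you describe, so your proposal matches the intended proof.
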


\begin{theorem}[Completeness of BATC modulo truly concurrent bisimulation equivalences]\label{CBATC}
The axiomatization of BATC is complete modulo truly concurrent bisimulation equivalences $\sim_{p}$, $\sim_{s}$, $\sim_{hp}$ and $\sim_{hhp}$. That is,

\begin{enumerate}
  \item let $p$ and $q$ be closed BATC terms, if $p\sim_{p} q$ then $p=q$;
  \item let $p$ and $q$ be closed BATC terms, if $p\sim_{s} q$ then $p=q$;
  \item let $p$ and $q$ be closed BATC terms, if $p\sim_{hp} q$ then $p=q$;
  \item let $p$ and $q$ be closed BATC terms, if $p\sim_{hhp} q$ then $p=q$.
\end{enumerate}

\end{theorem}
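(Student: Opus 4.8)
The plan is to follow the standard normal-form strategy for completeness proofs, exploiting the elimination machinery of Section~\ref{PT}. Since BATC contains no parallel operator, in any closed BATC term all events are either causally ordered (by $\cdot$) or in structural conflict (by $+$); no two events are ever concurrent. Consequently every pomset transition is a single-event transition, every step is a singleton, and the posetal isomorphisms in the definitions of hp- and hhp-bisimulation reduce to label-preserving matchings of single events. Hence on closed BATC terms the four equivalences $\sim_{p}$, $\sim_{s}$, $\sim_{hp}$, $\sim_{hhp}$ all coincide with ordinary (strong) bisimulation $\sim$, so it suffices to prove a single statement: if $p\sim q$ for closed BATC terms, then BATC $\vdash p=q$. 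The four numbered cases then follow at once.

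First I would establish the elimination property. Orient the axioms as a term rewriting system by reading $A3$, $A4$, $A5$ left-to-right, namely $x+x\to x$, $(x+y)\cdot z\to x\cdot z+y\cdot z$ and $(x\cdot y)\cdot z\to x\cdot(y\cdot z)$, while treating $A1$ and $A2$ as the commutativity and associativity of $+$ under which terms are identified. To invoke Theorem~\ref{SN} I would choose the well-founded ordering on the signature with $\cdot>+$ in the precedence and verify $s>_{lpo}t$ for each rule: the distributivity rule pushes $\cdot$ below $+$ and the associativity rule yields the standard lexicographic-path-order decrease on the first argument, so strong normalization follows. A normal form contains no redex, which forces it to be a \emph{basic term}: a finite alternative composition of summands each of the shape $a$ or $a\cdot t$ with $a$ atomic and $t$ again a basic term. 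Thus for every closed term $s$ there is a basic term $t$ with BATC $\vdash s=t$.

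The heart of the argument is the lemma that bisimilar basic terms are provably equal. Given $p\sim q$ with $p,q$ basic, I would write each, using $A1$--$A3$, in the canonical form $\sum_{i}a_i\cdot p_i+\sum_{j}b_j$ (with the $p_i$ basic and the $b_j$ the terminating summands), and proceed by induction on the combined size of $p$ and $q$. Reading off the transition rules of Table~\ref{TRForBATC}, the initial transitions of $p$ are exactly $p\xrightarrow{a_i}p_i$ and $p\xrightarrow{b_j}\surd$; bisimilarity matches each such transition, and symmetrically those of $q$, by one of the same label. For a summand $a_i\cdot p_i$ the matching $q$-transition yields a summand $a_i\cdot q_k$ with $p_i\sim q_k$, whence BATC $\vdash p_i=q_k$ by the induction hypothesis, so the two summands are provably equal; terminating summands match directly. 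Collecting all summands in both directions and absorbing the resulting duplicates by the idempotence axiom $A3$ gives BATC $\vdash p=q$.

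Combining the pieces: for arbitrary closed terms with $p\sim q$, elimination gives basic terms $p',q'$ with BATC $\vdash p=p'$ and BATC $\vdash q=q'$; soundness (Theorem~\ref{SBATC}) turns these into $p\sim p'$ and $q\sim q'$, so $p'\sim q'$, and the lemma yields BATC $\vdash p'=q'$, hence BATC $\vdash p=q$. I expect the main obstacle to be the bisimilar-basic-terms lemma: the bookkeeping needed to turn a semantic matching of initial transitions into a syntactic, summand-by-summand provable equality, and in particular the careful use of $A3$ to absorb multiplicities, is where the real work lies. Verifying the $>_{lpo}$ decreases and checking the collapse of the four equivalences on BATC are comparatively routine.
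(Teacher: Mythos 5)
The paper gives no proof of Theorem \ref{CBATC}: the statement appears in the preliminaries as background imported from \cite{APTC}, so there is nothing in this manuscript to compare you against line by line. Your proposal reconstructs the standard argument used in that reference and in the classical ACP/BPA literature --- orient $A3$--$A5$ as a TRS modulo the AC axioms $A1$, $A2$, invoke Theorem \ref{SN} to get elimination to basic terms, then show by induction on size that bisimilar basic terms in head normal form $\sum_i a_i\cdot p_i+\sum_j b_j$ are provably equal, using summand inclusion in both directions and $A3$ to absorb duplicates, and finally glue with soundness (Theorem \ref{SBATC}). That skeleton is correct and is essentially the intended proof.

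One justification needs tightening. Your claim that in a closed BATC term ``every pomset transition is a single-event transition'' is not literally true: a pomset transition $C\xrightarrow{X}C'$ only requires $C\cup X$ to be a configuration, so for $a\cdot b$ the two-event, linearly ordered set $X=\{a\leq b\}$ is a legitimate pomset transition from $\emptyset$. What is true is that every \emph{step} is a singleton (since no two events of a closed BATC term are concurrent) and that every pomset transition is a chain. The collapse of $\sim_{p}$ onto ordinary bisimulation therefore needs the small extra observation that a matching of chains can be assembled by matching their single-event refinements one at a time, and that the resulting label-preserving bijection between two chains is automatically an isomorphism of pomsets; similarly the posetal component $f$ in $\sim_{hp}$ and $\sim_{hhp}$ is forced on linearly ordered configurations. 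With that repair the reduction of all four equivalences to strong bisimulation on closed BATC terms goes through, and the remainder of your argument stands.
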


\subsection{Algebra for Parallelism in True Concurrency}

APTC uses the whole parallel operator $\between$, the auxiliary binary parallel $\parallel$ to model parallelism, and the communication merge $\mid$ to model communications among
different parallel branches, and also the unary conflict elimination operator $\Theta$ and the binary unless operator $\triangleleft$ to eliminate conflictions among different parallel
branches. Since a communication may be blocked, a new constant called deadlock $\delta$ is extended to $A$, and also a new unary encapsulation operator $\partial_H$ is introduced to
eliminate $\delta$, which may exist in the processes. The algebraic laws on these operators are also sound and complete modulo truly concurrent bisimulation equivalences (including
pomset bisimulation, step bisimulation, hp-bisimulation, but not hhp-bisimulation). Note that, the parallel operator $\parallel$ in a process cannot be eliminated by deductions on
the process using axioms of APTC, but other operators can eventually be steadied by $\cdot$, $+$ and $\parallel$, this is also why truly concurrent bisimulations are called an
\emph{truly concurrent} semantics.

We design the axioms of APTC in Table \ref{AxiomsForAPTC}, including algebraic laws of parallel operator $\parallel$, communication operator $\mid$, conflict elimination operator
$\Theta$ and unless operator $\triangleleft$, encapsulation operator $\partial_H$, the deadlock constant $\delta$, and also the whole parallel operator $\between$.

\begin{center}
    \begin{table}
        \begin{tabular}{@{}ll@{}}
            \hline No. &Axiom\\
            $A6$ & $x+ \delta = x$\\
            $A7$ & $\delta\cdot x =\delta$\\
            $P1$ & $x\between y = x\parallel y + x\mid y$\\
            $P2$ & $x\parallel y = y \parallel x$\\
            $P3$ & $(x\parallel y)\parallel z = x\parallel (y\parallel z)$\\
            $P4$ & $e_1\parallel (e_2\cdot y) = (e_1\parallel e_2)\cdot y$\\
            $P5$ & $(e_1\cdot x)\parallel e_2 = (e_1\parallel e_2)\cdot x$\\
            $P6$ & $(e_1\cdot x)\parallel (e_2\cdot y) = (e_1\parallel e_2)\cdot (x\between y)$\\
            $P7$ & $(x+ y)\parallel z = (x\parallel z)+ (y\parallel z)$\\
            $P8$ & $x\parallel (y+ z) = (x\parallel y)+ (x\parallel z)$\\
            $P9$ & $\delta\parallel x = \delta$\\
            $P10$ & $x\parallel \delta = \delta$\\
            $C11$ & $e_1\mid e_2 = \gamma(e_1,e_2)$\\
            $C12$ & $e_1\mid (e_2\cdot y) = \gamma(e_1,e_2)\cdot y$\\
            $C13$ & $(e_1\cdot x)\mid e_2 = \gamma(e_1,e_2)\cdot x$\\
            $C14$ & $(e_1\cdot x)\mid (e_2\cdot y) = \gamma(e_1,e_2)\cdot (x\between y)$\\
            $C15$ & $(x+ y)\mid z = (x\mid z) + (y\mid z)$\\
            $C16$ & $x\mid (y+ z) = (x\mid y)+ (x\mid z)$\\
            $C17$ & $\delta\mid x = \delta$\\
            $C18$ & $x\mid\delta = \delta$\\
            $CE19$ & $\Theta(e) = e$\\
            $CE20$ & $\Theta(\delta) = \delta$\\
            $CE21$ & $\Theta(x+ y) = \Theta(x)\triangleleft y + \Theta(y)\triangleleft x$\\
            $CE22$ & $\Theta(x\cdot y)=\Theta(x)\cdot\Theta(y)$\\
            $CE23$ & $\Theta(x\parallel y) = ((\Theta(x)\triangleleft y)\parallel y)+ ((\Theta(y)\triangleleft x)\parallel x)$\\
            $CE24$ & $\Theta(x\mid y) = ((\Theta(x)\triangleleft y)\mid y)+ ((\Theta(y)\triangleleft x)\mid x)$\\
            $U25$ & $(\sharp(e_1,e_2))\quad e_1\triangleleft e_2 = \tau$\\
            $U26$ & $(\sharp(e_1,e_2),e_2\leq e_3)\quad e_1\triangleleft e_3 = e_1$\\
            $U27$ & $(\sharp(e_1,e_2),e_2\leq e_3)\quad e3\triangleleft e_1 = \tau$\\
            $U28$ & $e\triangleleft \delta = e$\\
            $U29$ & $\delta \triangleleft e = \delta$\\
            $U30$ & $(x+ y)\triangleleft z = (x\triangleleft z)+ (y\triangleleft z)$\\
            $U31$ & $(x\cdot y)\triangleleft z = (x\triangleleft z)\cdot (y\triangleleft z)$\\
            $U32$ & $(x\parallel y)\triangleleft z = (x\triangleleft z)\parallel (y\triangleleft z)$\\
            $U33$ & $(x\mid y)\triangleleft z = (x\triangleleft z)\mid (y\triangleleft z)$\\
            $U34$ & $x\triangleleft (y+ z) = (x\triangleleft y)\triangleleft z$\\
            $U35$ & $x\triangleleft (y\cdot z)=(x\triangleleft y)\triangleleft z$\\
            $U36$ & $x\triangleleft (y\parallel z) = (x\triangleleft y)\triangleleft z$\\
            $U37$ & $x\triangleleft (y\mid z) = (x\triangleleft y)\triangleleft z$\\
            $D1$ & $e\notin H\quad\partial_H(e) = e$\\
            $D2$ & $e\in H\quad \partial_H(e) = \delta$\\
            $D3$ & $\partial_H(\delta) = \delta$\\
            $D4$ & $\partial_H(x+ y) = \partial_H(x)+\partial_H(y)$\\
            $D5$ & $\partial_H(x\cdot y) = \partial_H(x)\cdot\partial_H(y)$\\
            $D6$ & $\partial_H(x\parallel y) = \partial_H(x)\parallel\partial_H(y)$\\
        \end{tabular}
        \caption{Axioms of APTC}
        \label{AxiomsForAPTC}
    \end{table}
\end{center}

we give the transition rules of APTC in Table \ref{TRForAPTC}, it is suitable for all truly concurrent behavioral equivalence, including pomset bisimulation, step bisimulation,
hp-bisimulation and hhp-bisimulation.

\begin{center}
    \begin{table}
        $$\frac{x\xrightarrow{e_1}\surd\quad y\xrightarrow{e_2}\surd}{x\parallel y\xrightarrow{\{e_1,e_2\}}\surd} \quad\frac{x\xrightarrow{e_1}x'\quad y\xrightarrow{e_2}\surd}{x\parallel y\xrightarrow{\{e_1,e_2\}}x'}$$
        $$\frac{x\xrightarrow{e_1}\surd\quad y\xrightarrow{e_2}y'}{x\parallel y\xrightarrow{\{e_1,e_2\}}y'} \quad\frac{x\xrightarrow{e_1}x'\quad y\xrightarrow{e_2}y'}{x\parallel y\xrightarrow{\{e_1,e_2\}}x'\between y'}$$
        $$\frac{x\xrightarrow{e_1}\surd\quad y\xrightarrow{e_2}\surd}{x\mid y\xrightarrow{\gamma(e_1,e_2)}\surd} \quad\frac{x\xrightarrow{e_1}x'\quad y\xrightarrow{e_2}\surd}{x\mid y\xrightarrow{\gamma(e_1,e_2)}x'}$$
        $$\frac{x\xrightarrow{e_1}\surd\quad y\xrightarrow{e_2}y'}{x\mid y\xrightarrow{\gamma(e_1,e_2)}y'} \quad\frac{x\xrightarrow{e_1}x'\quad y\xrightarrow{e_2}y'}{x\mid y\xrightarrow{\gamma(e_1,e_2)}x'\between y'}$$
        $$\frac{x\xrightarrow{e_1}\surd\quad (\sharp(e_1,e_2))}{\Theta(x)\xrightarrow{e_1}\surd} \quad\frac{x\xrightarrow{e_2}\surd\quad (\sharp(e_1,e_2))}{\Theta(x)\xrightarrow{e_2}\surd}$$
        $$\frac{x\xrightarrow{e_1}x'\quad (\sharp(e_1,e_2))}{\Theta(x)\xrightarrow{e_1}\Theta(x')} \quad\frac{x\xrightarrow{e_2}x'\quad (\sharp(e_1,e_2))}{\Theta(x)\xrightarrow{e_2}\Theta(x')}$$
        $$\frac{x\xrightarrow{e_1}\surd \quad y\nrightarrow^{e_2}\quad (\sharp(e_1,e_2))}{x\triangleleft y\xrightarrow{\tau}\surd}
        \quad\frac{x\xrightarrow{e_1}x' \quad y\nrightarrow^{e_2}\quad (\sharp(e_1,e_2))}{x\triangleleft y\xrightarrow{\tau}x'}$$
        $$\frac{x\xrightarrow{e_1}\surd \quad y\nrightarrow^{e_3}\quad (\sharp(e_1,e_2),e_2\leq e_3)}{x\triangleleft y\xrightarrow{e_1}\surd}
        \quad\frac{x\xrightarrow{e_1}x' \quad y\nrightarrow^{e_3}\quad (\sharp(e_1,e_2),e_2\leq e_3)}{x\triangleleft y\xrightarrow{e_1}x'}$$
        $$\frac{x\xrightarrow{e_3}\surd \quad y\nrightarrow^{e_2}\quad (\sharp(e_1,e_2),e_1\leq e_3)}{x\triangleleft y\xrightarrow{\tau}\surd}
        \quad\frac{x\xrightarrow{e_3}x' \quad y\nrightarrow^{e_2}\quad (\sharp(e_1,e_2),e_1\leq e_3)}{x\triangleleft y\xrightarrow{\tau}x'}$$
        $$\frac{x\xrightarrow{e}\surd}{\partial_H(x)\xrightarrow{e}\surd}\quad (e\notin H)\quad\quad\frac{x\xrightarrow{e}x'}{\partial_H(x)\xrightarrow{e}\partial_H(x')}\quad(e\notin H)$$
        \caption{Transition rules of APTC}
        \label{TRForAPTC}
    \end{table}
\end{center}

\begin{theorem}[Soundness of APTC modulo truly concurrent bisimulation equivalences]\label{SAPTC}
The axiomatization of APTC is sound modulo truly concurrent bisimulation equivalences $\sim_{p}$, $\sim_{s}$, and $\sim_{hp}$. That is,

\begin{enumerate}
  \item let $x$ and $y$ be APTC terms. If APTC $\vdash x=y$, then $x\sim_{p} y$;
  \item let $x$ and $y$ be APTC terms. If APTC $\vdash x=y$, then $x\sim_{s} y$;
  \item let $x$ and $y$ be APTC terms. If APTC $\vdash x=y$, then $x\sim_{hp} y$.
\end{enumerate}

\end{theorem}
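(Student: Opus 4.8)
The plan is to reduce the statement to two facts---that each equivalence is a congruence for the APTC operators, and that each axiom is valid---and then combine them by induction on the structure of equational derivations. Provability $\mathrm{APTC}\vdash x=y$ is generated from the axioms of Tables~\ref{AxiomsForBATC} and~\ref{AxiomsForAPTC} by reflexivity, symmetry, transitivity, and substitution into the operator contexts. Since Theorem~\ref{SBATC} already gives soundness of the BATC fragment ($A1$--$A5$ and the operators $\cdot,+$) for $\sim_p,\sim_s,\sim_{hp}$, the remaining work is to treat the new operators $\parallel,\mid,\Theta,\triangleleft,\partial_H,\between$ and the new axioms $A6$--$D6$. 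Because each of $\sim_p$, $\sim_s$, $\sim_{hp}$ is by Definitions~\ref{PSB} and~\ref{HHPB} an equivalence relation, once (a) congruence and (b) validity of each axiom are in hand, a routine induction on the length of a derivation of $x=y$ yields $x\sim y$.

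For step (a) I would establish congruence directly. For $\sim_p$ and $\sim_s$ I would show, for each operator $f$ of APTC, that the relation $R_f=\{(f(\vec p),f(\vec q)) : p_i\sim_p q_i \text{ for all } i\}$ (and likewise with $\sim_s$) is itself a pomset (resp.\ step) bisimulation, by case analysis on the transition rules of Table~\ref{TRForAPTC} whose source has head symbol $f$; the binary operators $\parallel$, $\mid$, $\triangleleft$ require combining the two component hypotheses to match a joint transition. For $\sim_{hp}$ the same is carried out at the level of the posetal product $\mathcal{C}(\mathcal{E}_1)\overline{\times}\mathcal{C}(\mathcal{E}_2)$, where each matched pair of events must be recorded in the order-isomorphism via the update $f[e_1\mapsto e_2]$ of Definition~\ref{HHPB}. (That such a construction succeeds is what an appropriate congruence format for truly concurrent bisimulation would guarantee abstractly, but the direct check is self-contained.)

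For step (b) I would verify the axioms operator-block by operator-block, exhibiting in each case an explicit witnessing bisimulation whose pairs are read off from the two sides' transition trees. The deadlock and encapsulation laws ($A6,A7,D1$--$D6$) and the distributivity/associativity laws ($P2,P3,P7,P8,C15$--$C18$, $U30$--$U37$) are immediate from the rules. The characteristic parallel and communication laws $P4$--$P6$ and $C11$--$C14$ are the heart of the matter: here the left side fires the concurrent multiset $\{e_1,e_2\}$ (resp.\ the synchronization $\gamma(e_1,e_2)$) in one step, which I would match against the single composite step on the right using the rule for $x\parallel y\xrightarrow{\{e_1,e_2\}}x'\between y'$ and its companions. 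The conflict-elimination laws $CE21$--$CE24$ and the unless laws $U25$--$U29$ must be checked against the side-conditioned rules for $\Theta$ and $\triangleleft$, where the conflict relation $\sharp$ and the causality order $\leq$ appearing in the premises have to be respected by the matching transitions.

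The hard part will be the congruence property for hp-bisimulation together with the soundness of the axioms that entangle parallelism with the whole-parallel operator $\between$, namely $P6$, $C14$, $CE23$ and $CE24$. In these cases a concurrent or synchronized step on the left leaves a residual of the form $x'\between y'$, and to produce a posetal relation that respects the isomorphism $f$ one must carefully coordinate the labels of the two fired events with the bijection already fixed on the executed events; this bookkeeping, rather than any deep idea, is where the proof demands the most care. (Note the theorem deliberately omits $\sim_{hhp}$, consistent with the earlier remark that APTC is not sound or complete modulo hereditary hp-bisimulation.)
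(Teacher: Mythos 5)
Your outline---reduce soundness to (i) each of $\sim_p$, $\sim_s$, $\sim_{hp}$ being an equivalence and a congruence for the APTC operators and (ii) axiom-by-axiom validity witnessed by explicit bisimulations built from the transition rules of Table \ref{TRForAPTC}, then close under derivations---is exactly the standard argument, and it is the one used for this theorem in the cited APTC development; the present paper only restates the theorem from \cite{APTC} without reproducing the proof. Your identification of $P4$--$P6$, $C11$--$C14$ (matching a concurrent multiset or a $\gamma(e_1,e_2)$ synchronization against the composite right-hand side with residual $x'\between y'$) as the substantive cases, and your note that $\sim_{hhp}$ is deliberately excluded, both agree with that source, so there is nothing to correct.
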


\begin{theorem}[Completeness of APTC modulo truly concurrent bisimulation equivalences]\label{CAPTC}
The axiomatization of APTC is complete modulo truly concurrent bisimulation equivalences $\sim_{p}$, $\sim_{s}$, and $\sim_{hp}$. That is,

\begin{enumerate}
  \item let $p$ and $q$ be closed APTC terms, if $p\sim_{p} q$ then $p=q$;
  \item let $p$ and $q$ be closed APTC terms, if $p\sim_{s} q$ then $p=q$;
  \item let $p$ and $q$ be closed APTC terms, if $p\sim_{hp} q$ then $p=q$.
\end{enumerate}

\end{theorem}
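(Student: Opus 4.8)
The plan is to prove completeness of APTC modulo $\sim_p$, $\sim_s$, and $\sim_{hp}$ by reducing arbitrary closed APTC terms to a canonical normal form and then leveraging the already-established completeness of BATC (Theorem \ref{CBATC}). The standard strategy for completeness proofs in this algebraic setting proceeds in two stages: first an \emph{elimination} result showing that every closed APTC term can be rewritten, using the axioms of Table \ref{AxiomsForAPTC}, into a \emph{basic term} over the restricted signature $\{+,\cdot,\parallel\}$ (the parallel operator $\parallel$ cannot be eliminated, as the paper explicitly notes, so the normal form must retain it); and second, a congruence-plus-soundness argument that transfers equality of normal forms back through provable equality.

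First I would fix a notion of basic term for APTC — roughly, sums of terms of the form $e_1 \parallel \cdots \parallel e_n$ possibly prefixed by atomic actions via $\cdot$, with $\parallel$ as the only irreducible parallel construct — and prove the elimination property (in the sense of the Definition of elimination to basic terms). The key step here is to orient the axioms $P1$--$P10$, $C11$--$C18$, $CE19$--$CE24$, $U25$--$U37$, $D1$--$D6$ as a term rewriting system and invoke Theorem \ref{SN} (Strong Normalization): I would exhibit a well-founded ordering $>$ on the signature for which each oriented rule satisfies $s >_{lpo} t$, so that the TRS is strongly normalizing; combined with a check that normal forms contain none of the eliminable operators $\between, \mid, \Theta, \triangleleft, \partial_H$, this yields that every closed term provably equals a basic term. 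Then I would carry out the main reduction: given closed APTC terms $p,q$ with $p \sim_p q$ (resp.\ $\sim_s$, $\sim_{hp}$), reduce them to basic terms $p', q'$ with APTC $\vdash p=p'$ and APTC $\vdash q=q'$; by soundness (Theorem \ref{SAPTC}) $p' \sim_p p \sim_p q \sim_p q'$, so it suffices to show $p' = q'$ is derivable for basic terms. For the purely sequential/alternative fragment this follows from Theorem \ref{CBATC}; the novelty is handling the surviving $\parallel$ structure, where I would argue that if two basic terms in parallel normal form are truly-concurrent bisimilar then their multisets of parallel atomic components must match step-by-step (because a step transition $\xrightarrow{\{e_1,\dots,e_n\}}$ exposes exactly the parallel branches), and conclude equality by structural induction using $P2$, $P3$ for commutativity/associativity of $\parallel$ and the BATC laws for the $+,\cdot$ skeleton.

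The main obstacle I expect is twofold. The delicate technical point is establishing the well-founded ordering witnessing $s >_{lpo} t$ for \emph{every} rule simultaneously — in particular the conflict-elimination and unless axioms ($CE21$--$CE24$, $U30$--$U37$) duplicate subterms and interact with $\Theta$ and $\triangleleft$, so I must weight the operators so that these duplications still strictly decrease the ordering, and the distributivity laws $P7,P8,C15,C16$ must be oriented consistently with the sequential distributivity $A4$. The deeper conceptual obstacle is the final matching step for $\parallel$: I need a clean structural characterization of basic normal forms guaranteeing that bisimilarity forces the parallel decompositions to coincide up to the $\parallel$-commutativity/associativity axioms. This amounts to proving that the normal form is essentially unique modulo $A1$--$A3$, $A5$, $P2$, $P3$, which requires a careful inductive analysis of the transition rules in Table \ref{TRForAPTC} to show that the outgoing (pomset/step/hp) transitions of a normal form determine its top-level summands and their parallel factorizations. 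Since the statement asserts completeness only for $\sim_p, \sim_s, \sim_{hp}$ (and pointedly not $\sim_{hhp}$), I would also verify that the argument uses only properties preserved by these three equivalences and does not secretly rely on the downward-closure/hereditary condition that distinguishes $\sim_{hhp}$.
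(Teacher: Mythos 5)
The paper does not actually prove Theorem~\ref{CAPTC}: it is stated as a recalled preliminary from \cite{APTC}, so there is no in-text proof to compare against. Your reconstruction follows the standard strategy that the cited source relies on --- elimination of closed terms to basic terms whose only surviving parallel structure is $\parallel$ over atomic events, followed by a normal-form matching argument that bootstraps off the completeness of BATC --- and it is essentially sound. Two points are worth tightening. First, you propose to orient \emph{all} of $P1$--$P10$ as rewrite rules and invoke Theorem~\ref{SN}; but $P2$ and $P3$ (commutativity and associativity of $\parallel$), like $A1$ and $A2$, cannot be oriented into a terminating rewrite system, so the TRS must omit them and termination must be argued for rewriting modulo these axioms --- the well-founded ordering then only needs to handle the genuinely reducing rules ($P4$--$P10$, $C11$--$C18$, $CE19$--$CE24$, $U25$--$U37$, $D1$--$D6$), where the duplication in $CE21$--$CE24$ is absorbed by weighting $\Theta$ and $\triangleleft$ heavier than the operators they distribute over. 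Second, your plan to run the matching argument separately for $\sim_p$, $\sim_s$ and $\sim_{hp}$ is more work than necessary: since $\sim_{hp}\;\subseteq\;\sim_p\;\subseteq\;\sim_s$, it suffices to prove completeness for the coarsest equivalence $\sim_s$, where the step transition $\xrightarrow{\{e_1,\cdots,e_n\}}$ of a basic term directly exposes the multiset of parallel atomic components of a summand; completeness for the two finer equivalences is then immediate. Your closing remark that the argument must not depend on downward closure is exactly the right sanity check, since that is what separates this theorem from the hhp case treated in Section~\ref{ahhpb} via the left parallel composition.
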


\subsection{Recursion}

To model infinite computation, recursion is introduced into APTC. In order to obtain a sound and complete theory, guarded recursion and linear recursion are needed. The corresponding
axioms are RSP (Recursive Specification Principle) and RDP (Recursive Definition Principle), RDP says the solutions of a recursive specification can represent the behaviors of the
specification, while RSP says that a guarded recursive specification has only one solution, they are sound with respect to APTC with guarded recursion modulo several truly concurrent
bisimulation equivalences (including pomset bisimulation, step bisimulation and hp-bisimulation), and they are complete with respect to APTC with linear recursion modulo several truly
concurrent bisimulation equivalences (including pomset bisimulation, step bisimulation and hp-bisimulation). In the following, $E,F,G$ are recursion specifications, $X,Y,Z$ are
recursive variables.

For a guarded recursive specifications $E$ with the form

$$X_1=t_1(X_1,\cdots,X_n)$$
$$\cdots$$
$$X_n=t_n(X_1,\cdots,X_n)$$

the behavior of the solution $\langle X_i|E\rangle$ for the recursion variable $X_i$ in $E$, where $i\in\{1,\cdots,n\}$, is exactly the behavior of their right-hand sides
$t_i(X_1,\cdots,X_n)$, which is captured by the two transition rules in Table \ref{TRForGR}.

\begin{center}
    \begin{table}
        $$\frac{t_i(\langle X_1|E\rangle,\cdots,\langle X_n|E\rangle)\xrightarrow{\{e_1,\cdots,e_k\}}\surd}{\langle X_i|E\rangle\xrightarrow{\{e_1,\cdots,e_k\}}\surd}$$
        $$\frac{t_i(\langle X_1|E\rangle,\cdots,\langle X_n|E\rangle)\xrightarrow{\{e_1,\cdots,e_k\}} y}{\langle X_i|E\rangle\xrightarrow{\{e_1,\cdots,e_k\}} y}$$
        \caption{Transition rules of guarded recursion}
        \label{TRForGR}
    \end{table}
\end{center}

The $RDP$ (Recursive Definition Principle) and the $RSP$ (Recursive Specification Principle) are shown in Table \ref{RDPRSP}.

\begin{center}
\begin{table}
  \begin{tabular}{@{}ll@{}}
\hline No. &Axiom\\
  $RDP$ & $\langle X_i|E\rangle = t_i(\langle X_1|E,\cdots,X_n|E\rangle)\quad (i\in\{1,\cdots,n\})$\\
  $RSP$ & if $y_i=t_i(y_1,\cdots,y_n)$ for $i\in\{1,\cdots,n\}$, then $y_i=\langle X_i|E\rangle \quad(i\in\{1,\cdots,n\})$\\
\end{tabular}
\caption{Recursive definition and specification principle}
\label{RDPRSP}
\end{table}
\end{center}

\begin{theorem}[Soundness of $APTC$ with guarded recursion]\label{SAPTCR}
Let $x$ and $y$ be $APTC$ with guarded recursion terms. If $APTC\textrm{ with guarded recursion}\vdash x=y$, then
\begin{enumerate}
  \item $x\sim_{s} y$;
  \item $x\sim_{p} y$;
  \item $x\sim_{hp} y$.
\end{enumerate}
\end{theorem}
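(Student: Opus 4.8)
The plan is to follow the standard two-step recipe for soundness of an equational theory modulo a behavioural equivalence: first establish that each of $\sim_{s}$, $\sim_{p}$ and $\sim_{hp}$ is a congruence with respect to all operators of $APTC$ together with the guarded recursion construct $\langle X_i|E\rangle$, and second verify that every axiom is sound, i.e.\ that its closed instances relate $\sim$-equivalent terms. Since soundness of the purely $APTC$ axioms (Table \ref{AxiomsForAPTC}) together with congruence for the $APTC$ operators is already given by Theorem \ref{SAPTC}, the only genuinely new work is to extend congruence to $\langle X_i|E\rangle$ and to prove soundness of the two recursion principles $RDP$ and $RSP$ of Table \ref{RDPRSP}. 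Congruence for the recursion construct I would obtain by observing that the transition rules of Tables \ref{TRForAPTC} and \ref{TRForGR} are in a standard (source-dependent, tyft/tyxt-style) format, which guarantees that the associated bisimulation equivalences are congruences; because equational derivations are generated from axiom instances by reflexivity, symmetry, transitivity and substitution into contexts, these additions then suffice to conclude $x\sim y$ whenever $APTC$ with guarded recursion $\vdash x=y$.

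For $RDP$ the argument is direct. By the transition rules of guarded recursion (Table \ref{TRForGR}), the term $\langle X_i|E\rangle$ performs exactly the transitions of its unfolding $t_i(\langle X_1|E\rangle,\cdots,\langle X_n|E\rangle)$, and conversely. I would therefore take $R$ to be the identity relation on configurations extended with the pairs $(\langle X_i|E\rangle,\,t_i(\langle X_1|E\rangle,\cdots,\langle X_n|E\rangle))$ and check the clauses of Definitions \ref{PSB} and \ref{HHPB}. Every transition on one side is matched by the identical transition on the other, with the same label set in the step and pomset cases and, in the hp-case, the identity order-isomorphism $f$; hence $RDP$ is sound for $\sim_{s}$, $\sim_{p}$ and $\sim_{hp}$ simultaneously.

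The substance of the proof lies in the soundness of $RSP$, which asserts that a guarded recursive specification has a unique solution up to the equivalence. By $RDP$ the term $\langle X_i|E\rangle$ is itself a solution, so it is enough to show that any two families $\vec p$ and $\vec q$ with $p_i\sim t_i(\vec p)$ and $q_i\sim t_i(\vec q)$ for all $i$ satisfy $p_i\sim q_i$. The plan is to exhibit the relation $R=\{(\sigma(\vec p),\sigma(\vec q))\}$, where $\sigma$ ranges over the contexts obtained by repeatedly unfolding the specification, and to prove that $R$ is a bisimulation of the required kind. Guardedness is the essential ingredient: every occurrence of a recursion variable in $t_i$ is prefixed by an atomic action, so any transition out of $\sigma(\vec p)$ consumes at least one such guard, and the residual again has the form $\sigma'(\vec p)$ with $\sigma'(\vec q)$ the matching residual on the other side. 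This is what lets the matching proceed and keeps $R$ closed under transitions. For $\sim_{s}$ and $\sim_{p}$ I would additionally check that the matched multisets, respectively pomsets, of labels agree ($X_1\sim X_2$ in the sense of Definition \ref{PSB}), and for $\sim_{hp}$ I would maintain the order-isomorphism $f$ between the two configurations along each matched step, extending it by $f[e_1\mapsto e_2]$ as in Definition \ref{HHPB}.

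I expect the hp-case of $RSP$ to be the main obstacle. Tracking the posetal product, one must verify not merely that labels match but that the accumulated bijection $f$ remains an isomorphism of partial orders as configurations grow; guardedness guarantees progress, but one still has to argue that the causal structure introduced by the sequential, parallel and communication operators inside each $t_i$ is reflected identically on both sides. Once $R$ is shown to be an hp-bisimulation, the simpler step and pomset checks follow a fortiori, and soundness of $RSP$—and with it the full theorem for $\sim_{s}$, $\sim_{p}$ and $\sim_{hp}$—is established.
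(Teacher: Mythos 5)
Your proposal is correct and follows the standard congruence-plus-axiom-verification recipe --- $RDP$ read off directly from the transition rules of Table \ref{TRForGR}, and $RSP$ as a guardedness-driven unique-solution argument via an unfolding relation --- which is exactly how this result is established in the source the paper cites (\cite{APTC}); note that the present paper states Theorem \ref{SAPTCR} only as a recalled preliminary and contains no proof of its own to diverge from. The one point to keep honest is the congruence step: for the truly concurrent equivalences $\sim_p$, $\sim_s$, $\sim_{hp}$ one cannot simply invoke an interleaving-style rule format and must check congruence of $\langle X_i|E\rangle$ directly, but this is routine and you have already isolated the genuinely delicate part (maintaining the posetal isomorphism $f$ in the hp-case).
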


\begin{theorem}[Completeness of $APTC$ with linear recursion]\label{CAPTCR}
Let $p$ and $q$ be closed $APTC$ with linear recursion terms, then,
\begin{enumerate}
  \item if $p\sim_{s} q$ then $p=q$;
  \item if $p\sim_{p} q$ then $p=q$;
  \item if $p\sim_{hp} q$ then $p=q$.
\end{enumerate}
\end{theorem}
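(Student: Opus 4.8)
The plan is to reduce both equalities to the Recursive Specification Principle $RSP$, which guarantees that a linear (hence guarded) recursive specification has a unique solution. The proof then rests on two pillars: first, that every closed $APTC$-with-linear-recursion term is provably equal to the solution $\langle X_1|E\rangle$ of some linear recursive specification $E$; and second, that whenever two such solutions are bisimilar (in whichever of the three senses), they can be exhibited as solutions of one and the same linear recursive specification, so that $RSP$ forces them to coincide. I would prove all three items (step, pomset, hp) by this single template, since the axiom system is already complete for the recursion-free fragment modulo all three equivalences by Theorem \ref{CAPTC}, and the only difference between the cases is the shape of the transition labels.

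First I would establish the normalization step. Using the elimination property and the completeness of $APTC$ (Theorem \ref{CAPTC}) I reduce the finite parts to basic terms, and then I show that linear recursive specifications are closed, up to provable equality, under the operators $+$, $\cdot$, $\parallel$, $\mid$, $\between$, $\Theta$, $\triangleleft$ and $\partial_H$. Concretely, given $\langle X|E\rangle$ and $\langle Y|F\rangle$ with $E,F$ linear, I would use the axioms $A4$, $A5$, $P1$--$P10$, $C11$--$C18$ and $D1$--$D6$ to push each operator through the right-hand sides of $E$ and $F$ and to re-collect the result into fresh linear equations. Here the essential point for the truly concurrent setting is that a ``linear'' right-hand side is a sum of summands of the form $(e_1\parallel\cdots\parallel e_k)\cdot Z$ or $(e_1\parallel\cdots\parallel e_k)$, so that parallel heads, which by design cannot be eliminated, are permitted. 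The outcome is that each closed term equals $\langle X_1|E\rangle$ for a single linear $E$.

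Next I would carry out the central construction. Suppose $p=\langle X_1|E\rangle\sim q=\langle Y_1|F\rangle$ via a bisimulation $R$. The reachable states of $p$ and $q$ are among the $\langle X_i|E\rangle$ and $\langle Y_j|F\rangle$, and $R$ relates them. For each related pair I introduce a fresh variable $Z_{ij}$ and define a specification $G$ whose equation for $Z_{ij}$ lists exactly the matched transitions: a summand $(e_1\parallel\cdots\parallel e_k)\cdot Z_{i'j'}$ whenever both states perform the corresponding step into a related pair, together with the terminating summands. I would then verify that the family $\{\langle X_i|E\rangle\}$ and the family $\{\langle Y_j|F\rangle\}$ both solve $G$, reading off the transition rules of Table \ref{TRForAPTC} and Table \ref{TRForGR} and invoking soundness (Theorem \ref{SAPTCR}), and that $G$ is linear and therefore guarded. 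Applying $RSP$ to $G$ yields $\langle X_1|E\rangle=\langle Y_1|F\rangle$, i.e. $p=q$. For the step and pomset cases the labels in $G$ are step multisets and pomsets respectively; for the hp-case the variables $Z$ must instead be indexed by triples $(C_1,f,C_2)$ carrying the order-isomorphism, and I would track $f$ along transitions exactly as in Definition \ref{HHPB}.

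The main obstacle I anticipate is the product construction together with the proof that both processes are genuine solutions of $G$: one must ensure that $G$ really is \emph{linear} (every summand a single parallel head times one variable, or a single head), that the bisimulation matches steps to steps with isomorphic labels so the heads agree on both sides, and---in the hp-case---that the book-keeping isomorphism is respected so that distinct related pairs do not collapse. Establishing the normalization lemma for $\parallel$ and $\mid$ is also delicate, because the expansion generated by $P6$ and $C14$ reintroduces $\between$ and must be re-linearized without destroying guardedness; handling this uniformly is where most of the routine-but-careful work lies.
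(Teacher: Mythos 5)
Your proposal is correct and follows exactly the standard route that this result relies on: normalize every closed term to $\langle X_1|E\rangle$ with $E$ linear via the elimination theorem, then from a truly concurrent bisimulation between $\langle X_1|E\rangle$ and $\langle Y_1|F\rangle$ build a combined linear specification $G$ over variables indexed by related pairs, check that both families of states solve $G$, and conclude by $RSP$. The paper itself states Theorem \ref{CAPTCR} without proof, recalling it from the cited APTC work, and the argument there is the one you describe (including the extra book-keeping of the isomorphism $f$ in the hp-case), so no discrepancy arises.
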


\subsection{Abstraction}

To abstract away internal implementations from the external behaviors, a new constant $\tau$ called silent step is added to $A$, and also a new unary abstraction operator
$\tau_I$ is used to rename actions in $I$ into $\tau$ (the resulted APTC with silent step and abstraction operator is called $\textrm{APTC}_{\tau}$). The recursive specification
is adapted to guarded linear recursion to prevent infinite $\tau$-loops specifically. The axioms of $\tau$ and $\tau_I$ are sound modulo rooted branching truly concurrent bisimulation
 equivalences (several kinds of weakly truly concurrent bisimulation equivalences, including rooted branching pomset bisimulation, rooted branching step bisimulation and rooted branching hp-bisimulation). To eliminate infinite $\tau$-loops caused by $\tau_I$ and obtain the completeness, CFAR (Cluster Fair Abstraction Rule) is used to prevent infinite $\tau$-loops in a constructible way.

\begin{definition}[Weak pomset transitions and weak step]
Let $\mathcal{E}$ be a PES and let $C\in\mathcal{C}(\mathcal{E})$, and $\emptyset\neq X\subseteq \hat{\mathbb{E}}$, if $C\cap X=\emptyset$ and
$\hat{C'}=\hat{C}\cup X\in\mathcal{C}(\mathcal{E})$, then $C\xRightarrow{X} C'$ is called a weak pomset transition from $C$ to $C'$, where we define
$\xRightarrow{e}\triangleq\xrightarrow{\tau^*}\xrightarrow{e}\xrightarrow{\tau^*}$. And $\xRightarrow{X}\triangleq\xrightarrow{\tau^*}\xrightarrow{e}\xrightarrow{\tau^*}$,
for every $e\in X$. When the events in $X$ are pairwise concurrent, we say that $C\xRightarrow{X}C'$ is a weak step.
\end{definition}

\begin{definition}[Branching pomset, step bisimulation]\label{BPSB}
Assume a special termination predicate $\downarrow$, and let $\surd$ represent a state with $\surd\downarrow$. Let $\mathcal{E}_1$, $\mathcal{E}_2$ be PESs. A branching pomset
bisimulation is a relation $R\subseteq\mathcal{C}(\mathcal{E}_1)\times\mathcal{C}(\mathcal{E}_2)$, such that:
 \begin{enumerate}
   \item if $(C_1,C_2)\in R$, and $C_1\xrightarrow{X}C_1'$ then
   \begin{itemize}
     \item either $X\equiv \tau^*$, and $(C_1',C_2)\in R$;
     \item or there is a sequence of (zero or more) $\tau$-transitions $C_2\xrightarrow{\tau^*} C_2^0$, such that $(C_1,C_2^0)\in R$ and $C_2^0\xRightarrow{X}C_2'$ with
     $(C_1',C_2')\in R$;
   \end{itemize}
   \item if $(C_1,C_2)\in R$, and $C_2\xrightarrow{X}C_2'$ then
   \begin{itemize}
     \item either $X\equiv \tau^*$, and $(C_1,C_2')\in R$;
     \item or there is a sequence of (zero or more) $\tau$-transitions $C_1\xrightarrow{\tau^*} C_1^0$, such that $(C_1^0,C_2)\in R$ and $C_1^0\xRightarrow{X}C_1'$ with
     $(C_1',C_2')\in R$;
   \end{itemize}
   \item if $(C_1,C_2)\in R$ and $C_1\downarrow$, then there is a sequence of (zero or more) $\tau$-transitions $C_2\xrightarrow{\tau^*}C_2^0$ such that $(C_1,C_2^0)\in R$
   and $C_2^0\downarrow$;
   \item if $(C_1,C_2)\in R$ and $C_2\downarrow$, then there is a sequence of (zero or more) $\tau$-transitions $C_1\xrightarrow{\tau^*}C_1^0$ such that $(C_1^0,C_2)\in R$
   and $C_1^0\downarrow$.
 \end{enumerate}

We say that $\mathcal{E}_1$, $\mathcal{E}_2$ are branching pomset bisimilar, written $\mathcal{E}_1\approx_{bp}\mathcal{E}_2$, if there exists a branching pomset bisimulation $R$,
such that $(\emptyset,\emptyset)\in R$.

By replacing pomset transitions with steps, we can get the definition of branching step bisimulation. When PESs $\mathcal{E}_1$ and $\mathcal{E}_2$ are branching step bisimilar,
we write $\mathcal{E}_1\approx_{bs}\mathcal{E}_2$.
\end{definition}

\begin{definition}[Rooted branching pomset, step bisimulation]\label{RBPSB}
Assume a special termination predicate $\downarrow$, and let $\surd$ represent a state with $\surd\downarrow$. Let $\mathcal{E}_1$, $\mathcal{E}_2$ be PESs. A branching pomset
bisimulation is a relation $R\subseteq\mathcal{C}(\mathcal{E}_1)\times\mathcal{C}(\mathcal{E}_2)$, such that:
 \begin{enumerate}
   \item if $(C_1,C_2)\in R$, and $C_1\xrightarrow{X}C_1'$ then $C_2\xrightarrow{X}C_2'$ with $C_1'\approx_{bp}C_2'$;
   \item if $(C_1,C_2)\in R$, and $C_2\xrightarrow{X}C_2'$ then $C_1\xrightarrow{X}C_1'$ with $C_1'\approx_{bp}C_2'$;
   \item if $(C_1,C_2)\in R$ and $C_1\downarrow$, then $C_2\downarrow$;
   \item if $(C_1,C_2)\in R$ and $C_2\downarrow$, then $C_1\downarrow$.
 \end{enumerate}

We say that $\mathcal{E}_1$, $\mathcal{E}_2$ are rooted branching pomset bisimilar, written $\mathcal{E}_1\approx_{rbp}\mathcal{E}_2$, if there exists a rooted branching pomset
bisimulation $R$, such that $(\emptyset,\emptyset)\in R$.

By replacing pomset transitions with steps, we can get the definition of rooted branching step bisimulation. When PESs $\mathcal{E}_1$ and $\mathcal{E}_2$ are rooted branching step
bisimilar, we write $\mathcal{E}_1\approx_{rbs}\mathcal{E}_2$.
\end{definition}

\begin{definition}[Branching (hereditary) history-preserving bisimulation]\label{BHHPB}
Assume a special termination predicate $\downarrow$, and let $\surd$ represent a state with $\surd\downarrow$. A branching history-preserving (hp-) bisimulation is a weakly posetal
relation $R\subseteq\mathcal{C}(\mathcal{E}_1)\overline{\times}\mathcal{C}(\mathcal{E}_2)$ such that:

 \begin{enumerate}
   \item if $(C_1,f,C_2)\in R$, and $C_1\xrightarrow{e_1}C_1'$ then
   \begin{itemize}
     \item either $e_1\equiv \tau$, and $(C_1',f[e_1\mapsto \tau],C_2)\in R$;
     \item or there is a sequence of (zero or more) $\tau$-transitions $C_2\xrightarrow{\tau^*} C_2^0$, such that $(C_1,f,C_2^0)\in R$ and $C_2^0\xrightarrow{e_2}C_2'$ with
     $(C_1',f[e_1\mapsto e_2],C_2')\in R$;
   \end{itemize}
   \item if $(C_1,f,C_2)\in R$, and $C_2\xrightarrow{e_2}C_2'$ then
   \begin{itemize}
     \item either $X\equiv \tau$, and $(C_1,f[e_2\mapsto \tau],C_2')\in R$;
     \item or there is a sequence of (zero or more) $\tau$-transitions $C_1\xrightarrow{\tau^*} C_1^0$, such that $(C_1^0,f,C_2)\in R$ and $C_1^0\xrightarrow{e_1}C_1'$ with
     $(C_1',f[e_2\mapsto e_1],C_2')\in R$;
   \end{itemize}
   \item if $(C_1,f,C_2)\in R$ and $C_1\downarrow$, then there is a sequence of (zero or more) $\tau$-transitions $C_2\xrightarrow{\tau^*}C_2^0$ such that $(C_1,f,C_2^0)\in R$
   and $C_2^0\downarrow$;
   \item if $(C_1,f,C_2)\in R$ and $C_2\downarrow$, then there is a sequence of (zero or more) $\tau$-transitions $C_1\xrightarrow{\tau^*}C_1^0$ such that $(C_1^0,f,C_2)\in R$
   and $C_1^0\downarrow$.
 \end{enumerate}

$\mathcal{E}_1,\mathcal{E}_2$ are branching history-preserving (hp-)bisimilar and are written $\mathcal{E}_1\approx_{bhp}\mathcal{E}_2$ if there exists a branching hp-bisimulation
$R$ such that $(\emptyset,\emptyset,\emptyset)\in R$.

A branching hereditary history-preserving (hhp-)bisimulation is a downward closed branching hhp-bisimulation. $\mathcal{E}_1,\mathcal{E}_2$ are branching hereditary history-preserving
(hhp-)bisimilar and are written $\mathcal{E}_1\approx_{bhhp}\mathcal{E}_2$.
\end{definition}

\begin{definition}[Rooted branching (hereditary) history-preserving bisimulation]\label{RBHHPB}
Assume a special termination predicate $\downarrow$, and let $\surd$ represent a state with $\surd\downarrow$. A rooted branching history-preserving (hp-) bisimulation is a weakly
posetal relation $R\subseteq\mathcal{C}(\mathcal{E}_1)\overline{\times}\mathcal{C}(\mathcal{E}_2)$ such that:

 \begin{enumerate}
   \item if $(C_1,f,C_2)\in R$, and $C_1\xrightarrow{e_1}C_1'$, then $C_2\xrightarrow{e_2}C_2'$ with $C_1'\approx_{bhp}C_2'$;
   \item if $(C_1,f,C_2)\in R$, and $C_2\xrightarrow{e_2}C_1'$, then $C_1\xrightarrow{e_1}C_2'$ with $C_1'\approx_{bhp}C_2'$;
   \item if $(C_1,f,C_2)\in R$ and $C_1\downarrow$, then $C_2\downarrow$;
   \item if $(C_1,f,C_2)\in R$ and $C_2\downarrow$, then $C_1\downarrow$.
 \end{enumerate}

$\mathcal{E}_1,\mathcal{E}_2$ are rooted branching history-preserving (hp-)bisimilar and are written $\mathcal{E}_1\approx_{rbhp}\mathcal{E}_2$ if there exists rooted a branching
hp-bisimulation $R$ such that $(\emptyset,\emptyset,\emptyset)\in R$.

A rooted branching hereditary history-preserving (hhp-)bisimulation is a downward closed rooted branching hhp-bisimulation. $\mathcal{E}_1,\mathcal{E}_2$ are rooted branching
hereditary history-preserving (hhp-)bisimilar and are written $\mathcal{E}_1\approx_{rbhhp}\mathcal{E}_2$.
\end{definition}

The axioms and transition rules of $\textrm{APTC}_{\tau}$ are shown in Table \ref{AxiomsForTau} and Table \ref{TRForTau}.

\begin{center}
\begin{table}
  \begin{tabular}{@{}ll@{}}
\hline No. &Axiom\\
  $B1$ & $e\cdot\tau=e$\\
  $B2$ & $e\cdot(\tau\cdot(x+y)+x)=e\cdot(x+y)$\\
  $B3$ & $x\parallel\tau=x$\\
  $TI1$ & $e\notin I\quad \tau_I(e)=e$\\
  $TI2$ & $e\in I\quad \tau_I(e)=\tau$\\
  $TI3$ & $\tau_I(\delta)=\delta$\\
  $TI4$ & $\tau_I(x+y)=\tau_I(x)+\tau_I(y)$\\
  $TI5$ & $\tau_I(x\cdot y)=\tau_I(x)\cdot\tau_I(y)$\\
  $TI6$ & $\tau_I(x\parallel y)=\tau_I(x)\parallel\tau_I(y)$\\
  $CFAR$ & If $X$ is in a cluster for $I$ with exits \\
           & $\{(a_{11}\parallel\cdots\parallel a_{1i})Y_1,\cdots,(a_{m1}\parallel\cdots\parallel a_{mi})Y_m, b_{11}\parallel\cdots\parallel b_{1j},\cdots,b_{n1}\parallel\cdots\parallel b_{nj}\}$, \\
           & then $\tau\cdot\tau_I(\langle X|E\rangle)=$\\
           & $\tau\cdot\tau_I((a_{11}\parallel\cdots\parallel a_{1i})\langle Y_1|E\rangle+\cdots+(a_{m1}\parallel\cdots\parallel a_{mi})\langle Y_m|E\rangle+b_{11}\parallel\cdots\parallel b_{1j}+\cdots+b_{n1}\parallel\cdots\parallel b_{nj})$\\
\end{tabular}
\caption{Axioms of $\textrm{APTC}_{\tau}$}
\label{AxiomsForTau}
\end{table}
\end{center}

\begin{center}
    \begin{table}
        $$\frac{}{\tau\xrightarrow{\tau}\surd}$$
        $$\frac{x\xrightarrow{e}\surd}{\tau_I(x)\xrightarrow{e}\surd}\quad e\notin I
        \quad\quad\frac{x\xrightarrow{e}x'}{\tau_I(x)\xrightarrow{e}\tau_I(x')}\quad e\notin I$$

        $$\frac{x\xrightarrow{e}\surd}{\tau_I(x)\xrightarrow{\tau}\surd}\quad e\in I
        \quad\quad\frac{x\xrightarrow{e}x'}{\tau_I(x)\xrightarrow{\tau}\tau_I(x')}\quad e\in I$$
        \caption{Transition rule of $\textrm{APTC}_{\tau}$}
        \label{TRForTau}
    \end{table}
\end{center}

\begin{theorem}[Soundness of $APTC_{\tau}$ with guarded linear recursion]\label{SAPTCABS}
Let $x$ and $y$ be $APTC_{\tau}$ with guarded linear recursion terms. If $APTC_{\tau}$ with guarded linear recursion $\vdash x=y$, then
\begin{enumerate}
  \item $x\approx_{rbs} y$;
  \item $x\approx_{rbp} y$;
  \item $x\approx_{rbhp} y$.
\end{enumerate}
\end{theorem}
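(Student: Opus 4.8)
The plan is to reduce soundness to two ingredients, following the standard pattern for the soundness of an equational process theory: first, that each of $\approx_{rbs}$, $\approx_{rbp}$ and $\approx_{rbhp}$ is a congruence with respect to every operator in the signature of $APTC_\tau$ (namely $\cdot$, $+$, $\parallel$, $\between$, $\mid$, $\Theta$, $\triangleleft$, $\partial_H$ and the new abstraction operator $\tau_I$); and second, that for every axiom $s=t$ in Table \ref{AxiomsForTau}, together with the inherited axioms of $APTC$ with guarded recursion, we have $s\approx_{rb\ast}t$ for $\ast\in\{s,p,hp\}$ under every closed substitution. Given these two facts, a routine induction on the length of a derivation $APTC_\tau\vdash x=y$ — using reflexivity, symmetry and transitivity of the equivalences, the congruence property to handle the substitution/context rule, and per-axiom soundness for the base case — yields $x\approx_{rb\ast}y$, which is exactly the three claims.

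For the congruence step, I would observe that the transition rules in Tables \ref{TRForAPTC} and \ref{TRForTau} lie in the rooted-branching-safe path format, so that rooted branching bisimilarity is automatically a congruence for all operators; the only genuinely new rules are the four rules for $\tau_I$, and these are manifestly in the required format. Alternatively one reuses the congruence of $\sim_s$, $\sim_p$, $\sim_{hp}$ already implicit in Theorem \ref{SAPTC} and checks directly that a rooted branching bisimulation is preserved by each context, the delicate case being $+$. This is precisely why the \emph{rooted} variant is needed: the leading-step matching in Definition \ref{RBPSB} guarantees closure under alternative composition, where plain branching bisimilarity would fail.

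For per-axiom soundness I would split the axioms into three groups. The inherited $APTC$ axioms $A1$--$D6$ contain no occurrence of $\tau$, so on closed terms their two sides are strongly bisimilar by Theorem \ref{SAPTCR}, and strong bisimilarity trivially refines rooted branching bisimilarity; conservativity (Theorem \ref{TCE}) ensures the transition behaviour of these terms is unchanged in the extended system. For the genuinely new structural axioms I would exhibit explicit relations: for $B1$ the relation $\{(e\cdot\tau,e)\}\cup\{(\tau,\surd)\}\cup\mathrm{Id}$ works, since the single observable step $e$ is matched on the nose (giving rootedness) and the residual pair $(\tau,\surd)$ is branching bisimilar because $\tau\xrightarrow{\tau}\surd$; $B3$ follows from the fact that $\tau$ contributes no transition to a parallel step; and $B2$, the characteristic branching law $e\cdot(\tau\cdot(x+y)+x)=e\cdot(x+y)$, is verified by the relation pairing the two $e$-residuals $\tau\cdot(x+y)+x$ and $x+y$, whose branching bisimilarity rests exactly on the option of performing the internal $\tau$ before committing. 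The abstraction axioms $TI1$--$TI6$ are pushed through by a straightforward structural induction that tracks how $\tau_I$ relabels $I$-actions to $\tau$.

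The main obstacle will be $CFAR$. Here I would set up the fairness argument: if $X$ lies in a cluster for $I$ with the listed exits, then every infinite path staying inside the cluster consists of $I$-steps that $\tau_I$ renames to $\tau$, and fairness forbids the process from looping forever without eventually taking one of the exits. Concretely I would build a rooted branching bisimulation between $\tau\cdot\tau_I(\langle X|E\rangle)$ and $\tau\cdot\tau_I$ applied to the sum of the exits, collapsing the entire $\tau$-connected cluster to a single state and using clauses (1)--(2) of Definition \ref{BPSB} to absorb the internal $\tau$-transitions; the leading $\tau$ supplies the required rootedness. The non-trivial point — and the reason $CFAR$ cannot be derived from $B1$--$B3$ and $TI1$--$TI6$ alone — is justifying that the inner $\tau$-loop may be discarded, which is where the cluster-fairness hypothesis is essential and must be invoked explicitly rather than through a finite bisimulation check.
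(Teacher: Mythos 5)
The paper itself states this theorem without proof (it is imported from the prior work on $APTC$), but your overall strategy --- congruence of $\approx_{rbs}$, $\approx_{rbp}$, $\approx_{rbhp}$ with respect to every operator, plus per-axiom soundness, followed by induction on the derivation of $x=y$ --- is exactly the standard route that the cited development takes, and your verifications of $B1$--$B3$ and $TI1$--$TI6$ are along the right lines. Two remarks on scope: $CFAR$ is not part of this theorem (its soundness is the separate Theorem \ref{SCFAR}, and the completeness result names ``$APTC_{\tau}$ with guarded linear recursion \emph{and} $CFAR$'' as a larger system), so your fourth paragraph addresses a different statement; and your appeal to ``fairness forbidding infinite looping'' should be replaced by the observation that the branching bisimulations of Definition \ref{BPSB} are divergence-insensitive, so the collapsed-cluster relation you describe is already a legitimate branching bisimulation without any temporal fairness hypothesis.

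The genuine gap is your treatment of recursion. The proof system named in the statement includes $RDP$ and $RSP$ (Table \ref{RDPRSP}) for guarded linear recursive specifications, and their soundness must be established modulo $\approx_{rbs}$, $\approx_{rbp}$ and $\approx_{rbhp}$; your proposal files them under ``inherited axioms whose two sides are strongly bisimilar by Theorem \ref{SAPTCR}'', and that argument does not go through for $RSP$. $RSP$ is a conditional rule: its premise $y_i=t_i(y_1,\cdots,y_n)$ is now interpreted modulo rooted branching bisimilarity, so its soundness amounts to the assertion that a guarded linear recursive specification has a \emph{unique} solution up to $\approx_{rb\ast}$ --- a strictly stronger statement than uniqueness up to $\sim_{\ast}$, and one that fails for specifications that are guarded only in the strong sense (e.g.\ ones generating $\tau$-loops, which is precisely why Definition \ref{GLRSS} forbids infinite sequences of $\tau$-transitions). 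This uniqueness lemma is the substantive core of the theorem and needs its own argument: given two solutions, one builds a rooted branching bisimulation between them by unfolding the specification, using the absence of infinite $\tau$-paths to guarantee that every internal computation eventually reaches a matching observable step or termination. Without it, your induction on derivations cannot discharge applications of $RSP$, so the proof as proposed is incomplete. $RDP$, by contrast, is easy (the constant $\langle X_i|E\rangle$ and its one-step unfolding have identical transitions by Table \ref{TRForGR}), but it too deserves an explicit line.
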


\begin{theorem}[Soundness of $CFAR$]\label{SCFAR}
$CFAR$ is sound modulo rooted branching truly concurrent bisimulation equivalences $\approx_{rbs}$, $\approx_{rbp}$ and $\approx_{rbhp}$.
\end{theorem}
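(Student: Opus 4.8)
The plan is to prove the equation underlying $CFAR$ (Table~\ref{AxiomsForTau}) up to rooted branching truly concurrent bisimulation by first peeling off the leading $\tau$ on each side. Write $p=\tau_I(\langle X|E\rangle)$ and $q=\tau_I(P_{\mathrm{exit}})$, where $P_{\mathrm{exit}}$ is the sum of exits on the right-hand side, so that the two sides are $\tau\cdot p$ and $\tau\cdot q$. By the transition rules, the unique initial move of $\tau\cdot p$ is $\tau\cdot p\xrightarrow{\tau}p$, matched exactly by $\tau\cdot q\xrightarrow{\tau}q$; inspecting clause~(1) of Definition~\ref{RBPSB} (resp.~\ref{RBHHPB}) and the vacuous termination clauses at the root, it therefore suffices to prove the (non-rooted) branching equivalences $p\approx_{bs}q$, $p\approx_{bp}q$, and $p\approx_{bhp}q$ of Definitions~\ref{BPSB} and~\ref{BHHPB}. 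This is where all the content lies: the leading $\tau$ exists precisely to discharge the root condition.

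For the branching equivalences I would exploit the cluster hypothesis. By definition $X$ lies in a cluster $C\subseteq\{X_1,\dots,X_n\}$ for $I$, so the variables of $C$ are mutually reachable along transitions whose labels lie in $I$, and the listed exits $(a_{k1}\parallel\cdots\parallel a_{ki})Y_k$ and $b_{l1}\parallel\cdots\parallel b_{lj}$ are exactly the transitions leaving $C$. Under $\tau_I$ every intra-cluster label in $I$ is renamed to $\tau$, so for each $X'\in C$ the process $\tau_I(\langle X'|E\rangle)$ performs only $\tau$-steps to other $\tau_I(\langle X''|E\rangle)$ with $X''\in C$, together with the exit steps (whose labels lie outside $I$ and hence survive abstraction). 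The key fact I would isolate as a lemma is \emph{reachability of exits}: from any $X'\in C$ and for any listed exit, a finite chain of $I$-labelled transitions inside $C$ leads to a variable enabling that exit, i.e.\ $\tau_I(\langle X'|E\rangle)\xrightarrow{\tau^*}\cdot$ enables it. This is immediate from mutual reachability inside the cluster.

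With this in hand I would take the branching bisimulation
$$R=\{(\tau_I(\langle X'|E\rangle),\,q)\mid X'\in C\}\;\cup\;\{(r,r)\mid r\text{ a }\tau_I\text{-image of a state reachable after an exit}\}$$
and verify the clauses of Definition~\ref{BPSB}. In the forward direction an intra-cluster step of $\tau_I(\langle X'|E\rangle)$ is a $\tau$-step landing on some $\tau_I(\langle X''|E\rangle)$ with $X''\in C$, which falls under the first alternative of clause~(1) (the case $X\equiv\tau^*$) since the target is again $R$-related to $q$; an exit step is matched by $q$ firing the same exit summand (these are literally $(a_{k1}\parallel\cdots)\tau_I(\langle Y_k|E\rangle)$ and $b_{l1}\parallel\cdots$), landing in the identity part of $R$. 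In the backward direction any exit step of $q$ is matched using the reachability lemma: $\tau_I(\langle X'|E\rangle)$ first performs the intra-cluster $\tau^*$ to a variable enabling that exit, every intermediate state staying $R$-related to $q$, and then fires the exit into the identity part of $R$; the termination clauses are handled identically. Note that soundness here relies on branching bisimulation being divergence-insensitive (Definitions~\ref{BPSB} and~\ref{BHHPB} impose no condition matching infinite $\tau$-paths), which is exactly what permits the genuinely present intra-cluster $\tau$-loop to be absorbed.

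The truly concurrent content is that exits carry parallel prefixes $a_{k1}\parallel\cdots\parallel a_{ki}$, so the relevant transitions are steps/pomsets $\{a_{k1},\dots,a_{ki}\}$ rather than single events. The step case is the relation above read with step transitions; the pomset case additionally requires matched labels to be isomorphic as pomsets, which holds since both sides fire syntactically identical exit prefixes. I expect the main obstacle to be the hp-case, where $R$ must be upgraded to a weakly posetal relation carrying an order-isomorphism $f$ between configurations (Definition~\ref{BHHPB}): one must check that extending $f$ by $f[e\mapsto e]$ along each matched exit event preserves the isomorphism, while the absorbed intra-cluster $\tau$-events are recorded as $f[e_1\mapsto\tau]$ without disturbing the existing order. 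Verifying that this bookkeeping stays consistent through the $\tau^*$-prefix used in the backward direction, so that the configuration reached after an exit is order-isomorphic to its partner, is the delicate step. Once it is in place, the three branching equivalences hold, and prefixing by $\tau$ (clause~(1) of Definitions~\ref{RBPSB} and~\ref{RBHHPB}) yields $\approx_{rbs}$, $\approx_{rbp}$, and $\approx_{rbhp}$ as required.
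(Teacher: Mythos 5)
The paper does not actually prove Theorem~\ref{SCFAR}: the statement appears in the preliminaries without proof, imported from the cited APTC monograph, so there is no in-paper argument to compare yours against. Judged on its own terms, your proposal follows the standard Vaandrager-style soundness argument for CFAR, correctly adapted to this setting: peeling the leading $\tau$ to reduce the root condition to the plain branching equivalences, the exit-reachability lemma obtained from mutual reachability inside the cluster, the relation collapsing every cluster state onto the abstracted exit sum, and the explicit reliance on divergence-blindness of Definitions~\ref{BPSB} and~\ref{BHHPB} to absorb the intra-cluster $\tau$-loop are exactly the right ingredients; your identification of the weakly posetal bookkeeping in the hp-case as the delicate step is also apt.

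One imprecision is worth repairing. You assert that exit labels ``lie outside $I$ and hence survive abstraction,'' but the paper's definition of an exit only requires, for a summand $(a_1\parallel\cdots\parallel a_k)X$, that \emph{either} some $a_l\notin I\cup\{\tau\}$ \emph{or} $X\notin C$, and it imposes no label condition at all on terminating summands $b_1\parallel\cdots\parallel b_j$. Hence an exit may consist entirely of actions in $I$ and become a $\tau$-step after abstraction. For such exits your forward-direction case split must be refined: the move is a $\tau$-step that leaves the cluster, so the absorption alternative of clause~(1) of Definition~\ref{BPSB} is unavailable (its target is no longer $R$-related to $q$), and you must instead match it by letting $q$ fire the corresponding summand as a $\tau$. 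This does not break the proof, but the case has to be stated, since classifying every $\tau$-step of a cluster state as ``intra-cluster, hence absorbable'' would be wrong. You should also dispose of the degenerate case of a cluster with no exits, where $q=\tau_I(\delta)=\delta$; your relation still works there precisely because the branching equivalences are divergence-blind, and saying so costs one sentence.
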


\begin{theorem}[Completeness of $APTC_{\tau}$ with guarded linear recursion and $CFAR$]\label{CCFAR}
Let $p$ and $q$ be closed $APTC_{\tau}$ with guarded linear recursion and $CFAR$ terms, then,
\begin{enumerate}
  \item if $p\approx_{rbs} q$ then $p=q$;
  \item if $p\approx_{rbp} q$ then $p=q$;
  \item if $p\approx_{rbhp} q$ then $p=q$.
\end{enumerate}
\end{theorem}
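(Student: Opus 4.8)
The plan is to follow the standard two-stage pattern for completeness of a process calculus with abstraction: first reduce every closed term to a canonical form governed by a guarded linear recursive specification, and then use uniqueness of solutions ($RSP$) to equate two bisimilar canonical forms. I would prove all three cases ($\approx_{rbs}$, $\approx_{rbp}$, $\approx_{rbhp}$) in parallel, since they differ only in how much information the transition labels carry and the skeleton of the argument is identical.

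\textbf{Stage one (reduction to guarded linear form).} First I would show that every closed $APTC_{\tau}$ with guarded linear recursion term $p$ is provably equal to $\langle X_1|E\rangle$ for some guarded linear recursive specification $E$. For the subterms not lying under an abstraction operator this is inherited from the completeness of $APTC$ with linear recursion (Theorem~\ref{CAPTCR}) together with the elimination machinery of Section~\ref{PT}. The new work concerns $\tau_I$. Pushing $\tau_I$ through the specification using axioms $TI1$--$TI6$ yields a specification $\tau_I(E)$ that need not be guarded, because renaming actions in $I$ to $\tau$ can create clusters of $\tau$-transitions, i.e.\ infinite $\tau$-loops. This is exactly the situation $CFAR$ is designed for: for each such cluster I would apply $CFAR$ to rewrite $\tau\cdot\tau_I(\langle X|E\rangle)$ into the sum of its exits, collapsing the cluster and restoring guardedness. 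Iterating over all clusters produces a genuinely guarded linear specification whose first variable is provably equal to $p$.

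\textbf{Stage two (matching via $RSP$).} Now suppose $p\approx_{rbs} q$ (resp.\ $\approx_{rbp}$, $\approx_{rbhp}$), with $p=\langle X_1|E_1\rangle$ and $q=\langle Y_1|E_2\rangle$ both in guarded linear form, and fix a witnessing rooted branching (step / pomset / hp) bisimulation $R$. From $R$ I would build a combined guarded linear recursive specification $E$ whose variables are indexed by the pairs of reachable states of $p$ and $q$ that $R$ relates. For each such pair the right-hand side is read off from the common outgoing transitions; the rootedness clause guarantees that the very first transitions of $p$ and $q$ match exactly, with no intervening $\tau$, which anchors the construction at $(X_1,Y_1)$, while the branching clauses guarantee that deeper transitions can be matched up to $\tau^*$. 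Using the transition rules of Table~\ref{TRForGR} I would verify that $\langle X_1|E_1\rangle$ and $\langle Y_1|E_2\rangle$ are both solutions of $E$, and conclude $p=q$ by $RSP$. The hp-case additionally threads the order-isomorphism $f$ through the indexing, and the step / pomset cases carry parallel step labels $\{e_1,\dots,e_k\}$ in place of single events, but neither changes the shape of the argument.

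The main obstacle is guaranteeing that the specification built in stage two is \emph{guarded}. The branching conditions allow a transition on one side to be answered by a nonempty sequence of $\tau$-steps on the other, so a naive state-pairing can reintroduce $\tau$-cycles and destroy guardedness, the same pathology met in stage one. The resolution is again $CFAR$ (sound by Theorem~\ref{SCFAR}): I would quotient each $\tau$-connected cluster of related state-pairs to a single variable before reading off right-hand sides, so that every $\tau$-loop is absorbed and the combined specification is guarded by construction. Soundness (Theorem~\ref{SAPTCABS}) is used silently throughout to pass between provable equality and bisimilarity when verifying the solution claims.
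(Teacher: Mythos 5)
The paper states Theorem~\ref{CCFAR} without proof (it is imported from the cited APTC development), so your proposal can only be measured against the standard argument that development follows; your two-stage plan --- eliminate every closed term to $\langle X_1|E\rangle$ with $E$ guarded linear, using $CFAR$ to collapse the $\tau$-clusters that $\tau_I$ creates, then equate two rooted branching bisimilar guarded linear specifications by exhibiting a common specification and invoking $RSP$ --- is exactly that argument, and stage one is right as written (though what is inherited there is the elimination theorem for $APTC_{\tau}$ with guarded linear recursion rather than the completeness Theorem~\ref{CAPTCR}).

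The one genuine misstep is your proposed use of $CFAR$ in stage two. First, $CFAR$ as formulated in Table~\ref{AxiomsForTau} is an identity about $\tau\cdot\tau_I(\langle X|E\rangle)$ for a cluster \emph{for $I$}; it only applies underneath an abstraction operator, and after stage one there is no $\tau_I$ left in either term, so it cannot be invoked to ``quotient $\tau$-connected clusters of state-pairs'' in the combined specification. Second, no such repair is needed: every $\tau$-summand of a pair variable $Z_{XY}$ in the combined specification is induced by an actual $\tau$-transition of $\langle X|E_1\rangle$ or of $\langle Y|E_2\rangle$ (either an unmatched silent step absorbed by the branching clause, or one half of a matched step), so along any infinite $\tau$-path through the pair variables at least one of the two components changes infinitely often, and deleting the stuttering steps yields an infinite sequence of $\tau$-transitions inside $E_1$ or inside $E_2$ --- contradicting the guardedness of those specifications. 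Hence the combined specification is guarded automatically, and the remainder of your stage two (rootedness anchoring the first transition, $RSP$ closing the argument, the extra bookkeeping for the step, pomset and hp variants) goes through unchanged once the spurious appeal to $CFAR$ is replaced by this observation.
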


\subsection{Placeholder}

We introduce a constant called shadow constant $\circledS$ to act for the placeholder that we ever used to deal entanglement in quantum process algebra. The transition rule of the shadow constant $\circledS$ is shown in Table \ref{TRForShadow}. The rule say that $\circledS$ can terminate successfully without executing any action.

\begin{center}
    \begin{table}
        $$\frac{}{\circledS\rightarrow\surd}$$
        \caption{Transition rule of the shadow constant}
        \label{TRForShadow}
    \end{table}
\end{center}

We need to adjust the definition of guarded linear recursive specification
to the following one.

\begin{definition}[Guarded linear recursive specification]\label{GLRSS}
A linear recursive specification $E$ is guarded if there does not exist an infinite sequence of $\tau$-transitions $\langle X|E\rangle\xrightarrow{\tau}\langle X'|E\rangle\xrightarrow{\tau}\langle X''|E\rangle\xrightarrow{\tau}\cdots$, and there does not exist an infinite sequence of $\circledS$-transitions $\langle X|E\rangle\rightarrow\langle X'|E\rangle\rightarrow\langle X''|E\rangle\rightarrow\cdots$.
\end{definition}

\begin{theorem}[Conservativity of $APTC$ with respect to the shadow constant]
$APTC_{\tau}$ with guarded linear recursion and shadow constant is a conservative extension of $APTC_{\tau}$ with guarded linear recursion.
\end{theorem}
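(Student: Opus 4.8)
The plan is to reduce the claim directly to the general conservative-extension metatheorem, Theorem~\ref{TCE}. I would set $T_0$ to be the TSS of $APTC_{\tau}$ with guarded linear recursion over its signature $\Sigma_0$, and let $T_1$ consist of the single transition rule for the shadow constant given in Table~\ref{TRForShadow}, so that $\Sigma_1\setminus\Sigma_0=\{\circledS\}$. Proving that $APTC_{\tau}$ with guarded linear recursion and the shadow constant conservatively extends $APTC_{\tau}$ with guarded linear recursion is then exactly the assertion that $T_0\oplus T_1$ is a conservative extension of $T_0$, and it suffices to verify the hypotheses of Theorem~\ref{TCE}.

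First I would check that $T_0$ and $T_0\oplus T_1$ are positive after reduction. The only rules carrying negative premises are those for the unless operator $\triangleleft$ in Table~\ref{TRForAPTC}, whose premises have the form $y\nrightarrow^{e}$. Since the single rule of $T_1$ has an empty set of premises and introduces no further negative premises, the stratification witnessing that $T_0$ is positive after reduction extends to $T_0\oplus T_1$ by assigning the new shadow transition its own stratum; hence both systems remain positive after reduction.

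Next I would verify the two numbered conditions. For condition (1), source-dependency of $T_0$, I would run through the rule schemes of BATC, APTC, guarded linear recursion and abstraction, and confirm using the Source-dependency definition that in each rule every variable occurring in a premise or target is source-dependent; the recursion rules are the delicate case, where the source $\langle X_i|E\rangle$ must render all variables of $t_i(\langle X_1|E\rangle,\dots,\langle X_n|E\rangle)$ source-dependent, as are the abstraction rules that rename a label to $\tau$. For condition (2), I observe that the unique rule of $T_1$ has source $\circledS$; since $\circledS\in\Sigma_1\setminus\Sigma_0$ is a fresh function symbol, the source is fresh, which satisfies the first disjunct of condition (2) outright, so no premise analysis of that rule is required (and it is irrelevant whether the shadow transition $\circledS\rightarrow\surd$ is read as a fresh transition relation or a fresh predicate).

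With all hypotheses established, Theorem~\ref{TCE} gives that $T_0\oplus T_1$ is a conservative extension of $T_0$, which is the desired statement. I expect the main obstacle to be not the shadow rule itself — its fresh source makes condition (2) immediate — but the bookkeeping needed to license the metatheorem: confirming source-dependency uniformly across the entire $APTC_{\tau}$ signature, including the recursion and abstraction rules, and confirming that the negative premises of $\triangleleft$ leave the combined system positive after reduction. Both steps are routine given the earlier development, but they must be discharged explicitly before Theorem~\ref{TCE} can be applied.
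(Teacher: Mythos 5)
Your proposal is correct and follows exactly the route the paper intends: the paper states this theorem without an explicit proof, relying on the general conservative-extension metatheorem (Theorem~\ref{TCE}) set up in the preliminaries, and your verification --- positivity after reduction, source-dependency of the base TSS, and the observation that the single shadow rule has the fresh source $\circledS$ so condition (2) holds outright --- is precisely the standard argument. Your aside that it does not matter whether $\circledS\rightarrow\surd$ is read as a fresh transition or a fresh predicate is also the right way to dispose of the only notational wrinkle.
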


We design the axioms for the shadow constant $\circledS$ in Table \ref{AxiomsForShadow}. And for $\circledS^e_i$, we add superscript $e$ to denote $\circledS$ is belonging to $e$ and subscript $i$ to denote that it is the $i$-th shadow of $e$. And we extend the set $\mathbb{E}$ to the set $\mathbb{E}\cup\{\tau\}\cup\{\delta\}\cup\{\circledS^{e}_i\}$.

\begin{center}
\begin{table}
  \begin{tabular}{@{}ll@{}}
\hline No. &Axiom\\
  $SC1$ & $\circledS\cdot x = x$\\
  $SC2$ & $x\cdot \circledS = x$\\
  $SC3$ & $\circledS^{e}\parallel e=e$\\
  $SC4$ & $e\parallel(\circledS^{e}\cdot y) = e\cdot y$\\
  $SC5$ & $\circledS^{e}\parallel(e\cdot y) = e\cdot y$\\
  $SC6$ & $(e\cdot x)\parallel\circledS^{e} = e\cdot x$\\
  $SC7$ & $(\circledS^{e}\cdot x)\parallel e = e\cdot x$\\
  $SC8$ & $(e\cdot x)\parallel(\circledS^{e}\cdot y) = e\cdot (x\between y)$\\
  $SC9$ & $(\circledS^{e}\cdot x)\parallel(e\cdot y) = e\cdot (x\between y)$\\
\end{tabular}
\caption{Axioms of shadow constant}
\label{AxiomsForShadow}
\end{table}
\end{center}

The mismatch of action and its shadows in parallelism will cause deadlock, that is, $e\parallel \circledS^{e'}=\delta$ with $e\neq e'$. We must make all shadows $\circledS^e_i$ are distinct, to ensure $f$ in hp-bisimulation is an isomorphism.

\begin{theorem}[Soundness of the shadow constant]\label{SShadow}
Let $x$ and $y$ be $APTC_{\tau}$ with guarded linear recursion and the shadow constant terms. If $APTC_{\tau}$ with guarded linear recursion and the shadow constant $\vdash x=y$, then
\begin{enumerate}
  \item $x\approx_{rbs} y$;
  \item $x\approx_{rbp} y$;
  \item $x\approx_{rbhp} y$.
\end{enumerate}
\end{theorem}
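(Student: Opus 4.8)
The plan is to follow the standard route for soundness of an equational theory modulo a behavioural congruence: reduce the statement to (i) each of $\approx_{rbs}$, $\approx_{rbp}$, $\approx_{rbhp}$ being a congruence for every operator in the signature of $APTC_{\tau}$ with guarded linear recursion extended by the shadow constant, and (ii) each axiom $SC1$--$SC9$ of Table~\ref{AxiomsForShadow} relating behaviourally equivalent closed instances. Since provability $\vdash x=y$ is generated from the axioms by reflexivity, symmetry, transitivity and closure under contexts, (i) and (ii) together yield the theorem by induction on the length of the derivation; the axioms inherited from the base theory are already sound by Theorem~\ref{SAPTCABS}, so only the new axioms require fresh verification.

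First I would dispatch the congruence requirement. The equivalences are already congruences for the operators of $APTC_{\tau}$ with guarded linear recursion, so it suffices to observe that adjoining $\circledS$ (and its indexed variants $\circledS^e_i$) with the single transition rule of Table~\ref{TRForShadow} keeps the transition system specification within the same well-behaved rule format, while the preceding conservativity theorem guarantees that the old transitions are untouched. Because $\circledS$ is a constant, congruence with respect to it is immediate, and the existing rules for $\cdot$ and $\parallel$ already force congruence in any context in which a shadow appears.

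Next I would verify $SC1$--$SC9$ one at a time by exhibiting an explicit relation and checking the clauses of the appropriate bisimulation. For $SC1$ and $SC2$ the key point is that $\circledS\rightarrow\surd$ is a pure termination move carrying no action, so $\circledS\cdot p$ and $p\cdot\circledS$ each pass control to $p$ without an observable step; the witnessing relation is the identity augmented by the pairs $(\circledS\cdot p,p)$ and $(p\cdot\circledS,p)$, and the root condition is checked directly on the first move. For $SC3$--$SC9$ the shadow contributes no observable event to the parallel product, so $\circledS^{e}\parallel e$ and $e$ (and their sequential analogues) perform exactly the same initial step and then coincide; the relations are built the same way, using the parallel rules of Table~\ref{TRForAPTC} together with the shadow rule. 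In each case the step and pomset versions follow from the single-event version by the usual passage from events to pairwise concurrent sets of events.

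The hard part will be the $\approx_{rbhp}$ clause, where the witnessing posetal relation must carry an order-isomorphism $f$ between the two configurations that is preserved under matching moves. Absorbing a shadow into its matching action, as in $SC3$--$SC9$, must not disturb this isomorphism: one has to check that the causal order on the configuration containing $\circledS^{e}$ is isomorphic to that of the configuration in which the shadow has been merged with $e$. This is precisely where the side conditions flagged after Table~\ref{AxiomsForShadow} are needed --- that distinct shadows $\circledS^e_i$ are kept distinct, and that a shadow pairs only with its own action (a mismatch $e\parallel\circledS^{e'}$ with $e\neq e'$ deadlocking to $\delta$). I would therefore spend most of the effort arguing that, under these conditions, $f$ extended across a shadow-absorption step remains a bijection respecting $\leq$ in both directions, after which the three equivalences follow uniformly.
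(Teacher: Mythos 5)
Your proposal is correct and follows the standard soundness argument --- induction on the derivation of $\vdash x=y$, reduced to congruence of $\approx_{rbs}$, $\approx_{rbp}$, $\approx_{rbhp}$ plus a case-by-case check that each of $SC1$--$SC9$ relates bisimilar closed instances, with the base-theory axioms discharged by Theorem~\ref{SAPTCABS} --- which is exactly the route the paper (which omits the proof, deferring to the cited APTC development) relies on for all its soundness theorems. Your identification of the hp-bisimulation clause, and of the role of the distinctness of the shadows $\circledS^e_i$ in keeping $f$ an isomorphism, matches the only point the paper itself flags as delicate.
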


\begin{theorem}[Completeness of the shadow constant]\label{CRenaming}
Let $p$ and $q$ be closed $APTC_{\tau}$ with guarded linear recursion and $CFAR$ and the shadow constant terms, then,
\begin{enumerate}
  \item if $p\approx_{rbs} q$ then $p=q$;
  \item if $p\approx_{rbp} q$ then $p=q$;
  \item if $p\approx_{rbhp} q$ then $p=q$.
\end{enumerate}
\end{theorem}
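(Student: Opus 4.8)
The plan is to prove completeness by reduction to the already-established completeness of $APTC_{\tau}$ with guarded linear recursion and $CFAR$ (Theorem \ref{CCFAR}). Since the only new ingredient here is the shadow constant, the natural route is to show that every closed term of the extended algebra is provably equal, using $SC1$--$SC9$ together with the axioms already available, to a closed term in which no shadow constant occurs, and then to invoke Theorem \ref{CCFAR}. The argument is uniform across the three equivalences $\approx_{rbs}$, $\approx_{rbp}$ and $\approx_{rbhp}$, since only the equivalence symbol changes; I would carry it out for $\approx_{rbs}$ and remark that the other two are verbatim.

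First I would establish the key elimination lemma: for every closed term $p$ of $APTC_{\tau}$ with guarded linear recursion, $CFAR$ and the shadow constant there is a shadow-free closed term $p'$ with $\vdash p=p'$. I orient $SC1$--$SC9$ from left to right as rewrite rules: $SC1$ and $SC2$ delete a bare $\circledS$ occurring as a sequential factor, while $SC3$--$SC9$ absorb a superscripted shadow $\circledS^{e}$ whenever it stands in parallel with its matching action $e$; a shadow meeting a non-matching action is rewritten to $\delta$ via $e\parallel\circledS^{e'}=\delta$, which is again shadow-free. Each of these rules strictly decreases the number of shadow symbols, so choosing a well-founded ordering that counts shadow occurrences (with the usual $APTC$ ordering breaking ties) and appealing to the strong-normalization theorem (Theorem \ref{SN}) shows that the rewriting terminates in a shadow-free normal form. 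Because the recursion is guarded and linear, the elimination is applied componentwise to the right-hand sides of the linear recursive specification, yielding a shadow-free specification whose solution is $p'$.

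Given the elimination lemma, completeness follows quickly. Suppose $p\approx_{rbs} q$. The lemma gives shadow-free $p'$ and $q'$ with $\vdash p=p'$ and $\vdash q=q'$, hence by soundness (Theorem \ref{SShadow}) $p\approx_{rbs} p'$ and $q\approx_{rbs} q'$, and transitivity yields $p'\approx_{rbs} q'$. By the conservativity result stated above, the transition system of a shadow-free term coincides with the one computed in $APTC_{\tau}$ with guarded linear recursion, so the rooted branching bisimilarity $p'\approx_{rbs} q'$ already holds in that subtheory; Theorem \ref{CCFAR} then gives $p'=q'$, whence $\vdash p=p'=q'=q$. The cases of $\approx_{rbp}$ and $\approx_{rbhp}$ use the corresponding clauses of Theorems \ref{SShadow} and \ref{CCFAR}.

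The main obstacle is the elimination lemma, and within it the treatment of shadows that are not syntactically adjacent to their matching action, together with the interaction with recursion. One must verify that normalization never gets stuck on a shadow trapped in a nested parallel or sequential context before it can be paired with its action; here the distinctness of the shadows $\circledS^{e}_i$ (imposed precisely so that $f$ stays an isomorphism in the hp-case) is used to guarantee that each shadow has a unique matching partner, so that $SC3$--$SC9$ and the mismatch identity between them cover every configuration in which a shadow can survive. The remaining bookkeeping, namely that the well-founded ordering is compatible with all other $APTC_{\tau}$ axioms and that elimination commutes with unfolding of the guarded linear specification, is routine given Theorems \ref{SN} and \ref{CCFAR}.
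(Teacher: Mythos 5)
The paper states this theorem in the ``Placeholder'' subsection without giving any proof, so there is nothing to compare your argument against; your overall strategy --- eliminate the shadow constant, then reduce to Theorem \ref{CCFAR} via soundness and conservativity --- is the natural one and the second half of your argument (shadow-free $p',q'$, transfer of $\approx_{rbs}$ along the conservative extension, completeness of the base theory) is sound as far as it goes.

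The genuine gap is the elimination lemma, which carries essentially all of the weight and is asserted rather than proved. Two concrete problems. First, strong normalization of $SC1$--$SC9$ does not by itself yield shadow-free normal forms: you also need that every closed term still containing a shadow has a redex, and the $SC$ axioms only fire when a bare $\circledS$ is an immediate sequential factor or when $\circledS^{e}$ sits as a direct operand of $\parallel$ opposite its matching $e$. A shadow occurring under $+$, $\mid$, $\Theta$, $\partial_H$ or $\tau_I$, or separated from its partner inside a deeper $\between$-context, is not a redex for any $SC$ rule; to expose it you must interleave the other $APTC_{\tau}$ axioms ($P1$, $P7$, $P8$, the expansion of $\between$, distribution over $+$), and these duplicate subterms and hence can increase the shadow count, so your termination measure ``number of shadow occurrences, ties broken by the usual ordering'' is not decreasing for the combined rewrite system. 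This is exactly the part you defer as ``routine bookkeeping,'' but it is where the actual work lies. Second, the lemma as you state it is false for arbitrary closed terms: $\circledS^{e}$ standing alone, or a summand as in $a+\circledS^{e}$, terminates successfully without performing any action, and the base signature of $APTC_{\tau}$ contains no empty-process constant, so no shadow-free term is provably (or even semantically) equal to it. The elimination lemma, and hence the completeness theorem, implicitly presupposes that every shadow $\circledS^{e}_i$ is properly matched by an occurrence of $e$ in a parallel component; that hypothesis needs to be made explicit and then used in the no-stuck-shadow argument (distinctness of the $\circledS^{e}_i$ gives uniqueness of the partner, as you note, but not its existence or eventual adjacency).
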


With the shadow constant, we have

\begin{eqnarray}
\partial_H((a\cdot r_b)\between w_b)&=&\partial_H((a\cdot r_b) \between (\circledS^a_1\cdot w_b)) \nonumber\\
&=&a\cdot c_b\nonumber
\end{eqnarray}

with $H=\{r_b,w_b\}$ and $\gamma(r_b,w_b)\triangleq c_b$.

And we see the following example:

\begin{eqnarray}
a\between b&=&a\parallel b+a\mid b \nonumber\\
&=&a\parallel b + a\parallel b + a\parallel b +a\mid b \nonumber\\
&=&a\parallel (\circledS^a_1\cdot b) + (\circledS^b_1\cdot a)\parallel b+a\parallel b +a\mid b \nonumber\\
&=&(a\parallel\circledS^a_1)\cdot b + (\circledS^b_1\parallel b)\cdot a+a\parallel b +a\mid b \nonumber\\
&=&a\cdot b+b\cdot a+a\parallel b + a\mid b\nonumber
\end{eqnarray}

What do we see? Yes. The parallelism contains both interleaving and true concurrency. This may be why true concurrency is called \emph{\textbf{true} concurrency}.

\subsection{Axiomatization for Hhp-Bisimilarity}{\label{ahhpb}}

Since hhp-bisimilarity is a downward closed hp-bisimilarity and can be downward closed to single atomic event, which implies bisimilarity. As Moller \cite{ILM} proven, there is not a finite sound and complete axiomatization for parallelism $\parallel$ modulo bisimulation equivalence, so there is not a finite sound and complete axiomatization for parallelism $\parallel$ modulo hhp-bisimulation equivalence either. Inspired by the way of left merge to modeling the full merge for bisimilarity, we introduce a left parallel composition $\leftmerge$ to model the full parallelism $\parallel$ for hhp-bisimilarity.

In the following subsection, we add left parallel composition $\leftmerge$ to the whole theory. Because the resulting theory is similar to the former, we only list the significant differences, and all proofs of the conclusions are left to the reader.

\subsubsection{$APTC$ with Left Parallel Composition}

The transition rules of left parallel composition $\leftmerge$ are shown in Table \ref{TRForLeftParallel}. With a little abuse, we extend the causal order relation $\leq$ on $\mathbb{E}$ to include the original partial order (denoted by $<$) and concurrency (denoted by $=$).

\begin{center}
    \begin{table}
        $$\frac{x\xrightarrow{e_1}\surd\quad y\xrightarrow{e_2}\surd \quad(e_1\leq e_2)}{x\leftmerge y\xrightarrow{\{e_1,e_2\}}\surd} \quad\frac{x\xrightarrow{e_1}x'\quad y\xrightarrow{e_2}\surd \quad(e_1\leq e_2)}{x\leftmerge y\xrightarrow{\{e_1,e_2\}}x'}$$
        $$\frac{x\xrightarrow{e_1}\surd\quad y\xrightarrow{e_2}y' \quad(e_1\leq e_2)}{x\leftmerge y\xrightarrow{\{e_1,e_2\}}y'} \quad\frac{x\xrightarrow{e_1}x'\quad y\xrightarrow{e_2}y' \quad(e_1\leq e_2)}{x\leftmerge y\xrightarrow{\{e_1,e_2\}}x'\between y'}$$
        \caption{Transition rules of left parallel operator $\leftmerge$}
        \label{TRForLeftParallel}
    \end{table}
\end{center}

The new axioms for parallelism are listed in Table \ref{AxiomsForLeftParallelism}.

\begin{center}
    \begin{table}
        \begin{tabular}{@{}ll@{}}
            \hline No. &Axiom\\
            $A6$ & $x+ \delta = x$\\
            $A7$ & $\delta\cdot x =\delta$\\
            $P1$ & $x\between y = x\parallel y + x\mid y$\\
            $P2$ & $x\parallel y = y \parallel x$\\
            $P3$ & $(x\parallel y)\parallel z = x\parallel (y\parallel z)$\\
            $P4$ & $x\parallel y = x\leftmerge y + y\leftmerge x$\\
            $P5$ & $(e_1\leq e_2)\quad e_1\leftmerge (e_2\cdot y) = (e_1\leftmerge e_2)\cdot y$\\
            $P6$ & $(e_1\leq e_2)\quad (e_1\cdot x)\leftmerge e_2 = (e_1\leftmerge e_2)\cdot x$\\
            $P7$ & $(e_1\leq e_2)\quad (e_1\cdot x)\leftmerge (e_2\cdot y) = (e_1\leftmerge e_2)\cdot (x\between y)$\\
            $P8$ & $(x+ y)\leftmerge z = (x\leftmerge z)+ (y\leftmerge z)$\\
            $P9$ & $\delta\leftmerge x = \delta$\\
            $C10$ & $e_1\mid e_2 = \gamma(e_1,e_2)$\\
            $C11$ & $e_1\mid (e_2\cdot y) = \gamma(e_1,e_2)\cdot y$\\
            $C12$ & $(e_1\cdot x)\mid e_2 = \gamma(e_1,e_2)\cdot x$\\
            $C13$ & $(e_1\cdot x)\mid (e_2\cdot y) = \gamma(e_1,e_2)\cdot (x\between y)$\\
            $C14$ & $(x+ y)\mid z = (x\mid z) + (y\mid z)$\\
            $C15$ & $x\mid (y+ z) = (x\mid y)+ (x\mid z)$\\
            $C16$ & $\delta\mid x = \delta$\\
            $C17$ & $x\mid\delta = \delta$\\
            $CE18$ & $\Theta(e) = e$\\
            $CE19$ & $\Theta(\delta) = \delta$\\
            $CE20$ & $\Theta(x+ y) = \Theta(x)\triangleleft y + \Theta(y)\triangleleft x$\\
            $CE21$ & $\Theta(x\cdot y)=\Theta(x)\cdot\Theta(y)$\\
            $CE22$ & $\Theta(x\leftmerge y) = ((\Theta(x)\triangleleft y)\leftmerge y)+ ((\Theta(y)\triangleleft x)\leftmerge x)$\\
            $CE23$ & $\Theta(x\mid y) = ((\Theta(x)\triangleleft y)\mid y)+ ((\Theta(y)\triangleleft x)\mid x)$\\
            $U24$ & $(\sharp(e_1,e_2))\quad e_1\triangleleft e_2 = \tau$\\
            $U25$ & $(\sharp(e_1,e_2),e_2\leq e_3)\quad e_1\triangleleft e_3 = e_1$\\
            $U26$ & $(\sharp(e_1,e_2),e_2\leq e_3)\quad e3\triangleleft e_1 = \tau$\\
            $U27$ & $e\triangleleft \delta = e$\\
            $U28$ & $\delta \triangleleft e = \delta$\\
            $U29$ & $(x+ y)\triangleleft z = (x\triangleleft z)+ (y\triangleleft z)$\\
            $U30$ & $(x\cdot y)\triangleleft z = (x\triangleleft z)\cdot (y\triangleleft z)$\\
            $U31$ & $(x\leftmerge y)\triangleleft z = (x\triangleleft z)\leftmerge (y\triangleleft z)$\\
            $U32$ & $(x\mid y)\triangleleft z = (x\triangleleft z)\mid (y\triangleleft z)$\\
            $U33$ & $x\triangleleft (y+ z) = (x\triangleleft y)\triangleleft z$\\
            $U34$ & $x\triangleleft (y\cdot z)=(x\triangleleft y)\triangleleft z$\\
            $U35$ & $x\triangleleft (y\leftmerge z) = (x\triangleleft y)\triangleleft z$\\
            $U36$ & $x\triangleleft (y\mid z) = (x\triangleleft y)\triangleleft z$\\
        \end{tabular}
        \caption{Axioms of parallelism with left parallel composition}
        \label{AxiomsForLeftParallelism}
    \end{table}
\end{center}

\begin{definition}[Basic terms of $APTC$ with left parallel composition]
The set of basic terms of $APTC$, $\mathcal{B}(APTC)$, is inductively defined as follows:
\begin{enumerate}
  \item $\mathbb{E}\subset\mathcal{B}(APTC)$;
  \item if $e\in \mathbb{E}, t\in\mathcal{B}(APTC)$ then $e\cdot t\in\mathcal{B}(APTC)$;
  \item if $t,s\in\mathcal{B}(APTC)$ then $t+ s\in\mathcal{B}(APTC)$;
  \item if $t,s\in\mathcal{B}(APTC)$ then $t\leftmerge s\in\mathcal{B}(APTC)$.
\end{enumerate}
\end{definition}

\begin{theorem}[Generalization of the algebra for left parallelism with respect to $BATC$]
The algebra for left parallelism is a generalization of $BATC$.
\end{theorem}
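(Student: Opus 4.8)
The plan is to read ``generalization'' in the standard ACP sense as ``conservative extension,'' so that the statement becomes: the TSS of the algebra for left parallelism is a conservative extension of the TSS of $BATC$. Accordingly I would set $T_0$ to be $BATC$, with signature $\Sigma_0$ generated by the atomic actions together with $+$ and $\cdot$ and with the transition rules of Table~\ref{TRForBATC}, and let $T_1$ collect the transition rules for the genuinely new operators over the enlarged signature $\Sigma_1$: left parallel composition $\leftmerge$ (Table~\ref{TRForLeftParallel}), together with the merge operators $\between$ and $\parallel$, the communication merge $\mid$, the conflict-elimination operator $\Theta$, the unless operator $\triangleleft$, the encapsulation $\partial_H$ and the deadlock constant $\delta$. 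The target is then exactly the conclusion of the Conservative Extension Theorem~\ref{TCE} applied to $T_0\oplus T_1$.

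First I would discharge the two structural hypotheses of Theorem~\ref{TCE}. Source-dependency of $T_0$ is checked rule by rule: the axiom $e\xrightarrow{e}\surd$ has a variable-free source, and in each rule for $+$ and $\cdot$ the source ($x+y$ or $x\cdot y$) already contains every variable, with the target variable supplied through the premise, so every $BATC$ rule is source-dependent. For condition~(2) I would note that each rule of $T_1$ has its source headed by a function symbol in $\Sigma_1\setminus\Sigma_0$ --- one of $\leftmerge$, $\between$, $\parallel$, $\mid$, $\Theta$, $\triangleleft$, $\partial_H$ --- so the source of every rule in $T_1$ is fresh and the disjunction in condition~(2) is satisfied trivially, with no need to exhibit the alternative premise form.

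The substantive point, and the one I expect to be the main obstacle, is the requirement that $T_0$ and $T_0\oplus T_1$ be positive after reduction. For $T_0$ this is immediate, since $BATC$ has only positive rules. For $T_0\oplus T_1$ the difficulty is that the rules governing $\triangleleft$ carry negative premises of the form $y\nrightarrow^{e_2}$ (and $y\nrightarrow^{e_3}$); one must argue that the reduction procedure terminates and leaves a positive TSS. I would handle this by exhibiting a stratification --- measuring, for instance, the structural size of the term appearing in the source of each negative premise --- under which the negative-premise rules are layered, so that ``positive after reduction'' holds. This is the same verification already carried out for the underlying $APTC$ theory, which the section explicitly says the present development mirrors.

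Once the hypotheses are in place, Theorem~\ref{TCE} yields directly that the LTSs generated by $T_0$ and by $T_0\oplus T_1$ contain exactly the same transitions $t\xrightarrow{a}t'$ and predicates $tP$ for every $t\in\mathcal{T}(\Sigma_0)$; that is, a $BATC$ term behaves identically whether read inside $BATC$ or inside the algebra for left parallelism. This is precisely what it means for the latter to generalize the former, completing the proof. Every step other than the stratification check is routine and parallels the corresponding argument for $APTC$.
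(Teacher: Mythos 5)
Your proposal is correct and follows exactly the route the paper intends: the paper states just before this subsection that ``all proofs of the conclusions are left to the reader,'' and the standard (and evidently intended) argument is precisely an appeal to the Conservative Extension Theorem~\ref{TCE}, checking source-dependency of the $BATC$ rules of Table~\ref{TRForBATC} and freshness of the sources of the rules for $\leftmerge$, $\parallel$, $\between$, $\mid$, $\Theta$, $\triangleleft$, with the stratification remark correctly disposing of the negative premises in the $\triangleleft$ rules so that ``positive after reduction'' holds. The only cosmetic deviation is that you fold $\partial_H$ into $T_1$, whereas the paper adds encapsulation in a separate later conservativity step; this does not affect the argument.
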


\begin{theorem}[Congruence theorem of $APTC$ with left parallel composition]
Truly concurrent bisimulation equivalences $\sim_{p}$, $\sim_s$, $\sim_{hp}$ and $\sim_{hhp}$ are all congruences with respect to $APTC$ with left parallel composition.
\end{theorem}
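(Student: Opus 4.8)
The plan is to establish, for each of the four equivalences $\sim_p$, $\sim_s$, $\sim_{hp}$ and $\sim_{hhp}$, that it is preserved by every operator in the signature of $APTC$ with left parallel composition, namely $+$, $\cdot$, $\parallel$, $\leftmerge$, $\mid$, $\Theta$, $\triangleleft$, $\partial_H$ and $\between$. For all operators already present in $APTC$ the preservation arguments are verbatim the ones underlying the congruence of $APTC$ (they are packaged in the soundness results for $APTC$ and need not be repeated); the only genuinely new case is the left parallel composition $\leftmerge$, so the bulk of the work is to show that each equivalence is a congruence with respect to $\leftmerge$. Throughout I rely on the transition rules of $\leftmerge$ in Table \ref{TRForLeftParallel}, whose shape---each premise is a transition of an argument, the side condition $e_1\le e_2$ constrains only labels, and each target is built from the argument states---is exactly what makes the standard congruence machinery go through.

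For $\sim_p$ and $\sim_s$ I would argue directly. Assuming $x_1\sim_p x_1'$ and $y_1\sim_p y_1'$ witnessed by pomset bisimulations $R_1$ and $R_2$, I set
\[
R=\{(x\leftmerge y,\;x'\leftmerge y') : (x,x')\in R_1,\ (y,y')\in R_2\}\cup{\sim_p},
\]
and check the transfer property. Given a move $x_1\leftmerge y_1\xrightarrow{X}C'$, the rules of Table \ref{TRForLeftParallel} force it to arise from component moves $x_1\xrightarrow{e_1}\cdot$ and $y_1\xrightarrow{e_2}\cdot$ with $X=\{e_1,e_2\}$ and $e_1\le e_2$; I then transport each component move across $R_1$, $R_2$ to matching moves with isomorphic labels, and recombine them through the same rule (the side condition $e_1\le e_2$ is preserved because it depends only on labels, which are matched). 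This yields $x_1'\leftmerge y_1'\xrightarrow{X'}C''$ with $X\sim X'$ and the residuals back in $R$; the symmetric clause and the step case (restricting $X$ to pairwise concurrent events) are identical.

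For $\sim_{hp}$ and $\sim_{hhp}$ the relation lives in the posetal product and must carry an isomorphism $f$. Given hp-bisimulations witnessing $x_1\sim_{hp}x_1'$ and $y_1\sim_{hp}y_1'$, I build the posetal relation of triples $(C_1\leftmerge C_2,\,f,\,C_1'\leftmerge C_2')$ whose projections lie in the two given relations and whose $f$ is the induced combination of the two component isomorphisms, and then verify the single-event transfer clause of Definition \ref{HHPB}, updating $f$ by $f[e_1\mapsto e_2]$ exactly as the matched event demands. For $\sim_{hhp}$ I additionally have to check that the constructed relation is downward closed; this follows from downward closure of the two component relations together with the fact that a sub-configuration of $C_1\leftmerge C_2$ again splits as a left-merge of sub-configurations, so no new, unmatched pairs are created by passing downward.

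The main obstacle will be the $\sim_{hhp}$ case: one must guarantee that the combined map $f$ is a genuine isomorphism of the merged posets and stays so under downward restriction. This is precisely where the convention that all shadow constants $\circledS^{e}_i$ are distinct (so that $f$ remains injective across parallel branches) is used; keeping track of the causal order $\le$, now stipulated to encode both strict order $<$ and concurrency $=$ on $\mathbb{E}$, so that the left-merge orientation is respected by $f$ on both sides, is the delicate bookkeeping that the other three cases avoid.
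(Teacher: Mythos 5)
The paper offers no proof of this theorem at all---the subsection on left parallel composition explicitly states that ``all proofs of the conclusions are left to the reader''---so there is nothing to compare against; your proposal is the standard argument that is evidently intended: reduce the problem to the one new operator $\leftmerge$, lift the component bisimulations through the transition rules of Table \ref{TRForLeftParallel} (the side condition $e_1\leq e_2$ depending only on labels), and handle $\sim_{hhp}$ by checking downward closure of the composed posetal relation. Two small blemishes: the residual of a $\leftmerge$-transition is a $\between$-term rather than a $\leftmerge$-term, so your candidate relation closes up only because you adjoin $\sim$ itself and implicitly use congruence of $\between$ (inherited from the $APTC$ operators); and the appeal to distinctness of the shadow constants $\circledS^{e}_i$ is out of place here, since the shadow constant belongs to a separate later extension and plays no role in $APTC$ with left parallel composition.
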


\begin{theorem}[Elimination theorem of parallelism with left parallel composition]
Let $p$ be a closed $APTC$ with left parallel composition term. Then there is a basic $APTC$ term $q$ such that $APTC\vdash p=q$.
\end{theorem}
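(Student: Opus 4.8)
The statement asserts the elimination-to-basic-terms property for $APTC$ with left parallel composition, so I would follow the standard rewriting route and use Theorem \ref{SN}. First I would turn the equational axioms of Table \ref{AxiomsForLeftParallelism}, together with the inherited $BATC$ laws of Table \ref{AxiomsForBATC}, into a term rewriting system by orienting each equation from left to right: the distributivity, defining and action laws ($A4$, $A5$, $A6$, $A7$, $P1$, $P4$--$P9$, $C10$--$C17$, $CE18$--$CE23$, $U24$--$U36$) become rewrite rules, whereas the purely structural laws that cannot be oriented ($A1$, $A2$ for $+$ and $P2$, $P3$ for $\parallel$) are not needed as rules, since $\parallel$ is removed outright by $P4$. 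The observation that shortens the whole argument is clause (4) of the definition of basic terms: a left merge $t\leftmerge s$ of two basic terms is \emph{already} basic. Hence $\leftmerge$ never has to be rewritten, and the operators that actually must be eliminated are only $\between$, $\parallel$, $\mid$, $\Theta$ and $\triangleleft$.

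Second, I would establish strong normalization of this system by exhibiting a well-founded order on terms in which every rule strictly decreases, and then invoke Theorem \ref{SN}. The main obstacle lies exactly here. Orienting $P1$ forces the full merge $\between$ to dominate the communication merge $\mid$, while orienting $C13$, namely $(e_1\cdot x)\mid (e_2\cdot y)\to\gamma(e_1,e_2)\cdot(x\between y)$, reintroduces $\between$ on the right-hand side; the same phenomenon occurs for $CE22$, $CE23$ with $\Theta$ and for the $U$-rules with $\triangleleft$. A precedence-only path order cannot break this mutual dependency, so the order must be sensitive to the size of arguments and must exploit that the reintroduced $\between$ acts on the strictly smaller subterms $x$, $y$ rather than on $e_1\cdot x$, $e_2\cdot y$. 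I would therefore use a weight interpretation assigning the merge-like operators a value growing faster than linearly in the weights of their arguments, so that replacing $e_i\cdot\_$ by $\_$ strictly lowers the weight even though $\between$ reappears, and then check rule by rule that the weight decreases. This is the technical heart of the proof and the step I expect to be most delicate.

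Finally, with termination in hand I would prove that every closed normal form is a basic term, by induction on term structure with a case analysis on the outermost symbol. If the head is $\between$, $\parallel$, $\mid$, $\Theta$ or $\triangleleft$, then, once the subterms are in normal form and hence basic by the induction hypothesis, one of the oriented rules still applies (using $A4$ and $A5$ to expose the leading atomic action of a sequential composition), contradicting normality; so the head can only be an atomic action, $+$, $\cdot$ with an atomic left factor, or $\leftmerge$, which is precisely the grammar of $\mathcal{B}(APTC)$. Since each rewrite step is an instance of an axiom, the entire reduction is derivable, so for any closed $p$ its normal form $q$ is a basic term with $APTC\vdash p=q$, as required.
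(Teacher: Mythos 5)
A preliminary remark: the paper does not actually contain a proof of this theorem. The subsection on left parallel composition states explicitly that ``all proofs of the conclusions are left to the reader,'' so there is nothing in the text to compare your argument against; the only guidance is the Proof Techniques subsection (the elimination property and the strong-normalization Theorem \ref{SN}) and the elimination proofs of the cited APTC reference, which follow exactly the rewriting route you adopt. Measured against that template, your outline is the intended one: orient the axioms of Table \ref{AxiomsForLeftParallelism} as rewrite rules, prove strong normalization, and show that closed normal forms are basic. Your two structural observations are correct and worth keeping: the unorientable laws $A1$, $A2$, $P2$, $P3$ are not needed for elimination, and clause (4) of the paper's definition of basic terms makes $t\leftmerge s$ basic whenever $t,s$ are, so the operators that genuinely have to disappear are only $\between$, $\parallel$, $\mid$, $\Theta$ and $\triangleleft$ (the encapsulation operator is handled by a separate, later elimination theorem).

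The one step that is not yet a proof is termination, and your diagnosis of why is exactly right: orienting $P1$ forces $\between$ above $\parallel$ and $\mid$ in any precedence, while $C13$, $P7$, $CE22$, $CE23$ and the $U$-rules reintroduce $\between$ or $\leftmerge$ on their right-hand sides, so Theorem \ref{SN} --- which is stated for the lexicographical path order induced by a well-founded precedence on the signature --- cannot be invoked as it stands. Your proposed repair, a superlinearly growing weight interpretation, is a legitimate termination criterion but it is a \emph{different} one from the theorem you cite, and it is delicate to instantiate: for example, with $[x\cdot y]=[x][y]$, $[e]=2$ and $[x\leftmerge y]=[x]^2[y]^2$, the rule $P7$, $(e_1\cdot x)\leftmerge (e_2\cdot y)\rightarrow(e_1\leftmerge e_2)\cdot(x\between y)$, strictly \emph{increases} the weight, because the constant $[e_1\leftmerge e_2]$ multiplies the whole right-hand side. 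Until a concrete interpretation is written down and checked rule by rule, this step remains a promissory note. The standard way to discharge it without inventing an ordering is to abandon the single global TRS for the merge operators and argue by nested induction instead: structural induction reduces the theorem to showing that $p_1\between p_2$, $p_1\parallel p_2$, $p_1\mid p_2$, $\Theta(p_1)$ and $p_1\triangleleft p_2$ equal basic terms when $p_1,p_2$ are basic, and for the merge cases one inducts on the total number of symbols of $p_1$ and $p_2$, which strictly decreases precisely because --- as you yourself observe --- the reintroduced $\between$ acts on proper subterms. With the middle step restructured in that way, the rest of your outline goes through.
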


\begin{theorem}[Soundness of parallelism  with left parallel composition modulo truly concurrent bisimulation equivalences]
Let $x$ and $y$ be $APTC$ with left parallel composition terms. If $APTC\vdash x=y$, then

\begin{enumerate}
  \item $x\sim_{s} y$;
  \item $x\sim_{p} y$;
  \item $x\sim_{hp} y$;
  \item $x\sim_{hhp} y$.
\end{enumerate}
\end{theorem}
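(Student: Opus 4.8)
The plan is to reduce soundness to two ingredients that are standard for equational axiomatizations modulo a behavioral equivalence: congruence of the equivalence, and validity of each individual axiom. The congruence theorem immediately preceding this statement already guarantees that $\sim_s$, $\sim_p$, $\sim_{hp}$ and $\sim_{hhp}$ are congruences with respect to every operator of $APTC$ with left parallel composition. Consequently, for a fixed equivalence it suffices to show that every axiom $s=t$ in Table~\ref{AxiomsForLeftParallelism} is valid, meaning $s\sim t$ holds for all closed instantiations of its variables. I would then run an induction on the structure of an equational derivation of $APTC\vdash x=y$: the base case is an axiom instance (handled by per-axiom validity), and the inductive cases—reflexivity, symmetry, transitivity, and substitution into a context—are discharged respectively by the equivalence-relation properties of $\sim$ and by the congruence theorem. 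This yields $x\sim y$ for each of the four equivalences.

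For the per-axiom check, the axioms already present in $BATC$ and the original $APTC$ (the $A$-, $C$-, $CE$-, $U$- and $D$-families, together with $P1$--$P3$) are validated exactly as in the soundness proofs of Theorems~\ref{SBATC} and~\ref{SAPTC}. The genuinely new obligations are the axioms mentioning $\leftmerge$, namely $P4$--$P9$, $CE22$, $U31$ and $U35$. For each such axiom and an arbitrary closed instantiation I would exhibit an explicit relation $R$ containing the pair of the two sides and verify, using the transition rules of Table~\ref{TRForLeftParallel} alongside those for $APTC$ and $BATC$, that $R$ is a step (respectively pomset) bisimulation. The decisive computations are $P4$, where a step $\{e_1,e_2\}$ of $x\parallel y$ is matched against the summand $x\leftmerge y$ or $y\leftmerge x$ selected by whether $e_1\leq e_2$ or $e_2\leq e_1$, and $P5$--$P7$, where the side condition $(e_1\leq e_2)$ on the left-merge rules ensures both the initial combined step and the residual behaviour $x\between y$ are reproduced on each side.

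For $\sim_{hp}$ and $\sim_{hhp}$ the witnessing relations must be posetal: each pair carries an order-isomorphism $f$ between the reached configurations, extended along matched single-event transitions by $f[e_1\mapsto e_2]$. The crucial observation for the left-merge axioms is that the side condition $e_1\leq e_2$ in Table~\ref{TRForLeftParallel}—where $\leq$ is the extended relation subsuming both the strict order $<$ and concurrency, as noted before that table—records precisely the causal relationship between the two fired events. Hence the event order imposed by $x\leftmerge y + y\leftmerge x$ coincides with that imposed by $x\parallel y$, making $f$ genuinely order-preserving. For $\sim_{hhp}$ I additionally verify downward closure of the constructed posetal relation, which holds because the relation is defined uniformly across all sub-configurations and the left-merge rules never impose an order that a sub-configuration could violate.

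The main obstacle is the $\sim_{hhp}$ case of $P4$. Since $\parallel$ alone admits no finite sound and complete axiomatization modulo hhp-bisimilarity, the decomposition $x\parallel y = x\leftmerge y + y\leftmerge x$ is engineered so that the two ordered summands jointly reproduce every interleaving and every concurrent step of the full merge while keeping the causal order of fired events explicit. The delicate point is checking that the natural posetal relation for $P4$ is simultaneously an hp-bisimulation and downward closed, so that the isomorphisms $f$ arising from the two summands patch together into a single order-isomorphism witnessing $x\parallel y\sim_{hhp} x\leftmerge y + y\leftmerge x$; this is exactly where the side condition $e_1\leq e_2$ does the essential work. Once this template is established, the remaining left-merge axioms $P5$--$P9$, $CE22$, $U31$ and $U35$ follow by routine, structurally analogous constructions.
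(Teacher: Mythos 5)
The paper never actually proves this theorem: at the start of Section~\ref{ahhpb} it announces that ``all proofs of the conclusions are left to the reader,'' so there is no official argument to compare yours against. Your outline is the standard and evidently the intended one — reduce soundness to (i) the congruence theorem stated just before and (ii) validity of each individual axiom, then induct on the equational derivation — and your treatment of the $\leftmerge$ axioms is right; in particular, the observation that the extended order $\leq$ (subsuming both the strict causal order and concurrency) makes the two summands of $P4$ jointly exhaust every step of $x\parallel y$ while recording the order information needed for the posetal isomorphism $f$ is exactly the point of the construction. One citation in your plan is imprecise, though: Theorem~\ref{SAPTC} establishes soundness of the original APTC axioms only modulo $\sim_{p}$, $\sim_{s}$ and $\sim_{hp}$ — it deliberately excludes $\sim_{hhp}$ — so for your fourth item you cannot simply inherit validity of the $C$-, $CE$-, $U$- and $D$-families and $P1$--$P3$ from it. Those axioms must be re-verified modulo hhp-bisimulation (they do hold, their witnessing posetal relations being downward closed, but this is a fresh obligation rather than a quotation); only Theorem~\ref{SBATC} covers all four equivalences. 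With that repair the proposal is sound.
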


\begin{theorem}[Completeness of parallelism with left parallel composition modulo truly concurrent bisimulation equivalences]
Let $x$ and $y$ be $APTC$ terms.

\begin{enumerate}
  \item If $x\sim_{s} y$, then $APTC\vdash x=y$;
  \item if $x\sim_{p} y$, then $APTC\vdash x=y$;
  \item if $x\sim_{hp} y$, then $APTC\vdash x=y$;
  \item if $x\sim_{hhp} y$, then $APTC\vdash x=y$.
\end{enumerate}
\end{theorem}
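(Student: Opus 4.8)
The plan is to follow the standard route for completeness in process algebra: reduce to basic terms by elimination, then prove by induction that truly concurrent bisimilarity of basic terms forces provable equality. As in the earlier completeness theorems (Theorems \ref{CBATC} and \ref{CAPTC}), I read the statement as concerning closed terms, which is what the Elimination theorem requires. First I would reduce to basic terms: by the Elimination theorem for $APTC$ with left parallel composition, any closed terms $p,q$ satisfy $APTC\vdash p=p'$ and $APTC\vdash q=q'$ for some basic terms $p',q'\in\mathcal{B}(APTC)$. By the Soundness theorem, $p\sim_\star p'$ and $q\sim_\star q'$ for each $\star\in\{s,p,hp,hhp\}$, so from $p\sim_\star q$ we obtain $p'\sim_\star q'$; a proof of $p'=q'$ then lifts to $p=q$ by transitivity. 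Hence it suffices to prove the four implications for basic terms.

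Second I would fix a head normal form. Using $A4$, $A5$, $P8$ and $P9$ to push sums and $\delta$ outward, and $P5$--$P7$ to collapse left merges of action-prefixed terms, every basic term rewrites to a sum $\sum_i \alpha_i\cdot t_i + \sum_j \beta_j$, where each $\alpha_i,\beta_j$ is a parallel atomic step $e_{1}\leftmerge\cdots\leftmerge e_{k}$ with its events listed in the causal order $\leq$ dictated by the side conditions of $P5$--$P7$, and each $t_i$ is again basic. The essential point, and the reason $\leftmerge$ is introduced, is that the left merge produces a single \emph{ordered} representative of each initial parallel step, so this head normal form is canonical up to the axioms, rather than ambiguous as it would be for the commutative full merge $\parallel$.

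Third, the completeness argument is an induction on the combined size of the two head normal forms. The main lemma is a matching/saturation step: for each initial transition $p\xrightarrow{X}p'$ allowed by the transition rules (a step, a pomset, or a single event, according to $\star$), the bisimulation relating $p$ and $q$ supplies a transition $q\xrightarrow{Y}q'$ with $X$ and $Y$ related in the sense appropriate to $\star$ and with $p'\sim_\star q'$; the induction hypothesis gives $APTC\vdash p'=q'$, and ranging over all summands of each term lets each absorb the other, so $APTC\vdash p=q$. The four cases diverge only in the meaning of \emph{matching}: equal multisets of actions for $\sim_s$, isomorphic pomsets for $\sim_p$, order-isomorphic histories for $\sim_{hp}$, and downward-closed order-isomorphisms for $\sim_{hhp}$.

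The hard part will be the $\sim_{hhp}$ case. As noted above, by the result of Moller there is no finite sound and complete axiomatization of $\parallel$ modulo bisimulation, hence none modulo hhp-bisimulation, so everything hinges on $\leftmerge$ recording exactly the causal-versus-concurrent information that a downward-closed relation $R$ can observe. Concretely, I must show that when two head normal forms are hhp-bisimilar the downward-closure of $R$ propagates the match of initial ordered steps to every sub-configuration, so that the isomorphism $f$ is forced to respect the $\leq$-ordering syntactically recorded by $\leftmerge$; this is what turns a merely semantic interleaving match, which would only yield ordinary bisimilarity, into the ordered match that the side conditions $e_1\leq e_2$ of $P5$--$P7$ make provable. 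Verifying that this propagation is compatible with the inductive reconstruction is the crux of the argument.
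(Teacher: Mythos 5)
Your route is the intended one, but note that the paper does not actually supply a proof here: the subsection opens by declaring that ``all proofs of the conclusions are left to the reader,'' so the only comparison available is with the standard argument that the analogous theorems (Theorems \ref{CBATC} and \ref{CAPTC}) rely on in the cited prior work. Measured against that, your skeleton --- read ``terms'' as closed terms, eliminate to basic terms, transfer the bisimilarity along soundness, pass to a head normal form $\sum_i\alpha_i\cdot t_i+\sum_j\beta_j$ with $\alpha_i,\beta_j$ ordered left-merge products, and induct with a summand-absorption lemma using $A3$ --- is exactly right, and your observation that $\leftmerge$ exists precisely to make the initial parallel step syntactically canonical is the correct reading of the section.

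Two points deserve attention. First, a small one: obtaining the head normal form needs more than $A4$, $A5$, $P5$--$P9$, because $P7$ reintroduces $\between$ in the residual ($x\between y$), which must be re-expanded via $P1$, $P4$ and the $C$-axioms before the induction hypothesis applies; this is routine but should be stated as a separate normal-form lemma with its own induction. Second, and more substantively, you correctly isolate the $\sim_{hhp}$ case as the crux but stop short of discharging it. The missing ingredient is the one the paper itself states at the start of Section \ref{ahhpb}: hhp-bisimilarity is downward closed \emph{to single atomic events}. Applied to two hhp-bisimilar head normal forms, downward closure of the posetal relation $R$ lets you restrict a matched initial configuration $\{e_1,\ldots,e_k\}$ to each singleton and each pair, which forces the isomorphism $f$ to preserve both the labels and the pairwise order/concurrency data --- and that pairwise data is exactly what the side conditions $(e_1\leq e_2)$ of $P5$--$P7$ record in the summand $e_1\leftmerge\cdots\leftmerge e_k$. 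Making that restriction argument explicit is what turns your ``propagation'' claim into a proof that matched summands are syntactically identical up to the axioms, after which the induction closes as in the other three cases. Without it, the $\sim_{hhp}$ case of your proposal is a statement of the difficulty rather than a resolution of it.
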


The transition rules of encapsulation operator are the same, and the its axioms are shown in \ref{AxiomsForEncapsulationLeft}.

\begin{center}
    \begin{table}
        \begin{tabular}{@{}ll@{}}
            \hline No. &Axiom\\
            $D1$ & $e\notin H\quad\partial_H(e) = e$\\
            $D2$ & $e\in H\quad \partial_H(e) = \delta$\\
            $D3$ & $\partial_H(\delta) = \delta$\\
            $D4$ & $\partial_H(x+ y) = \partial_H(x)+\partial_H(y)$\\
            $D5$ & $\partial_H(x\cdot y) = \partial_H(x)\cdot\partial_H(y)$\\
            $D6$ & $\partial_H(x\leftmerge y) = \partial_H(x)\leftmerge\partial_H(y)$\\
        \end{tabular}
        \caption{Axioms of encapsulation operator with left parallel composition}
        \label{AxiomsForEncapsulationLeft}
    \end{table}
\end{center}

\begin{theorem}[Conservativity of $APTC$ with respect to the algebra for parallelism with left parallel composition]
$APTC$ is a conservative extension of the algebra for parallelism with left parallel composition.
\end{theorem}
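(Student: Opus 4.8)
The plan is to deduce the statement from the Conservative Extension Theorem (Theorem \ref{TCE}). Take $T_0$ to be the transition system specification of the algebra for parallelism with left parallel composition, over the signature $\Sigma_0$ generated by the atomic actions, $\delta$, and the operators $+$, $\cdot$, $\leftmerge$, $\parallel$, $\mid$, $\Theta$ and $\triangleleft$; take $T_1$ to be the two transition rules for the encapsulation operator (the same $\partial_H$-rules as in Table \ref{TRForAPTC}), over $\Sigma_1=\Sigma_0\cup\{\partial_H\}$. Then $T_0\oplus T_1$ is precisely $APTC$ with left parallel composition, so it suffices to verify the hypotheses of Theorem \ref{TCE} for this pair.

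First I would check that $T_0$ and $T_0\oplus T_1$ are positive after reduction. The only rules carrying negative premises are those of the unless operator $\triangleleft$ (premises of the form $y\nrightarrow^{e_2}$ and $y\nrightarrow^{e_3}$); these range over the finitely many actions available to a closed argument and are removed by the usual reduction, exactly as in the base theory, leaving a positive specification. Since $T_1$ contributes only positive rules, $T_0\oplus T_1$ inherits the property.

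Second I would verify that $T_0$ is source-dependent, which is a rule-by-rule inspection of Table \ref{TRForBATC}, Table \ref{TRForLeftParallel} and the communication, conflict-elimination and unless rules. In each rule every variable appearing in a target either already occurs in the source or is introduced by a premise $t\xrightarrow{e}t'$ whose source-variables are themselves source-dependent. For example, in the left parallel rules the source $x\leftmerge y$ supplies $x$ and $y$, while the targets $x'$ and $y'$ are bound by the premises $x\xrightarrow{e_1}x'$ and $y\xrightarrow{e_2}y'$; the remaining operators are treated identically.

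Third, and this is what makes the result essentially immediate, condition (2) of Theorem \ref{TCE} holds trivially for each rule of $T_1$: the source of every $\partial_H$-rule is $\partial_H(x)$, which contains the symbol $\partial_H\in\Sigma_1\setminus\Sigma_0$ and is therefore fresh, so the first disjunct of condition (2) applies with no inspection of premises needed. Theorem \ref{TCE} then gives that $T_0\oplus T_1$ is a conservative extension of $T_0$, which is the assertion. I expect the only mildly delicate point to be the reduction that eliminates the negative premises of $\triangleleft$ when confirming positivity after reduction; everything else, including the source-dependency bookkeeping, is routine because all rules of $T_0$ are in the standard format in which targets are assembled from the right-hand sides of premises.
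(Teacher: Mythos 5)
Your proposal is correct and is exactly the intended argument: the paper leaves the proofs of this subsection to the reader, but Theorem~\ref{TCE} is supplied in the preliminaries precisely so that conservativity follows from positivity after reduction, source-dependency of the base TSS, and freshness of the source $\partial_H(x)$ of the two new rules. The only slip is that $\between$ must also be included in $\Sigma_0$, since it occurs in the targets of the $\leftmerge$ and $\mid$ rules; this is a bookkeeping detail that does not affect the argument.
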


\begin{theorem}[Congruence theorem of encapsulation operator $\partial_H$]
Truly concurrent bisimulation equivalences $\sim_{p}$, $\sim_s$, $\sim_{hp}$ and $\sim_{hhp}$ are all congruences with respect to encapsulation operator $\partial_H$.
\end{theorem}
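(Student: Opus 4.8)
Since $\partial_H$ is a unary operator, the four claims amount to the implications $x\sim y\Rightarrow\partial_H(x)\sim\partial_H(y)$ for $\sim\in\{\sim_p,\sim_s,\sim_{hp},\sim_{hhp}\}$. The plan is to treat all four by the single recipe already applied to the other operators of $APTC$: take a witnessing bisimulation for the hypothesis $x\sim y$ and transport it through $\partial_H$. What makes this work is that the two transition rules for $\partial_H$ are entirely argument-driven: every move of $\partial_H(p)$ is induced by a move $p\xrightarrow{e}\surd$ or $p\xrightarrow{e}p'$ of the argument subject to the guard $e\notin H$, and its target is $\surd$ or $\partial_H(p')$ accordingly. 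Hence the states reachable from $\partial_H(x)$ are exactly the terms $\partial_H(p)$ with $p$ reachable from $x$, together with $\surd$, which dictates the shape of the relation to be built.

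For $\sim_p$ and $\sim_s$, I would start from a pomset (resp.\ step) bisimulation $R$ with $(x,y)\in R$ and set
$$R'=\{(\partial_H(p),\partial_H(q)) : (p,q)\in R\}\cup\{(\surd,\surd)\}.$$
For the transfer condition, suppose $\partial_H(p)\xrightarrow{X}u$. By the $\partial_H$ rules this forces a transition $p\xrightarrow{X}p'$ whose labels all lie outside $H$, with $u=\partial_H(p')$ (or $u=\surd$). Since $(p,q)\in R$, there is a matching $q\xrightarrow{X'}q'$ with $X\sim X'$ and $(p',q')\in R$. Pomset isomorphism preserves labels, so the labels of $X'$ coincide with those of $X$ and thus also avoid $H$; the $\partial_H$ rule therefore fires on the $q$-side to yield $\partial_H(q)\xrightarrow{X'}\partial_H(q')$ with $(\partial_H(p'),\partial_H(q'))\in R'$. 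The symmetric clause is identical, so $R'$ is a pomset (resp.\ step) bisimulation, giving $\partial_H(x)\sim_p\partial_H(y)$ (resp.\ $\sim_s$).

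For $\sim_{hp}$ the same idea is carried out on posetal triples: from an hp-bisimulation $R$ define
$$R'=\{(\partial_H(C_1),f,\partial_H(C_2)) : (C_1,f,C_2)\in R\}.$$
A single-event move $\partial_H(C_1)\xrightarrow{e_1}\partial_H(C_1')$ again forces $e_1\notin H$ and a move $C_1\xrightarrow{e_1}C_1'$; matching it through $R$ gives $C_2\xrightarrow{e_2}C_2'$ with $(C_1',f[e_1\mapsto e_2],C_2')\in R$, and since $f[e_1\mapsto e_2]$ is an isomorphism the matched events carry equal labels, so the guard $e_2\notin H$ holds as well and the move lifts to $\partial_H(C_2)$ with $(\partial_H(C_1'),f[e_1\mapsto e_2],\partial_H(C_2'))\in R'$. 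Thus $R'$ is an hp-bisimulation. For $\sim_{hhp}$ the very same $R'$ works, provided I also check downward closure: as $R'$ is obtained by applying the one operator $\partial_H$ uniformly to both configurations of every triple, any triple lying pointwise below $(\partial_H(C_1),f,\partial_H(C_2))$ is the $\partial_H$-image of a triple below $(C_1,f,C_2)$, which belongs to $R$ because $R$ is downward closed; hence $R'$ is downward closed too.

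The routine parts — the termination/$\surd$ clauses and the symmetric directions — go through verbatim. The one step that carries the content, and the place I expect the only real friction, is the preservation of the encapsulation guard across the bisimulation: one must argue that the transition matched on the opposite side carries the same label and therefore also satisfies $e\notin H$. This follows from label preservation of pomset isomorphisms for $\sim_p,\sim_s$ and of the order isomorphism $f$ for $\sim_{hp},\sim_{hhp}$; combined with the downward-closure bookkeeping in the hhp case, it is the hinge on which the whole argument turns.
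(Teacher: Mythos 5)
Your proposal is correct: the paper gives no proof of this theorem at all (the subsection on left parallel composition explicitly states that ``all proofs of the conclusions are left to the reader''), and your argument is precisely the standard one the paper intends, transporting a witnessing bisimulation through $\partial_H$ and using label preservation of the pomset/posetal isomorphisms to propagate the guard $e\notin H$ to the matching side. The downward-closure check for $\sim_{hhp}$ is the only extra ingredient and you handle it correctly.
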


\begin{theorem}[Elimination theorem of $APTC$]
Let $p$ be a closed $APTC$ term including the encapsulation operator $\partial_H$. Then there is a basic $APTC$ term $q$ such that $APTC\vdash p=q$.
\end{theorem}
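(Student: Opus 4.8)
The plan is to prove the elimination result by the standard term-rewriting route, following the same pattern as the Elimination theorem for the theory without left parallel composition, but now taking the left merge $\leftmerge$ to be one of the admissible constructors of the target normal forms. First I would read the axioms of Table~\ref{AxiomsForLeftParallelism} together with the encapsulation axioms of Table~\ref{AxiomsForEncapsulationLeft} as a term rewriting system (TRS), orienting every equation from left to right. The constructors $+$, $\cdot$ and $\leftmerge$ are kept inert, while $\between$, $\parallel$, $\mid$, $\Theta$, $\triangleleft$ and $\partial_H$ are the symbols to be removed. The driving rules are $P1$ ($x\between y\to x\parallel y+x\mid y$) and $P4$ ($x\parallel y\to x\leftmerge y+y\leftmerge x$), which trade $\between$ and $\parallel$ for $\leftmerge$ and $\mid$; the defining rules $C10$--$C17$ for $\mid$, $CE18$--$CE23$ for $\Theta$, $U24$--$U36$ for $\triangleleft$, and $D1$--$D6$ for $\partial_H$, which distribute these operators over $+$ and $\cdot$ and resolve them on atomic arguments; together with $A6$, $A7$ governing $\delta$.

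Next I would establish that this TRS is strongly normalizing, so that by Theorem~\ref{SN} every closed term admits a normal form. The delicate point, which I expect to be the main obstacle, is the \emph{mutual} reintroduction of operators: rule $P1$ produces $\mid$ from $\between$, while the communication rule $C13$ ($(e_1\cdot x)\mid(e_2\cdot y)\to\gamma(e_1,e_2)\cdot(x\between y)$), and likewise $P7$ for $\leftmerge$, produce $\between$ again. This circularity means no single precedence on the signature can orient both families, since it would need $\between$ above $\mid$ for $P1$ and yet allow $\mid$ to dominate a freshly created $\between$ for $C13$. The resolution is that in $C13$ and $P7$ the operator $\between$ reappears only on the \emph{proper subterms} $x,y$, which are strictly smaller than the arguments $e_1\cdot x,\,e_2\cdot y$ of the redex. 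I would therefore refine the ordering beyond a plain precedence---using a weight/polynomial interpretation (with a multiplicative reading of $\cdot$ and super-additive weights for the merge operators) that records this decrease in argument size---and verify that every oriented rule strictly decreases, thereby meeting the strong-normalization criterion.

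Having secured termination, I would characterize the normal forms by induction on the structure of a closed term that admits no rewrite step, showing that its outermost symbol can only be an atomic action, $\cdot$ with an atomic left factor, $+$, or $\leftmerge$. Any occurrence of $\between$, $\parallel$, $\mid$, $\Theta$, $\triangleleft$ or $\partial_H$ at the top would enable a rule: the distributivity rules ($C14$--$C15$, $CE20$, $U29$, $D4$--$D6$, etc.) exclude $+$ and $\cdot$ beneath these operators, and the atomic rules ($C10$, $CE18$, $U24$--$U28$, $D1$--$D3$, together with $A6$, $A7$) dispose of atomic and $\delta$ arguments, each contradicting normality. This shows that every closed normal form lies in $\mathcal{B}(APTC)$ exactly as defined above.

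Finally I would assemble the theorem: given a closed term $p$, strong normalization yields a reduction $p\to^{*}q$ to a normal form $q$, and the normal-form analysis gives $q\in\mathcal{B}(APTC)$; since each rewrite step is an instance of an axiom of $APTC$ with left parallel composition (or an encapsulation axiom), the reduction is a chain of provable equalities, whence $APTC\vdash p=q$. I expect the normal-form characterization and this final assembly to be routine case checks, with all the genuine difficulty concentrated in choosing the well-founded ordering so that the $\mid$/$\between$ and $\leftmerge$/$\between$ mutual recursion of rules $C13$ and $P7$ is absorbed by the strict decrease of their subterm arguments.
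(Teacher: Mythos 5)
You should first note that the paper does not actually contain a proof of this theorem: the subsection on left parallel composition explicitly says that all proofs of its conclusions are left to the reader, so the only thing to measure your proposal against is the general method the paper sets up in its Proof Techniques subsection (orient the axioms as a TRS, apply the strong-normalization criterion of Theorem~\ref{SN}, then characterize the closed normal forms). Your skeleton is exactly that method, your normal-form analysis and the final assembly are routine and correct, and you have correctly isolated the real obstruction: $P1$ and $P4$ force $\between$ and $\parallel$ above $\mid$ and $\leftmerge$ in any precedence, while $C13$ and $P7$ force the opposite, so no lexicographical path ordering on the signature orients all the rules and Theorem~\ref{SN} cannot be invoked off the shelf.

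The gap is that your proposed repair is never carried out, and in the form you describe it does not work, so the one step you yourself identify as carrying all the difficulty remains open. Under a multiplicative reading of $\cdot$ the reassociation rules $P5$ and $P6$, e.g.\ $e_1\leftmerge(e_2\cdot y)=(e_1\leftmerge e_2)\cdot y$, merely move $y$ from under the $\leftmerge$ to under the $\cdot$ and are weight-preserving; making the merge weights super-additive or super-multiplicative to restore strictness there and in $P1$, $P4$ inflates the weight of the $\between$ that $C13$ and $P7$ recreate on their right-hand sides (which must strictly exceed the combined weights of $\parallel$, $\leftmerge$ and $\mid$ on the same arguments by $P1$ and $P4$), and one checks that the resulting inequalities for $P7$ and $P1$/$P4$ are jointly unsatisfiable by any interpretation of this shape. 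The standard way to close the gap is to abandon global termination of the TRS and argue only about closed terms: prove by simultaneous induction on $|s|+|t|$ that for basic terms $s,t$ each of $s\leftmerge t$, $s\mid t$, $s\parallel t$, $s\between t$ and $s\triangleleft t$ is provably equal to a basic term (the $\between$ reintroduced by $C13$ and $P7$ then has strictly smaller basic arguments, so the induction hypothesis applies), prove by structural induction that $\Theta$ and $\partial_H$ applied to basic terms yield basic terms, and conclude by structural induction on $p$. That argument uses exactly the axioms you list but requires no well-founded ordering on the signature at all.
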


\begin{theorem}[Soundness of $APTC$ modulo truly concurrent bisimulation equivalences]
Let $x$ and $y$ be $APTC$ terms including encapsulation operator $\partial_H$. If $APTC\vdash x=y$, then

\begin{enumerate}
  \item $x\sim_{s} y$;
  \item $x\sim_{p} y$;
  \item $x\sim_{hp} y$;
  \item $x\sim_{hhp} y$.
\end{enumerate}
\end{theorem}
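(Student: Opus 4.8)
The plan is to reduce the soundness of the whole equational theory to the soundness of its individual axioms. Each of the four equivalences $\sim_s$, $\sim_p$, $\sim_{hp}$, $\sim_{hhp}$ is already known to be a congruence with respect to all operators of $APTC$ with left parallel composition and with respect to $\partial_H$ (by the congruence theorems stated above). Since provable equality $APTC\vdash x=y$ is generated from the axioms together with reflexivity, symmetry, transitivity, and closure under contexts, and since each $\sim$ is an equivalence relation, it suffices to check that for every axiom $s=t$ in Tables \ref{AxiomsForLeftParallelism} and \ref{AxiomsForEncapsulationLeft} we have $s\sim t$ in each of the four senses; the general case then follows by a routine induction on the length of the derivation of $x=y$, the congruence properties providing the inductive step for every operator.

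For each axiom I would exhibit an explicit witnessing relation, built from the identity together with the pairs of configurations (or posetal triples) reachable on the two sides, and check the transfer conditions directly against the transition rules in Tables \ref{TRForAPTC} and \ref{TRForLeftParallel} and the unchanged rules for $\partial_H$. Most of these verifications are identical to those already performed for the versions without left parallel composition, so the genuinely new work concerns only the axioms $P4$--$P9$ governing $\leftmerge$ and the encapsulation axiom $D6$. For $P4$, namely $x\parallel y=x\leftmerge y+y\leftmerge x$, I would read off the transitions of both sides: every step $\{e_1,e_2\}$ of $x\parallel y$ satisfies either $e_1\leq e_2$ or $e_2\leq e_1$ under the extended order, and is therefore matched by exactly one of the two summands $x\leftmerge y$, $y\leftmerge x$, and conversely. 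The laws $P5$--$P7$ are head-normalisation laws for $\leftmerge$ and are checked from the four left-parallel rules; $P8$, $P9$ and $D6$ are handled by the standard structural matching summand by summand.

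The main obstacle will be the hhp case, item (4). For $\sim_{hhp}$ the witnessing posetal relation must in addition be downward closed, and this is precisely why the left parallel composition was introduced: the causal-order side condition $e_1\leq e_2$ attached to the $\leftmerge$-rules guarantees that the order information recorded by the isomorphism $f$ on the two sides agrees not merely at the current step but on every sub-configuration, so that the constructed relation survives restriction to downward-closed subsets. I would therefore verify, for each of $P4$--$P9$, that the family of posetal triples produced is closed under the pointwise order of the posetal product, appealing to the fact (noted in the text) that distinctness of the events keeps $f$ an isomorphism. Once downward closure is confirmed for these base axioms, the congruence theorem for $\sim_{hhp}$ propagates it through arbitrary contexts and completes the induction; the remaining items (1)--(3) follow from the same witnessing relations by simply dropping the downward-closure requirement.
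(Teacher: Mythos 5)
The paper provides no proof of this theorem: the subsection it belongs to opens by declaring that ``all proofs of the conclusions are left to the reader,'' so there is no argument of record to compare yours against. Your outline --- soundness checked axiom by axiom, the congruence theorems supplying closure under contexts and a routine induction on derivations, with the only genuinely new verifications being $P4$--$P9$ and $D6$, and downward closure of the witnessing posetal relations carrying the $\sim_{hhp}$ case --- is the standard and correct route, and is exactly what the omitted proof would have to do. One small imprecision: under the extended order a pairwise-concurrent step of $x\parallel y$ satisfies both $e_1\leq e_2$ and $e_2\leq e_1$, so it is matched by \emph{both} summands of $x\leftmerge y+y\leftmerge x$ rather than ``exactly one''; this is harmless for the bisimulation argument, which only needs a matching transition to exist.
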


\begin{theorem}[Completeness of $APTC$ modulo truly concurrent bisimulation equivalences]
Let $p$ and $q$ be closed $APTC$ terms including encapsulation operator $\partial_H$,

\begin{enumerate}
  \item if $p\sim_{s} q$ then $p=q$;
  \item if $p\sim_{p} q$ then $p=q$;
  \item if $p\sim_{hp} q$ then $p=q$;
  \item if $p\sim_{hhp} q$ then $p=q$.
\end{enumerate}
\end{theorem}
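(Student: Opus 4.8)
The plan is to follow the standard equational-completeness strategy used throughout this development. First I would invoke the Elimination theorem of $APTC$ stated just above: every closed $APTC$ term $p$ containing the encapsulation operator $\partial_H$ is provably equal to a basic term $p'\in\mathcal{B}(APTC)$, and likewise $q=q'$ for some basic $q'$. By the Soundness theorem for $APTC$ modulo the relevant equivalence, $p\sim p'$ and $q\sim q'$, so the hypothesis $p\sim q$ yields $p'\sim q'$. It therefore suffices to prove the implication $n\sim n'\Rightarrow APTC\vdash n=n'$ for basic terms $n,n'$, and the encapsulation operator plays no further role since it has been eliminated.

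\textbf{Normal forms.} Next I would fix a normal form for basic terms. Using the axioms $A1$--$A3$ to treat alternative composition as an associative, commutative, idempotent union of summands, and using $P4$--$P9$, $C10$--$C17$ together with the distributivity laws to push $\leftmerge$ and $\mid$ inward, I would show that every basic term is provably equal to a term $\sum_i t_i$ in which each summand $t_i$ is either a single event $e$, a product $e\cdot s_i$ with $s_i$ again in normal form, or a left-parallel head form $(e_1\leftmerge e_2)\cdot s_i$ recording an initial concurrent or ordered pair. This is a head normal form whose initial transitions can be read off the syntax directly from the transition rules in Tables \ref{TRForLeftParallel} and \ref{TRForAPTC}.

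\textbf{The matching induction.} The core is a lemma proved by induction on the combined size of two normal forms: if $n\sim n'$ then $APTC\vdash n=n'$. Each initial transition $n\xrightarrow{X}n_1$ corresponds to a summand of $n$; by bisimilarity $n'$ admits a matching $n'\xrightarrow{X'}n_1'$ with $X\sim X'$ and $n_1\sim n_1'$, and the induction hypothesis gives $APTC\vdash n_1=n_1'$. Showing that every summand of $n$ is thereby provably a summand of $n'$ and conversely, idempotence ($A3$) closes the argument to $n=n'$. The four equivalences share this skeleton: for $\sim_s$ the transition label $X$ is constrained to a set of pairwise concurrent events and for $\sim_p$ to an arbitrary pomset, while $\sim_{hp}$ additionally carries the configuration isomorphism $f$, which I would thread through the induction via the update notation $f[e_1\mapsto e_2]$.

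\textbf{Main obstacle.} The genuinely delicate case is $\sim_{hhp}$, and this is precisely where the left parallel composition earns its place. Hereditary history preservation is downward closed, so matching a chosen transition does not suffice: the induction must be strengthened to preserve the order-isomorphism under every restriction to a sub-configuration, which amounts to matching the causal-versus-concurrent status of the minimal events. The side condition $e_1\leq e_2$ on the rules for $\leftmerge$ (with $\leq$ extended to subsume both the strict order $<$ and concurrency $=$) encodes exactly this data in the syntax, so the hard step will be to verify that hhp-bisimilar normal forms induce the same ordering on their minimal events, forcing their $\leftmerge$-summands to coincide. Once this synchronization of the $\leftmerge$ head forms is established, the downward-closure requirement is met summand by summand and the induction closes as in the other three cases.
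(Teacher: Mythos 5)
The paper never actually proves this theorem --- the whole left-parallel-composition subsection announces that ``all proofs of the conclusions are left to the reader'' --- so there is no proof to compare against line by line; your strategy of elimination to basic terms, reduction to a head normal form, and a matching induction with $A3$ closing each direction is precisely the standard route that the paper's scaffolding (the Elimination theorem, the Soundness theorem, and the definition of $\mathcal{B}(APTC)$) is set up to support, and your reading of the side condition $e_1\leq e_2$ on $\leftmerge$ as the device that makes the $\sim_{hhp}$ case succeed is the correct explanation of why this subsection exists at all, given Moller's non-axiomatizability result for $\parallel$. The one imprecision worth fixing is that your normal form admits only binary heads $(e_1\leftmerge e_2)\cdot s_i$, whereas summands must in general carry $k$-ary heads $(e_1\leftmerge\cdots\leftmerge e_k)$, as the paper's own guarded and linear recursive specification formats make explicit; this does not disturb the induction.
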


\subsubsection{Recursion}

\begin{definition}[Recursive specification]
A recursive specification is a finite set of recursive equations

$$X_1=t_1(X_1,\cdots,X_n)$$
$$\cdots$$
$$X_n=t_n(X_1,\cdots,X_n)$$

where the left-hand sides of $X_i$ are called recursion variables, and the right-hand sides $t_i(X_1,\cdots,X_n)$ are process terms in $APTC$ with possible occurrences of the recursion variables $X_1,\cdots,X_n$.
\end{definition}

\begin{definition}[Solution]
Processes $p_1,\cdots,p_n$ are a solution for a recursive specification $\{X_i=t_i(X_1,\cdots,X_n)|i\in\{1,\cdots,n\}\}$ (with respect to truly concurrent bisimulation equivalences $\sim_s$($\sim_p$, $\sim_{hp}$, $\sim_{hhp}$)) if $p_i\sim_s (\sim_p, \sim_{hp},\sim{hhp})t_i(p_1,\cdots,p_n)$ for $i\in\{1,\cdots,n\}$.
\end{definition}

\begin{definition}[Guarded recursive specification]
A recursive specification

$$X_1=t_1(X_1,\cdots,X_n)$$
$$...$$
$$X_n=t_n(X_1,\cdots,X_n)$$

is guarded if the right-hand sides of its recursive equations can be adapted to the form by applications of the axioms in $APTC$ and replacing recursion variables by the right-hand sides of their recursive equations,

$$(a_{11}\leftmerge\cdots\leftmerge a_{1i_1})\cdot s_1(X_1,\cdots,X_n)+\cdots+(a_{k1}\leftmerge\cdots\leftmerge a_{ki_k})\cdot s_k(X_1,\cdots,X_n)+(b_{11}\leftmerge\cdots\leftmerge b_{1j_1})+\cdots+(b_{1j_1}\leftmerge\cdots\leftmerge b_{lj_l})$$

where $a_{11},\cdots,a_{1i_1},a_{k1},\cdots,a_{ki_k},b_{11},\cdots,b_{1j_1},b_{1j_1},\cdots,b_{lj_l}\in \mathbb{E}$, and the sum above is allowed to be empty, in which case it represents the deadlock $\delta$.
\end{definition}

\begin{definition}[Linear recursive specification]
A recursive specification is linear if its recursive equations are of the form

$$(a_{11}\leftmerge\cdots\leftmerge a_{1i_1})X_1+\cdots+(a_{k1}\leftmerge\cdots\leftmerge a_{ki_k})X_k+(b_{11}\leftmerge\cdots\leftmerge b_{1j_1})+\cdots+(b_{1j_1}\leftmerge\cdots\leftmerge b_{lj_l})$$

where $a_{11},\cdots,a_{1i_1},a_{k1},\cdots,a_{ki_k},b_{11},\cdots,b_{1j_1},b_{1j_1},\cdots,b_{lj_l}\in \mathbb{E}$, and the sum above is allowed to be empty, in which case it represents the deadlock $\delta$.
\end{definition}

\begin{theorem}[Conservitivity of $APTC$ with guarded recursion]
$APTC$ with guarded recursion is a conservative extension of $APTC$.
\end{theorem}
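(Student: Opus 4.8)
The plan is to derive this as an instance of the general conservative-extension criterion, Theorem~\ref{TCE}. I would take $T_0$ to be the TSS of $APTC$ (the rules of Tables~\ref{TRForBATC} and~\ref{TRForAPTC}, together with the rules for left parallel composition in Table~\ref{TRForLeftParallel} and for encapsulation) and $T_1$ to be the two transition rules for guarded recursion displayed in Table~\ref{TRForGR}. The new symbols $\Sigma_1\setminus\Sigma_0$ are precisely the recursion constructs $\langle X_i|E\rangle$, which do not occur in $\Sigma_0$. It then suffices to check the two hypotheses of Theorem~\ref{TCE} together with its standing assumption that $T_0$ and $T_0\oplus T_1$ are positive after reduction.

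First I would verify that $T_0$ is source-dependent. This is a routine rule-by-rule inspection: in every rule of $APTC$ the source of the conclusion already mentions all variables occurring anywhere in the rule. For instance the sources $x+y$, $x\cdot y$, $x\parallel y$, $x\mid y$, $\Theta(x)$, $x\triangleleft y$, $\partial_H(x)$ each contain exactly the variables appearing in their respective rules, so by clause~(1) of the definition every variable is immediately source-dependent and clause~(2) is never needed. The rules carrying negative premises (those for the unless operator $\triangleleft$) require no extra argument, since source-dependency constrains only the targets of \emph{positive} premises, and a negative premise such as $y\nrightarrow^{e_2}$ introduces no new variable.

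Next I would check condition~(2) of Theorem~\ref{TCE} for each $\rho\in T_1$. Both recursion rules have source $\langle X_i|E\rangle$, which contains the function symbol $\langle\,\cdot\,|E\rangle\in\Sigma_1\setminus\Sigma_0$; hence the source of every rule of $T_1$ is fresh, and the first disjunct of condition~(2) is satisfied uniformly, without any need to appeal to the premise-based disjunct. This is the real content of the argument, and it is short: the only thing that makes the extension conservative is that the recursion construct is a genuinely new syntactic form.

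The remaining point, which is to be stated carefully rather than a genuine obstacle, is the hypothesis that $T_0$ and $T_0\oplus T_1$ are positive after reduction. Since $APTC$ is already positive after reduction and the recursion rules of Table~\ref{TRForGR} contain no negative premises, adjoining $T_1$ preserves this property, so the reduction of $T_0\oplus T_1$ behaves exactly as that of $T_0$ on the negative premises of the $\triangleleft$-rules. The main labour is therefore the bookkeeping of source-dependency across all the $APTC$ rules; everything else is immediate. With source-dependency of $T_0$, freshness of all sources in $T_1$, and positivity after reduction established, Theorem~\ref{TCE} applies and yields that $APTC$ with guarded recursion is a conservative extension of $APTC$, as claimed.
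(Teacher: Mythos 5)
The paper leaves all proofs in this subsection to the reader, and your route --- instantiating the conservative-extension meta-theorem (Theorem~\ref{TCE}) with $T_0$ the TSS of $APTC$ and $T_1$ the two recursion rules of Table~\ref{TRForGR}, and observing that the source $\langle X_i|E\rangle$ of each new rule is fresh so that condition~(2) holds via its first disjunct --- is exactly the intended argument, and your attention to the ``positive after reduction'' hypothesis is appropriate. One correction, however: your justification of the source-dependency of $T_0$ is wrong as stated. It is not true that every variable of every $APTC$ rule occurs in the source of that rule: in $\frac{x\xrightarrow{e}x'}{x+y\xrightarrow{e}x'}$ the variable $x'$ does not occur in the source $x+y$, and the same holds for the primed variables $x'$, $y'$ in the targets of premises of the rules for $\cdot$, $\leftmerge$, $\parallel$, $\mid$, $\Theta$, $\triangleleft$ and $\partial_H$. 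These variables become source-dependent only through clause~(2) of the definition --- they appear in the target $t'$ of a positive premise $t\xrightarrow{a}t'$ whose left-hand side $t$ consists of source-dependent variables --- so clause~(2) is genuinely needed, contrary to your claim that it ``is never needed.'' Once this is invoked, source-dependency of $T_0$ does hold and the remainder of your argument goes through unchanged, so the slip does not affect the conclusion, but the rule-by-rule check should be restated accordingly.
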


\begin{theorem}[Congruence theorem of $APTC$ with guarded recursion]
Truly concurrent bisimulation equivalences $\sim_{p}$, $\sim_s$, $\sim_{hp}$, $\sim_{hhp}$ are all congruences with respect to $APTC$ with guarded recursion.
\end{theorem}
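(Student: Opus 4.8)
The plan is to reduce the statement to the congruence results already available for $APTC$ and then treat the only genuinely new construct, the recursion constants $\langle X_i|E\rangle$. By the congruence theorems for $APTC$ with left parallel composition and for the encapsulation operator, each of $\sim_p$, $\sim_s$, $\sim_{hp}$ and $\sim_{hhp}$ is already a congruence with respect to all the operators $+$, $\cdot$, $\parallel$, $\leftmerge$, $\mid$, $\Theta$, $\triangleleft$ and $\partial_H$. Adding guarded recursion extends the signature only by the constants $\langle X_i|E\rangle$ and extends the transition system specification only by the two rules of Table \ref{TRForGR}. Since congruence with respect to a constant is vacuous, the content of the theorem is that these four equivalences remain congruences for the $APTC$ operators once the new recursion transitions are present.

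First I would verify that the combined system, the rules of Table \ref{TRForAPTC} together with those of Table \ref{TRForGR}, stays inside a rule format guaranteeing congruence of truly concurrent bisimulation. The recursion rules have source $\langle X_i|E\rangle$ and a single premise $t_i(\langle X_1|E\rangle,\cdots,\langle X_n|E\rangle)\xrightarrow{\{e_1,\cdots,e_k\}}\surd$ (respectively $\cdots\xrightarrow{\{e_1,\cdots,e_k\}}y$); all their variables are source-dependent in the sense of the earlier definition, there is no lookahead, and the rules are well founded, so the format is preserved. This already yields congruence for $\sim_p$ and $\sim_s$, where the transfer conditions only compare set-labels up to pomset isomorphism.

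For a self-contained argument, and to cover $\sim_{hp}$ and $\sim_{hhp}$, I would instead build the witnessing relation directly, operator by operator. For a binary operator $g$, given relations $R_1$ witnessing $x\sim x'$ and $R_2$ witnessing $y\sim y'$, I would exhibit a relation containing $(g(x,y),g(x',y'))$ — for $\sim_{hp}$ a posetal relation carrying the induced isomorphism $f[e_1\mapsto e_2]$, for $\sim_{hhp}$ additionally its downward closure — and check the transfer conditions by case analysis on which rule of Table \ref{TRForAPTC} fires. For the recursion constants the key observation is the unfolding: by Table \ref{TRForGR} every transition of $\langle X_i|E\rangle$ is exactly a transition of $t_i(\langle X_1|E\rangle,\cdots,\langle X_n|E\rangle)$, so any two solutions can be related through the congruence already established for the operators occurring in the guarded right-hand sides $t_i$, with guardedness ensuring that the first step is a genuine action and that the induced relation is well defined.

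The main obstacle I expect is the $\sim_{hhp}$ case. There one must keep the posetal relation downward closed while threading the configuration isomorphism $f$ through the recursive unfolding, and one must check that the set-labelled steps produced by the parallel rules are matched by pomset-isomorphic steps so that the maintained $f$ stays an isomorphism of configurations. Showing that downward closure is preserved under $\parallel$, $\leftmerge$ and $\mid$, and that it survives unfolding of $\langle X_i|E\rangle$, is the delicate point; the remaining operators and the $\sim_p$, $\sim_s$, $\sim_{hp}$ cases reduce to routine transfer-condition checks.
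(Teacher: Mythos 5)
The paper states this theorem without any proof: the entire left-parallel-composition subsection announces that ``all proofs of the conclusions are left to the reader,'' so there is no argument of record to compare yours against. Judged on its own terms, your outline is a correct and standard reconstruction, and it matches the method used for the congruence theorems earlier in the development: congruence with respect to the new constants $\langle X_i|E\rangle$ is vacuous; every transition of $\langle X_i|E\rangle$ is, by the recursion transition rules, literally a transition of its unfolding $t_i(\langle X_1|E\rangle,\cdots,\langle X_n|E\rangle)$; and the operator-by-operator construction of witnessing (downward-closed) posetal relations is exactly how congruence of $\sim_{hp}$ and $\sim_{hhp}$ is established for the base operators. Two points deserve emphasis. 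First, your appeal to a source-dependent, lookahead-free rule format is only a safe shortcut for $\sim_p$ and $\sim_s$; the standard congruence formats are proved for interleaving bisimulation, so for the history-preserving equivalences the direct construction you sketch is not optional but necessary, and it is the version that should carry the proof. Second, the remark that guardedness makes ``the first step a genuine action'' is doing real work and should be promoted to an explicit step: it is what guarantees that the head normal form of each $t_i$ is reached in finitely many rewrites, hence that the transitions of $\langle X_i|E\rangle$ are well defined and the induced posetal relation (and its downward closure, for $\sim_{hhp}$) is actually a relation on configurations rather than the limit of an unterminating unfolding. With those caveats made explicit, your proposal is a sound proof plan for the stated theorem.
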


\begin{theorem}[Elimination theorem of $APTC$ with linear recursion]
Each process term in $APTC$ with linear recursion is equal to a process term $\langle X_1|E\rangle$ with $E$ a linear recursive specification.
\end{theorem}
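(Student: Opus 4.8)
The plan is to proceed by structural induction on the process term $t$ of $APTC$ with linear recursion, showing that the class of processes of the form $\langle X_1|E\rangle$ with $E$ linear contains the generators and is closed under every operator of the signature. Since atomic actions generate the algebra and recursion constructs already have almost the desired shape, this closure yields the claim. Throughout, the equalities will be justified by $RDP$ (to unfold a single step of a recursion variable), the $APTC$ axioms (to distribute an operator over $+$ and push it inside), and $RSP$ (to identify the term with the leading variable of a freshly built specification); linearity of a specification guarantees guardedness, so $RSP$ always applies.

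For the base cases, an atomic action $e\in\mathbb{E}$ (and the constant $\delta$) is handled by the one-equation specification $E=\{X=e\}$ (resp.\ $\{X=\delta\}$), which is linear, so $e=\langle X|E\rangle$. A term $\langle X_i|E\rangle$ with $E$ linear is put into the required form by renaming so that $X_i$ becomes the leading variable $X_1$, which leaves $E$ linear. For the inductive step, given $t=t_1\circ t_2$, the induction hypothesis supplies linear specifications with $t_1=\langle X_1|E_1\rangle$ and $t_2=\langle Y_1|E_2\rangle$, and we may assume the variable sets of $E_1$ and $E_2$ are disjoint.

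The easy operators are $+$ and $\cdot$. For $+$ I would add a fresh leading variable $Z_1$ whose defining equation is the sum of the right-hand sides of the $X_1$- and $Y_1$-equations (obtained via $RDP$), adjoin all equations of $E_1$ and $E_2$, and observe that the result is linear, so $RSP$ gives $t_1+t_2=\langle Z_1|E\rangle$. For $\cdot$ I would take a fresh copy of $E_1$ in which every terminating summand $(b_{1}\leftmerge\cdots\leftmerge b_{m})$ is replaced by $(b_{1}\leftmerge\cdots\leftmerge b_{m})\cdot Y_1$, so that every summand keeps the linear shape, and adjoin $E_2$; then $RSP$ yields $t_1\cdot t_2=\langle X_1'|E\rangle$.

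The main work, and the step I expect to be the obstacle, is the family of parallel and communication operators $\leftmerge$, $\parallel$, $\between$ and $\mid$, together with the conflict operators $\Theta$ and $\triangleleft$. Here I would use a product construction: introduce one variable $Z_{X_i,Y_j}$ for each pair of variables of $E_1$ and $E_2$ (finitely many), and read its defining equation off the first-step behaviour, unfolding by $RDP$ and then applying $P4$–$P9$ and $C10$–$C17$. Writing $(\vec a)$ for a parallel product $a_1\leftmerge\cdots\leftmerge a_m$, each resulting summand is a left step $(\vec a)\cdot Z_{X_{i'},Y_j}$, a right step $(\vec b)\cdot Z_{X_i,Y_{j'}}$, a synchronization $\gamma(\vec a,\vec b)\cdot Z_{X_{i'},Y_{j'}}$, or a terminating summand, so the specification stays linear. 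The delicate points will be verifying that the full merge $\between$ reappearing on the right of $P7$ and $C13$ is absorbed into the pair-variables rather than producing non-linear right-hand sides, that the partiality of $\gamma$ is correctly routed to $\delta$ wherever communication is undefined, and that $\Theta$ and $\triangleleft$ are likewise driven through each equation via $CE18$–$CE23$ and $U24$–$U36$ (introducing auxiliary variables where $CE20$ splits a sum) without leaving the linear format; $\partial_H$ is then routine using $D1$–$D6$. In every case $RSP$ identifies $t_1\circ t_2$ with the leading variable of the constructed linear specification, which closes the induction.
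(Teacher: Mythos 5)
Your proposal is correct and follows the standard construction: structural induction over the signature with $RDP$/$RSP$, handling $\cdot$ by appending the second specification's leading variable to every terminating summand of the first, and handling $\leftmerge$, $\parallel$, $\between$, $\mid$, $\Theta$, $\triangleleft$ and $\partial_H$ by a product specification with pair variables $Z_{X_i,Y_j}$ whose equations are read off one synchronous step via the parallelism and communication axioms. The paper itself explicitly leaves this proof to the reader (deferring to the development in \cite{APTC}), and the argument it relies on there is essentially the one you outline, including the points you flag as delicate (absorbing the reappearing $\between$ into the pair variables and routing undefined communications to $\delta$).
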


\begin{theorem}[Soundness of $APTC$ with guarded recursion]
Let $x$ and $y$ be $APTC$ with guarded recursion terms. If $APTC\textrm{ with guarded recursion}\vdash x=y$, then
\begin{enumerate}
  \item $x\sim_{s} y$;
  \item $x\sim_{p} y$;
  \item $x\sim_{hp} y$;
  \item $x\sim_{hhp} y$.
\end{enumerate}
\end{theorem}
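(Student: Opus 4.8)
\emph{Proof proposal.} The plan is to exploit the fact that equational derivability $\vdash x=y$ is the least congruence generated by the axioms, so that soundness reduces to checking each axiom in isolation. First I would invoke the congruence theorem for $APTC$ with guarded recursion, which guarantees that each of $\sim_s$, $\sim_p$, $\sim_{hp}$ and $\sim_{hhp}$ is preserved by every operator of the signature (including $\leftmerge$, $\mid$, $\Theta$, $\triangleleft$, $\partial_H$ and the recursion construct $\langle X_i\mid E\rangle$). Combined with the fact that each relation is reflexive, symmetric and transitive, this lets me conclude, by induction on the length of a derivation, that if every axiom instance relates its two sides under a given equivalence then so does every derivable equation. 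It therefore suffices to treat the axioms one at a time.

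The axioms split into two groups. For the purely equational laws — the $A$-, $P$-, $C$-, $CE$-, $U$- and $D$-axioms of Tables \ref{AxiomsForLeftParallelism} and \ref{AxiomsForEncapsulationLeft} — soundness modulo $\sim_s$, $\sim_p$, $\sim_{hp}$ and $\sim_{hhp}$ has already been established for $APTC$ with left parallel composition (including the encapsulation operator), so I would simply cite that result. What remains is the recursive principles $RDP$ and $RSP$ of Table \ref{RDPRSP}. For $RDP$, the equation $\langle X_i\mid E\rangle=t_i(\langle X_1\mid E\rangle,\dots,\langle X_n\mid E\rangle)$ is sound almost by inspection: the two transition rules of Table \ref{TRForGR} give $\langle X_i\mid E\rangle$ exactly the transitions of its body, so the relation pairing the two terms and acting as the identity elsewhere is a step, pomset, hp- and hhp-bisimulation; in the hp- and hhp-cases one checks that the accompanying order-isomorphism is the identity and that the relation is downward closed.

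The hard part is the soundness of $RSP$, which asserts uniqueness of solutions of guarded recursive specifications. Given a solution $y_1,\dots,y_n$ with $y_i=t_i(y_1,\dots,y_n)$, I would build the relation $R$ that pairs $y_i$ with $\langle X_i\mid E\rangle$ and, more generally, pairs terms obtained from the two sides by the same sequence of syntactic unfoldings, closing off under the transitions these terms admit. The crux is to show that $R$ is a bisimulation of each required kind, and this is exactly where guardedness enters: because the specification is guarded, every recursion variable occurs under at least one atomic action after finitely many applications of the axioms, so each step of one side can be matched by a genuine step of the other with matching pomset label — the $e_1\le e_2$ side conditions on $\leftmerge$ ensuring that the concurrency structure of steps is reproduced — and no infinite unguarded regress can defeat the matching.

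For $\sim_{hp}$ and $\sim_{hhp}$ I would additionally track the order-isomorphism between the paired configurations and, for hhp, verify that $R$ is downward closed, using guardedness once more to see that the causal and concurrency relations on events are transported faithfully along the unfolding. I expect the bookkeeping for the downward-closed hhp-case to be the most delicate point, since it is precisely the condition that the earlier, $\leftmerge$-free formulation of $APTC$ could not support; the whole reason for introducing $\leftmerge$ here is to recover a sound treatment of $\sim_{hhp}$, and I would take care that the step labels $\{e_1,e_2\}$ produced by the left-parallel rules carry enough information to reconstruct the isomorphism on each matched configuration.
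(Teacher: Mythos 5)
You should first be aware that the paper itself supplies no proof of this statement: it appears in the subsection on the axiomatization for hhp-bisimilarity, where the author explicitly writes that ``all proofs of the conclusions are left to the reader,'' deferring to the development in \cite{APTC}. So there is no in-paper argument to compare yours against line by line; the closest model is the (also unproved here) three-item version for the $\leftmerge$-free theory, whose standard proof follows exactly the route you describe. Judged on its own terms, your outline is the correct and standard one: reduce soundness to the axioms via congruence plus reflexivity/symmetry/transitivity and induction on derivations, cite the already-established soundness of the non-recursive axioms of $APTC$ with left parallel composition modulo all four equivalences, observe that $RDP$ is immediate from the transition rules of Table \ref{TRForGR}, and concentrate the real work in $RSP$, where guardedness is used to show that the relation pairing $y_i$ with $\langle X_i|E\rangle$ (closed under matching transitions) is a bisimulation of each kind.

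Two small cautions. First, your parenthetical list of operators preserved by the congruence theorem mentions $\leftmerge$ but the relevant signature here also contains $\between$ and $\parallel$ (with $\parallel$ now defined via $P4$, $x\parallel y = x\leftmerge y + y\leftmerge x$); nothing breaks, but the congruence theorem you invoke is the one for $APTC$ with guarded recursion stated in this subsection, not the bare left-parallelism one. Second, your closing remark slightly misstates why $\leftmerge$ is introduced: by the cited result of Moller the obstruction is the absence of a \emph{finite} sound and complete axiomatization of $\parallel$ modulo (h)hp-bisimilarity, and in the original $APTC$ the soundness theorem for the parallel axioms indeed omits $\sim_{hhp}$; the left-parallel axioms with the $e_1\leq e_2$ side conditions are what restore soundness and completeness modulo $\sim_{hhp}$. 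Neither point affects the validity of your proof strategy, which is the one the omitted proof would take.
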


\begin{theorem}[Completeness of $APTC$ with linear recursion]
Let $p$ and $q$ be closed $APTC$ with linear recursion terms, then,
\begin{enumerate}
  \item if $p\sim_{s} q$ then $p=q$;
  \item if $p\sim_{p} q$ then $p=q$;
  \item if $p\sim_{hp} q$ then $p=q$;
  \item if $p\sim_{hhp} q$ then $p=q$.
\end{enumerate}
\end{theorem}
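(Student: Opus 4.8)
The plan is to reduce the statement to the uniqueness of solutions of guarded linear recursive specifications, i.e. to an application of $RSP$. First I would invoke the Elimination theorem of $APTC$ with linear recursion to replace the two closed terms $p$ and $q$ by provably equal terms of the form $\langle X_1|E_1\rangle$ and $\langle Y_1|E_2\rangle$, where $E_1$ and $E_2$ are linear recursive specifications. Since each equivalence in the statement is a congruence (by the Congruence theorem) and $APTC$ with linear recursion is sound, it suffices to prove the core claim: if $\langle X_1|E_1\rangle \sim \langle Y_1|E_2\rangle$ in the relevant equivalence, then $APTC$ with linear recursion $\vdash \langle X_1|E_1\rangle = \langle Y_1|E_2\rangle$.

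To establish the core claim, I would build a single recursive specification $E$ whose recursion variables are the pairs $Z_{ij}$ corresponding to those variables $X_i$ of $E_1$ and $Y_j$ of $E_2$ that are reachable from $X_1,Y_1$ and satisfy $\langle X_i|E_1\rangle \sim \langle Y_j|E_2\rangle$; since $E_1$ and $E_2$ are finite, there are finitely many such pairs. The equation for $Z_{ij}$ is read off from the common summand structure of the equations for $X_i$ and $Y_j$: for each guarded summand $(a_1\leftmerge\cdots\leftmerge a_k)X_{i'}$ of $X_i$ the transfer property of the bisimulation yields a matching summand $(a_1'\leftmerge\cdots\leftmerge a_k')Y_{j'}$ of $Y_j$ with related initial behaviour and $\langle X_{i'}|E_1\rangle\sim\langle Y_{j'}|E_2\rangle$, and dually; I then place the summand $(a_1\leftmerge\cdots\leftmerge a_k)Z_{i'j'}$, together with the terminating summands $b_1\leftmerge\cdots\leftmerge b_m$, into the equation for $Z_{ij}$. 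I would check that $E$ is again linear and guarded, using that $E_1$ and $E_2$ are.

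The decisive step is to verify that both $\{\langle X_i|E_1\rangle\}_{Z_{ij}}$ and $\{\langle Y_j|E_2\rangle\}_{Z_{ij}}$ are solutions of $E$, after which $RSP$ forces $\langle X_1|E_1\rangle = \langle Y_1|E_2\rangle$. Concretely, I would use $RDP$ to unfold $\langle X_i|E_1\rangle$ and then the axioms for $+$, $\cdot$ and $\leftmerge$ to reorganise its right-hand side into the shape prescribed by $E$; that a genuine summand-by-summand match exists is exactly what $\sim$ guarantees. Because this verification uses only $RDP$, $RSP$ and the sound axioms, the resulting equality lies in $APTC$ with linear recursion. The same scheme applies uniformly to all four equivalences, the only difference being the notion of matching initial behaviour: for $\sim_s$ the initial multiset of pairwise-concurrent events, for $\sim_p$ the initial pomset, and for $\sim_{hp}$ and $\sim_{hhp}$ additionally the order-isomorphism $f$ between the generated configurations.

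I expect the main obstacle to be the $\sim_{hp}$ and especially the $\sim_{hhp}$ cases, where one must track the isomorphism $f$ on configurations while combining the two specifications and, for $\sim_{hhp}$, ensure the matching is preserved under downward closure. This is where the left parallel composition $\leftmerge$ earns its keep: by ordering the initial events through the side condition $e_1\leq e_2$, it lets guarded summands be written with a fixed event order, so that the pairing $Z_{ij}$ determines $f$ coherently and the downward-closure requirement is respected throughout. Verifying that $E$ remains guarded after the reorganisation, and that no event mismatch (which would collapse a summand to $\delta$) is introduced, is the technical heart of the argument.
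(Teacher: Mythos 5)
The paper itself gives no proof of this theorem --- the subsection on left parallel composition explicitly states that ``all proofs of the conclusions are left to the reader'' --- but your proposal is correct and is precisely the standard argument the paper's framework relies on: eliminate both closed terms to linear recursive specifications, pair up bisimilar recursion variables $Z_{ij}$ into one combined guarded linear specification, verify via $RDP$ and the axioms that both families of processes solve it, and conclude by $RSP$. Your closing observation about why $\leftmerge$ (ordering the events of a guarded summand via $e_1\leq e_2$) is what makes the $\sim_{hhp}$ clause go through --- in contrast to the $\parallel$-only version of the theorem, which omits that clause --- is exactly the right point.
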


\subsubsection{Abstraction}

\begin{definition}[Guarded linear recursive specification]
A recursive specification is linear if its recursive equations are of the form

$$(a_{11}\leftmerge\cdots\leftmerge a_{1i_1})X_1+\cdots+(a_{k1}\leftmerge\cdots\leftmerge a_{ki_k})X_k+(b_{11}\leftmerge\cdots\leftmerge b_{1j_1})+\cdots+(b_{1j_1}\leftmerge\cdots\leftmerge b_{lj_l})$$

where $a_{11},\cdots,a_{1i_1},a_{k1},\cdots,a_{ki_k},b_{11},\cdots,b_{1j_1},b_{1j_1},\cdots,b_{lj_l}\in \mathbb{E}\cup\{\tau\}$, and the sum above is allowed to be empty, in which case it represents the deadlock $\delta$.

A linear recursive specification $E$ is guarded if there does not exist an infinite sequence of $\tau$-transitions $\langle X|E\rangle\xrightarrow{\tau}\langle X'|E\rangle\xrightarrow{\tau}\langle X''|E\rangle\xrightarrow{\tau}\cdots$.
\end{definition}

The transition rules of $\tau$ are the same, and axioms of $\tau$ are as Table \ref{AxiomsForTauLeft} shows.

\begin{theorem}[Conservitivity of $APTC$ with silent step and guarded linear recursion]
$APTC$ with silent step and guarded linear recursion is a conservative extension of $APTC$ with linear recursion.
\end{theorem}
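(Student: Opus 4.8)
The plan is to invoke the Conservative Extension Theorem (Theorem \ref{TCE}) directly, with $T_0$ the transition system specification of $APTC$ with linear recursion over a signature $\Sigma_0$, and $T_1$ the transition rules newly introduced for the silent step, namely the axiom $\tau\xrightarrow{\tau}\surd$ together with the four rules governing the abstraction operator $\tau_I$ in Table \ref{TRForTau}. The fresh symbols in $\Sigma_1\setminus\Sigma_0$ are the constant $\tau$ and the unary operator $\tau_I$, and the fresh transition label is $\tau$; none of these occurs in $T_0$, so all freshness requirements are available from the outset.

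First I would check that $T_0$ is source-dependent in the sense of the source-dependency definition. Every rule of $BATC$, of the parallel, communication, conflict-elimination, unless and encapsulation operators, and of guarded recursion has the property that the variables in each premise's source already appear in the conclusion's source, and each premise then propagates source-dependency to its target; a routine induction over the rule shapes in Tables \ref{TRForBATC}, \ref{TRForAPTC} and \ref{TRForGR} gives that all variables of every rule of $T_0$ are source-dependent.

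Next I would verify condition (2) of Theorem \ref{TCE} for each rule $\rho\in T_1$. This is where the freshness of the new symbols does all the work: the source of $\tau\xrightarrow{\tau}\surd$ is the fresh constant $\tau$, and the source of each $\tau_I$-rule is $\tau_I(x)$, which contains the fresh function symbol $\tau_I$. Hence every rule of $T_1$ has a fresh source, the first disjunct of condition (2) is met, and the alternative premise clause need not be examined at all.

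The one genuinely delicate point, and the expected main obstacle, is the hypothesis that $T_0$ and $T_0\oplus T_1$ are \emph{positive after reduction}, since the unless operator $\triangleleft$ carries negative premises of the form $y\nrightarrow^{e}$. I would argue, exactly as in the underlying $APTC$ development, that these negative premises reduce in a stratified, well-founded manner so that both specifications satisfy the positivity requirement; because $T_1$ introduces no negative premises and no rule of $T_1$ feeds back into the $\triangleleft$-rules of $T_0$, the reduction property of $T_0$ is preserved in $T_0\oplus T_1$. With source-dependency, freshness, and positivity after reduction all in hand, Theorem \ref{TCE} yields that $T_0\oplus T_1$ generates exactly the same transitions $t\xrightarrow{a}t'$ and predicates $tP$ for $t\in\mathcal{T}(\Sigma_0)$ as $T_0$, which is precisely the claimed conservativity.
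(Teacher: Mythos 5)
The paper does not actually supply a proof of this theorem: the whole subsection on the left-parallel axiomatization is prefaced by the remark that ``all proofs of the conclusions are left to the reader,'' so there is nothing to compare line by line. Your proposal follows the route the paper clearly intends, namely a direct appeal to Theorem \ref{TCE} with source-dependency of the base TSS, freshness of the sources of the new rules, and a stratification argument for the negative premises of $\triangleleft$; that is the standard and correct strategy.

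Three points deserve correction or tightening. First, you have put the four $\tau_I$-rules into $T_1$, but the statement at hand concerns only $APTC$ with the silent step; the abstraction operator is added in the \emph{next} conservativity theorem ($APTC_{\tau}$ over $APTC$ with silent step and guarded linear recursion). Proving conservativity of the larger system over $T_0$ is a different statement and does not formally yield the one claimed, so $T_1$ here should consist of the rule $\tau\xrightarrow{\tau}\surd$ together with the instances of the recursion rules for guarded linear specifications that mention $\tau$ (whose sources $\langle X|E\rangle$ are fresh precisely because $E$ contains the fresh constant $\tau$) --- you omit these recursion-rule instances entirely. Second, your claim that the transition label $\tau$ is fresh is false: the unless-operator rules of $T_0$ already produce $\tau$-labelled transitions (cf.\ the rules for $x\triangleleft y$ and axioms $U24$--$U26$). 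Fortunately this does not damage the argument, since you only use the first disjunct of condition (2) of Theorem \ref{TCE} (fresh source), but the parenthetical should be dropped. Third, since this theorem lives in the $\leftmerge$-variant of the theory, the source-dependency check should be run over the rules of Table \ref{TRForLeftParallel} rather than Table \ref{TRForAPTC}; the check goes through identically, but the citation should match. With these repairs the argument is sound.
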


\begin{theorem}[Congruence theorem of $APTC$ with silent step and guarded linear recursion]
Rooted branching truly concurrent bisimulation equivalences $\approx_{rbp}$, $\approx_{rbs}$, $\approx_{rbhp}$, and $\approx_{rbhhp}$ are all congruences with respect to $APTC$ with silent step and guarded linear recursion.
\end{theorem}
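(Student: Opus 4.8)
The plan is to reduce the statement to a check on each operator of the signature, using the standard fact that an equivalence is a congruence exactly when it is preserved by every function symbol. Concretely, for each operator of $APTC$ with silent step and guarded linear recursion --- namely $+$, $\cdot$, $\between$, $\parallel$, $\leftmerge$, $\mid$, $\Theta$, $\triangleleft$, $\partial_H$, $\tau_I$, together with the recursion construct $\langle X_i|E\rangle$ --- I would assume $x_k\approx_{rb\star} y_k$ for each argument $k$ (with $\star\in\{p,s,hp,hhp\}$), fix witnessing rooted branching bisimulations $R_k$, and exhibit an explicit relation built from the $R_k$ that witnesses the equivalence of the two composed terms.

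First I would record the structural feature that makes rootedness essential here: plain branching bisimilarity is not preserved by $+$ (for instance $\tau\cdot x$ and $x$ are branching bisimilar, but adding a summand destroys this), whereas the root condition forces the very first transition to be matched by an identically labelled transition into $\approx_{b\star}$-related states. Accordingly I would split each verification into an initial, rooted layer and a subsequent, branching layer: for the initial transitions out of the composite term I would read off the possible moves from the rules in Table \ref{TRForLeftParallel}, Table \ref{TRForTau} and the BATC/APTC tables, match each against a transition of the other composite via clauses (1)--(2) of Definition \ref{RBPSB} (resp.\ Definition \ref{RBHHPB}), and observe that the residual pairs fall into the branching bisimulation generated by the $R_k$; thereafter the conditions of Definition \ref{BPSB} (resp.\ Definition \ref{BHHPB}) apply unchanged.

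For the static operators this is routine casework over the single-premise and double-premise rules, where the truly concurrent labels (pomsets and steps, and the order-isomorphism $f$ in the hp- and hhp-cases) are simply carried along the construction. The downward closure required for $\approx_{rbhhp}$ is inherited because the component relations are downward closed and the operators act pointwise on configurations. The genuinely delicate static case is $\tau_I$: since it renames actions of $I$ into $\tau$, I must set up the candidate relation so that a first $\tau_I$-induced $\tau$-step on one side is answered \emph{exactly}, and not merely up to $\tau$-prefixing, to respect the root condition in Definition \ref{RBPSB}.

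The hard part, I expect, will be the recursion construct $\langle X_i|E\rangle$ under guarded linear recursion, because its behaviour is not determined by the shape of a single SOS rule but is unfolded through the equations of $E$. I would handle it by combining the transition rules of guarded recursion (Table \ref{TRForGR}) with guardedness to ensure each unfolding yields a genuine initial action, so that no infinite $\tau$-chain can interfere (Definition \ref{GLRSS}), and then relate solutions of $E$ whose right-hand sides are componentwise related. Assembling all the operator checks then yields that each of $\approx_{rbp}$, $\approx_{rbs}$, $\approx_{rbhp}$ and $\approx_{rbhhp}$ is a congruence with respect to $APTC$ with silent step and guarded linear recursion.
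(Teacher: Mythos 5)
The paper never writes out a proof of this theorem: the opening of Section~\ref{ahhpb} states that, the theory with $\leftmerge$ being similar to the earlier one, ``all proofs of the conclusions are left to the reader,'' and the congruence results of the earlier chapters are themselves quoted from \cite{APTC} without proof. There is therefore no argument in the text to compare yours against, so I can only assess the proposal on its own terms.

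Your plan is the standard one and is essentially sound: operator-by-operator construction of a witnessing relation from the component relations, a separate rooted layer for the first transition, and a descent into the un-rooted branching equivalences afterwards, with recursion handled by unfolding under guardedness. Three points should be made explicit before the outline counts as a proof. First, the signature of \emph{this} theorem does not yet contain $\tau_I$ --- abstraction is only introduced in the subsequent theorem on $APTC_{\tau}$ with guarded linear recursion --- so your $\tau_I$ case is misplaced here; conversely, the operator that genuinely needs care in this setting is $\leftmerge$, whose rules in Table~\ref{TRForLeftParallel} carry the side condition $e_1\leq e_2$ on events, and you must check that this condition is stable under replacing arguments by equivalent ones. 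Second, your step ``thereafter the conditions of Definition~\ref{BPSB} (resp.\ Definition~\ref{BHHPB}) apply unchanged'' silently assumes that the \emph{un-rooted} equivalences $\approx_{bp}$, $\approx_{bs}$, $\approx_{bhp}$, $\approx_{bhhp}$ are preserved by all the non-choice contexts ($\cdot\,z$, $\leftmerge$, $\mid$, $\partial_H$, etc.); this is true, but it is precisely the family of auxiliary lemmas on which the whole argument rests (and precisely what fails for $+$), so it must be stated and proved separately rather than absorbed into the rooted case. Third, the unless operator $\triangleleft$ has negative premises of the form $y\nrightarrow^{e_2}$, and congruence in that argument position does not follow from a generic ``carry the labels along'' argument: you need the observation that rooted branching equivalent processes have identical sets of initial actions (a consequence of clauses (1)--(2) of Definition~\ref{RBPSB}), which is what makes the negative premises match up. With these three items supplied, the proposal goes through.
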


\begin{center}
\begin{table}
  \begin{tabular}{@{}ll@{}}
\hline No. &Axiom\\
  $B1$ & $e\cdot\tau=e$\\
  $B2$ & $e\cdot(\tau\cdot(x+y)+x)=e\cdot(x+y)$\\
  $B3$ & $x\leftmerge\tau=x$\\
\end{tabular}
\caption{Axioms of silent step}
\label{AxiomsForTauLeft}
\end{table}
\end{center}

\begin{theorem}[Elimination theorem of $APTC$ with silent step and guarded linear recursion]
Each process term in $APTC$ with silent step and guarded linear recursion is equal to a process term $\langle X_1|E\rangle$ with $E$ a guarded linear recursive specification.
\end{theorem}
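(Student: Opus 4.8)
The plan is to derive this elimination result from the already-established \emph{Elimination theorem of $APTC$ with linear recursion} by treating the silent step $\tau$ as an ordinary atomic action and then separately verifying that guardedness is preserved. Throughout, I would regard $\tau$ as a distinguished element of the enlarged action set $\mathbb{E}\cup\{\tau\}$, so that every term of $APTC$ with silent step and guarded linear recursion is, syntactically, a term of $APTC$ with linear recursion over this set.

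First I would orient the axioms in Tables \ref{AxiomsForLeftParallelism} and \ref{AxiomsForEncapsulationLeft}, together with $B1$--$B3$ from Table \ref{AxiomsForTauLeft}, into a term rewriting system, reading each equation from left to right. As in the proof of the elimination theorem for $APTC$ without abstraction, strong normalization follows from Theorem \ref{SN}: one exhibits a well-founded (lexicographic path) ordering $>$ on the signature for which $s>_{lpo}t$ holds for every rule $s\to t$, the essential point being that $\leftmerge$, $\parallel$, $\mid$, $\Theta$, $\triangleleft$ and $\partial_H$ are assigned higher precedence than $\cdot$ and $+$, so that each rewrite step pushes these operators inward and eventually removes them. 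The rules $B1$--$B3$ strictly decrease terms in this ordering as well, so adjoining them keeps the system strongly normalizing.

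Next I would analyze the normal forms. By the elimination theorem for $APTC$ with linear recursion, every closed term is provably equal to a term $\langle X_1|E\rangle$ in which $E$ is a \emph{linear} recursive specification whose equations have the required summand form $(a_{11}\leftmerge\cdots\leftmerge a_{1i_1})X_1+\cdots+(b_{11}\leftmerge\cdots\leftmerge b_{1j_1})+\cdots$, now with atoms drawn from $\mathbb{E}\cup\{\tau\}$. Since $APTC$ with silent step and guarded linear recursion is a conservative extension of $APTC$ with linear recursion (the conservativity theorem stated above), this equality is also derivable in the present theory. It remains to upgrade $E$ from linear to \emph{guarded} linear, that is, to guarantee that $E$ admits no infinite sequence of $\tau$-transitions $\langle X|E\rangle\xrightarrow{\tau}\langle X'|E\rangle\xrightarrow{\tau}\cdots$.

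The main obstacle is precisely this guardedness step. Each recursion constant occurring in the original term refers, by hypothesis, to a guarded linear recursive specification, so it contributes no infinite $\tau$-path; and the elimination rewriting only recombines and distributes existing summands without creating new $\tau$-cycles. The delicate case is the interaction with $B1$--$B3$, where a summand such as $\tau\cdot X$ could in principle produce an equation of the shape $X=\tau X+\cdots$. I would show that any such unguarded $\tau$-summand can be removed: using $B1$ and $B2$ together with $RSP$, an equation with a leading $\tau$-loop is provably equal to one in which the loop has been collapsed, yielding a genuinely guarded linear specification $E'$ with $\langle X_1|E\rangle=\langle X_1'|E'\rangle$. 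Establishing that this collapsing terminates --- that finitely many applications suffice because the constituent specifications were guarded --- is the crux, and it is handled by an induction on the number of recursion variables reachable by $\tau$-steps, appealing to the absence of infinite $\tau$-paths in those guarded specifications.
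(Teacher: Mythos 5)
The paper offers no proof to compare against: at the start of Section \ref{ahhpb} it states that ``all proofs of the conclusions are left to the reader,'' so your proposal has to stand on its own. Its first two stages are fine in outline --- treating $\tau$ as an element of $\mathbb{E}\cup\{\tau\}$ and re-running the elimination argument for $APTC$ with linear recursion does yield a \emph{linear} specification $E$ --- but the step you yourself call the crux does not work as described. $RSP$ applies only to \emph{guarded} recursive specifications (that is its hypothesis: a guarded specification has a unique solution), so you cannot invoke it on the unguarded specification containing the very $\tau$-loop you are trying to remove. Likewise $B1$ ($e\cdot\tau=e$) and $B2$ ($e\cdot(\tau\cdot(x+y)+x)=e\cdot(x+y)$) only act on $\tau$'s occurring \emph{after} an initial action $e$; they cannot eliminate a top-level summand $\tau X$ from a recursive equation. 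Collapsing a genuine $\tau$-loop with an exit is exactly the content of $CFAR$, which is added as an independent axiom precisely because it is not derivable from $B1$, $B2$ and $RSP$; so if unguarded $\tau$-cycles really could arise here, your repair step would be underivable in the theory this theorem is stated for.

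The viable route is the one you mention in passing and then abandon: show that guardedness is \emph{preserved}, so that no repair is ever needed. Concretely, argue by induction on the structure of the term that each operator applied to constants $\langle X|E\rangle$ with $E$ guarded linear yields a specification with no infinite $\tau$-path: for $+$ and $\cdot$ this is immediate, and for $\leftmerge$, $\parallel$, $\mid$ an infinite $\tau$-path in the composite specification would project to an infinite $\tau$-path in at least one of the (guarded) components, a contradiction. One point that genuinely needs care, and which your proposal does not address at all, is the unless operator: by $U25$ and its transition rules, $x\triangleleft y$ can relabel visible transitions of $x$ into $\tau$, so one must check that this relabelling cannot turn a visible cycle of a guarded specification into a $\tau$-cycle. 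Finally, the appeal to the conservativity theorem is misplaced: to transfer the equality $p=\langle X_1|E\rangle$ into the larger theory you only need that the axioms and rules of $APTC$ with linear recursion are \emph{included} in $APTC$ with silent step and guarded linear recursion; conservativity asserts the converse direction (no new equalities between old terms) and does nothing for you here.
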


\begin{theorem}[Soundness of $APTC$ with silent step and guarded linear recursion]
Let $x$ and $y$ be $APTC$ with silent step and guarded linear recursion terms. If $APTC$ with silent step and guarded linear recursion $\vdash x=y$, then
\begin{enumerate}
  \item $x\approx_{rbs} y$;
  \item $x\approx_{rbp} y$;
  \item $x\approx_{rbhp} y$;
  \item $x\approx_{rbhhp} y$.
\end{enumerate}
\end{theorem}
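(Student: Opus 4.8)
The plan is to reduce the statement to an axiom-by-axiom verification and then lean on the congruence result already in hand. Precisely, I would argue by induction on the structure of a derivation of $APTC$ with silent step and guarded linear recursion $\vdash x=y$. Since the Congruence theorem of $APTC$ with silent step and guarded linear recursion guarantees that $\approx_{rbs}$, $\approx_{rbp}$, $\approx_{rbhp}$ and $\approx_{rbhhp}$ are all congruences for every operator of the signature, the inductive cases corresponding to the rules of equational logic (reflexivity, symmetry, transitivity, and closure under contexts) are immediate. It therefore suffices to show that the two sides of each axiom are related by each of the four rooted branching equivalences.

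For the axioms inherited from $APTC$ with left parallel composition (the laws $A6$--$A7$, $P1$--$P9$, $C10$--$C17$, the $\Theta/\triangleleft$ laws, and the $\partial_H$ laws of Table \ref{AxiomsForEncapsulationLeft}), I would observe that each was already proven sound modulo the \emph{strong} truly concurrent bisimulation equivalences $\sim_s$, $\sim_p$, $\sim_{hp}$, $\sim_{hhp}$. Since a strong bisimulation matches the first step (and termination) exactly with strongly bisimilar continuations, it is in particular a rooted branching bisimulation; hence each strong equivalence is contained in its rooted branching counterpart, and these axioms carry over with no further work.

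The substantive part is the three silent-step axioms of Table \ref{AxiomsForTauLeft}, namely $B1$: $e\cdot\tau=e$, $B2$: $e\cdot(\tau\cdot(x+y)+x)=e\cdot(x+y)$, and $B3$: $x\leftmerge\tau=x$. For each I would exhibit an explicit witnessing relation on configurations and verify the four clauses of Definition \ref{RBPSB}, taking care that the initial transition is matched strictly (the rooted condition) while later $\tau$-steps may be absorbed in the branching sense. For the hp- and hhp-cases I would additionally carry along the order isomorphism $f$ as in Definition \ref{RBHHPB}, and for $\approx_{rbhhp}$ check that the constructed relation is downward closed. The law $B2$ is the genuinely branching identity and is precisely where the branching clauses get exercised; $B1$ and $B3$ amount to bookkeeping about a trailing or absorbed $\tau$.

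Finally, for the recursion layer I would treat $RDP$ and $RSP$ separately. Soundness of $RDP$ is immediate from the transition rules of Table \ref{TRForGR}: the solution $\langle X_i|E\rangle$ and its unfolding $t_i(\langle X_1|E\rangle,\dots,\langle X_n|E\rangle)$ carry literally the same outgoing transitions, so the identity-based relation is a strong, hence rooted branching, bisimulation. The main obstacle I expect is the soundness of $RSP$: one must show that a guarded linear recursive specification has a \emph{unique} solution up to each rooted branching equivalence. Here guardedness (the absence of infinite $\tau$-sequences, cf.\ Definition \ref{GLRSS}) is essential, as it is exactly what prevents two distinct processes from both solving the specification while differing only in divergent $\tau$-behavior. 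The argument would relate any two solutions via the relation identifying terms reachable by matching finite unfoldings, and then verify the branching clauses using guardedness to bound the $\tau$-prefixes that must be consumed before a genuine action is matched; establishing that this relation meets the rooted branching conditions in the hp/hhp settings, where the posetal structure and downward closure must be preserved, is the delicate step.
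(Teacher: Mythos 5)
The paper does not actually supply a proof of this theorem: the subsection on the left-parallel axiomatization states explicitly that ``all proofs of the conclusions are left to the reader,'' so there is no argument of record to compare yours against. Your proposal is the standard and correct strategy for such soundness results --- induction on derivations with the congruence theorem discharging the equational-logic rules, the inclusion of each strong truly concurrent bisimilarity in its rooted branching counterpart discharging the inherited axioms, explicit witnessing relations for $B1$--$B3$ (with $B2$ as the genuinely branching case), and a guardedness-based uniqueness argument for $RSP$ --- and it correctly isolates $RSP$ and the hhp/downward-closure bookkeeping as the only delicate points.
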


\begin{theorem}[Completeness of $APTC$ with silent step and guarded linear recursion]
Let $p$ and $q$ be closed $APTC$ with silent step and guarded linear recursion terms, then,
\begin{enumerate}
  \item if $p\approx_{rbs} q$ then $p=q$;
  \item if $p\approx_{rbp} q$ then $p=q$;
  \item if $p\approx_{rbhp} q$ then $p=q$;
  \item if $p\approx_{rbhhp} q$ then $p=q$.
\end{enumerate}
\end{theorem}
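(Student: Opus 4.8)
The plan is to reduce the four implications to a uniqueness property of guarded linear recursive specifications and then close the argument with RSP. First I would apply the Elimination theorem of $APTC$ with silent step and guarded linear recursion stated just above: every closed term is provably equal to some $\langle X_1|E\rangle$ with $E$ a guarded linear recursive specification. Hence, writing $p=\langle X_1|E\rangle$ and $q=\langle Y_1|F\rangle$, and using the matching Soundness theorem so that derivable equalities are sound, it suffices to show for each of the four equivalences that $\langle X_1|E\rangle\approx_{rbs}\langle Y_1|F\rangle$ (resp. $\approx_{rbp}$, $\approx_{rbhp}$, $\approx_{rbhhp}$) implies that $\langle X_1|E\rangle=\langle Y_1|F\rangle$ is derivable.

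Second, I would manufacture a single guarded linear recursive specification solved by both processes. Consider the pairs $(X,Y)$ with $X$ reachable from $X_1$ in $E$ and $Y$ reachable from $Y_1$ in $F$ such that $\langle X|E\rangle$ and $\langle Y|F\rangle$ are \emph{branching} bisimilar (i.e. $\approx_{bs}$, dropping the root requirement for the non-initial states, and correspondingly $\approx_{bp}$, $\approx_{bhp}$, $\approx_{bhhp}$). Taking these pairs as fresh recursion variables, I would use the linear shape of the $X$- and $Y$-equations together with the branching-bisimulation transfer conditions to read off, for each pair, a linear equation over the new variables: a non-$\tau$ summand on one side is matched by an equivalent summand on the other leading to a related pair, while the branching clauses let a $\tau$-summand be absorbed by staying in the same related pair. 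One then checks that the resulting specification $G$ is linear and that both $\langle X_1|E\rangle$ and $\langle Y_1|F\rangle$ are solutions of $G$, the root condition of $\approx_{rbs}$ being invoked exactly once to align the very first transition; RSP then forces $\langle X_1|E\rangle=\langle Y_1|F\rangle$.

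Third, the four cases differ only in how summands are aligned. For $\approx_{rbs}$ one matches steps (pairwise concurrent event sets), for $\approx_{rbp}$ one matches pomsets, and for $\approx_{rbhp}$, $\approx_{rbhhp}$ one additionally transports the order-isomorphism $f$ of the posetal product, so that related pairs carry compatible labellings of their causal pasts. In the hhp-case the relation must moreover be downward closed, and this is exactly where the left parallel composition $\leftmerge$ of this subsection is needed: its transition rules fire only when $e_1\leq e_2$ in the extended causal order (with $<$ strict and $=$ concurrent), so the matched summands decompose into $\leftmerge$-forms that are compatible with downward closure and with $f$ being an isomorphism.

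The hard part will be guaranteeing that $G$ is again a \emph{guarded} linear specification, i.e. that collapsing $\tau$-transitions between branching-bisimilar states does not create an infinite sequence of $\tau$-steps. Here I would lean on the definition of guarded linear recursive specification, whose very requirement rules out such infinite $\tau$-sequences in $E$ and $F$, so that no Cluster Fair Abstraction Rule is required (there is no abstraction operator $\tau_I$ in play). For the hhp-variant the remaining obstacle is that downward closure is a global condition on configurations whereas RSP reasons equation-by-equation; reconciling the two, i.e. showing the downward-closed posetal relation can indeed be presented by a linear specification over pairs, is the most delicate point, and is precisely what the $\leftmerge$-machinery is designed to make tractable.
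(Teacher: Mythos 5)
The paper does not actually contain a proof of this theorem: the subsection on the axiomatization for hhp-bisimilarity announces that ``all proofs of the conclusions are left to the reader,'' so there is no text to compare your argument against step by step. That said, your proposal is the canonical completeness argument that the surrounding development is set up to support --- eliminate to $\langle X_1|E\rangle$ and $\langle Y_1|F\rangle$, build a guarded linear specification over pairs of branching-bisimilar reachable states, verify that both processes solve it, and close with RSP --- and it matches the scheme one would extract from the corresponding completeness theorems stated earlier for $APTC$ and $APTC_{\tau}$. Two points deserve more care than your sketch gives them. First, under the branching transfer condition a visible summand of the $X$-equation is matched on the $Y$-side only after a finite $\tau$-prefix, so reading off a single linear equation for the pair requires absorbing that prefix with $B2$ inside the derivation showing that $\langle Y_1|F\rangle$ is a solution; it is not merely a matter of ``staying in the same related pair,'' and this absorption is where the branching axioms actually enter the proof. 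Second, for $\approx_{rbhhp}$ the assertion that the downward-closed posetal relation can be presented by a linear specification over pairs is precisely the content of the theorem in the $\leftmerge$-fragment, and pointing at ``the $\leftmerge$-machinery'' does not discharge it; this is the one place where the argument genuinely differs from the $\approx_{rbs}$ case and would have to be written out, since for the full merge $\parallel$ the analogous statement is false by Moller's non-finite-axiomatizability result cited in the paper.
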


The transition rules of $\tau_I$ are the same, and the axioms are shown in Table \ref{AxiomsForAbstractionLeft}.

\begin{theorem}[Conservitivity of $APTC_{\tau}$ with guarded linear recursion]
$APTC_{\tau}$ with guarded linear recursion is a conservative extension of $APTC$ with silent step and guarded linear recursion.
\end{theorem}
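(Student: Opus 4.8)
The plan is to derive the statement as an instance of the Conservative Extension theorem (Theorem~\ref{TCE}). I take $T_0$ to be the transition system specification of $APTC$ with silent step and guarded linear recursion over its signature $\Sigma_0$, and I take $T_1$ to consist of the transition rules for the abstraction operator $\tau_I$ (the rules of Table~\ref{TRForTau}, which are stated to be unchanged in the left-parallel setting), so that $\Sigma_1\setminus\Sigma_0=\{\tau_I\mid I\subseteq\mathbb{E}\}$. It then suffices to verify the two numbered hypotheses of Theorem~\ref{TCE} together with the standing requirement that both $T_0$ and $T_0\oplus T_1$ are positive after reduction; the conclusion is exactly that $T_0\oplus T_1$ is a conservative extension of $T_0$.

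First I would check that $T_0$ is source-dependent, proceeding rule by rule. For the rules governing $e$, $+$, $\cdot$, $\leftmerge$, $\mid$, $\Theta$, $\triangleleft$, $\partial_H$ and $\tau$, every variable appearing in a rule already occurs in its source, so source-dependency is immediate; for the two rules of guarded linear recursion the source $\langle X_i|E\rangle$ is a constant and the premise $t_i(\langle X_1|E\rangle,\dots,\langle X_n|E\rangle)\xrightarrow{\{e_1,\dots,e_k\}}\surd$ (resp.\ its non-terminating variant) introduces no new variables, so those rules are source-dependent as well. This case analysis parallels the one carried out for the earlier conservativity theorems, so I would present it briefly.

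Next I would dispatch condition (2) of Theorem~\ref{TCE} for each $\rho\in T_1$. Every rule in $T_1$ has source $\tau_I(x)$, which contains the function symbol $\tau_I\in\Sigma_1\setminus\Sigma_0$; hence the source of $\rho$ is fresh and the first disjunct of condition~(2) holds outright, with no need to examine the premises or the emitted label. (One could alternatively invoke the second disjunct, since the premise $x\xrightarrow{e}x'$ lies in $\mathbb{T}(\Sigma_0)$ with all its variables occurring in the source, but the freshness of the source already settles the matter.) Combining this with the source-dependency of $T_0$, Theorem~\ref{TCE} gives the result.

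The routine bulk is the source-dependency case analysis; the one genuinely delicate point is the \emph{positive after reduction} bookkeeping. Because the rules for the unless operator $\triangleleft$ carry negative premises of the form $y\nrightarrow^{e}$, the specification $T_0$ (and likewise $T_0\oplus T_1$) is not literally positive, so I cannot appeal to positivity directly and must instead argue—exactly as in the corresponding $APTC$ development—that the standard stratification/reduction renders both specifications positive after reduction, thereby licensing the application of Theorem~\ref{TCE}. Once that hypothesis is secured, the freshness of the $\tau_I$-rule sources makes the remainder of the argument immediate.
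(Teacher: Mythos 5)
Your proposal is correct and is exactly the intended argument: the paper states this theorem without proof (the section explicitly leaves all proofs to the reader), and the standard route---which the paper sets up by including Theorem~\ref{TCE} in its preliminaries---is precisely to instantiate that meta-theorem with $T_0$ the TSS of $APTC$ with silent step and guarded linear recursion and $T_1$ the $\tau_I$-rules, whose sources $\tau_I(x)$ are fresh. Your attention to the ``positive after reduction'' hypothesis forced by the negative premises of $\triangleleft$ is a point the paper glosses over entirely, so nothing is missing relative to the source.
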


\begin{theorem}[Congruence theorem of $APTC_{\tau}$ with guarded linear recursion]
Rooted branching truly concurrent bisimulation equivalences $\approx_{rbp}$, $\approx_{rbs}$, $\approx_{rbhp}$ and $\approx_{rbhhp}$ are all congruences with respect to $APTC_{\tau}$ with guarded linear recursion.
\end{theorem}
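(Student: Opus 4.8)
The plan is to leverage the conservativity result just stated together with the congruence theorem for $APTC$ with silent step and guarded linear recursion. Since $APTC_{\tau}$ with guarded linear recursion differs from $APTC$ with silent step and guarded linear recursion only by the addition of the abstraction operator $\tau_I$, and since $\tau_I$ occurs in none of the transition rules of the other operators, conservativity guarantees that the transitions of every term not built with $\tau_I$ are unchanged. Hence the congruence arguments for $+$, $\cdot$, $\leftmerge$, $\mid$, $\Theta$, $\triangleleft$, $\partial_H$ and for guarded linear recursion carry over verbatim, and it suffices to prove that each of $\approx_{rbs}$, $\approx_{rbp}$, $\approx_{rbhp}$, $\approx_{rbhhp}$ is preserved by $\tau_I$, i.e.\ that $x\approx y$ implies $\tau_I(x)\approx\tau_I(y)$ in each of the four senses.

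First I would treat the step and pomset cases. Suppose $x\approx_{rbs}y$ and fix a witnessing rooted branching step bisimulation $R$; by definition the residuals reached after the first step are related by an ordinary branching step bisimulation $R_0$. I would set $R'=\{(\tau_I(p),\tau_I(q)):(p,q)\in R\}$ for the root layer and $R_0'=\{(\tau_I(p),\tau_I(q)):(p,q)\in R_0\}$ for the interior, and check the bisimulation clauses against the four transition rules of $\tau_I$. Every move of $\tau_I(p)$ arises from a move of $p$ in which labels outside $I$ are kept and labels in $I$ collapse to $\tau$; because $R$ (resp.\ $R_0$) already matches the underlying move of $p$ by a corresponding move of $q$, relabelling both matched moves identically keeps the two sides of $R'$ (resp.\ $R_0'$) in agreement on which labels survive and which become $\tau$. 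The root condition transfers because the first transition of $x$ is matched exactly by $y$, hence so is the first transition of $\tau_I(x)$, after which the interior obligations are discharged using $R_0'$ and the definition $\xRightarrow{X}\triangleq\xrightarrow{\tau^*}\xrightarrow{e}\xrightarrow{\tau^*}$. The pomset case is identical once labels are compared up to pomset isomorphism.

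For $\approx_{rbhp}$ I would instead lift the witnessing weakly posetal relation, sending each triple $(C_1,f,C_2)$ to the pair of relabelled configurations carrying the same underlying event sets and the same order isomorphism $f$. Since $\tau_I$ only rewrites labels and never merges or deletes events, $f$ remains an order isomorphism on the relabelled configurations, and the single-event rules of $\tau_I$ respect it. The $\approx_{rbhhp}$ case adds the requirement that the lifted relation be downward closed, which I obtain by lifting a downward-closed witness and noting that relabelling configurations leaves their inclusion order untouched, so downward closure is preserved.

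The main obstacle will be the interior branching clause for a step whose label lies in $I$: abstraction turns such an observable step of the original system into a $\tau$-step of $\tau_I(x)$, so I must show it is matched in the sense of Definition \ref{BPSB}, either as an inert $\tau$ or through a genuine $\tau^*$-sequence on the $\tau_I(y)$-side. The resolution is to reuse the matching that the underlying branching bisimulation already provides for that observable step --- namely a decomposition $\xrightarrow{\tau^*}\xrightarrow{e}\xrightarrow{\tau^*}$ on the $y$-side --- and to verify that relabelling it under $\tau_I$, leaving the $\tau$-segments untouched while collapsing the $I$-labelled $e$ to $\tau$, yields exactly the $\tau^*$-sequence demanded by the branching alternative. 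Checking that the ``either/or'' choice made on the $x$-side can always be mirrored after abstraction, across all four rules of $\tau_I$ and uniformly for the step, pomset, hp and hhp readings, is the delicate part; the rooted layer is comparatively easy, since the first step of $\tau_I(x)$ comes from a first step of $x$ that $y$ already matches exactly.
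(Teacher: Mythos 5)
The paper does not actually supply a proof of this theorem: at the start of the subsection on axiomatization for hhp-bisimilarity it states that ``all proofs of the conclusions are left to the reader,'' so there is nothing to compare your argument against line by line. Your proof is the standard one that the omitted proof would have to be --- reduce via conservativity to showing that $\tau_I$ preserves each equivalence, then lift the witnessing (rooted) branching bisimulation through the relabelling and discharge the $I$-labelled case using the $\xrightarrow{\tau^*}\xrightarrow{e}\xrightarrow{\tau^*}$ decomposition already provided by the underlying relation --- and it is essentially sound; the only point worth making explicit in the hp/hhp cases is that after abstraction the isomorphism $f$ must be restricted to $\hat{C}$ (which drops the newly-$\tau$ events), and this restriction is still a label- and order-preserving bijection because $f$ maps $I$-labelled events to $I$-labelled events.
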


\begin{center}
\begin{table}
  \begin{tabular}{@{}ll@{}}
\hline No. &Axiom\\
  $TI1$ & $e\notin I\quad \tau_I(e)=e$\\
  $TI2$ & $e\in I\quad \tau_I(e)=\tau$\\
  $TI3$ & $\tau_I(\delta)=\delta$\\
  $TI4$ & $\tau_I(x+y)=\tau_I(x)+\tau_I(y)$\\
  $TI5$ & $\tau_I(x\cdot y)=\tau_I(x)\cdot\tau_I(y)$\\
  $TI6$ & $\tau_I(x\leftmerge y)=\tau_I(x)\leftmerge\tau_I(y)$\\
\end{tabular}
\caption{Axioms of abstraction operator}
\label{AxiomsForAbstractionLeft}
\end{table}
\end{center}

\begin{theorem}[Soundness of $APTC_{\tau}$ with guarded linear recursion]
Let $x$ and $y$ be $APTC_{\tau}$ with guarded linear recursion terms. If $APTC_{\tau}$ with guarded linear recursion $\vdash x=y$, then
\begin{enumerate}
  \item $x\approx_{rbs} y$;
  \item $x\approx_{rbp} y$;
  \item $x\approx_{rbhp} y$;
  \item $x\approx_{rbhhp} y$.
\end{enumerate}
\end{theorem}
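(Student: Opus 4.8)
The plan is to reduce the soundness claim to a finite, per-axiom verification, exploiting two facts already available in the excerpt. First, the congruence theorem for $APTC_{\tau}$ with guarded linear recursion guarantees that each of $\approx_{rbs}$, $\approx_{rbp}$, $\approx_{rbhp}$ and $\approx_{rbhhp}$ is preserved by every operator of the signature (including $\cdot$, $+$, $\leftmerge$, $\mid$, $\Theta$, $\triangleleft$, $\partial_H$ and $\tau_I$) and by the recursion construct $\langle X_i|E\rangle$. Consequently both the set of provably equal closed-instance pairs and the set of bisimilar ones are closed under reflexivity, symmetry, transitivity and contextual substitution, so it suffices to check that each individual axiom, read as an identity between arbitrary closed instances, yields two rooted branching bisimilar processes. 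Second, the previously established soundness of $APTC$ with silent step and guarded linear recursion already discharges every axiom except those governing $\tau_I$, namely $TI1$ through $TI6$ in Table \ref{AxiomsForAbstractionLeft}. Thus the genuinely new content is the soundness of these six abstraction axioms.

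For each of $TI1$--$TI6$ I would exhibit an explicit witnessing relation and verify the defining clauses of rooted branching bisimulation directly against the transition rules for $\tau_I$ in Table \ref{TRForTau} (together with the $\leftmerge$ rules of Table \ref{TRForLeftParallel} for the parallel case). For the atomic and constant cases $TI1$ ($e\notin I$), $TI2$ ($e\in I$) and $TI3$ ($\tau_I(\delta)=\delta$), the relation pairing the two sides with the identity elsewhere works immediately: in $TI1$ the single initial action is matched on the nose, in $TI2$ the initial $\tau$ is matched by $\tau$, and in $TI3$ both sides are the deadlock with no transitions and no termination. For the homomorphism axioms $TI4$--$TI6$ I would proceed by transition-rule analysis: any move of $\tau_I(s+t)$, $\tau_I(s\cdot t)$ or $\tau_I(s\leftmerge t)$ arises from a move of the argument relabelled by $\tau_I$, and conversely, so the relation $\{(\tau_I(s\star t),\,\tau_I(s)\star\tau_I(t))\}\cup\mathrm{Id}$, with $\star$ the relevant operator, is closed under transitions and respects $\downarrow$. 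The four equivalences are handled uniformly: a step bisimulation is the restriction of a pomset bisimulation to pairwise-concurrent transition sets; for $\approx_{rbhp}$ I additionally carry the order-isomorphism $f$ between the executed events and extend it as new events fire; and for $\approx_{rbhhp}$ I verify that the relation so constructed is downward closed, which reduces to the fact that $\tau_I$ acts pointwise on events and hence commutes with the restriction of configurations.

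The hard part will be the clause involving $\tau_I$ over the left parallel composition, i.e.\ $TI6$, in the hereditary history-preserving case. The subtlety is twofold. Relabelling an event of $I$ to $\tau$ collapses it in $\hat{C}$ while it remains present in $C$, so I must check that the isomorphism $f$ demanded by $\approx_{rbhp}$ remains well defined and that renamed $\tau$-events never destroy the order structure recorded on the visible events; this is where the pointwise nature of $\tau_I$ and the earlier stipulation that shadow events be distinct are used to keep $f$ an isomorphism after relabelling. Moreover, the rooted condition must be argued separately from the branching condition: I would show that the first transition on each side is always a concrete matched move—an action in the $TI1$ case, a matched $\tau$ in the $TI2$ case, and the relabelled initial step of the argument in $TI4$--$TI6$—so that no unmatched $\tau$ is ever absorbed by the branching clause at the root, and rootedness is inherited from the componentwise behaviour of $\leftmerge$ together with the base soundness result. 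Once $TI1$--$TI6$ are verified, combining them with the inherited soundness of the remaining axioms through the congruence closure completes all four parts simultaneously.
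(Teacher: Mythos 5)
The paper offers no proof of this theorem: the entire subsection on the axiomatization for hhp-bisimilarity explicitly states that ``all proofs of the conclusions are left to the reader,'' so there is nothing to compare against beyond the intended methodology. Your proposal follows exactly that standard route --- use the congruence theorem to reduce soundness to an axiom-by-axiom check, inherit everything below $\tau_I$ from the already-stated soundness of $APTC$ with silent step and guarded linear recursion, and verify $TI1$--$TI6$ by exhibiting transition-closed witnessing relations --- and this is correct; the only points to tidy in a full write-up are that the witnessing relation for $TI6$ must also contain the $\between$-residue pairs (since a synchronous $\leftmerge$ step leaves $x'\between y'$), and that the appeal to distinctness of shadow constants is irrelevant here, as the downward closure needed for $\approx_{rbhhp}$ follows simply from $\tau_I$ acting pointwise on events.
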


\begin{definition}[Cluster]
Let $E$ be a guarded linear recursive specification, and $I\subseteq \mathbb{E}$. Two recursion variable $X$ and $Y$ in $E$ are in the same cluster for $I$ iff there exist sequences of transitions $\langle X|E\rangle\xrightarrow{\{b_{11},\cdots, b_{1i}\}}\cdots\xrightarrow{\{b_{m1},\cdots, b_{mi}\}}\langle Y|E\rangle$ and $\langle Y|E\rangle\xrightarrow{\{c_{11},\cdots, c_{1j}\}}\cdots\xrightarrow{\{c_{n1},\cdots, c_{nj}\}}\langle X|E\rangle$, where $b_{11},\cdots,b_{mi},c_{11},\cdots,c_{nj}\in I\cup\{\tau\}$.

$a_1\leftmerge\cdots\leftmerge a_k$ or $(a_1\leftmerge\cdots\leftmerge a_k) X$ is an exit for the cluster $C$ iff: (1) $a_1\leftmerge\cdots\leftmerge a_k$ or $(a_1\leftmerge\cdots\leftmerge a_k) X$ is a summand at the right-hand side of the recursive equation for a recursion variable in $C$, and (2) in the case of $(a_1\leftmerge\cdots\leftmerge a_k) X$, either $a_l\notin I\cup\{\tau\}(l\in\{1,2,\cdots,k\})$ or $X\notin C$.
\end{definition}

\begin{center}
\begin{table}
  \begin{tabular}{@{}ll@{}}
\hline No. &Axiom\\
  $CFAR$ & If $X$ is in a cluster for $I$ with exits \\
           & $\{(a_{11}\leftmerge\cdots\leftmerge a_{1i})Y_1,\cdots,(a_{m1}\leftmerge\cdots\leftmerge a_{mi})Y_m, b_{11}\leftmerge\cdots\leftmerge b_{1j},\cdots,b_{n1}\leftmerge\cdots\leftmerge b_{nj}\}$, \\
           & then $\tau\cdot\tau_I(\langle X|E\rangle)=$\\
           & $\tau\cdot\tau_I((a_{11}\leftmerge\cdots\leftmerge a_{1i})\langle Y_1|E\rangle+\cdots+(a_{m1}\leftmerge\cdots\leftmerge a_{mi})\langle Y_m|E\rangle+b_{11}\leftmerge\cdots\leftmerge b_{1j}+\cdots+b_{n1}\leftmerge\cdots\leftmerge b_{nj})$\\
\end{tabular}
\caption{Cluster fair abstraction rule}
\label{CFARLeft}
\end{table}
\end{center}

\begin{theorem}[Soundness of $CFAR$]
$CFAR$ is sound modulo rooted branching truly concurrent bisimulation equivalences $\approx_{rbs}$, $\approx_{rbp}$, $\approx_{rbhp}$ and $\approx_{rbhhp}$.
\end{theorem}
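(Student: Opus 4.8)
The plan is to prove soundness the way fair-abstraction rules are customarily handled: reduce the conditional equation to an ordinary (non-rooted) branching bisimilarity between cluster states, and then exhibit one explicit relation. Fix a guarded linear recursive specification $E$, a set $I\subseteq\mathbb{E}$, and a recursion variable $X$ lying in a cluster $C$ for $I$ whose exits are exactly the summands listed in Table \ref{CFARLeft}. Write $q$ for the exit sum $(a_{11}\leftmerge\cdots\leftmerge a_{1i})\langle Y_1|E\rangle+\cdots+b_{n1}\leftmerge\cdots\leftmerge b_{nj}$ on the right. Since both sides of the axiom are prefixed by a leading $\tau$, the rooted clauses (clauses (1),(2) of Definitions \ref{RBPSB} and \ref{RBHHPB}) are discharged immediately: each side performs one $\tau$-step into $\tau_I(\langle X|E\rangle)$ and $\tau_I(q)$ respectively, so it suffices to prove $\tau_I(\langle X|E\rangle)\approx_{bp}\tau_I(q)$ together with its $bs$, $bhp$ and $bhhp$ analogues, after which rootedness of the whole equation is automatic.

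Next I would construct the witnessing branching bisimulation $R$. For the step and pomset versions I take $R$ to relate $\tau_I(\langle Z|E\rangle)$ to $\tau_I(q)$ for \emph{every} recursion variable $Z$ in the cluster $C$, and to be the identity on every process reachable after an exit has been fired. The design rests on two facts coming from the definitions of cluster and exit together with guarded linearity. First, each equation of a variable $Z\in C$ decomposes into summands that are either \emph{internal} --- of the form $(b_1\leftmerge\cdots\leftmerge b_k)Z'$ with all $b_l\in I\cup\{\tau\}$ and $Z'\in C$, so that $\tau_I$ turns them into steps whose labels are all $\tau$, i.e. $\tau^*$-moves that stay inside $R$ --- or \emph{exits}, which are precisely the summands of $q$. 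Second, $C$ is strongly connected under $I\cup\{\tau\}$-labelled transitions, so from any $Z\in C$ there is a $\tau^*$-path to the variable whose equation carries any prescribed exit.

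With $R$ in hand the bisimulation conditions check as follows. If $\tau_I(\langle Z|E\rangle)$ makes an internal step, its label is $\tau^*$ and its target is again a cluster state, so the first (stuttering) branch of Definition \ref{BPSB} applies with $\tau_I(q)$ standing still; if it fires an exit, that exit is a summand of $q$, and $\tau_I(q)$ matches it with identical label and identical continuation, landing both processes in the identity part of $R$. Conversely, when $\tau_I(q)$ fires an exit, the left process first wanders through the cluster along a $\tau^*$-path (remaining inside $R$ throughout, by strong connectivity) to the variable holding that exit and then fires it, which is exactly the second branch of the branching clause. The termination conditions hold because a cluster state can only terminate through a terminating exit $b_{r1}\leftmerge\cdots\leftmerge b_{rj}$, which both sides reach identically. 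Note that no separate fairness argument is needed: the branching bisimulation of Definition \ref{BPSB} carries no divergence requirement, so an internal loop inside $C$ is simply absorbed by stuttering --- this is where the word \emph{fair} in the rule is discharged semantically.

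Finally I would lift the construction to the four equivalences. The $\approx_{rbs}$ and $\approx_{rbp}$ cases differ only in whether a matched exit is read as a step or as a pomset, which is immediate since the labels on the two sides literally coincide. For $\approx_{rbhp}$ I would carry the order-isomorphism $f$ along $R$: the internal moves contribute only $\tau$'s and the exit moves are matched by identical events, so $f$ extends consistently at each exit. The $\approx_{rbhhp}$ case adds the downward-closure requirement, and this is the step I expect to be the main obstacle: I must verify that the posetal relation built from $R$ is downward closed, which is clear on the identity part but needs care on the cluster part, where one has to confirm that restricting a related triple to a sub-configuration still lands inside $R$ rather than producing a spurious pair relating a cluster state to a post-exit state. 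I expect this to succeed because every order relation internal to the cluster fragment is generated by $\tau$-events that are invisible to the posetal comparison, but making that precise --- rather than the essentially routine step/pomset bookkeeping --- is the real work.
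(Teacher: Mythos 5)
Your proposal cannot be checked against the paper's own argument, because the paper gives none: the subsection containing this theorem states explicitly that ``all proofs of the conclusions are left to the reader,'' and the three-equivalence version (Theorem \ref{SCFAR}) is likewise recalled from the cited APTC reference without proof. Measured against the standard semantic soundness proof of CFAR (Vaandrager/Fokkink style), your route is exactly the expected one: peel off the leading $\tau$ to reduce the rooted conditions to plain branching bisimilarity of $\tau_I(\langle X|E\rangle)$ and $\tau_I(q)$; relate every cluster state $\tau_I(\langle Z|E\rangle)$ to $\tau_I(q)$ and take the identity after an exit; use the fact that a non-exit summand of a cluster variable has all its actions in $I\cup\{\tau\}$ and its target inside $C$ (so it becomes a $\tau^*$-stuttering move under $\tau_I$), and that the cluster is strongly connected by definition (so the left side can $\tau^*$-walk to whichever variable carries a prescribed exit before matching it). The absence of a divergence condition in Definition \ref{BPSB} is indeed where the fairness content of the rule is absorbed. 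Two small points of care: an exit $(a_{1}\leftmerge\cdots\leftmerge a_{i})Y$ with $Y\in C$ lands both sides back in a cluster state, so your ``identity part'' of $R$ must coexist with the cluster part on such states (harmless, but worth stating); and in a linear specification a cluster state never itself satisfies $\downarrow$ --- termination is reached only by firing a bodyless exit --- so the termination clauses are essentially vacuous rather than ``reached identically.'' The only substantive unfinished business is the one you flag yourself: the downward-closure verification needed for $\approx_{rbhhp}$, which is precisely what distinguishes this statement from Theorem \ref{SCFAR} and is not discharged by the routine step/pomset bookkeeping; your heuristic (cluster-internal causal order is carried only by $\tau$-events invisible to the weakly posetal comparison) is the right idea, but it would have to be written out before the proof is complete.
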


\begin{theorem}[Completeness of $APTC_{\tau}$ with guarded linear recursion and $CFAR$]
Let $p$ and $q$ be closed $APTC_{\tau}$ with guarded linear recursion and $CFAR$ terms, then,
\begin{enumerate}
  \item if $p\approx_{rbs} q$ then $p=q$;
  \item if $p\approx_{rbp} q$ then $p=q$;
  \item if $p\approx_{rbhp} q$ then $p=q$;
  \item if $p\approx_{rbhhp} q$ then $p=q$.
\end{enumerate}
\end{theorem}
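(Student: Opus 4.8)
The plan is to follow the blueprint of Theorem \ref{CCFAR}, the corresponding completeness result for the theory without left parallel composition, adapting each step to the left-merge signature. Since the elimination, recursion and abstraction machinery just established for $\leftmerge$ mirrors that of the main theory, the four cases $\approx_{rbs}$, $\approx_{rbp}$, $\approx_{rbhp}$, $\approx_{rbhhp}$ can be handled by one uniform argument that differs only in which transition-matching condition (steps, pomsets, history preservation, or hereditary history preservation) is invoked; the $\approx_{rbhp}$ and $\approx_{rbhhp}$ cases additionally carry along the order-isomorphism $f$ between configurations, and the hereditary case additionally requires downward closure, but the skeleton of the construction is identical.

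First I would reduce both $p$ and $q$ to recursion variables of guarded linear recursive specifications. By the elimination theorem for $APTC$ with silent step and guarded linear recursion, there are guarded linear recursive specifications $E_1$ and $E_2$ with $\langle X_1|E_1\rangle$ and $\langle Y_1|E_2\rangle$ provably equal to $p$ and $q$ respectively; by the soundness theorem these provable equalities are themselves rooted branching bisimilarities, so $\langle X_1|E_1\rangle\approx\langle Y_1|E_2\rangle$ in the relevant sense ($\approx_{rbs}$, $\approx_{rbp}$, $\approx_{rbhp}$, or $\approx_{rbhhp}$). Because each equation is linear and there are finitely many variables, the reachable state spaces of $E_1$ and $E_2$ are finite, and every transition is labelled by a parallel bundle $a_1\leftmerge\cdots\leftmerge a_k$, which I treat throughout as a single generalized atomic step.

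Next I would remove internal cycles, which is the ingredient specific to abstraction. A guarded linear specification may still contain a cluster for $I$, i.e.\ a set of variables connected by $\tau$- and $I$-transitions, generating infinite internal paths that rooted branching bisimilarity must collapse. Here I would invoke $CFAR$ (Table \ref{CFARLeft}): for each cluster I rewrite $\tau\cdot\tau_I(\langle X|E\rangle)$ to $\tau\cdot\tau_I(\cdot)$ applied to the sum of the exits of that cluster, which, combined with $RSP$, replaces the cluster by a finite alternative composition of its exit behaviors. Applying this to every cluster of $E_1$ and $E_2$ yields provably equal specifications whose internal transition relation is well-founded, i.e.\ with no infinite $\tau$-sequence among reachable states; soundness of $CFAR$ guarantees that these rewrites preserve the bisimilarity established above.

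Finally, with the two specifications now free of internal cycles, I would build a single guarded linear recursive specification $E$ witnessing the bisimulation. Taking the chosen rooted branching bisimulation $R$ between the two state spaces, I introduce one recursion variable $Z_{(s,t)}$ for each related pair $(s,t)\in R$ (indexed also by $f$ in the hp- and hhp-cases) and read off its defining linear equation from the matched transitions via $RDP$, using the branching clauses to align an action step on one side with a finite $\tau$-sequence followed by the same step on the other, and the termination clauses to align $\downarrow$. One then verifies that the states of $E_1$ and the states of $E_2$ both solve $E$ — the rootedness condition handling the initial step exactly and $\tau$-convergence licensing induction on internal depth for the remainder — so $RSP$ yields $\langle X_1|E_1\rangle=\langle Z_{(X_1,Y_1)}|E\rangle=\langle Y_1|E_2\rangle$ and hence $p=q$. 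The main difficulty I anticipate is precisely this last matching step under $\tau$: ensuring the equations read off from $R$ are genuinely guarded and linear after the branching alignment, and that the $\tau$-convergence obtained from $CFAR$ is exactly what makes the supporting induction terminate, since without prior cluster removal the construction would fail to be well-founded.
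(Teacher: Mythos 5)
The paper gives no proof of this four-case statement: it appears in the left-parallel-composition development of Section \ref{ahhpb}, where the author explicitly states that ``all proofs of the conclusions are left to the reader,'' and the three-case counterpart (Theorem \ref{CCFAR}) is likewise only quoted from the prior APTC work. Your outline --- eliminate $p$ and $q$ to guarded linear recursive specifications, invoke soundness to transfer the bisimilarity, use $CFAR$ plus $RSP$ to collapse clusters so the internal transition relation becomes well-founded, then build a combined guarded linear specification from the bisimulation relation and close with $RDP$/$RSP$ --- is exactly the standard blueprint that the cited framework relies on, uniformly instantiated for $\approx_{rbs}$, $\approx_{rbp}$, $\approx_{rbhp}$ and (with downward closure) $\approx_{rbhhp}$, so it is correct and essentially the intended argument.
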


\subsection{Applications}\label{app}

$APTC$ provides a formal framework based on truly concurrent behavioral semantics, which can be used to verify the correctness of system behaviors. In this subsection,
we tend to choose alternating bit protocol (ABP) \cite{ABP}.

The ABP protocol is used to ensure successful transmission of data through a corrupted channel. This success is based on the assumption that data can be resent an unlimited number of times, which is illustrated in Figure \ref{ABP}, we alter it into the true concurrency situation.

\begin{enumerate}
  \item Data elements $d_1,d_2,d_3,\cdots$ from a finite set $\Delta$ are communicated between a Sender and a Receiver.
  \item If the Sender reads a datum from channel $A_1$, then this datum is sent to the Receiver in parallel through channel $A_2$.
  \item The Sender processes the data in $\Delta$, formes new data, and sends them to the Receiver through channel $B$.
  \item And the Receiver sends the datum into channel $C_2$.
  \item If channel $B$ is corrupted, the message communicated through $B$ can be turn into an error message $\bot$.
  \item Every time the Receiver receives a message via channel $B$, it sends an acknowledgement to the Sender via channel $D$, which is also corrupted.
  \item Finally, then Sender and the Receiver send out their outputs in parallel through channels $C_1$ and $C_2$.
\end{enumerate}

\begin{figure}
    \centering
    \includegraphics{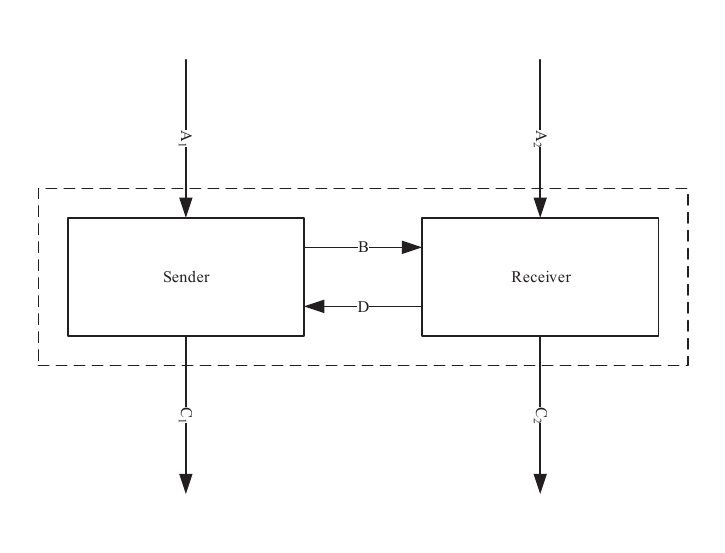}
    \caption{Alternating bit protocol}
    \label{ABP}
\end{figure}

In the truly concurrent ABP, the Sender sends its data to the Receiver; and the Receiver can also send its data to the Sender, for simplicity and without loss of generality, we assume that only the Sender sends its data and the Receiver only receives the data from the Sender. The Sender attaches a bit 0 to data elements $d_{2k-1}$ and a bit 1 to data elements $d_{2k}$, when they are sent into channel $B$. When the Receiver reads a datum, it sends back the attached bit via channel $D$. If the Receiver receives a corrupted message, then it sends back the previous acknowledgement to the Sender.

Then the state transition of the Sender can be described by $APTC$ as follows.

\begin{eqnarray}
&&S_b=\sum_{d\in\Delta}r_{A_1}(d)\cdot T_{db}\nonumber\\
&&T_{db}=(\sum_{d'\in\Delta}(s_B(d',b)\cdot s_{C_1}(d'))+s_B(\bot))\cdot U_{db}\nonumber\\
&&U_{db}=r_D(b)\cdot S_{1-b}+(r_D(1-b)+r_D(\bot))\cdot T_{db}\nonumber
\end{eqnarray}

where $s_B$ denotes sending data through channel $B$, $r_D$ denotes receiving data through channel $D$, similarly, $r_{A_1}$ means receiving data via channel $A_1$, $s_{C_1}$ denotes sending data via channel $C_1$, and $b\in\{0,1\}$.

And the state transition of the Receiver can be described by $APTC$ as follows.

\begin{eqnarray}
&&R_b=\sum_{d\in\Delta}r_{A_2}(d)\cdot R_b'\nonumber\\
&&R_b'=\sum_{d'\in\Delta}\{r_B(d',b)\cdot s_{C_2}(d')\cdot Q_b+r_B(d',1-b)\cdot Q_{1-b}\}+r_B(\bot)\cdot Q_{1-b}\nonumber\\
&&Q_b=(s_D(b)+s_D(\bot))\cdot R_{1-b}\nonumber
\end{eqnarray}

where $r_{A_2}$ denotes receiving data via channel $A_2$, $r_B$ denotes receiving data via channel $B$, $s_{C_2}$ denotes sending data via channel $C_2$, $s_D$ denotes sending data via channel $D$, and $b\in\{0,1\}$.

The send action and receive action of the same data through the same channel can communicate each other, otherwise, a deadlock $\delta$ will be caused. We define the following communication functions.

\begin{eqnarray}
&&\gamma(s_B(d',b),r_B(d',b))\triangleq c_B(d',b)\nonumber\\
&&\gamma(s_B(\bot),r_B(\bot))\triangleq c_B(\bot)\nonumber\\
&&\gamma(s_D(b),r_D(b))\triangleq c_D(b)\nonumber\\
&&\gamma(s_D(\bot),r_D(\bot))\triangleq c_D(\bot)\nonumber
\end{eqnarray}

Let $R_0$ and $S_0$ be in parallel, then the system $R_0S_0$ can be represented by the following process term.

$$\tau_I(\partial_H(\Theta(R_0\between S_0)))=\tau_I(\partial_H(R_0\between S_0))$$

where $H=\{s_B(d',b),r_B(d',b),s_D(b),r_D(b)|d'\in\Delta,b\in\{0,1\}\}\\
\{s_B(\bot),r_B(\bot),s_D(\bot),r_D(\bot)\}$

$I=\{c_B(d',b),c_D(b)|d'\in\Delta,b\in\{0,1\}\}\cup\{c_B(\bot),c_D(\bot)\}$.

Then we get the following conclusion.

\begin{theorem}[Correctness of the ABP protocol]
The ABP protocol $\tau_I(\partial_H(R_0\between S_0))$ can exhibit desired external behaviors.
\end{theorem}

\begin{proof}

By use of the algebraic laws of $APTC$, we have the following expansions.

\begin{eqnarray}
R_0\between S_0&\overset{\text{P1}}{=}&R_0\parallel S_0+R_0\mid S_0\nonumber\\
&\overset{\text{RDP}}{=}&(\sum_{d\in\Delta}r_{A_2}(d)\cdot R_0')\parallel(\sum_{d\in\Delta}r_{A_1}(d)T_{d0})\nonumber\\
&&+(\sum_{d\in\Delta}r_{A_2}(d)\cdot R_0')\mid(\sum_{d\in\Delta}r_{A_1}(d)T_{d0})\nonumber\\
&\overset{\text{P6,C14}}{=}&\sum_{d\in\Delta}(r_{A_2}(d)\parallel r_{A_1}(d))R_0'\between T_{d0} + \delta\cdot R_0'\between T_{d0}\nonumber\\
&\overset{\text{A6,A7}}{=}&\sum_{d\in\Delta}(r_{A_2}(d)\parallel r_{A_1}(d))R_0'\between T_{d0}\nonumber
\end{eqnarray}

\begin{eqnarray}
\partial_H(R_0\between S_0)&=&\partial_H(\sum_{d\in\Delta}(r_{A_2}(d)\parallel r_{A_1}(d))R_0'\between T_{d0})\nonumber\\
&&=\sum_{d\in\Delta}(r_{A_2}(d)\parallel r_{A_1}(d))\partial_H(R_0'\between T_{d0})\nonumber
\end{eqnarray}

Similarly, we can get the following equations.

\begin{eqnarray}
\partial_H(R_0\between S_0)&=&\sum_{d\in\Delta}(r_{A_2}(d)\parallel r_{A_1}(d))\cdot\partial_H(T_{d0}\between R_0')\nonumber\\
\partial_H(T_{d0}\between R_0')&=&c_B(d',0)\cdot(s_{C_1}(d')\parallel s_{C_2}(d'))\cdot\partial_H(U_{d0}\between Q_0)+c_B(\bot)\cdot\partial_H(U_{d0}\between Q_1)\nonumber\\
\partial_H(U_{d0}\between Q_1)&=&(c_D(1)+c_D(\bot))\cdot\partial_H(T_{d0}\between R_0')\nonumber\\
\partial_H(Q_0\between U_{d0})&=&c_D(0)\cdot\partial_H(R_1\between S_1)+c_D(\bot)\cdot\partial_H(R_1'\between T_{d0})\nonumber\\
\partial_H(R_1'\between T_{d0})&=&(c_B(d',0)+c_B(\bot))\cdot\partial_H(Q_0\between U_{d0})\nonumber\\
\partial_H(R_1\between S_1)&=&\sum_{d\in\Delta}(r_{A_2}(d)\parallel r_{A_1}(d))\cdot\partial_H(T_{d1}\between R_1')\nonumber\\
\partial_H(T_{d1}\between R_1')&=&c_B(d',1)\cdot(s_{C_1}(d')\parallel s_{C_2}(d'))\cdot\partial_H(U_{d1}\between Q_1)+c_B(\bot)\cdot\partial_H(U_{d1}\between Q_0')\nonumber\\
\partial_H(U_{d1}\between Q_0')&=&(c_D(0)+c_D(\bot))\cdot\partial_H(T_{d1}\between R_1')\nonumber\\
\partial_H(Q_1\between U_{d1})&=&c_D(1)\cdot\partial_H(R_0\between S_0)+c_D(\bot)\cdot\partial_H(R_0'\between T_{d1})\nonumber\\
\partial_H(R_0'\between T_{d1})&=&(c_B(d',1)+c_B(\bot))\cdot\partial_H(Q_1\between U_{d1})\nonumber
\end{eqnarray}

Let $\partial_H(R_0\between S_0)=\langle X_1|E\rangle$, where E is the following guarded linear recursion specification:

\begin{eqnarray}
&&\{X_1=\sum_{d\in \Delta}(r_{A_2}(d)\parallel r_{A_1}(d))\cdot X_{2d},Y_1=\sum_{d\in\Delta}(r_{A_2}(d)\parallel r_{A_1}(d))\cdot Y_{2d},\nonumber\\
&&X_{2d}=c_B(d',0)\cdot X_{4d}+c_B(\bot)\cdot X_{3d}, Y_{2d}=c_B(d',1)\cdot Y_{4d}+c_B(\bot)\cdot Y_{3d},\nonumber\\
&&X_{3d}=(c_D(1)+c_D(\bot))\cdot X_{2d}, Y_{3d}=(c_D(0)+c_D(\bot))\cdot Y_{2d},\nonumber\\
&&X_{4d}=(s_{C_1}(d')\parallel s_{C_2}(d'))\cdot X_{5d}, Y_{4d}=(s_{C_1}(d')\parallel s_{C_2}(d'))\cdot Y_{5d},\nonumber\\
&&X_{5d}=c_D(0)\cdot Y_1+c_D(\bot)\cdot X_{6d}, Y_{5d}=c_D(1)\cdot X_1+c_D(\bot)\cdot Y_{6d},\nonumber\\
&&X_{6d}=(c_B(d,0)+c_B(\bot))\cdot X_{5d}, Y_{6d}=(c_B(d,1)+c_B(\bot))\cdot Y_{5d}\nonumber\\
&&|d,d'\in\Delta\}\nonumber
\end{eqnarray}

Then we apply abstraction operator $\tau_I$ into $\langle X_1|E\rangle$.

\begin{eqnarray}
\tau_I(\langle X_1|E\rangle)
&=&\sum_{d\in\Delta}(r_{A_1}(d)\parallel r_{A_2}(d))\cdot\tau_I(\langle X_{2d}|E\rangle)\nonumber\\
&=&\sum_{d\in\Delta}(r_{A_1}(d)\parallel r_{A_2}(d))\cdot\tau_I(\langle X_{4d}|E\rangle)\nonumber\\
&=&\sum_{d,d'\in\Delta}(r_{A_1}(d)\parallel r_{A_2}(d))\cdot (s_{C_1}(d')\parallel s_{C_2}(d'))\cdot\tau_I(\langle X_{5d}|E\rangle)\nonumber\\
&=&\sum_{d,d'\in\Delta}(r_{A_1}(d)\parallel r_{A_2}(d))\cdot (s_{C_1}(d')\parallel s_{C_2}(d'))\cdot\tau_I(\langle Y_1|E\rangle)\nonumber
\end{eqnarray}

Similarly, we can get $\tau_I(\langle Y_1|E\rangle)=\sum_{d,d'\in\Delta}(r_{A_1}(d)\parallel r_{A_2}(d))\cdot (s_{C_1}(d')\parallel s_{C_2}(d'))\cdot\tau_I(\langle X_1|E\rangle)
$.

We get $\tau_I(\partial_H(R_0\between S_0))=\sum_{d,d'\in \Delta}(r_{A_1}(d)\parallel r_{A_2}(d))\cdot (s_{C_1}(d')\parallel s_{C_2}(d'))\cdot \tau_I(\partial_H(R_0\between S_0))$. So, the ABP protocol $\tau_I(\partial_H(R_0\between S_0))$ can exhibit desired external behaviors.
\end{proof}

With the help of shadow constant, now we can verify the traditional alternating bit protocol (ABP) \cite{ABP}.

The ABP protocol is used to ensure successful transmission of data through a corrupted channel. This success is based on the assumption that data can be resent an unlimited number of times, which is illustrated in Figure \ref{ABP2}, we alter it into the true concurrency situation.

\begin{enumerate}
  \item Data elements $d_1,d_2,d_3,\cdots$ from a finite set $\Delta$ are communicated between a Sender and a Receiver.
  \item If the Sender reads a datum from channel $A$.
  \item The Sender processes the data in $\Delta$, formes new data, and sends them to the Receiver through channel $B$.
  \item And the Receiver sends the datum into channel $C$.
  \item If channel $B$ is corrupted, the message communicated through $B$ can be turn into an error message $\bot$.
  \item Every time the Receiver receives a message via channel $B$, it sends an acknowledgement to the Sender via channel $D$, which is also corrupted.
\end{enumerate}

\begin{figure}
    \centering
    \includegraphics{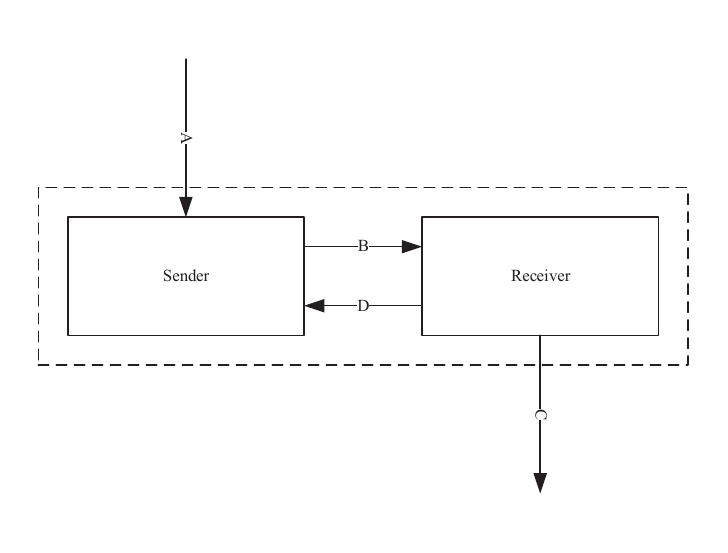}
    \caption{Alternating bit protocol}
    \label{ABP2}
\end{figure}

The Sender attaches a bit 0 to data elements $d_{2k-1}$ and a bit 1 to data elements $d_{2k}$, when they are sent into channel $B$. When the Receiver reads a datum, it sends back the attached bit via channel $D$. If the Receiver receives a corrupted message, then it sends back the previous acknowledgement to the Sender.

Then the state transition of the Sender can be described by $APTC$ as follows.

\begin{eqnarray}
&&S_b=\sum_{d\in\Delta}r_{A}(d)\cdot T_{db}\nonumber\\
&&T_{db}=(\sum_{d'\in\Delta}(s_B(d',b)\cdot \circledS^{s_{C}(d')})+s_B(\bot))\cdot U_{db}\nonumber\\
&&U_{db}=r_D(b)\cdot S_{1-b}+(r_D(1-b)+r_D(\bot))\cdot T_{db}\nonumber
\end{eqnarray}

where $s_B$ denotes sending data through channel $B$, $r_D$ denotes receiving data through channel $D$, similarly, $r_{A}$ means receiving data via channel $A$, $\circledS^{s_{C}(d')}$ denotes the shadow of $s_{C}(d')$.

And the state transition of the Receiver can be described by $APTC$ as follows.

\begin{eqnarray}
&&R_b=\sum_{d\in\Delta}\circledS^{r_{A}(d)}\cdot R_b'\nonumber\\
&&R_b'=\sum_{d'\in\Delta}\{r_B(d',b)\cdot s_{C}(d')\cdot Q_b+r_B(d',1-b)\cdot Q_{1-b}\}+r_B(\bot)\cdot Q_{1-b}\nonumber\\
&&Q_b=(s_D(b)+s_D(\bot))\cdot R_{1-b}\nonumber
\end{eqnarray}

where $\circledS^{r_{A}(d)}$ denotes the shadow of $r_{A}(d)$, $r_B$ denotes receiving data via channel $B$, $s_{C}$ denotes sending data via channel $C$, $s_D$ denotes sending data via channel $D$, and $b\in\{0,1\}$.

The send action and receive action of the same data through the same channel can communicate each other, otherwise, a deadlock $\delta$ will be caused. We define the following communication functions.

\begin{eqnarray}
&&\gamma(s_B(d',b),r_B(d',b))\triangleq c_B(d',b)\nonumber\\
&&\gamma(s_B(\bot),r_B(\bot))\triangleq c_B(\bot)\nonumber\\
&&\gamma(s_D(b),r_D(b))\triangleq c_D(b)\nonumber\\
&&\gamma(s_D(\bot),r_D(\bot))\triangleq c_D(\bot)\nonumber
\end{eqnarray}

Let $R_0$ and $S_0$ be in parallel, then the system $R_0S_0$ can be represented by the following process term.

$$\tau_I(\partial_H(\Theta(R_0\between S_0)))=\tau_I(\partial_H(R_0\between S_0))$$

where $H=\{s_B(d',b),r_B(d',b),s_D(b),r_D(b)|d'\in\Delta,b\in\{0,1\}\}\\
\{s_B(\bot),r_B(\bot),s_D(\bot),r_D(\bot)\}$

$I=\{c_B(d',b),c_D(b)|d'\in\Delta,b\in\{0,1\}\}\cup\{c_B(\bot),c_D(\bot)\}$.

Then we get the following conclusion.

\begin{theorem}[Correctness of the ABP protocol]
The ABP protocol $\tau_I(\partial_H(R_0\between S_0))$ can exhibit desired external behaviors.
\end{theorem}

\begin{proof}

Similarly, we can get $\tau_I(\langle X_1|E\rangle)=\sum_{d,d'\in\Delta}r_{A}(d)\cdot s_{C}(d')\cdot\tau_I(\langle Y_1|E\rangle)$ and $\tau_I(\langle Y_1|E\rangle)=\sum_{d,d'\in\Delta}r_{A}(d)\cdot s_{C}(d')\cdot\tau_I(\langle X_1|E\rangle)$.

So, the ABP protocol $\tau_I(\partial_H(R_0\between S_0))$ can exhibit desired external behaviors.
\end{proof}

\newpage\section{Data Manipulation in APTC}\label{dmaptc}

This chapter is organized as follows. We introduce the operational semantics of guards in section \ref{os2}, $BATC$ with Guards in section \ref{batcg}, $APTC$ with Guards \ref{aptcg}, recursion in section \ref{recg}, abstraction in section \ref{absg}.

\subsection{Operational Semantics}{\label{os2}}

In this section, we extend truly concurrent bisimilarities to the ones containing data states.

\begin{definition}[Prime event structure with silent event and empty event]\label{PESG}
Let $\Lambda$ be a fixed set of labels, ranged over $a,b,c,\cdots$ and $\tau,\epsilon$. A ($\Lambda$-labelled) prime event structure with silent event $\tau$ and empty event $\epsilon$ is a tuple $\mathcal{E}=\langle \mathbb{E}, \leq, \sharp, \lambda\rangle$, where $\mathbb{E}$ is a denumerable set of events, including the silent event $\tau$ and empty event $\epsilon$. Let $\hat{\mathbb{E}}=\mathbb{E}\backslash\{\tau,\epsilon\}$, exactly excluding $\tau$ and $\epsilon$, it is obvious that $\hat{\tau^*}=\epsilon$. Let $\lambda:\mathbb{E}\rightarrow\Lambda$ be a labelling function and let $\lambda(\tau)=\tau$ and $\lambda(\epsilon)=\epsilon$. And $\leq$, $\sharp$ are binary relations on $\mathbb{E}$, called causality and conflict respectively, such that:

\begin{enumerate}
  \item $\leq$ is a partial order and $\lceil e \rceil = \{e'\in \mathbb{E}|e'\leq e\}$ is finite for all $e\in \mathbb{E}$. It is easy to see that $e\leq\tau^*\leq e'=e\leq\tau\leq\cdots\leq\tau\leq e'$, then $e\leq e'$.
  \item $\sharp$ is irreflexive, symmetric and hereditary with respect to $\leq$, that is, for all $e,e',e''\in \mathbb{E}$, if $e\sharp e'\leq e''$, then $e\sharp e''$.
\end{enumerate}

Then, the concepts of consistency and concurrency can be drawn from the above definition:

\begin{enumerate}
  \item $e,e'\in \mathbb{E}$ are consistent, denoted as $e\frown e'$, if $\neg(e\sharp e')$. A subset $X\subseteq \mathbb{E}$ is called consistent, if $e\frown e'$ for all $e,e'\in X$.
  \item $e,e'\in \mathbb{E}$ are concurrent, denoted as $e\parallel e'$, if $\neg(e\leq e')$, $\neg(e'\leq e)$, and $\neg(e\sharp e')$.
\end{enumerate}
\end{definition}

\begin{definition}[Configuration]
Let $\mathcal{E}$ be a PES. A (finite) configuration in $\mathcal{E}$ is a (finite) consistent subset of events $C\subseteq \mathcal{E}$, closed with respect to causality (i.e. $\lceil C\rceil=C$), and a data state $s\in S$ with $S$ the set of all data states, denoted $\langle C, s\rangle$. The set of finite configurations of $\mathcal{E}$ is denoted by $\langle\mathcal{C}(\mathcal{E}), S\rangle$. We let $\hat{C}=C\backslash\{\tau\}\cup\{\epsilon\}$.
\end{definition}

A consistent subset of $X\subseteq \mathbb{E}$ of events can be seen as a pomset. Given $X, Y\subseteq \mathbb{E}$, $\hat{X}\sim \hat{Y}$ if $\hat{X}$ and $\hat{Y}$ are isomorphic as pomsets. In the following of the paper, we say $C_1\sim C_2$, we mean $\hat{C_1}\sim\hat{C_2}$.

\begin{definition}[Pomset transitions and step]
Let $\mathcal{E}$ be a PES and let $C\in\mathcal{C}(\mathcal{E})$, and $\emptyset\neq X\subseteq \mathbb{E}$, if $C\cap X=\emptyset$ and $C'=C\cup X\in\mathcal{C}(\mathcal{E})$, then $\langle C,s\rangle\xrightarrow{X} \langle C',s'\rangle$ is called a pomset transition from $\langle C,s\rangle$ to $\langle C',s'\rangle$. When the events in $X$ are pairwise concurrent, we say that $\langle C,s\rangle\xrightarrow{X}\langle C',s'\rangle$ is a step. It is obvious that $\rightarrow^*\xrightarrow{X}\rightarrow^*=\xrightarrow{X}$ and $\rightarrow^*\xrightarrow{e}\rightarrow^*=\xrightarrow{e}$ for any $e\in\mathbb{E}$ and $X\subseteq\mathbb{E}$.
\end{definition}

\begin{definition}[Weak pomset transitions and weak step]
Let $\mathcal{E}$ be a PES and let $C\in\mathcal{C}(\mathcal{E})$, and $\emptyset\neq X\subseteq \hat{\mathbb{E}}$, if $C\cap X=\emptyset$ and $\hat{C'}=\hat{C}\cup X\in\mathcal{C}(\mathcal{E})$, then $\langle C,s\rangle\xRightarrow{X} \langle C',s'\rangle$ is called a weak pomset transition from $\langle C,s\rangle$ to $\langle C',s'\rangle$, where we define $\xRightarrow{e}\triangleq\xrightarrow{\tau^*}\xrightarrow{e}\xrightarrow{\tau^*}$. And $\xRightarrow{X}\triangleq\xrightarrow{\tau^*}\xrightarrow{e}\xrightarrow{\tau^*}$, for every $e\in X$. When the events in $X$ are pairwise concurrent, we say that $\langle C,s\rangle\xRightarrow{X}\langle C',s'\rangle$ is a weak step.
\end{definition}

We will also suppose that all the PESs in this paper are image finite, that is, for any PES $\mathcal{E}$ and $C\in \mathcal{C}(\mathcal{E})$ and $a\in \Lambda$, $\{e\in \mathbb{E}|\langle C,s\rangle\xrightarrow{e} \langle C',s'\rangle\wedge \lambda(e)=a\}$ and $\{e\in\hat{\mathbb{E}}|\langle C,s\rangle\xRightarrow{e} \langle C',s'\rangle\wedge \lambda(e)=a\}$ is finite.

\begin{definition}[Pomset, step bisimulation]\label{PSBG}
Let $\mathcal{E}_1$, $\mathcal{E}_2$ be PESs. A pomset bisimulation is a relation $R\subseteq\langle\mathcal{C}(\mathcal{E}_1),S\rangle\times\langle\mathcal{C}(\mathcal{E}_2),S\rangle$, such that if $(\langle C_1,s\rangle,\langle C_2,s\rangle)\in R$, and $\langle C_1,s\rangle\xrightarrow{X_1}\langle C_1',s'\rangle$ then $\langle C_2,s\rangle\xrightarrow{X_2}\langle C_2',s'\rangle$, with $X_1\subseteq \mathbb{E}_1$, $X_2\subseteq \mathbb{E}_2$, $X_1\sim X_2$ and $(\langle C_1',s'\rangle,\langle C_2',s'\rangle)\in R$ for all $s,s'\in S$, and vice-versa. We say that $\mathcal{E}_1$, $\mathcal{E}_2$ are pomset bisimilar, written $\mathcal{E}_1\sim_p\mathcal{E}_2$, if there exists a pomset bisimulation $R$, such that $(\langle\emptyset,\emptyset\rangle,\langle\emptyset,\emptyset\rangle)\in R$. By replacing pomset transitions with steps, we can get the definition of step bisimulation. When PESs $\mathcal{E}_1$ and $\mathcal{E}_2$ are step bisimilar, we write $\mathcal{E}_1\sim_s\mathcal{E}_2$.
\end{definition}

\begin{definition}[Weak pomset, step bisimulation]\label{WPSBG}
Let $\mathcal{E}_1$, $\mathcal{E}_2$ be PESs. A weak pomset bisimulation is a relation $R\subseteq\langle\mathcal{C}(\mathcal{E}_1),S\rangle\times\langle\mathcal{C}(\mathcal{E}_2),S\rangle$, such that if $(\langle C_1,s\rangle,\langle C_2,s\rangle)\in R$, and $\langle C_1,s\rangle\xRightarrow{X_1}\langle C_1',s'\rangle$ then $\langle C_2,s\rangle\xRightarrow{X_2}\langle C_2',s'\rangle$, with $X_1\subseteq \hat{\mathbb{E}_1}$, $X_2\subseteq \hat{\mathbb{E}_2}$, $X_1\sim X_2$ and $(\langle C_1',s'\rangle,\langle C_2',s'\rangle)\in R$ for all $s,s'\in S$, and vice-versa. We say that $\mathcal{E}_1$, $\mathcal{E}_2$ are weak pomset bisimilar, written $\mathcal{E}_1\approx_p\mathcal{E}_2$, if there exists a weak pomset bisimulation $R$, such that $(\langle\emptyset,\emptyset\rangle,\langle\emptyset,\emptyset\rangle)\in R$. By replacing weak pomset transitions with weak steps, we can get the definition of weak step bisimulation. When PESs $\mathcal{E}_1$ and $\mathcal{E}_2$ are weak step bisimilar, we write $\mathcal{E}_1\approx_s\mathcal{E}_2$.
\end{definition}

\begin{definition}[Posetal product]
Given two PESs $\mathcal{E}_1$, $\mathcal{E}_2$, the posetal product of their configurations, denoted $\langle\mathcal{C}(\mathcal{E}_1),S\rangle\overline{\times}\langle\mathcal{C}(\mathcal{E}_2),S\rangle$, is defined as

$$\{(\langle C_1,s\rangle,f,\langle C_2,s\rangle)|C_1\in\mathcal{C}(\mathcal{E}_1),C_2\in\mathcal{C}(\mathcal{E}_2),f:C_1\rightarrow C_2 \textrm{ isomorphism}\}.$$

A subset $R\subseteq\langle\mathcal{C}(\mathcal{E}_1),S\rangle\overline{\times}\langle\mathcal{C}(\mathcal{E}_2),S\rangle$ is called a posetal relation. We say that $R$ is downward closed when for any $(\langle C_1,s\rangle,f,\langle C_2,s\rangle),(\langle C_1',s'\rangle,f',\langle C_2',s'\rangle)\in \langle\mathcal{C}(\mathcal{E}_1),S\rangle\overline{\times}\langle\mathcal{C}(\mathcal{E}_2),S\rangle$, if $(\langle C_1,s\rangle,f,\langle C_2,s\rangle)\subseteq (\langle C_1',s'\rangle,f',\langle C_2',s'\rangle)$ pointwise and $(\langle C_1',s'\rangle,f',\langle C_2',s'\rangle)\in R$, then $(\langle C_1,s\rangle,f,\langle C_2,s\rangle)\in R$.

For $f:X_1\rightarrow X_2$, we define $f[x_1\mapsto x_2]:X_1\cup\{x_1\}\rightarrow X_2\cup\{x_2\}$, $z\in X_1\cup\{x_1\}$,(1)$f[x_1\mapsto x_2](z)=
x_2$,if $z=x_1$;(2)$f[x_1\mapsto x_2](z)=f(z)$, otherwise. Where $X_1\subseteq \mathbb{E}_1$, $X_2\subseteq \mathbb{E}_2$, $x_1\in \mathbb{E}_1$, $x_2\in \mathbb{E}_2$.
\end{definition}

\begin{definition}[Weakly posetal product]
Given two PESs $\mathcal{E}_1$, $\mathcal{E}_2$, the weakly posetal product of their configurations, denoted $\langle\mathcal{C}(\mathcal{E}_1),S\rangle\overline{\times}\langle\mathcal{C}(\mathcal{E}_2),S\rangle$, is defined as

$$\{(\langle C_1,s\rangle,f,\langle C_2,s\rangle)|C_1\in\mathcal{C}(\mathcal{E}_1),C_2\in\mathcal{C}(\mathcal{E}_2),f:\hat{C_1}\rightarrow \hat{C_2} \textrm{ isomorphism}\}.$$

A subset $R\subseteq\langle\mathcal{C}(\mathcal{E}_1),S\rangle\overline{\times}\langle\mathcal{C}(\mathcal{E}_2),S\rangle$ is called a weakly posetal relation. We say that $R$ is downward closed when for any $(\langle C_1,s\rangle,f,\langle C_2,s\rangle),(\langle C_1',s'\rangle,f,\langle C_2',s'\rangle)\in \langle\mathcal{C}(\mathcal{E}_1),S\rangle\overline{\times}\langle\mathcal{C}(\mathcal{E}_2),S\rangle$, if $(\langle C_1,s\rangle,f,\langle C_2,s\rangle)\subseteq (\langle C_1',s'\rangle,f',\langle C_2',s'\rangle)$ pointwise and $(\langle C_1',s'\rangle,f',\langle C_2',s'\rangle)\in R$, then $(\langle C_1,s\rangle,f,\langle C_2,s\rangle)\in R$.

For $f:X_1\rightarrow X_2$, we define $f[x_1\mapsto x_2]:X_1\cup\{x_1\}\rightarrow X_2\cup\{x_2\}$, $z\in X_1\cup\{x_1\}$,(1)$f[x_1\mapsto x_2](z)=
x_2$,if $z=x_1$;(2)$f[x_1\mapsto x_2](z)=f(z)$, otherwise. Where $X_1\subseteq \hat{\mathbb{E}_1}$, $X_2\subseteq \hat{\mathbb{E}_2}$, $x_1\in \hat{\mathbb{E}}_1$, $x_2\in \hat{\mathbb{E}}_2$. Also, we define $f(\tau^*)=f(\tau^*)$.
\end{definition}

\begin{definition}[(Hereditary) history-preserving bisimulation]\label{HHPBG}
A history-preserving (hp-) bisimulation is a posetal relation $R\subseteq\langle\mathcal{C}(\mathcal{E}_1),S\rangle\overline{\times}\langle\mathcal{C}(\mathcal{E}_2),S\rangle$ such that if $(\langle C_1,s\rangle,f,\langle C_2,s\rangle)\in R$, and $\langle C_1,s\rangle\xrightarrow{e_1} \langle C_1',s'\rangle$, then $\langle C_2,s\rangle\xrightarrow{e_2} \langle C_2',s'\rangle$, with $(\langle C_1',s'\rangle,f[e_1\mapsto e_2],\langle C_2',s'\rangle)\in R$ for all $s,s'\in S$, and vice-versa. $\mathcal{E}_1,\mathcal{E}_2$ are history-preserving (hp-)bisimilar and are written $\mathcal{E}_1\sim_{hp}\mathcal{E}_2$ if there exists a hp-bisimulation $R$ such that $(\langle\emptyset,\emptyset\rangle,\emptyset,\langle\emptyset,\emptyset\rangle)\in R$.

A hereditary history-preserving (hhp-)bisimulation is a downward closed hp-bisimulation. $\mathcal{E}_1,\mathcal{E}_2$ are hereditary history-preserving (hhp-)bisimilar and are written $\mathcal{E}_1\sim_{hhp}\mathcal{E}_2$.
\end{definition}

\begin{definition}[Weak (hereditary) history-preserving bisimulation]\label{WHHPBG}
A weak history-preserving (hp-) bisimulation is a weakly posetal relation $R\subseteq\langle\mathcal{C}(\mathcal{E}_1),S\rangle\overline{\times}\langle\mathcal{C}(\mathcal{E}_2),S\rangle$ such that if $(\langle C_1,s\rangle,f,\langle C_2,s\rangle)\in R$, and $\langle C_1,s\rangle\xRightarrow{e_1} \langle C_1',s'\rangle$, then $\langle C_2,s\rangle\xRightarrow{e_2} \langle C_2',s'\rangle$, with $(\langle C_1',s'\rangle,f[e_1\mapsto e_2],\langle C_2',s'\rangle)\in R$ for all $s,s'\in S$, and vice-versa. $\mathcal{E}_1,\mathcal{E}_2$ are weak history-preserving (hp-)bisimilar and are written $\mathcal{E}_1\approx_{hp}\mathcal{E}_2$ if there exists a weak hp-bisimulation $R$ such that $(\langle\emptyset,\emptyset\rangle,\emptyset,\langle\emptyset,\emptyset\rangle)\in R$.

A weakly hereditary history-preserving (hhp-)bisimulation is a downward closed weak hp-bisimulation. $\mathcal{E}_1,\mathcal{E}_2$ are weakly hereditary history-preserving (hhp-)bisimilar and are written $\mathcal{E}_1\approx_{hhp}\mathcal{E}_2$.
\end{definition}

\subsection{$BATC$ with Guards}{\label{batcg}}

In this subsection, we will discuss the guards for $BATC$, which is denoted as $BATC_G$. Let $\mathbb{E}$ be the set of atomic events (actions), and we assume that there is a data set $\Delta$ and data $D_1,\cdots,D_n\in\Delta$, the data variable $d_1,\cdots,d_n$ range over $\Delta$, and $d_i$ has the same data type as $D_i$ and can have a substitution $D_i/d_i$, for process $x$, $x[D_i/d_i]$ denotes that all occurrences of $d_i$ in $x$ are replaced by $D_i$. And also the atomic action $e$ may manipulate on data and has the form $e(d_1,\cdots,d_n)$ or $e(D_1,\cdots,D_n)$. $G_{at}$ be the set of atomic guards, $\delta$ be the deadlock constant, and $\epsilon$ be the empty event. We extend $G_{at}$ to the set of basic guards $G$ with element $\phi,\psi,\cdots$, which is generated by the following formation rules:

$$\phi::=\delta|\epsilon|\neg\phi|\psi\in G_{at}|\phi+\psi|\phi\cdot\psi$$

In the following, let $e_1, e_2, e_1', e_2'\in \mathbb{E}$, $\phi,\psi\in G$ and let variables $x,y,z$ range over the set of terms for true concurrency, $p,q,s$ range over the set of closed terms. The predicate $test(\phi,s)$ represents that $\phi$ holds in the state $s$, and $test(\epsilon,s)$ holds and $test(\delta,s)$ does not hold. $effect(e,s)\in S$ denotes $s'$ in $s\xrightarrow{e}s'$. The predicate weakest precondition $wp(e,\phi)$ denotes that $\forall s\in S, test(\phi,effect(e,s))$ holds.

The set of axioms of $BATC_G$ consists of the laws given in Table \ref{AxiomsForBATCG}.

\begin{center}
    \begin{table}
        \begin{tabular}{@{}ll@{}}
            \hline No. &Axiom\\
            $A1$ & $x+ y = y+ x$\\
            $A2$ & $(x+ y)+ z = x+ (y+ z)$\\
            $A3$ & $x+ x = x$\\
            $A4$ & $(x+ y)\cdot z = x\cdot z + y\cdot z$\\
            $A5$ & $(x\cdot y)\cdot z = x\cdot(y\cdot z)$\\
            $A6$ & $x+\delta = x$\\
            $A7$ & $\delta\cdot x = \delta$\\
            $A8$ & $\epsilon\cdot x = x$\\
            $A9$ & $x\cdot\epsilon = x$\\
            $G1$ & $\phi\cdot\neg\phi = \delta$\\
            $G2$ & $\phi+\neg\phi = \epsilon$\\
            $G3$ & $\phi\delta = \delta$\\
            $G4$ & $\phi(x+y)=\phi x+\phi y$\\
            $G5$ & $\phi(x\cdot y)= \phi x\cdot y$\\
            $G6$ & $(\phi+\psi)x = \phi x + \psi x$\\
            $G7$ & $(\phi\cdot \psi)\cdot x = \phi\cdot(\psi\cdot x)$\\
            $G8$ & $\phi=\epsilon$ if $\forall s\in S.test(\phi,s)$\\
            $G9$ & $\phi_0\cdot\cdots\cdot\phi_n = \delta$ if $\forall s\in S,\exists i\leq n.test(\neg\phi_i,s)$\\
            $G10$ & $wp(e,\phi)e\phi=wp(e,\phi)e$\\
            $G11$ & $\neg wp(e,\phi)e\neg\phi=\neg wp(e,\phi)e$\\
        \end{tabular}
        \caption{Axioms of $BATC_G$}
        \label{AxiomsForBATCG}
    \end{table}
\end{center}

Note that, by eliminating atomic event from the process terms, the axioms in Table \ref{AxiomsForBATCG} will lead to a Boolean Algebra. And $G9$ is a precondition of $e$ and $\phi$, $G10$ is the weakest precondition of $e$ and $\phi$. A data environment with $effect$ function is sufficiently deterministic, and it is obvious that if the weakest precondition is expressible and $G9$, $G10$ are sound, then the related data environment is sufficiently deterministic.

\begin{definition}[Basic terms of $BATC_G$]\label{BTBATCG}
The set of basic terms of $BATC_G$, $\mathcal{B}(BATC_G)$, is inductively defined as follows:
\begin{enumerate}
  \item $\mathbb{E}\subset\mathcal{B}(BATC_G)$;
  \item $G\subset\mathcal{B}(BATC_G)$;
  \item if $e\in \mathbb{E}, t\in\mathcal{B}(BATC_G)$ then $e\cdot t\in\mathcal{B}(BATC_G)$;
  \item if $\phi\in G, t\in\mathcal{B}(BATC_G)$ then $\phi\cdot t\in\mathcal{B}(BATC_G)$;
  \item if $t,s\in\mathcal{B}(BATC_G)$ then $t+ s\in\mathcal{B}(BATC_G)$.
\end{enumerate}
\end{definition}

\begin{theorem}[Elimination theorem of $BATC_G$]\label{ETBATCG}
Let $p$ be a closed $BATC_G$ term. Then there is a basic $BATC_G$ term $q$ such that $BATC_G\vdash p=q$.
\end{theorem}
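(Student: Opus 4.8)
The plan is to follow the standard term-rewriting route to elimination for which the machinery of Section \ref{PT} was set up. First I would orient the equations of Table \ref{AxiomsForBATCG} into a term rewriting system (TRS), reading each usable axiom from left to right. The genuinely reducing rules are the distributivity rule $A4$, the associativity rule $A5$ (oriented so as to right-associate every $\cdot$, which also subsumes the guard-associativity axioms $G5$ and $G7$), the unit and absorption rules $A6$, $A7$, $A8$, $A9$, and the guard rules $G1$, $G3$, $G4$, $G6$ together with the conditional rules $G8$, $G9$, $G10$, $G11$. The three remaining identities $A1$, $A2$, $A3$ (commutativity, associativity and idempotence of $+$) cannot be oriented into a terminating system, so I would treat $+$ modulo associativity and commutativity and use $A1$--$A3$ only to rearrange and merge summands, never as reductions.

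Next, I would establish strong normalization by invoking Theorem \ref{SN}. For this I must exhibit a well-founded ordering $>$ on the signature and verify $s >_{lpo} t$ for each rewrite rule $s\rightarrow t$. I would take the precedence $\cdot > +$, with the guard constructors placed below $\cdot$, so that in the lexicographic path order the distributivity rule $A4$ strictly decreases (an outer $\cdot$ over a $+$ is replaced by a $+$ of two strictly smaller $\cdot$-terms), and the re-association rule $A5$ strictly decreases (pushing the nested $\cdot$ to the right is a standard lpo decrease under this precedence). The unit, absorption and guard rules each delete a subterm or a function symbol and hence also decrease. By Theorem \ref{SN} the TRS is strongly normalizing, so every closed $BATC_G$ term reduces to a normal form.

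Then I would characterize the normal forms and show each lies in $\mathcal{B}(BATC_G)$, by induction on the structure of a closed normal form $p$. If $p$ is a constant it is an atomic action, a guard, $\delta$ or $\epsilon$, all of which are basic by clauses (1),(2) of Definition \ref{BTBATCG} (noting $G\subset\mathcal{B}(BATC_G)$ makes every guard basic outright). If $p = p_1 + p_2$, the induction hypothesis makes both summands basic and clause (5) closes the case. If $p = p_1\cdot p_2$, I analyze $p_1$: it cannot be a sum (else $A4$ applies), cannot be $\delta$ (else $A7$), cannot be $\epsilon$ (else $A8$), and cannot itself have outermost $\cdot$ (else $A5$); moreover $p_2$ cannot be $\epsilon$ (else $A9$). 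Hence $p_1$ is an atomic action or a guard and $p_2$ is again a normal form, so $p = e\cdot p_2$ or $p = \phi\cdot p_2$ is basic by clause (3) or (4). Since each rewrite rule is an instance of an axiom, the reduction $p\rightarrow^+ q$ to the basic normal form $q$ is sound, giving $BATC_G\vdash p = q$.

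The part I expect to be the most delicate is the interplay of the non-orientable $+$-axioms $A1$--$A3$ with the conditional guard rules $G8$--$G11$. Because commutativity and idempotence of $+$ are not terminating when oriented, the implication ``normal form $\Rightarrow$ basic term'' holds cleanly only modulo $A1$--$A3$, so I would either adopt an AC-rewriting framework or argue at the level of sum-representatives; and the conditional rules $G9$, $G10$, $G11$ require verifying that their side conditions (expressibility of the weakest precondition and determinacy of the $effect$ function, as discussed after Table \ref{AxiomsForBATCG}) are stable, so that orienting them still yields a well-defined, strictly decreasing rewrite step compatible with the lpo.
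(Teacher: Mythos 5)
Your proposal is correct and follows exactly the route the paper intends: the machinery of Section \ref{PT} (the elimination property, Theorem \ref{SN} on strong normalization, and the ordering $>_{lpo}$) is set up precisely for this argument, and the paper itself states Theorem \ref{ETBATCG} without writing out the proof. Your TRS orientation, the lpo-based termination argument modulo $A1$--$A3$, and the case analysis showing closed normal forms are basic terms of Definition \ref{BTBATCG} are the standard and intended steps.
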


We will define a term-deduction system which gives the operational semantics of $BATC_G$. We give the operational transition rules for $\epsilon$, atomic guard $\phi\in G_{at}$, atomic event $e\in\mathbb{E}$, operators $\cdot$ and $+$ as Table \ref{SETRForBATCG} shows. And the predicate $\xrightarrow{e}\surd$ represents successful termination after execution of the event $e$.

\begin{center}
    \begin{table}
        $$\frac{}{\langle\epsilon,s\rangle\rightarrow\langle\surd,s\rangle}$$
        $$\frac{}{\langle e,s\rangle\xrightarrow{e}\langle\surd,s'\rangle}\textrm{ if }s'\in effect(e,s)$$
        $$\frac{}{\langle\phi,s\rangle\rightarrow\langle\surd,s\rangle}\textrm{ if }test(\phi,s)$$
        $$\frac{\langle x,s\rangle\xrightarrow{e}\langle\surd,s'\rangle}{\langle x+ y,s\rangle\xrightarrow{e}\langle\surd,s'\rangle} \quad\frac{\langle x,s\rangle\xrightarrow{e}\langle x',s'\rangle}{\langle x+ y,s\rangle\xrightarrow{e}\langle x',s'\rangle}$$
        $$\frac{\langle y,s\rangle\xrightarrow{e}\langle\surd,s'\rangle}{\langle x+ y,s\rangle\xrightarrow{e}\langle\surd,s'\rangle} \quad\frac{\langle y,s\rangle\xrightarrow{e}\langle y',s'\rangle}{\langle x+ y,s\rangle\xrightarrow{e}\langle y',s'\rangle}$$
        $$\frac{\langle x,s\rangle\xrightarrow{e}\langle\surd,s'\rangle}{\langle x\cdot y,s\rangle\xrightarrow{e} \langle y,s'\rangle} \quad\frac{\langle x,s\rangle\xrightarrow{e}\langle x',s'\rangle}{\langle x\cdot y,s\rangle\xrightarrow{e}\langle x'\cdot y,s'\rangle}$$
        \caption{Single event transition rules of $BATC_G$}
        \label{SETRForBATCG}
    \end{table}
\end{center}

Note that, we replace the single atomic event $e\in\mathbb{E}$ by $X\subseteq\mathbb{E}$, we can obtain the pomset transition rules of $BATC_G$, and omit them.

\begin{theorem}[Congruence of $BATC_G$ with respect to truly concurrent bisimulation equivalences]\label{CBATCG}
(1) Pomset bisimulation equivalence $\sim_{p}$ is a congruence with respect to $BATC_G$.

(2) Step bisimulation equivalence $\sim_{s}$ is a congruence with respect to $BATC_G$.

(3) Hp-bisimulation equivalence $\sim_{hp}$ is a congruence with respect to $BATC_G$.

(4) Hhp-bisimulation equivalence $\sim_{hhp}$ is a congruence with respect to $BATC_G$.
\end{theorem}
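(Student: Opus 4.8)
The plan is to reduce the statement to a check on each operator and then, for each equivalence, exhibit an explicit witnessing relation; the argument extends the standard congruence proof for plain $BATC$ by threading the data state $s$ through every transition. By the definition of congruence, and since the signature of $BATC_G$ consists of the two binary operators $+$ and $\cdot$ together with constants (atomic events $e\in\mathbb{E}$, atomic guards $\phi\in G_{at}$, $\delta$, $\epsilon$), on which the congruence condition is vacuous, it suffices to show that $+$ and $\cdot$ each preserve $\sim_p$, $\sim_s$, $\sim_{hp}$, and $\sim_{hhp}$. So for each equivalence I would assume $x_1\sim y_1$ and $x_2\sim y_2$ witnessed by bisimulations $R_1$ and $R_2$ on states $\langle t,s\rangle$, and build a single bisimulation establishing $x_1+x_2\sim y_1+y_2$ and one establishing $x_1\cdot x_2\sim y_1\cdot y_2$.

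For the alternative composition I would take $R_1\cup R_2$ together with the pair of initial states. Inspecting the rules for $+$ in Table \ref{SETRForBATCG}, any first move of $\langle x_1+x_2,s\rangle$ is inherited from exactly one summand, say $x_1$; the defining property of $R_1$ supplies a matching move of $\langle y_1,s\rangle$, which the same rules lift to a move of $\langle y_1+y_2,s\rangle$, and thereafter the computation stays inside that summand so $R_1$ continues to match. The symmetric clause and the termination predicate are handled identically. For the sequential composition I would combine $R_1$ and $R_2$ by pairing $\langle x_1'\cdot x_2,s\rangle$ with $\langle y_1'\cdot y_2,s\rangle$ whenever $(\langle x_1',s\rangle,\langle y_1',s\rangle)\in R_1$, and pairing $\langle x_2',s\rangle$ with $\langle y_2',s\rangle$ whenever $(\langle x_2',s\rangle,\langle y_2',s\rangle)\in R_2$; the switch occurs at the boundary transition $\langle x_1,s\rangle\xrightarrow{e}\langle\surd,s'\rangle$, which by the $\cdot$-rule leaves $\langle x_2,s'\rangle$ and is matched by the corresponding terminating move of $y_1$ followed by initialization of $y_2$. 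For $\sim_p$ and $\sim_s$ these constructions work verbatim, the only difference being that for $\sim_s$ one restricts to steps (pairwise-concurrent event sets); since the rules generate the same labels on both sides, the matching $X_1\sim X_2$ is preserved.

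For $\sim_{hp}$ one must additionally carry the order-isomorphism $f$. For $+$ the isomorphism is simply the one provided by whichever summand fired. For $\cdot$ the isomorphism is the union of the $R_1$-isomorphism on the events of the first factor and the $R_2$-isomorphism on the events of the second, and one checks this union is again a poset isomorphism using the fact that in $x_1\cdot x_2$ every event of the first factor causally precedes every event of the second, so there are no cross order-constraints to violate. For $\sim_{hhp}$ one repeats the $\sim_{hp}$ construction and in addition verifies that the resulting posetal relation is downward closed; this follows because $R_1$ and $R_2$ are downward closed and downward closure is preserved by the disjoint-union-along-a-prefix structure of the combined configurations.

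The main obstacle I anticipate is the sequential-composition case for the history-preserving equivalences: one must verify that gluing the two order-isomorphisms at the point where the first factor terminates yields a genuine isomorphism of configurations, respecting causality, conflict, and the threaded data state $s$, and, for $\sim_{hhp}$, that downward closure survives this gluing. The alternative-composition cases and the $\sim_p/\sim_s$ cases are essentially routine inductions on the transition rules; the data state merely threads through unchanged, since the rules in Table \ref{SETRForBATCG} determine $s'$ from $s$ and the fired event, so no bookkeeping beyond the usual true-concurrency argument is required.
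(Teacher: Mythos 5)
The paper states Theorem \ref{CBATCG} without giving any proof at all (as it does for most results in Section \ref{batcg}, which are carried over from the unguarded $BATC$ theory of \cite{APTC} with the data state threaded through), so there is no written argument to compare yours against. Your proposal is the standard congruence argument and is essentially correct: reduce to the two binary operators $+$ and $\cdot$ (the constants $e$, $\phi$, $\delta$, $\epsilon$ being vacuous cases), exhibit $R_1\cup R_2$ plus the initial pair for alternative composition, and glue $R_1$ and $R_2$ at the termination boundary for sequential composition, with the extra bookkeeping of the posetal isomorphism for $\sim_{hp}$ and of downward closure for $\sim_{hhp}$. Your observation that the gluing of the two order-isomorphisms is unproblematic because every event of the first factor causally precedes every event of the second is exactly the point that makes the $\sim_{hp}$/$\sim_{hhp}$ cases go through for $\cdot$ (and which fails for $\parallel$, which is why the paper drops $\sim_{hhp}$ once $APTC_G$ is reached).

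Two small points you should tighten if you write this out in full. First, in the guarded setting the bisimulations are required to relate $\langle C_1,s\rangle$ and $\langle C_2,s\rangle$ for all $s\in S$ and to match target states $s'$ exactly, so when you lift a move of $y_1$ to a move of $y_1+y_2$ or of $y_1'\cdot y_2$ you must check that the rules of Table \ref{SETRForBATCG} produce the same $s'$ on both sides; they do, because $s'$ is determined by $effect(e,s)$ (respectively by $test(\phi,s)$ for the unlabeled guard transitions), but this is where the data state genuinely enters and deserves a sentence. Second, the guard syntax includes the unary operator $\neg$, which is not covered by declaring all guards to be constants; the case is easy (if $\phi$ and $\psi$ terminate from exactly the same states $s$ then so do $\neg\phi$ and $\neg\psi$), but strictly speaking it is an operator of the signature and should be listed alongside $+$ and $\cdot$ in the case analysis.
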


\begin{theorem}[Soundness of $BATC_G$ modulo truly concurrent bisimulation equivalences]\label{SBATCG}
(1) Let $x$ and $y$ be $BATC_G$ terms. If $BATC\vdash x=y$, then $x\sim_{p} y$.

(2) Let $x$ and $y$ be $BATC_G$ terms. If $BATC\vdash x=y$, then $x\sim_{s} y$.

(3) Let $x$ and $y$ be $BATC_G$ terms. If $BATC\vdash x=y$, then $x\sim_{hp} y$.

(4) Let $x$ and $y$ be $BATC_G$ terms. If $BATC\vdash x=y$, then $x\sim_{hhp} y$.
\end{theorem}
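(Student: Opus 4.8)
The plan is to reduce soundness to a per-axiom check, leaning on the congruence result just established. Recall that $BATC_G\vdash x=y$ holds precisely when $x=y$ is derivable from the axioms of Table~\ref{AxiomsForBATCG} using reflexivity, symmetry, transitivity, and substitution of equals in arbitrary contexts. Each of $\sim_p,\sim_s,\sim_{hp},\sim_{hhp}$ is an equivalence relation and, by Theorem~\ref{CBATCG}, a congruence for every operator of $BATC_G$. Hence if I show that the two sides of each axiom instance are related by the chosen equivalence, then every rewriting step in a derivation is mirrored by that equivalence (congruence handles substitution into a context, the equivalence laws handle the rest), and concatenating the steps yields $x\sim y$. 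This reduces all four parts to checking the finitely many axioms $A1$--$A9$ and $G1$--$G11$.

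For a fixed equivalence, each axiom $s=t$ is discharged by exhibiting a witnessing relation $R$ on configurations that contains $(\langle\emptyset,s_0\rangle,\langle\emptyset,s_0\rangle)$ for every initial data state $s_0$ and is closed under matching transitions according to the rules of Table~\ref{SETRForBATCG}. For the purely structural axioms the data state is carried along unchanged: $A1$--$A3$ hold because the rules for $+$ forward the moves of each summand, so $x+y$ and $y+x$, and $x+x$ and $x$, offer exactly the same labelled transitions and termination options; $A4$ and $A5$ follow from the two rules for $\cdot$ by relating a configuration of $(x+y)\cdot z$ (resp.\ $(x\cdot y)\cdot z$) with the corresponding configuration of $x\cdot z+y\cdot z$ (resp.\ $x\cdot(y\cdot z)$); $A6$--$A9$ use that $\delta$ has no transitions while $\epsilon$ has only the termination move $\langle\epsilon,s\rangle\to\langle\surd,s\rangle$. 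In the pomset and step cases $R$ is an ordinary bisimulation and $X_1\sim X_2$ is equality of the pomset labels of the fired sets; in the hp and hhp cases $R$ additionally carries the order isomorphism $f$, updated by $f[e_1\mapsto e_2]$ along each matched event and, for hhp, required to be downward closed.

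The genuine work lies in the guard axioms $G1$--$G11$, where the data state can no longer be ignored, since the rule $\langle\phi,s\rangle\to\langle\surd,s\rangle$ fires only when $test(\phi,s)$ holds. Here I would unfold the definitions of $test$, $effect$ and $wp$ directly: $G1$--$G7$ reduce to the Boolean behaviour of guards (e.g.\ $\phi\cdot\neg\phi$ can pass no test in any state, matching $\delta$, whereas $\phi+\neg\phi$ passes in every state, matching $\epsilon$), while the semantic side-conditions are exactly what make $G8$--$G11$ sound: $G8$ because a universally valid guard behaves like $\epsilon$ in every state, $G9$ because a product of guards that always fails somewhere can never terminate, and $G10$, $G11$ because the condition $wp(e,\phi)$ guarantees $test(\phi,effect(e,s))$ after executing $e$ (resp.\ its negation), so the guard appended after $e$ is redundant. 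This is where the sufficient determinism of the data environment is used.

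I expect $G10$ and $G11$ to be the main obstacle, since they are the only axioms coupling an event $e$ with a subsequent guard through $effect$ and $wp$: one must show that after the transition $\langle e,s\rangle\xrightarrow{e}\langle\surd,s'\rangle$ with $s'\in effect(e,s)$, the appended $\phi$ (resp.\ $\neg\phi$) can always take its termination step, so that the two sides match transition-for-transition and state-for-state. The four parts are otherwise completely parallel; the hhp case differs only in the extra requirement that the constructed posetal relation be downward closed, which holds for all the witnessing relations above because they are built uniformly from sub-configurations.
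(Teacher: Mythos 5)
The paper states Theorem~\ref{SBATCG} without any proof (as it does for most results in these preliminary chapters, which are imported from the cited APTC development), so there is no in-paper argument to compare against line by line. Your strategy is nonetheless the standard and correct one: since each of $\sim_p$, $\sim_s$, $\sim_{hp}$, $\sim_{hhp}$ is an equivalence relation and, by the congruence theorem for $BATC_G$, is preserved by every operator, soundness reduces by induction on derivations to verifying each axiom of Table~\ref{AxiomsForBATCG} against the transition rules of Table~\ref{SETRForBATCG}; your per-axiom analysis --- structural axioms $A1$--$A9$ by matching transitions with the data state carried along unchanged, guard axioms $G1$--$G11$ by unfolding $test$, $effect$ and $wp$ --- is exactly what is needed. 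Two points deserve care if you write this out in full. First, the matching of $\phi+\neg\phi$ with $\epsilon$ in $G2$ silently assumes that $test(\neg\phi,s)$ holds exactly when $test(\phi,s)$ fails, i.e.\ that guards are interpreted classically in every data state; this should be made explicit, since it is the only place where negation's semantics is used. Second, the inclusion of $\sim_{hhp}$ in part (4) is unproblematic only because $BATC_G$ has no parallel composition, so every witnessing posetal relation is built from configurations whose events are totally ordered by causality and downward closure is automatic --- this is precisely the feature that is lost once $\parallel$ is added, which is why the corresponding soundness theorem for $APTC_G$ drops the hhp clause. With those two remarks added, the proposal is a complete and correct proof outline.
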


\begin{theorem}[Completeness of $BATC_G$ modulo truly concurrent bisimulation equivalences]\label{CBATCG}
(1) Let $p$ and $q$ be closed $BATC_G$ terms, if $p\sim_{p} q$ then $p=q$.

(2) Let $p$ and $q$ be closed $BATC_G$ terms, if $p\sim_{s} q$ then $p=q$.

(3) Let $p$ and $q$ be closed $BATC_G$ terms, if $p\sim_{hp} q$ then $p=q$.

(4) Let $p$ and $q$ be closed $BATC_G$ terms, if $p\sim_{hhp} q$ then $p=q$.
\end{theorem}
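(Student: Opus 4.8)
\section*{Proof proposal}

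The plan is to follow the standard two-stage completeness argument for equational process theories: first reduce from arbitrary closed terms to basic terms, then prove that bisimilar basic terms are provably equal. I would begin by applying the Elimination Theorem (Theorem \ref{ETBATCG}): given closed $BATC_G$ terms $p$ and $q$, there exist basic terms $p'$ and $q'$ (in the sense of Definition \ref{BTBATCG}) with $BATC_G\vdash p=p'$ and $BATC_G\vdash q=q'$. By Soundness (Theorem \ref{SBATCG}) each provable equation is valid for the equivalence at hand, so $p\sim p'$ and $q\sim q'$; combined with the hypothesis $p\sim q$ and transitivity of $\sim$, this yields $p'\sim q'$. It therefore suffices to establish the key lemma: two closed \emph{basic} terms that are bisimilar (for the relevant equivalence) are provably equal. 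Reassembling then gives $p=p'=q'=q$ in $BATC_G$.

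For the key lemma I would argue by induction on the combined size of the two basic terms. Using $A1$--$A3$, every closed basic term can be written as a finite sum of prime summands, each of the shape $e$, $\phi$, $e\cdot t$, or $\phi\cdot t$ with $t$ again basic. The goal is to show that each prime summand of $p'$ is provably equal to a prime summand of $q'$ and vice versa, from which $p'=p'+q'=q'$ follows by $A1$--$A3$. Matching is driven by operational behaviour: a summand $e\cdot t$ induces $\langle p',s\rangle\xrightarrow{e}\langle t,s'\rangle$, so bisimilarity forces $\langle q',s\rangle\xrightarrow{e}\langle u,s'\rangle$ arising from a summand $e\cdot u$ (or $e$, when $t$ terminates), with residuals $t\sim u$; the induction hypothesis then gives $BATC_G\vdash t=u$, hence $e\cdot t=e\cdot u$. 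Bare atomic summands $e$ form the base case.

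The main obstacle is the treatment of guard summands $\phi$ and $\phi\cdot t$, because these produce \emph{state-dependent, unlabelled} transitions $\langle\phi,s\rangle\rightarrow\langle\surd,s\rangle$ that fire exactly when $test(\phi,s)$ holds, and the bisimulation quantifies over all $s\in S$. Matching such summands is not a matter of matching one action, but of matching the set of states in which the guard passes. Here I would exploit the Boolean-algebra structure granted by $G1$--$G11$: as noted after Table \ref{AxiomsForBATCG}, eliminating atomic events turns the guard axioms into a Boolean algebra, which lets me normalize guards, collapse logically equivalent ones via $G8$ (so $\phi=\epsilon$ when $\forall s.\,test(\phi,s)$) and $G9$, and absorb permanently blocked guards into $\delta$ via $G1$ and $G3$. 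Using that the data environment with its $effect$ function is sufficiently deterministic, two guard summands enabled in exactly the same states for every $s$ must be provably equal, which closes the matching argument for guarded summands.

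Finally, the four equivalences $\sim_p$, $\sim_s$, $\sim_{hp}$, $\sim_{hhp}$ are handled uniformly. Since $BATC_G$ has no parallel composition, every transition of a closed basic term fires a single event (the pomset/step labels $X$ are singletons), so the pomset- and step-matching conditions reduce to the single-event matching above. For $\sim_{hp}$ the accompanying isomorphism $f$ is forced on these conflict-free linear runs, and for $\sim_{hhp}$ the downward-closure requirement imposes no further constraint on such runs, so the identical induction delivers provable equality in all four cases.
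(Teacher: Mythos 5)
The paper states Theorem \ref{CBATCG} without giving any proof, so there is nothing to compare against; your proposal is the standard argument (eliminate to basic terms via Theorem \ref{ETBATCG}, then induct on bisimilar basic terms, using the Boolean-algebra structure of guards together with $G8$, $G9$ to match guard summands with equal truth sets), which is exactly what the paper's scaffolding of elimination, soundness and basic terms is set up to support, and it is sound. The only slip is the remark that pomset labels are singletons --- a sequential term like $e_1\cdot e_2$ does admit a non-singleton pomset transition --- but since pomset, step, hp- and hhp-bisimilarity all collapse to single-event bisimilarity on these conflict-free sequential terms, the uniform treatment of the four cases still goes through.
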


\begin{theorem}[Sufficient determinacy]
All related data environments with respect to $BATC_G$ can be sufficiently deterministic.
\end{theorem}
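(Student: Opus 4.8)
The plan is to unwind the notion of \emph{sufficiently deterministic} supplied implicitly by the remark following Table~\ref{AxiomsForBATCG}: a data environment, consisting of the state space $S$, the test predicate $test(\phi,s)$, and the $effect$ function $s\mapsto effect(e,s)$, is sufficiently deterministic precisely when the weakest precondition $wp(e,\phi)$ is expressible as a guard and the axioms $G9$ and $G10$ are sound with respect to it. The theorem then reduces to two verification tasks, namely expressibility of $wp$ and soundness of $G9$ and $G10$, after which the quoted remark closes the argument.

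First I would establish expressibility. For a fixed atomic event $e$ and guard $\phi$, define $wp(e,\phi)$ to be the guard that holds in a state $s$ exactly when $test(\phi,s')$ holds for every $s'\in effect(e,s)$. Since eliminating atomic events from the process terms collapses the guard fragment to a Boolean algebra, as noted after the axioms, the set of guards is closed under $\neg$, $+$ and $\cdot$ with $\delta$, $\epsilon$ as bottom and top; hence the state predicate obtained by pulling $test(\phi,\cdot)$ back along $effect(e,\cdot)$ is itself definable by a guard, which gives expressibility.

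Next I would check soundness of the two Hoare-style axioms against the transition rules of Table~\ref{SETRForBATCG}. For $G9$, the product $\phi_0\cdots\phi_n$ can fire from a state $s$ only if $test(\phi_i,s)$ holds for all $i\le n$; the hypothesis $\forall s\in S,\exists i\le n.\,test(\neg\phi_i,s)$ says no such $s$ exists, so the guarded composition has no outgoing transition and is bisimilar to $\delta$. For $G10$, I would argue that from any state $s$ with $test(wp(e,\phi),s)$, executing $e$ lands in a state satisfying $\phi$ by the definition of $wp$, so the trailing guard $\phi$ behaves as $\epsilon$ and may be dropped; both sides then have identical transition behaviour, establishing $wp(e,\phi)e\phi=wp(e,\phi)e$.

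The main obstacle is pinning down the determinacy requirement hidden in the set-valued $effect$ function: the transition rule reads ``if $s'\in effect(e,s)$'', so $effect(e,s)$ is a priori a set of successor states, and $G10$ is sound only if $test(\phi,\cdot)$ is \emph{constant} across $effect(e,s)$ whenever $test(wp(e,\phi),s)$ holds. The heart of the proof is to observe that this constancy is exactly what ``sufficiently deterministic'' demands, and that it can always be arranged by restricting attention to environments in which the relevant tests do not distinguish the elements of $effect(e,s)$, a restriction consistent with the Boolean-algebra structure of the guards. Once this is granted, the implication recorded after Table~\ref{AxiomsForBATCG} yields that all related data environments can be taken sufficiently deterministic.
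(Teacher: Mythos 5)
The paper offers no explicit proof of this theorem --- only the remark preceding it, which states that expressibility of the weakest precondition together with soundness of $G9$ and $G10$ yields sufficient determinacy --- and your proposal is a faithful unfolding of exactly that remark (expressibility, then soundness of the two Hoare-style axioms, then the closing implication), so it matches the paper's approach. One caution: your expressibility step does not actually follow from Boolean closure of the guards alone, since closure under $\neg$, $+$ and $\cdot$ says nothing about whether the pullback of $test(\phi,\cdot)$ along $effect(e,\cdot)$ is denoted by some guard; the cleaner justification is that the paper adopts $wp(e,\phi)$ as a primitive guard in the syntax (it appears as a guard prefix in $G10$ and $G11$), so expressibility holds by fiat rather than by derivation.
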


\subsection{$APTC$ with Guards}{\label{aptcg}}

In this subsection, we will extend $APTC$ with guards, which is abbreviated $APTC_G$. The set of basic guards $G$ with element $\phi,\psi,\cdots$, which is extended by the following formation rules:

$$\phi::=\delta|\epsilon|\neg\phi|\psi\in G_{at}|\phi+\psi|\phi\cdot\psi|\phi\parallel\psi$$

The set of axioms of $APTC_G$ including axioms of $BATC_G$ in Table \ref{AxiomsForBATCG} and the axioms are shown in Table \ref{AxiomsForAPTCG}.

\begin{center}
    \begin{table}
        \begin{tabular}{@{}ll@{}}
            \hline No. &Axiom\\
            $P1$ & $x\between y = x\parallel y + x\mid y$\\
            $P2$ & $e_1\parallel (e_2\cdot y) = (e_1\parallel e_2)\cdot y$\\
            $P3$ & $(e_1\cdot x)\parallel e_2 = (e_1\parallel e_2)\cdot x$\\
            $P4$ & $(e_1\cdot x)\parallel (e_2\cdot y) = (e_1\parallel e_2)\cdot (x\between y)$\\
            $P5$ & $(x+ y)\parallel z = (x\parallel z)+ (y\parallel z)$\\
            $P6$ & $x\parallel (y+ z) = (x\parallel y)+ (x\parallel z)$\\
            $P7$ & $\delta\parallel x = \delta$\\
            $P8$ & $x\parallel \delta = \delta$\\
            $P9$ & $\epsilon\parallel x = x$\\
            $P10$ & $x\parallel \epsilon = x$\\
            $C1$ & $e_1\mid e_2 = \gamma(e_1,e_2)$\\
            $C2$ & $e_1\mid (e_2\cdot y) = \gamma(e_1,e_2)\cdot y$\\
            $C3$ & $(e_1\cdot x)\mid e_2 = \gamma(e_1,e_2)\cdot x$\\
            $C4$ & $(e_1\cdot x)\mid (e_2\cdot y) = \gamma(e_1,e_2)\cdot (x\between y)$\\
            $C5$ & $(x+ y)\mid z = (x\mid z) + (y\mid z)$\\
            $C6$ & $x\mid (y+ z) = (x\mid y)+ (x\mid z)$\\
            $C7$ & $\delta\mid x = \delta$\\
            $C8$ & $x\mid\delta = \delta$\\
            $C9$ & $\epsilon\mid x = \delta$\\
            $C10$ & $x\mid\epsilon = \delta$\\
            $CE1$ & $\Theta(e) = e$\\
            $CE2$ & $\Theta(\delta) = \delta$\\
            $CE3$ & $\Theta(\epsilon) = \epsilon$\\
            $CE4$ & $\Theta(x+ y) = \Theta(x)\triangleleft y + \Theta(y)\triangleleft x$\\
            $CE5$ & $\Theta(x\cdot y)=\Theta(x)\cdot\Theta(y)$\\
            $CE6$ & $\Theta(x\parallel y) = ((\Theta(x)\triangleleft y)\parallel y)+ ((\Theta(y)\triangleleft x)\parallel x)$\\
            $CE7$ & $\Theta(x\mid y) = ((\Theta(x)\triangleleft y)\mid y)+ ((\Theta(y)\triangleleft x)\mid x)$\\
            $U1$ & $(\sharp(e_1,e_2))\quad e_1\triangleleft e_2 = \tau$\\
            $U2$ & $(\sharp(e_1,e_2),e_2\leq e_3)\quad e_1\triangleleft e_3 = e_1$\\
            $U3$ & $(\sharp(e_1,e_2),e_2\leq e_3)\quad e3\triangleleft e_1 = \tau$\\
            $U4$ & $e\triangleleft \delta = e$\\
            $U5$ & $\delta \triangleleft e = \delta$\\
            $U6$ & $e\triangleleft \epsilon = e$\\
            $U7$ & $\epsilon \triangleleft e = e$\\
            $U8$ & $(x+ y)\triangleleft z = (x\triangleleft z)+ (y\triangleleft z)$\\
            $U9$ & $(x\cdot y)\triangleleft z = (x\triangleleft z)\cdot (y\triangleleft z)$\\
            $U10$ & $(x\parallel y)\triangleleft z = (x\triangleleft z)\parallel (y\triangleleft z)$\\
            $U11$ & $(x\mid y)\triangleleft z = (x\triangleleft z)\mid (y\triangleleft z)$\\
            $U12$ & $x\triangleleft (y+ z) = (x\triangleleft y)\triangleleft z$\\
            $U13$ & $x\triangleleft (y\cdot z)=(x\triangleleft y)\triangleleft z$\\
            $U14$ & $x\triangleleft (y\parallel z) = (x\triangleleft y)\triangleleft z$\\
            $U15$ & $x\triangleleft (y\mid z) = (x\triangleleft y)\triangleleft z$\\
            %$D1$ & $e\notin H\quad\partial_H(e) = e$\\
%            $D2$ & $e\in H\quad \partial_H(e) = \delta$\\
%            $D3$ & $\partial_H(\delta) = \delta$\\
%            $D4$ & $\partial_H(x+ y) = \partial_H(x)+\partial_H(y)$\\
%            $D5$ & $\partial_H(x\cdot y) = \partial_H(x)\cdot\partial_H(y)$\\
%            $D6$ & $\partial_H(x\parallel y) = \partial_H(x)\parallel\partial_H(y)$\\
%            $G11$ & $\phi(x\parallel y) =\phi x\parallel \phi y$\\
%            $G12$ & $\phi(x\mid y) =\phi x\mid \phi y$\\
%            $G13$ & $\phi\parallel \delta = \delta$\\
%            $G14$ & $\delta\parallel \phi = \delta$\\
%            $G15$ & $\phi\mid \delta = \delta$\\
%            $G16$ & $\delta\mid \phi = \delta$\\
%            $G17$ & $\phi\parallel \epsilon = \phi$\\
%            $G18$ & $\epsilon\parallel \phi = \phi$\\
%            $G19$ & $\phi\mid \epsilon = \delta$\\
%            $G20$ & $\epsilon\mid \phi = \delta$\\
%            $G21$ & $\phi\parallel\neg\phi = \delta$\\
%            $G22$ & $\Theta(\phi) = \phi$\\
%            $G23$ & $\partial_H(\phi) = \phi$\\
%            $G24$ & $\phi_0\parallel\cdots\parallel\phi_n = \delta$ if $\forall s_0,\cdots,s_n\in S,\exists i\leq n.test(\neg\phi_i,s_0\cup\cdots\cup s_n)$\\
        \end{tabular}
        \caption{Axioms of $APTC_G$}
        \label{AxiomsForAPTCG}
    \end{table}
\end{center}

\begin{center}
    \begin{table}
        \begin{tabular}{@{}ll@{}}
            \hline No. &Axiom\\
            $D1$ & $e\notin H\quad\partial_H(e) = e$\\
            $D2$ & $e\in H\quad \partial_H(e) = \delta$\\
            $D3$ & $\partial_H(\delta) = \delta$\\
            $D4$ & $\partial_H(x+ y) = \partial_H(x)+\partial_H(y)$\\
            $D5$ & $\partial_H(x\cdot y) = \partial_H(x)\cdot\partial_H(y)$\\
            $D6$ & $\partial_H(x\parallel y) = \partial_H(x)\parallel\partial_H(y)$\\
            $G12$ & $\phi(x\parallel y) =\phi x\parallel \phi y$\\
            $G13$ & $\phi(x\mid y) =\phi x\mid \phi y$\\
            $G14$ & $\phi\parallel \delta = \delta$\\
            $G15$ & $\delta\parallel \phi = \delta$\\
            $G16$ & $\phi\mid \delta = \delta$\\
            $G17$ & $\delta\mid \phi = \delta$\\
            $G18$ & $\phi\parallel \epsilon = \phi$\\
            $G19$ & $\epsilon\parallel \phi = \phi$\\
            $G20$ & $\phi\mid \epsilon = \delta$\\
            $G21$ & $\epsilon\mid \phi = \delta$\\
            $G22$ & $\phi\parallel\neg\phi = \delta$\\
            $G23$ & $\Theta(\phi) = \phi$\\
            $G24$ & $\partial_H(\phi) = \phi$\\
            $G25$ & $\phi_0\parallel\cdots\parallel\phi_n = \delta$ if $\forall s_0,\cdots,s_n\in S,\exists i\leq n.test(\neg\phi_i,s_0\cup\cdots\cup s_n)$\\
        \end{tabular}
        \caption{Axioms of $APTC_G$(continuing)}
        \label{AxiomsForAPTCG2}
    \end{table}
\end{center}

\begin{definition}[Basic terms of $APTC_G$]\label{BTAPTCG}
The set of basic terms of $APTC_G$, $\mathcal{B}(APTC_G)$, is inductively defined as follows:
\begin{enumerate}
    \item $\mathbb{E}\subset\mathcal{B}(APTC_G)$;
    \item $G\subset\mathcal{B}(APTC_G)$;
    \item if $e\in \mathbb{E}, t\in\mathcal{B}(APTC_G)$ then $e\cdot t\in\mathcal{B}(APTC_G)$;
    \item if $\phi\in G, t\in\mathcal{B}(APTC_G)$ then $\phi\cdot t\in\mathcal{B}(APTC_G)$;
    \item if $t,s\in\mathcal{B}(APTC_G)$ then $t+ s\in\mathcal{B}(APTC_G)$.
    \item if $t,s\in\mathcal{B}(APTC_G)$ then $t\parallel s\in\mathcal{B}(APTC_G)$.
\end{enumerate}
\end{definition}

Based on the definition of basic terms for $APTC_G$ (see Definition \ref{BTAPTCG}) and axioms of $APTC_G$, we can prove the elimination theorem of $APTC_G$.

\begin{theorem}[Elimination theorem of $APTC_G$]\label{ETAPTCG}
Let $p$ be a closed $APTC_G$ term. Then there is a basic $APTC_G$ term $q$ such that $APTC_G\vdash p=q$.
\end{theorem}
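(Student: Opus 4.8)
The plan is to prove elimination by the standard term-rewriting route, orienting the axioms into a rewrite system and invoking the strong normalization machinery of Theorem \ref{SN}, exactly as one does for plain $APTC$. First I would direct every axiom of $APTC_G$ from Tables \ref{AxiomsForAPTCG} and \ref{AxiomsForAPTCG2} left to right into a rewrite rule. The purpose of the orientation is to drive out the ``compound'' operators while leaving the binary parallel $\parallel$ in place (consistent with the earlier remark that $\parallel$ cannot be removed by the axioms): rule $P1$ eliminates $\between$; rules $C1$--$C10$ eliminate $\mid$; rules $CE1$--$CE7$ together with $U1$--$U15$ eliminate $\Theta$ and $\triangleleft$; rules $D1$--$D6$ and $G23$--$G24$ eliminate $\partial_H$; and $A4$, $A5$, $P2$--$P10$ together with the guard rules $G4$--$G7$, $G12$--$G22$, $G25$ push sequential composition, parallel composition and guards toward the shape permitted by Definition \ref{BTAPTCG}.

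Next I would establish strong normalization of this rewrite system. By Theorem \ref{SN} it suffices to exhibit a well-founded ordering $>$ on the signature with $s >_{lpo} t$ for every oriented rule $s\to t$. I would use the (recursive/lexicographic) path ordering induced by a precedence in which each compound operator dominates the operators it expands into, concretely $\partial_H > \Theta > \triangleleft > \between > \mid > \parallel > \cdot > +$, with the communication symbol $\gamma$ and all atomic events and guards placed at the bottom. Under this precedence one checks rule by rule that each oriented axiom is an instance of $s >_{lpo} t$; Theorem \ref{SN} then yields that every closed $APTC_G$ term has a normal form.

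Finally I would show that every normal form is a basic term, by induction on its structure with a case analysis on the outermost operator. If a normal form $n$ had the shape $n_1\between n_2$, $n_1\mid n_2$, $\Theta(n_1)$, $n_1\triangleleft n_2$ or $\partial_H(n_1)$, then, using the induction hypothesis that the immediate subterms are already basic, one of the oriented rules would still be applicable, contradicting normality; so none of these operators can occur in a normal form. Where the outermost operator is $\cdot$, $+$ or $\parallel$, one verifies that the subterms are forced into the shapes of Definition \ref{BTAPTCG} (for instance a normal form $n_1\cdot n_2$ must have $n_1$ an atomic event or guard, since otherwise $A4$, $A5$, $G5$ or $G7$ would fire). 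Since every rewrite rule is an instance of an $APTC_G$ axiom, the resulting basic normal form $q$ satisfies $APTC_G\vdash p=q$.

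I expect the strong normalization step to be the main obstacle. The delicate rules are $CE6$ and $CE7$, where $\Theta(x\parallel y)$ and $\Theta(x\mid y)$ rewrite to right-hand sides that still contain $\parallel$, $\mid$ and several fresh copies of $\triangleleft$, together with the distributivity rules $U10$, $U11$, $U14$, $U15$, which duplicate subterms. No naive size or symbol-count measure decreases under these, so the real content is choosing the precedence so that the path ordering absorbs the duplication; the fact that $\Theta$ and $\triangleleft$ sit strictly above $\parallel$ and $\mid$ is precisely what makes $CE6$, $CE7$ and the $U$-distribution rules decreasing, and the same structure underlies the corresponding elimination result for $APTC$.
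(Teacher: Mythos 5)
The paper does not actually prove Theorem~\ref{ETAPTCG}: it only remarks that the result follows from Definition~\ref{BTAPTCG} and the axioms. Your proposal is therefore more detailed than anything in the text, and it follows exactly the route the paper's preliminaries are set up for (the elimination property, Theorem~\ref{SN}, and the $>_{lpo}$ ordering of Section~\ref{PT}), which is also the standard route for the analogous theorems in the underlying $APTC$ work. The overall structure --- orient the axioms, prove strong normalization by a path ordering with the compound operators above $\parallel$, $\cdot$, $+$, then show normal forms are basic by case analysis on the head operator --- is right, and your identification of $CE6$, $CE7$ and the $U$-distribution rules as the duplication-sensitive cases is exactly where the content lies.

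Two concrete repairs are needed before the argument goes through as written. First, you cannot ``direct every axiom left to right'': $A1$ ($x+y=y+x$), $A2$, and likewise $P2$, $P3$ for $\parallel$ are not terminating as oriented rules ($A1$ loops immediately). The standard fix is to exclude these from the rewrite system and work with rewriting modulo associativity and commutativity of $+$ (and the corresponding laws for $\parallel$), using an AC-compatible path ordering; you should say so explicitly. Second, the precedence alone does not make $U12$--$U15$ decrease: for $x\triangleleft(y\parallel z)\to(x\triangleleft y)\triangleleft z$ both sides have head $\triangleleft$, and a left-to-right lexicographic comparison of arguments requires $x>_{lpo}x\triangleleft y$, which fails. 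You need to give $\triangleleft$ a right-to-left lexicographic status (compare second arguments first, where $y\parallel z>_{lpo}z$ holds by the subterm property), or replace the path ordering by an explicit well-founded weight on these rules. With those two adjustments the proof is correct and is the intended one.
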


We will define a term-deduction system which gives the operational semantics of $APTC_G$. Two atomic events $e_1$ and $e_2$ are in race condition, which are denoted $e_1\% e_2$.

\begin{center}
    \begin{table}
        $$\frac{}{\langle e_1\parallel\cdots \parallel e_n,s\rangle\xrightarrow{\{e_1,\cdots,e_n\}}\langle\surd,s'\rangle}\textrm{ if }s'\in effect(e_1,s)\cup\cdots\cup effect(e_n,s)$$
        $$\frac{}{\langle\phi_1\parallel\cdots\parallel \phi_n,s\rangle\rightarrow\langle\surd,s\rangle}\textrm{ if }test(\phi_1,s),\cdots,test(\phi_n,s)$$

        $$\frac{\langle x,s\rangle\xrightarrow{e_1}\langle\surd,s'\rangle\quad \langle y,s\rangle\xrightarrow{e_2}\langle\surd,s''\rangle}{\langle x\parallel y,s\rangle\xrightarrow{\{e_1,e_2\}}\langle\surd,s'\cup s''\rangle} \quad\frac{\langle x,s\rangle\xrightarrow{e_1}\langle x',s'\rangle\quad \langle y,s\rangle\xrightarrow{e_2}\langle\surd,s''\rangle}{\langle x\parallel y,s\rangle\xrightarrow{\{e_1,e_2\}}\langle x',s'\cup s''\rangle}$$

        $$\frac{\langle x,s\rangle\xrightarrow{e_1}\langle\surd,s'\rangle\quad \langle y,s\rangle\xrightarrow{e_2}\langle y',s''\rangle}{\langle x\parallel y,s\rangle\xrightarrow{\{e_1,e_2\}}\langle y',s'\cup s''\rangle} \quad\frac{\langle x,s\rangle\xrightarrow{e_1}\langle x',s'\rangle\quad \langle y,s\rangle\xrightarrow{e_2}\langle y',s''\rangle}{\langle x\parallel y,s\rangle\xrightarrow{\{e_1,e_2\}}\langle x'\between y',s'\cup s''\rangle}$$

        $$\frac{\langle x,s\rangle\xrightarrow{e_1}\langle\surd,s'\rangle\quad \langle y,s\rangle\xnrightarrow{e_2}\quad(e_1\%e_2)}{\langle x\parallel y,s\rangle\xrightarrow{e_1}\langle y,s'\rangle} \quad\frac{\langle x,s\rangle\xrightarrow{e_1}\langle x',s'\rangle\quad \langle y,s\rangle\xnrightarrow{e_2}\quad(e_1\%e_2)}{\langle x\parallel y,s\rangle\xrightarrow{e_1}\langle x'\between y,s'\rangle}$$

        $$\frac{\langle x,s\rangle\xnrightarrow{e_1}\quad \langle y,s\rangle\xrightarrow{e_2}\langle\surd,s''\rangle\quad(e_1\%e_2)}{\langle x\parallel y,s\rangle\xrightarrow{e_2}\langle x,s''\rangle} \quad\frac{\langle x,s\rangle\xnrightarrow{e_1}\quad \langle y,s\rangle\xrightarrow{e_2}\langle y',s''\rangle\quad(e_1\%e_2)}{\langle x\parallel y,s\rangle\xrightarrow{e_2}\langle x\between y',s''\rangle}$$

        $$\frac{\langle x,s\rangle\xrightarrow{e_1}\langle\surd,s'\rangle\quad \langle y,s\rangle\xrightarrow{e_2}\langle\surd,s''\rangle}{\langle x\mid y,s\rangle\xrightarrow{\gamma(e_1,e_2)}\langle\surd,effect(\gamma(e_1,e_2),s)\rangle} \quad\frac{\langle x,s\rangle\xrightarrow{e_1}\langle x',s'\rangle\quad \langle y,s\rangle\xrightarrow{e_2}\langle\surd,s''\rangle}{\langle x\mid y,s\rangle\xrightarrow{\gamma(e_1,e_2)}\langle x',effect(\gamma(e_1,e_2),s)\rangle}$$

        $$\frac{\langle x,s\rangle\xrightarrow{e_1}\langle\surd,s'\rangle\quad \langle y,s\rangle\xrightarrow{e_2}\langle y',s''\rangle}{\langle x\mid y,s\rangle\xrightarrow{\gamma(e_1,e_2)}\langle y',effect(\gamma(e_1,e_2),s)\rangle} \quad\frac{\langle x,s\rangle\xrightarrow{e_1}\langle x',s'\rangle\quad \langle y,s\rangle\xrightarrow{e_2}\langle y',s''\rangle}{\langle x\mid y,s\rangle\xrightarrow{\gamma(e_1,e_2)}\langle x'\between y',effect(\gamma(e_1,e_2),s)\rangle}$$

        $$\frac{\langle x,s\rangle\xrightarrow{e_1}\langle\surd,s'\rangle\quad (\sharp(e_1,e_2))}{\langle \Theta(x),s\rangle\xrightarrow{e_1}\langle\surd,s'\rangle} \quad\frac{\langle x,s\rangle\xrightarrow{e_2}\langle\surd,s''\rangle\quad (\sharp(e_1,e_2))}{\langle\Theta(x),s\rangle\xrightarrow{e_2}\langle\surd,s''\rangle}$$

        $$\frac{\langle x,s\rangle\xrightarrow{e_1}\langle x',s'\rangle\quad (\sharp(e_1,e_2))}{\langle\Theta(x),s\rangle\xrightarrow{e_1}\langle\Theta(x'),s'\rangle} \quad\frac{\langle x,s\rangle\xrightarrow{e_2}\langle x'',s''\rangle\quad (\sharp(e_1,e_2))}{\langle\Theta(x),s\rangle\xrightarrow{e_2}\langle\Theta(x''),s''\rangle}$$

        $$\frac{\langle x,s\rangle\xrightarrow{e_1}\langle\surd,s'\rangle \quad \langle y,s\rangle\nrightarrow^{e_2}\quad (\sharp(e_1,e_2))}{\langle x\triangleleft y,s\rangle\xrightarrow{\tau}\langle\surd,s'\rangle}
        \quad\frac{\langle x,s\rangle\xrightarrow{e_1}\langle x',s'\rangle \quad \langle y,s\rangle\nrightarrow^{e_2}\quad (\sharp(e_1,e_2))}{\langle x\triangleleft y,s\rangle\xrightarrow{\tau}\langle x',s'\rangle}$$

        $$\frac{\langle x,s\rangle\xrightarrow{e_1}\langle\surd,s\rangle \quad \langle y,s\rangle\nrightarrow^{e_3}\quad (\sharp(e_1,e_2),e_2\leq e_3)}{\langle x\triangleleft y,s\rangle\xrightarrow{e_1}\langle\surd,s'\rangle}
        \quad\frac{\langle x,s\rangle\xrightarrow{e_1}\langle x',s'\rangle \quad \langle y,s\rangle\nrightarrow^{e_3}\quad (\sharp(e_1,e_2),e_2\leq e_3)}{\langle x\triangleleft y,s\rangle\xrightarrow{e_1}\langle x',s'\rangle}$$

        $$\frac{\langle x,s\rangle\xrightarrow{e_3}\langle\surd,s'\rangle \quad \langle y,s\rangle\nrightarrow^{e_2}\quad (\sharp(e_1,e_2),e_1\leq e_3)}{\langle x\triangleleft y,s\rangle\xrightarrow{\tau}\langle\surd,s'\rangle}
        \quad\frac{\langle x,s\rangle\xrightarrow{e_3}\langle x',s'\rangle \quad \langle y,s\rangle\nrightarrow^{e_2}\quad (\sharp(e_1,e_2),e_1\leq e_3)}{\langle x\triangleleft y,s\rangle\xrightarrow{\tau}\langle x',s'\rangle}$$

        $$\frac{\langle x,s\rangle\xrightarrow{e}\langle\surd,s'\rangle}{\langle\partial_H(x),s\rangle\xrightarrow{e}\langle\surd,s'\rangle}\quad (e\notin H)\quad\frac{\langle x,s\rangle\xrightarrow{e}\langle x',s'\rangle}{\langle\partial_H(x),s\rangle\xrightarrow{e}\langle\partial_H(x'),s'\rangle}\quad(e\notin H)$$
        \caption{Transition rules of $APTC_G$}
        \label{TRForAPTCG}
    \end{table}
\end{center}

\begin{theorem}[Generalization of $APTC_G$ with respect to $BATC_G$]
$APTC_G$ is a generalization of $BATC_G$.
\end{theorem}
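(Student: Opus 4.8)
The plan is to read the statement ``$APTC_G$ is a generalization of $BATC_G$'' as the assertion that $BATC_G$ is embedded in $APTC_G$: the language of $APTC_G$ contains that of $BATC_G$, and every equation derivable in $BATC_G$ is derivable in $APTC_G$. Accordingly, I would prove it in two layers, a syntactic layer dealing with signature and axioms and a semantic layer dealing with transition rules, and then combine them.

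First I would establish the signature inclusion $\Sigma(BATC_G)\subseteq\Sigma(APTC_G)$. The constants of $BATC_G$ — the atomic events in $\mathbb{E}$, the basic guards in $G$, the deadlock $\delta$ and the empty event $\epsilon$ — are all constants of $APTC_G$, and the operators of $BATC_G$, namely alternative composition $+$ and sequential composition $\cdot$, are among the operators of $APTC_G$. Hence every $BATC_G$ term is literally an $APTC_G$ term. Next I would record the axiom inclusion: by construction, the axioms of $APTC_G$ include those of Table~\ref{AxiomsForBATCG}, so the laws $A1$--$A9$ and $G1$--$G11$ all belong to $APTC_G$. Therefore any derivation $BATC_G\vdash p=q$ is, verbatim, a derivation $APTC_G\vdash p=q$, and $APTC_G$ proves everything $BATC_G$ proves.

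For the semantic layer I would check the transition rules. The single-event rules for $\epsilon$, for atomic guards $\phi$, for atomic events $e$, and for $+$ and $\cdot$ in Table~\ref{SETRForBATCG} are exactly the rules retained for these operators in $APTC_G$, while the additional rules in Table~\ref{TRForAPTCG} all have sources headed by one of the fresh operators $\parallel$, $\mid$, $\Theta$, $\triangleleft$ or $\partial_H$. Consequently, when the $APTC_G$ transition system is restricted to terms over $\Sigma(BATC_G)$, only the inherited $BATC_G$ rules can fire, so the induced behaviour on pure $BATC_G$ terms coincides with the $BATC_G$ behaviour. This is where I would invoke source-dependency and freshness in the spirit of Theorem~\ref{TCE}: the $BATC_G$ rules are source-dependent, and each new rule of $APTC_G$ either has a fresh source or a premise over the old signature, which guarantees the old fragment is preserved intact.

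The main obstacle is exactly this last point — verifying that the new operators genuinely cannot interfere with the dynamics of $BATC_G$ terms. The inclusions of signature and axioms are immediate, but one must confirm that no $APTC_G$ rule, through its premises, lets a fresh operator contribute a transition to a term that syntactically belongs to $BATC_G$; since every fresh rule is triggered only by a source containing a fresh function symbol, this cannot happen, and the embedding is clean. I would therefore present the argument as three facts — signature inclusion, axiom inclusion, and freshness of the new transition rules — and conclude that $APTC_G$ is a generalization of $BATC_G$.
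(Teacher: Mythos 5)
Your proposal is correct and follows exactly the route the paper's machinery is set up for: the paper states this theorem without an explicit proof, but the intended argument is the conservative-extension one via Theorem~\ref{TCE} — source-dependency of the $BATC_G$ rules plus freshness of the sources of the new rules for $\parallel$, $\mid$, $\Theta$, $\triangleleft$, $\partial_H$ — combined with the inclusion of signature and axioms, which is precisely what you give.
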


\begin{theorem}[Congruence of $APTC_G$ with respect to truly concurrent bisimulation equivalences]\label{CAPTCG}
(1) Pomset bisimulation equivalence $\sim_{p}$ is a congruence with respect to $APTC_G$.

(2) Step bisimulation equivalence $\sim_{s}$ is a congruence with respect to $APTC_G$.

(3) Hp-bisimulation equivalence $\sim_{hp}$ is a congruence with respect to $APTC_G$.

(4) Hhp-bisimulation equivalence $\sim_{hhp}$ is a congruence with respect to $APTC_G$.
\end{theorem}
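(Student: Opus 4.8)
The plan is to prove each of the four clauses operator-by-operator, reusing the congruence of $BATC_G$ already established above, so that only the genuinely new constructs — the whole parallel $\between$, the auxiliary parallel $\parallel$, the communication merge $\mid$, the conflict-elimination $\Theta$, the unless operator $\triangleleft$, the encapsulation $\partial_H$, and the guard interactions $\phi\parallel(-)$, $\phi\mid(-)$ — need fresh attention. Fix an equivalence, say $\sim_s$, and a binary operator $f$. I would assume $x_1\sim_s y_1$ and $x_2\sim_s y_2$, witnessed by step bisimulations $R_1$ and $R_2$ each containing $(\langle\emptyset,\emptyset\rangle,\langle\emptyset,\emptyset\rangle)$, and then exhibit a single relation $R$ on the configurations of $f(x_1,x_2)$ and $f(y_1,y_2)$ assembled from $R_1$ and $R_2$. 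For $\parallel$, for instance, $R$ would collect the pairs $(\langle C_1\cup C_2,s\rangle,\langle D_1\cup D_2,s\rangle)$ whenever $(\langle C_1,s\rangle,\langle D_1,s\rangle)\in R_1$ and $(\langle C_2,s\rangle,\langle D_2,s\rangle)\in R_2$; for $\partial_H$, $\Theta$, $\triangleleft$ the operator introduces no new events, so $R$ is inherited directly from the witness on the single nontrivial argument. Verifying that $R$ is a step bisimulation is a case analysis driven by the transition rules of Table \ref{TRForAPTCG}: each move of $f(x_1,x_2)$ arises from exactly one rule shape, its premises hand me matching moves of $x_1$ and/or $x_2$, the witness relations supply the responding moves on the $y$-side, and the same rule reassembled yields the required matching transition landing again in $R$. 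The data state $s$ is carried identically on both sides because the rules treat $effect$ and $test$ uniformly.

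For the history-preserving clauses the relation must additionally carry the order-isomorphism $f$ between configurations, so I would work in the posetal product $\langle\mathcal{C}(\mathcal{E}_1),S\rangle\overline{\times}\langle\mathcal{C}(\mathcal{E}_2),S\rangle$ and combine the isomorphisms supplied by $R_1$ and $R_2$ into one isomorphism on the combined configuration — a disjoint union of events for $\parallel$ and $\mid$, the inherited map for $\partial_H$, $\Theta$, $\triangleleft$ — checking that the update $f[e_1\mapsto e_2]$ prescribed by the hp-rule is precisely the combination of the component updates. For $\sim_{hhp}$ the extra requirement is downward closedness of $R$; here I would take $R$ to be the downward closure of the relation built for the hp case and verify that downward closure is preserved by the event-combining constructions, using that the witnesses $R_1$ and $R_2$ are themselves downward closed. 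Throughout, the reductions to $BATC_G$-congruence discharge the $+$ and $\cdot$ contexts, and the guard axioms $G12$–$G25$ together with the guard transition rules let the guarded cases be treated exactly like atomic ones relative to $test(\phi,s)$.

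The main obstacle I expect is twofold. First, $\Theta$ and $\triangleleft$ carry side conditions on the conflict relation $\sharp$ and the causality $\leq$, so a move of $\Theta(x)$ or of $x\triangleleft y$ is enabled only relative to these structural relations; to transfer such a move across the bisimulation I must argue that bisimilar configurations impose the same conflict and causality constraints on the events being fired. For $\sim_p$ and $\sim_s$ this follows from pomset-isomorphism of the executed sets, but for $\sim_{hp}$ and $\sim_{hhp}$ it requires the carried isomorphism to respect $\sharp$ and $\leq$, which is exactly where the hhp downward-closure hypothesis does real work. Second, for the parallel operator the hhp case is delicate precisely because $\parallel$ creates concurrency: I must ensure that the combined map is an isomorphism of the entire executed pomset — not merely of the two branches separately — and that it remains downward closed, which is the point at which $\sim_{hhp}$, the most restrictive of the four equivalences, is hardest to maintain. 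Once these structural checks are in place, the remaining verifications are the routine rule-by-rule matchings sketched above.
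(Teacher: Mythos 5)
The paper states this congruence theorem without any proof: it is asserted immediately after the transition rules of $APTC_G$ and the text moves straight on to soundness, the result being implicitly inherited from the congruence theorems of the underlying $APTC$ framework. There is therefore no proof in the paper to compare yours against line by line. Judged on its own terms, your sketch is the standard and essentially sound strategy: discharge $+$ and $\cdot$ by the $BATC_G$ congruence theorem, and for each new operator assemble a witness relation from the witnesses for the arguments (unions of configurations and of isomorphisms for the parallel operators, inherited relations for $\partial_H$, $\Theta$, $\triangleleft$), then verify closure rule by rule against Table \ref{TRForAPTCG}.

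Two points deserve more explicit treatment than you give them. First, the residual of a $\parallel$-transition is a $\between$-term (the rule concludes $x\parallel y\xrightarrow{\{e_1,e_2\}}x'\between y'$), so the three parallel operators $\between$, $\parallel$ and $\mid$ cannot be handled one at a time: the candidate relation must be closed under all three simultaneously. Your configuration-level union construction does achieve this, since at the event-structure level the composite configurations do not remember which syntactic operator produced them, but the point should be stated. Second, several rules carry \emph{negative} premises --- the race-condition rules for $\parallel$ and every rule for $\triangleleft$ contain hypotheses of the form $\langle y,s\rangle\xnrightarrow{e}$ --- and transferring such a transition across the bisimulation requires showing that bisimilar processes refuse the same events in the same data states; this does follow from the bisimulation clauses, but it is a genuine step of the argument rather than a side condition to be waved through. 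With those two additions your outline is a complete plan. Your own reservations about $\sim_{hhp}$ and $\parallel$ are well placed; note that the paper itself quietly drops $\sim_{hhp}$ from the subsequent soundness and completeness statements for $APTC_G$, so clause (4) is the one claim for which no corroborating evidence appears anywhere in the text.
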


\begin{theorem}[Soundness of $APTC_G$ modulo truly concurrent bisimulation equivalences]\label{SAPTCG}
(1) Let $x$ and $y$ be $APTC_G$ terms. If $APTC\vdash x=y$, then $x\sim_{p} y$.

(2) Let $x$ and $y$ be $APTC_G$ terms. If $APTC\vdash x=y$, then $x\sim_{s} y$.

(3) Let $x$ and $y$ be $APTC_G$ terms. If $APTC\vdash x=y$, then $x\sim_{hp} y$.
\end{theorem}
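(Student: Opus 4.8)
The plan is to reduce the soundness statement to a finite, axiom-by-axiom verification by exploiting the congruence result already in hand. Since Theorem \ref{CAPTCG} states that $\sim_p$, $\sim_s$ and $\sim_{hp}$ are each congruences with respect to every operator of $APTC_G$, and since each of these relations is by definition an equivalence relation, equational deduction in $APTC_G$ preserves the equivalence: whenever $APTC_G\vdash x=y$, the derivation is a finite chain of substitution instances of axioms combined by reflexivity, symmetry, transitivity, and closure under the operators. Hence it suffices to show that for every axiom $s=t$ appearing in Tables \ref{AxiomsForBATCG}, \ref{AxiomsForAPTCG} and \ref{AxiomsForAPTCG2}, every closed substitution instance satisfies $s\sim_p t$, $s\sim_s t$ and $s\sim_{hp} t$.

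For each individual axiom I would exhibit an explicit witnessing relation and check the transfer conditions against the transition rules of Table \ref{TRForAPTCG}. For the pomset and step cases the witness is a plain relation $R\subseteq\langle\mathcal{C}(\mathcal{E}_1),S\rangle\times\langle\mathcal{C}(\mathcal{E}_2),S\rangle$ containing the pair of initial configurations together with all pairs reachable by matching moves; for the hp case the witness is a posetal relation carrying the event-matching isomorphism $f$, extended along each matched transition by $f[e_1\mapsto e_2]$. The purely structural axioms ($A1$--$A9$, $P1$--$P10$, $C1$--$C10$, the $\Theta$/$\triangleleft$ axioms $CE1$--$CE7$ and $U1$--$U15$, and $D1$--$D6$) are handled exactly as in the data-free soundness proof of Theorem \ref{SAPTC}, the only difference being that every configuration now carries a data state $s$ that is threaded identically through both sides, so the data component never obstructs the match.

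The genuinely new work concerns the guard axioms $G1$--$G25$ and the way data states evolve under $effect$, $test$ and $wp$. Here I would argue directly from the transition rules for $\phi$ and for parallel/communication composition: a guard $\phi$ fires the state-preserving step $\langle\phi,s\rangle\rightarrow\langle\surd,s\rangle$ exactly when $test(\phi,s)$ holds, so axioms such as $G1$, $G2$, $G8$, and the weakest-precondition laws $G10$, $G11$ reduce to the assumption that the data environment is sufficiently deterministic, namely that $effect$ is functional and $test$ is decided pointwise; under that assumption both sides enable precisely the same labelled moves into precisely the same successor states. For the parallel guard laws $G12$--$G25$ I would check that the combined data state produced by a synchronised move, $s'\cup s''$, coincides on the two sides, which is exactly where compatibility of $effect$ with the union of data states is invoked.

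I expect the main obstacle to be the hp-bisimulation clauses in the presence of data, specifically guaranteeing that the matching function $f$ remains an order-isomorphism of configurations while the data state is simultaneously updated. For the parallel-composition axioms one must verify that interleaved and truly concurrent executions of the same events induce the same causal order under $f$ and land in a common data state; the subtle point is that a move of $x\parallel y$ may be either a single event (via the race-condition rules for $e_1\% e_2$) or a synchronous pair $\{e_1,e_2\}$, and the witnessing isomorphism must be constructed so that these alternatives are reconciled on both sides without disturbing $f$ on previously matched events. Checking that $s'\cup s''$ is well defined and independent of the order in which the two branches are resolved is the delicate data-level bookkeeping underlying this step.
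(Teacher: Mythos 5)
The paper states Theorem \ref{SAPTCG} without any proof at all: it appears in the recapitulated development of $APTC_G$ and is simply asserted (as are the corresponding soundness theorems \ref{SBATC} and \ref{SAPTC} earlier), so there is no proof of record to compare yours against. Your proposal is nevertheless the standard argument this framework relies on everywhere: use the fact that $\sim_p$, $\sim_s$ and $\sim_{hp}$ are congruences and that equational derivations are finite chains of axiom instances closed under reflexivity, symmetry, transitivity and contexts, so that soundness reduces to exhibiting a witnessing (posetal, in the hp case) bisimulation for each closed substitution instance of each axiom; that reduction is correct, and your identification of the genuinely new obligations --- threading the data state $s$ through both sides, and reconciling the race-condition versus synchronous transition rules for $\parallel$ while keeping $f$ an isomorphism --- is exactly where the real work lies.

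One small correction: axioms $G1$, $G2$ and $G8$ do not depend on the data environment being sufficiently deterministic; they follow already from $test$ being decided pointwise in each state (exactly one of $test(\phi,s)$, $test(\neg\phi,s)$ holds). It is the precondition and weakest-precondition laws $G9$--$G11$ whose soundness the paper explicitly ties to sufficient determinacy of $effect$, and $G25$ which additionally needs the compatibility of $test$ with the union $s_0\cup\cdots\cup s_n$ of data states that you invoke for the parallel guard laws. Also note that the hhp case is deliberately absent from the statement (unlike the congruence theorem, which covers $\sim_{hhp}$), consistent with the non-axiomatizability of $\parallel$ modulo hhp-bisimilarity discussed in Section \ref{ahhpb}; your proof should not accidentally claim more than the three stated equivalences.
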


\begin{theorem}[Completeness of $APTC_G$ modulo truly concurrent bisimulation equivalences]\label{CAPTCG}
(1) Let $p$ and $q$ be closed $APTC_G$ terms, if $p\sim_{p} q$ then $p=q$.

(2) Let $p$ and $q$ be closed $APTC_G$ terms, if $p\sim_{s} q$ then $p=q$.

(3) Let $p$ and $q$ be closed $APTC_G$ terms, if $p\sim_{hp} q$ then $p=q$.
\end{theorem}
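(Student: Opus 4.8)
The plan is to follow the standard route for completeness theorems in the $ACP$ tradition: first collapse arbitrary closed terms to basic terms, and then prove the statement directly for basic terms by induction. Concretely, by the Elimination theorem of $APTC_G$ (Theorem \ref{ETAPTCG}), for any closed $APTC_G$ terms $p$ and $q$ there are basic $APTC_G$ terms $p_0$ and $q_0$ with $APTC_G\vdash p=p_0$ and $APTC_G\vdash q=q_0$. By the Soundness theorem (Theorem \ref{SAPTCG}), $p\sim p_0$ and $q\sim q_0$ for each of $\sim_p$, $\sim_s$, $\sim_{hp}$, so the hypothesis $p\sim q$ yields $p_0\sim q_0$; and once we show $APTC_G\vdash p_0=q_0$ we are done by transitivity. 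Thus it suffices to prove: if $p$ and $q$ are basic terms and $p\sim q$, then $APTC_G\vdash p=q$.

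First I would put basic terms into a head normal form. Using $A1$--$A5$, the parallel-expansion axioms $P1$--$P10$, the communication axioms $C1$--$C10$, the conflict-elimination and unless axioms, and the guard axioms $G1$--$G25$, every basic term can be rewritten as a finite sum $\sum_i \alpha_i\cdot t_i + \sum_j \beta_j$, where each $\alpha_i,\beta_j$ is a parallel product $a_1\parallel\cdots\parallel a_k$ of atomic events and/or guards, and each $t_i$ is again a basic term in head normal form. Guards are pushed to the front and simplified by $G8$--$G11$ and $G25$, so that a summand either fires a genuine step or is absorbed. This is the form in which the initial step structure of the term is syntactically visible.

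The crux is then a correspondence lemma read off from the transition rules in Table \ref{TRForAPTCG}: for a head-normal-form term $p$ and a data state $s$, the transitions $\langle p,s\rangle\xrightarrow{X}\langle\surd,s'\rangle$ correspond exactly to the termination summands $\beta_j$ with label multiset $X$, while $\langle p,s\rangle\xrightarrow{X}\langle p',s'\rangle$ corresponds exactly to the prefixed summands $\alpha_i\cdot t_i$ with label $X$ and residual $t_i=p'$. With this in hand I would prove completeness by induction on the total depth of $p$ and $q$. Given $p\sim q$, every initial transition of $p$ is matched by one of $q$ with an appropriately related label (for $\sim_p$, pomset-isomorphic; for $\sim_s$, equal as a step; for $\sim_{hp}$, matched through the order isomorphism $f$) and with residuals related by the same equivalence; the induction hypothesis makes the residuals provably equal, and symmetry gives the reverse inclusion of summands. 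Absorbing duplicated summands with $A3$ then yields $APTC_G\vdash p=q$.

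The main obstacle is handling the features that distinguish $APTC_G$ from plain $BATC_G$. Because the parallel operator $\parallel$ cannot be eliminated, the normal form must retain the parallel products $a_1\parallel\cdots\parallel a_k$ as atomic building blocks, and the matching of summands must respect the full pomset or step label rather than a single action; for $\sim_{hp}$ one must additionally verify that the matching preserves the configuration isomorphism, so that the reconstructed equality is compatible with the causal order and not merely with the action labels. The second delicate point is the interaction of guards with data states: the correspondence lemma is state-dependent, so the matching of summands must hold uniformly over all $s\in S$, which is precisely what the weakest-precondition axioms $G10$, $G11$ and the consistency axioms $G8$, $G9$, $G25$ are designed to guarantee. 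Verifying that these axioms suffice to normalise every guarded summand, and that no bisimulation can distinguish basic terms agreeing on all $\langle X,s\rangle$-transitions, is where the real work lies.
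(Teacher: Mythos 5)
The paper states this completeness theorem without giving a proof, deferring to the underlying development of $APTC$ and of process algebra with guards in \cite{APTC} and \cite{HLPA}; your proposal reconstructs exactly the standard argument used there — elimination to basic terms via Theorem \ref{ETAPTCG}, head normal forms keeping the parallel products $a_1\parallel\cdots\parallel a_k$ as irreducible step labels, a transition/summand correspondence lemma, and induction with $A3$ absorbing matched summands — so it is essentially the intended proof. The one point you should make explicit is that the guard half of the argument is relative to the data environment being sufficiently deterministic (the paper's ``Sufficient determinacy'' theorem stated immediately after), since bisimilar basic terms need only have summands that are provably equal via the guard axioms $G1$--$G11$ and $G25$ (e.g.\ $\phi\cdot a+\neg\phi\cdot a$ versus $a$) rather than syntactically matching ones.
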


\begin{theorem}[Sufficient determinacy]
All related data environments with respect to $APTC_G$ can be sufficiently deterministic.
\end{theorem}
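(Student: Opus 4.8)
The plan is to reduce the statement to the criterion already isolated in the $BATC_G$ discussion: a data environment is sufficiently deterministic precisely when the weakest precondition $wp$ is expressible and the guard axioms governing $effect$ and $test$ are sound. Because the axiom set of $APTC_G$ contains that of $BATC_G$, the sequential and alternative fragment (axioms $G1$--$G11$) is already handled by the sufficient determinacy theorem for $BATC_G$. The work therefore lies entirely in the operators new to $APTC_G$ --- the parallel composition $\parallel$, the communication merge $\mid$, the conflict operators $\Theta$ and $\triangleleft$, and the encapsulation $\partial_H$ --- together with the guard axioms $G12$--$G25$.

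First I would pin down the behaviour of $effect$ on the combined data states that parallelism produces. Reading off the transition rules of $APTC_G$, a synchronous step $\{e_1,e_2\}$ carries a state $s$ to $s'\cup s''$ with $s'\in effect(e_1,s)$ and $s''\in effect(e_2,s)$, while a guard block $\phi_1\parallel\cdots\parallel\phi_n$ tests each $\phi_i$ against the shared state. The determinacy requirement is then that each single-action $effect(e,s)$ is a singleton and that the union of resulting states is itself a function of its arguments; under this requirement the combined map $s\mapsto s'\cup s''$ is again single-valued, so parallel steps inherit determinism from the atomic ones. This is exactly what makes $effect$ a deterministic $S$-valued function on the extended state space, and it lets me define $wp(e_1\parallel\cdots\parallel e_n,\phi)$ to hold iff $test(\phi,\cdot)$ holds on $effect(e_1,s)\cup\cdots\cup effect(e_n,s)$ for every $s$, which is well-defined precisely because the combined $effect$ is single-valued.

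Next I would check soundness of the new guard axioms against this environment. The structural laws $G12$--$G21$ (distribution of a guard over $\parallel$ and $\mid$, and its interaction with $\delta$ and $\epsilon$) hold by the same case analysis on $test(\phi,\cdot)$ used for $BATC_G$, applied componentwise to the merged state; $G23$ and $G24$ are immediate since $\Theta$ and $\partial_H$ act as the identity on guards. The essential law is $G25$, namely $\phi_0\parallel\cdots\parallel\phi_n=\delta$ whenever $\forall s_0,\cdots,s_n\in S,\ \exists i\le n.\,test(\neg\phi_i,s_0\cup\cdots\cup s_n)$, which is the parallel analogue of $G9$: a parallel block of guards can fire only if every conjunct holds in the merged state, so a universally falsified conjunct forces deadlock. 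The axiom $G22$, $\phi\parallel\neg\phi=\delta$, then falls out as the instance of $G25$ with conjuncts $\phi$ and $\neg\phi$, since $test(\phi,\cdot)$ and $test(\neg\phi,\cdot)$ can never both hold by $G2$.

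The main obstacle will be the treatment of the merged state $s_0\cup\cdots\cup s_n$: determinism of the whole environment can fail if two concurrently enabled actions write incompatibly to overlapping portions of the data state, in which case $effect$ ceases to be single-valued and $wp$ for the parallel action is no longer expressible. The content of the theorem is thus the assertion that the data environment can always be arranged --- for instance by requiring concurrently enabled actions to operate on disjoint data, so that $\cup$ behaves as a conflict-free join --- so that this pathology does not arise. Once that arrangement is fixed, the soundness of $G12$--$G25$ and the expressibility of $wp$ proceed as above, and the environment is sufficiently deterministic.
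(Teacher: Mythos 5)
Your proposal is consistent with the paper, but there is nothing to compare it against in detail: the paper states this theorem without any proof. The only justification the paper ever offers is the remark following the $BATC_G$ axiom table, namely that ``if the weakest precondition is expressible and $G9$, $G10$ are sound, then the related data environment is sufficiently deterministic,'' and the $APTC_G$ version of the theorem is asserted on the strength of the same observation, with $G25$ implicitly playing the role of $G9$ for parallel guard blocks. Your write-up is therefore strictly more detailed than the paper: the reduction to the $BATC_G$ criterion, the componentwise soundness check of $G12$--$G24$, and the derivation of $G22$ from $G25$ via $G2$ all match what the paper would have to say if it spelled the argument out. The one place where you add genuine content is the observation that determinacy can fail when concurrently enabled actions write incompatibly to overlapping portions of the state, so that $s'\cup s''$ is no longer a single-valued function of $s$; the paper never acknowledges this, and your resolution --- reading ``can be sufficiently deterministic'' as an existence claim discharged by arranging concurrent actions to act on disjoint data --- is the right way to make the theorem true as stated. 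If anything, you should make explicit that this disjointness arrangement is an assumption you are imposing on the environment rather than something derived from the axioms, since that is the entire content of the word ``can'' in the statement.
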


\subsection{Recursion}{\label{recg}}

In this subsection, we introduce recursion to capture infinite processes based on $APTC_G$. In the following, $E,F,G$ are recursion specifications, $X,Y,Z$ are recursive variables.

\begin{definition}[Guarded recursive specification]
A recursive specification

$$X_1=t_1(X_1,\cdots,X_n)$$
$$...$$
$$X_n=t_n(X_1,\cdots,X_n)$$

is guarded if the right-hand sides of its recursive equations can be adapted to the form by applications of the axioms in $APTC$ and replacing recursion variables by the right-hand sides of their recursive equations,

$$(a_{11}\parallel\cdots\parallel a_{1i_1})\cdot s_1(X_1,\cdots,X_n)+\cdots+(a_{k1}\parallel\cdots\parallel a_{ki_k})\cdot s_k(X_1,\cdots,X_n)+(b_{11}\parallel\cdots\parallel b_{1j_1})+\cdots+(b_{1j_1}\parallel\cdots\parallel b_{lj_l})$$

where $a_{11},\cdots,a_{1i_1},a_{k1},\cdots,a_{ki_k},b_{11},\cdots,b_{1j_1},b_{1j_1},\cdots,b_{lj_l}\in \mathbb{E}$, and the sum above is allowed to be empty, in which case it represents the deadlock $\delta$. And there does not exist an infinite sequence of $\epsilon$-transitions $\langle X|E\rangle\rightarrow\langle X'|E\rangle\rightarrow\langle X''|E\rangle\rightarrow\cdots$.
\end{definition}

\begin{center}
    \begin{table}
        $$\frac{\langle t_i(\langle X_1|E\rangle,\cdots,\langle X_n|E\rangle),s\rangle\xrightarrow{\{e_1,\cdots,e_k\}}\langle\surd,s'\rangle}{\langle\langle X_i|E\rangle,s\rangle\xrightarrow{\{e_1,\cdots,e_k\}}\langle\surd,s'\rangle}$$
        $$\frac{\langle t_i(\langle X_1|E\rangle,\cdots,\langle X_n|E\rangle),s\rangle\xrightarrow{\{e_1,\cdots,e_k\}} \langle y,s'\rangle}{\langle\langle X_i|E\rangle,s\rangle\xrightarrow{\{e_1,\cdots,e_k\}} \langle y,s'\rangle}$$
        \caption{Transition rules of guarded recursion}
        \label{TRForGRG}
    \end{table}
\end{center}

\begin{theorem}[Conservitivity of $APTC_G$ with guarded recursion]
$APTC_G$ with guarded recursion is a conservative extension of $APTC_G$.
\end{theorem}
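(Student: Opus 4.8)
The plan is to apply the Conservative Extension Theorem (Theorem \ref{TCE}). I take $T_0$ to be the TSS of $APTC_G$, whose signature $\Sigma_0$ comprises the atomic events, the guards, and the operators $\cdot$, $+$, $\parallel$, $\mid$, $\Theta$, $\triangleleft$ and $\partial_H$, with transition rules as in Table \ref{TRForAPTCG}. I take $T_1$ to be the two transition rules of guarded recursion in Table \ref{TRForGRG}, whose signature $\Sigma_1$ extends $\Sigma_0$ by the recursion construct $\langle X_i\mid E\rangle$ for each guarded recursive specification $E$. It then suffices to verify the three hypotheses of Theorem \ref{TCE}.

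First I would check that $T_0$ is source-dependent. Proceeding rule by rule through Table \ref{TRForAPTCG}, the source of each rule is its left-hand configuration, and by the definition of source-dependency all variables occurring in that source are source-dependent; one then checks that for every premise $\langle t,s\rangle\xrightarrow{a}\langle t',s'\rangle$ the variables of $t'$ and $s'$ are already covered. For the rules with negative premises, namely the race-condition rules for $\parallel$ and $\triangleleft$ involving $\langle y,s\rangle\xnrightarrow{e_2}$, the variable $y$ already appears in the source, so no new variables are introduced; hence every rule of $APTC_G$ is source-dependent.

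Second I would confirm that $T_0$ and $T_0\oplus T_1$ are positive after reduction. The only negative premises arise in the rules for $\parallel$ and $\triangleleft$ under race conditions, and these refer only to transitions of syntactic subterms, so the systems are stratifiable; after reduction the negative premises are eliminated and the resulting systems are positive. Adjoining $T_1$ introduces no new negative premises, so positivity after reduction is preserved.

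The key verification, and the one that makes the theorem go through cleanly, is the freshness condition~(2) for the rules of $T_1$. The source of each guarded-recursion rule is $\langle X_i\mid E\rangle$, which contains the function symbol $\langle\,\cdot\mid E\rangle\in\Sigma_1\setminus\Sigma_0$; hence the source is fresh, and the first disjunct of condition~(2) is satisfied at once, with no premise analysis required. With all three hypotheses established, Theorem \ref{TCE} yields that $APTC_G$ with guarded recursion is a conservative extension of $APTC_G$. The main obstacle, if one wants full rigour, is merely the bookkeeping of the source-dependency check across the many rules of Table \ref{TRForAPTCG} together with the stratification argument for positivity; the decisive freshness step itself is immediate.
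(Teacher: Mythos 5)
Your proposal is correct and follows exactly the route the paper's framework intends: the paper states this conservativity result without a written-out proof, relying on the meta-theorem it sets up for precisely this purpose (Theorem \ref{TCE}), and your verification --- source-dependency of the $APTC_G$ rules, positivity after reduction, and freshness of the sources $\langle\langle X_i|E\rangle,s\rangle$ of the recursion rules --- is the standard instantiation of that theorem. No gaps; the decisive observation that the recursion constants make the new rules' sources fresh is exactly the intended argument.
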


\begin{theorem}[Congruence theorem of $APTC_G$ with guarded recursion]
Truly concurrent bisimulation equivalences $\sim_{p}$, $\sim_s$ and $\sim_{hp}$ are all congruences with respect to $APTC_G$ with guarded recursion.
\end{theorem}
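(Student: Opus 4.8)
The plan is to prove, separately for $\sim_s$, $\sim_p$ and $\sim_{hp}$, that the equivalence is respected by every construct of $APTC_G$ once the signature is enlarged with the guarded-recursion constants $\langle X_i|E\rangle$. Congruence for the proper operators $\cdot$, $+$, $\parallel$, $\mid$, $\Theta$, $\triangleleft$ and $\partial_H$ is already available from the congruence theorem for $APTC_G$ proved above, so the only new ingredient is the recursion construct, whose operational behaviour is fixed by the two rules of Table \ref{TRForGRG}. Because truly concurrent bisimulations are not ordinary (interleaving) bisimulation, I cannot simply invoke a ready-made SOS congruence format; instead the witnessing relations have to be built by hand, reusing the relation-building machinery from the $APTC_G$ congruence proof.

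The first observation I would record is that the recursion rules are \emph{transparent}: a move $\langle X_i|E\rangle \xrightarrow{\{e_1,\ldots,e_k\}} \langle y, s'\rangle$ exists precisely when the unfolding $t_i(\langle X_1|E\rangle,\ldots,\langle X_n|E\rangle)$ admits the same move, with the same (set-)label and the same target configuration and data state. Hence a recursion constant and its one-step unfolding generate identical labelled transition structures, and any step, pomset or hp transition of $\langle X_i|E\rangle$ is mirrored one-for-one by a transition of the body $t_i$. Combined with the fact that $APTC_G$ with guarded recursion is a conservative extension of $APTC_G$ (proved just above), this means the operator-congruence arguments from the $APTC_G$ proof, which are driven entirely by the unchanged operator transition rules, extend verbatim to closed terms of the enlarged signature.

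With transparency in hand, the core argument is the usual context construction. For $\sim_s$ and $\sim_p$ I would take the relation $R$ generated by closing the bisimulations witnessing the hypotheses $s_i\sim t_i$ under all $APTC_G$ contexts and under recursion unfolding, and then verify the transfer conditions: a challenged step (resp.\ pomset) on one side is answered by unfolding the relevant recursion constant, applying the inductive hypothesis to the body, and re-folding, using that step/pomset labels are matched by an isomorphism $X_1\sim X_2$. For $\sim_{hp}$ the same scheme is run on the posetal product, now carrying the order-isomorphism $f$ along and checking that each matched single-event move extends it to $f[e_1\mapsto e_2]$ while keeping it an isomorphism of configurations.

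The hard part will be the $\sim_{hp}$ case. Unlike step and pomset bisimulation, it requires preserving the accumulated isomorphism $f$ on the event posets, so I must show that unfolding a recursion constant does not disturb the recorded causality and concurrency relations, i.e.\ that the events the body contributes sit in exactly corresponding order-theoretic positions on the two sides. This is where guardedness is essential: it rules out infinite unguarded unfolding (and the spurious $\epsilon$-loops excluded in the definition of guarded specification), so that the correspondence between the events of $\langle X_i|E\rangle$ and those of $t_i$ remains a $\leq$-respecting bijection at every stage. I would note finally that $\sim_{hhp}$ is deliberately absent from the statement, matching the paper's earlier remark that there is no finite sound and complete axiomatisation of $\parallel$ modulo $\sim_{hhp}$.
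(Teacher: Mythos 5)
The paper states this theorem without any proof (as with nearly all results in this section, the proofs are omitted or left to the reader), so there is no in-paper argument to diverge from. Your outline is the standard and correct one for this setting: the recursion rules make $\langle X_i|E\rangle$ and its unfolding transition-equivalent, conservativity lets the operator cases of the $APTC_G$ congruence proof carry over to the enlarged signature, and guardedness controls the unfolding in the hp-case; the only quibble is that your closing remark attributes the omission of $\sim_{hhp}$ to the non-axiomatisability of $\parallel$, whereas here it simply tracks the paper's soundness/completeness statements for $APTC_G$, which already drop $\sim_{hhp}$ before recursion enters.
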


\begin{theorem}[Elimination theorem of $APTC_G$ with linear recursion]\label{ETRecursionG}
Each process term in $APTC_G$ with linear recursion is equal to a process term $\langle X_1|E\rangle$ with $E$ a linear recursive specification.
\end{theorem}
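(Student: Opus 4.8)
The plan is to prove the statement by induction on the structure of the $APTC_G$-with-linear-recursion term $t$, strengthening the claim to: every such term is provably equal to $\langle X_1|E\rangle$ for some linear recursive specification $E$. Throughout I will freely use RDP to unfold a recursion constant into the right-hand side of its defining equation, and RSP to collapse a guarded (in particular linear) specification into its unique solution; since every construction below preserves linearity, RSP always applies. I will also assume, by renaming, that the specifications attached to distinct arguments have disjoint variable sets.

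For the base cases --- $t$ an atomic action $e$, a guard $\phi$, the deadlock $\delta$, or already a constant $\langle X_1|E\rangle$ with $E$ linear --- there is nothing to do beyond exhibiting the one-variable specification $X_1 = e$ (resp.\ $X_1 = \phi$, $X_1 = \delta$), which is linear by definition. For the inductive step I assume the two arguments have already been brought into the form $\langle X_1|E_1\rangle$ and $\langle Y_1|E_2\rangle$ with $E_1,E_2$ linear and variable-disjoint, and I construct, for each operator, a single linear specification $E$ together with a distinguished variable whose solution equals $t$.

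The routine operators are handled by distributing them over the linear head-normal forms supplied by RDP and, where products of atomic actions must be tidied, the Elimination Theorem \ref{ETAPTCG}. For $+$, introduce a fresh $Z_1$ whose equation is the sum of the right-hand sides of $X_1$ and $Y_1$, and set $E=E_1\cup E_2\cup\{Z_1=\cdots\}$. For $\cdot$, introduce $Z_i$ for each variable $X_i$ of $E_1$, replacing every terminating summand $(b_1\parallel\cdots\parallel b_j)$ of $X_i$'s equation by $(b_1\parallel\cdots\parallel b_j)\cdot Y_1$ and every summand $(\vec a)X_j$ by $(\vec a)Z_j$, and append $E_2$. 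For $\partial_H$, $\Theta$, $\triangleleft$ and $\tau_I$, introduce one fresh variable per variable of the argument and push the operator through each summand using the $D$-, $CE$-, $U$- and $TI$-axioms; in each case the outcome is again a sum of summands of the linear shape. In every one of these cases the original term satisfies the constructed equations by RDP, so RSP yields equality with the corresponding solution constant.

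The main obstacle is the parallelism case, i.e.\ $\langle X_1|E_1\rangle \between \langle Y_1|E_2\rangle$ (and likewise the auxiliary $\parallel$ and $\mid$). Here I would use the product construction: introduce a fresh variable $Z_{ij}$ intended to denote $\langle X_i|E_1\rangle \between \langle Y_j|E_2\rangle$ for every pair $(i,j)$, and derive its defining equation by expanding the merge via $P1$ together with the expansion laws $P2$--$P4$, the communication laws $C1$--$C4$, and the transition rules of Table \ref{TRForAPTCG}. Each resulting summand is either a left contribution $(\vec a)\cdot Z_{kj}$ from $X_i$, a right contribution $(\vec b)\cdot Z_{il}$ from $Y_j$, or a $\gamma$-labelled synchronization contribution $(\cdots)\cdot Z_{kl}$, together with terminating summands; all have the linear shape, so $E=\{Z_{ij}\}$ is linear of size $O(mn)$. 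The delicate points are the case analysis distinguishing terminating from non-terminating summands on each side, the correct treatment of the communication function $\gamma$ (including the $\delta$ arising from mismatched actions, absorbed by $A6$, $A7$ and $C7$--$C10$), and checking consistency of the interaction with guards via $G12$--$G25$. Once the system is seen to be linear and the tuple $\bigl(\langle X_i|E_1\rangle\between\langle Y_j|E_2\rangle\bigr)_{i,j}$ is verified to solve it, RSP delivers $\langle X_1|E_1\rangle\between\langle Y_1|E_2\rangle=\langle Z_{11}|E\rangle$, completing the induction.
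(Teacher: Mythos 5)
Your proposal is correct and follows exactly the approach this theorem is meant to rest on (the paper itself states the result without proof, inheriting it from the $APTC$ development): structural induction on the term, with each operator case discharged by building a new linear specification from the linear specifications of the arguments, appending $E_2$ for sequential composition, taking the product system $Z_{ij}$ for $\between$, $\parallel$ and $\mid$, and closing each case with RDP to show the intended processes solve the constructed system and RSP to conclude equality. The only point worth tightening is that a summand consisting of a guard $\phi$ is not literally of the linear shape in the paper's Definition of linear recursive specification (whose summands draw only from $\mathbb{E}$), so you should note that the definition is tacitly extended to admit guards in $APTC_G$; with that understood, your argument is the intended one.
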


\begin{theorem}[Soundness of $APTC_G$ with guarded recursion]\label{SAPTC_GRG}
Let $x$ and $y$ be $APTC_G$ with guarded recursion terms. If $APTC_G\textrm{ with guarded recursion}\vdash x=y$, then

(1) $x\sim_{s} y$.

(2) $x\sim_{p} y$.

(3) $x\sim_{hp} y$.
\end{theorem}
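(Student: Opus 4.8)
The plan is to reduce the soundness claim to a finite, axiom-by-axiom check, exploiting that each of $\sim_s$, $\sim_p$, and $\sim_{hp}$ is an equivalence relation and a congruence for $APTC_G$ with guarded recursion (the congruence theorem stated just above). Since provable equality $APTC_G$ with guarded recursion $\vdash x=y$ is generated from the axioms by reflexivity, symmetry, transitivity, and closure under the operators, and a congruence is by definition closed under exactly these steps, it suffices to show that for every axiom $s=t$ (read as an equation between open terms) one has $\sigma(s)\sim\sigma(t)$ for every closing substitution $\sigma$ and each of the three equivalences. First I would record this reduction explicitly, so that everything downstream is the verification of individual axioms.

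Next I would dispatch the purely equational axioms. The $BATC_G$ fragment is already sound (Theorem \ref{SBATCG}), and the parallel, communication-merge, conflict-elimination, unless, encapsulation, and guard axioms of $APTC_G$ (Tables \ref{AxiomsForAPTCG} and \ref{AxiomsForAPTCG2}) are sound by the soundness of $APTC_G$ itself (Theorem \ref{SAPTCG}). For each such axiom one exhibits the obvious bisimulation relating the two closed instances, reading the matching transitions directly off the operational rules of Table \ref{TRForAPTCG}. These are routine \emph{same transition graph up to} $\sim$ arguments: in the step and pomset cases one matches multisets of concurrently enabled events, while in the hp case one additionally carries along the order isomorphism $f$ between the two configurations and checks that each matched step preserves it.

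The substantive new content is the recursion principles. For $RDP$, the two rules of Table \ref{TRForGRG} state that $\langle X_i|E\rangle$ and $t_i(\langle X_1|E\rangle,\dots,\langle X_n|E\rangle)$ have literally the same outgoing transitions and termination behaviour, so the relation pairing them, together with the identity on all other states, is immediately a bisimulation of each kind. For $RSP$ I would prove the genuinely load-bearing statement: a guarded recursive specification has a unique solution modulo each equivalence. Given two solution tuples $\vec p$ and $\langle\vec X|E\rangle$, I would define $R$ to relate $p_i$ with $\langle X_i|E\rangle$ and, more generally, $\sigma_1(t)$ with $\sigma_2(t)$ for each term $t$ in the guarded normal form of the specification, where $\sigma_1,\sigma_2$ substitute the two solutions. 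The guardedness hypothesis, namely that there is no infinite sequence of $\epsilon$-transitions $\langle X|E\rangle\rightarrow\langle X'|E\rangle\rightarrow\cdots$, is precisely what forces every recursion variable to sit behind a genuine action prefix, so that a single step of $\sigma_1(t)$ must be matched by the corresponding step of $\sigma_2(t)$ with residuals again related by $R$; this closes $R$ under the bisimulation clauses.

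The main obstacle will be $RSP$, and within it the hp-bisimulation case. For $\sim_s$ and $\sim_p$ the relation $R$ above is verified by a straightforward induction following the guarded unfolding. For $\sim_{hp}$ one must simultaneously maintain a poset isomorphism $f$ between the executed events on the two sides and check that every matched step extends $f$ consistently; the delicate point is to see that guardedness rules out the unbounded internal drift that would let the two sides desynchronise their event orders, so that $f$ stays an isomorphism throughout. I would therefore isolate \emph{uniqueness of solutions of guarded recursive specifications modulo} $\sim_s,\sim_p,\sim_{hp}$ as a separate lemma, read $RSP$ off it, and thereby complete the axiom check and hence the theorem.
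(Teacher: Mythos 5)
The paper states this theorem without proof, so there is no in-paper argument to compare against; your proposal supplies the standard one and it is correct. Reducing provable equality to an axiom-by-axiom check via the congruence theorem, inheriting the purely equational axioms from the soundness of $BATC_G$ and $APTC_G$ (Theorems \ref{SBATCG} and \ref{SAPTCG}), reading $RDP$ directly off the transition rules of Table \ref{TRForGRG}, and isolating uniqueness of solutions of guarded recursive specifications modulo $\sim_s$, $\sim_p$, $\sim_{hp}$ as the lemma behind $RSP$ (with the extra bookkeeping of the isomorphism $f$ in the hp case) is exactly the route taken for the corresponding theorem in the underlying $APTC$ development that this paper builds on, so your proof matches the intended one.
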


\begin{theorem}[Completeness of $APTC_G$ with linear recursion]\label{CAPTC_GRG}
Let $p$ and $q$ be closed $APTC_G$ with linear recursion terms, then,

(1) if $p\sim_{s} q$ then $p=q$.

(2) if $p\sim_{p} q$ then $p=q$.

(3) if $p\sim_{hp} q$ then $p=q$.
\end{theorem}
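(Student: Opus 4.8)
The plan is to reduce the claim to the already-established completeness of the finite fragment (Theorem \ref{CAPTCG}) together with an application of $RSP$, following the standard pattern for process algebras with linear recursion. First I would invoke the Elimination theorem for $APTC_G$ with linear recursion (Theorem \ref{ETRecursionG}) to rewrite the two closed terms as $p = \langle X_1|E_1\rangle$ and $q = \langle Y_1|E_2\rangle$, where $E_1$ and $E_2$ are linear recursive specifications. By soundness (Theorem \ref{SAPTC_GRG}, applicable since linear recursion is a special case of guarded recursion) these provable equalities preserve the relevant equivalence, so from $p \sim q$ I obtain $\langle X_1|E_1\rangle \sim \langle Y_1|E_2\rangle$ for whichever of $\sim_s$, $\sim_p$, $\sim_{hp}$ is under consideration.

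The core step is to build a single guarded linear recursive specification $E$ of which both roots are solutions. Let $R$ be a bisimulation of the appropriate kind, over configurations $\langle C,s\rangle$ with data states, witnessing $\langle X_1|E_1\rangle \sim \langle Y_1|E_2\rangle$. I would introduce a fresh recursion variable $Z_{ij}$ for each pair $(X_i, Y_j)$ of recursion variables whose associated processes are related by $R$, with $Z_{11}$ as the distinguished root. The defining equations of $X_i$ in $E_1$ and of $Y_j$ in $E_2$ are linear, i.e.\ sums of summands of the form $(a_1\parallel\cdots\parallel a_k)\cdot X_{i'}$, bare parallel terms $(b_1\parallel\cdots\parallel b_l)$, and guard-prefixed summands. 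The bisimulation forces a summand-by-summand matching: every summand of $X_i$ with head $(a_1\parallel\cdots\parallel a_k)$ and continuation $X_{i'}$ is matched by a summand of $Y_j$ with an equivalent head (equal as a pomset for $\sim_p$, as a step for $\sim_s$) and continuation $Y_{j'}$, with $(X_{i'},Y_{j'})$ again related, and symmetrically. I would then define the equation for $Z_{ij}$ to collect exactly these matched summands, with continuations replaced by the paired variables $Z_{i'j'}$.

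With $E$ in hand, I would verify that the families $\{\langle X_i|E_1\rangle\}$ and $\{\langle Y_j|E_2\rangle\}$ are both solutions of $E$, using the finite completeness result (Theorem \ref{CAPTCG}) to discharge the provable equality of each matched pair of summand heads, and the guard axioms ($G1$--$G25$, in particular $G8$) to handle the guard-prefixed summands and their data-state side conditions. Guardedness of $E$ is inherited from the linearity of $E_1$ and $E_2$ together with the absence of infinite $\epsilon$- and $\tau$-chains, so $RSP$ applies and yields a unique solution; hence $\langle X_1|E_1\rangle = \langle Z_{11}|E\rangle = \langle Y_1|E_2\rangle$, that is $p = q$.

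I expect the matching of summands to be the main obstacle, for two reasons specific to this setting. First, the heads are parallel compositions $(a_1\parallel\cdots\parallel a_k)$ rather than single actions, so the match must be taken up to pomset or step equivalence, and one must argue that equivalent heads are provably equal as finite terms before invoking $RSP$; for $\sim_{hp}$ one must additionally track the order-isomorphism $f$ so that the merged specification respects the history-preserving structure and $f$ remains an isomorphism at every stage. Second, the guards and data states mean that $R$ relates configurations $\langle C,s\rangle$ rather than bare configurations, so the construction must check that matched transitions agree on their $effect$ on $s$ and that guard summands are consistently testable via $test(\phi,s)$; this is precisely where the guard axioms and the sufficient determinacy of the data environment are needed.
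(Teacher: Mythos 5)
The paper states this theorem without proof (it is inherited from the underlying APTC development cited as \cite{APTC}), and your proposal correctly reconstructs the standard argument that development uses: eliminate to linear recursive specifications, merge the two specifications along the bisimulation into a single guarded linear specification whose variables are related pairs, verify that both original roots solve it, and conclude by $RSP$. Your identification of the delicate points --- matching parallel heads $(a_1\parallel\cdots\parallel a_k)$ up to step/pomset equivalence, tracking the isomorphism $f$ for $\sim_{hp}$, and the data-state side conditions on guards --- is accurate, so this is essentially the same approach and I see no gap.
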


\subsection{Abstraction}{\label{absg}}

To abstract away from the internal implementations of a program, and verify that the program exhibits the desired external behaviors, the silent step $\tau$ and abstraction operator $\tau_I$ are introduced, where $I\subseteq \mathbb{E}\cup G_{at}$ denotes the internal events or guards. The silent step $\tau$ represents the internal events or guards, when we consider the external behaviors of a process, $\tau$ steps can be removed, that is, $\tau$ steps must keep silent. The transition rule of $\tau$ is shown in Table \ref{TRForTauG}. In the following, let the atomic event $e$ range over $\mathbb{E}\cup\{\epsilon\}\cup\{\delta\}\cup\{\tau\}$, and $\phi$ range over $G\cup \{\tau\}$, and let the communication function $\gamma:\mathbb{E}\cup\{\tau\}\times \mathbb{E}\cup\{\tau\}\rightarrow \mathbb{E}\cup\{\delta\}$, with each communication involved $\tau$ resulting in $\delta$. We use $\tau(s)$ to denote $effect(\tau,s)$, for the fact that $\tau$ only change the state of internal data environment, that is, for the external data environments, $s=\tau(s)$.

\begin{center}
    \begin{table}
        $$\frac{}{\langle\tau,s\rangle\rightarrow\langle\surd,s\rangle}\textrm{ if }test(\tau,s)$$
        $$\frac{}{\langle\tau,s\rangle\xrightarrow{\tau}\langle\surd,\tau(s)\rangle}$$
        \caption{Transition rule of the silent step}
        \label{TRForTauG}
    \end{table}
\end{center}

In section \ref{os2}, we introduce $\tau$ into event structure, and also give the concept of weakly true concurrency. In this subsection, we give the concepts of rooted branching truly concurrent bisimulation equivalences, based on these concepts, we can design the axiom system of the silent step $\tau$ and the abstraction operator $\tau_I$.

\begin{definition}[Branching pomset, step bisimulation]\label{BPSBG}
Assume a special termination predicate $\downarrow$, and let $\surd$ represent a state with $\surd\downarrow$. Let $\mathcal{E}_1$, $\mathcal{E}_2$ be PESs. A branching pomset bisimulation is a relation $R\subseteq\langle\mathcal{C}(\mathcal{E}_1),S\rangle\times\langle\mathcal{C}(\mathcal{E}_2),S\rangle$, such that:
 \begin{enumerate}
   \item if $(\langle C_1,s\rangle,\langle C_2,s\rangle)\in R$, and $\langle C_1,s\rangle\xrightarrow{X}\langle C_1',s'\rangle$ then
   \begin{itemize}
     \item either $X\equiv \tau^*$, and $(\langle C_1',s'\rangle,\langle C_2,s\rangle)\in R$ with $s'\in \tau(s)$;
     \item or there is a sequence of (zero or more) $\tau$-transitions $\langle C_2,s\rangle\xrightarrow{\tau^*} \langle C_2^0,s^0\rangle$, such that $(\langle C_1,s\rangle,\langle C_2^0,s^0\rangle)\in R$ and $\langle C_2^0,s^0\rangle\xRightarrow{X}\langle C_2',s'\rangle$ with $(\langle C_1',s'\rangle,\langle C_2',s'\rangle)\in R$;
   \end{itemize}
   \item if $(\langle C_1,s\rangle,\langle C_2,s\rangle)\in R$, and $\langle C_2,s\rangle\xrightarrow{X}\langle C_2',s'\rangle$ then
   \begin{itemize}
     \item either $X\equiv \tau^*$, and $(\langle C_1,s\rangle,\langle C_2',s'\rangle)\in R$;
     \item or there is a sequence of (zero or more) $\tau$-transitions $\langle C_1,s\rangle\xrightarrow{\tau^*} \langle C_1^0,s^0\rangle$, such that $(\langle C_1^0,s^0\rangle,\langle C_2,s\rangle)\in R$ and $\langle C_1^0,s^0\rangle\xRightarrow{X}\langle C_1',s'\rangle$ with $(\langle C_1',s'\rangle,\langle C_2',s'\rangle)\in R$;
   \end{itemize}
   \item if $(\langle C_1,s\rangle,\langle C_2,s\rangle)\in R$ and $\langle C_1,s\rangle\downarrow$, then there is a sequence of (zero or more) $\tau$-transitions $\langle C_2,s\rangle\xrightarrow{\tau^*}\langle C_2^0,s^0\rangle$ such that $(\langle C_1,s\rangle,\langle C_2^0,s^0\rangle)\in R$ and $\langle C_2^0,s^0\rangle\downarrow$;
   \item if $(\langle C_1,s\rangle,\langle C_2,s\rangle)\in R$ and $\langle C_2,s\rangle\downarrow$, then there is a sequence of (zero or more) $\tau$-transitions $\langle C_1,s\rangle\xrightarrow{\tau^*}\langle C_1^0,s^0\rangle$ such that $(\langle C_1^0,s^0\rangle,\langle C_2,s\rangle)\in R$ and $\langle C_1^0,s^0\rangle\downarrow$.
 \end{enumerate}

We say that $\mathcal{E}_1$, $\mathcal{E}_2$ are branching pomset bisimilar, written $\mathcal{E}_1\approx_{bp}\mathcal{E}_2$, if there exists a branching pomset bisimulation $R$, such that $(\langle\emptyset,\emptyset\rangle,\langle\emptyset,\emptyset\rangle)\in R$.

By replacing pomset transitions with steps, we can get the definition of branching step bisimulation. When PESs $\mathcal{E}_1$ and $\mathcal{E}_2$ are branching step bisimilar, we write $\mathcal{E}_1\approx_{bs}\mathcal{E}_2$.
\end{definition}

\begin{definition}[Rooted branching pomset, step bisimulation]\label{RBPSBG}
Assume a special termination predicate $\downarrow$, and let $\surd$ represent a state with $\surd\downarrow$. Let $\mathcal{E}_1$, $\mathcal{E}_2$ be PESs. A rooted branching pomset bisimulation is a relation $R\subseteq\langle\mathcal{C}(\mathcal{E}_1),S\rangle\times\langle\mathcal{C}(\mathcal{E}_2),S\rangle$, such that:
 \begin{enumerate}
   \item if $(\langle C_1,s\rangle,\langle C_2,s\rangle)\in R$, and $\langle C_1,s\rangle\xrightarrow{X}\langle C_1',s'\rangle$ then $\langle C_2,s\rangle\xrightarrow{X}\langle C_2',s'\rangle$ with $\langle C_1',s'\rangle\approx_{bp}\langle C_2',s'\rangle$;
   \item if $(\langle C_1,s\rangle,\langle C_2,s\rangle)\in R$, and $\langle C_2,s\rangle\xrightarrow{X}\langle C_2',s'\rangle$ then $\langle C_1,s\rangle\xrightarrow{X}\langle C_1',s'\rangle$ with $\langle C_1',s'\rangle\approx_{bp}\langle C_2',s'\rangle$;
   \item if $(\langle C_1,s\rangle,\langle C_2,s\rangle)\in R$ and $\langle C_1,s\rangle\downarrow$, then $\langle C_2,s\rangle\downarrow$;
   \item if $(\langle C_1,s\rangle,\langle C_2,s\rangle)\in R$ and $\langle C_2,s\rangle\downarrow$, then $\langle C_1,s\rangle\downarrow$.
 \end{enumerate}

We say that $\mathcal{E}_1$, $\mathcal{E}_2$ are rooted branching pomset bisimilar, written $\mathcal{E}_1\approx_{rbp}\mathcal{E}_2$, if there exists a rooted branching pomset bisimulation $R$, such that $(\langle\emptyset,\emptyset\rangle,\langle\emptyset,\emptyset\rangle)\in R$.

By replacing pomset transitions with steps, we can get the definition of rooted branching step bisimulation. When PESs $\mathcal{E}_1$ and $\mathcal{E}_2$ are rooted branching step bisimilar, we write $\mathcal{E}_1\approx_{rbs}\mathcal{E}_2$.
\end{definition}

\begin{definition}[Branching (hereditary) history-preserving bisimulation]\label{BHHPBG}
Assume a special termination predicate $\downarrow$, and let $\surd$ represent a state with $\surd\downarrow$. A branching history-preserving (hp-) bisimulation is a weakly posetal relation $R\subseteq\langle\mathcal{C}(\mathcal{E}_1),S\rangle\overline{\times}\langle\mathcal{C}(\mathcal{E}_2),S\rangle$ such that:

 \begin{enumerate}
   \item if $(\langle C_1,s\rangle,f,\langle C_2,s\rangle)\in R$, and $\langle C_1,s\rangle\xrightarrow{e_1}\langle C_1',s'\rangle$ then
   \begin{itemize}
     \item either $e_1\equiv \tau$, and $(\langle C_1',s'\rangle,f[e_1\mapsto \tau],\langle C_2,s\rangle)\in R$;
     \item or there is a sequence of (zero or more) $\tau$-transitions $\langle C_2,s\rangle\xrightarrow{\tau^*} \langle C_2^0,s^0\rangle$, such that $(\langle C_1,s\rangle,f,\langle C_2^0,s^0\rangle)\in R$ and $\langle C_2^0,s^0\rangle\xrightarrow{e_2}\langle C_2',s'\rangle$ with $(\langle C_1',s'\rangle,f[e_1\mapsto e_2],\langle C_2',s'\rangle)\in R$;
   \end{itemize}
   \item if $(\langle C_1,s\rangle,f,\langle C_2,s\rangle)\in R$, and $\langle C_2,s\rangle\xrightarrow{e_2}\langle C_2',s'\rangle$ then
   \begin{itemize}
     \item either $e_2\equiv \tau$, and $(\langle C_1,s\rangle,f[e_2\mapsto \tau],\langle C_2',s'\rangle)\in R$;
     \item or there is a sequence of (zero or more) $\tau$-transitions $\langle C_1,s\rangle\xrightarrow{\tau^*} \langle C_1^0,s^0\rangle$, such that $(\langle C_1^0,s^0\rangle,f,\langle C_2,s\rangle)\in R$ and $\langle C_1^0,s^0\rangle\xrightarrow{e_1}\langle C_1',s'\rangle$ with $(\langle C_1',s'\rangle,f[e_2\mapsto e_1],\langle C_2',s'\rangle)\in R$;
   \end{itemize}
   \item if $(\langle C_1,s\rangle,f,\langle C_2,s\rangle)\in R$ and $\langle C_1,s\rangle\downarrow$, then there is a sequence of (zero or more) $\tau$-transitions $\langle C_2,s\rangle\xrightarrow{\tau^*}\langle C_2^0,s^0\rangle$ such that $(\langle C_1,s\rangle,f,\langle C_2^0,s^0\rangle)\in R$ and $\langle C_2^0,s^0\rangle\downarrow$;
   \item if $(\langle C_1,s\rangle,f,\langle C_2,s\rangle)\in R$ and $\langle C_2,s\rangle\downarrow$, then there is a sequence of (zero or more) $\tau$-transitions $\langle C_1,s\rangle\xrightarrow{\tau^*}\langle C_1^0,s^0\rangle$ such that $(\langle C_1^0,s^0\rangle,f,\langle C_2,s\rangle)\in R$ and $\langle C_1^0,s^0\rangle\downarrow$.
 \end{enumerate}

$\mathcal{E}_1,\mathcal{E}_2$ are branching history-preserving (hp-)bisimilar and are written $\mathcal{E}_1\approx_{bhp}\mathcal{E}_2$ if there exists a branching hp-bisimulation $R$ such that $(\langle\emptyset,\emptyset\rangle,\emptyset,\langle\emptyset,\emptyset\rangle)\in R$.

A branching hereditary history-preserving (hhp-)bisimulation is a downward closed branching hp-bisimulation. $\mathcal{E}_1,\mathcal{E}_2$ are branching hereditary history-preserving (hhp-)bisimilar and are written $\mathcal{E}_1\approx_{bhhp}\mathcal{E}_2$.
\end{definition}

\begin{definition}[Rooted branching (hereditary) history-preserving bisimulation]\label{RBHHPBG}
Assume a special termination predicate $\downarrow$, and let $\surd$ represent a state with $\surd\downarrow$. A rooted branching history-preserving (hp-) bisimulation is a weakly posetal relation $R\subseteq\langle\mathcal{C}(\mathcal{E}_1),S\rangle\overline{\times}\langle\mathcal{C}(\mathcal{E}_2),S\rangle$ such that:

 \begin{enumerate}
   \item if $(\langle C_1,s\rangle,f,\langle C_2,s\rangle)\in R$, and $\langle C_1,s\rangle\xrightarrow{e_1}\langle C_1',s'\rangle$, then $\langle C_2,s\rangle\xrightarrow{e_2}\langle C_2',s'\rangle$ with $\langle C_1',s'\rangle\approx_{bhp}\langle C_2',s'\rangle$;
   \item if $(\langle C_1,s\rangle,f,\langle C_2,s\rangle)\in R$, and $\langle C_2,s\rangle\xrightarrow{e_2}\langle C_2',s'\rangle$, then $\langle C_1,s\rangle\xrightarrow{e_1}\langle C_1',s'\rangle$ with $\langle C_1',s'\rangle\approx_{bhp}\langle C_2',s'\rangle$;
   \item if $(\langle C_1,s\rangle,f,\langle C_2,s\rangle)\in R$ and $\langle C_1,s\rangle\downarrow$, then $\langle C_2,s\rangle\downarrow$;
   \item if $(\langle C_1,s\rangle,f,\langle C_2,s\rangle)\in R$ and $\langle C_2,s\rangle\downarrow$, then $\langle C_1,s\rangle\downarrow$.
 \end{enumerate}

$\mathcal{E}_1,\mathcal{E}_2$ are rooted branching history-preserving (hp-)bisimilar and are written $\mathcal{E}_1\approx_{rbhp}\mathcal{E}_2$ if there exists a rooted branching hp-bisimulation $R$ such that $(\langle\emptyset,\emptyset\rangle,\emptyset,\langle\emptyset,\emptyset\rangle)\in R$.

A rooted branching hereditary history-preserving (hhp-)bisimulation is a downward closed rooted branching hp-bisimulation. $\mathcal{E}_1,\mathcal{E}_2$ are rooted branching hereditary history-preserving (hhp-)bisimilar and are written $\mathcal{E}_1\approx_{rbhhp}\mathcal{E}_2$.
\end{definition}

\begin{definition}[Guarded linear recursive specification]\label{GLRSG}
A linear recursive specification $E$ is guarded if there does not exist an infinite sequence of $\tau$-transitions $\langle X|E\rangle\xrightarrow{\tau}\langle X'|E\rangle\xrightarrow{\tau}\langle X''|E\rangle\xrightarrow{\tau}\cdots$, and there does not exist an infinite sequence of $\epsilon$-transitions $\langle X|E\rangle\rightarrow\langle X'|E\rangle\rightarrow\langle X''|E\rangle\rightarrow\cdots$.
\end{definition}

\begin{theorem}[Conservitivity of $APTC_G$ with silent step and guarded linear recursion]
$APTC_G$ with silent step and guarded linear recursion is a conservative extension of $APTC_G$ with linear recursion.
\end{theorem}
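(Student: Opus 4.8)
The plan is to obtain this as a direct instance of the general Conservative Extension Theorem (Theorem \ref{TCE}). I set $T_0$ to be the TSS of $APTC_G$ with (guarded) linear recursion, over the signature $\Sigma_0$ comprising the atomic events, the guards, the operators $\cdot,+,\parallel,\mid,\Theta,\triangleleft,\partial_H$, together with the recursion constants $\langle X_i|E\rangle$, and I set $T_1$ to consist only of the two new transition rules for the silent step $\tau$ displayed in Table \ref{TRForTauG}. The combined system $T_0\oplus T_1$ is then exactly the TSS of $APTC_G$ with silent step and guarded linear recursion, so it suffices to verify the three hypotheses of Theorem \ref{TCE}: that $T_0$ and $T_0\oplus T_1$ are positive after reduction, that $T_0$ is source-dependent, and that every rule of $T_1$ satisfies the freshness side-condition.

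First I would check source-dependency of $T_0$. Each axiom rule (for atomic events, for $\epsilon$, and for atomic guards) has no premises and all its variables already appear in the source, so it is trivially source-dependent; for the rules of $\cdot,+,\parallel,\mid,\Theta,\triangleleft,\partial_H$ and for the recursion rules of Table \ref{TRForGRG}, one propagates source-dependency along the positive premises in the standard inductive way, noting that the data-state variable $s$ occurring in every configuration $\langle t,s\rangle$ is already present in the source, and that the target states $s'$ introduced by the premises are carried into the conclusion. This bookkeeping shows every rule of $T_0$ is source-dependent, hence $T_0$ is source-dependent.

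Next I would verify the freshness condition for $T_1$. The only fresh function symbol in $\Sigma_1\setminus\Sigma_0$ is the constant $\tau$, and both rules of Table \ref{TRForTauG} have $\tau$ as their source; thus the source of each rule of $T_1$ is fresh, and since these rules carry no premises at all, the first alternative of condition (2) of Theorem \ref{TCE} is met immediately. With source-dependency of $T_0$ and freshness of the sources of $T_1$ in hand, Theorem \ref{TCE} yields that $T_0\oplus T_1$ is a conservative extension of $T_0$, which is the claim.

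The step I expect to be the main obstacle is establishing that $T_0$ and $T_0\oplus T_1$ are positive after reduction, since the transition rules of $APTC_G$ in Table \ref{TRForAPTCG} contain negative premises of the form $\langle y,s\rangle\nrightarrow^{e}$ in the clauses for $\parallel$ and $\triangleleft$. To handle this I would appeal to a stratification of the TSS by the operator depth of the terms appearing as sources: every negative premise $\langle y,s\rangle\nrightarrow^{e}$ refers to a proper subterm $y$ of the source, so the rules are stratified and therefore determine a well-defined LTS that is positive after reduction. Adjoining the $\tau$-rules of $T_1$, whose source $\tau$ is a new constant occurring in no existing rule, cannot disturb this stratification. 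Once positivity after reduction is secured for both systems, the remainder is the routine verification outlined above.
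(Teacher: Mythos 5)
The paper states this theorem without giving a proof, but your argument is exactly the standard route the paper's own machinery (Theorem \ref{TCE}) is set up for: take $T_1$ to be the two $\tau$-rules of Table \ref{TRForTauG}, observe their source contains the fresh constant $\tau\in\Sigma_1\setminus\Sigma_0$ (note the label $\tau$ itself is \emph{not} fresh, since the $\triangleleft$-rules of $T_0$ already emit $\tau$-transitions, but your use of the source-freshness disjunct sidesteps this correctly), and check source-dependency of $T_0$ and positivity after reduction. Your outline is correct and consistent with how these conservativity results are established throughout the APTC framework.
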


\begin{theorem}[Congruence theorem of $APTC_G$ with silent step and guarded linear recursion]
Rooted branching truly concurrent bisimulation equivalences $\approx_{rbp}$, $\approx_{rbs}$ and $\approx_{rbhp}$ are all congruences with respect to $APTC_G$ with silent step and guarded linear recursion.
\end{theorem}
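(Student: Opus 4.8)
The plan is to establish the congruence separately for each operator in the signature of $APTC_G$ with silent step and guarded linear recursion — the binary operators $+$, $\cdot$, $\parallel$, $\mid$, $\between$, $\triangleleft$, the unary operators $\Theta$, $\partial_H$, $\tau_I$, the guards $\phi$, and the recursion constants $\langle X_i|E\rangle$ — by assuming the arguments of the operator to be pairwise rooted branching bisimilar and then exhibiting a witnessing relation that is a rooted branching bisimulation (Definition \ref{RBPSBG} for the pomset and step cases, Definition \ref{RBHHPBG} for the hp-case). I would treat $\approx_{rbp}$, $\approx_{rbs}$ and $\approx_{rbhp}$ in parallel, since the relation constructions coincide except for whether one tracks a pomset, a step, or the order-isomorphism $f$.

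The decisive feature to exploit is rootedness. Clauses (1)--(2) of Definition \ref{RBPSBG} require the very first transition to be matched by a genuine transition with no leading $\tau$-absorption, so at the root every operator must satisfy exactly the strong transfer condition already verified in the congruence theorem for $APTC_G$. Hence for each operator I would reuse the relation built there to discharge the first step together with the termination clauses (3)--(4), and then observe that the residuals fall under ordinary branching bisimilarity ($\approx_{bp}$, $\approx_{bs}$, $\approx_{bhp}$, Definition \ref{BPSBG}), where the stuttering property allows intermediate $\tau$-steps to be absorbed. For the operators whose transition rules neither introduce nor erase $\tau$ (all of the above except $\tau_I$ and the $\tau$-emitting rules of $\triangleleft$) this reduction is immediate, and the transfer conditions close by the same case analysis on the rules as in the strong case. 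Rootedness is exactly what is indispensable here: plain branching bisimilarity is not preserved by $+$, and the root condition repairs this.

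The operators requiring genuine new work are $\tau_I$, which renames actions of $I$ to $\tau$ in accordance with axiom $TI2$, and $\triangleleft$, whose rules in Table \ref{TRForAPTCG} emit $\tau$ under the conflict side-condition $\sharp(e_1,e_2)$. For $\tau_I$ I would verify directly that $\{(\tau_I(p),\tau_I(q)) : p\approx_{bp} q\}$ is a branching bisimulation: a transition of $\tau_I(p)$ descends from a transition of $p$, which $q$ matches up to a leading block of $\tau$-transitions (Table \ref{TRForTauG}), and applying $\tau_I$ to the entire matching sequence preserves the branching structure because relabelling commutes with the transition rules; rootedness at the top is inherited from the outermost argument step. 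The $\tau$-emitting rules of $\triangleleft$ are handled by the same scheme, noting that their $\tau$ sits strictly below the root and is therefore governed by branching, rather than rooted branching, bisimilarity.

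The main obstacle will be congruence for guarded linear recursion. Using the unfolding rules of Table \ref{TRForGRG}, which equate the outgoing transitions of $\langle X_i|E\rangle$ with those of $t_i(\langle X_1|E\rangle,\ldots,\langle X_n|E\rangle)$, I would set up a relation between the solutions of two specifications whose bodies are congruent and prove it a rooted branching bisimulation by induction on the guarded form of the bodies. The crux is that guardedness (Definition \ref{GLRSG}), which forbids both infinite $\tau$-transition sequences and infinite $\epsilon$-transition sequences, is precisely what prevents an unbounded $\tau$- or $\epsilon$-stutter from arising during the matching, and so keeps the root condition and the termination clauses well-founded. Maintaining coherence of the isomorphism $f$ across the recursive unfolding in the hp-case, together with the bookkeeping for the empty event $\epsilon$ and the guard predicates, is the part I expect to demand the most care.
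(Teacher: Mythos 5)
The paper does not actually supply a proof of this theorem: it is stated bare, as are essentially all of the meta-theorems in the $APTC_G$ sections (the author explicitly defers such proofs elsewhere in the manuscript). So there is no in-paper argument to compare yours against; I can only assess your sketch on its own terms. Your operator-by-operator plan, with the root condition discharging the choice operator and the residuals handled by the non-rooted branching equivalences, is the standard and correct route for theorems of this shape, and your identification of $\tau_I$, the $\tau$-emitting rules of $\triangleleft$, and guarded recursion as the places needing genuine work is accurate.

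Two points deserve more care than your sketch gives them. First, the step ``the residuals fall under ordinary branching bisimilarity \ldots this reduction is immediate'' quietly relies on an auxiliary lemma that you never isolate: that $\approx_{bp}$, $\approx_{bs}$ and $\approx_{bhp}$ are themselves preserved by every operator except $+$ (and by the left argument of $\cdot$ only up to the termination hand-off). Without that lemma the coinduction on residuals does not close --- e.g.\ for $x\cdot y$ the residual pair is $x_1\cdot y$ versus $x_1'\cdot y'$ with only $x_1\approx_{b}x_1'$ in hand, and you must show the witnessing relation $\{(p\cdot y,p'\cdot y'): p\approx_{b}p'\}\cup\approx_{b}$ is again a branching bisimulation, including the case where $p$ terminates and control passes to $y$. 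This is routine but it is a separate induction over the signature, not a corollary of the strong-case congruence theorem. Second, in this guarded setting every configuration carries a data state, so all your witnessing relations live on pairs $\langle C,s\rangle$ (Definitions \ref{RBPSBG}, \ref{BHHPBG}); you need the transfer conditions to respect the $effect$ and $test$ clauses of the transition rules in Table \ref{TRForAPTCG} (in particular the guard rules, which fire only when $test(\phi,s)$ holds, and the $\tau$ rule, which moves $s$ to $\tau(s)$), and in the hp-case the isomorphism $f$ must be extended consistently with these state changes. Neither issue is fatal, but both must appear explicitly for the proof to be complete.
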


We design the axioms for the silent step $\tau$ in Table \ref{AxiomsForTauG}.

\begin{center}
\begin{table}
  \begin{tabular}{@{}ll@{}}
\hline No. &Axiom\\
  $B1$ & $e\cdot\tau=e$\\
  $B2$ & $e\cdot(\tau\cdot(x+y)+x)=e\cdot(x+y)$\\
  $B3$ & $x\parallel\tau=x$\\
  $G26$ & $\phi\cdot\tau=\phi$\\
  $G27$ & $\phi\cdot(\tau\cdot(x+y)+x)=\phi\cdot(x+y)$\\
\end{tabular}
\caption{Axioms of silent step}
\label{AxiomsForTauG}
\end{table}
\end{center}

\begin{theorem}[Elimination theorem of $APTC_G$ with silent step and guarded linear recursion]\label{ETTauG}
Each process term in $APTC_G$ with silent step and guarded linear recursion is equal to a process term $\langle X_1|E\rangle$ with $E$ a guarded linear recursive specification.
\end{theorem}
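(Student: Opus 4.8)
The plan is to prove the statement by structural induction on the process term, showing that the class of processes expressible in the form $\langle X_1|E\rangle$ with $E$ a guarded linear recursive specification (in the sense of Definition \ref{GLRSG}) contains every base case and is closed under all operators of $APTC_G$ with silent step. Throughout I would treat $\tau$ as a fresh atomic action adjoined to $\mathbb{E}$, so that the elimination theorem for $APTC_G$ with linear recursion (Theorem \ref{ETRecursionG}) and the elimination theorem for the finite fragment (Theorem \ref{ETAPTCG}) can be reused almost verbatim for the head-normalization steps, with only the new silent-step axioms added on top.

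First I would dispatch the base cases: an atomic action $e\in\mathbb{E}$, the silent step $\tau$, and a guard $\phi\in G$ are each equal to $\langle X|E\rangle$ for the one-equation specifications $\{X=e\}$, $\{X=\tau\}$, resp. $\{X=\phi\}$, all trivially guarded, while a recursion constant is guarded linear by hypothesis. For the inductive step I would establish, for every operator, a combination lemma: given $p=\langle X_1|E_p\rangle$ and $q=\langle Y_1|E_q\rangle$ with $E_p,E_q$ guarded linear, the terms $p+q$, $p\cdot q$, $p\parallel q$, $p\mid q$, $\Theta(p)$, $p\triangleleft q$ and $\partial_H(p)$ are again of the required form. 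The additive and sequential cases follow the usual $ACP$ construction (merging the two equation systems and, for $\cdot$, redirecting the terminating summands of $E_p$ into $Y_1$). The parallel and communication cases are the substantive ones: here I would use the expansion axioms $P1$--$P10$ and $C1$--$C10$ to push $\parallel$ and $\mid$ through $+$ and $\cdot$, bringing each pairing of summands into the head-normal shape $(a_1\parallel\cdots\parallel a_k)\cdot(\text{remainder})$, and then read off a linear specification whose variables range over pairs of residual variables of $E_p$ and $E_q$. The operators $\Theta$, $\triangleleft$ and $\partial_H$ are absorbed by their distributivity axioms $CE1$--$CE7$, $U1$--$U15$ and $D1$--$D6$, which preserve linearity. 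To guarantee that each head-normalization terminates I would invoke the strong normalization theorem (Theorem \ref{SN}), orienting the axioms of Tables \ref{AxiomsForBATCG}, \ref{AxiomsForAPTCG}, \ref{AxiomsForAPTCG2} and \ref{AxiomsForTauG} as a terminating rewrite system on the basic terms of Definition \ref{BTAPTCG}.

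The main obstacle is establishing \emph{guardedness} of the assembled specification, i.e. verifying the two conditions of Definition \ref{GLRSG}: absence of an infinite sequence of $\tau$-transitions and absence of an infinite sequence of $\epsilon$-transitions. The new silent-step axioms $B1$ ($e\cdot\tau=e$), $B2$, $B3$ ($x\parallel\tau=x$), together with $G26$ and $G27$, are precisely what is needed to absorb the silent steps that head-normalization may expose, so that no equation degenerates into an unguarded $\tau$-cycle; in particular the $\tau$'s produced through the unless axioms $U1$ and $U3$ are always preceded by a genuine atomic step or are removed by $B1$. I would argue that if both $E_p$ and $E_q$ are guarded, then any $\tau$- or $\epsilon$-loop in the combined system projects onto such a loop in one of the components, contradicting the induction hypothesis; this projection argument is where the inductive assumption does the real work, and it is the step I expect to require the most care to make fully rigorous.
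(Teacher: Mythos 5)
The paper states this theorem without any proof at all, so there is nothing of the author's to compare yours against; what you propose is the standard construction from the ACP/APTC literature on which this development is modelled (structural induction on the term, with a combination lemma for each operator that merges two guarded linear specifications into one), and in outline it is the right route. The base cases, the $+$ and $\cdot$ cases, and the product construction for $\parallel$ and $\mid$ with variables indexed by pairs of residual variables of $E_p$ and $E_q$ are all as they should be, and your projection argument for guardedness of the product (an infinite $\tau$- or $\epsilon$-sequence of the pair specification must project to an infinite such sequence in one component) is the correct way to discharge the condition of Definition \ref{GLRSG}.

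Two points need more care than you give them. First, the guard base case $\{X=\phi\}$ and, more generally, guard prefixes surviving into the assembled equations do not literally fit the paper's linear format, whose summands are $(a_{11}\parallel\cdots\parallel a_{1i_1})X_1$ and $(b_{11}\parallel\cdots\parallel b_{1j_1})$ with all letters drawn from $\mathbb{E}\cup\{\tau\}$; you must either extend the format to admit guards as prefixes (evidently what the paper intends, since Definition \ref{GLRSG} speaks of $\epsilon$-transitions between recursion variables and $\tau_I$ renames guards) or show that guards can be absorbed, and this choice is exactly what the $\epsilon$-loop half of the guardedness check turns on. Second, your claim that the $\tau$'s produced by $U1$ and $U3$ ``are always preceded by a genuine atomic step or are removed by $B1$'' is not self-evident: applying $\triangleleft$ or $\Theta$ to a specification can rewrite an action prefix $a$ heading a recursive summand $aX$ into $\tau$, which is precisely the shape that threatens an unguarded $\tau$-cycle; you need to argue from the side conditions $\sharp(e_1,e_2)$, $e_2\leq e_3$ and the transition rules for $\triangleleft$ that such a rewriting cannot close a loop, rather than assert it. Neither point is fatal, but both are where the argument is currently thinner than the rest.
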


\begin{theorem}[Soundness of $APTC_G$ with silent step and guarded linear recursion]\label{SAPTC_GTAUG}
Let $x$ and $y$ be $APTC_G$ with silent step and guarded linear recursion terms. If $APTC_G$ with silent step and guarded linear recursion $\vdash x=y$, then

(1) $x\approx_{rbs} y$.

(2) $x\approx_{rbp} y$.

(3) $x\approx_{rbhp} y$.
\end{theorem}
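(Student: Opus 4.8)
The plan is to prove soundness axiom by axiom, exploiting the congruence property. By the just-established congruence theorem for $APTC_G$ with silent step and guarded linear recursion, each of $\approx_{rbs}$, $\approx_{rbp}$, $\approx_{rbhp}$ is a congruence for every operator in the signature. Since the provability relation $\vdash$ is the least equivalence containing the axioms and closed under substitution and the operators, it therefore suffices to verify that \emph{each individual axiom} and \emph{each recursion principle} is sound for the corresponding equivalence; reflexivity, symmetry, transitivity and replacement are then discharged automatically by congruence. The three equivalences are treated uniformly: one exhibits a single candidate relation and checks the pomset/step/hp transfer conditions together, the step case being the pomset case restricted to pairwise-concurrent multisets and the hp case adding the order-isomorphism bookkeeping $f$ on configurations $\langle C,s\rangle$.

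First I would dispose of the axioms already present in $APTC_G$ without the silent step, namely all of $BATC_G$ together with the parallel, communication, conflict-elimination, unless, encapsulation and guard axioms. These were shown sound modulo the strong equivalences $\sim_p$, $\sim_s$, $\sim_{hp}$ in Theorem~\ref{SAPTCG} and its recursion counterpart~\ref{SAPTC_GRG}. Since strong bisimilarity is contained in rooted branching bisimilarity (e.g. $\sim_s\subseteq\approx_{rbs}$, and likewise for $p$ and $hp$), each such axiom is immediately sound for $\approx_{rbs}$, $\approx_{rbp}$, $\approx_{rbhp}$ as well, so no new work is required. The genuinely new obligations are the silent-step axioms $B1$: $e\cdot\tau=e$, $B2$: $e\cdot(\tau\cdot(x+y)+x)=e\cdot(x+y)$, $B3$: $x\parallel\tau=x$, and their guard analogues $G26$: $\phi\cdot\tau=\phi$ and $G27$: $\phi\cdot(\tau\cdot(x+y)+x)=\phi\cdot(x+y)$. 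For each I would build an explicit rooted branching bisimulation on the labelled transition system generated by the rules of Tables~\ref{TRForAPTCG} and~\ref{TRForTauG}, relating the two sides together with the identity on all reachable residuals. The leading $e$ (respectively $\phi$) in $B1$, $B2$, $G26$, $G27$ is exactly what supplies the \emph{rootedness}: the first transition must be matched by an identical transition, after which the states fall into ordinary branching bisimilarity, where the trailing $\tau$ of $B1$/$G26$ is absorbed and the branch $\tau\cdot(x+y)+x$ of $B2$/$G27$ is shown branching-equivalent to $x+y$. For $B3$ one uses the transition rules for $\parallel$ and the fact that $\tau$ contributes no visible event, so that $x\parallel\tau$ and $x$ have matching step/pomset transitions and matching termination predicate $\downarrow$.

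Finally I would treat the recursion principles. $RDP$ is sound essentially by definition: the transition rules for guarded recursion (Table~\ref{TRForGRG}) give $\langle X_i|E\rangle$ and $t_i(\langle X_1|E\rangle,\cdots,\langle X_n|E\rangle)$ literally the same outgoing transitions and the same termination behaviour, so the identity-based relation is a rooted branching bisimulation. The delicate principle is $RSP$: one must show that a \emph{guarded} linear recursive specification has a unique solution up to $\approx_{rbs}$, $\approx_{rbp}$, $\approx_{rbhp}$. Here guardedness (Definition~\ref{GLRSG}) is essential, since the absence of infinite $\tau$-sequences and infinite $\epsilon$-sequences is what prevents two distinct processes from both solving $E$ while differing only inside an unguarded $\tau$- or $\epsilon$-loop. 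I would relate any two solutions $\{p_i\}$ and $\{q_i\}$ by the closure of $\{(p_i,q_i)\}$ under the recursion structure and argue, using guardedness to drive an induction on the depth to which the guards have been unwound, that this relation is a rooted branching bisimulation.

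I expect \emph{this $RSP$ step, together with the branching laws $B2$/$G27$, to be the main obstacle}. The rootedness condition has to be verified precisely at the first step while silent moves are permitted only thereafter, and for $RSP$ the guardedness hypothesis must be converted into a genuine well-founded measure so that matching transitions of the two solutions can always be produced rather than merely simulated in one direction. Throughout, the data-state component $\langle C,s\rangle$ and the termination predicate $\downarrow$ must be tracked carefully, so that the candidate relations remain bisimulations and respect the effect of guards and actions on the state $s$.
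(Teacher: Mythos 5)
The paper states this theorem without proof (it is inherited from the cited APTC development), and your proposal follows exactly the route such soundness proofs take: invoke the congruence theorem to reduce the problem to checking each axiom and recursion principle separately, observe that soundness of the non-silent axioms modulo $\sim_{p}$, $\sim_{s}$, $\sim_{hp}$ transfers because strong truly concurrent bisimilarity is contained in the rooted branching variants, and exhibit explicit rooted branching bisimulations for $B1$--$B3$, $G26$, $G27$, $RDP$ and $RSP$ (the last being the uniqueness of solutions of guarded specifications). Your plan is correct and essentially the expected one; the only point to watch is $B3$, where a step $\{e,\tau\}$ of $x\parallel\tau$ must be matched against the step $\{e\}$ of $x$ by absorbing the silent event into the weak pomset/step transition, which is precisely what the branching (rather than strong) transfer conditions, together with the data-state invariance $s=\tau(s)$, allow.
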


\begin{theorem}[Completeness of $APTC_G$ with silent step and guarded linear recursion]\label{CAPTC_GTAUG}
Let $p$ and $q$ be closed $APTC_G$ with silent step and guarded linear recursion terms, then,

(1) if $p\approx_{rbs} q$ then $p=q$.

(2) if $p\approx_{rbp} q$ then $p=q$.

(3) if $p\approx_{rbhp} q$ then $p=q$.
\end{theorem}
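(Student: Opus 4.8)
The plan is to follow the standard ACP-style completeness argument: reduce both terms to recursive normal forms, then use the recursive specification principle $RSP$ as the engine for turning semantic equivalence into provable equality. First I would invoke the elimination theorem for this signature (Theorem \ref{ETTauG}) to rewrite $p$ and $q$, provably, as $\langle X_1|E_1\rangle$ and $\langle Y_1|E_2\rangle$, where $E_1$ and $E_2$ are guarded linear recursive specifications. Since provable equality is sound for $\approx_{rbs}$ (Theorem \ref{SAPTC_GTAUG}), the hypothesis $p\approx_{rbs}q$ transports to $\langle X_1|E_1\rangle\approx_{rbs}\langle Y_1|E_2\rangle$, so it suffices to prove completeness for such linear representatives. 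The three cases $\approx_{rbs}$, $\approx_{rbp}$, $\approx_{rbhp}$ are treated uniformly, the only difference being whether transition labels are read as sets or as pomsets and whether the order-isomorphism $f$ must be carried along (in the hp-case).

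The core of the argument is to manufacture a single guarded linear recursive specification $E$ of which both $\langle X_1|E_1\rangle$ and $\langle Y_1|E_2\rangle$ are provable solutions; $RSP$ then forces them equal. I would build $E$ over recursion variables $Z_{ij}$ indexed by pairs $(X_i,Y_j)$ for which $\langle X_i|E_1\rangle$ and $\langle Y_j|E_2\rangle$ are (non-rooted) branching bisimilar, as witnessed by the relation underlying $p\approx_{rbs}q$. For each such pair I form an equation whose summands are obtained by matching the linear summands of the $E_1$-equation for $X_i$ against those of the $E_2$-equation for $Y_j$ under the bisimulation: a visible, possibly parallel, summand $(a_1\parallel\cdots\parallel a_n)X_{i'}$ must be matched by an equivalent summand leading to some $Y_{j'}$ with $(X_{i'},Y_{j'})$ again related. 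One then verifies that the family $\{\langle X_i|E_1\rangle\}$, reindexed along the bisimulation, provably satisfies $E$, and symmetrically for $\{\langle Y_j|E_2\rangle\}$; rootedness of $\approx_{rbs}$ is exactly what guarantees that the initial equation is matched without a leading $\tau$, so that the two solutions coincide at the start variable $Z_{11}$ rather than merely up to branching equivalence. Applying $RSP$ at $Z_{11}$ yields $\langle X_1|E_1\rangle=\langle Y_1|E_2\rangle$.

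The main obstacle is the silent-step bookkeeping in the matching step. Unlike strong bisimulation, a related pair may be connected to another related pair by a stuttering $\tau$-transition that stays inside the branching bisimulation relation, so the naive pair-construction need not yield equations already in linear form, and could a priori introduce $\tau$-cycles. I would resolve this by absorbing such silent moves using the branching-law axioms $B1$, $B2$, $B3$ together with their guard counterparts $G26$, $G27$: whenever a $\tau$-summand leads to a state branching-equivalent to the current one, it is collapsed via $e\cdot(\tau\cdot(x+y)+x)=e\cdot(x+y)$, leaving a genuinely linear, $\tau$-saturated equation. Crucially, guardedness of $E_1$ and $E_2$ (Definition \ref{GLRSG}, which forbids infinite $\tau$- and $\epsilon$-sequences) guarantees that this collapsing terminates and that the resulting specification $E$ is itself guarded, so that $RSP$ applies. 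No fairness rule such as $CFAR$ is required here, since $\tau_I$-induced $\tau$-loops enter only in the subsequent abstraction-operator layer.
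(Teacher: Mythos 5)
The paper does not actually contain a proof of this theorem: it is stated without argument, as part of the $APTC_G$ background recalled from the author's earlier work on APTC, so there is no in-paper proof to compare against. On its own merits your proposal is the standard and essentially correct completeness argument for rooted branching equivalences in an ACP-style setting: eliminate to guarded linear normal forms $\langle X_1|E_1\rangle$ and $\langle Y_1|E_2\rangle$, build a common guarded linear specification $E$ over pairs of branching-bisimilar variables, verify that both families of solutions provably satisfy $E$ (using $B1$, $B2$ and the guard analogues $G26$, $G27$ to absorb inert $\tau$-summands), and conclude by $RSP$; your observations that rootedness is what pins down the start variables exactly and that $CFAR$ is not yet needed at this layer are both correct. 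The one point I would tighten is the construction of $E$: rather than ``matching summands and then collapsing,'' the cleaner formulation is to let the equation for $Z_{ij}$ consist of those summands $(a_1\parallel\cdots\parallel a_n)X_{i'}$ of the $X_i$-equation for which either the label is visible or the target is \emph{not} branching-equivalent to the current pair (and symmetrically for $Y_j$), so that inert $\tau$-summands are excluded at the outset; this makes linearity and guardedness of $E$ immediate instead of depending on a terminating collapsing procedure, though your appeal to guardedness of $E_1$, $E_2$ to justify termination is not wrong. You should also note explicitly that in the guarded setting the data-state component rides along unchanged through this construction (the bisimulations are over configurations paired with states $s\in S$), which is why the same argument covers all three equivalences uniformly.
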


The unary abstraction operator $\tau_I$ ($I\subseteq \mathbb{E}\cup G_{at}$) renames all atomic events or atomic guards in $I$ into $\tau$. $APTC_G$ with silent step and abstraction operator is called $APTC_{G_{\tau}}$. The transition rules of operator $\tau_I$ are shown in Table \ref{TRForAbstractionG}.

\begin{center}
    \begin{table}
        $$\frac{\langle x,s\rangle\xrightarrow{e}\langle\surd,s'\rangle}{\langle\tau_I(x),s\rangle\xrightarrow{e}\langle\surd,s'\rangle}\quad e\notin I
        \quad\quad\frac{\langle x,s\rangle\xrightarrow{e}\langle x',s'\rangle}{\langle\tau_I(x),s\rangle\xrightarrow{e}\langle\tau_I(x'),s'\rangle}\quad e\notin I$$

        $$\frac{\langle x,s\rangle\xrightarrow{e}\langle\surd,s'\rangle}{\langle\tau_I(x),s\rangle\xrightarrow{\tau}\langle\surd,\tau(s)\rangle}\quad e\in I
        \quad\quad\frac{\langle x,s\rangle\xrightarrow{e}\langle x',s'\rangle}{\langle\tau_I(x),s\rangle\xrightarrow{\tau}\langle\tau_I(x'),\tau(s)\rangle}\quad e\in I$$
        \caption{Transition rule of the abstraction operator}
        \label{TRForAbstractionG}
    \end{table}
\end{center}

\begin{theorem}[Conservitivity of $APTC_{G_{\tau}}$ with guarded linear recursion]
$APTC_{G_{\tau}}$ with guarded linear recursion is a conservative extension of $APTC_G$ with silent step and guarded linear recursion.
\end{theorem}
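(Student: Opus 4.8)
The plan is to invoke the general conservative-extension meta-theorem, Theorem \ref{TCE}. Write $T_0$ for the TSS of $APTC_G$ with silent step and guarded linear recursion, and let $T_1$ consist of the two transition rules for the abstraction operator $\tau_I$ displayed in Table \ref{TRForAbstractionG}. Since $\tau_I$ is the only operator of $APTC_{G_{\tau}}$ not already present in $APTC_G$ with silent step, the sum $T_0\oplus T_1$ is precisely the TSS of $APTC_{G_{\tau}}$ with guarded linear recursion, so it suffices to check the hypotheses of Theorem \ref{TCE}.

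First I would verify that $T_0$ is source-dependent. This is a routine inspection of the rules: for the constants $\epsilon$, $e$, $\phi$, $\tau$ the sources carry all (zero) variables; for $\cdot$, $+$, $\parallel$, $\mid$, $\Theta$, $\triangleleft$, $\partial_H$ and the guarded-recursion rules, every variable occurring in a premise or in the target of the conclusion already occurs in the source, and the inductive clause of source-dependency propagates dependency through each positive premise $t\xrightarrow{a}t'$. Hence all variables of every rule of $T_0$ are source-dependent.

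Next, the freshness condition of Theorem \ref{TCE} is immediate for $T_1$. The function symbol $\tau_I$ lies in $\Sigma_1\setminus\Sigma_0$, so every rule of $T_1$ has source $\langle\tau_I(x),s\rangle$, which contains a fresh symbol and is therefore fresh. Thus the first disjunct of the condition is satisfied by all of $T_1$, with no need to analyse the labels produced; in particular it is harmless that the label $\tau$ emitted when $e\in I$ is not fresh (the silent step already belongs to $T_0$). Applying Theorem \ref{TCE} then yields that $T_0\oplus T_1$ is a conservative extension of $T_0$.

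The main obstacle is the standing requirement that $T_0$ and $T_0\oplus T_1$ be positive after reduction, since $APTC_G$ contains rules with negative premises: the race-condition rules for $\parallel$ and the rules for the unless operator $\triangleleft$ use premises of the form $\langle y,s\rangle\nrightarrow^{e}$. Here I would appeal to the stratification already underlying the operational semantics of $APTC_G$ (the same one justifying the earlier soundness and congruence results), and observe that the two $\tau_I$-rules are purely positive and introduce no new negative dependencies, so adjoining them preserves positivity after reduction. With source-dependency, freshness, and positivity after reduction all in place, Theorem \ref{TCE} gives the claim.
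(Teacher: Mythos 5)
Your proposal is correct and follows exactly the route the paper's framework intends: the paper states this theorem without giving a proof, but its Theorem \ref{TCE} (conservative extension via source-dependency and freshness) is set up precisely for this purpose, and your verification --- every new rule has the fresh source $\tau_I(x)$, the base TSS is source-dependent, and positivity after reduction must be checked because of the negative premises in the rules for $\parallel$ and $\triangleleft$ --- is the standard instantiation of that meta-theorem. The only trivial slip is describing Table \ref{TRForAbstractionG} as containing ``two'' rules when it displays four (termination and non-termination variants for $e\notin I$ and $e\in I$), which does not affect the argument.
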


\begin{theorem}[Congruence theorem of $APTC_{G_{\tau}}$ with guarded linear recursion]
Rooted branching truly concurrent bisimulation equivalences $\approx_{rbp}$, $\approx_{rbs}$ and $\approx_{rbhp}$ are all congruences with respect to $APTC_{G_{\tau}}$ with guarded linear recursion.
\end{theorem}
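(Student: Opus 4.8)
The plan is to reduce the theorem to a single new case. The operators $+$, $\cdot$, $\parallel$, $\mid$, $\Theta$, $\triangleleft$, $\partial_H$ together with guarded linear recursion were already shown to respect $\approx_{rbp}$, $\approx_{rbs}$ and $\approx_{rbhp}$ in the congruence theorem for $APTC_G$ with silent step and guarded linear recursion. Since $APTC_{G_{\tau}}$ adds only the abstraction operator $\tau_I$ to that signature, and, by the preceding conservativity result, changes nothing about the transitions of the old terms, it suffices to prove that $\tau_I$ is compatible with each of the three equivalences, i.e. that $x\approx y$ implies $\tau_I(x)\approx\tau_I(y)$ for $\approx\in\{\approx_{rbs},\approx_{rbp},\approx_{rbhp}\}$.

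First I would record a transfer lemma for $\tau_I$ read off directly from Table \ref{TRForAbstractionG}: every transition of $\langle\tau_I(p),s\rangle$ arises from a unique transition $\langle p,s\rangle\xrightarrow{e}\langle p',s'\rangle$, and it is labelled $e$ with residual $\tau_I(p')$ when $e\notin I$, or labelled $\tau$ (with state $\tau(s)$) and residual $\tau_I(p')$ when $e\in I$; the termination predicate is preserved verbatim. This yields a one-to-one correspondence between the branching structure of $p$ and that of $\tau_I(p)$, the only difference being the relabelling of $I$-events to $\tau$.

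Next, for the step case, assume $x\approx_{rbs}y$ witnessed by a rooted branching step bisimulation $R$ (Definition \ref{RBPSBG}), with underlying branching step bisimulation $R_b$ (Definition \ref{BPSBG}) relating the residuals. I would take
$$R'=\{(\langle\tau_I(C_1),s\rangle,\langle\tau_I(C_2),s\rangle)\mid(\langle C_1,s\rangle,\langle C_2,s\rangle)\in R\}$$
together with the analogous closure of $R_b$, and verify that $R'$ is a rooted branching step bisimulation whose companion is a branching step bisimulation. Rootedness handles the first step: because $R$ matches the initial transition of $x$ exactly, and the transfer lemma relabels the two matched steps identically (an $I$-label becomes $\tau$ on both sides simultaneously), the first move of $\tau_I(x)$ is matched by $\tau_I(y)$ up to $\approx_{bs}$ on the residuals. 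For the residuals I would check the two clauses of branching step bisimulation: a visible step with label set disjoint from $I$ transfers unchanged, while a step whose labels lie in $I$ becomes a $\tau^*$-move and is absorbed by the ``either $X\equiv\tau^*$'' alternative, the data-state bookkeeping $s'\in\tau(s)$ matching the corresponding clause of Definition \ref{BPSBG}. The pomset case is identical after replacing label sets by pomsets and comparing only the visible parts $\hat{C_1},\hat{C_2}$, and the hp case additionally carries the posetal isomorphism $f$, for which one checks that the relabelling to $\tau$ keeps $f$ an isomorphism on the $\tau$-stripped configurations, exactly as the weakly posetal product of Definition \ref{RBHHPBG} requires.

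The main obstacle I expect is the genuinely truly-concurrent situation in which a single step (or pomset) $X$ straddles $I$ and its complement, so that after abstraction $X$ splits into a visible part to be matched and an $I$-part that must be digested into the $\tau$-branching. Verifying that the visible sub-pomset is still matched while the $I$-part is absorbed by the branching clauses, and simultaneously keeping the pomset isomorphism (respectively the hp isomorphism $f$) well defined under this partial relabelling and the accompanying $\tau(s)$ state change, is the delicate bookkeeping on which the argument turns; everything else is a routine case analysis driven by the transfer lemma.
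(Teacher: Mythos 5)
The paper states this theorem without any accompanying proof (like most results in this manuscript it is recalled from the prior APTC development), so there is no argument of the paper's to compare yours against; what follows assesses your proposal on its own terms. Your overall strategy is the standard one and is sound as a plan: by the preceding conservativity theorem the operators already present in $APTC_G$ with silent step and guarded linear recursion keep their transitions, so their compatibility with $\approx_{rbs}$, $\approx_{rbp}$ and $\approx_{rbhp}$ carries over from the earlier congruence theorem, and the only new obligation is that $x\approx y$ implies $\tau_I(x)\approx\tau_I(y)$. The transfer lemma you read off Table \ref{TRForAbstractionG} and the candidate relation obtained by applying $\tau_I$ to both components of a witnessing bisimulation are exactly the right ingredients, and the single-event cases (hence the first move required by rootedness, and the hp-clauses driven by single events) go through as you describe.

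The gap is that you name, but do not resolve, the one point where the argument is genuinely non-routine: a step or pomset $X$ that meets both $I$ and its complement. After abstraction the label of that transition is neither wholly silent nor $\tau$-free, so neither alternative of Definition \ref{BPSBG} applies as stated --- the ``$X\equiv\tau^*$'' clause needs the whole label to be silent, while the matching clause uses $\xRightarrow{X}$, which is only defined for $X\subseteq\hat{\mathbb{E}}$. To close this you must either (a) show that such a mixed transition of $\langle\tau_I(p),s\rangle$ can be decomposed, with the $I$-part absorbed by the branching clauses and the visible sub-pomset matched, and that the residual pair still lies in your relation $R'$ with the data states agreeing (the $\tau(s)$ bookkeeping does not obviously commute with splitting a step); or (b) exploit the guarded \emph{linear} format, in which every summand is a parallel composition of atomic actions, to argue that the troublesome mixed steps reduce to a tractable normal form. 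For the hp-variant you additionally assert, rather than check, that the isomorphism $f$ restricted to the $\tau$-stripped configurations survives this partial relabelling. Until one of these routes is actually carried out, your proposal is a correct outline whose central case remains open.
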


We design the axioms for the abstraction operator $\tau_I$ in Table \ref{AxiomsForAbstractionG}.

\begin{center}
\begin{table}
  \begin{tabular}{@{}ll@{}}
\hline No. &Axiom\\
  $TI1$ & $e\notin I\quad \tau_I(e)=e$\\
  $TI2$ & $e\in I\quad \tau_I(e)=\tau$\\
  $TI3$ & $\tau_I(\delta)=\delta$\\
  $TI4$ & $\tau_I(x+y)=\tau_I(x)+\tau_I(y)$\\
  $TI5$ & $\tau_I(x\cdot y)=\tau_I(x)\cdot\tau_I(y)$\\
  $TI6$ & $\tau_I(x\parallel y)=\tau_I(x)\parallel\tau_I(y)$\\
  $G28$ & $\phi\notin I\quad \tau_I(\phi)=\phi$\\
  $G29$ & $\phi\in I\quad \tau_I(\phi)=\tau$\\
\end{tabular}
\caption{Axioms of abstraction operator}
\label{AxiomsForAbstractionG}
\end{table}
\end{center}

\begin{theorem}[Soundness of $APTC_{G_{\tau}}$ with guarded linear recursion]\label{SAPTC_GABSG}
Let $x$ and $y$ be $APTC_{G_{\tau}}$ with guarded linear recursion terms. If $APTC_{G_{\tau}}$ with guarded linear recursion $\vdash x=y$, then

(1) $x\approx_{rbs} y$.

(2) $x\approx_{rbp} y$.

(3) $x\approx_{rbhp} y$.
\end{theorem}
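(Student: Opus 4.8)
The plan is to exploit the modular structure that the paper has already set up, so that soundness reduces to a finite, purely local verification. Provable equality $APTC_{G_{\tau}} \vdash x = y$ is, by definition, derivable from the axioms by reflexivity, symmetry, transitivity, and closure under the operators. Since the preceding Congruence theorem guarantees that $\approx_{rbs}$, $\approx_{rbp}$ and $\approx_{rbhp}$ are all congruences with respect to every operator of $APTC_{G_{\tau}}$ with guarded linear recursion, each of these three relations is a congruence and an equivalence. Hence it suffices to show that each individual axiom $s = t$ of the system is sound, i.e. $s \approx_{rbs} t$, $s \approx_{rbp} t$ and $s \approx_{rbhp} t$; soundness of an arbitrary derivation then follows by a routine induction on its length. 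I would state this reduction first and carry the three equivalences in parallel throughout, noting that the step and pomset cases differ only in whether the transition label $X$ is required to be a pairwise-concurrent set, while the hp case additionally carries the order-isomorphism $f$ on configurations.

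The second observation that trims the work is that the axioms of $APTC_{G_{\tau}}$ consist of the axioms of $APTC_G$ with silent step and guarded linear recursion, together with the genuinely new abstraction axioms $TI1$--$TI6$, $G28$ and $G29$ of Table~\ref{AxiomsForAbstractionG}. The former block is already sound by Theorem~\ref{SAPTC_GTAUG} (which in turn rests on Theorem~\ref{SAPTCG}), so nothing new is required there. Thus I would concentrate entirely on the abstraction axioms. For the defining cases $TI1$, $TI2$ and $TI3$ and the guard cases $G28$, $G29$, soundness is immediate from the transition rules for $\tau_I$ in Table~\ref{TRForAbstractionG}: both sides of each equation admit exactly the same initial moves (with events in $I$ relabelled to $\tau$, which also move the internal state via $\tau(s)$), so the identity-like relation pairing syntactically corresponding residuals is a rooted branching bisimulation of each kind. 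For the distributivity laws $TI4$ ($\tau_I(x+y)=\tau_I(x)+\tau_I(y)$), $TI5$ ($\tau_I(x\cdot y)=\tau_I(x)\cdot\tau_I(y)$) and $TI6$ ($\tau_I(x\parallel y)=\tau_I(x)\parallel\tau_I(y)$), I would construct for each the explicit relation $R$ that pairs a configuration reached on the left with the configuration reached by the matching moves on the right, together with $\{(\langle\emptyset,\emptyset\rangle,\langle\emptyset,\emptyset\rangle)\}$, and check the four transfer clauses of Definitions~\ref{RBPSBG} and~\ref{RBHHPBG} against the transition rules, verifying in particular the rootedness (first-step) condition and the termination-predicate clauses.

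The main obstacle I anticipate is $TI6$ in the hp-bisimulation case. Because $\tau_I$ acts homomorphically across the parallel operator, a step of $\tau_I(x\parallel y)$ may relabel an action inside either branch to $\tau$ while the concurrency structure between the branches must be preserved exactly; I must show that the posetal isomorphism $f$ can be extended consistently at each matched move and that the relation stays inside $\langle\mathcal{C}(\mathcal{E}_1),S\rangle\overline{\times}\langle\mathcal{C}(\mathcal{E}_2),S\rangle$, i.e. $f$ remains an order-isomorphism on the non-$\tau$ events after abstraction (this is exactly where the weakly posetal product and the convention $f(\tau^*)=f(\tau^*)$ are used). A secondary delicate point is the interaction of the branching axioms $G26$ and $G27$ with abstraction-induced $\tau$'s: there one must confirm that the $\tau$ produced by relabelling a guard in $I$ genuinely satisfies the branching clause (the ``either $X\equiv\tau^*$ and the pair stays in $R$'' alternative) rather than the initial rooted clause. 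I would handle these by treating the first transition separately to secure rootedness and then appealing to the already-established branching bisimilarity of the residuals via the congruence theorem, so that no infinite descent into the recursion is needed beyond what guardedness of $E$ already guarantees.
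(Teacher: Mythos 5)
The paper states Theorem~\ref{SAPTC_GABSG} without any proof (as it does for all the meta-theorems of the underlying algebra, deferring to the cited prior work on APTC), so there is no in-paper argument to compare against. Your proposal is the standard soundness argument for such systems --- reduce to per-axiom verification via the congruence theorem, inherit the axioms of the sublanguage from Theorem~\ref{SAPTC_GTAUG}, and check the new abstraction axioms $TI1$--$TI6$, $G28$, $G29$ directly against the transition rules of $\tau_I$, with the extra care you describe for $TI6$ under hp-bisimulation and for rootedness of the first step --- and this is essentially the intended approach; it is correct.
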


Though $\tau$-loops are prohibited in guarded linear recursive specifications (see Definition \ref{GLRSG}) in a specifiable way, they can be constructed using the abstraction operator, for example, there exist $\tau$-loops in the process term $\tau_{\{a\}}(\langle X|X=aX\rangle)$. To avoid $\tau$-loops caused by $\tau_I$ and ensure fairness, the concept of cluster and $CFAR$ (Cluster Fair Abstraction Rule) \cite{CFAR} are still needed.

\begin{theorem}[Completeness of $APTC_{G_{\tau}}$ with guarded linear recursion and $CFAR$]\label{CCFARG}
Let $p$ and $q$ be closed $APTC_{G_{\tau}}$ with guarded linear recursion and $CFAR$ terms, then,

(1) if $p\approx_{rbs} q$ then $p=q$.

(2) if $p\approx_{rbp} q$ then $p=q$.

(3) if $p\approx_{rbhp} q$ then $p=q$.
\end{theorem}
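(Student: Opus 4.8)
The plan is to follow the same pattern that establishes completeness of $APTC_G$ with linear recursion (Theorem \ref{CAPTC_GRG}) and of $APTC_G$ with silent step and guarded linear recursion (Theorem \ref{CAPTC_GTAUG}), adapting it to the presence of the abstraction operator $\tau_I$, where the new ingredient is $CFAR$. I would treat the three equivalences $\approx_{rbs}$, $\approx_{rbp}$ and $\approx_{rbhp}$ uniformly: the only difference is the shape of the transition labels (a set of pairwise-concurrent events for the step case, an arbitrary pomset for the pomset case, and the accompanying order-isomorphism $f$ for the hp case), and the argument is insensitive to this choice once the normal forms are fixed.

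First I would put both $p$ and $q$ into a canonical recursive shape. By the elimination theorem for $APTC_G$ with silent step and guarded linear recursion (Theorem \ref{ETTauG}), each closed term \emph{without} $\tau_I$ is provably equal to some $\langle X_1|E\rangle$ with $E$ guarded linear. The complication is that $\tau_I$ may introduce $\tau$-loops, so a closed $APTC_{G_{\tau}}$ term need not directly reduce to a guarded linear specification. Here I would push $\tau_I$ inward using $TI1$--$TI6$ together with $G28$, $G29$ until it acts on a guarded linear body, and then apply $CFAR$ to collapse each cluster of $\tau$-loops created this way into its exits. The outcome is that $p$ and $q$ are provably equal to $\langle X_1|E\rangle$ and $\langle Y_1|F\rangle$ respectively, with $E$ and $F$ genuinely guarded linear recursive specifications over $\mathbb{E}\cup\{\tau\}$ and the guards.

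Second, with both sides in guarded linear form, I would run the standard bisimulation-as-specification construction. Given a rooted branching bisimulation $R$ witnessing the assumed equivalence $p\approx_{rbs} q$ (resp. $\approx_{rbp}$, $\approx_{rbhp}$), I form a new recursive specification $G$ whose variables $Z_{(s,t)}$ are indexed by pairs of reachable states $s$ of $\langle X_1|E\rangle$ and $t$ of $\langle Y_1|F\rangle$ that are branching related, and whose equations record the matched non-$\tau$ summands, with the inert $\tau$-steps absorbed through $B1$, $B2$, $G26$ and $G27$. The rootedness of $R$ is exactly what guarantees that the top equation for $Z_{(X_1,Y_1)}$ is itself guarded (the initial step may not be absorbed), so $G$ is again guarded linear. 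I would then verify that both the $E$-solution and the $F$-solution satisfy $G$ up to provable equality, so that $\langle X_1|E\rangle$ and $\langle Y_1|F\rangle$ are two solutions of one guarded linear specification; $RSP$ then yields $\langle X_1|E\rangle=\langle Y_1|F\rangle$, hence $p=q$. Soundness (Theorem \ref{SAPTC_GABSG}) is used throughout to keep the intermediate equalities semantically consistent.

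The main obstacle is the second half of the first step: verifying that, after distributing $\tau_I$ over a guarded linear body, the residual $\tau$-loops are organized into clusters to which $CFAR$ genuinely applies, and that applying $CFAR$ returns a specification that is once more guarded and linear. This is where the branching (rather than strong) nature of the equivalence bites, since one must show that abstracting internal events does not destroy the linear/guarded form in a way the combined-specification argument cannot repair. In particular, handling the ``either $X\equiv\tau^*$ or $\ldots$'' clause of branching bisimulation, so that inert $\tau$'s on one side are matched by possibly empty $\tau$-sequences on the other while keeping the constructed specification guarded, is the delicate bookkeeping at the heart of the proof.
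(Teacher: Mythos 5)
The paper states Theorem \ref{CCFARG} without any proof (it is recalled, like Theorem \ref{CCFAR}, from the prior work \cite{APTC} and \cite{CFAR}), so there is no in-paper argument to compare yours against; judged on its own, your outline is the standard and essentially correct one. Your first stage --- eliminate to a guarded linear specification, push $\tau_I$ inward via $TI1$--$TI6$, $G28$, $G29$, and then use $CFAR$ to collapse each cluster of internal transitions into its exits so that the result is again a guarded linear specification --- is precisely the key lemma this completeness theorem needs beyond Theorem \ref{CAPTC_GTAUG}, and you correctly identify the delicate point (showing the residual $\tau$-loops really do form clusters with well-defined exits, and that the collapsed specification is still guarded and linear). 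The one place you do more work than necessary is the second stage: once $p$ and $q$ are both provably equal to terms of the form $\langle X_1|E\rangle$ with $E$ guarded linear, you can invoke the already-established completeness of $APTC_G$ with silent step and guarded linear recursion (Theorem \ref{CAPTC_GTAUG}) directly, rather than re-running the combined-specification-plus-$RSP$ construction; re-deriving it is not wrong, but it duplicates a result the development already provides. With that simplification your proof reduces to the reduction lemma plus two citations, which is exactly how the corresponding result is proved in the process-algebra literature the paper builds on.
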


\newpage\section{Secure APTC}\label{saptc}

Cryptography mainly includes two aspects: the cryptographic operations and security protocols. The former includes symmetric and asymmetric encryption/decryption, hash, digital signatures,
message authentication codes, random sequence generation, and XOR, etc. The latter includes the computational logic driven by the security application logics among the cryptographic
operations.

In this chapter, we model the above two cryptographic properties by APTC ($APTC_G$). In section \ref{se}, we model symmetric encryption/decryption by APTC. And we model asymmetric
encryption/decryption, hash, digital signatures, message authentication codes, random sequence generation, blind signatures, and XOR in section \ref{ae}, \ref{hash}, \ref{ds}, \ref{mac}, \ref{rsg},
\ref{bs},\ref{xor}. In section \ref{ec}, we extended the communication merge to support data substitution. Finally, in section \ref{asp}, we show that how to analyze the security protocols
by use of APTC ($APTC_G$).

\subsection{Symmetric Encryption}\label{se}

In the symmetric encryption and decryption, there uses only one key $k$. The inputs of symmetric encryption are the key $k$ and the plaintext $D$ and the output is the ciphertext, so
we treat the symmetric encryption as an atomic action denoted $enc_k(D)$. We also use $ENC_k(D)$ to denote the ciphertext output. The inputs of symmetric
decryption are the same key $k$ and the ciphertext $ENC_k(D)$ and output is the plaintext $D$, we also treat the symmetric decryption as an atomic action $dec_k(ENC_k(D))$. And we also
use $DEC_k(ENC_k(D))$ to denote the output of the corresponding decryption.

For $D$ is plaintext, it is obvious that $DEC_k(ENC_k(D))=D$ and $enc_k(D)\leq dec_k(ENC_k(D))$, where $\leq$ is the causal relation; and for $D$ is the ciphertext, $ENC_k(DEC_k(D))=D$
and $dec_k(D)\leq enc_k(DEC_k(D))$ hold.

\subsection{Asymmetric Encryption}\label{ae}

In the asymmetric encryption and decryption, there uses two keys: the public key $pk_s$ and the private key $sk_s$ generated from the same seed $s$. The inputs of asymmetric encryption are
the key $pk_s$ or $sk_s$ and the plaintext $D$ and the output is the ciphertext, so
we treat the asymmetric encryption as an atomic action denoted $enc_{pk_s}(D)$ or $enc_{sk_s}(D)$. We also use
$ENC_{pk_s}(D)$ and $ENC_{sk_s}(D)$ to denote the ciphertext outputs. The inputs of asymmetric
decryption are the corresponding key $sk_s$ or $pk_s$ and the ciphertext $ENC_{pk_s}(D)$ or $ENC_{sk_s}(D)$, and output is the plaintext $D$, we also treat the asymmetric decryption
as an atomic action $dec_{sk_s}(ENC_{pk_s}(D))$ and $dec_{pk_s}(ENC_{sk_s}(D))$. And we also
use $DEC_{sk_s}(ENC_{pk_s}(D))$ and $DEC_{pk_s}(ENC_{sk_s}(D))$ to denote the corresponding decryption outputs.

For $D$ is plaintext, it is obvious that $DEC_{sk_s}(ENC_{pk_s}(D))=D$ and $DEC_{pk_s}(ENC_{sk_s}(D))=D$, and $enc_{pk_s}(D)\leq dec_{sk_s}(ENC_{pk_s}(D))$ and
$enc_{sk_s}(D)\leq dec_{pk_s}(ENC_{sk_s}(D))$, where $\leq$ is the causal relation; and for $D$ is the ciphertext, $ENC_{sk_s}(DEC_{pk_s}(D))=D$ and $ENC_{pk_s}(DEC_{sk_s}(D))=D$,
and $dec_{pk_s}(D)\leq enc_{sk_s}(DEC_{pk_s}(D))$ and $dec_{sk_s}(D)\leq enc_{pk_s}(DEC_{sk_s}(D))$.

\subsection{Hash}\label{hash}

The hash function is used to generate the digest of the data. The input of the hash function $hash$ is the data $D$ and the output is the digest of the data. We treat the hash function
as an atomic action denoted $hash(D)$, and we also use $HASH(D)$ to denote the output digest.

For $D_1=D_2$, it is obvious that $HASH(D_1)=HASH(D_2)$.

\subsection{Digital Signatures}\label{ds}

Digital signature uses the private key $sk_s$ to encrypt some data and the public key $pk_s$ to decrypt the encrypted data to implement the so-called non-repudiation. The inputs of sigh function
are some data $D$ and the private key $sk_s$ and the output is the signature. We treat the signing function as an atomic action $sign_{sk_s}(D)$, and also use $SIGN_{sk_s}(D)$ to denote
the signature. The inputs of the de-sign function are the public key $pk_s$ and the signature $SIGN_{sk_s}(D)$, and the output is the original data $D$. We also treat the de-sign function
as an atomic action $de\textrm{-}sign_{pk_s}(SIGN_{sk_s}(D))$, and also we use $DE\textrm{-}SIGN_{pk_s}(SIGN_{sk_s}(D))$ to denote the output of the de-sign action.

It is obvious that $DE\textrm{-}SIGN_{pk_s}(SIGN_{sk_s}(D))=D$.

\subsection{Message Authentication Codes}\label{mac}

MAC (Message Authentication Code) is used to authenticate data by symmetric keys $k$ and often assumed that $k$ is privately shared only between two principals $A$ and $B$. The inputs of the MAC function
are the key $k$ and some data $D$, and the output is the MACs. We treat the MAC function as an atomic action $mac_k(D)$, and use $MAC_k(D)$ to denote the output MACs.

The MACs $MAC_k(D)$ are generated by one principal $A$ and with $D$ together sent to the other principal $B$. The other principal $B$ regenerate the MACs $MAC_k(D)'$, if $MAC_k(D)=MAC_k(D)'$, then
the data $D$ are from $A$.

\subsection{Random Sequence Generation}\label{rsg}

Random sequence generation is used to generate a random sequence, which may be a symmetric key $k$, a pair of public key $pk_s$ and $sk_s$, or a nonce $nonce$ (usually used
to resist replay attacks). We treat the random sequence generation function as an atomic action $rsg_k$ for symmetric key generation, $rsg_{pk_s,sk_s}$ for asymmetric key pair generation,
and $rsg_N$ for nonce generation, and the corresponding outputs are $k$, $pk_s$ and $sk_s$, $N$ respectively.

\subsection{Blind Signatures}\label{bs}

In the blind signatures, there uses only one key $k$. The inputs of blind function are the key $k$ and the plaintext $D$ and the output is the ciphertext, so
we treat the blind function as an atomic action denoted $blind_k(D)$. We also use $BLIND_k(D)$ to denote the ciphertext output. The inputs of unblind function
are the same key $k$ and the ciphertext $BLIND_k(D)$ and output is the plaintext $D$, we also treat the unblind function as an atomic action $unblind_k(BLIND_k(D))$. And we also
use $UNBLIND_k(BLIND_k(D))$ to denote the output of the corresponding unblind function.

For $D$ is plaintext, it is obvious that $UNBLIND_k(BLIND_k(D))=D$ and \\$blind_k(D)\leq unblind_k(BLIND_k(D))$, where $\leq$ is the causal relation; and for $D$ is the ciphertext.
And also $UNBLIND_k(SIGN_{sk}(BLIND_k(D)))=SIGN_{sk}(D)$.

\subsection{XOR}\label{xor}

The inputs of the XOR function are two data $D_1$ and $D_2$, and the output is the XOR result. We treat the XOR function as an atomic action $xor(D_1,D_2)$, and we also use $XOR(D_1,D_2)$
to denoted the XOR result.

It is obvious that the following equations hold:

\begin{enumerate}
  \item $XOR(XOR(D_1,D_2),D_3)=XOR(D_1,XOR(D_2,D_3))$.
  \item $XOR(D_1,D_2)=XOR(D_2,D_1)$.
  \item $XOR(D,0)=D$.
  \item $XOR(D,D)=0$.
  \item $XOR(D_2,XOR(D_1,D_2))=D_1$
\end{enumerate}

\subsection{Extended Communications}\label{ec}

In APTC ($APTC_G$), the communication between two parallel processes is modeled as the communication merge of two communicating actions. One communicating action is the sending data
($D_1,\cdots,D_n \in\Delta$) action through certain channel $A$ which is denoted $s_A(D_1,\cdots,D_n)$, the other communicating action is the receiving data ($d_1,\cdots,d_n$ range over
$\Delta$) action through the corresponding channel $A$ which is denoted $r_A(d_1,\cdots,d_n)$, note that $d_i$ and $D_i$ for $1\leq i\leq n$ have the same data type.

We extend communication merge to this situation. The axioms of the extended communication merge are shown in Table \ref{AOECM}, and the transition rules are shown in Table \ref{TOECM}.

\begin{center}
    \begin{table}
        \begin{tabular}{@{}ll@{}}
            \hline No. &Axiom\\
            $C1$ & $e_1(D_1,\cdots,D_n)\mid e_2(d_1,\cdots,d_n) = \gamma(e_1(D_1,\cdots,D_n),e_2(d_1,\cdots,d_n))$\\
            $C2$ & $e_1(D_1,\cdots,D_n)\mid (e_2(d_1,\cdots,d_n)\cdot y) = \gamma(e_1(D_1,\cdots,D_n),e_2(d_1,\cdots,d_n))\cdot y[D_1/d_1,\cdots,D_n/d_n]$\\
            $C3$ & $(e_1(D_1,\cdots,D_n)\cdot x)\mid e_2(d_1,\cdots,d_n) = \gamma(e_1(D_1,\cdots,D_n),e_2(d_1,\cdots,d_n))\cdot x$\\
            $C4$ & $(e_1(D_1,\cdots,D_n)\cdot x)\mid (e_2(d_1,\cdots,d_n)\cdot y) = $\\
            &$\gamma(e_1(D_1,\cdots,D_n),e_2(d_1,\cdots,d_n))\cdot (x\between y[D_1/d_1,\cdots,D_n/d_n])$\\
         \end{tabular}
        \caption{Axioms of the Extended Communication Merge}
        \label{AOECM}
    \end{table}
\end{center}
\begin{center}
    \begin{table}
            $$\frac{\langle x,s\rangle\xrightarrow{e_1(D_1,\cdots,D_n)}\langle\surd,s'\rangle\quad \langle y,s\rangle\xrightarrow{e_2(d_1,\cdots,d_n)}\langle\surd,s''\rangle}
            {\langle x\mid y,s\rangle\xrightarrow{\gamma(e_1(D_1,\cdots,D_n),e_2(d_1,\cdots,d_n))}\langle\surd,effect(\gamma(e_1(D_1,\cdots,D_n),e_2(d_1,\cdots,d_n)),s)\rangle}$$
            $$\frac{\langle x,s\rangle\xrightarrow{e_1(D_1,\cdots,D_n)}\langle x',s'\rangle\quad \langle y,s\rangle\xrightarrow{e_2(d_1,\cdots,d_n)}\langle\surd,s''\rangle}
            {\langle x\mid y,s\rangle\xrightarrow{\gamma(e_1(D_1,\cdots,D_n),e_2(d_1,\cdots,d_n))}\langle x',effect(\gamma(e_1(D_1,\cdots,D_n),e_2(d_1,\cdots,d_n)),s)\rangle}$$
            $$\frac{\langle x,s\rangle\xrightarrow{e_1(D_1,\cdots,D_n)}\langle\surd,s'\rangle\quad \langle y,s\rangle\xrightarrow{e_2(d_1,\cdots,d_n)}\langle y',s''\rangle}
            {\langle x\mid y,s\rangle\xrightarrow{\gamma(e_1(D_1,\cdots,D_n),e_2(d_1,\cdots,d_n))}\langle y'[D_1/d_1,\cdots,D_n/d_n],effect(\gamma(e_1(D_1,\cdots,D_n),e_2(d_1,\cdots,d_n)),s)\rangle}$$
            $$\frac{\langle x,s\rangle\xrightarrow{e_1(D_1,\cdots,D_n)}\langle x',s'\rangle\quad \langle y,s\rangle\xrightarrow{e_2(d_1,\cdots,d_n)}\langle y',s''\rangle}
            {\langle x\mid y,s\rangle\xrightarrow{\gamma(e_1(D_1,\cdots,D_n),e_2(d_1,\cdots,d_n))}\langle x'\between y'[D_1/d_1,\cdots,D_n/d_n],effect(\gamma(e_1(D_1,\cdots,D_n),e_2(d_1,\cdots,d_n)),s)\rangle}$$
        \caption{Transition Rules of the Extended Communication Merge}
        \label{TOECM}
    \end{table}
\end{center}

Obviously, the conclusions of the theories of $APTC$ and $APTC_G$ still hold without any alternation.

\subsection{Analyses of Security Protocols}\label{asp}

In this section, we will show the application of analyzing security protocols by APTC ($APTC_G$) via several examples.

\subsubsection{A Protocol Using Private Channels}
The protocol shown in Figure \ref{PC4} uses private channels, that is, the channel $C_{AB}$ between Alice and Bob is private to Alice and Bob, there is no one can use this channel.

\begin{figure}
    \centering
    \includegraphics{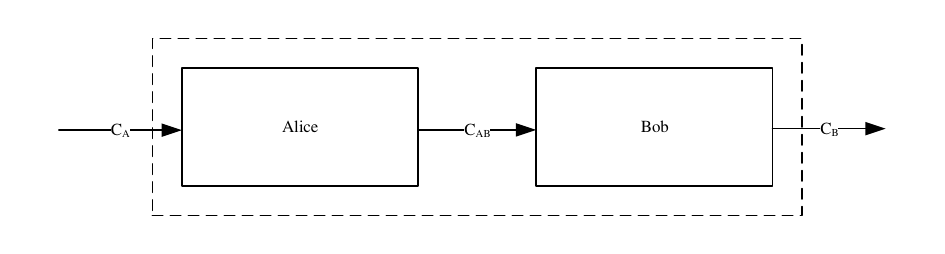}
    \caption{A protocol using private channels}
    \label{PC4}
\end{figure}

The process of the protocol is as follows.

\begin{enumerate}
  \item Alice receives some messages $D$ from the outside through the channel $C_{A}$ (the corresponding reading action is denoted $r_{C_A}(D)$), after an internal processing $af$,
  she sends $D$ to Bob through the private channel $C_{AB}$ (the corresponding sending action is denoted $s_{C_{AB}}(D)$);
  \item Bob receives the message $D$ through the private channel $C_{AB}$ (the corresponding reading action is denoted $r_{C_{AB}}(D)$), after and internal processing $bf$,
  he sends $D$ to the outside through the channel $C_B$ (the corresponding sending action is denoted $s_{C_B}(D)$).
\end{enumerate}

Where $D\in\Delta$, $\Delta$ is the set of data.

Alice's state transitions described by $APTC_G$ are as follows.

$A=\sum_{D\in\Delta}r_{C_A}(D)\cdot A_2$

$A_2=af\cdot A_3$

$A_3=s_{C_{AB}}(D)\cdot A$

Bob's state transitions described by $APTC_G$ are as follows.

$B=r_{C_{AB}}(D)\cdot B_2$

$B_2=bf\cdot B_3$

$B_3=s_{C_{B}}(D)\cdot B$

The sending action and the reading action of the same type data through the same channel can communicate with each other, otherwise, will cause a deadlock $\delta$. We define the following
communication functions.

$\gamma(r_{C_{AB}}(D),s_{C_{AB}}(D)\triangleq c_{C_{AB}}(D)$

Let all modules be in parallel, then the protocol $A\quad B$ can be presented by the following process term.

$$\tau_I(\partial_H(\Theta(A\between B)))=\tau_I(\partial_H(A\between B))$$

where $H=\{r_{C_{AB}}(D),s_{C_{AB}}(D)|D\in\Delta\}$, $I=\{c_{C_{AB}}(D),af,bf|D\in\Delta\}$.

Then we get the following conclusion on the protocol.

\begin{theorem}
The protocol using private channels in Figure \ref{PC4} is secure.
\end{theorem}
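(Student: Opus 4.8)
The plan is to mirror the algebraic-verification strategy used for the ABP protocol: expand the parallel composition with the axioms of $APTC_G$, use $\partial_H$ to force the two private-channel actions to synchronise, recast the encapsulated process as a guarded linear recursive specification, and finally apply $\tau_I$ to abstract the internal activity and read off the external behaviour. Here ``secure'' is to be understood in the sense of the introduction: after abstraction the only surviving observable actions should be the external read $r_{C_A}(D)$ and a matching send $s_{C_B}(D)$ carrying the \emph{same} datum, while all traffic on the private channel $C_{AB}$ (the communication $c_{C_{AB}}(D)$) and the internal computations $af,bf$ are invisible, so that an intruder with no access to $C_{AB}$ learns nothing about $D$ and cannot interfere.

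First I would use $P1$ to write $A\between B = A\parallel B + A\mid B$ and expand with the merge and the extended-communication axioms of Table~\ref{AOECM}. Since $\gamma$ is defined only on the matching pair $s_{C_{AB}}(D),r_{C_{AB}}(D)$, every other attempted synchronisation produces $\delta$ and is absorbed via $A6,A7$. Applying $\partial_H$ with $H=\{r_{C_{AB}}(D),s_{C_{AB}}(D)\mid D\in\Delta\}$ then blocks the bare private-channel actions, so the reachable control states are finitely many; I would name them and record their one-step behaviour as equations such as $\partial_H(A\between B)=\sum_{D\in\Delta}r_{C_A}(D)\cdot\partial_H(A_2\between B)$, then $\partial_H(A_2\between B)=af\cdot\partial_H(A_3\between B)$, then $\partial_H(A_3\between B)=c_{C_{AB}}(D)\cdot\partial_H(A\between B_2)$, and so on through $bf$ and $s_{C_B}(D)$, the data substitution $D/d$ of Table~\ref{AOECM} ensuring Bob emits the datum Alice read. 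By the elimination theorem (Theorem~\ref{ETRecursionG}) this collection is a guarded linear recursive specification $E$, whence $\partial_H(A\between B)=\langle X_1|E\rangle$.

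Next I would apply $\tau_I$ with $I=\{c_{C_{AB}}(D),af,bf\mid D\in\Delta\}$, turning $af$, $bf$ and every $c_{C_{AB}}(D)$ into $\tau$, and invoke the silent-step laws $B1,B2$ together with $RSP$ and the soundness result Theorem~\ref{SAPTC_GABSG} to collapse each chain of $\tau$-steps lying between a read and its matching send. Following the ABP computation one expects the shape $\tau_I(\partial_H(A\between B)) = \sum_{D\in\Delta} r_{C_A}(D)\cdot s_{C_B}(D)\cdot \tau_I(\partial_H(A\between B))$, exhibiting the faithful relay and hence the security guarantee, because no observable action other than the external read and send survives the abstraction.

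The main obstacle is the genuine concurrency that the truly concurrent semantics exposes. After the hand-off $c_{C_{AB}}(D)$ Alice returns to her reading state $A$ while Bob must still perform $bf$ and $s_{C_B}(D)$, so a fresh $r_{C_A}(D')$ is enabled concurrently with Bob's tail and may even be observed \emph{before} the current output $s_{C_B}(D)$. I therefore cannot simply assert the sequential identity above; I have to track this one-round pipelining explicitly in $E$, argue that under $\partial_H$ Alice's next $s_{C_{AB}}(D')$ stays blocked until Bob cycles back to state $B$ (so at most one datum is ever in flight), and then show that the residual interleaving is absorbed modulo rooted branching truly concurrent bisimulation ($\approx_{rbs},\approx_{rbp},\approx_{rbhp}$) so that the observable input--output correspondence remains the intended relay. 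Getting this concurrency bookkeeping right — rather than the routine axiom manipulation — is the delicate part, and it is where a careful proof must spend its effort.
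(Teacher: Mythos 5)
Your proposal follows essentially the same route as the paper: expand $A\between B$ with $P1$ and the communication-merge axioms, encapsulate with $\partial_H$, recast the result as a guarded linear recursive specification, abstract with $\tau_I$, and read off the relay equation $\tau_I(\partial_H(A\between B))=\sum_{D\in\Delta}r_{C_A}(D)\cdot s_{C_B}(D)\cdot\tau_I(\partial_H(A\between B))$ --- which is all the paper's own proof does, asserting the equation and deferring the computation to the ABP example in section~\ref{app}. Your closing observation that the next $r_{C_A}(D')$ is enabled concurrently with Bob's residual $bf\cdot s_{C_B}(D)$, so the sequential identity needs an explicit argument (or only holds up to that pipelining), is a genuine subtlety the paper passes over in silence, and flagging it strengthens rather than weakens your proof.
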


\begin{proof}
Based on the above state transitions of the above modules, by use of the algebraic laws of $APTC_G$, we can prove that

$\tau_I(\partial_H(A\between B))=\sum_{D\in\Delta}(r_{C_A}(D)\cdot s_{C_{B}}(D))\cdot
\tau_I(\partial_H(A\between B))$.

For the details of proof, please refer to section \ref{app}, and we omit it.

That is, the protocol in Figure \ref{PC4} $\tau_I(\partial_H(A\between B))$ can exhibit desired external behaviors, and because the channel $C_{AB}$ is private, there is no any attack.

So, The protocol using private channels in Figure \ref{PC4} is secure.
\end{proof}

\subsubsection{Secure Communication Protocols Using Symmetric Keys}\label{confi}

The protocol shown in Figure \ref{CS4} uses symmetric keys for secure communication, that is, the key $k_{AB}$ between Alice and Bob is privately shared to Alice and Bob,
there is no one can use this key. For secure communication, the main challenge is the information leakage to against the confidentiality. Since all channels in Figure \ref{CS4} are
public, so there may be an Eve to intercept the messages sent from Alice to Bob, and try to crack the secrets.

\begin{figure}
    \centering
    \includegraphics{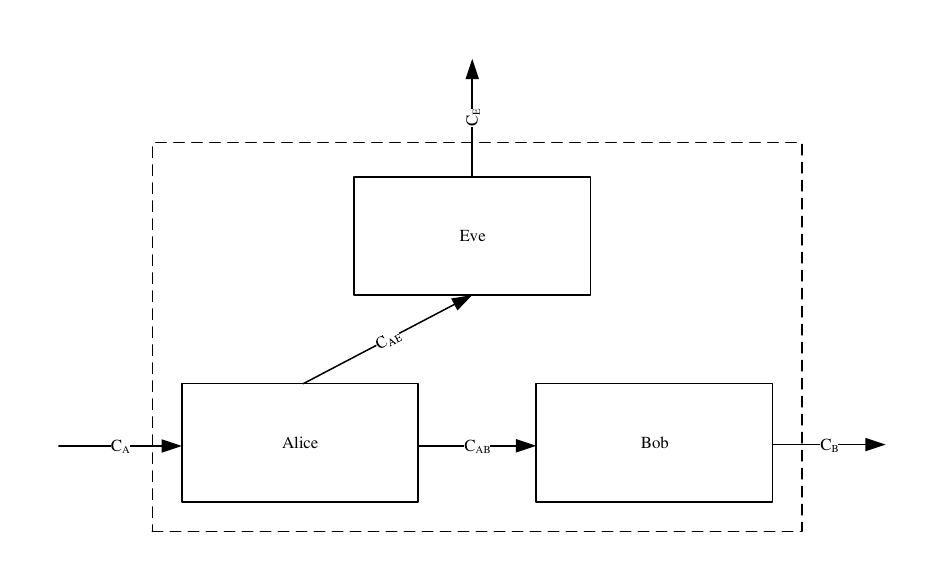}
    \caption{Secure communication protocol using symmetric keys}
    \label{CS4}
\end{figure}

The process of the protocol is as follows.

\begin{enumerate}
  \item Alice receives some messages $D$ from the outside through the channel $C_{A}$ (the corresponding reading action is denoted $r_{C_A}(D)$), after an encryption processing $enc_{k_{AB}}(D)$,
  she sends $ENC_{k_{AB}}(D)$ to Bob through the channel $C_{AB}$ (the corresponding sending action is denoted $s_{C_{AB}}(ENC_{k_{AB}}(D))$). She also
  sends $ENC_{k_{AB}}(D)$ to Eve through the channel $C_{AE}$ (the corresponding sending action is denoted $s_{C_{AE}}(ENC_{k_{AB}}(D))$);
  \item Bob receives the message $ENC_{k_{AB}}(D)$ through the channel $C_{AB}$ (the corresponding reading action is denoted $r_{C_{AB}}(ENC_{k_{AB}}(D))$), after a decryption processing $dec_{k_{AB}}(ENC_{k_{AB}}(D))$,
  he sends $D$ to the outside through the channel $C_B$ (the corresponding sending action is denoted $s_{C_B}(D)$);
  \item Eve receives the message $ENC_{k_{AB}}(D)$ through the channel $C_{AE}$ (the corresponding reading action is denoted $r_{C_{AE}}(ENC_{k_{AB}}(D))$), after a decryption processing $dec_{k_{E}}(ENC_{k_{AB}}(D))$,
  he sends $DEC_{k_{E}}(ENC_{k_{AB}}(D))$ to the outside through the channel $C_E$ (the corresponding sending action is denoted $s_{C_E}(DEC_{k_{E}}(ENC_{k_{AB}}(D)))$).
\end{enumerate}

Where $D\in\Delta$, $\Delta$ is the set of data.

Alice's state transitions described by $APTC_G$ are as follows.

$A=\sum_{D\in\Delta}r_{C_A}(D)\cdot A_2$

$A_2=enc_{k_{AB}}(D)\cdot A_3$

$A_3=(s_{C_{AB}}(ENC_{k_{AB}}(D))\parallel s_{C_{AE}}(ENC_{k_{AB}}(D)))\cdot A$

Bob's state transitions described by $APTC_G$ are as follows.

$B=r_{C_{AB}}(ENC_{k_{AB}}(D))\cdot B_2$

$B_2=dec_{k_{AB}}(ENC_{k_{AB}}(D))\cdot B_3$

$B_3=s_{C_{B}}(D)\cdot B$

Eve's state transitions described by $APTC_G$ are as follows.

$E=r_{C_{AE}}(ENC_{k_{AB}}(D))\cdot E_2$

$E_2=dec_{k_{E}}(ENC_{k_{AB}}(D))\cdot E_3$

$E_3=(\{k_E\neq k_{AB}\}\cdot s_{C_E}(DEC_{k_{E}}(ENC_{k_{AB}}(D)))+\{k_E=k_{AB}\}\cdot s_{C_E}(D))\cdot E$

The sending action and the reading action of the same type data through the same channel can communicate with each other, otherwise, will cause a deadlock $\delta$. We define the following
communication functions.

$\gamma(r_{C_{AB}}(ENC_{k_{AB}}(D)),s_{C_{AB}}(ENC_{k_{AB}}(D))\triangleq c_{C_{AB}}(ENC_{k_{AB}}(D))$

$\gamma(r_{C_{AE}}(ENC_{k_{AB}}(D)),s_{C_{AE}}(ENC_{k_{AB}}(D))\triangleq c_{C_{AE}}(ENC_{k_{AB}}(D))$

Let all modules be in parallel, then the protocol $A\quad B\quad E$ can be presented by the following process term.

$$\tau_I(\partial_H(\Theta(A\between B\between E)))=\tau_I(\partial_H(A\between B\between E))$$

where $H=\{r_{C_{AB}}(ENC_{k_{AB}}(D)),s_{C_{AB}}(ENC_{k_{AB}}(D)),r_{C_{AE}}(ENC_{k_{AB}}(D)),
s_{C_{AE}}(ENC_{k_{AB}}(D))|D\in\Delta\}$,

$I=\{c_{C_{AB}}(ENC_{k_{AB}}(D)),c_{C_{AE}}(ENC_{k_{AB}}(D)),enc_{k_{AB}}(D),dec_{k_{AB}}(ENC_{k_{AB}}(D)),dec_{k_{E}}(ENC_{k_{AB}}(D)),\\
\{k_E\neq k_{AB}\},\{k_E=k_{AB}\}|D\in\Delta\}$.

Then we get the following conclusion on the protocol.

\begin{theorem}
The protocol using symmetric keys for secure communication in Figure \ref{CS4} is confidential.
\end{theorem}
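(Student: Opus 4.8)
The plan is to reuse the three-stage strategy of the ABP verification in section \ref{app} and of the preceding private-channel theorem: expand the full merge $A \between B \between E$ into a normal form with the axioms of $APTC_G$, apply the encapsulation operator $\partial_H$ to force the intended handshakes and absorb every blocked send/receive pair as $\delta$, and finally apply $\tau_I$ together with RDP/RSP to read off the observable behaviour.

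First I would unfold $A$, $B$, $E$ by RDP and repeatedly use P1 with the parallel and communication axioms (P2--P10, C1--C10) to expand $A \between B \between E$. Since the only labels in $H$ are the send/receive actions on $C_{AB}$ and $C_{AE}$, applying $\partial_H$ (D1--D6) turns every mismatched communication into $\delta$, which is then discarded by A6 and A7. The surviving synchronisations are exactly $c_{C_{AB}}(ENC_{k_{AB}}(D))$ (Alice to Bob) and $c_{C_{AE}}(ENC_{k_{AB}}(D))$ (Alice to Eve). I would package the result as a guarded linear recursive specification $E'$ whose equations record the run: Alice reads $D$ on $C_A$ and encrypts, the ciphertext is delivered to Bob and to Eve, Bob decrypts and emits $s_{C_B}(D)$, and Eve decrypts under her own key and emits her output.

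The decisive step is Eve's branch. Because $k_{AB}$ is shared privately between Alice and Bob, the guard $\{k_E = k_{AB}\}$ holds in no reachable data state, so by G3 and A7 the summand $\{k_E = k_{AB}\}\cdot s_{C_E}(D)$ collapses to $\delta$ and disappears, while $\{k_E\neq k_{AB}\}$ reduces to $\epsilon$ by G8. Hence Eve's only observable emission is $s_{C_E}(DEC_{k_E}(ENC_{k_{AB}}(D)))$. Renaming the encryptions, decryptions, guards and internal communications to $\tau$ via TI1--TI6 and G28--G29, and solving $E'$ by RSP, I expect to reach
\[
\tau_I(\partial_H(A \between B \between E)) = \sum_{D\in\Delta}\bigl(r_{C_A}(D)\cdot(s_{C_B}(D)\parallel s_{C_E}(DEC_{k_E}(ENC_{k_{AB}}(D))))\bigr)\cdot\tau_I(\partial_H(A \between B \between E)).
\]
Confidentiality is then immediate from this external behaviour: Eve intercepts and processes every ciphertext, yet her output is $DEC_{k_E}(ENC_{k_{AB}}(D))$, which under $k_E\neq k_{AB}$ is not the plaintext $D$, so the secret never leaks out on $C_E$.

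The main obstacle I anticipate is not the (lengthy but mechanical) expansion, but justifying that the $\{k_E = k_{AB}\}$ branch is \emph{globally} unsatisfiable --- that is, that the privacy of $k_{AB}$ is faithfully encoded in the guard semantics $test(\cdot,s)$ so that G8 and G9 apply uniformly over all reachable states $s$ --- and checking that the derived specification $E'$ is genuinely guarded, so that RSP delivers a unique solution and the abstraction $\tau_I$ introduces no $\tau$-loops requiring CFAR.
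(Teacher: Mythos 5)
Your proposal follows essentially the same route as the paper: expand $A\between B\between E$ with the $APTC_G$ laws, encapsulate, abstract, arrive at exactly the recursive equation $\tau_I(\partial_H(A\between B\between E))=\sum_{D\in\Delta}(r_{C_A}(D)\cdot(s_{C_B}(D)\parallel s_{C_E}(DEC_{k_E}(ENC_{k_{AB}}(D)))))\cdot\tau_I(\partial_H(A\between B\between E))$ that the paper states (deferring the mechanical expansion to section \ref{app}), and conclude confidentiality from $DEC_{k_E}(ENC_{k_{AB}}(D))\neq D$ when $k_E\neq k_{AB}$. Your additional remarks on eliminating the $\{k_E=k_{AB}\}$ summand and on guardedness of the derived specification only make explicit steps the paper leaves implicit.
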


\begin{proof}
Based on the above state transitions of the above modules, by use of the algebraic laws of $APTC_G$, we can prove that

$\tau_I(\partial_H(A\between B\between E))=\sum_{D\in\Delta}(r_{C_A}(D)\cdot (s_{C_{B}}(D)\parallel s_{C_E}(DEC_{k_{E}}(ENC_{k_{AB}}(D)))))\cdot
\tau_I(\partial_H(A\between B\between E))$.

For the details of proof, please refer to section \ref{app}, and we omit it.

That is, the protocol in Figure \ref{CS4} $\tau_I(\partial_H(A\between B\between E))$ can exhibit desired external behaviors, and because the key $k_{AB}$ is private,
$DEC_{k_{E}}(ENC_{k_{AB}}(D))\neq D$ (for $k_E\neq k_{AB}$).

So, The protocol using symmetric keys in Figure \ref{CS4} is confidential.
\end{proof}

\subsubsection{Discussion}

Through the above subsection, we can see the process of analysis of security protocols, that is, through abstract away the internal series of cryptographic operations, we can see the
relation between the inputs and the outputs of the whole protocol, then we can get the conclusions of if or not the protocol being secure.

A security protocol is designed for one or several goals. For example, the secure communication protocol using symmetric keys in Figure \ref{CS4} is designed for the confidentiality of
the communication. So, we only verify if the protocol is confidential. In fact, the protocol in Figure \ref{CS4} can not resist other attacks, for example, the replay attack, as Figure \ref{CS42}
shows.

\begin{figure}
    \centering
    \includegraphics{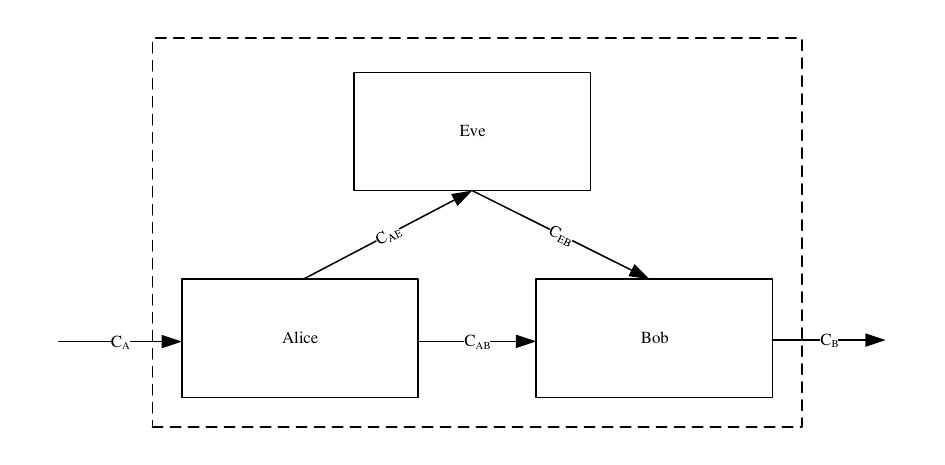}
    \caption{Secure communication protocol using symmetric keys with Replay Attack}
    \label{CS42}
\end{figure}

The process of the protocol is as follows.

\begin{enumerate}
  \item Alice receives some messages $D$ from the outside through the channel $C_{A}$ (the corresponding reading action is denoted $r_{C_A}(D)$), after an encryption processing $enc_{k_{AB}}(D)$,
  she sends $ENC_{k_{AB}}(D)$ to Bob through the channel $C_{AB}$ (the corresponding sending action is denoted $s_{C_{AB}}(ENC_{k_{AB}}(D))$). She also
  sends $ENC_{k_{AB}}(D)$ to Eve through the channel $C_{AE}$ (the corresponding sending action is denoted $s_{C_{AE}}(ENC_{k_{AB}}(D))$);
  \item Eve receives the message $ENC_{k_{AB}}(D)$ through the channel $C_{AE}$ (the corresponding reading action is denoted $r_{C_{AE}}(ENC_{k_{AB}}(D))$), without an internal processing,
  he sends $ENC_{k_{AB}}(D)$ to the outside through the channel $C_{EB}$ (the corresponding sending action is denoted $s_{C_{EB}}(ENC_{k_{AB}}(D))$);
  \item Bob receives the message $ENC_{k_{AB}}(D)$ through the channel $C_{AB}$ (the corresponding reading action is denoted $r_{C_{AB}}(ENC_{k_{AB}}(D))$), after a decryption processing $dec_{k_{AB}}(ENC_{k_{AB}}(D))$,
  he sends $D$ to the outside through the channel $C_B$ (the corresponding sending action is denoted $s_{C_B}(D)$); Bob receives the message $ENC_{k_{AB}}(D)$ through the channel $C_{EB}$
  (the corresponding reading action is denoted $r_{C_{EB}}(ENC_{k_{AB}}(D))$), after a decryption processing $dec_{k_{AB}}(ENC_{k_{AB}}(D))$,
  he sends $D$ to the outside through the channel $C_B$ (the corresponding sending action is denoted $s_{C_B}(D)$).
\end{enumerate}

Where $D\in\Delta$, $\Delta$ is the set of data.

Alice's state transitions described by $APTC_G$ are as follows.

$A=\sum_{D\in\Delta}r_{C_A}(D)\cdot A_2$

$A_2=enc_{k_{AB}}(D)\cdot A_3$

$A_3=(s_{C_{AB}}(ENC_{k_{AB}}(D))\parallel s_{C_{AE}}(ENC_{k_{AB}}(D)))\cdot A$

Bob's state transitions described by $APTC_G$ are as follows.

$B=(r_{C_{AB}}(ENC_{k_{AB}}(D))\parallel r_{C_{AB}}(ENC_{k_{EB}}(D)))\cdot B_2$

$B_2=(dec_{k_{AB}}(ENC_{k_{AB}}(D))\parallel dec_{k_{AB}}(ENC_{k_{AB}}(D)))\cdot B_3$

$B_3=(s_{C_{B}}(D)\parallel s_{C_{B}}(D))\cdot B$

Eve's state transitions described by $APTC_G$ are as follows.

$E=r_{C_{AE}}(ENC_{k_{AB}}(D))\cdot E_2$

$E_2=dec_{k_{E}}(ENC_{k_{AB}}(D))\cdot E_3$

$E_3=s_{C_{EB}}(ENC_{k_{AB}}(D))\cdot E$

The sending action and the reading action of the same type data through the same channel can communicate with each other, otherwise, will cause a deadlock $\delta$. We define the following
communication functions.

$\gamma(r_{C_{AB}}(ENC_{k_{AB}}(D)),s_{C_{AB}}(ENC_{k_{AB}}(D))\triangleq c_{C_{AB}}(ENC_{k_{AB}}(D))$

$\gamma(r_{C_{AE}}(ENC_{k_{AB}}(D)),s_{C_{AE}}(ENC_{k_{AB}}(D))\triangleq c_{C_{AE}}(ENC_{k_{AB}}(D))$

$\gamma(r_{C_{BE}}(ENC_{k_{AB}}(D)),s_{C_{BE}}(ENC_{k_{AB}}(D))\triangleq c_{C_{BE}}(ENC_{k_{AB}}(D))$

Let all modules be in parallel, then the protocol $A\quad B\quad E$ can be presented by the following process term.

$$\tau_I(\partial_H(\Theta(A\between B\between E)))=\tau_I(\partial_H(A\between B\between E))$$

where $H=\{r_{C_{AB}}(ENC_{k_{AB}}(D)),s_{C_{AB}}(ENC_{k_{AB}}(D)),r_{C_{AE}}(ENC_{k_{AB}}(D)),s_{C_{AE}}(ENC_{k_{AB}}(D)),\\
r_{C_{BE}}(ENC_{k_{AB}}(D)),s_{C_{BE}}(ENC_{k_{AB}}(D))|D\in\Delta\}$,

$I=\{c_{C_{AB}}(ENC_{k_{AB}}(D)),c_{C_{AE}}(ENC_{k_{AB}}(D)),c_{C_{BE}}(ENC_{k_{AB}}(D)),enc_{k_{AB}}(D),dec_{k_{AB}}(ENC_{k_{AB}}(D))|D\in\Delta\}$.

Then we get the following conclusion on the protocol.

\begin{theorem}
The protocol using symmetric keys for secure communication in Figure \ref{CS4} is not secure for replay attack.
\end{theorem}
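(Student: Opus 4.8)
The plan is to follow exactly the verification scheme used for the ABP protocol in section \ref{app} and for the confidential protocol in Fig.~\ref{CS4}: first expand the full merge $A\between B\between E$ (with the replay attacker of Fig.~\ref{CS42}) into a head-normal form by the algebraic laws of $APTC_G$, then capture the resulting reachable-state graph by a guarded linear recursive specification, and finally apply the encapsulation operator $\partial_H$ and the abstraction operator $\tau_I$ to read off the external behaviour.

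First I would compute $A\between B\between E$ using $P1$ to split $\between$ into $\parallel$ and $\mid$, then use $RDP$ to unfold the recursion variables and the expansion laws $P2$--$P4$ and $C1$--$C4$ to push the parallel and communication operators inward. The encapsulation $\partial_H$ then blocks every send/receive of a ciphertext that is not matched by its communication partner, turning it into $\delta$, which is absorbed by $A6$ and $A7$; the only surviving synchronizations are the handshakes $c_{C_{AB}}$, $c_{C_{AE}}$ and $c_{C_{EB}}$ fixed by the communication functions. The crucial structural feature I would track is that Bob's equation forces him to receive the ciphertext \emph{twice} --- once on the legitimate channel and once on the channel fed by Eve's replay --- so that after the two decryptions he emits $s_{C_B}(D)$ twice in parallel.

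Next I would introduce $\partial_H(A\between B\between E)=\langle X_1|E\rangle$ for a guarded linear specification $E$ whose equations mirror the reachable configurations (tracking Alice's read, Eve's interception and forwarding, and Bob's double decryption), and invoke the elimination and soundness results (Theorems~\ref{ETRecursionG} and \ref{SAPTC_GRG}) to justify that this specification indeed denotes the process. Applying $\tau_I$ with $I$ containing the internal cryptographic actions and the three communication events, and using $TI1$--$TI6$ together with the abstraction laws $B1$ and $B2$, I expect to obtain
\[
\tau_I(\partial_H(A\between B\between E))=\sum_{D\in\Delta}\bigl(r_{C_A}(D)\cdot (s_{C_B}(D)\parallel s_{C_B}(D))\bigr)\cdot \tau_I(\partial_H(A\between B\between E)).
\]
This external behaviour delivers each datum $D$ to the outside \emph{twice}, which is precisely the observable signature of a successful replay; since it differs from the single-delivery behaviour that a secure protocol should exhibit, the conclusion that the protocol is not secure against the replay attack follows.

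The main obstacle will be the bookkeeping of Eve's interleaving: because Eve intercepts on $C_{AE}$ and replays toward Bob, I must ensure the expansion correctly serializes the replay so that both of Bob's receptions succeed under $\partial_H$, rather than one of them deadlocking into $\delta$. Pinning down the channel labels and matching the communication function $\gamma(r_{C_{EB}}(ENC_{k_{AB}}(D)),s_{C_{EB}}(ENC_{k_{AB}}(D)))\triangleq c_{C_{EB}}(ENC_{k_{AB}}(D))$ to Bob's second receive action (the excerpt carries a few channel-label typos between $C_{EB}$ and $C_{AB}$) is the delicate part; once the synchronization pattern is fixed, the remaining reductions are routine applications of the $APTC_G$ axioms exactly as in section \ref{app}.
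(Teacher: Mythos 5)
Your proposal follows exactly the paper's own route: expand $A\between B\between E$ with the algebraic laws of $APTC_G$, capture the result as a guarded linear recursive specification, apply $\partial_H$ and $\tau_I$, and arrive at the identical closed form $\tau_I(\partial_H(A\between B\between E))=\sum_{D\in\Delta}(r_{C_A}(D)\cdot(s_{C_B}(D)\parallel s_{C_B}(D)))\cdot\tau_I(\partial_H(A\between B\between E))$, from which the double emission of $D$ witnesses the replay. The paper states this equation and defers the expansion details to section \ref{app}; you have simply spelled out the same derivation more explicitly, so the proposal is correct and matches the paper's argument.
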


\begin{proof}
Based on the above state transitions of the above modules, by use of the algebraic laws of $APTC_G$, we can prove that

$\tau_I(\partial_H(A\between B\between E))=\sum_{D\in\Delta}(r_{C_A}(D)\cdot (s_{C_{B}}(D)\parallel s_{C_{B}}(D)))\cdot
\tau_I(\partial_H(A\between B\between E))$.

For the details of proof, please refer to section \ref{app}, and we omit it.

That is, the protocol in Figure \ref{CS4} $\tau_I(\partial_H(A\between B\between E))$ can exhibit undesired external behaviors ($D$ is outputted twice times).

So, The protocol using symmetric keys in Figure \ref{CS4} is not secure for replay attack.
\end{proof}

Generally, in the following chapters, when we introduce the analysis of a security protocol, we will mainly analyze the secure properties related to its design goal.

\newpage\section{Analyses of Key Exchange Protocols}\label{ke}

In this chapter, we will introduce several key exchange protocols, including key exchange protocols with symmetric cryptography in section \ref{kesc} and public key cryptography
in section \ref{kepc}, interlock protocol against
man-in-the-middle attack in section \ref{kepc2}, key exchange protocol with digital signature in section \ref{MIM}, key and message transmission protocol in section \ref{kmt},
and key and message broadcast protocol in section \ref{kmb}.

\subsection{Key Exchange with Symmetric Cryptography}\label{kesc}

The protocol shown in Figure \ref{KESC5} uses symmetric keys for secure communication, that is, the key $k_{AB}$ between Alice and Bob is privately shared to Alice and Bob, and $k_{AB}$
is generated by the Trent, Alice, Bob have shared keys $k_{AT}$ and $k_{BT}$ already. For secure communication, the main challenge is the information leakage to against the confidentiality.

\begin{figure}
    \centering
    \includegraphics{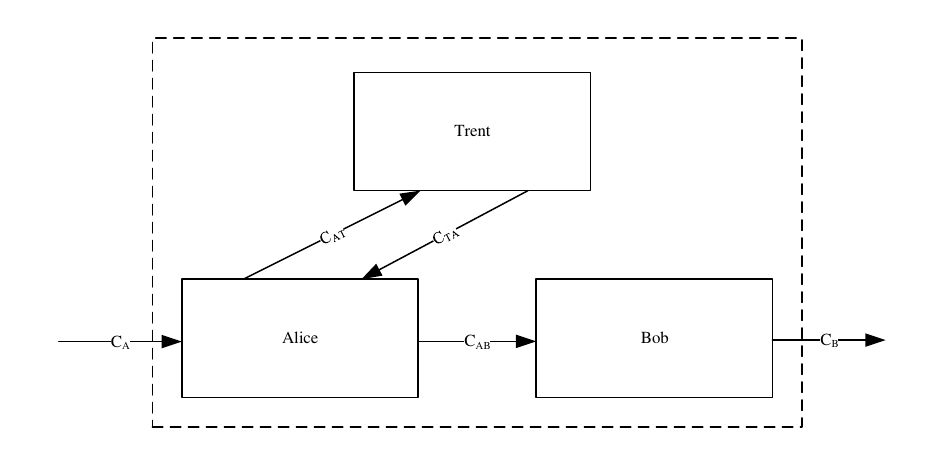}
    \caption{Key exchange protocol with symmetric cryptography}
    \label{KESC5}
\end{figure}

The process of the protocol is as follows.

\begin{enumerate}
  \item Alice receives some messages $D$ from the outside through the channel $C_{A}$ (the corresponding reading action is denoted $r_{C_A}(D)$), if $k_{AB}$ is not established,
  she sends a key request message $M$ to Trent through the channel $C_{AT}$ (the corresponding sending action is denoted $s_{C_{AT}}(M)$);
  \item Trent receives the message $M$ through the channel $C_{AT}$ (the corresponding reading action is denoted $r_{C_{AT}}(M)$), generates a session key $k_{AB}$ through an action
  $rsg_{k_{AB}}$, and encrypts it for Alice and Bob through an action $enc_{k_{AT}}(k_{AB})$ and action $enc_{k_{BT}}(k_{AB})$ respectively, he sends $ENC_{k_{AT}}(k_{AB}),ENC_{k_{BT}}(k_{AB})$
  to the Alice through the channel $C_{TA}$ (the corresponding sending action is denoted $s_{C_{TA}}(ENC_{k_{AT}}(k_{AB}),ENC_{k_{BT}}(k_{AB}))$);
  \item Alice receives $ENC_{k_{AT}}(k_{AB}),ENC_{k_{BT}}(k_{AB})$ from Trent through the channel $C_{TA}$ (the corresponding reading action is denoted $r_{C_{TA}}(ENC_{k_{AT}}(k_{AB}),ENC_{k_{BT}}(k_{AB}))$),
  she decrypts $ENC_{k_{AT}}(k_{AB})$ through an action $dec_{k_{AT}}(ENC_{k_{AT}}(k_{AB}))$ and gets $k_{AB}$, and sends $ENC_{k_{BT}}(k_{AB})$ to Bob through the channel $C_{AB}$
  (the corresponding sending action is denoted $s_{C_{AB}}(ENC_{k_{BT}}(k_{AB}))$);
  \item Bob receives $ENC_{k_{BT}}(k_{AB})$ from Alice through the channel $C_{AB}$ (the corresponding reading action is denoted $r_{C_{AB}}(ENC_{k_{BT}}(k_{AB}))$),
  he decrypts $ENC_{k_{AT}}(k_{AB})$ through an action $dec_{k_{AT}}(ENC_{k_{AT}}(k_{AB}))$ and gets $k_{AB}$, then $k_{AB}$ is established;
  \item If $k_{AB}$ is established, after an encryption processing $enc_{k_{AB}}(D)$, Alice sends $ENC_{k_{AB}}(D)$ to Bob through the channel $C_{AB}$
  (the corresponding sending action is denoted $s_{C_{AB}}(ENC_{k_{AB}}(D))$);
  \item Bob receives the message $ENC_{k_{AB}}(D)$ through the channel $C_{AB}$ (the corresponding reading action is denoted $r_{C_{AB}}(ENC_{k_{AB}}(D))$), after a decryption processing $dec_{k_{AB}}(ENC_{k_{AB}}(D))$,
  he sends $D$ to the outside through the channel $C_B$ (the corresponding sending action is denoted $s_{C_B}(D)$).
\end{enumerate}

Where $D\in\Delta$, $\Delta$ is the set of data.

Alice's state transitions described by $APTC_G$ are as follows.

$A=\sum_{D\in\Delta}r_{C_A}(D)\cdot A_2$

$A_2=\{k_{AB}=NULL\}\cdot s_{C_{AT}}(M)\cdot A_3+ \{k_{AB}\neq NULL\}\cdot A_6$

$A_3=r_{C_{TA}}(ENC_{k_{AT}}(k_{AB}),ENC_{k_{BT}}(k_{AB}))\cdot A_4$

$A_4=dec_{k_{AT}}(ENC_{k_{AT}}(k_{AB}))\cdot A_5$

$A_5=s_{C_{AB}}(ENC_{k_{BT}}(k_{AB}))\cdot A_6$

$A_6=enc_{k_{AB}}(D)\cdot A_7$

$A_7=s_{C_{AB}}(ENC_{k_{AB}}(D))\cdot A$

Bob's state transitions described by $APTC_G$ are as follows.

$B=\{k_{AB}=NULL\}\cdot B_1+ \{k_{AB}\neq NULL\}\cdot B_3$

$B_1=r_{C_{AB}}(ENC_{k_{BT}}(k_{AB}))\cdot B_2$

$B_2=dec_{k_{BT}}(ENC_{k_{AB}}(k_{AB}))\cdot B_3$

$B_3=r_{C_{AB}}(ENC_{k_{AB}}(D))\cdot B_4$

$B_4=dec_{k_{AB}}(ENC_{k_{AB}}(D))\cdot B_5$

$B_5=s_{C_{B}}(D)\cdot B$

Trent's state transitions described by $APTC_G$ are as follows.

$T=r_{C_{AT}}(M)\cdot T_2$

$T_2=rsg_{k_{AB}}\cdot T_3$

$T_3=(enc_{k_{AT}}(k_{AB})\parallel enc_{k_{BT}}(k_{AB}))\cdot T_4$

$T_4=s_{C_{TA}}(ENC_{k_{AT}}(k_{AB}),ENC_{k_{BT}}(k_{AB}))\cdot T$

The sending action and the reading action of the same type data through the same channel can communicate with each other, otherwise, will cause a deadlock $\delta$. We define the following
communication functions.

$\gamma(r_{C_{AT}}(M),s_{C_{AT}}(M))\triangleq c_{C_{AT}}(M)$

$\gamma(r_{C_{TA}}(ENC_{k_{AT}}(k_{AB}),ENC_{k_{BT}}(k_{AB})),s_{C_{TA}}(ENC_{k_{AT}}(k_{AB}),ENC_{k_{BT}}(k_{AB})))\\
\triangleq c_{C_{TA}}(ENC_{k_{AT}}(k_{AB}),ENC_{k_{BT}}(k_{AB}))$

$\gamma(r_{C_{AB}}(ENC_{k_{BT}}(k_{AB})),s_{C_{AB}}(ENC_{k_{BT}}(k_{AB})))\triangleq c_{C_{AB}}(ENC_{k_{BT}}(k_{AB}))$

$\gamma(r_{C_{AB}}(ENC_{k_{AB}}(D)),s_{C_{AB}}(ENC_{k_{AB}}(D))\triangleq c_{C_{AB}}(ENC_{k_{AB}}(D))$

Let all modules be in parallel, then the protocol $A\quad B\quad T$ can be presented by the following process term.

$$\tau_I(\partial_H(\Theta(A\between B\between T)))=\tau_I(\partial_H(A\between B\between T))$$

where $H=\{r_{C_{AT}}(M),s_{C_{AT}}(M),r_{C_{TA}}(ENC_{k_{AT}}(k_{AB}),ENC_{k_{BT}}(k_{AB})),\\
s_{C_{TA}}(ENC_{k_{AT}}(k_{AB}),ENC_{k_{BT}}(k_{AB})),r_{C_{AB}}(ENC_{k_{BT}}(k_{AB})),s_{C_{AB}}(ENC_{k_{BT}}(k_{AB})),\\
r_{C_{AB}}(ENC_{k_{AB}}(D)),s_{C_{AB}}(ENC_{k_{AB}}(D)|D\in\Delta\}$,

$I=\{c_{C_{AT}}(M),c_{C_{TA}}(ENC_{k_{AT}}(k_{AB}),ENC_{k_{BT}}(k_{AB})),c_{C_{AB}}(ENC_{k_{BT}}(k_{AB})),c_{C_{AB}}(ENC_{k_{AB}}(D)),\\
\{k_{AB}=NULL\},\{k_{AB}\neq NULL\},dec_{k_{AT}}(ENC_{k_{AT}}(k_{AB})),enc_{k_{AB}}(D),\\
dec_{k_{BT}}(ENC_{k_{AB}}(k_{AB})),dec_{k_{AB}}(ENC_{k_{AB}}(D)),rsg_{k_{AB}},enc_{k_{AT}}(k_{AB}),enc_{k_{BT}}(k_{AB})|D\in\Delta\}$.

Then we get the following conclusion on the protocol.

\begin{theorem}
The key exchange protocol with symmetric cryptography in Figure \ref{KESC5} is confidential.
\end{theorem}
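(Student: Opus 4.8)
The plan is to follow the same computational template used for the ABP protocol in Section~\ref{app} and for the two preceding security protocols: expand the encapsulated parallel composition $\partial_H(A\between B\between T)$ into a guarded linear recursive specification, abstract away the internal cryptographic and communication actions with $\tau_I$, and read off the external input/output behaviour. First I would invoke $P1$ to split $A\between B\between T$ into its parallel and communication-merge parts, then repeatedly apply $RDP$ together with the parallel expansion axioms ($P2$--$P10$, $C1$--$C10$) and the encapsulation axioms ($D1$--$D6$), using $A6$ and $A7$ to discard the $\delta$-summands produced by every pair of non-matching send/receive actions. Because each of the channels $C_{AT}$, $C_{TA}$, and $C_{AB}$ admits exactly one matching send/receive pair inside $H$, the encapsulation $\partial_H$ forces the handshake to proceed in the intended order, collapsing the many spurious interleavings to a single surviving communication $c_{C_{AT}}(M)$, $c_{C_{TA}}(\cdots)$, $c_{C_{AB}}(\cdots)$ at each step.

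The second step is to package the resulting system of equations as a guarded linear recursive specification $E$ with $\partial_H(A\between B\between T)=\langle X_1|E\rangle$, exactly as was done for the ABP protocol, where the recursion variables record the joint control state of Alice, Bob, and Trent. Then I would apply $\tau_I$, using $TI1$--$TI6$ and $G28$--$G29$ to rename every element of $I$ --- the communications, the cryptographic actions $rsg_{k_{AB}}$, $enc_{\cdot}$, $dec_{\cdot}$, and the guard tests --- into $\tau$, and remove the resulting silent steps via $B1$, $B2$ (and $G26$, $G27$ for the guard-generated $\tau$'s), invoking $CFAR$ where a cluster of internal steps must be fairly abstracted. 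Soundness (Theorem~\ref{SAPTC_GABSG}) and completeness modulo $\approx_{rbs},\approx_{rbp},\approx_{rbhp}$ (Theorem~\ref{CCFARG}) justify that these manipulations preserve the rooted branching truly concurrent bisimulation class. The target identity is $\tau_I(\partial_H(A\between B\between T))=\sum_{D\in\Delta}(r_{C_A}(D)\cdot s_{C_B}(D))\cdot\tau_I(\partial_H(A\between B\between T))$, after which confidentiality follows because the only visible actions are the honest input on $C_A$ and the honest output on $C_B$; the session key $k_{AB}$ and the long-term keys $k_{AT},k_{BT}$ never surface in the abstracted behaviour, and since no eavesdropper channel is present in Fig.~\ref{KESC5}, the plaintext $D$ is observable only at the two legitimate endpoints.

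The hard part will be the bookkeeping around the conditional guards $\{k_{AB}=NULL\}$ and $\{k_{AB}\neq NULL\}$, which decide whether the protocol runs the key-establishment phase or proceeds directly to data transmission. I must verify, using $G1$--$G11$ and $G25$ together with the semantics of $test$ and $effect$, that these two guards are mutually exclusive and that the $effect$ of $rsg_{k_{AB}}$ makes the $\{k_{AB}\neq NULL\}$ branch the live one on the subsequent round, so that the first pass performs the full exchange and all later passes reuse the established key. Getting the specification $E$ to faithfully reflect this two-phase behaviour --- rather than re-requesting a key on every datum --- is the principal obstacle; once $E$ is correct, the abstraction and the final fixed-point identity reduce to routine applications of $RSP$ and the results already established for $APTC_{G_\tau}$.
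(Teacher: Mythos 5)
Your proposal follows essentially the same route as the paper: the paper's proof likewise asserts the fixed-point identity for $\tau_I(\partial_H(A\between B\between T))$ obtained by algebraic expansion, encapsulation, formation of a guarded linear recursive specification, and abstraction, deferring the detailed calculation to the ABP template of section \ref{app} and then concluding confidentiality from the privacy of $k_{AB}$. The only discrepancy is that the paper's displayed identity retains a factor $s_{C_E}(DEC_{k_{E}}(ENC_{k_{AB}}(D)))$ evidently carried over from the preceding Eve-protocol, whereas your target $\sum_{D\in\Delta}(r_{C_A}(D)\cdot s_{C_{B}}(D))\cdot\tau_I(\partial_H(A\between B\between T))$ is the one actually consistent with the module definitions, since no Eve process appears in Fig.\ref{KESC5}.
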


\begin{proof}
Based on the above state transitions of the above modules, by use of the algebraic laws of $APTC_G$, we can prove that

$\tau_I(\partial_H(A\between B\between T))=\sum_{D\in\Delta}(r_{C_A}(D)\cdot (s_{C_{B}}(D)\parallel s_{C_E}(DEC_{k_{E}}(ENC_{k_{AB}}(D)))))\cdot
\tau_I(\partial_H(A\between B\between T))$.

For the details of proof, please refer to section \ref{app}, and we omit it.

That is, the protocol in Figure \ref{KESC5} $\tau_I(\partial_H(A\between B\between T))$ can exhibit desired external behaviors, and because the key $k_{AB}$ is private,
The protocol using symmetric keys in Figure \ref{KESC5} is confidential and similar to the protocol in section \ref{confi}, and we do not model the information leakage attack.
\end{proof}

\subsection{Key Exchange with Public-Key Cryptography}\label{kepc}

The protocol shown in Figure \ref{KEPC5} uses public keys for secure communication with man-in-the-middle attack, that is, Alice, Bob have shared their public keys $pk_{A}$ and $pk_{B}$ already.
For secure communication, the main challenge is the information leakage to against the confidentiality.

The process of key exchange protocol with public-key cryptography is:

\begin{enumerate}
  \item Alice gets Bob's public key from Trent;
  \item Alice generates a random session key, encrypts it using Bob's public key, and sends to Bob;
  \item Bob receives the encrypted session key, decrypted by his private key, and gets the session key;
  \item Alice and Bob can communicate by use of the session key.
\end{enumerate}

We do not verify the above protocols, and verify the above protocols with man-in-the-middle attack as Figure \ref{KEPC5} shows.

\begin{figure}
    \centering
    \includegraphics{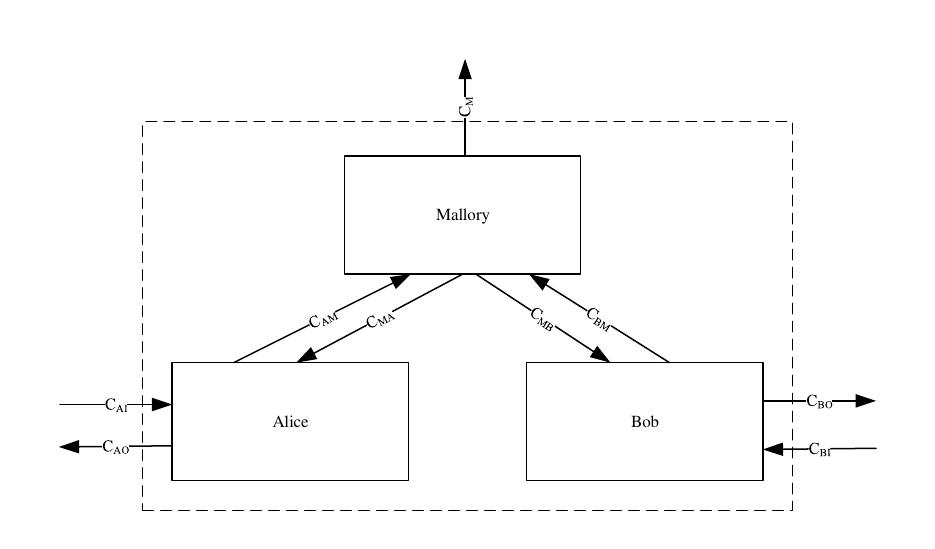}
    \caption{Key exchange protocol with public-key cryptography and man-in-the-middle attack}
    \label{KEPC5}
\end{figure}

The process of the protocol with man-in-the-middle attack is as follows, and we only consider the message in one direction: from Alice to Bob.

\begin{enumerate}
  \item Alice receives some messages $D$ from the outside through the channel $C_{AI}$ (the corresponding reading action is denoted $r_{C_{AI}}(D)$), she sends a key request message $Me$
  to Mallory through the channel $C_{AM}$ (the corresponding sending action is denoted $s_{C_{AM}}(Me)$);
  \item Mallory receives the message $Me$ through the channel $C_{AM}$ (the corresponding reading action is denoted $r_{C_{AM}}(Me)$), he sends $Me$
  to the Bob through the channel $C_{MB}$ (the corresponding sending action is denoted $s_{C_{MB}}(Me)$);
  \item Bob receives the message $Me$ from Mallory through the channel $C_{MB}$ (the corresponding reading action is denoted $r_{C_{MB}}(Me)$), and sends his public key $pk_B$ to Mallory
  through the channel $C_{BM}$ (the corresponding sending action is denoted $s_{C_{BM}}(pk_B)$);
  \item Mallory receives $pk_B$ from Bob through the channel $C_{BM}$ (the corresponding reading action is denoted $r_{C_{BM}}(pk_B)$), then he stores $pk_B$, and sends his public key
  $pk_M$ to Alice through the channel $C_{MA}$ (the corresponding sending action is denoted $s_{C_{MA}}(pk_M)$);
  \item Alice receives $pk_M$ from Mallory through the channel $C_{MA}$ (the corresponding reading action is denoted $r_{C_{MA}}(pk_M)$), she encrypts the message $D$ with Mallory's
  public key $pk_M$ through the action $enc_{pk_M}(D)$, then Alice sends $ENC_{pk_{M}}(D)$ to Mallory through the channel $C_{AM}$
  (the corresponding sending action is denoted $s_{C_{AM}}(ENC_{pk_{M}}(D))$);
  \item Mallory receives $ENC_{pk_M}(D)$ from Alice through the channel $C_{AM}$ (the corresponding reading action is denoted $r_{C_{AM}}(ENC_{pk_M}(D))$), he decrypts the message with
  his private key $sk_M$ through the action $dec_{sk_M}(ENC_{pk_M}(D))$ to get the message $D$, and sends $D$ to the outside through the channel $C_M$ (the corresponding sending action
  is denoted $s_{C_M}(D)$), then he encrypts $D$ with Bob's public key $pk_B$ through the action $enc_{pk_B}(D)$ and sends $ENC_{pk_B}(D)$ to Bob through the channel $C_{MB}$ (the corresponding
  sending action is denoted $s_{C_{MB}}(ENC_{pk_B}(D))$);
  \item Bob receives the message $ENC_{pk_B}(D)$ through the channel $C_{MB}$ (the corresponding reading action is denoted $r_{C_{MB}}(ENC_{pk_B}(D))$), after a decryption processing
  $dec_{sk_{B}}(ENC_{pk_B}(D))$ to get the message $D$, then he sends $D$ to the outside through the channel $C_{BO}$ (the corresponding sending action is denoted $s_{C_{BO}}(D)$).
\end{enumerate}

Where $D\in\Delta$, $\Delta$ is the set of data.

Alice's state transitions described by $APTC_G$ are as follows.

$A=\sum_{D\in\Delta}r_{C_{AI}}(D)\cdot A_2$

$A_2=s_{C_{AM}}(Me)\cdot A_3$

$A_3=r_{C_{MA}}(pk_M)\cdot A_4$

$A_4=enc_{pk_M}(D)\cdot A_5$

$A_5=s_{C_{AM}}(ENC_{pk_{M}}(D))\cdot A$

Bob's state transitions described by $APTC_G$ are as follows.

$B=r_{C_{MB}}(Me)\cdot B_2$

$B_2=s_{C_{BM}}(pk_B)\cdot B_3$

$B_3=r_{C_{MB}}(ENC_{pk_B}(D))\cdot B_4$

$B_4=dec_{sk_{B}}(ENC_{pk_B}(D))\cdot B_5$

$B_5=s_{C_{BO}}(D)\cdot B$

Mallory's state transitions described by $APTC_G$ are as follows.

$Ma=r_{C_{AM}}(Me)\cdot Ma_2$

$Ma_2=s_{C_{MB}}(Me)\cdot Ma_3$

$Ma_3=r_{C_{BM}}(pk_B)\cdot Ma_4$

$Ma_4=s_{C_{MA}}(pk_M)\cdot Ma_5$

$Ma_5=r_{C_{AM}}(ENC_{pk_M}(D))\cdot Ma_6$

$Ma_6=dec_{sk_M}(ENC_{pk_M}(D))\cdot Ma_7$

$Ma_7=s_{C_M}(D)\cdot Ma_8$

$Ma_8=enc_{pk_B}(D)\cdot Ma_9$

$Ma_9=s_{C_{MB}}(ENC_{pk_B}(D))\cdot Ma$

The sending action and the reading action of the same type data through the same channel can communicate with each other, otherwise, will cause a deadlock $\delta$. We define the following
communication functions.

$\gamma(r_{C_{AM}}(Me),s_{C_{AM}}(Me))\triangleq c_{C_{AM}}(Me)$

$\gamma(r_{C_{MB}}(Me),s_{C_{MB}}(Me))\triangleq c_{C_{MB}}(Me)$

$\gamma(r_{C_{BM}}(pk_B),s_{C_{BM}}(pk_B))\triangleq c_{C_{BM}}(pk_B)$

$\gamma(r_{C_{MA}}(pk_M),s_{C_{MA}}(pk_M))\triangleq c_{C_{MA}}(pk_M)$

$\gamma(r_{C_{AM}}(ENC_{pk_M}(D)),s_{C_{AM}}(ENC_{pk_M}(D)))\triangleq c_{C_{AM}}(ENC_{pk_M}(D))$

$\gamma(r_{C_{MB}}(ENC_{pk_B}(D)),s_{C_{MB}}(ENC_{pk_B}(D)))\triangleq c_{C_{MB}}(ENC_{pk_B}(D))$

Let all modules be in parallel, then the protocol $A\quad B\quad Ma$ can be presented by the following process term.

$$\tau_I(\partial_H(\Theta(A\between B\between Ma)))=\tau_I(\partial_H(A\between B\between Ma))$$

where $H=\{r_{C_{AM}}(Me),s_{C_{AM}}(Me),r_{C_{MB}}(Me),s_{C_{MB}}(Me),r_{C_{BM}}(pk_B),s_{C_{BM}}(pk_B),\\
r_{C_{MA}}(pk_M),s_{C_{MA}}(pk_M),r_{C_{AM}}(ENC_{pk_M}(D)),s_{C_{AM}}(ENC_{pk_M}(D)),\\
r_{C_{MB}}(ENC_{pk_B}(D)),s_{C_{MB}}(ENC_{pk_B}(D))|D\in\Delta\}$,

$I=\{c_{C_{AM}}(Me),c_{C_{MB}}(Me),c_{C_{BM}}(pk_B),c_{C_{MA}}(pk_M),c_{C_{AM}}(ENC_{pk_M}(D)),\\
c_{C_{MB}}(ENC_{pk_B}(D)),enc_{pk_M}(D),dec_{sk_{B}}(ENC_{pk_B}(D)),dec_{sk_M}(ENC_{pk_M}(D)),enc_{pk_B}(D)|D\in\Delta\}$.

Then we get the following conclusion on the protocol.

\begin{theorem}
The key exchange protocol with public key cryptography in Figure \ref{KEPC5} is insecure.
\end{theorem}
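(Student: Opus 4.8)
The plan is to follow the algebraic verification strategy of Section \ref{app} and Section \ref{asp}: compute the external behaviour of the whole system $\tau_I(\partial_H(A\between B\between Ma))$ by expanding the parallel composition, encapsulating to force the intended channel communications, abstracting the internal cryptographic and communication actions, and then reading off the input--output relation. Insecurity will follow because the abstracted behaviour still contains Mallory's leakage action $s_{C_M}(D)$, witnessing that the man in the middle recovers the plaintext $D$.

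First I would expand $A\between B\between Ma$ using P1 ($x\between y=x\parallel y+x\mid y$) together with RDP to unfold the three recursive principals. The handshake is strictly sequential --- Alice's request, Mallory's relay, Bob's reply with $pk_B$, Mallory's substitution of $pk_M$, Alice's encryption under $pk_M$, Mallory's decryption, leak and re-encryption under $pk_B$, and finally Bob's decryption --- so each step consumes the output of the previous one. After applying $\partial_H$, every bare send or receive action on a channel in $H$ is turned into $\delta$ by D1--D6 and absorbed via A6/A7, so all the mismatched interleaving summands vanish and only the six intended communications $c_{C_{AM}}(Me)$, $c_{C_{MB}}(Me)$, $c_{C_{BM}}(pk_B)$, $c_{C_{MA}}(pk_M)$, $c_{C_{AM}}(ENC_{pk_M}(D))$ and $c_{C_{MB}}(ENC_{pk_B}(D))$ survive, sequenced with the three outward-facing actions $r_{C_{AI}}(D)$, $s_{C_M}(D)$ and $s_{C_{BO}}(D)$, whose channels lie outside $H$.

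Second I would package the resulting transitions into a guarded linear recursive specification $E$ (exactly as in the ABP proof) and apply the abstraction operator $\tau_I$, which by TI1--TI6 renames all the internal actions in $I$ (the six $c$-actions, the encryptions $enc_{pk_M}(D)$, $enc_{pk_B}(D)$, and the decryptions $dec_{sk_M}(ENC_{pk_M}(D))$, $dec_{sk_B}(ENC_{pk_B}(D))$) into $\tau$. Using B1--B2 to remove the silent steps, the soundness of $APTC_{G_{\tau}}$ with guarded linear recursion (Theorem \ref{SAPTC_GABSG}), and RSP, I would obtain
$$\tau_I(\partial_H(A\between B\between Ma))=\sum_{D\in\Delta}r_{C_{AI}}(D)\cdot s_{C_M}(D)\cdot s_{C_{BO}}(D)\cdot\tau_I(\partial_H(A\between B\between Ma)).$$
Since a secure key-exchange protocol should relate the input $r_{C_{AI}}(D)$ only to Bob's output $s_{C_{BO}}(D)$, the extra visible action $s_{C_M}(D)$ shows that Mallory obtains the plaintext $D$; hence the protocol is insecure.

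The main obstacle I expect is the bookkeeping of the parallel expansion: although the handshake is logically sequential, the three principals are composed concurrently, so the raw expansion produces many interleaving and communication-merge summands, and one must argue that every summand other than the six intended communications reduces to $\delta$ under $\partial_H$ (because the send and receive actions are mismatched in channel or in order). Verifying that the surviving term is genuinely a guarded linear recursive specification, so that RSP and Theorem \ref{SAPTC_GABSG} apply, is the delicate part; once this is in place, the abstraction step and the identification of the leakage $s_{C_M}(D)$ are routine.
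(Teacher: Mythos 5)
Your proposal is correct and follows essentially the same route as the paper: the paper likewise derives $\tau_I(\partial_H(A\between B\between Ma))=\sum_{D\in\Delta}(r_{C_{AI}}(D)\cdot s_{C_{M}}(D)\cdot s_{C_{BO}}(D))\cdot\tau_I(\partial_H(A\between B\between Ma))$ by the algebraic laws of $APTC_G$ (deferring the expansion details to the ABP-style computation of section \ref{app}) and concludes insecurity from the presence of the external leakage action $s_{C_M}(D)$. Your write-up simply makes explicit the intermediate steps (encapsulation killing mismatched summands, the guarded linear recursive specification, RSP and abstraction) that the paper leaves implicit.
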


\begin{proof}
Based on the above state transitions of the above modules, by use of the algebraic laws of $APTC_G$, we can prove that

$\tau_I(\partial_H(A\between B\between Ma))=\sum_{D\in\Delta}(r_{C_{AI}}(D)\cdot s_{C_{M}}(D)\cdot s_{C_{BO}}(D))\cdot
\tau_I(\partial_H(A\between B\between Ma))$.

For the details of proof, please refer to section \ref{app}, and we omit it.

That is, the protocol in Figure \ref{KEPC5} $\tau_I(\partial_H(A\between B\between Ma))$ can exhibit undesired external behaviors, that is, there is an external action $s_{C_{M}}(D)$ while
Alice and Bob do not aware.
\end{proof}

\subsection{Interlock Protocol}\label{kepc2}

The interlock protocol shown in Figure \ref{KEPC25} also uses public keys for secure communication with man-in-the-middle attack, that is, Alice, Bob have shared their public keys $pk_{A}$ and $pk_{B}$ already.
But, the interlock protocol can resist man-in-the-middle attack, that is, Alice and Bob can aware of the existence of the man in the middle.

\begin{figure}
    \centering
    \includegraphics{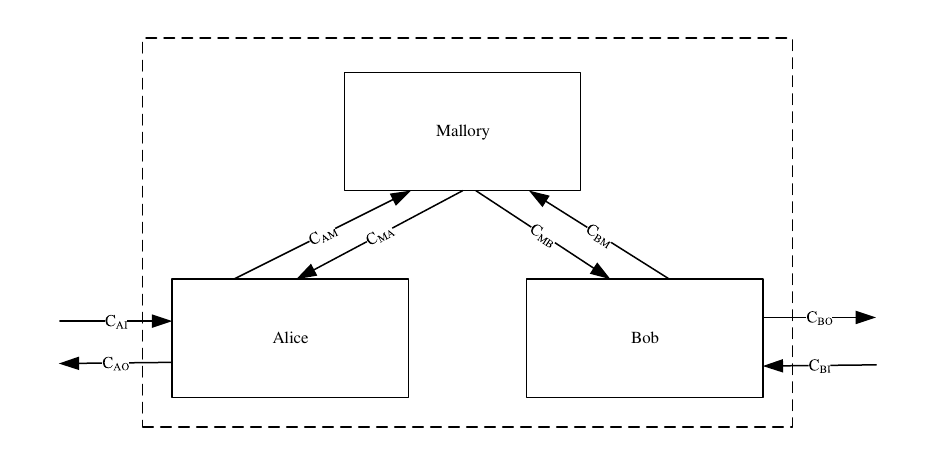}
    \caption{Interlock protocol with man-in-the-middle attack}
    \label{KEPC25}
\end{figure}

The process of the interlock protocol with man-in-the-middle attack is as follows, we assume that Alice has "Bob's" public key $pk_M$, Bob has "Alice's" public key $pk_M$, and Mallory
has Alice's public key $pk_A$ and Bob's public key $pk_B$.

\begin{enumerate}
  \item Alice receives some messages $D_A$ from the outside through the channel $C_{AI}$ (the corresponding reading action is denoted $r_{C_{AI}}(D_A)$), she encrypts the message $D_A$ with Mallory's
  public key $pk_M$ through the action $enc_{pk_M}(D_A)$, then Alice sends the half of $ENC_{pk_{M}}(D_A)$ to Mallory through the channel $C_{AM}$
  (the corresponding sending action is denoted $s_{C_{AM}}(ENC_{pk_{M}}(D_A)/2)$);
  \item Mallory receives $ENC_{pk_M}(D_A)/2$ from Alice through the channel $C_{AM}$ (the corresponding reading action is denoted $r_{C_{AM}}(ENC_{pk_M}(D_A)/2)$), he can not decrypt the message with
  his private key $sk_M$, and has to make another message $D_A'$ and encrypt $D_A'$ with Bob's public key $pk_B$ through the action $enc_{pk_B}(D_A')$, and sends the half of $ENC_{pk_B}(D_A')$
  to Bob through the channel $C_{MB}$ (the corresponding sending action is denoted $s_{C_{MB}}(ENC_{pk_B}(D_A')/2)$);
  \item Bob receives the message $ENC_{pk_B}(D_A')/2$ through the channel $C_{MB}$ (the corresponding reading action is denoted $r_{C_{MB}}(ENC_{pk_B}(D_A')/2)$), and receives some message
  $D_B$ from the outside through the channel $C_{BI}$ (the corresponding reading action is denoted $r_{BI}(D_B)$), after an encryption processing
  $enc_{pk_{M}}(D_B)$ to get the message $ENC_{pk_M}(D_B)$, then he sends the half of $ENC_{pk_M}(D_B)$ to Mallory through the channel $C_{BM}$
  (the corresponding sending action is denoted $s_{C_{BM}}(ENC_{pk_{M}}(D_B)/2)$);
  \item Mallory receives $ENC_{pk_M}(D_B)/2$ from Bob through the channel $C_{BM}$ (the corresponding reading action is denoted $r_{C_{BM}}(ENC_{pk_M}(D_B)/2)$), he can not decrypt the message with
  his private key $sk_M$, and has to make another message $D_B'$ and encrypt $D_B'$ with Alice's public key $pk_A$ through the action $enc_{pk_A}(D_B')$, and sends the half of $ENC_{pk_A}(D_B')$
  to Alice through the channel $C_{MA}$ (the corresponding sending action is denoted $s_{C_{MA}}(ENC_{pk_A}(D_B')/2)$);
  \item Alice receives the message $ENC_{pk_A}(D_B')/2$ through the channel $C_{MA}$ (the corresponding reading action is denoted $r_{C_{MA}}(ENC_{pk_A}(D_B')/2)$), and sends the other half of $ENC_{pk_{M}}(D_A)$ to Mallory through the channel $C_{AM}$
  (the corresponding sending action is denoted \\$s_{C_{AM}}(ENC_{pk_{M}}(D_A)/2)$);
  \item Mallory receives $ENC_{pk_M}(D_A)/2$ from Alice through the channel $C_{AM}$ (the corresponding reading action is denoted $r_{C_{AM}}(ENC_{pk_M}(D_A)/2)$), he can combine the two half
  of $ENC_{pk_M}(D_A)/2$ and decrypt the message with
  his private key $sk_M$, and but he has to send the other half of $ENC_{pk_B}(D_A')$
  to Bob through the channel $C_{MB}$ (the corresponding sending action is denoted $s_{C_{MB}}(ENC_{pk_B}(D_A')/2)$);
  \item Bob receives the message $ENC_{pk_B}(D_A')/2$ through the channel $C_{MB}$ (the corresponding reading action is denoted $r_{C_{MB}}(ENC_{pk_B}(D_A')/2)$), after a combination of
  two half of $ENC_{pk_B}(D_A')$ and a decryption processing
  $dec_{sk_B}(ENC_{pk_{B}}(D_A'))$ to get the message $D_A'$, then he sends it to the outside through the channel $C_{BO}$ (the corresponding sending action is denoted $s_{C_{BO}}(D_A')$).
  Then he sends the other half of $ENC_{pk_M}(D_B)$ to Mallory through the channel $C_{BM}$
  (the corresponding sending action is denoted $s_{C_{BM}}(ENC_{pk_{M}}(D_B)/2)$);
  \item Mallory receives $ENC_{pk_M}(D_B)/2$ from Bob through the channel $C_{BM}$ (the corresponding reading action is denoted $r_{C_{BM}}(ENC_{pk_M}(D_B)/2)$), he can combine the two half
  of $ENC_{pk_M}(D_B)/2$ and decrypt the message with
  his private key $sk_M$, and but he has to send the other half of $ENC_{pk_A}(D_B')$
  to Alice through the channel $C_{MA}$ (the corresponding sending action is denoted $s_{C_{MA}}(ENC_{pk_A}(D_B')/2)$);
  \item Alice receives the message $ENC_{pk_A}(D_B')/2$ through the channel $C_{MA}$ (the corresponding reading action is denoted $r_{C_{MA}}(ENC_{pk_A}(D_B')/2)$), after a combination of
  two half of $ENC_{pk_A}(D_B')$ and a decryption processing
  $dec_{sk_A}(ENC_{pk_{A}}(D_B'))$ to get the message $D_B'$, then she sends it to the outside through the channel $C_{AO}$ (the corresponding sending action is denoted $s_{C_{AO}}(D_B')$).
\end{enumerate}

Where $D\in\Delta$, $\Delta$ is the set of data.

Alice's state transitions described by $APTC_G$ are as follows.

$A=\sum_{D_A\in\Delta}r_{C_{AI}}(D_A)\cdot A_2$

$A_2=enc_{pk_M}(D_A)\cdot A_3$

$A_3=s_{C_{AM}}(ENC_{pk_{M}}(D_A)/2)\cdot A_4$

$A_4=r_{C_{MA}}(ENC_{pk_A}(D_B')/2)\cdot A_5$

$A_5=s_{C_{AM}}(ENC_{pk_{M}}(D_A)/2)\cdot A_6$

$A_6=r_{C_{MA}}(ENC_{pk_A}(D_B')/2)\cdot A_7$

$A_7=dec_{sk_A}(ENC_{pk_{A}}(D_B'))\cdot A_8$

$A_8=s_{C_{AO}}(D_B')\cdot A$

Bob's state transitions described by $APTC_G$ are as follows.

$B=r_{C_{MB}}(ENC_{pk_B}(D_A')/2)\cdot B_2$

$B_2=\sum_{D_B\in\Delta}r_{BI}(D_B)\cdot B_3$

$B_3=enc_{pk_{M}}(D_B)\cdot B_4$

$B_4=s_{C_{BM}}(ENC_{pk_{M}}(D_B)/2)\cdot B_5$

$B_5=r_{C_{MB}}(ENC_{pk_B}(D_A')/2)\cdot B_6$

$B_6=dec_{sk_B}(ENC_{pk_{B}}(D_A'))\cdot B_7$

$B_7=s_{C_{BO}}(D_A')\cdot B_8$

$B_8=s_{C_{BM}}(ENC_{pk_{M}}(D_B)/2)\cdot B$

Mallory's state transitions described by $APTC_G$ are as follows.

$Ma=r_{C_{AM}}(ENC_{pk_M}(D_A)/2)\cdot Ma_2$

$Ma_2=enc_{pk_B}(D_A')\cdot Ma_3$

$Ma_3=s_{C_{MB}}(ENC_{pk_B}(D_A')/2)\cdot Ma_4$

$Ma_4=r_{C_{BM}}(ENC_{pk_M}(D_B)/2)\cdot Ma_5$

$Ma_5=enc_{pk_A}(D_B')\cdot Ma_6$

$Ma_6=s_{C_{MA}}(ENC_{pk_A}(D_B')/2)\cdot Ma_7$

$Ma_7=r_{C_{AM}}(ENC_{pk_M}(D_A)/2)\cdot Ma_8$

$Ma_8=s_{C_{MB}}(ENC_{pk_B}(D_A')/2)\cdot Ma_9$

$Ma_9=r_{C_{BM}}(ENC_{pk_M}(D_B)/2)\cdot Ma_{10}$

$Ma_{10}=s_{C_{MA}}(ENC_{pk_A}(D_B')/2)\cdot Ma$

The sending action and the reading action of the same type data through the same channel can communicate with each other, otherwise, will cause a deadlock $\delta$. We define the following
communication functions.

$\gamma(r_{C_{AM}}(ENC_{pk_M}(D_A)/2),s_{C_{AM}}(ENC_{pk_M}(D_A)/2))\triangleq c_{C_{AM}}(ENC_{pk_M}(D_A)/2)$

$\gamma(r_{C_{MB}}(ENC_{pk_B}(D_A')/2),s_{C_{MB}}(ENC_{pk_B}(D_A')/2))\triangleq c_{C_{MB}}(ENC_{pk_B}(D_A')/2)$

$\gamma(r_{C_{BM}}(ENC_{pk_M}(D_B)/2),s_{C_{BM}}(ENC_{pk_M}(D_B)/2))\triangleq c_{C_{BM}}(ENC_{pk_M}(D_B)/2)$

$\gamma(r_{C_{MA}}(ENC_{pk_A}(D_B')/2),s_{C_{MA}}(ENC_{pk_A}(D_B')/2))\triangleq c_{C_{MA}}(ENC_{pk_A}(D_B')/2)$

$\gamma(r_{C_{AM}}(ENC_{pk_M}(D_A)/2),s_{C_{AM}}(ENC_{pk_M}(D_A)/2))\triangleq c_{C_{AM}}(ENC_{pk_M}(D_A)/2)$

$\gamma(r_{C_{MB}}(ENC_{pk_B}(D_A')/2),s_{C_{MB}}(ENC_{pk_B}(D_A')/2))\triangleq c_{C_{MB}}(ENC_{pk_B}(D_A')/2)$

$\gamma(r_{C_{BM}}(ENC_{pk_M}(D_B)/2),s_{C_{BM}}(ENC_{pk_M}(D_B)/2))\triangleq c_{C_{BM}}(ENC_{pk_M}(D_B)/2)$

$\gamma(r_{C_{MA}}(ENC_{pk_A}(D_B')/2),s_{C_{MA}}(ENC_{pk_A}(D_B')/2))\triangleq c_{C_{MA}}(ENC_{pk_A}(D_B')/2)$

Let all modules be in parallel, then the protocol $A\quad B\quad Ma$ can be presented by the following process term.

$$\tau_I(\partial_H(\Theta(A\between B\between Ma)))=\tau_I(\partial_H(A\between B\between Ma))$$

where $H=\{r_{C_{AM}}(ENC_{pk_M}(D_A)/2),s_{C_{AM}}(ENC_{pk_M}(D_A)/2),r_{C_{MB}}(ENC_{pk_B}(D_A')/2),\\
s_{C_{MB}}(ENC_{pk_B}(D_A')/2),r_{C_{BM}}(ENC_{pk_M}(D_B)/2),s_{C_{BM}}(ENC_{pk_M}(D_B)/2),\\
r_{C_{MA}}(ENC_{pk_A}(D_B')/2),s_{C_{MA}}(ENC_{pk_A}(D_B')/2),r_{C_{AM}}(ENC_{pk_M}(D_A)/2),\\
s_{C_{AM}}(ENC_{pk_M}(D_A)/2),r_{C_{MB}}(ENC_{pk_B}(D_A')/2),s_{C_{MB}}(ENC_{pk_B}(D_A')/2),\\
r_{C_{BM}}(ENC_{pk_M}(D_B)/2),s_{C_{BM}}(ENC_{pk_M}(D_B)/2),r_{C_{MA}}(ENC_{pk_A}(D_B')/2),\\
s_{C_{MA}}(ENC_{pk_A}(D_B')/2)|D_A,D_B,D_A',D_B'\in\Delta\}$,

$I=\{c_{C_{AM}}(ENC_{pk_M}(D_A)/2),c_{C_{MB}}(ENC_{pk_B}(D_A')/2),c_{C_{BM}}(ENC_{pk_M}(D_B)/2),c_{C_{MA}}(ENC_{pk_A}(D_B')/2),\\
c_{C_{AM}}(ENC_{pk_M}(D_A)/2),c_{C_{MB}}(ENC_{pk_B}(D_A')/2),c_{C_{BM}}(ENC_{pk_M}(D_B)/2),c_{C_{MA}}(ENC_{pk_A}(D_B')/2),\\
enc_{pk_M}(D_A),dec_{sk_A}(ENC_{pk_{A}}(D_B')),enc_{pk_{M}}(D_B),dec_{sk_B}(ENC_{pk_{B}}(D_A')),enc_{pk_B}(D_A'),\\
enc_{pk_A}(D_B')|D_A,D_B,D_A',D_B'\in\Delta\}$.

Then we get the following conclusion on the protocol.

\begin{theorem}
The interlock protocol with public key cryptography in Figure \ref{KEPC25} is secure.
\end{theorem}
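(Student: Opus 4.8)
The plan is to mirror the computation carried out for the ABP protocol in Section \ref{app} and for the preceding protocols in this chapter: expand the merge $A\between B\between Ma$ into a guarded linear recursive specification, encapsulate with $\partial_H$ to force each matching send/receive pair into a communication, abstract the internal actions with $\tau_I$, and then read off the resulting external behaviour. The conclusion will be that Alice's and Bob's genuine plaintexts never cross the man in the middle intact, which is externally observable and therefore detectable.

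First I would repeatedly apply the parallel-expansion axioms $P1$--$P10$ together with the communication-merge axioms $C1$--$C4$ to rewrite $A\between B\between Ma$. Each productive step pairs a sending action on some channel in $H$ with its matching receiving action; the encapsulation operator $\partial_H$, via $D1$--$D6$, turns every unmatched send or receive into $\delta$ and hence, by $A6$ and $A7$, deletes all interleavings that do not correspond to a genuine handshake. Because the interlock discipline forces each ciphertext to be transmitted in two halves, the surviving term threads the eight half-messages through in the fixed causal order dictated by $A$, $B$ and $Ma$, yielding a linear recursive specification $E$ with one variable per reachable global state.

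Next I would set $\partial_H(A\between B\between Ma)=\langle X_1|E\rangle$ and apply $\tau_I$, using $TI1$--$TI6$ to rename the internal communications $c_{C_{AM}}(\cdots/2)$, $c_{C_{MB}}(\cdots/2)$ and the cryptographic actions $enc$, $dec$ into $\tau$, then collapsing the $\tau$-blocks with $B1$--$B3$ and $CFAR$. By $RSP$ the solution is unique, so I expect the abstracted system to satisfy
$$\tau_I(\partial_H(A\between B\between Ma))=\sum_{D_A,D_B\in\Delta}(r_{C_{AI}}(D_A)\parallel r_{BI}(D_B))\cdot(s_{C_{BO}}(D_A')\parallel s_{C_{AO}}(D_B'))\cdot\tau_I(\partial_H(A\between B\between Ma)),$$
where $D_A'$ and $D_B'$ are the messages Mallory is \emph{obliged} to forge, since a single half cannot be decrypted with $sk_M$. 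The key observation is that Bob outputs $D_A'$ rather than Alice's $D_A$, and Alice outputs $D_B'$ rather than Bob's $D_B$; the substitution is visible in the external behaviour, so the attack cannot pass unnoticed and the protocol is secure.

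The hard part will be the bookkeeping of the expansion itself: the interleaving of the two half-message rounds makes the reachable state space considerably larger than in the single-message examples, and one must verify that $\partial_H$ genuinely eliminates every schedule in which Mallory could splice a half-message through before forging. In other words, the causal constraint ``a ciphertext must be reassembled from both halves before $dec$ applies'' has to be shown to be faithfully enforced by the ordering of the sending and receiving actions in the specifications of $A$, $B$ and $Ma$. Establishing that this ordering leaves Mallory no option but to substitute $D_A'$ and $D_B'$, and that no spurious $\tau$-loop survives the abstraction so that $CFAR$ applies cleanly, is where the real work lies.
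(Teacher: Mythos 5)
Your proposal follows essentially the same route as the paper: expand $A\between B\between Ma$ with the parallel and communication axioms, encapsulate with $\partial_H$, abstract with $\tau_I$, and read off that Bob emits $D_A'$ and Alice emits $D_B'$ rather than the genuine $D_A$ and $D_B$, which is exactly how the paper concludes that the man in the middle is detectable. The only cosmetic difference is that the paper's stated normal form sequences the reads and sends as $r_{C_{AI}}(D_A)\cdot r_{C_{BI}}(D_B)\cdot s_{C_{BO}}(D_A')\cdot s_{C_{AO}}(D_B')$ (summing over all four data) rather than composing them in parallel --- consistent with the causal ordering forced by the half-message handshake --- while the paper itself omits the detailed expansion and defers to section \ref{app}, just as you anticipate the bookkeeping to be the main labour.
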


\begin{proof}
Based on the above state transitions of the above modules, by use of the algebraic laws of $APTC_G$, we can prove that

$\tau_I(\partial_H(A\between B\between Ma))=\sum_{D_A,D_B,D_A',D_B'\in\Delta}(r_{C_{AI}}(D_A)\cdot r_{C_{BI}}(D_B)\cdot s_{C_{BO}}(D_A')\cdot s_{C_{AO}}(D_B'))\cdot
\tau_I(\partial_H(A\between B\between Ma))$.

For the details of proof, please refer to section \ref{app}, and we omit it.

That is, the interlock protocol in Figure \ref{KEPC25} $\tau_I(\partial_H(A\between B\between Ma))$ can exhibit desired external behaviors, that is, Alice and Bob can aware the existence of the
man in the middle.
\end{proof}

\subsection{Key Exchange with Digital Signatures}\label{MIM}

The protocol shown in Figure \ref{KEDS5} uses digital signature for secure communication with man-in-the-middle attack, that is, Alice, Bob have shared their public keys $pk_{A}$ and $pk_{B}$,
and the public keys are signed by the Trent: $SIGN_{sk_T}(A,pk_A)$, $SIGN_{sk_T}(B,pk_B)$ and $SIGN_{sk_T}(M,pk_M)$. Note that, Trent's public key $pk_T$ is well-known.
And also, the key exchange protocol with digital signature can resist man-in-the-middle attack, that is, Alice and Bob can aware of the existence of the man in the middle.

\begin{figure}
    \centering
    \includegraphics{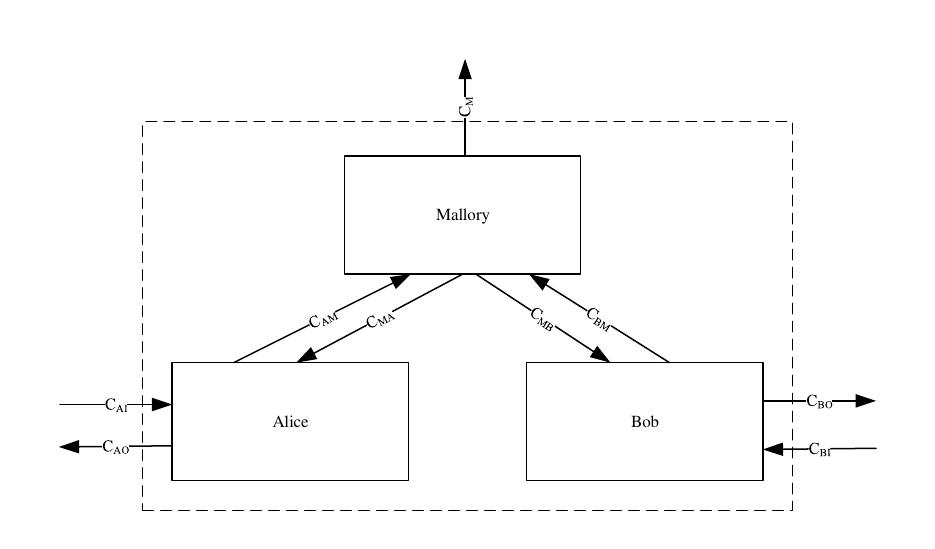}
    \caption{Key exchange protocol with digital signature and man-in-the-middle attack}
    \label{KEDS5}
\end{figure}

The process of the protocol with man-in-the-middle attack is as follows, and we only consider the message in one direction: from Alice to Bob.

\begin{enumerate}
  \item Alice receives some messages $D$ from the outside through the channel $C_{AI}$ (the corresponding reading action is denoted $r_{C_{AI}}(D)$), she sends a key request message $Me$
  to Mallory through the channel $C_{AM}$ (the corresponding sending action is denoted $s_{C_{AM}}(Me)$);
  \item Mallory receives the message $Me$ through the channel $C_{AM}$ (the corresponding reading action is denoted $r_{C_{AM}}(Me)$), he sends $Me$
  to the Bob through the channel $C_{MB}$ (the corresponding sending action is denoted $s_{C_{MB}}(Me)$);
  \item Bob receives the message $Me$ from Mallory through the channel $C_{MB}$ (the corresponding reading action is denoted $r_{C_{MB}}(Me)$), and sends his signed public key $SIGN_{sk_T}(B,pk_B)$ to Mallory
  through the channel $C_{BM}$ (the corresponding sending action is denoted \\$s_{C_{BM}}(SIGN_{sk_T}(B,pk_B))$);
  \item Mallory receives $SIGN_{sk_T}(B,pk_B)$ from Bob through the channel $C_{BM}$ (the corresponding reading action is denoted $r_{C_{BM}}(SIGN_{sk_T}(B,pk_B))$), he can get $pk_B$, then he sends his signed public key
  $SIGN_{sk_T}(M,pk_M)$ or $SIGN_{sk_T}(B,pk_B)$ to Alice through the channel $C_{MA}$ (the corresponding sending action is denoted $s_{C_{MA}}(SIGN_{sk_T}(M,pk_M))$);
  \item Alice receives $SIGN_{sk_T}(M,pk_M)$ or $SIGN_{sk_T}(B,pk_B)$ from Mallory through the channel $C_{MA}$ (the corresponding reading action is denoted $r_{C_{MA}}(SIGN_{sk_T}(d_1,d_2))$), she de-sign this
  message using Trent's public key $pk_T$ through the action $de\textrm{-}sign_{pk_T}(SIGN_{sk_T}(d_1,d_2))$, if $d_1=B$: she encrypts the message $D$ with Bob's public key $pk_B$ through the action $enc_{pk_B}(D)$,
  then Alice sends $ENC_{pk_{B}}(D)$ to Mallory through the channel $C_{AM}$ (the corresponding sending action is denoted $s_{C_{AM}}(ENC_{pk_{B}}(D))$); if $d_1\neq B$, she encrypts
  the message $\bot$ (a special meaningless message) with Mallory's public key $pk_M$ through the action $enc_{pk_M}(\bot)$, then Alice sends $ENC_{pk_{M}}(\bot)$ to Mallory through the channel $C_{AM}$
  (the corresponding sending action is denoted $s_{C_{AM}}(ENC_{pk_{M}}(\bot))$);
  \item Mallory receives $ENC_{pk_M}(d_3)$ from Alice through the channel $C_{AM}$ (the corresponding reading action is denoted $r_{C_{AM}}(ENC_{pk_M}(d_3))$), he decrypts the message with
  his private key $sk_M$ through the action $dec_{sk_M}(ENC_{pk_M}(d_3))$ to get the message $d_3$ (maybe $\bot$ or another meaningless data, all denoted $\bot$), and sends $\bot$ to
  the outside through the channel $C_M$ (the corresponding sending action is denoted $s_{C_M}(\bot)$), then he encrypts $\bot$ with Bob's public key $pk_B$ through the action $enc_{pk_B}(\bot)$ and
  sends $ENC_{pk_B}(\bot)$ to Bob through the channel $C_{MB}$ (the corresponding sending action is denoted $s_{C_{MB}}(ENC_{pk_B}(\bot))$);
  \item Bob receives the message $ENC_{pk_B}(\bot)$ through the channel $C_{MB}$ (the corresponding reading action is denoted $r_{C_{MB}}(ENC_{pk_B}(\bot))$), after a decryption processing
  $dec_{sk_{B}}(ENC_{pk_B}(\bot))$ to get the message $\bot$, then he sends $\bot$ to the outside through the channel $C_{BO}$ (the corresponding sending action is denoted $s_{C_{BO}}(\bot)$).
\end{enumerate}

Where $D\in\Delta$, $\Delta$ is the set of data.

Alice's state transitions described by $APTC_G$ are as follows.

$A=\sum_{D\in\Delta}r_{C_{AI}}(D)\cdot A_2$

$A_2=s_{C_{AM}}(Me)\cdot A_3$

$A_3=r_{C_{MA}}(SIGN_{sk_T}(d_1,d_2))\cdot A_4$

$A_4=de\textrm{-}sign_{pk_T}(SIGN_{sk_T}(d_1,d_2))\cdot A_5$

$A_5=(\{d_1=B\}\cdot enc_{pk_{B}}(D)\cdot s_{C_{AM}}(ENC_{pk_{B}}(D))+\{d_1\neq B\}\cdot enc_{pk_M}(\bot)\cdot s_{C_{AM}}(ENC_{pk_{M}}(\bot)))\cdot A$

Bob's state transitions described by $APTC_G$ are as follows.

$B=r_{C_{MB}}(Me)\cdot B_2$

$B_2=s_{C_{BM}}(SIGN_{sk_T}(B,pk_B))\cdot B_3$

$B_3=r_{C_{MB}}(ENC_{pk_B}(\bot))\cdot B_4$

$B_4=dec_{sk_{B}}(ENC_{pk_B}(\bot))\cdot B_5$

$B_5=s_{C_{BO}}(\bot)\cdot B$

Mallory's state transitions described by $APTC_G$ are as follows.

$Ma=r_{C_{AM}}(Me)\cdot Ma_2$

$Ma_2=s_{C_{MB}}(Me)\cdot Ma_3$

$Ma_3=r_{C_{BM}}(SIGN_{sk_T}(B,pk_B))\cdot Ma_4$

$Ma_4=s_{C_{MA}}(SIGN_{sk_T}(M,pk_M))\cdot Ma_5$

$Ma_5=r_{C_{AM}}(ENC_{pk_M}(d_3))\cdot Ma_6$

$Ma_6=dec_{sk_M}(ENC_{pk_M}(d_3))\cdot Ma_7$

$Ma_7=s_{C_M}(\bot)\cdot Ma_8$

$Ma_8=enc_{pk_B}(\bot)\cdot Ma_9$

$Ma_9=s_{C_{MB}}(ENC_{pk_B}(\bot))\cdot Ma$

The sending action and the reading action of the same type data through the same channel can communicate with each other, otherwise, will cause a deadlock $\delta$. We define the following
communication functions.

$\gamma(r_{C_{AM}}(Me),s_{C_{AM}}(Me))\triangleq c_{C_{AM}}(Me)$

$\gamma(r_{C_{MB}}(Me),s_{C_{MB}}(Me))\triangleq c_{C_{MB}}(Me)$

$\gamma(r_{C_{BM}}(SIGN_{sk_T}(B,pk_B)),s_{C_{BM}}(SIGN_{sk_T}(B,pk_B)))\triangleq c_{C_{BM}}(SIGN_{sk_T}(B,pk_B))$

$\gamma(r_{C_{MA}}(SIGN_{sk_T}(M,pk_M)),s_{C_{MA}}(SIGN_{sk_T}(M,pk_M)))\triangleq c_{C_{MA}}(SIGN_{sk_T}(M,pk_M))$

$\gamma(r_{C_{AM}}(ENC_{pk_M}(d_3)),s_{C_{AM}}(ENC_{pk_M}(d_3)))\triangleq c_{C_{AM}}(ENC_{pk_M}(d_3))$

$\gamma(r_{C_{MB}}(ENC_{pk_B}(\bot)),s_{C_{MB}}(ENC_{pk_B}(\bot)))\triangleq c_{C_{MB}}(ENC_{pk_B}(\bot))$

Let all modules be in parallel, then the protocol $A\quad B\quad Ma$ can be presented by the following process term.

$$\tau_I(\partial_H(\Theta(A\between B\between Ma)))=\tau_I(\partial_H(A\between B\between Ma))$$

where $H=\{r_{C_{AM}}(Me),s_{C_{AM}}(Me),r_{C_{MB}}(Me),s_{C_{MB}}(Me),r_{C_{BM}}(SIGN_{sk_T}(B,pk_B)),\\
s_{C_{BM}}(SIGN_{sk_T}(B,pk_B)),r_{C_{MA}}(SIGN_{sk_T}(M,pk_M)),s_{C_{MA}}(SIGN_{sk_T}(M,pk_M)),\\
r_{C_{AM}}(ENC_{pk_M}(d_3)),s_{C_{AM}}(ENC_{pk_M}(d_3)),r_{C_{MB}}(ENC_{pk_B}(\bot)),s_{C_{MB}}(ENC_{pk_B}(\bot))|D\in\Delta\}$,

$I=\{c_{C_{AM}}(Me),c_{C_{MB}}(Me),c_{C_{BM}}(SIGN_{sk_T}(B,pk_B)),c_{C_{MA}}(SIGN_{sk_T}(M,pk_M)),\\
c_{C_{AM}}(ENC_{pk_M}(d_3)),c_{C_{MB}}(ENC_{pk_B}(\bot)),de\textrm{-}sign_{pk_T}(SIGN_{sk_T}(d_1,d_2)),\\
\{d_1=B\}, ENC_{pk_{B}}(D),\{d_1\neq B\}, enc_{pk_M}(\bot),dec_{sk_{B}}(ENC_{pk_B}(\bot)),\\
dec_{sk_M}(ENC_{pk_M}(d_3)),enc_{pk_B}(\bot)|D\in\Delta\}$.

Then we get the following conclusion on the protocol.

\begin{theorem}
The key exchange protocol with digital signature in Figure \ref{KEDS5} is secure.
\end{theorem}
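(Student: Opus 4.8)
The plan is to follow the same strategy used for the protocol verifications in section \ref{app} and for the preceding key-exchange examples: expand the parallel composition with the axioms of $APTC_G$, encapsulate to force the intended communications, capture the resulting behaviour by a guarded linear recursive specification, and finally abstract away the internal actions. First I would compute the three-way merge $A\between B\between Ma$ pairwise, repeatedly applying the expansion law $P1$ ($x\between y = x\parallel y + x\mid y$) together with the parallel axioms $P2$--$P10$ and the communication axioms $C1$--$C10$. Applying $\partial_H$ then blocks every send/receive action whose channel lies in $H$ that is not matched by its partner, so by $D1$--$D6$ all mismatched sends and receptions collapse to $\delta$ and are absorbed via $A6$ and $A7$; what survives is exactly the chain of successful communications $c_{C_{AM}}(Me),\, c_{C_{MB}}(Me),\, c_{C_{BM}}(SIGN_{sk_T}(B,pk_B)),\, c_{C_{MA}}(SIGN_{sk_T}(M,pk_M)),\ldots$ in the forced order dictated by the sequential structure of $A$, $B$ and $Ma$.

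The decisive algebraic step is the guard evaluation inside $A_5$. Since Mallory forwards his own signed key, Alice's de-sign action yields $d_1=M$, so the guard $\{d_1=B\}$ fails while $\{d_1\neq B\}$ holds; by the guard axioms $G1$--$G10$ (in particular $G3$ and $G8$) the branch $\{d_1=B\}\cdot enc_{pk_B}(D)\cdot s_{C_{AM}}(ENC_{pk_B}(D))$ reduces to $\delta$ and the branch $\{d_1\neq B\}\cdot enc_{pk_M}(\bot)\cdot s_{C_{AM}}(ENC_{pk_M}(\bot))$ is selected. Consequently the real datum $D$ is never encrypted under any key Mallory can read, and only the meaningless $\bot$ propagates through Mallory to Bob.

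Collecting these reductions, I would set $\partial_H(A\between B\between Ma)=\langle X_1|E\rangle$ for a guarded linear recursive specification $E$ whose equations mirror the forced communication sequence, and then apply $\tau_I$. By the abstraction axioms $TI1$--$TI6$ together with $B1$--$B3$, every action in $I$ (the internal cryptographic operations, the guards, and all $c$-communications) becomes $\tau$ and is absorbed, and an appeal to RSP as in Theorem \ref{CAPTC_GRG} and the abstraction completeness of Theorem \ref{CCFARG} leaves
\begin{eqnarray}
\tau_I(\partial_H(A\between B\between Ma))&=&\sum_{D\in\Delta}(r_{C_{AI}}(D)\cdot s_{C_M}(\bot)\cdot s_{C_{BO}}(\bot))\nonumber\\
&&\cdot\,\tau_I(\partial_H(A\between B\between Ma)).\nonumber
\end{eqnarray}
Because the only data visible to Mallory and to Bob is the meaningless $\bot$ rather than $D$, the signature binding detected by Alice defeats the man-in-the-middle, so the protocol is secure.

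I expect the main obstacle to be the careful bookkeeping of the interleaved expansion of the three-way merge $A\between B\between Ma$ and the verification that $\partial_H$ admits essentially a unique communication trace, so that the intermediate specification $E$ is genuinely guarded and linear. Once the guard $\{d_1\neq B\}$ is resolved and the $\bot$-carrying branch is isolated, the remaining reductions are routine and parallel the proof already given in section \ref{app}; the conceptual content lies entirely in showing that the digital-signature check short-circuits Alice's encryption of $D$.
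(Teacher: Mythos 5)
Your proposal is correct and follows essentially the same route as the paper: expand $A\between B\between Ma$ with the parallel and communication axioms, encapsulate with $\partial_H$, resolve the guard $\{d_1\neq B\}$ so that only $\bot$ propagates, form a guarded linear recursive specification, and abstract with $\tau_I$ to obtain exactly the equation $\tau_I(\partial_H(A\between B\between Ma))=\sum_{D\in\Delta}(r_{C_{AI}}(D)\cdot s_{C_{M}}(\bot)\cdot s_{C_{BO}}(\bot))\cdot\tau_I(\partial_H(A\between B\between Ma))$ that the paper states. The paper itself omits the detailed expansion and merely cites the method of section \ref{app}, so your write-up is if anything more explicit about the decisive guard-evaluation step.
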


\begin{proof}
Based on the above state transitions of the above modules, by use of the algebraic laws of $APTC_G$, we can prove that

$\tau_I(\partial_H(A\between B\between Ma))=\sum_{D\in\Delta}(r_{C_{AI}}(D)\cdot s_{C_{M}}(\bot)\cdot s_{C_{BO}}(\bot))\cdot
\tau_I(\partial_H(A\between B\between Ma))$.

For the details of proof, please refer to section \ref{app}, and we omit it.

That is, the protocol in Figure \ref{KEDS5} $\tau_I(\partial_H(A\between B\between Ma))$ can exhibit desired external behaviors, that is, Alice and Bob can aware the existence of the
man in the middle.
\end{proof}

\subsection{Key and Message Transmission}\label{kmt}

The protocol shown in Figure \ref{KMT5} uses digital signature for secure communication, that is, Alice, Bob have shared their public keys $pk_{A}$ and $pk_{B}$,
and the public keys are signed by the Trent: $SIGN_{sk_T}(A,pk_A)$, $SIGN_{sk_T}(B,pk_B)$. Note that, Trent's public key $pk_T$ is well-known.
There is not a session key exchange process before the message is transferred.

\begin{figure}
    \centering
    \includegraphics{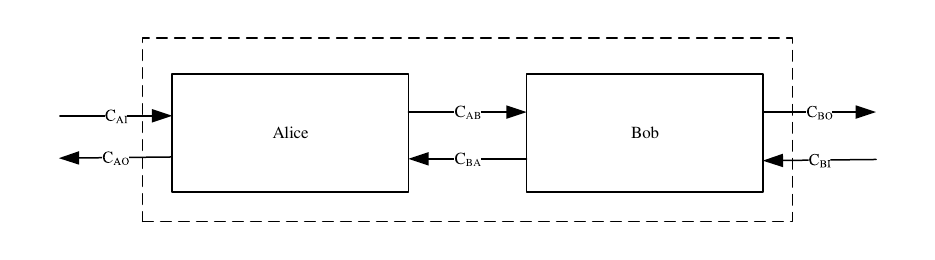}
    \caption{Key and message transmission protocol}
    \label{KMT5}
\end{figure}

The process of the protocol is as follows, and we only consider the message in one direction: from Alice to Bob.

\begin{enumerate}
  \item Alice receives some messages $D$ from the outside through the channel $C_{AI}$ (the corresponding reading action is denoted $r_{C_{AI}}(D)$), she has $SIGN_{sk_T}(B,pk_B)$, she \\
  $de\textrm{-}sign_{pk_T}(SIGN_{sk_T}(B,pk_B))$ and gets $pk_B$, then generate a session key $k_{AB}$ through an action $rsg_{k_AB}$, and she encrypts the message $D$ with $k_{AB}$
  through an action $enc_{k_{AB}}(D)$ and encrypts $k_{AB}$ with Bob's public key $pk_B$ through the action $enc_{pk_B}(k_{AB})$,
  then Alice sends $ENC_{pk_{B}}(k_{AB}),ENC_{k_{AB}}(D)$ to Bob through the channel $C_{AB}$ (the corresponding sending action is denoted $s_{C_{AB}}(ENC_{pk_{B}}(k_{AB}),ENC_{k_{AB}}(D))$);
  \item Bob receives the message $ENC_{pk_{B}}(k_{AB}),ENC_{k_{AB}}(D)$ through the channel $C_{AB}$ (the corresponding reading action is denoted
  $r_{C_{AB}}(ENC_{pk_{B}}(k_{AB}),ENC_{k_{AB}}(D))$), after a decryption processing $dec_{sk_{B}}(ENC_{pk_B}(k_{AB}))$ to get the message $k_{AB}$ and a decryption processing
  $dec_{k_{AB}}(ENC_{k_{AB}}(D))$ to get $D$, then he sends $D$ to the outside through the channel $C_{BO}$ (the corresponding sending action is denoted $s_{C_{BO}}(D)$).
\end{enumerate}

Where $D\in\Delta$, $\Delta$ is the set of data.

Alice's state transitions described by $APTC_G$ are as follows.

$A=\sum_{D\in\Delta}r_{C_{AI}}(D)\cdot A_2$

$A_2=de\textrm{-}sign_{pk_T}(SIGN_{sk_T}(B,pk_B))\cdot A_3$

$A_3=rsg_{k_{AB}}\cdot A_4$

$A_4=enc_{k_{AB}}(D)\cdot A_5$

$A_5=enc_{pk_B}(k_{AB})\cdot A_6$

$A_6=s_{C_{AB}}(ENC_{pk_{B}}(k_{AB}),ENC_{k_{AB}}(D))\cdot A$

Bob's state transitions described by $APTC_G$ are as follows.

$B=r_{C_{AB}}(ENC_{pk_{B}}(k_{AB}),ENC_{k_{AB}}(D))\cdot B_2$

$B_2=dec_{sk_B}(ENC_{pk_{B}}(k_{AB}))\cdot B_3$

$B_3=dec_{k_{AB}}(ENC_{k_{AB}}(D))\cdot B_4$

$B_4=s_{C_{BO}}(D)\cdot B$

The sending action and the reading action of the same type data through the same channel can communicate with each other, otherwise, will cause a deadlock $\delta$. We define the following
communication functions.

$\gamma(r_{C_{AB}}(ENC_{pk_{B}}(k_{AB}),ENC_{k_{AB}}(D)),s_{C_{AB}}(ENC_{pk_{B}}(k_{AB}),ENC_{k_{AB}}(D)))\\\triangleq c_{C_{AB}}(ENC_{pk_{B}}(k_{AB}),ENC_{k_{AB}}(D))$

Let all modules be in parallel, then the protocol $A\quad B$ can be presented by the following process term.

$$\tau_I(\partial_H(\Theta(A\between B)))=\tau_I(\partial_H(A\between B))$$

where $H=\{r_{C_{AB}}(ENC_{pk_{B}}(k_{AB}),ENC_{k_{AB}}(D)),s_{C_{AB}}(ENC_{pk_{B}}(k_{AB}),ENC_{k_{AB}}(D))|D\in\Delta\}$,

$I=\{c_{C_{AB}}(ENC_{pk_{B}}(k_{AB}),ENC_{k_{AB}}(D)),de\textrm{-}sign_{pk_T}(SIGN_{sk_T}(B,pk_B)),rsg_{k_{AB}},\\
enc_{k_{AB}}(D),enc_{pk_B}(k_{AB}),dec_{sk_B}(ENC_{pk_{B}}(k_{AB})),dec_{k_{AB}}(ENC_{k_{AB}}(D))|D\in\Delta\}$.

Then we get the following conclusion on the protocol.

\begin{theorem}
The key and message transmission protocol with digital signature in Figure \ref{KMT5} is secure.
\end{theorem}

\begin{proof}
Based on the above state transitions of the above modules, by use of the algebraic laws of $APTC_G$, we can prove that

$\tau_I(\partial_H(A\between B))=\sum_{D\in\Delta}(r_{C_{AI}}(D)\cdot s_{C_{BO}}(D))\cdot
\tau_I(\partial_H(A\between B))$.

For the details of proof, please refer to section \ref{app}, and we omit it.

That is, the protocol in Figure \ref{KMT5} $\tau_I(\partial_H(A\between B))$ can exhibit desired external behaviors, and similarly to the protocol in subsection \ref{MIM}, this protocol can
resist the man-in-the-middle attack.
\end{proof}

\subsection{Key and Message Broadcast}\label{kmb}

The protocol shown in Figure \ref{KMB5} uses digital signature for secure broadcast communication, that is, Alice, Bob, Carol, and Dave have shared their public keys $pk_{A}$, $pk_{B}$, $pk_C$ and $pk_{Da}$
and the public keys are signed by the Trent: $SIGN_{sk_T}(A,pk_A)$, $SIGN_{sk_T}(B,pk_B)$, $SIGN_{sk_T}(C,pk_C)$ and $SIGN_{sk_T}(Da,pk_{Da})$. Note that, Trent's public key $pk_T$ is well-known.
There is not a session key exchange process before the message is transferred.

\begin{figure}
    \centering
    \includegraphics{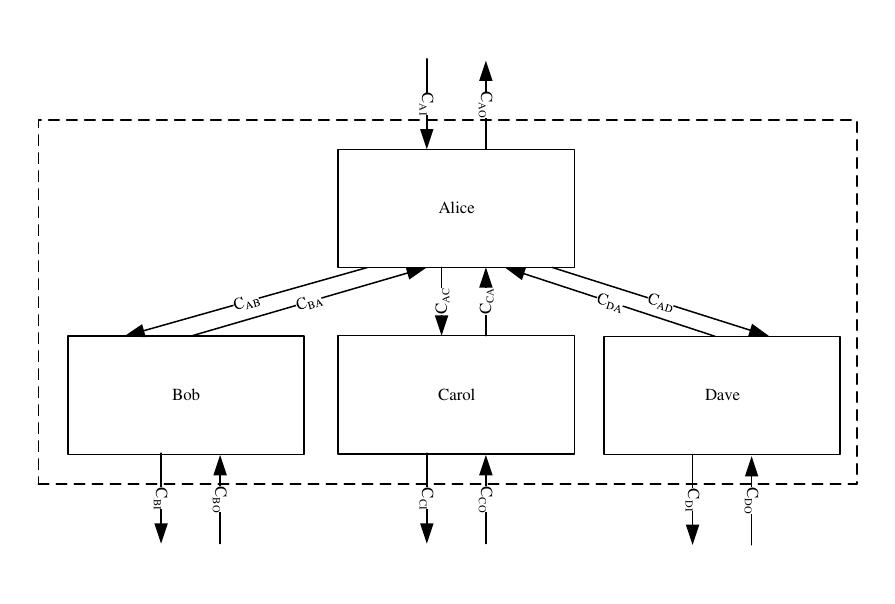}
    \caption{Key and message broadcast protocol}
    \label{KMB5}
\end{figure}

The process of the protocol is as follows, and we only consider the message in one direction: from Alice to Bob, Carol and Dave.

\begin{enumerate}
  \item Alice receives some messages $D$ from the outside through the channel $C_{AI}$ (the corresponding reading action is denoted $r_{C_{AI}}(D)$), she has $SIGN_{sk_T}(B,pk_B)$,
  $SIGN_{sk_T}(C,pk_C)$, and $SIGN_{sk_T}(Da,pk_{Da})$, she $de\textrm{-}sign_{pk_T}(SIGN_{sk_T}(B,pk_B))$ and gets $pk_B$, \\
  $de\textrm{-}sign_{pk_T}(SIGN_{sk_T}(C,pk_C))$ and gets $pk_C$,
  $de\textrm{-}sign_{pk_T}(SIGN_{sk_T}(Da,pk_{Da}))$ and gets $pk_{Da}$, then generate a session key $k$ through an action $rsg_{k}$, and she encrypts the message $D$ with $k$
  through an action $enc_{k}(D)$ and encrypts $k$ with Bob's public key $pk_B$ through the action $enc_{pk_B}(k_{AB})$, Carol's public key $pk_C$ through the action $enc_{pk_C}(k)$,
  Dave's public key $pk_{Da}$ through the action $enc_{pk_{Da}}(k)$,
  then Alice sends $ENC_{pk_{B}}(k),ENC_{k}(D)$ to Bob through the channel $C_{AB}$ (the corresponding sending action is denoted \\
  $s_{C_{AB}}(ENC_{pk_{B}}(k),ENC_{k}(D))$),
  sends $ENC_{pk_{C}}(k),ENC_{k}(D)$ to Bob through the channel $C_{AC}$ (the corresponding sending action is denoted $s_{C_{AC}}(ENC_{pk_{C}}(k),ENC_{k}(D))$),
  sends $ENC_{pk_{Da}}(k),ENC_{k}(D)$ to Bob through the channel $C_{AD}$ (the corresponding sending action is denoted $s_{C_{AD}}(ENC_{pk_{Da}}(k),ENC_{k}(D))$);
  \item Bob receives the message $ENC_{pk_{B}}(k),ENC_{k}(D)$ through the channel $C_{AB}$ (the corresponding reading action is denoted
  $r_{C_{AB}}(ENC_{pk_{B}}(k),ENC_{k}(D))$), after a decryption processing $dec_{sk_{B}}(ENC_{pk_B}(k))$ to get the key $k$ and a decryption processing
  $dec_{k}(ENC_{k}(D))$ to get $D$, then he sends $D$ to the outside through the channel $C_{BO}$ (the corresponding sending action is denoted $s_{C_{BO}}(D)$);
  \item Carol receives the message $ENC_{pk_{C}}(k),ENC_{k}(D)$ through the channel $C_{AC}$ (the corresponding reading action is denoted
  $r_{C_{AC}}(ENC_{pk_{C}}(k),ENC_{k}(D))$), after a decryption processing $dec_{sk_{C}}(ENC_{pk_C}(k))$ to get the message $k$ and a decryption processing
  $dec_{k}(ENC_{k}(D))$ to get $D$, then he sends $D$ to the outside through the channel $C_{CO}$ (the corresponding sending action is denoted $s_{C_{CO}}(D)$);
  \item Dave receives the message $ENC_{pk_{Da}}(k),ENC_{k}(D)$ through the channel $C_{AD}$ (the corresponding reading action is denoted
  $r_{C_{AD}}(ENC_{pk_{Da}}(k),ENC_{k}(D))$), after a decryption processing $dec_{sk_{Da}}(ENC_{pk_{Da}}(k))$ to get the message $k$ and a decryption processing
  $dec_{k}(ENC_{k}(D))$ to get $D$, then he sends $D$ to the outside through the channel $C_{DO}$ (the corresponding sending action is denoted $s_{C_{DO}}(D)$).
\end{enumerate}

Where $D\in\Delta$, $\Delta$ is the set of data.

Alice's state transitions described by $APTC_G$ are as follows.

$A=\sum_{D\in\Delta}r_{C_{AI}}(D)\cdot A_2$

$A_2=(de\textrm{-}sign_{pk_T}(SIGN_{sk_T}(B,pk_B))\parallel de\textrm{-}sign_{pk_T}(SIGN_{sk_T}(C,pk_C))\\\parallel de\textrm{-}sign_{pk_T}(SIGN_{sk_T}(D,pk_D)))\cdot A_3$

$A_3=rsg_{k}\cdot A_4$

$A_4=enc_{k}(D)\cdot A_5$

$A_5=(enc_{pk_B}(k)\parallel enc_{pk_C}(k)\parallel enc_{pk_D}(k))\cdot A_6$

$A_6=(s_{C_{AB}}(ENC_{pk_{B}}(k),ENC_{k}(D))\parallel s_{C_{AC}}(ENC_{pk_{C}}(k),ENC_{k}(D))\\\parallel s_{C_{AD}}(ENC_{pk_{D}}(k),ENC_{k}(D)))\cdot A$

Bob's state transitions described by $APTC_G$ are as follows.

$B=r_{C_{AB}}(ENC_{pk_{B}}(k),ENC_{k}(D))\cdot B_2$

$B_2=dec_{sk_B}(ENC_{pk_{B}}(k))\cdot B_3$

$B_3=dec_{k}(ENC_{k}(D))\cdot B_4$

$B_4=s_{C_{BO}}(D)\cdot B$

Carol's state transitions described by $APTC_G$ are as follows.

$C=r_{C_{AC}}(ENC_{pk_{C}}(k),ENC_{k}(D))\cdot C_2$

$C_2=dec_{sk_C}(ENC_{pk_{C}}(k))\cdot C_3$

$C_3=dec_{k}(ENC_{k}(D))\cdot C_4$

$C_4=s_{C_{CO}}(D)\cdot C$

Dave's state transitions described by $APTC_G$ are as follows.

$Da=r_{C_{AD}}(ENC_{pk_{Da}}(k),ENC_{k}(D))\cdot Da_2$

$Da_2=dec_{sk_{Da}}(ENC_{pk_{Da}}(k))\cdot Da_3$

$Da_3=dec_{k}(ENC_{k}(D))\cdot Da_4$

$Da_4=s_{C_{DO}}(D)\cdot Da$

The sending action and the reading action of the same type data through the same channel can communicate with each other, otherwise, will cause a deadlock $\delta$. We define the following
communication functions.

$\gamma(r_{C_{AB}}(ENC_{pk_{B}}(k),ENC_{k}(D)),s_{C_{AB}}(ENC_{pk_{B}}(k),ENC_{k}(D)))\triangleq c_{C_{AB}}(ENC_{pk_{B}}(k),ENC_{k}(D))$

$\gamma(r_{C_{AC}}(ENC_{pk_{C}}(k),ENC_{k}(D)),s_{C_{AC}}(ENC_{pk_{C}}(k),ENC_{k}(D)))\triangleq c_{C_{AC}}(ENC_{pk_{C}}(k),ENC_{k}(D))$

$\gamma(r_{C_{AD}}(ENC_{pk_{Da}}(k),ENC_{k}(D)),s_{C_{AD}}(ENC_{pk_{Da}}(k),ENC_{k}(D)))\triangleq c_{C_{AD}}(ENC_{pk_{Da}}(k),ENC_{k}(D))$

Let all modules be in parallel, then the protocol $A\quad \quad B\quad C\quad Da$ can be presented by the following process term.

$$\tau_I(\partial_H(\Theta(A\between B\between C\between Da)))=\tau_I(\partial_H(A\between B\between C\between Da))$$

where $H=\{r_{C_{AB}}(ENC_{pk_{B}}(k),ENC_{k}(D)),s_{C_{AB}}(ENC_{pk_{B}}(k),ENC_{k}(D)),r_{C_{AC}}(ENC_{pk_{C}}(k),ENC_{k}(D)),\\
s_{C_{AC}}(ENC_{pk_{C}}(k),ENC_{k}(D)),r_{C_{AD}}(ENC_{pk_{Da}}(k),ENC_{k}(D)),s_{C_{AD}}(ENC_{pk_{Da}}(k),ENC_{k}(D))|D\in\Delta\}$,

$I=\{c_{C_{AB}}(ENC_{pk_{B}}(k),ENC_{k}(D)),c_{C_{AC}}(ENC_{pk_{C}}(k),ENC_{k}(D)),c_{C_{AD}}(ENC_{pk_{Da}}(k),ENC_{k}(D)),\\
de\textrm{-}sign_{pk_T}(SIGN_{sk_T}(C,pk_C)),de\textrm{-}sign_{pk_T}(SIGN_{sk_T}(Da,pk_{Da})),de\textrm{-}sign_{pk_T}(SIGN_{sk_T}(B,pk_B)),\\
rsg_{k},enc_{k}(D),enc_{pk_B}(k),enc_{pk_C}(k),enc_{pk_{Da}}(k),dec_{k}(ENC_{k}(D)),\\
dec_{sk_B}(ENC_{pk_{B}}(k)),dec_{sk_C}(ENC_{pk_{C}}(k)),dec_{sk_{Da}}(ENC_{pk_{Da}}(k))|D\in\Delta\}$.

Then we get the following conclusion on the protocol.

\begin{theorem}
The key and message broadcast protocol with digital signature in Figure \ref{KMB5} is secure.
\end{theorem}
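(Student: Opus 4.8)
The theorem asserts that the key and message broadcast protocol with digital signatures is secure, which in this framework means that the abstracted, encapsulated process $\tau_I(\partial_H(A\between B\between C\between Da))$ exhibits the desired external behavior: namely, a datum $D$ read at Alice's input channel is correctly delivered to all three recipients' output channels, with all the cryptographic machinery and internal communications hidden by $\tau_I$ and $\partial_H$.

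**The plan.** Following the established pattern in section \ref{app} and the preceding protocol proofs, I would show by equational reasoning in $APTC_G$ that
\begin{eqnarray}
&&\tau_I(\partial_H(A\between B\between C\between Da))\nonumber\\
&&=\sum_{D\in\Delta}(r_{C_{AI}}(D)\cdot(s_{C_{BO}}(D)\parallel s_{C_{CO}}(D)\parallel s_{C_{DO}}(D)))\cdot\tau_I(\partial_H(A\between B\between C\between Da)).\nonumber
\end{eqnarray}
First I would expand $A\between B\between C\between Da$ using the parallel-and-communication law $P1$ ($x\between y=x\parallel y+x\mid y$) together with $RDP$ to unfold the recursive definitions of the four participants. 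The mismatched receive/send actions in the non-communicating branches collapse to $\delta$ via the communication function $\gamma$ (which yields $\delta$ for non-matching channel pairs), and $A6$, $A7$ discard these deadlock summands. The surviving communications are exactly the handshakes forced by the channel structure: Alice's three parallel sends $s_{C_{AB}}, s_{C_{AC}}, s_{C_{AD}}$ synchronize with the corresponding receives at Bob, Carol, and Dave, producing the communication actions $c_{C_{AB}}, c_{C_{AC}}, c_{C_{AD}}$.

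**Carrying it out.** Next I would apply $\partial_H$ to block all isolated send/receive actions in $H$ (forcing synchronization), then introduce a guarded linear recursive specification $E$ with variables $X_1, X_2, \dots$ tracking the joint state of the four processes, exactly as done for the ABP protocol. I would then apply $\tau_I$: every action in $I$ — the $de\textrm{-}sign$ steps, the key generation $rsg_k$, the encryptions/decryptions, and the internal communications $c_{C_{AB}}, c_{C_{AC}}, c_{C_{AD}}$ — is renamed to $\tau$ by $TI1$, $TI2$, and the axioms of abstraction. Using $B1$ ($e\cdot\tau=e$) and the soundness of $APTC_{G_\tau}$ with guarded linear recursion (Theorem \ref{SAPTC_GABSG}) together with $RSP$, the silent steps are absorbed and the cluster of internal activity between reading $D$ and emitting the three outputs contracts to leave only the observable head action $r_{C_{AI}}(D)$ followed by the concurrent outputs. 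The broadcast means the three decryption-and-output chains at Bob, Carol, Dave run as genuinely parallel branches, so they reassemble via $P6$ and the merge laws into $s_{C_{BO}}(D)\parallel s_{C_{CO}}(D)\parallel s_{C_{DO}}(D)$.

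**Main obstacle.** The hard part will be managing the combinatorial bookkeeping of the three-way broadcast fan-out: unlike the two-party protocols proved earlier, here Alice emits to three recipients in parallel and each recipient independently decrypts and forwards, so the intermediate process terms have many parallel summands and the guarded linear specification $E$ is correspondingly larger. I expect the genuine difficulty to lie in verifying that the parallel composition of the three independent receiver pipelines does not introduce spurious cross-synchronizations — one must check that $\gamma$ returns $\delta$ for every pair of actions on distinct recipient channels, so that $\partial_H$ and the communication merge leave the three branches properly independent rather than entangling them. Once that independence is established, the reduction mirrors the single-recipient key-and-message transmission proof of subsection \ref{kmt} applied three times in parallel, and the desired equation follows; hence the protocol exhibits the correct external behavior and is secure.
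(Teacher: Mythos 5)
Your proposal is correct and follows essentially the same route as the paper: you derive exactly the recursive equation $\tau_I(\partial_H(A\between B\between C\between Da))=\sum_{D\in\Delta}(r_{C_{AI}}(D)\cdot(s_{C_{BO}}(D)\parallel s_{C_{CO}}(D)\parallel s_{C_{DO}}(D)))\cdot\tau_I(\partial_H(A\between B\between C\between Da))$ that the paper states, using the same expansion via $P1$/$RDP$, encapsulation, guarded linear recursion, and abstraction. In fact your write-up supplies more of the computation than the paper's own proof, which only asserts this equation, defers the details to the ABP derivation in section \ref{app}, and appends a remark (which you omit) that resistance to the man-in-the-middle attack follows as in subsection \ref{MIM}.
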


\begin{proof}
Based on the above state transitions of the above modules, by use of the algebraic laws of $APTC_G$, we can prove that

$\tau_I(\partial_H(A\between B\between C\between Da))=\sum_{D\in\Delta}(r_{C_{AI}}(D)\cdot (s_{C_{BO}}(D)\parallel s_{C_{CO}}(D)\parallel s_{C_{DO}}(D)))\cdot
\tau_I(\partial_H(A\between B\between C\between Da))$.

For the details of proof, please refer to section \ref{app}, and we omit it.

That is, the protocol in Figure \ref{KMB5} $\tau_I(\partial_H(A\between B\between C\between Da))$ can exhibit desired external behaviors, and similarly to the protocol in subsection \ref{MIM}, this protocol can
resist the man-in-the-middle attack.
\end{proof}

\newpage\section{Analyses of Authentication Protocols}\label{auth}

An authentication protocol is used to verify the principal's identity, including verification of one principal's identity and mutual verifications of more that two principals' identities.
We omit some quite simple authentication protocols, including authentication using one-way functions, etc. We will analyze mutual authentication using the interlock protocol against man-in-the-middle
attack in section \ref{maip}, and SKID in section \ref{skid}.

\subsection{Mutual Authentication Using the Interlock Protocol}\label{maip}

The mutual authentication using the interlock protocol shown in Figure \ref{MAIP6} also uses public keys for secure communication with man-in-the-middle attack, that is, Alice, Bob have shared their public keys $pk_{A}$ and $pk_{B}$ already.
But, the interlock protocol can resist man-in-the-middle attack, that is, Alice and Bob can aware of the existence of the man in the middle.

\begin{figure}
    \centering
    \includegraphics{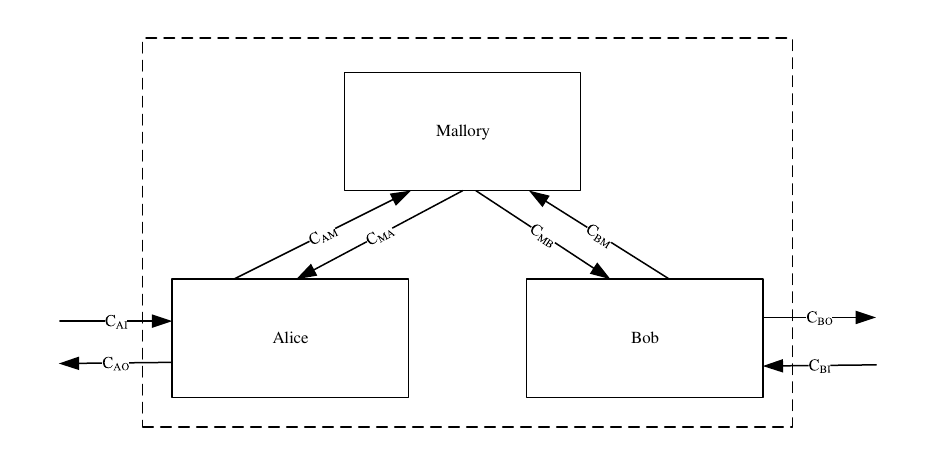}
    \caption{Mutual authentication using the interlock protocol with man-in-the-middle attack}
    \label{MAIP6}
\end{figure}

The process of the mutual authentication using the interlock protocol with man-in-the-middle attack is as follows, we assume that Alice has "Bob's" public key $pk_M$, Bob has "Alice's" public key $pk_M$, and Mallory
has Alice's public key $pk_A$ and Bob's public key $pk_B$.

\begin{enumerate}
  \item Alice receives some password $P_A$ from the outside through the channel $C_{AI}$ (the corresponding reading action is denoted $r_{C_{AI}}(P_A)$), she encrypts the password $P_A$ with Mallory's
  public key $pk_M$ through the action $enc_{pk_M}(P_A)$, then Alice sends the half of $ENC_{pk_{M}}(P_A)$ to Mallory through the channel $C_{AM}$
  (the corresponding sending action is denoted $s_{C_{AM}}(ENC_{pk_{M}}(P_A)/2)$);
  \item Mallory receives $ENC_{pk_M}(P_A)/2$ from Alice through the channel $C_{AM}$ (the corresponding reading action is denoted $r_{C_{AM}}(ENC_{pk_M}(P_A)/2)$), he can not decrypt the password with
  his private key $sk_M$, and has to make another password $P_A'$ and encrypt $P_A'$ with Bob's public key $pk_B$ through the action $enc_{pk_B}(P_A')$, and sends the half of $ENC_{pk_B}(P_A')$
  to Bob through the channel $C_{MB}$ (the corresponding sending action is denoted $s_{C_{MB}}(ENC_{pk_B}(P_A')/2)$);
  \item Bob receives the password $ENC_{pk_B}(P_A')/2$ through the channel $C_{MB}$ (the corresponding reading action is denoted $r_{C_{MB}}(ENC_{pk_B}(P_A')/2)$), and receives some password
  $P_B$ from the outside through the channel $C_{BI}$ (the corresponding reading action is denoted $r_{BI}(P_B)$), after an encryption processing
  $enc_{pk_{M}}(P_B)$ to get the password $ENC_{pk_M}(P_B)$, then he sends the half of $ENC_{pk_M}(P_B)$ to Mallory through the channel $C_{BM}$
  (the corresponding sending action is denoted $s_{C_{BM}}(ENC_{pk_{M}}(P_B)/2)$);
  \item Mallory receives $ENC_{pk_M}(P_B)/2$ from Bob through the channel $C_{BM}$ (the corresponding reading action is denoted $r_{C_{BM}}(ENC_{pk_M}(P_B)/2)$), he can not decrypt the password with
  his private key $sk_M$, and has to make another password $P_B'$ and encrypt $P_B'$ with Alice's public key $pk_A$ through the action $enc_{pk_A}(P_B')$, and sends the half of $ENC_{pk_A}(P_B')$
  to Alice through the channel $C_{MA}$ (the corresponding sending action is denoted $s_{C_{MA}}(ENC_{pk_A}(P_B')/2)$);
  \item Alice receives the password $ENC_{pk_A}(P_B')/2$ through the channel $C_{MA}$ (the corresponding reading action is denoted $r_{C_{MA}}(ENC_{pk_A}(P_B')/2)$), and sends the other half of $ENC_{pk_{M}}(P_A)$ to Mallory through the channel $C_{AM}$
  (the corresponding sending action is denoted \\$s_{C_{AM}}(ENC_{pk_{M}}(P_A)/2)$);
  \item Mallory receives $ENC_{pk_M}(P_A)/2$ from Alice through the channel $C_{AM}$ (the corresponding reading action is denoted $r_{C_{AM}}(ENC_{pk_M}(P_A)/2)$), he can combine the two half
  of $ENC_{pk_M}(P_A)/2$ and decrypt the password with
  his private key $sk_M$, and but he has to send the other half of $ENC_{pk_B}(P_A')$
  to Bob through the channel $C_{MB}$ (the corresponding sending action is denoted $s_{C_{MB}}(ENC_{pk_B}(P_A')/2)$);
  \item Bob receives the password $ENC_{pk_B}(P_A')/2$ through the channel $C_{MB}$ (the corresponding reading action is denoted $r_{C_{MB}}(ENC_{pk_B}(P_A')/2)$), after a combination of
  two half of $ENC_{pk_B}(P_A')$ and a decryption processing
  $dec_{sk_B}(ENC_{pk_{B}}(P_A'))$ to get the password $P_A'$, then he sends it to the outside through the channel $C_{BO}$ (the corresponding sending action is denoted $s_{C_{BO}}(P_A')$).
  Then he sends the other half of $ENC_{pk_M}(P_B)$ to Mallory through the channel $C_{BM}$
  (the corresponding sending action is denoted $s_{C_{BM}}(ENC_{pk_{M}}(P_B)/2)$);
  \item Mallory receives $ENC_{pk_M}(P_B)/2$ from Bob through the channel $C_{BM}$ (the corresponding reading action is denoted $r_{C_{BM}}(ENC_{pk_M}(P_B)/2)$), he can combine the two half
  of $ENC_{pk_M}(P_B)/2$ and decrypt the password with
  his private key $sk_M$, and but he has to send the other half of $ENC_{pk_A}(P_B')$
  to Alice through the channel $C_{MA}$ (the corresponding sending action is denoted $s_{C_{MA}}(ENC_{pk_A}(P_B')/2)$);
  \item Alice receives the password $ENC_{pk_A}(P_B')/2$ through the channel $C_{MA}$ (the corresponding reading action is denoted $r_{C_{MA}}(ENC_{pk_A}(P_B')/2)$), after a combination of
  two half of $ENC_{pk_A}(P_B')$ and a decryption processing
  $dec_{sk_A}(ENC_{pk_{A}}(P_B'))$ to get the password $P_B'$, then she sends it to the outside through the channel $C_{AO}$ (the corresponding sending action is denoted $s_{C_{AO}}(P_B')$).
\end{enumerate}

Where $P_A,P_B,P_A',P_B'\in\Delta$, $\Delta$ is the set of data.

Alice's state transitions described by $APTC_G$ are as follows.

$A=\sum_{P_A\in\Delta}r_{C_{AI}}(P_A)\cdot A_2$

$A_2=enc_{pk_M}(P_A)\cdot A_3$

$A_3=s_{C_{AM}}(ENC_{pk_{M}}(P_A)/2)\cdot A_4$

$A_4=r_{C_{MA}}(ENC_{pk_A}(P_B')/2)\cdot A_5$

$A_5=s_{C_{AM}}(ENC_{pk_{M}}(P_A)/2)\cdot A_6$

$A_6=r_{C_{MA}}(ENC_{pk_A}(P_B')/2)\cdot A_7$

$A_7=dec_{sk_A}(ENC_{pk_{A}}(P_B'))\cdot A_8$

$A_8=s_{C_{AO}}(P_B')\cdot A$

Bob's state transitions described by $APTC_G$ are as follows.

$B=r_{C_{MB}}(ENC_{pk_B}(P_A')/2)\cdot B_2$

$B_2=\sum_{P_B\in\Delta}r_{BI}(P_B)\cdot B_3$

$B_3=enc_{pk_{M}}(P_B)\cdot B_4$

$B_4=s_{C_{BM}}(ENC_{pk_{M}}(P_B)/2)\cdot B_5$

$B_5=r_{C_{MB}}(ENC_{pk_B}(P_A')/2)\cdot B_6$

$B_6=dec_{sk_B}(ENC_{pk_{B}}(P_A'))\cdot B_7$

$B_7=s_{C_{BO}}(P_A')\cdot B_8$

$B_8=s_{C_{BM}}(ENC_{pk_{M}}(P_B)/2)\cdot B$

Mallory's state transitions described by $APTC_G$ are as follows.

$Ma=r_{C_{AM}}(ENC_{pk_M}(P_A)/2)\cdot Ma_2$

$Ma_2=enc_{pk_B}(P_A')\cdot Ma_3$

$Ma_3=s_{C_{MB}}(ENC_{pk_B}(P_A')/2)\cdot Ma_4$

$Ma_4=r_{C_{BM}}(ENC_{pk_M}(P_B)/2)\cdot Ma_5$

$Ma_5=enc_{pk_A}(P_B')\cdot Ma_6$

$Ma_6=s_{C_{MA}}(ENC_{pk_A}(P_B')/2)\cdot Ma_7$

$Ma_7=r_{C_{AM}}(ENC_{pk_M}(P_A)/2)\cdot Ma_8$

$Ma_8=s_{C_{MB}}(ENC_{pk_B}(P_A')/2)\cdot Ma_9$

$Ma_9=r_{C_{BM}}(ENC_{pk_M}(P_B)/2)\cdot Ma_{10}$

$Ma_{10}=s_{C_{MA}}(ENC_{pk_A}(P_B')/2)\cdot Ma$

The sending action and the reading action of the same type data through the same channel can communicate with each other, otherwise, will cause a deadlock $\delta$. We define the following
communication functions.

$\gamma(r_{C_{AM}}(ENC_{pk_M}(P_A)/2),s_{C_{AM}}(ENC_{pk_M}(P_A)/2))\triangleq c_{C_{AM}}(ENC_{pk_M}(P_A)/2)$

$\gamma(r_{C_{MB}}(ENC_{pk_B}(P_A')/2),s_{C_{MB}}(ENC_{pk_B}(P_A')/2))\triangleq c_{C_{MB}}(ENC_{pk_B}(P_A')/2)$

$\gamma(r_{C_{BM}}(ENC_{pk_M}(P_B)/2),s_{C_{BM}}(ENC_{pk_M}(P_B)/2))\triangleq c_{C_{BM}}(ENC_{pk_M}(P_B)/2)$

$\gamma(r_{C_{MA}}(ENC_{pk_A}(P_B')/2),s_{C_{MA}}(ENC_{pk_A}(P_B')/2))\triangleq c_{C_{MA}}(ENC_{pk_A}(P_B')/2)$

$\gamma(r_{C_{AM}}(ENC_{pk_M}(P_A)/2),s_{C_{AM}}(ENC_{pk_M}(P_A)/2))\triangleq c_{C_{AM}}(ENC_{pk_M}(P_A)/2)$

$\gamma(r_{C_{MB}}(ENC_{pk_B}(P_A')/2),s_{C_{MB}}(ENC_{pk_B}(P_A')/2))\triangleq c_{C_{MB}}(ENC_{pk_B}(P_A')/2)$

$\gamma(r_{C_{BM}}(ENC_{pk_M}(P_B)/2),s_{C_{BM}}(ENC_{pk_M}(P_B)/2))\triangleq c_{C_{BM}}(ENC_{pk_M}(P_B)/2)$

$\gamma(r_{C_{MA}}(ENC_{pk_A}(P_B')/2),s_{C_{MA}}(ENC_{pk_A}(P_B')/2))\triangleq c_{C_{MA}}(ENC_{pk_A}(P_B')/2)$

Let all modules be in parallel, then the protocol $A\quad B\quad Ma$ can be presented by the following process term.

$$\tau_I(\partial_H(\Theta(A\between B\between Ma)))=\tau_I(\partial_H(A\between B\between Ma))$$

where $H=\{r_{C_{AM}}(ENC_{pk_M}(P_A)/2),s_{C_{AM}}(ENC_{pk_M}(P_A)/2),r_{C_{MB}}(ENC_{pk_B}(P_A')/2),\\
s_{C_{MB}}(ENC_{pk_B}(P_A')/2),r_{C_{BM}}(ENC_{pk_M}(P_B)/2),s_{C_{BM}}(ENC_{pk_M}(P_B)/2),\\
r_{C_{MA}}(ENC_{pk_A}(P_B')/2),s_{C_{MA}}(ENC_{pk_A}(P_B')/2),r_{C_{AM}}(ENC_{pk_M}(P_A)/2),\\
s_{C_{AM}}(ENC_{pk_M}(P_A)/2),r_{C_{MB}}(ENC_{pk_B}(P_A')/2),s_{C_{MB}}(ENC_{pk_B}(P_A')/2),\\
r_{C_{BM}}(ENC_{pk_M}(P_B)/2),s_{C_{BM}}(ENC_{pk_M}(P_B)/2),r_{C_{MA}}(ENC_{pk_A}(P_B')/2),\\
s_{C_{MA}}(ENC_{pk_A}(P_B')/2)|P_A,P_B,P_A',P_B'\in\Delta\}$,

$I=\{c_{C_{AM}}(ENC_{pk_M}(P_A)/2),c_{C_{MB}}(ENC_{pk_B}(P_A')/2),c_{C_{BM}}(ENC_{pk_M}(P_B)/2),c_{C_{MA}}(ENC_{pk_A}(P_B')/2),\\
c_{C_{AM}}(ENC_{pk_M}(P_A)/2),c_{C_{MB}}(ENC_{pk_B}(P_A')/2),c_{C_{BM}}(ENC_{pk_M}(P_B)/2),c_{C_{MA}}(ENC_{pk_A}(P_B')/2),\\
enc_{pk_M}(P_A),dec_{sk_A}(ENC_{pk_{A}}(P_B')),enc_{pk_{M}}(P_B),dec_{sk_B}(ENC_{pk_{B}}(P_A')),enc_{pk_B}(P_A'),\\
enc_{pk_A}(P_B')|P_A,P_B,P_A',P_B'\in\Delta\}$.

Then we get the following conclusion on the protocol.

\begin{theorem}
The mutual authentication using the interlock protocol in Figure \ref{MAIP6} is secure.
\end{theorem}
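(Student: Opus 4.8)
The plan is to follow the same algebraic verification strategy used throughout this section, in particular mirroring the proof for the interlock protocol in section \ref{kepc2}, since the message flow here is structurally identical (the passwords $P_A,P_B,P_A',P_B'$ play exactly the role of the data $D_A,D_B,D_A',D_B'$ there). First I would expand the whole-parallel term $A\between B\between Ma$ using $P1$ to separate the parallel-composition part from the communication-merge part, and then repeatedly apply the parallel expansion axioms $P2$--$P10$ together with the extended communication merge axioms of Table \ref{AOECM}, rewriting each step of the three state machines into a single expanded process.

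The key observation that keeps the computation tractable is that the encapsulation operator $\partial_H$ blocks every send and receive action lying in $H$ from occurring in isolation: by $D1$--$D6$ and $A7$ (that is, $\delta\cdot x=\delta$) only the matched communications $c_{C_{AM}},c_{C_{MB}},c_{C_{BM}},c_{C_{MA}}$ survive, while every spurious interleaving collapses to $\delta$ and is absorbed by $A6$ ($x+\delta=x$). This forces the half-message exchanges to fire in the single causal order prescribed by the interlock discipline, so $\partial_H(A\between B\between Ma)$ reduces to a finite system of guarded linear equations. I would record this system as a guarded linear recursive specification $E$ and set $\partial_H(A\between B\between Ma)=\langle X_1\mid E\rangle$, justified by the elimination theorem for linear recursion.

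Next I would apply the abstraction operator $\tau_I$, using $TI1$--$TI6$ to rename the internal communications and the internal cryptographic actions ($enc$, $dec$, $de\textrm{-}sign$) collected in $I$ into $\tau$, and then $B1$ to remove the resulting silent steps; where a $\tau$-cluster appears I would invoke $CFAR$. Combining this with $RSP$/$RDP$ and the soundness result of Theorem \ref{SAPTC_GABSG} for $\approx_{rbs},\approx_{rbp},\approx_{rbhp}$, I expect to arrive at
$$\tau_I(\partial_H(A\between B\between Ma))=\sum_{P_A,P_B,P_A',P_B'\in\Delta}(r_{C_{AI}}(P_A)\cdot r_{C_{BI}}(P_B)\cdot s_{C_{BO}}(P_A')\cdot s_{C_{AO}}(P_B'))\cdot\tau_I(\partial_H(A\between B\between Ma)).$$
I would then interpret this external behavior: the password Bob outputs is the fabricated $P_A'$ rather than the $P_A$ Alice supplied, and the password Alice outputs is $P_B'$ rather than Bob's $P_B$, precisely because Mallory cannot decrypt the half-messages and is forced to re-encrypt under $pk_B$ and $pk_A$ instead of relaying. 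This visible mismatch is exactly what the interlock discipline is designed to expose, so Alice and Bob become aware of the man in the middle, establishing security.

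The main obstacle will be the sheer bookkeeping in the parallel expansion: three moderately long sequential processes produce many candidate interleavings. The individual rewrites are routine, and the real content is checking that every interleaving except the intended causal sequence is annihilated by $\partial_H$. For that reason I would lean on the already-completed interlock computation of section \ref{kepc2}, observing that the present process terms differ only by the substitution of password symbols for data symbols, so the reduction carries over verbatim rather than needing to be redone from scratch.
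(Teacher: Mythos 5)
Your proposal follows the same route as the paper: expand $A\between B\between Ma$ with the algebraic laws of $APTC_G$, let $\partial_H$ kill all unmatched send/receive interleavings, abstract the internal actions with $\tau_I$ (plus $B1$, $CFAR$, $RSP$/$RDP$), and arrive at exactly the recursive equation $\tau_I(\partial_H(A\between B\between Ma))=\sum_{P_A,P_B,P_A',P_B'\in\Delta}(r_{C_{AI}}(P_A)\cdot r_{C_{BI}}(P_B)\cdot s_{C_{BO}}(P_A')\cdot s_{C_{AO}}(P_B'))\cdot\tau_I(\partial_H(A\between B\between Ma))$ that the paper states, with the same reading that the visible $P_A'/P_B'$ mismatch exposes the man in the middle. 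The paper itself omits the expansion details (deferring to section \ref{app}) and leans, as you do, on the structurally identical computation for the interlock protocol of section \ref{kepc2}, so your added bookkeeping is a faithful elaboration rather than a different argument.
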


\begin{proof}
Based on the above state transitions of the above modules, by use of the algebraic laws of $APTC_G$, we can prove that

$\tau_I(\partial_H(A\between B\between Ma))=\sum_{P_A,P_B,P_A',P_B'\in\Delta}(r_{C_{AI}}(P_A)\cdot r_{C_{BI}}(P_B)\cdot s_{C_{BO}}(P_A')\cdot s_{C_{AO}}(P_B'))\cdot
\tau_I(\partial_H(A\between B\between Ma))$.

For the details of proof, please refer to section \ref{app}, and we omit it.

That is, the mutual authentication using the interlock protocol in Figure \ref{MAIP6} $\tau_I(\partial_H(A\between B\between Ma))$ can exhibit desired external behaviors,
that is, Alice and Bob can aware the existence of the man in the middle.
\end{proof}

\subsection{SKID}\label{skid}

The SKID protocol shown in Figure \ref{SKID6} uses symmetric cryptography to authenticate each other, that is, Alice, Bob have shared their key $k_{AB}$.

\begin{figure}
    \centering
    \includegraphics{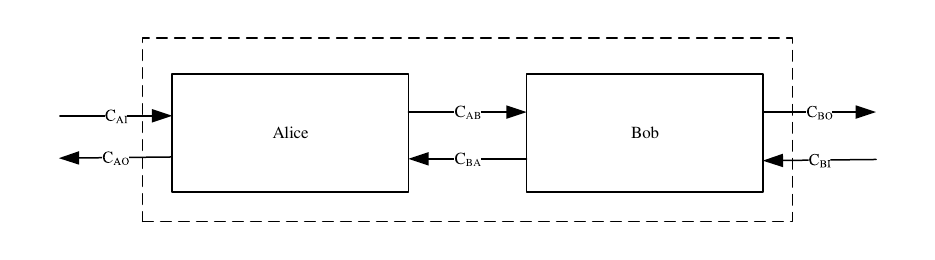}
    \caption{SKID protocol}
    \label{SKID6}
\end{figure}

The process of the protocol is as follows, and we only consider the message in one direction: from Alice to Bob.

\begin{enumerate}
  \item Alice receives some messages $D$ from the outside through the channel $C_{AI}$ (the corresponding reading action is denoted $r_{C_{AI}}(D)$), she generates a random number $R_A$
  through an action $rsg_{R_A}$, she sends $R_A$ to Bob through the channel $C_{AB}$ (the corresponding sending action is denoted $s_{C_{AB}}(R_A)$);
  \item Bob receives the number $R_A$ through the channel $C_{AB}$ (the corresponding reading action is denoted
  $r_{C_{AB}}(R_A)$), he generates a random number $R_B$ through an action $rsg_{R_B}$, and generates a MAC (Message Authentication Code) through an action $mac_{k_{AB}}(R_A, R_B, B)$,
  then he sends $B,R_B,MAC_{k_{AB}}(R_A,R_B,B)$ to Alice through the channel $C_{BA}$ (the corresponding sending action is denoted $s_{C_{BA}}(B,R_B,MAC_{k_{AB}}(R_A,R_B,B))$);
  \item Alice receives $d_B,d_{R_B},d_{MAC_{k_{AB}}(R_A,R_B,B)}$ from Bob through the channel $C_{BA}$ (the corresponding reading action is denoted $r_{C_{BA}}(d_B,d_{R_B},d_{MAC_{k_{AB}}(R_A,R_B,B)})$),
  she generates a MAC through an action $mac_{k_{AB}}(R_A, d_{R_B}, d_B)$, if $MAC_{k_{AB}}(R_A, d_{R_B}, d_B)=d_{MAC_{k_{AB}}(R_A,R_B,B)}$, she generates a MAC
  through an action $mac_{k_{AB}}(R_B, A)$ and encrypts $D$ by $k_{AB}$ through an action $enc_{k_{AB}}(D)$, then she sends $A,MAC_{k_{AB}}(R_B,A),ENC_{k_{AB}}(D)$
  to Bob through the channel $C_{AB}$ (the corresponding sending action is denoted \\
  $s_{C_{AB}}(A,MAC_{k_{AB}}(R_B,A),ENC_{k_{AB}}(D))$);
  \item Bob receives the data $d_A,d_{MAC_{k_{AB}}(R_B,A)},ENC_{k_{AB}}(D)$ from Alice through the channel $C_{AB}$ (the corresponding reading action is denoted
  $r_{C_{AB}}(d_A,d_{MAC_{k_{AB}}(R_B,A)},ENC_{k_{AB}}(D))$), he generates a MAC through an action $mac_{k_{AB}}(R_B, d_A)$, if $MAC_{k_{AB}}(R_B, d_A)=d_{MAC_{k_{AB}}(R_B,A)}$, he
  decrypts $ENC_{k_{AB}}(D)$ by $k_{AB}$ through an action $dec_{k_{AB}}(ENC_{k_{AB}}(D))$ to get $D$, then she sends $D$
  to the outside through the channel $C_{BO}$ (the corresponding sending action is denoted $s_{C_{BO}}(D)$).
\end{enumerate}

Where $D\in\Delta$, $\Delta$ is the set of data.

Alice's state transitions described by $APTC_G$ are as follows.

$A=\sum_{D\in\Delta}r_{C_{AI}}(D)\cdot A_2$

$A_2=rsg_{R_A}\cdot A_3$

$A_3=s_{C_{AB}}(R_A)\cdot A_4$

$A_4=r_{C_{BA}}(d_B,d_{R_B},d_{MAC_{k_{AB}}(R_A,R_B,B)})\cdot A_5$

$A_5=mac_{k_{AB}}(R_A, d_{R_B}, d_B)\cdot A_6$

$A_6=\{MAC_{k_{AB}}(R_A, d_{R_B}, d_B)=d_{MAC_{k_{AB}}(R_A,R_B,B)}\}\cdot mac_{k_{AB}}(R_B, A)\cdot enc_{k_{AB}}(D) \\
\cdot s_{C_{AB}}(A,MAC_{k_{AB}}(R_B,A),ENC_{k_{AB}}(D))\cdot A$

Bob's state transitions described by $APTC_G$ are as follows.

$B=r_{C_{AB}}(R_A)\cdot B_2$

$B_2=rsg_{R_B}\cdot B_3$

$B_3=mac_{k_{AB}}(R_A, R_B, B)\cdot B_4$

$B_4=s_{C_{BA}}(B,R_B,MAC_{k_{AB}}(R_A,R_B,B))\cdot B_5$

$B_5=r_{C_{AB}}(d_A,d_{MAC_{k_{AB}}(R_B,A)},ENC_{k_{AB}}(D))\cdot B_6$

$B_6=mac_{k_{AB}}(R_B, d_A)\cdot B_7$

$B_7=\{MAC_{k_{AB}}(R_B, d_A)=d_{MAC_{k_{AB}}(R_B,A)}\}\cdot dec_{k_{AB}}(ENC_{k_{AB}}(D))\cdot s_{C_{BO}}(D)\cdot B$

The sending action and the reading action of the same type data through the same channel can communicate with each other, otherwise, will cause a deadlock $\delta$. We define the following
communication functions.

$\gamma(r_{C_{AB}}(R_A),s_{C_{AB}}(R_A))\triangleq c_{C_{AB}}(R_A)$

$\gamma(r_{C_{BA}}(B,R_B,MAC_{k_{AB}}(R_A,R_B,B)),s_{C_{BA}}(B,R_B,MAC_{k_{AB}}(R_A,R_B,B)))\\
\triangleq c_{C_{BA}}(B,R_B,MAC_{k_{AB}}(R_A,R_B,B))$

$\gamma(r_{C_{AB}}(d_A,d_{MAC_{k_{AB}}(R_B,A)},ENC_{k_{AB}}(D)),s_{C_{AB}}(d_A,d_{MAC_{k_{AB}}(R_B,A)},ENC_{k_{AB}}(D)))\\
\triangleq c_{C_{AB}}(d_A,d_{MAC_{k_{AB}}(R_B,A)},ENC_{k_{AB}}(D))$

Let all modules be in parallel, then the protocol $A\quad B$ can be presented by the following process term.

$$\tau_I(\partial_H(\Theta(A\between B)))=\tau_I(\partial_H(A\between B))$$

where $H=\{r_{C_{AB}}(R_A),s_{C_{AB}}(R_A),r_{C_{BA}}(B,R_B,MAC_{k_{AB}}(R_A,R_B,B)),\\
s_{C_{BA}}(B,R_B,MAC_{k_{AB}}(R_A,R_B,B)),r_{C_{AB}}(d_A,d_{MAC_{k_{AB}}(R_B,A)},ENC_{k_{AB}}(D)),\\
s_{C_{AB}}(d_A,d_{MAC_{k_{AB}}(R_B,A)},ENC_{k_{AB}}(D))|D\in\Delta\}$,

$I=\{c_{C_{AB}}(R_A),c_{C_{BA}}(B,R_B,MAC_{k_{AB}}(R_A,R_B,B)),c_{C_{AB}}(d_A,d_{MAC_{k_{AB}}(R_B,A)},ENC_{k_{AB}}(D)),\\
rsg_{R_A},mac_{k_{AB}}(R_A, d_{R_B}, d_B),\{MAC_{k_{AB}}(R_A, d_{R_B}, d_B)=d_{MAC_{k_{AB}}(R_A,R_B,B)}\}, mac_{k_{AB}}(R_B, A),\\
enc_{k_{AB}}(D),rsg_{R_B},mac_{k_{AB}}(R_A, R_B, B),mac_{k_{AB}}(R_B, d_A),\{MAC_{k_{AB}}(R_B, d_A)=d_{MAC_{k_{AB}}(R_B,A)}\},\\
dec_{k_{AB}}(ENC_{k_{AB}}(D))|D\in\Delta\}$.

Then we get the following conclusion on the protocol.

\begin{theorem}
The key and message transmission protocol with digital signature in Figure \ref{SKID6} is secure.
\end{theorem}

\begin{proof}
Based on the above state transitions of the above modules, by use of the algebraic laws of $APTC_G$, we can prove that

$\tau_I(\partial_H(A\between B))=\sum_{D\in\Delta}(r_{C_{AI}}(D)\cdot s_{C_{BO}}(D))\cdot
\tau_I(\partial_H(A\between B))$.

For the details of proof, please refer to section \ref{app}, and we omit it.

That is, the protocol in Figure \ref{SKID6} $\tau_I(\partial_H(A\between B))$ can exhibit desired external behaviors, and similarly to the protocol in subsection \ref{MIM}, without leasing
of $k_{AB}$, this protocol can resist the man-in-the-middle attack.
\end{proof}

\newpage\section{Analyses of Practical Protocols}\label{aopp}

In this chapter, we will introduce analyses of some practical authentication and key exchange protocols. For a perfectly practical security protocol, it should can resist any kind of
attack. There are many kinds of attacks, it is difficult to model all known attacks, for simplicity, we only analyses the protocols with several kinds of main attacks.

We introduce analyses of Wide-Mouth Frog protocol in section \ref{wmf}, Yahalom protocol in section \ref{yp}, Needham-Schroeder protocol in section \ref{nsp}, Otway-Rees protocol
in section \ref{orp}, Kerberos protocol in section \ref{kp}, Neuman-Stubblebine protocol in section \ref{nsp2}, Denning-Sacco protocol in section
\ref{dsp}, DASS protocol in section \ref{dass} and Woo-Lam protocol in section \ref{wlp}.

\subsection{Wide-Mouth Frog Protocol}\label{wmf}

The Wide-Mouth Frog protocol shown in Figure \ref{WMF7} uses symmetric keys for secure communication, that is, the key $k_{AB}$ between Alice and Bob is privately shared to Alice and Bob,
Alice, Bob have shared keys with Trent $k_{AT}$ and $k_{BT}$ already.

\begin{figure}
    \centering
    \includegraphics{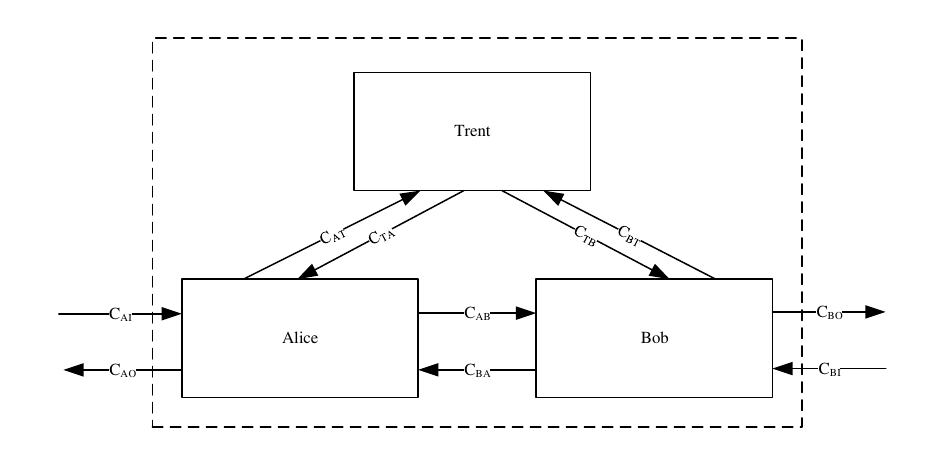}
    \caption{Wide-Mouth Frog protocol}
    \label{WMF7}
\end{figure}

The process of the protocol is as follows.

\begin{enumerate}
  \item Alice receives some messages $D$ from the outside through the channel $C_{AI}$ (the corresponding reading action is denoted $r_{C_{AI}}(D)$), if $k_{AB}$ is not established,
  she generates a random session key $k_{AB}$ through an action $rsg_{k_{AB}}$, encrypts the key request message $T_A,B,k_{AB}$ with $k_{AT}$ through an action $enc_{k_{AT}}(T_A,B,k_{AB})$ where
  $T_A$ Alice's time stamp, and sends $A,ENC_{k_{AT}}(T_A,B,k_{AB})$ to Trent through the channel $C_{AT}$ (the corresponding sending action is denoted $s_{C_{AT}}(A,ENC_{k_{AT}}(T_A,B,k_{AB}))$);
  \item Trent receives the message $A,ENC_{k_{AT}}(T_A,B,k_{AB})$ through the channel $C_{AT}$ (the corresponding reading action is denoted $r_{C_{AT}}(A,ENC_{k_{AT}}(T_A,B,k_{AB}))$),
  he decrypts the message through an action $dec_{k_{AT}}(ENC_{k_{AT}}(T_A,B,k_{AB}))$. If $isFresh(T_A)=TRUE$ where $isFresh$ is a function to deciding whether a time stamp is fresh,
  he encrypts $T_B,A,K_{AB}$ with $k_{BT}$ through an action $enc_{k_{BT}}(T_B,A,K_{AB})$, sends $\top$ to Alice through the channel $C_{TA}$ (the corresponding sending action is denoted
  $s_{C_{TA}}(\top)$) and $ENC_{k_{BT}}(T_B,A,K_{AB})$ to Bob through the channel $C_{TB}$ (the corresponding sending action is denoted \\
  $s_{C_{TB}}(ENC_{k_{BT}}(T_B,A,K_{AB}))$); else
  if $isFresh(T_A)=FLASE$, he sends $\bot$ to Alice and Bob (the corresponding sending actions are denoted $s_{C_{TA}}(\bot)$ and $s_{C_{TB}}(\bot)$ respectively);
  \item Bob receives $d_{TB}$ from Trent through the channel $C_{TB}$ (the corresponding reading action is denoted $r_{C_{TB}}(d_{TB})$). If $d_{TB}=\bot$, he sends $\bot$ to Alice through
  the channel $C_{BA}$ (the corresponding sending action is denoted $s_{C_{BA}}(\bot)$); if $d_{TB}\neq \bot$, he decrypts $ENC_{k_{BT}}(T_B,A,K_{AB})$ through an action
  $dec_{k_{BT}}(ENC_{k_{BT}}(T_B,A,K_{AB}))$. If $isFresh(T_B)=TRUE$, he gets $k_{AB}$, and sends $\top$ to Alice (the corresponding sending action is denoted $s_{C_{BA}}(\top)$);
  if $isFresh(T_B)=FALSE$, he sends $\bot$ to Alice through the channel $C_{BA}$ (the corresponding sending action is denoted $s_{C_{BA}}(\bot)$);
  \item Alice receives $d_{TA}$ from Trent through the channel $C_{TA}$ (the corresponding reading action is denoted $r_{C_{TA}}(d_{TA})$), receives $d_{BA}$ from Bob through the channel
  $C_{BA}$ (the corresponding reading action is denoted $r_{C_{BA}}(d_{BA})$). If $d_{TA}=\top\cdot d_{BA}=\top$, after an encryption processing $enc_{k_{AB}}(T_{A_D},D)$, Alice sends
  $ENC_{k_{AB}}(T_{A_D},D)$ to Bob through the channel $C_{AB}$ (the corresponding sending action is denoted $s_{C_{AB}}(T_{A_D},ENC_{k_{AB}}(D))$); else if $d_{TA}=\bot+ d_{BA}=\bot$,
  Alice sends $\bot$ to the outside through the channel $C_{AO}$ (the corresponding sending action is denoted $s_{C_{AO}}(\bot)$);
  \item Bob receives the message $ENC_{k_{AB}}(T_{A_D},D)$ through the channel $C_{AB}$ (the corresponding reading action is denoted $r_{C_{AB}}(T_{A_D},ENC_{k_{AB}}(D))$),
  after a decryption processing $dec_{k_{AB}}(ENC_{k_{AB}}(T_{A_D},D))$, if $isFresh(T_{A_D})=TRUE$, he sends $D$ to the outside through the channel $C_{BO}$
  (the corresponding sending action is denoted $s_{C_{BO}}(D)$), if $isFresh(T_{A_D})=FALSE$, he sends $\bot$ to the outside through the channel $C_{BO}$
  (the corresponding sending action is denoted $s_{C_{BO}}(\bot)$).
\end{enumerate}

Where $D\in\Delta$, $\Delta$ is the set of data.

Alice's state transitions described by $APTC_G$ are as follows.

$A=\sum_{D\in\Delta}r_{C_{AI}}(D)\cdot A_2$

$A_2=\{k_{AB}=NULL\}\cdot rsg_{k_{AB}}\cdot A_3+ \{k_{AB}\neq NULL\}\cdot A_7$

$A_3=enc_{k_{AT}}(T_A,B,k_{AB})\cdot A_4$

$A_4=s_{C_{AT}}(A,ENC_{k_{AT}}(T_A,B,k_{AB}))\cdot A_5$

$A_5=(r_{C_{TA}}(d_{TA})\parallel r_{C_{BA}}(d_{BA}))\cdot A_6$

$A_6=\{d_{TA}=\top\cdot d_{BA}=\top\}\cdot A_7+\{d_{TA}=\bot+ d_{BA}=\bot\}\cdot A_{9}$

$A_7=enc_{k_{AB}}(T_{A_D},D)\cdot A_8$

$A_8=s_{C_{AB}}(T_{A_D},ENC_{k_{AB}}(D))\cdot A$

$A_{9}=s_{C_{AO}}(\bot)\cdot A$

Bob's state transitions described by $APTC_G$ are as follows.

$B=\{k_{AB}=NULL\}\cdot B_1+ \{k_{AB}\neq NULL\}\cdot B_5$

$B_1=r_{C_{TB}}(d_{TB})\cdot B_2$

$B_2=\{d_{TB}\neq\bot\}\cdot B_3 + \{d_{TB}= \bot\}\cdot s_{C_{BA}}(\bot)\cdot B$

$B_3=dec_{k_{BT}}(ENC_{k_{BT}}(T_B,A,K_{AB}))\cdot B_4$

$B_4=\{isFresh(T_B)=TRUE\}\cdot s_{C_{BA}}(\top)\cdot B_5+\{isFresh(T_B)=FALSE\}\cdot s_{C_{BA}}(\bot)\cdot B$

$B_5=r_{C_{AB}}(T_{A_D},ENC_{k_{AB}}(D))\cdot B_6$

$B_6=dec_{k_{AB}}(ENC_{k_{AB}}(T_{A_D},D))\cdot B_7$

$B_7=\{isFresh(T_{A_D})=TRUE\}\cdot s_{C_{BO}}(D)\cdot B+\{isFresh(T_{A_D})=FALSE\}\cdot s_{C_{BO}}(\bot)\cdot B$

Trent's state transitions described by $APTC_G$ are as follows.

$T=r_{C_{AT}}(A,ENC_{k_{AT}}(T_A,B,k_{AB}))\cdot T_2$

$T_2=dec_{k_{AT}}(ENC_{k_{AT}}(T_A,B,k_{AB}))\cdot T_3$

$T_3=\{isFresh(T_A)=TRUE\}\cdot enc_{k_{BT}}(T_B,A,K_{AB})\cdot (s_{C_{TA}}(\top)\parallel s_{C_{TB}}(ENC_{k_{BT}}(T_B,A,K_{AB})))T\\
+\{isFresh(T_A)=FALSE\}\cdot (s_{C_{TA}}(\bot)\parallel s_{C_{TB}}(\bot))\cdot T$

The sending action and the reading action of the same type data through the same channel can communicate with each other, otherwise, will cause a deadlock $\delta$. We define the following
communication functions.

$\gamma(r_{C_{AT}}(A,ENC_{k_{AT}}(T_A,B,k_{AB})),s_{C_{AT}}(A,ENC_{k_{AT}}(T_A,B,k_{AB})))\triangleq c_{C_{AT}}(A,ENC_{k_{AT}}(T_A,B,k_{AB}))$

$\gamma(r_{C_{TA}}(d_{TA}),s_{C_{TA}}(d_{TA}))\triangleq c_{C_{TA}}(d_{TA})$

$\gamma(r_{C_{BA}}(d_{BA}),s_{C_{BA}}(d_{BA}))\triangleq c_{C_{BA}}(d_{BA})$

$\gamma(r_{C_{AB}}(T_{A_D},ENC_{k_{AB}}(D)),s_{C_{AB}}(T_{A_D},ENC_{k_{AB}}(D)))\triangleq c_{C_{AB}}(T_{A_D},ENC_{k_{AB}}(D))$

$\gamma(r_{C_{TB}}(d_{TB}),s_{C_{TB}}(d_{TB}))\triangleq c_{C_{TB}}(d_{TB})$

Let all modules be in parallel, then the protocol $A\quad B\quad T$ can be presented by the following process term.

$$\tau_I(\partial_H(\Theta(A\between B\between T)))=\tau_I(\partial_H(A\between B\between T))$$

where $H=\{r_{C_{AT}}(A,ENC_{k_{AT}}(T_A,B,k_{AB})),s_{C_{AT}}(A,ENC_{k_{AT}}(T_A,B,k_{AB})),\\
r_{C_{TA}}(d_{TA}),s_{C_{TA}}(d_{TA}),r_{C_{BA}}(d_{BA}),s_{C_{BA}}(d_{BA}),\\
r_{C_{AB}}(T_{A_D},ENC_{k_{AB}}(D)),s_{C_{AB}}(T_{A_D},ENC_{k_{AB}}(D)),r_{C_{TB}}(d_{TB}),s_{C_{TB}}(d_{TB})|D\in\Delta\}$,

$I=\{c_{C_{AT}}(A,ENC_{k_{AT}}(T_A,B,k_{AB})),c_{C_{TA}}(d_{TA}),c_{C_{BA}}(d_{BA}),\\
c_{C_{AB}}(T_{A_D},ENC_{k_{AB}}(D)),c_{C_{TB}}(d_{TB}),\{k_{AB}=NULL\}, rsg_{k_{AB}},\\
\{k_{AB}\neq NULL\},enc_{k_{AT}}(T_A,B,k_{AB}),\{d_{TA}=\top\cdot d_{BA}=\top\},\{d_{TA}=\bot+ d_{BA}=\bot\},\\
enc_{k_{AB}}(T_{A_D},D),\{d_{TB}\neq\bot\},\{d_{TB}=\bot\},dec_{k_{BT}}(ENC_{k_{BT}}(T_B,A,K_{AB})),\\
\{isFresh(T_B)=TRUE\},\{isFresh(T_B)=FALSE\},dec_{k_{AB}}(ENC_{k_{AB}}(T_{A_D},D)),\\
\{isFresh(T_{A_D})=TRUE\},\{isFresh(T_{A_D})=FALSE\},dec_{k_{AT}}(ENC_{k_{AT}}(T_A,B,k_{AB})),\\
\{isFresh(T_A)=TRUE\}, enc_{k_{BT}}(T_B,A,K_{AB}),\{isFresh(T_A)=FALSE\}|D\in\Delta\}$.

Then we get the following conclusion on the protocol.

\begin{theorem}
The Wide-Mouth Frog protocol in Figure \ref{WMF7} is secure.
\end{theorem}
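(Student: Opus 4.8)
The plan is to follow the same algebraic verification method used for the ABP protocol in section \ref{app} and for the protocols in the preceding chapters. First I would fix the communication alphabet and expand the full merge $A\between B\between T$ by repeated application of the expansion axioms. Using $P1$ to split $\between$ into $\parallel$ and $\mid$, together with the parallel axioms $P2$--$P10$, the communication axioms $C1$--$C10$, and the guard-distribution axioms $G12$--$G25$, every interleaving in which a send is not matched by its partner receive collapses to $\delta$ once the encapsulation operator $\partial_H$ is applied (via $D1$--$D6$ and $A6,A7$). The net effect is that only the intended handshakes survive: $c_{C_{AT}}$, $c_{C_{TA}}$, $c_{C_{BA}}$, $c_{C_{TB}}$ and $c_{C_{AB}}$, in the order dictated by the sequential structure of $A$, $B$ and $T$.

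Next I would read off a guarded linear recursive specification $E$ whose variables record the reachable global states of $\partial_H(A\between B\between T)$, setting $\partial_H(A\between B\between T)=\langle X_1|E\rangle$. Here the branching introduced by the conditional guards must be tracked explicitly: the equation for the initial state branches on $\{k_{AB}=NULL\}$ versus $\{k_{AB}\neq NULL\}$, and the states following Trent's and Bob's freshness tests branch on $\{isFresh(\cdot)=TRUE\}$ versus $\{isFresh(\cdot)=FALSE\}$. I would verify that $E$ is genuinely linear and guarded (no $\tau$- or $\epsilon$-loops, per Definition \ref{GLRSG}), so that the recursion machinery applies; soundness and the elimination theorem for $APTC_G$ with linear recursion (Theorems \ref{SAPTC_GRG}, \ref{ETRecursionG}) guarantee that this rewriting is valid.

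Then I would apply the abstraction operator $\tau_I$, where $I$ collects all the internal communications $c_*$ together with the cryptographic atomic actions ($rsg$, $enc$, $dec$) and the guard predicates. Using $TI1$--$TI6$ and $G28,G29$ to push $\tau_I$ through the specification, then the silent-step laws $B1,B2,G26,G27$ to absorb the resulting $\tau$'s, the chains of hidden actions between a successful input and the corresponding output contract. Invoking $RSP$ (Table \ref{RDPRSP}) on the abstracted specification, and $CFAR$ (Theorem \ref{CCFARG}) to collapse any $\tau$-cluster created by the $FALSE$ and $\bot$ branches, I expect to obtain $\tau_I(\partial_H(A\between B\between T))=\sum_{D\in\Delta}(r_{C_{AI}}(D)\cdot s_{C_{BO}}(D))\cdot\tau_I(\partial_H(A\between B\between T))$ modulo $\approx_{rbs}$ (and likewise $\approx_{rbp}$ and $\approx_{rbhp}$), which is the desired external behavior.

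The hard part will be the careful bookkeeping of the conditional branches and the time-stamp freshness tests. Unlike the private-channel and symmetric-key examples, this protocol has several guarded alternatives (the $NULL$-check and two independent freshness checks) that produce distinct execution paths, some ending in $\bot$ outputs. The main obstacle is to argue that, after abstraction, all the ``normal'' paths reduce to the single transmission $r_{C_{AI}}(D)\cdot s_{C_{BO}}(D)$ while the guard-induced alternatives either coincide with it or are eliminated as deadlock or silent behavior under $\partial_H$ and $\tau_I$; making the case analysis exhaustive without ambiguity in how the shared variable $k_{AB}$ and the stamps are threaded across the three parallel components is where most of the work lies.
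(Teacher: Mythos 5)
Your overall method --- expand $A\between B\between T$ with $P1$ and the communication axioms, kill unmatched sends/reads with $\partial_H$, read off a guarded linear recursive specification, then abstract with $\tau_I$ and invoke $RSP$/$CFAR$ --- is exactly the route the paper takes (it defers the computation to the ABP-style expansion of section \ref{app}). However, your target equation is wrong, and the error sits precisely at the point you yourself identify as ``the hard part.'' You hope that the guard-induced $\bot$-branches are ``eliminated as deadlock or silent behavior under $\partial_H$ and $\tau_I$,'' so that the result is $\tau_I(\partial_H(A\between B\between T))=\sum_{D\in\Delta}(r_{C_{AI}}(D)\cdot s_{C_{BO}}(D))\cdot\tau_I(\partial_H(A\between B\between T))$. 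That cannot happen: the actions $s_{C_{AO}}(\bot)$ and $s_{C_{BO}}(\bot)$ are outputs to the external channels $C_{AO}$ and $C_{BO}$; they have no communication partner inside the system, they are not in $H$, and they are not in $I$, so neither encapsulation nor abstraction removes them. They survive as visible summands, and the paper's derived equation is accordingly $\tau_I(\partial_H(A\between B\between T))=\sum_{D\in\Delta}(r_{C_{AI}}(D)\cdot ((s_{C_{AO}}(\bot)\parallel s_{C_{BO}}(\bot))+s_{C_{BO}}(D)))\cdot\tau_I(\partial_H(A\between B\between T))$.

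This is not a cosmetic difference, because the paper's security argument depends on keeping those branches visible: the claim is that under a replay attack the stale time stamps force the $(s_{C_{AO}}(\bot)\parallel s_{C_{BO}}(\bot))$ summand (the attack is detected and visibly aborted), while in the absence of an attack the behavior collapses to $r_{C_{AI}}(D)\cdot s_{C_{BO}}(D)$. If your equation held, the protocol's response to a replay would be indistinguishable from normal operation and the argument would prove nothing about replay resistance. You are also missing the second half of the paper's proof: after deriving the equation, it enumerates the attack classes (information leakage, replay, man-in-the-middle, plus the explicitly out-of-scope cases of key compromise and tampering) and interprets the equation against each; deriving the recursion alone does not yet establish the theorem as the paper understands ``secure.''
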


\begin{proof}
Based on the above state transitions of the above modules, by use of the algebraic laws of $APTC_G$, we can prove that

$\tau_I(\partial_H(A\between B\between T))=\sum_{D\in\Delta}(r_{C_{AI}}(D)\cdot ((s_{C_{AO}}(\bot)\parallel s_{C_{BO}}(\bot))+s_{C_{BO}}(D)))\cdot
\tau_I(\partial_H(A\between B\between T))$.

For the details of proof, please refer to section \ref{app}, and we omit it.

That is, the Wide-Mouth Frog protocol in Figure \ref{WMF7} $\tau_I(\partial_H(A\between B\between T))$ can exhibit desired external behaviors:

\begin{enumerate}
  \item For information leakage, because $k_{AT}$ is privately shared only between Alice and Trent, $k_{BT}$ is privately shared only between Bob and Trent, $k_{AB}$ is privately shared
  only among Trent, Alice and Bob. For the modeling of confidentiality, it is similar to the protocol in section \ref{confi}, the Wide-Mouth Frog protocol is confidential;
  \item For replay attack, the using of time stamps $T_A$, $T_B$, and $T_{A_D}$, makes that $\tau_I(\partial_H(A\between B\between T))=\sum_{D\in\Delta}(r_{C_{AI}}(D)\cdot (s_{C_{AO}}(\bot)\parallel s_{C_{BO}}(\bot)))\cdot
\tau_I(\partial_H(A\between B\between T))$, it is desired;
  \item Without replay attack, the protocol would be $\tau_I(\partial_H(A\between B\between T))=\sum_{D\in\Delta}(r_{C_{AI}}(D)\cdot s_{C_{BO}}(D))\cdot
\tau_I(\partial_H(A\between B\between T))$, it is desired;
  \item For the man-in-the-middle attack, because $k_{AT}$ is privately shared only between Alice and Trent, $k_{BT}$ is privately shared only between Bob and Trent, $k_{AB}$ is privately shared
  only among Trent, Alice and Bob. For the modeling of the man-in-the-middle attack, it is similar to the protocol in section \ref{MIM}, the Wide-Mouth Frog protocol can be against the
  man-in-the-middle attack;
  \item For the unexpected and non-technical leaking of $k_{AT}$, $k_{BT}$, $k_{AB}$, or they being not strong enough, or Trent being dishonest, they are out of the scope of analyses of security protocols;
  \item For malicious tampering and transmission errors, they are out of the scope of analyses of security protocols.
\end{enumerate}
\end{proof}

\subsection{Yahalom Protocol}\label{yp}

The Yahalom protocol shown in Figure \ref{YP7} uses symmetric keys for secure communication, that is, the key $k_{AB}$ between Alice and Bob is privately shared to Alice and Bob,
Alice, Bob have shared keys with Trent $k_{AT}$ and $k_{BT}$ already.

\begin{figure}
    \centering
    \includegraphics{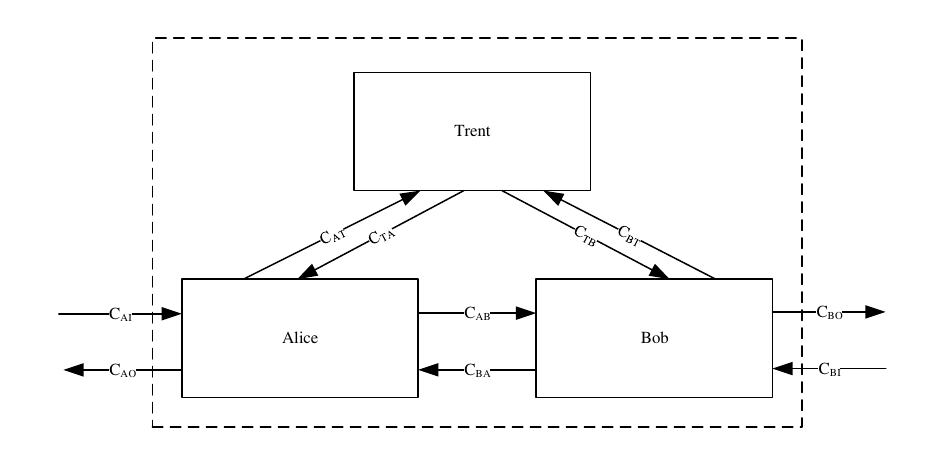}
    \caption{Yahalom protocol}
    \label{YP7}
\end{figure}

The process of the protocol is as follows.

\begin{enumerate}
  \item Alice receives some messages $D$ from the outside through the channel $C_{AI}$ (the corresponding reading action is denoted $r_{C_{AI}}(D)$), if $k_{AB}$ is not established,
  she generates a random number $R_A$ through an action $rsg_{R_A}$, and sends $A, R_A$ to Bob through the channel $C_{AB}$ (the corresponding sending action is denoted
  $s_{C_{AB}}(A,R_A)$);
  \item Bob receives $A,R_A$ from Alice through the channel $C_{AB}$ (the corresponding reading action is denoted $r_{C_{AB}}(A,R_A)$), he generates a random number $R_B$ through
  an action $rsg_{R_B}$, encrypts $A,R_A,R_B$ by $k_{BT}$ through an action $enc_{k_{BT}}(A,R_A,R_B)$, and sends $ENC_{k_{BT}}(A,R_A,R_B)$ to Trent through the channel $C_{BT}$
  (the corresponding sending action is denoted $s_{C_{BT}}(ENC_{k_{BT}}(A,R_A,R_B))$);
  \item Trent receives $ENC_{k_{BT}}(A,R_A,R_B)$ through the channel $C_{BT}$ (the corresponding reading action is denoted $r_{C_{BT}}(ENC_{k_{BT}}(A,R_A,R_B))$), he decrypts the message
  through an action $dec_{k_{BT}}(ENC_{k_{BT}}(A,R_A,R_B))$, generates a random session key $k_{AB}$ through an action $rsg_{k_{AB}}$, then he encrypts $B,k_{AB},R_A,R_B$ by $k_{AT}$ through an action
  $enc_{k_{AT}}(B,k_{AB},R_A,R_B)$, encrypts $A,k_{AB}$ by $k_{BT}$ through an action $enc_{k_{BT}}(A,k_{AB})$, and sends them to Alice through the channel $C_{TA}$ (the corresponding sending
  action is denoted \\
  $s_{C_{TA}}(ENC_{k_{AT}}(B,k_{AB},R_A,R_B),ENC_{k_{BT}}(A,k_{AB}))$);
  \item Alice receives the message from Trent through the channel $C_{TA}$ (the corresponding reading action is denoted $r_{C_{TA}}(ENC_{k_{AT}}(B,k_{AB},d_{R_A},R_B),ENC_{k_{BT}}(A,k_{AB}))$),
  she decrypts $ENC_{k_{AT}}(B,k_{AB},d_{R_A},R_B)$ by $k_{AT}$ through an action $dec_{k_{AT}}(ENC_{k_{AT}}(B,k_{AB},d_{R_A},R_B))$, if $d_{R_A}=R_A$, she encrypts $R_B,D$ by $k_{AB}$ through
  an action $enc_{k_{AB}}(R_B,D)$, and sends $ENC_{k_{BT}}(A,k_{AB}),ENC_{k_{AB}}(R_B,D)$
  to Bob through the channel $C_{AB}$ (the corresponding sending action is denoted $s_{C_{AB}}(ENC_{k_{BT}}(A,k_{AB}),ENC_{k_{AB}}(R_B,D))$); else if $d_{R_A}\neq R_A$, she sends $\bot$ to Bob
  through the channel $C_{AB}$ (the corresponding sending action is denoted $s_{C_{AB}}(\bot)$);
  \item Bob receives $d_{AB}$ from Alice (the corresponding reading action is denoted $r_{C_{AB}}(d_{AB})$), if $d_{AB}=\bot$, he sends $\bot$ to the outside through the channel $C_{BO}$
  (the corresponding sending action is denoted $s_{C_{BO}}(\bot)$); else if $d_{AB}\neq \bot$, $d_{AB}$ must be the form of \\
  $ENC_{k_{BT}}(A,k_{AB}),ENC_{k_{AB}}(d_{R_B},D)$ (without considering
  the malicious tampering and transmission errors), he decrypts $ENC_{k_{BT}}(A,k_{AB})$ by $k_{BT}$ through an action \\
  $dec_{k_{BT}}(ENC_{k_{BT}}(A,k_{AB}))$ to ensure the message
  is from Alice and get $k_{AB}$, then he decrypts
  $ENC_{k_{AB}}(d_{R_B},D)$ by $k_{AB}$ through an action $dec_{k_{AB}}(ENC_{k_{AB}}(d_{R_B},D))$, if $d_{R_B}=R_B$, he sends $D$ to the outside through the channel $C_{BO}$
  (the corresponding sending action is denoted $s_{C_{BO}}(D)$), else if
  $d_{R_B}\neq R_B$, he sends $\bot$ to the outside through the channel $C_{BO}$ (the corresponding sending action is denoted $s_{C_{BO}}(\bot)$).
\end{enumerate}

Where $D\in\Delta$, $\Delta$ is the set of data.

Alice's state transitions described by $APTC_G$ are as follows.

$A=\sum_{D\in\Delta}r_{C_{AI}}(D)\cdot A_2$

$A_2=\{k_{AB}=NULL\}\cdot rsg_{R_A}\cdot A_3+ \{k_{AB}\neq NULL\}\cdot A_7$

$A_3=s_{C_{AB}}(A,R_A)\cdot A_4$

$A_4=r_{C_{TA}}(ENC_{k_{AT}}(B,k_{AB},d_{R_A},R_B),ENC_{k_{BT}}(A,k_{AB}))\cdot A_5$

$A_5=dec_{k_{AT}}(ENC_{k_{AT}}(B,k_{AB},d_{R_A},R_B))\cdot A_6$

$A_6=\{d_{R_A}=R_A\}\cdot A_7+\{d_{R_A}\neq R_A\}\cdot A_{9}$

$A_7=enc_{k_{AB}}(R_B,D)\cdot A_8$

$A_8=s_{C_{AB}}(ENC_{k_{BT}}(A,k_{AB}),ENC_{k_{AB}}(R_B,D))\cdot A$

$A_{9}=s_{C_{AB}}(\bot)\cdot A$

Bob's state transitions described by $APTC_G$ are as follows.

$B=\{k_{AB}=NULL\}\cdot B_1+ \{k_{AB}\neq NULL\}\cdot B_5$

$B_1=r_{C_{AB}}(A,R_A)\cdot B_2$

$B_2=rsg_{R_B}\cdot B_3$

$B_3=enc_{k_{BT}}(A,R_A,R_B)\cdot B_4$

$B_4=s_{C_{BT}}(ENC_{k_{BT}}(A,R_A,R_B))\cdot B_5$

$B_5=r_{C_{AB}}(d_{AB})\cdot B_6$

$B_6=\{d_{AB}=\bot\}\cdot s_{C_{BO}}(\bot)\cdot B+ \{d_{AB}\neq\bot\}\cdot B_7$

$B_7=dec_{k_{BT}}(ENC_{k_{BT}}(A,k_{AB}))\cdot B_8$

$B_8=dec_{k_{AB}}(ENC_{k_{AB}}(d_{R_B},D))\cdot B_9$

$B_9=\{d_{R_B}=R_B\}\cdot s_{C_{BO}}(D)\cdot B+\{d_{R_B}\neq R_B\}\cdot s_{C_{BO}}(\bot)\cdot B$

Trent's state transitions described by $APTC_G$ are as follows.

$T=r_{C_{BT}}(ENC_{k_{BT}}(A,R_A,R_B))\cdot T_2$

$T_2=dec_{k_{BT}}(ENC_{k_{BT}}(A,R_A,R_B))\cdot T_3$

$T_3=rsg_{k_{AB}}\cdot T_4$

$T_4=enc_{k_{AT}}(B,k_{AB},R_A,R_B)\cdot T_5$

$T_5=enc_{k_{BT}}(A,k_{AB})\cdot T_6$

$T_6=s_{C_{TA}}(ENC_{k_{AT}}(B,k_{AB},R_A,R_B),ENC_{k_{BT}}(A,k_{AB}))\cdot T$

The sending action and the reading action of the same type data through the same channel can communicate with each other, otherwise, will cause a deadlock $\delta$. We define the following
communication functions.

$\gamma(r_{C_{AB}}(A,R_A),s_{C_{AB}}(A,R_A))\triangleq c_{C_{AB}}(A,R_A)$

$\gamma(r_{C_{TA}}(ENC_{k_{AT}}(B,k_{AB},d_{R_A},R_B),ENC_{k_{BT}}(A,k_{AB})),\\
s_{C_{TA}}(ENC_{k_{AT}}(B,k_{AB},d_{R_A},R_B),ENC_{k_{BT}}(A,k_{AB})))\\
\triangleq c_{C_{TA}}(ENC_{k_{AT}}(B,k_{AB},d_{R_A},R_B),ENC_{k_{BT}}(A,k_{AB}))$

$\gamma(r_{C_{BT}}(ENC_{k_{BT}}(A,R_A,R_B)),s_{C_{BT}}(ENC_{k_{BT}}(A,R_A,R_B)))\triangleq c_{C_{BT}}(ENC_{k_{BT}}(A,R_A,R_B))$

$\gamma(r_{C_{AB}}(d_{AB}),s_{C_{AB}}(d_{AB}))\triangleq c_{C_{AB}}(d_{AB})$

Let all modules be in parallel, then the protocol $A\quad B\quad T$ can be presented by the following process term.

$$\tau_I(\partial_H(\Theta(A\between B\between T)))=\tau_I(\partial_H(A\between B\between T))$$

where $H=\{r_{C_{AB}}(A,R_A),s_{C_{AB}}(A,R_A),r_{C_{AB}}(d_{AB}),s_{C_{AB}}(d_{AB}),\\
r_{C_{TA}}(ENC_{k_{AT}}(B,k_{AB},d_{R_A},R_B),ENC_{k_{BT}}(A,k_{AB})),s_{C_{TA}}(ENC_{k_{AT}}(B,k_{AB},d_{R_A},R_B),\\
ENC_{k_{BT}}(A,k_{AB})),r_{C_{BT}}(ENC_{k_{BT}}(A,R_A,R_B)),s_{C_{BT}}(ENC_{k_{BT}}(A,R_A,R_B))|D\in\Delta\}$,

$I=\{c_{C_{AB}}(A,R_A),c_{C_{TA}}(ENC_{k_{AT}}(B,k_{AB},d_{R_A},R_B),ENC_{k_{BT}}(A,k_{AB})),\\
c_{C_{BT}}(ENC_{k_{BT}}(A,R_A,R_B)),c_{C_{AB}}(d_{AB}),\{k_{AB}=NULL\}, rsg_{R_A},\{k_{AB}\neq NULL\},\\
dec_{k_{AT}}(ENC_{k_{AT}}(B,k_{AB},d_{R_A},R_B)),\{d_{R_A}=R_A\},\{d_{R_A}\neq R_A\},\\
enc_{k_{AB}}(R_B,D),rsg_{R_B},enc_{k_{BT}}(A,R_A,R_B),\{d_{AB}=\bot\},\{d_{AB}\neq\bot\},\\
dec_{k_{BT}}(ENC_{k_{BT}}(A,k_{AB})),dec_{k_{AB}}(ENC_{k_{AB}}(d_{R_B},D)),\\
\{d_{R_B}=R_B\},\{d_{R_B}\neq R_B\},dec_{k_{BT}}(ENC_{k_{BT}}(A,R_A,R_B)),rsg_{k_{AB}},\\
enc_{k_{AT}}(B,k_{AB},R_A,R_B),enc_{k_{BT}}(A,k_{AB})|D\in\Delta\}$.

Then we get the following conclusion on the protocol.

\begin{theorem}
The Yahalom protocol in Figure \ref{YP7} is secure.
\end{theorem}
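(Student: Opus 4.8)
The plan is to follow the template established by the ABP proof in section \ref{app} and reused for the Wide-Mouth Frog protocol: reduce the parallel composition $A\between B\between T$ to a guarded linear recursive specification and then abstract away the internal cryptographic and communication actions. First I would expand $A\between B\between T$ by applying $P1$ to split the whole merge into parallel and communication parts, then use $RDP$ to unfold the recursive definitions of $A$, $B$ and $T$. The encapsulation operator $\partial_H$ forces the matching send/receive pairs to synchronize (producing the $c_{\ast}$ communication actions) and sends every mismatched pair to $\delta$, which is then absorbed by $A6$ and $A7$. Because the protocol contains several data-dependent branches (the guards $\{d_{R_A}=R_A\}$, $\{d_{AB}=\bot\}$ and $\{d_{R_B}=R_B\}$), the expansion yields a finite but branching transition graph; the extended communication merge of section \ref{ec} is needed so that the nonces $R_A$, $R_B$ and the datum $D$ are correctly substituted as they pass between principals.

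Next I would record the resulting system of equations as a guarded linear recursive specification $E$ with head variable $X_1$ such that $\partial_H(A\between B\between T)=\langle X_1|E\rangle$, exactly as was done for the ABP protocol. Since all the cryptographic operations ($rsg$, $enc$, $dec$), the internal guards, and the communication actions $c_{\ast}$ lie in $I$, applying $\tau_I$ and using $B1$, $B2$ together with $TI1$--$TI6$ collapses each internal segment to $\tau$ and then removes it. Invoking $RSP$ via Theorem \ref{CAPTC_GTAUG}, I would identify the abstracted process with the one-line specification
$$\tau_I(\partial_H(A\between B\between T))=\sum_{D\in\Delta}(r_{C_{AI}}(D)\cdot s_{C_{BO}}(D))\cdot\tau_I(\partial_H(A\between B\between T)),$$
which is the desired external behavior: the datum read at $C_{AI}$ is eventually delivered at $C_{BO}$ with all key-exchange traffic hidden.

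The hard part will be managing the branch bookkeeping once the guards are evaluated. I expect to need the sufficient-determinacy of the guard axioms for $APTC_G$ to guarantee that on the honest run the conditions $d_{R_A}=R_A$ and $d_{R_B}=R_B$ hold while the $\bot$-branches are unreachable, so that the failure summands vanish under $G3$ and $G9$; and I anticipate invoking $CFAR$ (through Theorem \ref{CCFARG}) should any abstraction-induced $\tau$-loop arise from the $k_{AB}\neq NULL$ re-entry into a new session. Once the branches are pruned, the remaining manipulation is routine and mirrors the earlier proofs, so the security claim follows from the same confidentiality and man-in-the-middle reasoning already spelled out for the Wide-Mouth Frog protocol in section \ref{wmf}.
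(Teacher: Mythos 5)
Your overall template --- expand $A\between B\between T$ with $P1$ and $RDP$, let $\partial_H$ force the matching send/receive pairs into communications, record the result as a guarded linear recursive specification, and then abstract with $\tau_I$ --- is exactly the route the paper takes (it defers the computation to the ABP expansion in section \ref{app}, as you do). The problem is the equation you arrive at. The paper's proof establishes
$$\tau_I(\partial_H(A\between B\between T))=\sum_{D\in\Delta}\bigl(r_{C_{AI}}(D)\cdot (s_{C_{BO}}(\bot)+s_{C_{BO}}(D))\bigr)\cdot\tau_I(\partial_H(A\between B\between T)),$$
keeping \emph{both} summands, whereas you argue that sufficient determinacy makes the guards $\{d_{R_A}=R_A\}$, $\{d_{AB}=\bot\}$, $\{d_{R_B}=R_B\}$ resolve in favour of the honest branch so that the $\bot$-summands ``vanish under $G3$ and $G9$.'' That step is wrong in the paper's model: whether $d_{R_A}=R_A$ or $d_{R_B}=R_B$ holds depends on whether an adversary has interfered with the messages carrying the nonces, so both outcomes are reachable and the expansion genuinely branches. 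Pruning the $\bot$-branches does not simplify the proof --- it deletes the part of the behaviour on which the security argument rests. The paper's conclusion is precisely that \emph{under} a man-in-the-middle attack the protocol reduces to $\sum_{D}(r_{C_{AI}}(D)\cdot s_{C_{BO}}(\bot))\cdot\tau_I(\cdots)$ (the attack is detected and only $\bot$ is emitted), while \emph{without} attack it reduces to $\sum_{D}(r_{C_{AI}}(D)\cdot s_{C_{BO}}(D))\cdot\tau_I(\cdots)$. Your equation is only the second of these, so what you have proved is functional correctness of the honest run, not security.

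To repair the argument you should keep the guard-indexed summands through the abstraction step, obtain the two-summand equation above, and only then argue case-by-case (as the paper does) that each summand is the desired behaviour in its respective scenario --- $s_{C_{BO}}(D)$ when the nonce checks succeed, $s_{C_{BO}}(\bot)$ when they fail. A minor additional point: the completeness theorem you cite (Theorem \ref{CCFARG} / \ref{CAPTC_GTAUG}) is not where $RSP$ lives; identifying the abstracted process with its specification uses $RDP$/$RSP$ together with the soundness results for $APTC_{G_{\tau}}$ with guarded linear recursion, with $CFAR$ needed only if the session re-entry actually produces a $\tau$-loop.
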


\begin{proof}
Based on the above state transitions of the above modules, by use of the algebraic laws of $APTC_G$, we can prove that

$\tau_I(\partial_H(A\between B\between T))=\sum_{D\in\Delta}(r_{C_{AI}}(D)\cdot (s_{C_{BO}}(\bot)+s_{C_{BO}}(D)))\cdot
\tau_I(\partial_H(A\between B\between T))$.

For the details of proof, please refer to section \ref{app}, and we omit it.

That is, the Yahalom protocol in Figure \ref{YP7} $\tau_I(\partial_H(A\between B\between T))$ can exhibit desired external behaviors:

\begin{enumerate}
  \item For information leakage, because $k_{AT}$ is privately shared only between Alice and Trent, $k_{BT}$ is privately shared only between Bob and Trent, $k_{AB}$ is privately shared
  only among Trent, Alice and Bob. For the modeling of confidentiality, it is similar to the protocol in section \ref{confi}, the Yahalom protocol is confidential;
  \item For the man-in-the-middle attack, because $k_{AT}$ is privately shared only between Alice and Trent, $k_{BT}$ is privately shared only between Bob and Trent, $k_{AB}$ is privately shared
  only among Trent, Alice and Bob, and the use of the random numbers $R_A$ and $R_B$, the protocol would be $\tau_I(\partial_H(A\between B\between T))=\sum_{D\in\Delta}(r_{C_{AI}}(D)\cdot s_{C_{BO}}(\bot))\cdot
  \tau_I(\partial_H(A\between B\between T))$, it is desired, the Yahalom protocol can be against the
  man-in-the-middle attack;
  \item Without man-in-the-middle attack, the protocol would be $\tau_I(\partial_H(A\between B\between T))=\sum_{D\in\Delta}(r_{C_{AI}}(D)\cdot s_{C_{BO}}(D))\cdot
  \tau_I(\partial_H(A\between B\between T))$, it is desired;
  \item For the unexpected and non-technical leaking of $k_{AT}$, $k_{BT}$, $k_{AB}$, or they being not strong enough, or Trent being dishonest, they are out of the scope of analyses of security protocols;
  \item For malicious tampering and transmission errors, they are out of the scope of analyses of security protocols.
\end{enumerate}
\end{proof}

\subsection{Needham-Schroeder Protocol}\label{nsp}

The Needham-Schroeder protocol shown in Figure \ref{NSP7} uses symmetric keys for secure communication, that is, the key $k_{AB}$ between Alice and Bob is privately shared to Alice and Bob,
Alice, Bob have shared keys with Trent $k_{AT}$ and $k_{BT}$ already.

\begin{figure}
    \centering
    \includegraphics{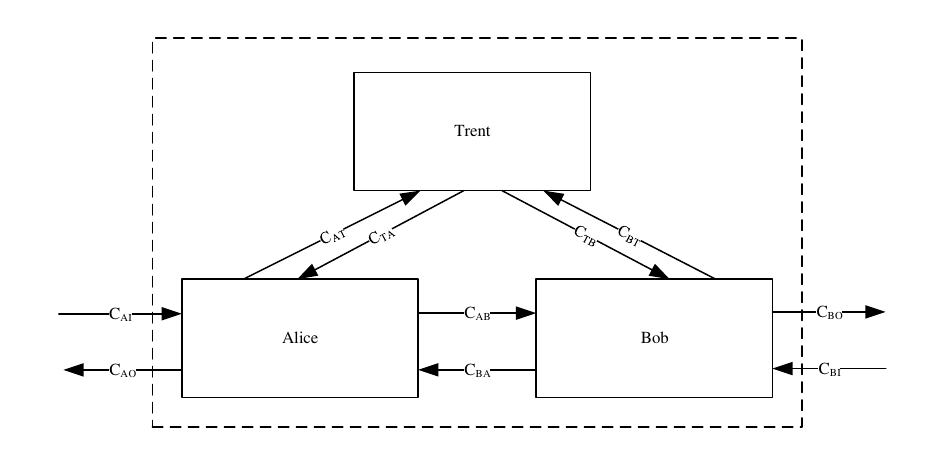}
    \caption{Needham-Schroeder protocol}
    \label{NSP7}
\end{figure}

The process of the protocol is as follows.

\begin{enumerate}
  \item Alice receives some messages $D$ from the outside through the channel $C_{AI}$ (the corresponding reading action is denoted $r_{C_{AI}}(D)$), if $k_{AB}$ is not established,
  she generates a random number $R_A$ through an action $rsg_{R_A}$, and sends $A, B, R_A$ to trent through the channel $C_{AT}$ (the corresponding sending action is denoted
  $s_{C_{AT}}(A,B,R_A)$);
  \item Trent receives $A,B,R_A$ from Alice through the channel $C_{AT}$ (the corresponding reading action is denoted $r_{C_{AT}}(A,B,R_A)$), he generates a random session key $k_{AB}$ through
  an action $rsg_{k_{AB}}$, then he encrypts $A,k_{AB}$ by $k_{BT}$ through an action
  $enc_{k_{BT}}(A,k_{AB})$, encrypts $R_A,B,k_{AB},ENC_{k_{BT}}(A,k_{AB})$ by $k_{AT}$ through an action $enc_{k_{AT}}(R_A,B,k_{AB},ENC_{k_{BT}}(A,k_{AB}))$, and sends them to Alice through the channel $C_{TA}$ (the corresponding sending
  action is denoted
  $s_{C_{TA}}(ENC_{k_{AT}}(R_A,B,k_{AB},ENC_{k_{BT}}(A,k_{AB})))$);
  \item Alice receives the message from Trent through the channel $C_{TA}$ (the corresponding reading action is denoted $r_{C_{TA}}(ENC_{k_{AT}}(d_{R_A},B,k_{AB},ENC_{k_{BT}}(A,k_{AB})))$),
  she decrypts \\
  $ENC_{k_{AT}}(d_{R_A},B,k_{AB},ENC_{k_{BT}}(A,k_{AB}))$ by $k_{AT}$ through an action \\
  $dec_{k_{AT}}(ENC_{k_{AT}}(d_{R_A},B,k_{AB},ENC_{k_{BT}}(A,k_{AB})))$, if $d_{R_A}=R_A$,
  she sends $ENC_{k_{BT}}(A,k_{AB})$
  to Bob through the channel $C_{AB}$ (the corresponding sending action is denoted \\
  $s_{C_{AB}}(ENC_{k_{BT}}(A,k_{AB}))$); else if $d_{R_A}\neq R_A$, she sends $\bot$ to Bob
  through the channel $C_{AB}$ (the corresponding sending action is denoted $s_{C_{AB}}(\bot)$);
  \item Bob receives $d_{AB}$ from Alice (the corresponding reading action is denoted $r_{C_{AB}}(d_{AB})$), if $d_{AB}=\bot$, he sends $\bot$ to the outside through the channel $C_{BO}$
  (the corresponding sending action is denoted $s_{C_{BO}}(\bot)$), and sends $\bot$ to Alice through the channel $C_{AB}$ (the corresponding sending action is denoted $s_{C_{AB}}(\bot)$);
  else if $d_{AB}\neq \bot$, $d_{AB}$ must be the form of
  $ENC_{k_{BT}}(A,k_{AB})$ (without considering
  the malicious tampering and transmission errors), he decrypts $ENC_{k_{BT}}(A,k_{AB})$ by $k_{BT}$ through an action
  $dec_{k_{BT}}(ENC_{k_{BT}}(A,k_{AB}))$ to ensure the message
  is from Alice and get $k_{AB}$, then he generates a random number $R_B$ through an action $rsg_{R_B}$, encrypts $R_B$ by $k_{AB}$ through an action $enc_{k_{AB}}(R_B)$, and sends
  $ENC_{k_{AB}}(R_B)$ to Alice through the channel $C_{BA}$ (the corresponding sending action is denoted $s_{C_{BA}}(ENC_{k_{AB}}(R_B))$);
  \item Alice receives $d_{BA}$ from Bob through the channel $C_{BA}$ (the corresponding reading action is denoted $r_{C_{BA}}(d_{BA})$), if $d_{BA}\neq \bot$, she decrypts $ENC_{k_{AB}}(R_B)$ to
  get $R_B$ by $k_{AB}$ through an action $dec_{k_{AB}}(ENC_{k_{AB}}(R_B))$, encrypts $R_B-1,D$ through an action $enc_{k_{AB}}(R_B-1,D)$, and sends $ENC_{k_{AB}}(R_B-1,D)$ to Bob through the channel $C_{AB}$ (the corresponding sending action is denoted
  $s_{C_{AB}}(ENC_{k_{AB}}(R_B-1,D))$); else if $d_{BA}=\bot$, she sends $\bot$ to Bob through the channel $C_{AB}$ (the corresponding sending action is denoted $s_{C_{AB}}(\bot)$);
  \item Bob receives $d_{AB}'$ from Alice through the channel $C_{AB}$ (the corresponding reading action is denoted $r_{C_{AB}}(d_{AB}')$), if $d_{AB}'\neq \bot$, he
  decrypts $ENC_{k_{AB}}(d_{R_B-1},D)$ by $k_{AB}$ through an action $dec_{k_{AB}}(ENC_{k_{AB}}(d_{R_B-1},D))$, if $d_{R_B-1}=R_B-1$, he sends $D$ to the outside through the channel $C_{BO}$
  (the corresponding sending action is denoted $s_{C_{BO}}(D)$), else if
  $d_{R_B-1}\neq R_B-1$, he sends $\bot$ to the outside through the channel $C_{BO}$ (the corresponding sending action is denoted $s_{C_{BO}}(\bot)$);else if
  $d_{AB}'=\bot$, he sends $\bot$ to the outside through the channel $C_{BO}$ (the corresponding sending action is denoted $s_{C_{BO}}(\bot)$).
\end{enumerate}

Where $D\in\Delta$, $\Delta$ is the set of data.

Alice's state transitions described by $APTC_G$ are as follows.

$A=\sum_{D\in\Delta}r_{C_{AI}}(D)\cdot A_2$

$A_2=\{k_{AB}=NULL\}\cdot rsg_{R_A}\cdot A_3+ \{k_{AB}\neq NULL\}\cdot A_{11}$

$A_3=s_{C_{AT}}(A,B,R_A)\cdot A_4$

$A_4=r_{C_{TA}}(ENC_{k_{AT}}(d_{R_A},B,k_{AB},ENC_{k_{BT}}(A,k_{AB})))\cdot A_5$

$A_5=dec_{k_{AT}}(enc_{k_{AT}}(d_{R_A},B,k_{AB},ENC_{k_{BT}}(A,k_{AB})))\cdot A_6$

$A_6=\{d_{R_A}=R_A\}\cdot A_7+\{d_{R_A}\neq R_A\}\cdot s_{C_{AB}}(\bot)\cdot A_{8}$

$A_7=s_{C_{AB}}(ENC_{k_{BT}}(A,k_{AB}))\cdot A_8$

$A_8=r_{C_{BA}}(d_{BA})\cdot A_9$

$A_9=\{d_{BA}\neq \bot\}\cdot A_{10}+\{d_{BA}= \bot\}\cdot A_{13}$

$A_{10}=dec_{k_{AB}}(ENC_{k_{AB}}(R_B))\cdot A_{11}$

$A_{11}=enc_{k_{AB}}(R_B-1,D)\cdot A_{12}$

$A_{12}=s_{C_{AB}}(ENC_{k_{AB}}(R_B-1,D))\cdot A$

$A_{13}=s_{C_{AB}}(\bot)\cdot A$

Bob's state transitions described by $APTC_G$ are as follows.

$B=\{k_{AB}=NULL\}\cdot B_1+ \{k_{AB}\neq NULL\}\cdot B_7$

$B_1=r_{C_{AB}}(d_{AB})\cdot B_2$

$B_2=\{d_{AB}=\bot\}\cdot(s_{C_{BO}}(\bot)\parallel s_{C_{AB}}(\bot))\cdot B_7+\{d_{AB}\neq \bot\}\cdot B_3$

$B_3=dec_{k_{BT}}(ENC_{k_{BT}}(A,k_{AB}))\cdot B_4$

$B_4=rsg_{R_B}\cdot B_5$

$B_5=enc_{k_{AB}}(R_B)\cdot B_6$

$B_6=s_{C_{BA}}(ENC_{k_{AB}}(R_B))\cdot B_7$

$B_7=r_{C_{AB}}(d_{AB}')\cdot B_8$

$B_8=\{d_{AB}'\neq \bot\}\cdot B_9+\{d_{AB}'= \bot\}\cdot s_{C_{BO}}(\bot)\cdot B$

$B_9=dec_{k_{AB}}(ENC_{k_{AB}}(d_{R_B-1},D))\cdot B_{10}$

$B_{10}=\{d_{R_B-1}=R_B-1\}\cdot s_{C_{BO}}(D)\cdot B+\{d_{R_B-1}\neq R_B-1\}\cdot s_{C_{BO}}(\bot)\cdot B$

Trent's state transitions described by $APTC_G$ are as follows.

$T=r_{C_{AT}}(A,B,R_A)\cdot T_2$

$T_2=rsg_{k_{AB}}\cdot T_3$

$T_3=enc_{k_{BT}}(A,k_{AB})\cdot T_4$

$T_4=enc_{k_{AT}}(R_A,B,k_{AB},ENC_{k_{BT}}(A,k_{AB}))\cdot T_5$

$T_5=s_{C_{TA}}(ENC_{k_{AT}}(R_A,B,k_{AB},ENC_{k_{BT}}(A,k_{AB})))\cdot T$

The sending action and the reading action of the same type data through the same channel can communicate with each other, otherwise, will cause a deadlock $\delta$. We define the following
communication functions.

$\gamma(r_{C_{AT}}(A,B,R_A),s_{C_{AT}}(A,B,R_A))\triangleq c_{C_{AT}}(A,B,R_A)$

$\gamma(r_{C_{TA}}(ENC_{k_{AT}}(d_{R_A},B,k_{AB},ENC_{k_{BT}}(A,k_{AB}))),\\
s_{C_{TA}}(ENC_{k_{AT}}(d_{R_A},B,k_{AB},ENC_{k_{BT}}(A,k_{AB}))))\\
\triangleq c_{C_{TA}}(ENC_{k_{AT}}(d_{R_A},B,k_{AB},ENC_{k_{BT}}(A,k_{AB})))$

$\gamma(r_{C_{AB}}(d_{AB}),s_{C_{AB}}(d_{AB}))\triangleq c_{C_{AB}}(d_{AB})$

$\gamma(r_{C_{BA}}(d_{BA}),s_{C_{BA}}(d_{BA}))\triangleq c_{C_{BA}}(d_{BA})$

$\gamma(r_{C_{AB}}(d_{AB}'),s_{C_{AB}}(d_{AB}'))\triangleq c_{C_{AB}}(d_{AB}')$

Let all modules be in parallel, then the protocol $A\quad B\quad T$ can be presented by the following process term.

$$\tau_I(\partial_H(\Theta(A\between B\between T)))=\tau_I(\partial_H(A\between B\between T))$$

where $H=\{r_{C_{AT}}(A,B,R_A),s_{C_{AT}}(A,B,R_A),r_{C_{AB}}(d_{AB}),s_{C_{AB}}(d_{AB}),\\
r_{C_{BA}}(d_{BA}),s_{C_{BA}}(d_{BA}),r_{C_{AB}}(d_{AB}'),s_{C_{AB}}(d_{AB}'),\\
r_{C_{TA}}(ENC_{k_{AT}}(d_{R_A},B,k_{AB},ENC_{k_{BT}}(A,k_{AB}))),\\
s_{C_{TA}}(ENC_{k_{AT}}(d_{R_A},B,k_{AB},ENC_{k_{BT}}(A,k_{AB})))|D\in\Delta\}$,

$I=\{c_{C_{AT}}(A,B,R_A),c_{C_{AB}}(d_{AB}),c_{C_{BA}}(d_{BA}),c_{C_{AB}}(d_{AB}'),\\
c_{C_{TA}}(ENC_{k_{AT}}(d_{R_A},B,k_{AB},ENC_{k_{BT}}(A,k_{AB}))),\\
\{k_{AB}=NULL\}, rsg_{R_A},\{k_{AB}\neq NULL\},\\
dec_{k_{AT}}(enc_{k_{AT}}(d_{R_A},B,k_{AB},ENC_{k_{BT}}(A,k_{AB}))),\\
\{d_{R_A}=R_A\},\{d_{R_A}\neq R_A\},\{d_{BA}\neq \bot\},\{d_{BA}= \bot\},\\
dec_{k_{AB}}(ENC_{k_{AB}}(R_B)),enc_{k_{AB}}(R_B-1,D),\{d_{AB}=\bot\},\{d_{AB}\neq\bot\},\\
dec_{k_{BT}}(ENC_{k_{BT}}(A,k_{AB})),rsg_{R_B},enc_{k_{AB}}(R_B),\\
\{d_{AB}'= \bot\},\{d_{AB}'\neq \bot\},dec_{k_{AB}}(ENC_{k_{AB}}(d_{R_B-1},D)),\\
\{d_{R_B-1}=R_B-1\},\{d_{R_B-1}\neq R_B-1\},rsg_{k_{AB}},enc_{k_{BT}}(A,k_{AB}),\\
enc_{k_{AT}}(R_A,B,k_{AB},ENC_{k_{BT}}(A,k_{AB}))|D\in\Delta\}$.

Then we get the following conclusion on the protocol.

\begin{theorem}
The Needham-Schroeder protocol in Figure \ref{NSP7} is secure.
\end{theorem}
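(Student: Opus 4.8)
The plan is to follow the same computation scheme used for the ABP protocol in Section \ref{app} and for the earlier security protocols of this chapter: expand the encapsulated parallel composition $\partial_H(A\between B\between T)$ into a guarded linear recursive specification, then abstract the internal cryptographic and communication steps with $\tau_I$ and read off the external behaviour. First I would apply $P1$ to split $A\between B\between T$ into the parallel part and the communication merges, then unfold the recursion variables with $RDP$ and drive the expansion with the merge axioms $P2$--$P4$ for $\parallel$ and $C1$--$C4$ for $\mid$, distributing over alternative composition with $P5$, $P6$, $C5$, $C6$ and over the conditional guards with $G4$--$G7$. Every pair consisting of a send and its matching receive on the internal channels $C_{AT},C_{TA},C_{AB},C_{BA}$ combines through the declared communication functions $\gamma$, while every mismatched send/receive combination is blocked by $\partial_H$ and becomes $\delta$, which is absorbed by $A6$ and $A7$. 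This collapses the product down to a single sequential chain of communication actions interleaved with the internal actions of Alice, Bob and Trent.

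Next I would record the outcome as $\partial_H(A\between B\between T)=\langle X_1|E\rangle$ for an explicitly written guarded linear recursive specification $E$, exactly as in the ABP example, whose recursion variables track the joint control state of the three principals. Applying $\tau_I$ and using $TI1$--$TI6$ together with $G28$--$G29$ renames all events in $I$ (the communication actions $c_{\ast}$, the key generation $rsg_{k_{AB}},rsg_{R_B}$, every encryption and decryption, and all guard tests) to $\tau$; the silent-step laws $B1$, $B2$, $G26$, $G27$ then erase these $\tau$'s from the recursion, and because the internal cluster is finite I would invoke $CFAR$ (Theorem \ref{CCFARG}) to ensure no internal $\tau$-loop survives. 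The expected conclusion is
$$\tau_I(\partial_H(A\between B\between T))=\sum_{D\in\Delta}(r_{C_{AI}}(D)\cdot s_{C_{BO}}(D))\cdot\tau_I(\partial_H(A\between B\between T)),$$
which by the soundness and completeness of $APTC_{G_{\tau}}$ with guarded linear recursion and $CFAR$ (Theorems \ref{SAPTC_GABSG} and \ref{CCFARG}) modulo $\approx_{rbs}$, $\approx_{rbp}$ and $\approx_{rbhp}$ is the desired external specification: one datum read at $C_{AI}$ is delivered once at $C_{BO}$.

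The main obstacle will be the bookkeeping forced by the large number of conditional branches. Unlike the private-channel protocol of Section \ref{confi}, the Needham--Schroeder description contains nested guards and several $\bot$-branches on both Alice's and Bob's side, so a naive expansion produces many summands. The real work is to show that, under $\partial_H$, every branch in which two honest principals disagree about a nonce or key (the negated-guard summands $\{d_{R_A}\neq R_A\}$, $\{d_{BA}=\bot\}$, $\{d_{R_B-1}\neq R_B-1\}$, and the $\bot$-cases) is unreachable or reduces to $\delta$, so that only the consistent branch $\{d_{R_A}=R_A\}$, $\{d_{BA}\neq\bot\}$, $\{d_{R_B-1}=R_B-1\}$ contributes; this is precisely the ``no attack'' assumption built into the model. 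I would argue that the values an honest sender places on each channel are exactly the values the honest receiver tests for, so each negated guard is contradicted and its summand collapses by $G9$ and $G25$. Once this case analysis is settled, the remaining algebra is routine and follows the ABP computation line for line.
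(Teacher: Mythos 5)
Your overall computational scheme --- expanding $\partial_H(A\between B\between T)$ into a guarded linear recursive specification, abstracting with $\tau_I$, and closing the argument with soundness/completeness and $CFAR$ --- is the same one the paper relies on (it defers the details to section \ref{app}). But your final equation is not the one the paper derives, and the discrepancy traces to a step that does not hold. The paper concludes
$$\tau_I(\partial_H(A\between B\between T))=\sum_{D\in\Delta}\bigl(r_{C_{AI}}(D)\cdot (s_{C_{BO}}(\bot)+s_{C_{BO}}(D))\bigr)\cdot\tau_I(\partial_H(A\between B\between T)),$$
retaining $s_{C_{BO}}(\bot)$ as observable external behaviour; its security argument is then that under a replay only the $\bot$ summand fires (the stale message is detected and rejected), while without attack only the $D$ summand fires.

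You instead claim that every negated-guard summand ($\{d_{R_A}\neq R_A\}$, $\{d_{BA}=\bot\}$, $\{d_{R_B-1}\neq R_B-1\}$, and the $\bot$-cases) is unreachable or reduces to $\delta$ via $G9$ and $G25$. That collapse is not justified: the tested values $d_{R_A}$, $d_{BA}$, $d_{R_B-1}$ are bound by reading actions and range over all data states, and it is exactly the states in which they differ from the honest values that model the replay adversary. Moreover $s_{C_{BO}}(\bot)$ and $s_{C_{BO}}(D)$ belong to neither $H$ nor $I$, so nothing in the algebra erases the $\bot$ branch for you. Discarding those branches amounts to assuming there is no attacker, so your equation $\sum_{D\in\Delta}(r_{C_{AI}}(D)\cdot s_{C_{BO}}(D))\cdot\tau_I(\cdots)$ is only the ``without replay attack'' special case (item (3) of the paper's discussion), and the theorem's actual security content --- that a replayed message produces $\bot$ rather than a second delivery of $D$ --- is lost. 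To repair the proof, keep both guard branches through the expansion and present the external behaviour as the alternative composition above, then argue security by case analysis on which branch is enabled.
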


\begin{proof}
Based on the above state transitions of the above modules, by use of the algebraic laws of $APTC_G$, we can prove that

$\tau_I(\partial_H(A\between B\between T))=\sum_{D\in\Delta}(r_{C_{AI}}(D)\cdot (s_{C_{BO}}(\bot)+s_{C_{BO}}(D)))\cdot
\tau_I(\partial_H(A\between B\between T))$.

For the details of proof, please refer to section \ref{app}, and we omit it.

That is, the Needham-Schroeder protocol in Figure \ref{NSP7} $\tau_I(\partial_H(A\between B\between T))$ can exhibit desired external behaviors:

\begin{enumerate}
  \item For information leakage, because $k_{AT}$ is privately shared only between Alice and Trent, $k_{BT}$ is privately shared only between Bob and Trent, $k_{AB}$ is privately shared
  only among Trent, Alice and Bob. For the modeling of confidentiality, it is similar to the protocol in section \ref{confi}, the Needham-Schroeder protocol is confidential;
  \item For replay attack, the using of random numbers $R_A$, $R_B$, makes that $\tau_I(\partial_H(A\between B\between T))=\sum_{D\in\Delta}(r_{C_{AI}}(D)\cdot s_{C_{BO}}(\bot))\cdot
\tau_I(\partial_H(A\between B\between T))$, it is desired;
  \item Without replay attack, the protocol would be $\tau_I(\partial_H(A\between B\between T))=\sum_{D\in\Delta}(r_{C_{AI}}(D)\cdot s_{C_{BO}}(D))\cdot
\tau_I(\partial_H(A\between B\between T))$, it is desired;
  \item For the man-in-the-middle attack, because $k_{AT}$ is privately shared only between Alice and Trent, $k_{BT}$ is privately shared only between Bob and Trent, $k_{AB}$ is privately shared
  only among Trent, Alice and Bob. For the modeling of the man-in-the-middle attack, it is similar to the protocol in section \ref{MIM}, the Needham-Schroeder protocol can be against the
  man-in-the-middle attack;
  \item For the unexpected and non-technical leaking of $k_{AT}$, $k_{BT}$, $k_{AB}$, or they being not strong enough, or Trent being dishonest, they are out of the scope of analyses of security protocols;
  \item For malicious tampering and transmission errors, they are out of the scope of analyses of security protocols.
\end{enumerate}
\end{proof}

\subsection{Otway-Rees Protocol}\label{orp}

The Otway-Rees protocol shown in Figure \ref{ORP7} uses symmetric keys for secure communication, that is, the key $k_{AB}$ between Alice and Bob is privately shared to Alice and Bob,
Alice, Bob have shared keys with Trent $k_{AT}$ and $k_{BT}$ already.

\begin{figure}
    \centering
    \includegraphics{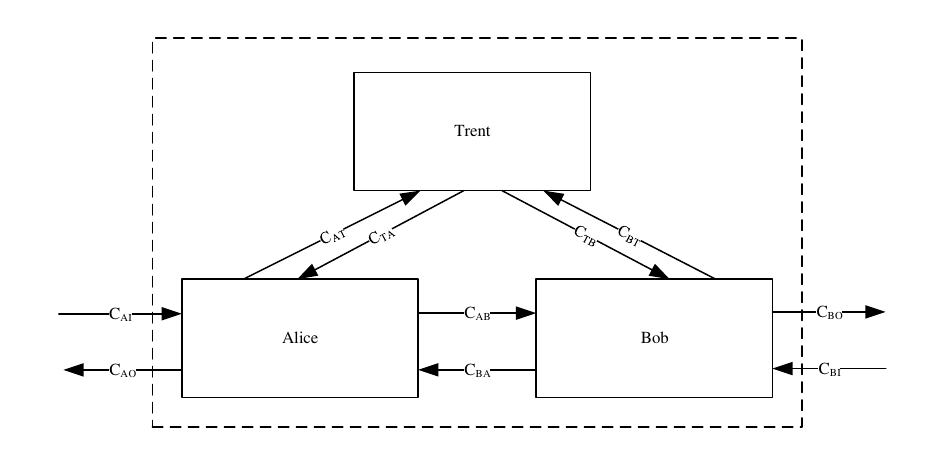}
    \caption{Otway-Rees protocol}
    \label{ORP7}
\end{figure}

The process of the protocol is as follows.

\begin{enumerate}
  \item Alice receives some messages $D$ from the outside through the channel $C_{AI}$ (the corresponding reading action is denoted $r_{C_{AI}}(D)$), if $k_{AB}$ is not established,
  she generates the random numbers $I$, $R_A$ through the actions $rsg_{I}$ and $rsg_{R_A}$, encrypts $R_A, I, A, B$ by $k_{AT}$ through an action $enc_{k_{AT}}(R_A,I,A,B)$,
  and sends $I,A,B,ENC_{k_{AT}}(R_A,I,A,B)$ to Bob through the channel $C_{AB}$ (the corresponding sending action is denoted\\
  $s_{C_{AB}}(I,A,B,ENC_{k_{AT}}(R_A,I,A,B))$);
  \item Bob receives $I,A,B,ENC_{k_{AT}}(R_A,I,A,B)$ from Alice through the channel $C_{AB}$ (the corresponding reading action is denoted $r_{C_{AB}}(I,A,B,ENC_{k_{AT}}(R_A,I,A,B))$),
  he generates a random number $R_B$ through
  an action $rsg_{R_B}$, encrypts $R_B,I,A,B$ by $k_{BT}$ through an action $enc_{k_{BT}}(R_B,I,A,B)$, and sends $I,A,B,ENC_{k_{AT}}(R_A,I,A,B),ENC_{k_{BT}}(R_B,I,A,B)$ to Trent
  through the channel $C_{BT}$
  (the corresponding sending action is denoted \\
  $s_{C_{BT}}(I,A,B,ENC_{k_{AT}}(R_A,I,A,B),ENC_{k_{BT}}(R_B,I,A,B))$);
  \item Trent receives $I,A,B,ENC_{k_{AT}}(R_A,I,A,B),ENC_{k_{BT}}(R_B,I,A,B)$ through the channel $C_{BT}$ (the corresponding reading action is denoted\\
  $r_{C_{BT}}(I,A,B,ENC_{k_{AT}}(R_A,I,A,B),ENC_{k_{BT}}(R_B,I,A,B))$), he decrypts the message \\
  $ENC_{k_{AT}}(R_A,I,A,B)$
  through an action $dec_{k_{AT}}(ENC_{k_{AT}}(R_A,I,A,B))$ and the message $ENC_{k_{BT}}(R_B,I,A,B)$ through an action $dec_{k_{BT}}(ENC_{k_{BT}}(R_B,I,A,B))$, generates a random session
  key $k_{AB}$ through an action $rsg_{k_{AB}}$, then he encrypts $R_A,k_{AB}$ by $k_{AT}$ through an action
  $enc_{k_{AT}}(R_A,k_{AB})$, encrypts $R_B,k_{AB}$ by $k_{BT}$ through an action $enc_{k_{BT}}(R_B,k_{AB})$, and sends them to Bob through the channel $C_{TB}$ (the corresponding
  sending action is denoted $s_{C_{TB}}(I,ENC_{k_{AT}}(R_A,k_{AB}),ENC_{k_{BT}}(R_B,k_{AB}))$);
  \item Bob receives the message from Trent through the channel $C_{TB}$ (the corresponding reading action is denoted $r_{C_{TB}}(d_I,ENC_{k_{AT}}(R_A,k_{AB}),ENC_{k_{BT}}(d_{R_B},k_{AB}))$),
  he decrypts \\
  $ENC_{k_{BT}}(d_{R_B},k_{AB})$ by $k_{BT}$ through an action $dec_{k_{BT}}(ENC_{k_{BT}}(d_{R_B},k_{AB}))$, if $d_{R_B}=R_B$ and $d_I=I$, he sends
  $I,ENC_{k_{AT}}(R_A,k_{AB})$
  to Alice through the channel $C_{BA}$ (the corresponding sending action is denoted $s_{C_{BA}}(I,ENC_{k_{AB}}(R_A,k_{AB}))$); else if $d_{R_B}\neq R_B$ or $D_I\neq I$, he sends $\bot$ to Alice
  through the channel $C_{BA}$ (the corresponding sending action is denoted $s_{C_{BA}}(\bot)$);
  \item Alice receives $d_{BA}$ from Bob (the corresponding reading action is denoted $r_{C_{BA}}(d_{BA})$), if $d_{BA}=\bot$, she sends $\bot$ to Bob through the channel $C_{AB}$
  (the corresponding sending action is denoted $s_{C_{AB}}(\bot)$); else if $d_{BA}\neq \bot$, she decrypts $ENC_{k_{AT}}(R_A,k_{AB})$ by $k_{AT}$ through an action
  $dec_{k_{AT}}(ENC_{k_{AT}}(R_A,k_{AB}))$, if $d_{R_A}=R_A$ and $d_I=I$, she generates a random number $R_D$ through an action $rsg_{R_D}$, encrypts $R_D,D$ by $k_{AB}$ through an
  action $enc_{k_{AB}}(R_D,D)$, and sends it to Bob through the channel $C_{AB}$ (the corresponding sending action is denoted $s_{C_{AB}}(ENC_{k_{AB}}(R_D,D))$), else if
  $d_{R_A}\neq R_A$ or $d_I\neq I$, she sends $\bot$ to Bob through the channel $C_{AB}$ (the corresponding sending action is denoted $s_{C_{AB}}(\bot)$);
  \item Bob receives $d_{AB}$ from Alice (the corresponding reading action is denoted $r_{C_{AB}}(d_{AB})$), if $d_{AB}=\bot$, he sends $\bot$ to the outside through the channel $C_{BO}$
  (the corresponding sending action is denoted $s_{C_{BO}}(\bot)$); else if $d_{AB}\neq \bot$, she decrypts $ENC_{k_{AB}}(R_D,D)$ by $k_{AB}$ through an action
  $dec_{k_{AB}}(ENC_{k_{AB}}(R_D,D))$, if $isFresh(R_D)=TRUE$, she sends $D$ to the outside through the channel $C_{BO}$ (the corresponding sending action is denoted $s_{C_{BO}}(D)$), else if
  $isFresh(d_{R_D})=FALSE$, he sends $\bot$ to the outside through the channel $C_{BO}$ (the corresponding sending action is denoted $s_{C_{BO}}(\bot)$).
\end{enumerate}

Where $D\in\Delta$, $\Delta$ is the set of data.

Alice's state transitions described by $APTC_G$ are as follows.

$A=\sum_{D\in\Delta}r_{C_{AI}}(D)\cdot A_2$

$A_2=\{k_{AB}=NULL\}\cdot rsg_I \cdot rsg_{R_A}\cdot A_3+ \{k_{AB}\neq NULL\}\cdot A_9$

$A_3=enc_{k_{AT}}(R_A,I,A,B)\cdot A_4$

$A_4=s_{C_{AB}}(I,A,B,ENC_{k_{AT}}(R_A,I,A,B))\cdot A_5$

$A_5=r_{C_{BA}}(d_{BA})\cdot A_6$

$A_6=\{d_{BA}\neq \bot\}\cdot A_7+\{d_{BA}=\bot\}\cdot s_{C_{AB}}(\bot)\cdot A$

$A_7=dec_{k_{AT}}(ENC_{k_{AT}}(R_A,k_{AB}))\cdot A_8$

$A_8=\{d_{R_A}=R_A\cdot d_I=I\}\cdot A_9+\{d_{R_A}\neq R_A+ d_I\neq I\}\cdot A_{12}$

$A_{9}=rsg_{R_D}\cdot A_{10}$

$A_{10}=enc_{k_{AB}}(R_D,D)\cdot A_{11}$

$A_{11}=s_{C_{AB}}(ENC_{k_{AB}}(R_D,D))\cdot A$

$A_{12}=s_{C_{AB}}(\bot)\cdot A$

Bob's state transitions described by $APTC_G$ are as follows.

$B=\{k_{AB}=NULL\}\cdot B_1+ \{k_{AB}\neq NULL\}\cdot B_8$

$B_1=r_{C_{AB}}(I,A,B,ENC_{k_{AT}}(R_A,I,A,B))\cdot B_2$

$B_2=rsg_{R_B}\cdot B_3$

$B_3=enc_{k_{BT}}(R_B,I,A,B)\cdot B_4$

$B_4=s_{C_{BT}}(I,A,B,ENC_{k_{AT}}(R_A,I,A,B),ENC_{k_{BT}}(R_B,I,A,B))\cdot B_5$

$B_5=r_{C_{TB}}(d_I,ENC_{k_{AT}}(R_A,k_{AB}),ENC_{k_{BT}}(d_{R_B},k_{AB}))\cdot B_6$

$B_6=dec_{k_{BT}}(ENC_{k_{BT}}(d_{R_B},k_{AB}))\cdot B_7$

$B_7=\{d_{R_B}=R_B\cdot d_I=I\}\cdot s_{C_{BA}}(I,ENC_{k_{AB}}(R_A,k_{AB}))\cdot B_8+\{d_{R_B}\neq R_B+ d_I\neq I\}\cdot s_{C_{AB}}(\bot)\cdot B_8$

$B_8=r_{C_{AB}}(d_{AB})\cdot B_9$

$B_9=\{d_{AB}=\bot\}\cdot s_{C_{BO}}(\bot)\cdot B+\{d_{AB}\neq \bot\}\cdot B_{10}$

$B_{10}=dec_{k_{AB}}(ENC_{k_{AB}}(R_D,D))\cdot B_{11}$

$B_{11}=\{isFresh(R_D)=TRUE\}\cdot B_{12}+\{isFresh(R_D)=FLASE\}\cdot s_{C_{BO}}(\bot)\cdot B$

$B_{12}=s_{C_{BO}}(D)\cdot B$

Trent's state transitions described by $APTC_G$ are as follows.

$T=r_{C_{BT}}(I,A,B,ENC_{k_{AT}}(R_A,I,A,B),ENC_{k_{BT}}(R_B,I,A,B))\cdot T_2$

$T_2=dec_{k_{AT}}(ENC_{k_{AT}}(R_A,I,A,B))\cdot T_3$

$T_3=dec_{k_{BT}}(ENC_{k_{BT}}(R_B,I,A,B))\cdot T_4$

$T_4=rsg_{k_{AB}}\cdot T_5$

$T_5=enc_{k_{AT}}(R_A,k_{AB})\cdot T_6$

$T_6=enc_{k_{BT}}(R_B,k_{AB})\cdot T_7$

$T_7=s_{C_{TB}}(I,ENC_{k_{AT}}(R_A,k_{AB}),ENC_{k_{BT}}(R_B,k_{AB}))\cdot T$

The sending action and the reading action of the same type data through the same channel can communicate with each other, otherwise, will cause a deadlock $\delta$. We define the following
communication functions.

$\gamma(r_{C_{AB}}(I,A,B,ENC_{k_{AT}}(R_A,I,A,B)),s_{C_{AB}}(I,A,B,ENC_{k_{AT}}(R_A,I,A,B)))\\
\triangleq c_{C_{AB}}(I,A,B,ENC_{k_{AT}}(R_A,I,A,B))$

$\gamma(r_{C_{BA}}(d_{BA}),s_{C_{BA}}(d_{BA}))\triangleq c_{C_{BA}}(d_{BA})$

$\gamma(r_{C_{BT}}(I,A,B,ENC_{k_{AT}}(R_A,I,A,B),ENC_{k_{BT}}(R_B,I,A,B)),\\
s_{C_{BT}}(I,A,B,ENC_{k_{AT}}(R_A,I,A,B),ENC_{k_{BT}}(R_B,I,A,B)))\\
\triangleq c_{C_{BT}}(I,A,B,ENC_{k_{AT}}(R_A,I,A,B),ENC_{k_{BT}}(R_B,I,A,B))$

$\gamma(r_{C_{TB}}(d_I,ENC_{k_{AT}}(R_A,k_{AB}),ENC_{k_{BT}}(d_{R_B},k_{AB})),\\
s_{C_{TB}}(d_I,ENC_{k_{AT}}(R_A,k_{AB}),ENC_{k_{BT}}(d_{R_B},k_{AB})))\\
\triangleq c_{C_{TB}}(d_I,ENC_{k_{AT}}(R_A,k_{AB}),ENC_{k_{BT}}(d_{R_B},k_{AB}))$

$\gamma(r_{C_{AB}}(d_{AB}),s_{C_{AB}}(d_{AB}))\triangleq c_{C_{AB}}(d_{AB})$

Let all modules be in parallel, then the protocol $A\quad B\quad T$ can be presented by the following process term.

$$\tau_I(\partial_H(\Theta(A\between B\between T)))=\tau_I(\partial_H(A\between B\between T))$$

where $H=\{r_{C_{AB}}(I,A,B,ENC_{k_{AT}}(R_A,I,A,B)),s_{C_{AB}}(I,A,B,ENC_{k_{AT}}(R_A,I,A,B)),\\
r_{C_{BA}}(d_{BA}),s_{C_{BA}}(d_{BA}),r_{C_{AB}}(d_{AB}),s_{C_{AB}}(d_{AB}),\\
r_{C_{BT}}(I,A,B,ENC_{k_{AT}}(R_A,I,A,B),ENC_{k_{BT}}(R_B,I,A,B)),\\
s_{C_{BT}}(I,A,B,ENC_{k_{AT}}(R_A,I,A,B),ENC_{k_{BT}}(R_B,I,A,B)),\\
r_{C_{TB}}(d_I,ENC_{k_{AT}}(R_A,k_{AB}),ENC_{k_{BT}}(d_{R_B},k_{AB})),\\
s_{C_{TB}}(d_I,ENC_{k_{AT}}(R_A,k_{AB}),ENC_{k_{BT}}(d_{R_B},k_{AB}))|D\in\Delta\}$,

$I=\{c_{C_{AB}}(I,A,B,ENC_{k_{AT}}(R_A,I,A,B)),c_{C_{BA}}(d_{BA}),c_{C_{AB}}(d_{AB}),\\
c_{C_{BT}}(I,A,B,ENC_{k_{AT}}(R_A,I,A,B),ENC_{k_{BT}}(R_B,I,A,B)),\\
c_{C_{TB}}(d_I,ENC_{k_{AT}}(R_A,k_{AB}),ENC_{k_{BT}}(d_{R_B},k_{AB})),\\
\{k_{AB}=NULL\}, rsg_I, rsg_{R_A},\{k_{AB}\neq NULL\},enc_{k_{AT}}(R_A,I,A,B),\\
\{d_{BA}\neq \bot\},\{d_{BA}= \bot\},dec_{k_{AT}}(ENC_{k_{AT}}(R_A,k_{AB})),\\
\{d_{R_A}=R_A\cdot d_I=I\},\{d_{R_A}\neq R_A+ d_I\neq I\},rsg_{R_D},\\
enc_{k_{AB}}(R_D,D),rsg_{R_B},enc_{k_{BT}}(R_B,I,A,B),\\
dec_{k_{BT}}(ENC_{k_{BT}}(d_{R_B},k_{AB})),\{d_{R_B}=R_B\cdot d_I=I\},\\
\{d_{R_B}\neq R_B+ d_I\neq I\},\{d_{AB}=\bot\},\{d_{AB}\neq \bot\},\\
dec_{k_{AB}}(ENC_{k_{AB}}(R_D,D)),\{isFresh(R_D)=TRUE\},\{isFresh(R_D)=FALSE\},\\
dec_{k_{AT}}(ENC_{k_{AT}}(R_A,I,A,B)),dec_{k_{BT}}(ENC_{k_{BT}}(R_B,I,A,B)),\\
rsg_{k_{AB}},enc_{k_{AT}}(R_A,k_{AB}),enc_{k_{BT}}(R_B,k_{AB})|D\in\Delta\}$.

Then we get the following conclusion on the protocol.

\begin{theorem}
The Otway-Rees protocol in Figure \ref{ORP7} is secure.
\end{theorem}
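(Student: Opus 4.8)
The plan is to follow exactly the expansion-and-abstraction strategy used for the ABP protocol in section \ref{app} and for the Yahalom and Needham-Schroeder protocols earlier in this chapter. First I would expand the parallel composition $A\between B\between T$ by repeatedly applying axiom P1 ($x\between y = x\parallel y + x\mid y$) together with RDP to unfold the recursive definitions of Alice, Bob, and Trent. Since the only communicating pairs are the matched send/receive actions on the channels $C_{AB}$, $C_{BT}$, $C_{TB}$, $C_{BA}$ (all other communication merges yield $\delta$ through the $\gamma$ definition and the $C$-axioms), the communication merges collapse to the actions $c_{C_{AB}}$, $c_{C_{BT}}$, $c_{C_{TB}}$, $c_{C_{BA}}$, and every mismatched send or receive is removed by $\partial_H$ via D2, A6 and A7. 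This forces the protocol to advance in the intended lock-step order of the Otway-Rees message flow.

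Next I would gather the resulting closed term as the first component $\langle X_1|E\rangle$ of a guarded linear recursive specification $E$, introducing one recursion variable per reachable joint control state of the three principals and threading the conditional guards $\{d_{BA}\neq\bot\}$, $\{d_{R_A}=R_A\cdot d_I=I\}$, $\{d_{R_B}=R_B\cdot d_I=I\}$ and $\{isFresh(R_D)=TRUE\}$ through the branches. By the Elimination theorem for $APTC_G$ with linear recursion (Theorem \ref{ETRecursionG}) such a specification exists, and $\partial_H(A\between B\between T)=\langle X_1|E\rangle$ follows by RSP once the right-hand sides are matched against the expanded term. I would then apply $\tau_I$ to rename all internal communications, encryptions, decryptions, random-sequence generations and guard tests in $I$ into $\tau$, and absorb the silent steps using B1 and B2 (together with G26 and G27 for the guards). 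This leaves only the external read $r_{C_{AI}}(D)$ and the external sends $s_{C_{BO}}(D)$ or $s_{C_{BO}}(\bot)$, yielding

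\begin{eqnarray}
\tau_I(\partial_H(A\between B\between T))=\sum_{D\in\Delta}(r_{C_{AI}}(D)\cdot (s_{C_{BO}}(\bot)+s_{C_{BO}}(D)))\cdot\tau_I(\partial_H(A\between B\between T))\nonumber
\end{eqnarray}

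which is the desired external behavior.

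Finally, from this external characterization I would argue the security goals one by one in the itemized style of the preceding proofs: confidentiality holds because $k_{AT}$ and $k_{BT}$ are privately shared and $k_{AB}$ never leaves its encrypted envelopes, so an interceptor cannot produce $D$; resistance to replay follows from the nonces $I$, $R_A$, $R_B$ and the freshness test $isFresh(R_D)$, which drive the $s_{C_{BO}}(\bot)$ branch on a replayed run; and resistance to the man-in-the-middle attack is argued as in section \ref{MIM}. The main obstacle I expect is not any single algebraic law but the bookkeeping: the Otway-Rees specification has many guarded branches and four channels carrying compound messages, so the reachable-state graph underlying $E$ is large, and the RSP step — verifying that the fully expanded parallel term really coincides with the proposed guarded linear specification with no spurious deadlock or surviving interleaving after $\partial_H$ — requires enumerating every joint state carefully. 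This is routine but error-prone, which is precisely why the detailed calculation is deferred to section \ref{app}.
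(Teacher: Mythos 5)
Your proposal follows essentially the same route as the paper's proof: expand $A\between B\between T$ with the $APTC_G$ laws, derive the external-behavior equation $\tau_I(\partial_H(A\between B\between T))=\sum_{D\in\Delta}(r_{C_{AI}}(D)\cdot (s_{C_{BO}}(\bot)+s_{C_{BO}}(D)))\cdot\tau_I(\partial_H(A\between B\between T))$ while deferring the detailed calculation to section \ref{app}, and then argue confidentiality, replay resistance, and man-in-the-middle resistance item by item exactly as the paper does. Your account is in fact more explicit than the paper's about the intermediate steps (RDP/RSP, the guarded linear specification, and the absorption of silent steps), but it is the same argument.
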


\begin{proof}
Based on the above state transitions of the above modules, by use of the algebraic laws of $APTC_G$, we can prove that

$\tau_I(\partial_H(A\between B\between T))=\sum_{D\in\Delta}(r_{C_{AI}}(D)\cdot (s_{C_{BO}}(\bot)+s_{C_{BO}}(D)))\cdot
\tau_I(\partial_H(A\between B\between T))$.

For the details of proof, please refer to section \ref{app}, and we omit it.

That is, the Otway-Rees protocol in Figure \ref{ORP7} $\tau_I(\partial_H(A\between B\between T))$ can exhibit desired external behaviors:

\begin{enumerate}
  \item For information leakage, because $k_{AT}$ is privately shared only between Alice and Trent, $k_{BT}$ is privately shared only between Bob and Trent, $k_{AB}$ is privately shared
  only among Trent, Alice and Bob. For the modeling of confidentiality, it is similar to the protocol in section \ref{confi}, the Otway-Rees protocol is confidential;
  \item For the man-in-the-middle attack, because $k_{AT}$ is privately shared only between Alice and Trent, $k_{BT}$ is privately shared only between Bob and Trent, $k_{AB}$ is privately shared
  only among Trent, Alice and Bob, and the use of the random numbers $I$, $R_A$ and $R_B$, the protocol would be $\tau_I(\partial_H(A\between B\between T))=\sum_{D\in\Delta}(r_{C_{AI}}(D)\cdot s_{C_{BO}}(\bot))\cdot
  \tau_I(\partial_H(A\between B\between T))$, it is desired, the Otway-Rees protocol can be against the
  man-in-the-middle attack;
  \item For replay attack, the using of the random numbers $I$, $R_A$ and $R_B$, makes that $\tau_I(\partial_H(A\between B\between T))=\sum_{D\in\Delta}(r_{C_{AI}}(D)\cdot s_{C_{BO}}(\bot))\cdot
  \tau_I(\partial_H(A\between B\between T))$, it is desired;
  \item Without man-in-the-middle and replay attack, the protocol would be $\tau_I(\partial_H(A\between B\between T))=\sum_{D\in\Delta}(r_{C_{AI}}(D)\cdot s_{C_{BO}}(D))\cdot
  \tau_I(\partial_H(A\between B\between T))$, it is desired;
  \item For the unexpected and non-technical leaking of $k_{AT}$, $k_{BT}$, $k_{AB}$, or they being not strong enough, or Trent being dishonest, they are out of the scope of analyses of security protocols;
  \item For malicious tampering and transmission errors, they are out of the scope of analyses of security protocols.
\end{enumerate}
\end{proof}

\subsection{Kerberos Protocol}\label{kp}

The Kerberos protocol shown in Figure \ref{KP7} uses symmetric keys for secure communication, that is, the key $k_{AB}$ between Alice and Bob is privately shared to Alice and Bob,
Alice, Bob have shared keys with Trent $k_{AT}$ and $k_{BT}$ already.

\begin{figure}
    \centering
    \includegraphics{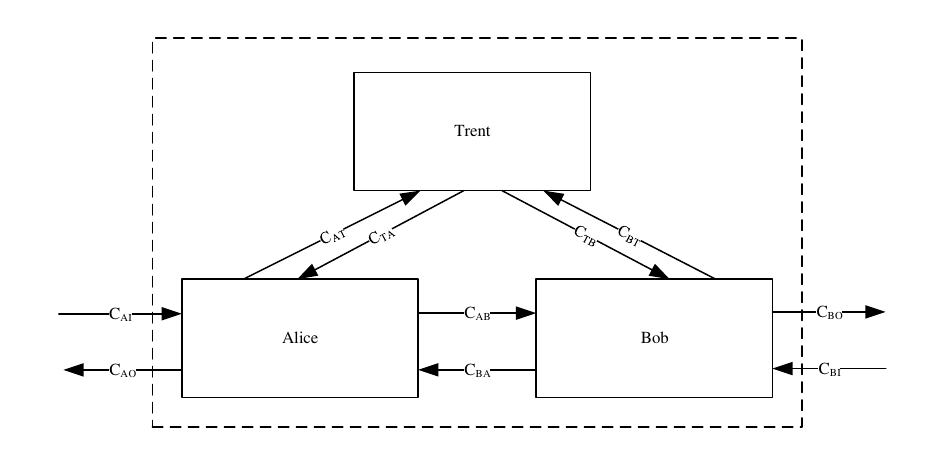}
    \caption{Kerberos protocol}
    \label{KP7}
\end{figure}

The process of the protocol is as follows.

\begin{enumerate}
  \item Alice receives some messages $D$ from the outside through the channel $C_{AI}$ (the corresponding reading action is denoted $r_{C_{AI}}(D)$), if $k_{AB}$ is not established,
  she sends $A, B$ to trent through the channel $C_{AT}$ (the corresponding sending action is denoted
  $s_{C_{AT}}(A,B)$);
  \item Trent receives $A,B$ from Alice through the channel $C_{AT}$ (the corresponding reading action is denoted $r_{C_{AT}}(A,B)$), he generates a random session key $k_{AB}$ through
  an action $rsg_{k_{AB}}$, get time stamp $T$ and lifetime $L$, then he encrypts $T,L,k_{AB},B$ by $k_{AT}$ through an action
  $enc_{k_{AT}}(T,L,k_{AB},B)$, encrypts $T,L,k_{AB},A$ by $k_{BT}$ through an action $enc_{k_{BT}}(T,L,k_{AB},A)$, and sends them to Alice through the channel $C_{TA}$ (the corresponding sending
  action is denoted
  $s_{C_{TA}}(ENC_{k_{AT}}(T,L,k_{AB},B),ENC_{k_{BT}}(K,L,k_{AB},A))$);
  \item Alice receives the message from Trent through the channel $C_{TA}$ (the corresponding reading action is denoted $r_{C_{TA}}(ENC_{k_{AT}}(T,L,k_{AB},B),ENC_{k_{BT}}(K,L,k_{AB},A))$),
  she decrypts
  $ENC_{k_{AT}}(T,L,k_{AB},B)$ by $k_{AT}$ through an action
  $dec_{k_{AT}}(ENC_{k_{AT}}(T,L,k_{AB},B))$, encrypts $A,T$ by $k_{AB}$ through an action $enc_{k_{AB}}(A,T)$, she sends \\
  $ENC_{k_{AB}}(A,T),ENC_{k_{BT}}(T,L,k_{AB},A)$
  to Bob through the channel $C_{AB}$ (the corresponding sending action is denoted $s_{C_{AB}}(ENC_{k_{AB}}(A,T),ENC_{k_{BT}}(T,L,k_{AB},A))$);
  \item Bob receives $ENC_{k_{AB}}(A,T),ENC_{k_{BT}}(T,L,k_{AB},A)$ from Alice (the corresponding reading action is denoted $r_{C_{AB}}(ENC_{k_{AB}}(A,T),ENC_{k_{BT}}(T,L,k_{AB},A))$),
  he decrypts \\
  $ENC_{k_{BT}}(T,L,k_{AB},A)$ by $k_{BT}$ through an action
  $dec_{k_{BT}}(ENC_{k_{BT}}(T,L,k_{AB},A))$ to ensure the message
  is from Alice and get $k_{AB}$, and decrypts $ENC_{k_{AB}}(A,T)$ by $k_{AB}$ through an action $dec_{k_{AB}}(ENC_{k_{AB}}(A,T))$ to get $A$ and $T$, then he
  encrypts $T+1$ by $k_{AB}$ through an action $enc_{k_{AB}}(T+1)$, and sends
  $ENC_{k_{AB}}(T+1)$ to Alice through the channel $C_{BA}$ (the corresponding sending action is denoted $s_{C_{BA}}(ENC_{k_{AB}}(T+1))$);
  \item Alice receives $ENC_{k_{AB}}(d_{T+1})$ from Bob through the channel $C_{BA}$ (the corresponding reading action is denoted $r_{C_{BA}}(ENC_{k_{AB}}(d_{T+1}))$),
  he decrypts $ENC_{k_{AB}}(d_{T+1})$ by $k_{AB}$ through an action $dec_{k_{AB}}(ENC_{k_{AB}}(d_{T+1}))$, if $d_{T+1}=T+1$, she encrypts $T+2,D$ through an action $enc_{k_{AB}}(T+2,D)$,
  and sends $ENC_{k_{AB}}(T+2,D)$ to Bob through the channel $C_{AB}$ (the corresponding sending action is denoted
  $s_{C_{AB}}(ENC_{k_{AB}}(T+2,D))$); else if $d_{T+1}\neq T+1$, she sends $\bot$ to Bob through the channel $C_{AB}$ (the corresponding sending action is denoted $s_{C_{AB}}(\bot)$);
  \item Bob receives $d_{AB}$ from Alice through the channel $C_{AB}$ (the corresponding reading action is denoted $r_{C_{AB}}(d_{AB})$), if $d_{AB}\neq \bot$, he
  decrypts $ENC_{k_{AB}}(d_{T+2},D)$ by $k_{AB}$ through an action $dec_{k_{AB}}(ENC_{k_{AB}}(d_{T+2},D))$, if $d_{T+2}=T+2$, he sends $D$ to the outside through the channel $C_{BO}$
  (the corresponding sending action is denoted $s_{C_{BO}}(D)$), else if
  $d_{T+2}\neq T+2$, he sends $\bot$ to the outside through the channel $C_{BO}$ (the corresponding sending action is denoted $s_{C_{BO}}(\bot)$);else if
  $d_{AB}=\bot$, he sends $\bot$ to the outside through the channel $C_{BO}$ (the corresponding sending action is denoted $s_{C_{BO}}(\bot)$).
\end{enumerate}

Where $D\in\Delta$, $\Delta$ is the set of data.

Alice's state transitions described by $APTC_G$ are as follows.

$A=\sum_{D\in\Delta}r_{C_{AI}}(D)\cdot A_2$

$A_2=\{k_{AB}=NULL\}\cdot A_3+ \{k_{AB}\neq NULL\}\cdot A_{12}$

$A_3=s_{C_{AT}}(A,B)\cdot A_4$

$A_4=r_{C_{TA}}(ENC_{k_{AT}}(T,L,k_{AB},B),ENC_{k_{BT}}(K,L,k_{AB},A))\cdot A_5$

$A_5=dec_{k_{AT}}(ENC_{k_{AT}}(T,L,k_{AB},B))\cdot A_6$

$A_6=enc_{k_{AB}}(A,T)\cdot A_{7}$

$A_7=s_{C_{AB}}(ENC_{k_{AB}}(A,T),ENC_{k_{BT}}(T,L,k_{AB},A))\cdot A_8$

$A_8=r_{C_{BA}}(ENC_{k_{AB}}(d_{T+1}))\cdot A_9$

$A_9=dec_{k_{AB}}(ENC_{k_{AB}}(d_{T+1}))\cdot A_{10}$

$A_{10}=dec_{k_{AB}}(ENC_{k_{AB}}(R_B))\cdot A_{11}$

$A_{11}=\{d_{T+1}=T+1\}\cdot A_{12}+\{d_{T+1}\neq T+1\}\cdot A_{14}$

$A_{12}=enc_{k_{AB}}(T+2,D)\cdot A_{13}$

$A_{13}=s_{C_{AB}}(ENC_{k_{AB}}(R_B-1,D))\cdot A$

$A_{14}=s_{C_{AB}}(\bot)\cdot A$

Bob's state transitions described by $APTC_G$ are as follows.

$B=\{k_{AB}=NULL\}\cdot B_1+ \{k_{AB}\neq NULL\}\cdot B_6$

$B_1=r_{C_{AB}}(ENC_{k_{AB}}(A,T),ENC_{k_{BT}}(T,L,k_{AB},A))\cdot B_2$

$B_2=dec_{k_{BT}}(ENC_{k_{BT}}(T,L,k_{AB},A))\cdot B_3$

$B_3=dec_{k_{AB}}(ENC_{k_{AB}}(A,T))\cdot B_4$

$B_4=enc_{k_{AB}}(T+1)\cdot B_5$

$B_5=s_{C_{BA}}(ENC_{k_{AB}}(T+1))\cdot B_6$

$B_6=r_{C_{AB}}(d_{AB})\cdot B_7$

$B_7=\{d_{AB}\neq \bot\}\cdot B_8+\{d_{AB}= \bot\}\cdot s_{C_{BO}}(\bot)\cdot B$

$B_8=dec_{k_{AB}}(ENC_{k_{AB}}(d_{T+2},D))\cdot B_9$

$B_9=\{d_{T+2}=T+2\}\cdot B_{10}+\{d_{T+2}\neq T+2\}\cdot s_{C_{BO}}(\bot)\cdot B$

$B_{10}=s_{C_{BO}}(D)\cdot B$

Trent's state transitions described by $APTC_G$ are as follows.

$T=r_{C_{AT}}(A,B)\cdot T_2$

$T_2=rsg_{k_{AB}}\cdot T_3$

$T_3=enc_{k_{AT}}(T,L,k_{AB},B)\cdot T_4$

$T_4=enc_{k_{BT}}(T,L,k_{AB},A)\cdot T_5$

$T_5=s_{C_{TA}}(ENC_{k_{AT}}(T,L,k_{AB},B),ENC_{k_{BT}}(K,L,k_{AB},A))\cdot T$

The sending action and the reading action of the same type data through the same channel can communicate with each other, otherwise, will cause a deadlock $\delta$. We define the following
communication functions.

$\gamma(r_{C_{AT}}(A,B),s_{C_{AT}}(A,B))\triangleq c_{C_{AT}}(A,B)$

$\gamma(r_{C_{TA}}(ENC_{k_{AT}}(T,L,k_{AB},B),ENC_{k_{BT}}(K,L,k_{AB},A)),\\
s_{C_{TA}}(ENC_{k_{AT}}(T,L,k_{AB},B),ENC_{k_{BT}}(K,L,k_{AB},A)))\\
\triangleq c_{C_{TA}}(ENC_{k_{AT}}(T,L,k_{AB},B),ENC_{k_{BT}}(K,L,k_{AB},A))$

$\gamma(r_{C_{AB}}(ENC_{k_{AB}}(A,T),ENC_{k_{BT}}(T,L,k_{AB},A)),s_{C_{AB}}(ENC_{k_{AB}}(A,T),ENC_{k_{BT}}(T,L,k_{AB},A)))\\
\triangleq c_{C_{AB}}(ENC_{k_{AB}}(A,T),ENC_{k_{BT}}(T,L,k_{AB},A))$

$\gamma(r_{C_{BA}}(ENC_{k_{AB}}(d_{T+1})),s_{C_{BA}}(ENC_{k_{AB}}(d_{T+1})))\triangleq c_{C_{BA}}(ENC_{k_{AB}}(d_{T+1}))$

$\gamma(r_{C_{AB}}(d_{AB}),s_{C_{AB}}(d_{AB}))\triangleq c_{C_{AB}}(d_{AB})$

Let all modules be in parallel, then the protocol $A\quad B\quad T$ can be presented by the following process term.

$$\tau_I(\partial_H(\Theta(A\between B\between T)))=\tau_I(\partial_H(A\between B\between T))$$

where $H=\{r_{C_{AT}}(A,B),s_{C_{AT}}(A,B),r_{C_{AB}}(d_{AB}),s_{C_{AB}}(d_{AB}),\\
r_{C_{TA}}(ENC_{k_{AT}}(T,L,k_{AB},B),ENC_{k_{BT}}(K,L,k_{AB},A)),\\
s_{C_{TA}}(ENC_{k_{AT}}(T,L,k_{AB},B),ENC_{k_{BT}}(K,L,k_{AB},A)),\\
r_{C_{AB}}(ENC_{k_{AB}}(A,T),ENC_{k_{BT}}(T,L,k_{AB},A)),\\
s_{C_{AB}}(ENC_{k_{AB}}(A,T),ENC_{k_{BT}}(T,L,k_{AB},A)),\\
r_{C_{BA}}(ENC_{k_{AB}}(d_{T+1})),s_{C_{BA}}(ENC_{k_{AB}}(d_{T+1}))|D\in\Delta\}$,

$I=\{c_{C_{AT}}(A,B),c_{C_{BA}}(ENC_{k_{AB}}(d_{T+1})),c_{C_{AB}}(d_{AB}),\\
c_{C_{TA}}(ENC_{k_{AT}}(T,L,k_{AB},B),ENC_{k_{BT}}(K,L,k_{AB},A)),\\
c_{C_{AB}}(ENC_{k_{AB}}(A,T),ENC_{k_{BT}}(T,L,k_{AB},A)),\\
\{k_{AB}=NULL\},\{k_{AB}\neq NULL\},dec_{k_{AT}}(ENC_{k_{AT}}(T,L,k_{AB},B)),\\
enc_{k_{AB}}(A,T),dec_{k_{AB}}(ENC_{k_{AB}}(d_{T+1})),dec_{k_{AB}}(ENC_{k_{AB}}(R_B)),\\
\{d_{T+1}=T+1\},\{d_{T+1}\neq T+1\},enc_{k_{AB}}(T+2,D),\\
dec_{k_{BT}}(ENC_{k_{BT}}(T,L,k_{AB},A)),dec_{k_{AB}}(ENC_{k_{AB}}(A,T)),\\
enc_{k_{AB}}(T+1),\{d_{AB}\neq \bot\},\{d_{AB}= \bot\},dec_{k_{AB}}(ENC_{k_{AB}}(d_{T+2},D)),\\
\{d_{T+2}=T+2\},\{d_{T+2}\neq T+2\},rsg_{k_{AB}},enc_{k_{AT}}(T,L,k_{AB},B),\\
enc_{k_{BT}}(T,L,k_{AB},A)|D\in\Delta\}$.

Then we get the following conclusion on the protocol.

\begin{theorem}
The Kerberos protocol in Figure \ref{KP7} is secure.
\end{theorem}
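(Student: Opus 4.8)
The plan is to follow verbatim the template already used for the Needham-Schroeder and Otway-Rees protocols in section \ref{nsp} and section \ref{orp}, since the Kerberos system is again built from exactly three honest principals $A$, $B$, $T$ with no explicit intruder process. First I would treat the recursive definitions of Alice, Bob and Trent as the defining equations of a recursive specification, and compute the expansion of $\partial_H(A\between B\between T)$ by repeatedly applying $P1$ to rewrite $\between$ as $\parallel$ plus $\mid$, together with the parallelism axioms $P2$--$P10$, the communication axioms $C1$--$C10$ and $RDP$. Under $\partial_H$, every send or receive attempted in isolation is blocked to $\delta$ by $D2$, so the only surviving summands are the five matched communications $c_{C_{AT}}(A,B)$, $c_{C_{TA}}(ENC_{k_{AT}}(T,L,k_{AB},B),ENC_{k_{BT}}(K,L,k_{AB},A))$, $c_{C_{AB}}(ENC_{k_{AB}}(A,T),ENC_{k_{BT}}(T,L,k_{AB},A))$, $c_{C_{BA}}(ENC_{k_{AB}}(d_{T+1}))$ and $c_{C_{AB}}(d_{AB})$; all mismatched products collapse through $A6$ and $A7$.

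Next I would read off the forced chronological order of these communications from the data dependencies (Trent cannot emit until it has received $A,B$; Bob cannot respond until it has Alice's ticket; and so on), and assemble a guarded linear recursive specification $E$ whose variables $X_1,X_2,\dots$ track the joint control state of the three principals, exactly mirroring the $X_i/Y_i$ construction in the ABP proof of section \ref{app}. The conditional branches on the timestamp guards ($\{d_{T+1}=T+1\}$ versus $\{d_{T+1}\neq T+1\}$, and $\{d_{T+2}=T+2\}$ versus its negation) become alternative composition summands inside the corresponding right-hand sides, so the specification remains linear. Invoking the Elimination theorem of $APTC_G$ with linear recursion (Theorem \ref{ETRecursionG}) then identifies $\partial_H(A\between B\between T)=\langle X_1|E\rangle$.

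Then I would apply the abstraction operator $\tau_I$ and use $TI1$--$TI6$ together with $G28$, $G29$: every cryptographic action ($rsg_{k_{AB}}$, the various $enc$ and $dec$ actions), every guard test, and all five internal communications lie in $I$, hence rename to $\tau$, while the external reads and writes $r_{C_{AI}}(D)$, $s_{C_{BO}}(D)$, $s_{C_{BO}}(\bot)$ survive. Collapsing the resulting $\tau$-prefixes by $B1$ and $B2$ and closing the loop by $RSP$ (calling on $CFAR$, Theorem \ref{CCFARG}, only if a stale-timestamp retry produces a genuine $\tau$-cycle) should yield the target equation
$$\tau_I(\partial_H(A\between B\between T))=\sum_{D\in\Delta}(r_{C_{AI}}(D)\cdot (s_{C_{BO}}(\bot)+s_{C_{BO}}(D)))\cdot\tau_I(\partial_H(A\between B\between T)),$$
after which the security discussion proceeds item by item as in the Otway-Rees proof: confidentiality follows because $k_{AT}$, $k_{BT}$, $k_{AB}$ are privately held (reducing to section \ref{confi}); resistance to replay follows from the timestamps $T$, $L$ and the challenge $T+1,T+2$ forcing the $s_{C_{BO}}(\bot)$ branch on a stale message; and man-in-the-middle resistance reduces to section \ref{MIM}.

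The main obstacle will be the bookkeeping of the two nested timestamp guards in Alice's and Bob's state machines: I must verify that the $+$-summed alternatives produced by $\{d_{T+1}=T+1\}$ and $\{d_{T+2}=T+2\}$ and their negations combine, after $\tau_I$ and $B2$, into exactly the single summand $s_{C_{BO}}(\bot)+s_{C_{BO}}(D)$ rather than into some spurious deadlock or duplicated output. Establishing that the expansion is truly \emph{guarded} and \emph{linear} (so that Theorem \ref{SAPTC_GTAUG} and Theorem \ref{CAPTC_GTAUG} apply) is the delicate point, because an incorrectly resolved guard could either introduce an unguarded $\tau$-loop or leave a residual encapsulated term that does not match the claimed normal form.
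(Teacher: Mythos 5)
Your proposal follows essentially the same route as the paper: the paper's proof simply asserts the closed recursive equation $\tau_I(\partial_H(A\between B\between T))=\sum_{D\in\Delta}(r_{C_{AI}}(D)\cdot (s_{C_{BO}}(\bot)+s_{C_{BO}}(D)))\cdot\tau_I(\partial_H(A\between B\between T))$ ``by use of the algebraic laws of $APTC_G$,'' defers the expansion details to the ABP treatment in section \ref{app}, and then argues confidentiality, replay resistance and man-in-the-middle resistance item by item exactly as you do. Your expansion plan (encapsulation blocking unmatched actions, assembly of a guarded linear recursive specification, abstraction via $\tau_I$, and closure by $RSP$/$CFAR$) is precisely the deferred machinery, so there is no substantive difference in approach.
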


\begin{proof}
Based on the above state transitions of the above modules, by use of the algebraic laws of $APTC_G$, we can prove that

$\tau_I(\partial_H(A\between B\between T))=\sum_{D\in\Delta}(r_{C_{AI}}(D)\cdot (s_{C_{BO}}(\bot)+s_{C_{BO}}(D)))\cdot
\tau_I(\partial_H(A\between B\between T))$.

For the details of proof, please refer to section \ref{app}, and we omit it.

That is, the Kerberos protocol in Figure \ref{KP7} $\tau_I(\partial_H(A\between B\between T))$ can exhibit desired external behaviors:

\begin{enumerate}
  \item For information leakage, because $k_{AT}$ is privately shared only between Alice and Trent, $k_{BT}$ is privately shared only between Bob and Trent, $k_{AB}$ is privately shared
  only among Trent, Alice and Bob. For the modeling of confidentiality, it is similar to the protocol in section \ref{confi}, the Kerberos protocol is confidential;
  \item For replay attack, the using of the time stamp $T$, makes that $\tau_I(\partial_H(A\between B\between T))=\sum_{D\in\Delta}(r_{C_{AI}}(D)\cdot s_{C_{BO}}(\bot))\cdot
\tau_I(\partial_H(A\between B\between T))$, it is desired;
  \item Without replay attack, the protocol would be $\tau_I(\partial_H(A\between B\between T))=\sum_{D\in\Delta}(r_{C_{AI}}(D)\cdot s_{C_{BO}}(D))\cdot
\tau_I(\partial_H(A\between B\between T))$, it is desired;
  \item For the man-in-the-middle attack, because $k_{AT}$ is privately shared only between Alice and Trent, $k_{BT}$ is privately shared only between Bob and Trent, $k_{AB}$ is privately shared
  only among Trent, Alice and Bob. For the modeling of the man-in-the-middle attack, it is similar to the protocol in section \ref{MIM}, the Kerberos protocol can be against the
  man-in-the-middle attack;
  \item For the unexpected and non-technical leaking of $k_{AT}$, $k_{BT}$, $k_{AB}$, or they being not strong enough, or Trent being dishonest, they are out of the scope of analyses of security protocols;
  \item For malicious tampering and transmission errors, they are out of the scope of analyses of security protocols.
\end{enumerate}
\end{proof}

\subsection{Neuman-Stubblebine Protocol}\label{nsp2}

The Neuman-Stubblebine protocol shown in Figure \ref{NSP27} uses symmetric keys for secure communication, that is, the key $k_{AB}$ between Alice and Bob is privately shared to Alice and Bob,
Alice, Bob have shared keys with Trent $k_{AT}$ and $k_{BT}$ already.

\begin{figure}
    \centering
    \includegraphics{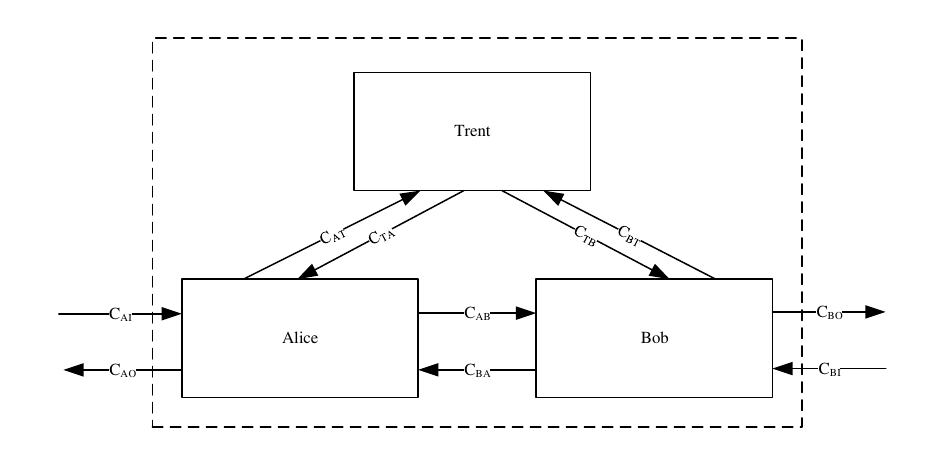}
    \caption{Neuman-Stubblebine protocol}
    \label{NSP27}
\end{figure}

The process of the protocol is as follows.

\begin{enumerate}
  \item Alice receives some messages $D$ from the outside through the channel $C_{AI}$ (the corresponding reading action is denoted $r_{C_{AI}}(D)$), if $k_{AB}$ is not established,
  she generates a random number $R_A$ through the action $rsg_{R_A}$, sends $R_A, A$ to Bob through the channel $C_{AB}$ (the corresponding sending action is denoted
  $s_{C_{AB}}(R_A, A)$);
  \item Bob receives $R_A, A$ from Alice through the channel $C_{AB}$ (the corresponding reading action is denoted $r_{C_{AB}}(R_A, A)$),
  he generates a random number $R_B$ through
  an action $rsg_{R_B}$, encrypts $R_A,A,T_B$ by $k_{BT}$ through an action $enc_{k_{BT}}(R_A,A,T_B)$, and sends \\
  $R_B,B,ENC_{k_{BT}}(R_A,A,T_B)$ to Trent
  through the channel $C_{BT}$
  (the corresponding sending action is denoted
  $s_{C_{BT}}(R_B,B,ENC_{k_{BT}}(R_A,A,T_B))$);
  \item Trent receives $R_B,B,ENC_{k_{BT}}(R_A,A,T_B)$ through the channel $C_{BT}$ (the corresponding reading action is denoted
  $r_{C_{BT}}(R_B,B,ENC_{k_{BT}}(R_A,A,T_B))$), he decrypts the message
  $ENC_{k_{BT}}(R_A,A,T_B)$
  through an action $dec_{k_{BT}}(ENC_{k_{BT}}(R_A,A,T_B))$, generates a random session
  key $k_{AB}$ through an action $rsg_{k_{AB}}$, then he encrypts $B,R_A,k_{AB},T_B$ by $k_{AT}$ through an action
  $enc_{k_{AT}}(B,R_A,k_{AB},T_B)$, encrypts $A,k_{AB},T_B$ by $k_{BT}$ through an action $enc_{k_{BT}}(A,k_{AB},T_B)$, and sends them to Alice through the channel $C_{TA}$ (the corresponding
  sending action is denoted $s_{C_{TA}}(ENC_{k_{AT}}(B,R_A,k_{AB},T_B),ENC_{k_{BT}}(A,k_{AB},T_B),R_B)$);
  \item Alice receives the message from Trent through the channel $C_{TA}$ (the corresponding reading action is denoted $r_{C_{TA}}(ENC_{k_{AT}}(B,d_{R_A},k_{AB},T_B),ENC_{k_{BT}}(A,k_{AB},T_B),R_B)$),
  she decrypts $ENC_{k_{AT}}(B,d_{R_A},k_{AB},T_B)$ by $k_{AT}$ through an action $dec_{k_{AT}}(ENC_{k_{AT}}(B,d_{R_A},k_{AB},T_B))$, if $d_{R_A}=R_A$, he encrypts $R_B$ by $k_{AB}$
  through an action $enc_{k_{AB}}(R_B)$, and sends \\
  $ENC_{k_{BT}}(A,k_{AB},T_B),ENC_{k_{AB}}(R_B)$
  to Bob through the channel $C_{AB}$ (the corresponding sending action is denoted $s_{C_{AB}}(ENC_{k_{BT}}(A,k_{AB},T_B),ENC_{k_{AB}}(R_B))$); else if $d_{R_A}\neq R_A$, she sends $\bot$ to Bob
  through the channel $C_{AB}$ (the corresponding sending action is denoted $s_{C_{AB}}(\bot)$);
  \item Bob receives $d_{AB}$ from Alice (the corresponding reading action is denoted $r_{C_{AB}}(d_{AB})$), if $d_{AB}=\bot$, he sends $\bot$ to Alice through the channel $C_{BA}$
  (the corresponding sending action is denoted $s_{C_{BA}}(\bot)$); else if $d_{AB}\neq \bot$, she decrypts $ENC_{k_{BT}}(A,k_{AB},T_B)$ by $k_{BT}$ through an action
  $dec_{k_{BT}}(ENC_{k_{BT}}(A,k_{AB},T_B))$, decrypts $ENC_{k_{AB}}(d_{R_B})$ by $k_{AB}$ through an action $dec_{k_{AB}}(ENC_{K_{AB}}(d_{R_B}))$, if $d_{R_B}=R_B$,
  he generates a random number $R_D$ through an action $rsg_{R_D}$, encrypts $R_D$ by $k_{AB}$ through an
  action $enc_{k_{AB}}(R_D)$, and sends it to Alice through the channel $C_{BA}$ (the corresponding sending action is denoted $s_{C_{BA}}(ENC_{k_{AB}}(R_D))$), else if
  $d_{R_B}\neq R_B$, he sends $\bot$ to Alice through the channel $C_{BA}$ (the corresponding sending action is denoted $s_{C_{BA}}(\bot)$);
  \item Alice receives $d_{BA}$ from Bob (the corresponding reading action is denoted $r_{C_{BA}}(d_{BA})$), if $d_{BA}=\bot$, she sends $\bot$ to Bob through the channel $C_{AB}$
  (the corresponding sending action is denoted $s_{C_{AB}}(\bot)$); else if $d_{BA}\neq \bot$, she decrypts $ENC_{k_{AB}}(R_D)$ by $k_{AB}$ through an action $dec_{k_{AB}}(ENC_{k_{AB}}(R_D))$,
  if $isFresh(R_D)=TRUE$, she generates a random number $R_D'$ through an action $rsg_{R_D'}$, encrypts $R_D',D$ by $k_{AB}$ through an
  action $enc_{k_{AB}}(R_D',D)$, and sends it to Bob through the channel $C_{AB}$ (the corresponding sending action is denoted $s_{C_{AB}}(ENC_{k_{AB}}(R_D',D))$), else if
  $isFresh(R_D)=FALSE$, he sends $\bot$ to Bob through the channel $C_{AB}$ (the corresponding sending action is denoted $s_{C_{AB}}(\bot)$);
  \item Bob receives $d_{AB}'$ from Alice (the corresponding reading action is denoted $r_{C_{AB}}(d_{AB}')$), if $d_{AB}'=\bot$, he sends $\bot$ to the outside through the channel $C_{BO}$
  (the corresponding sending action is denoted $s_{C_{BO}}(\bot)$); else if $d_{AB}'\neq \bot$, she decrypts $ENC_{k_{AB}}(R_D',D)$ by $k_{AB}$ through an action
  $dec_{k_{AB}}(ENC_{k_{AB}}(R_D',D))$, if $isFresh(d_{R_D'})=TRUE$, she sends $D$ to the outside through the channel $C_{BO}$ (the corresponding sending action is denoted $s_{C_{BO}}(D)$), else if
  $isFresh(d_{R_D}')=FALSE$, he sends $\bot$ to the outside through the channel $C_{BO}$ (the corresponding sending action is denoted $s_{C_{BO}}(\bot)$).
\end{enumerate}

Where $D\in\Delta$, $\Delta$ is the set of data.

Alice's state transitions described by $APTC_G$ are as follows.

$A=\sum_{D\in\Delta}r_{C_{AI}}(D)\cdot A_2$

$A_2=\{k_{AB}=NULL\}\cdot rsg_{R_A}\cdot A_3+ \{k_{AB}\neq NULL\}\cdot A_13$

$A_3=s_{C_{AB}}(R_A, A)\cdot A_4$

$A_4=r_{C_{TA}}(ENC_{k_{AT}}(B,d_{R_A},k_{AB},T_B),ENC_{k_{BT}}(A,k_{AB},T_B),R_B)\cdot A_5$

$A_5=dec_{k_{AT}}(ENC_{k_{AT}}(B,d_{R_A},k_{AB},T_B))\cdot A_6$

$A_6=\{d_{R_A}=R_A\}\cdot A_7+\{d_{R_A}\neq R_A\}\cdot s_{C_{AB}}(\bot)\cdot A_9$

$A_7=enc_{k_{AB}}(R_B)\cdot A_8$

$A_8=s_{C_{AB}}(ENC_{k_{BT}}(A,k_{AB},T_B),ENC_{k_{AB}}(R_B))\cdot A_{9}$

$A_{9}=r_{C_{BA}}(d_{BA})\cdot A_{10}$

$A_{10}=\{d_{BA}\neq\bot\}\cdot A_{11}+\{d_{BA}= \bot\}\cdot s_{C_{AB}}(\bot)\cdot A$

$A_{11}=dec_{k_{AB}}(ENC_{k_{AB}}(R_D))\cdot A_{12}$

$A_{12}=\{isFresh(d_{R_D})=TRUE\}\cdot A_{13}+\{isFresh(d_{R_D})=FALSE\}\cdot s_{C_{AB}}(\bot)\cdot A$

$A_{13}=rsg_{R_D'}\cdot A_{14}$

$A_{14}=enc_{k_{AB}}(R_D',D)\cdot A_{15}$

$A_{15}=s_{C_{AB}}(ENC_{k_{AB}}(R_D',D))\cdot A$

Bob's state transitions described by $APTC_G$ are as follows.

$B=\{k_{AB}=NULL\}\cdot B_1+ \{k_{AB}\neq NULL\}\cdot B_{13}$

$B_1=r_{C_{AB}}(R_A, A)\cdot B_2$

$B_2=rsg_{R_B}\cdot B_3$

$B_3=enc_{k_{BT}}(R_A,A,T_B)\cdot B_4$

$B_4=s_{C_{BT}}(R_B,B,ENC_{k_{BT}}(R_A,A,T_B))\cdot B_5$

$B_5=r_{C_{AB}}(d_{AB})\cdot B_6$

$B_6=\{d_{AB}\neq\bot\}\cdot B_7+\{d_{AB}=\bot\}\cdot s_{C_{BA}}(\bot)\cdot B_{13}$

$B_7=dec_{k_{BT}}(ENC_{k_{BT}}(A,k_{AB},T_B))\cdot B_8$

$B_8=dec_{k_{AB}}(ENC_{K_{AB}}(d_{R_B}))\cdot B_9$

$B_9=\{d_{R_B}\neq R_B\}\cdot s_{C_{BO}}(\bot)\cdot B_{13}+\{d_{R_B}=R_B\}\cdot B_{10}$

$B_{10}=rsg_{R_D}\cdot B_{11}$

$B_{11}=enc_{k_{AB}}(R_D)\cdot B_{12}$

$B_{12}=s_{C_{BA}}(ENC_{k_{AB}}(R_D))\cdot B_{13}$

$B_{13}=r_{C_{AB}}(d_{AB}')\cdot B_{14}$

$B_{14}=\{d_{AB}'=\bot\}\cdot s_{C_{BO}}(\bot)\cdot B+\{d_{AB}'\neq\bot\}\cdot B_{15}$

$B_{15}=dec_{k_{AB}}(ENC_{k_{AB}}(R_D',D))\cdot B_{16}$

$B_{16}=\{isFresh(R_D')=FLASE\}\cdot s_{C_{BO}}(\bot) \cdot B+\{isFresh(R_D')=TRUE\}\cdot B_{17}$

$B_{17}=s_{C_{BO}}(D)\cdot B$

Trent's state transitions described by $APTC_G$ are as follows.

$T=r_{C_{BT}}(R_B,B,ENC_{k_{BT}}(R_A,A,T_B))\cdot T_2$

$T_2=dec_{k_{BT}}(ENC_{k_{BT}}(R_A,A,T_B))\cdot T_3$

$T_3=rsg_{k_{AB}}\cdot T_4$

$T_4=enc_{k_{AT}}(B,R_A,k_{AB},T_B)\cdot T_5$

$T_5=enc_{k_{BT}}(A,k_{AB},T_B)\cdot T_6$

$T_6=s_{C_{TA}}(ENC_{k_{AT}}(B,R_A,k_{AB},T_B),ENC_{k_{BT}}(A,k_{AB},T_B),R_B)\cdot T$

The sending action and the reading action of the same type data through the same channel can communicate with each other, otherwise, will cause a deadlock $\delta$. We define the following
communication functions.

$\gamma(r_{C_{AB}}(R_A, A),s_{C_{AB}}(R_A, A))\triangleq c_{C_{AB}}(R_A, A)$

$\gamma(r_{C_{BT}}(R_B,B,ENC_{k_{BT}}(R_A,A,T_B)),s_{C_{BT}}(R_B,B,ENC_{k_{BT}}(R_A,A,T_B)))\\
\triangleq c_{C_{BT}}(R_B,B,ENC_{k_{BT}}(R_A,A,T_B))$

$\gamma(r_{C_{TA}}(ENC_{k_{AT}}(B,d_{R_A},k_{AB},T_B),ENC_{k_{BT}}(A,k_{AB},T_B),R_B),\\
s_{C_{TA}}(ENC_{k_{AT}}(B,d_{R_A},k_{AB},T_B),ENC_{k_{BT}}(A,k_{AB},T_B),R_B))\triangleq\\
c_{C_{TA}}(ENC_{k_{AT}}(B,d_{R_A},k_{AB},T_B),ENC_{k_{BT}}(A,k_{AB},T_B),R_B)$

$\gamma(r_{C_{AB}}(d_{AB}),s_{C_{AB}}(d_{AB}))\triangleq c_{C_{AB}}(d_{AB})$

$\gamma(r_{C_{BA}}(d_{BA}),s_{C_{BA}}(d_{BA}))\triangleq c_{C_{BA}}(d_{BA})$

$\gamma(r_{C_{AB}}(d_{AB}'),s_{C_{AB}}(d_{AB}'))\triangleq c_{C_{AB}}(d_{AB}')$

Let all modules be in parallel, then the protocol $A\quad B\quad T$ can be presented by the following process term.

$$\tau_I(\partial_H(\Theta(A\between B\between T)))=\tau_I(\partial_H(A\between B\between T))$$

where $H=\{r_{C_{AB}}(R_A, A),s_{C_{AB}}(R_A, A),r_{C_{AB}}(d_{AB}),s_{C_{AB}}(d_{AB}),\\
r_{C_{BA}}(d_{BA}),s_{C_{BA}}(d_{BA}),r_{C_{AB}}(d_{AB}'),s_{C_{AB}}(d_{AB}')'\\
r_{C_{BT}}(R_B,B,ENC_{k_{BT}}(R_A,A,T_B)),s_{C_{BT}}(R_B,B,ENC_{k_{BT}}(R_A,A,T_B)),\\
r_{C_{TA}}(ENC_{k_{AT}}(B,d_{R_A},k_{AB},T_B),ENC_{k_{BT}}(A,k_{AB},T_B),R_B),\\
s_{C_{TA}}(ENC_{k_{AT}}(B,d_{R_A},k_{AB},T_B),ENC_{k_{BT}}(A,k_{AB},T_B),R_B)|D\in\Delta\}$,

$I=\{c_{C_{AB}}(R_A, A),c_{C_{BT}}(R_B,B,ENC_{k_{BT}}(R_A,A,T_B)),\\
c_{C_{TA}}(ENC_{k_{AT}}(B,d_{R_A},k_{AB},T_B),ENC_{k_{BT}}(A,k_{AB},T_B),R_B),\\
c_{C_{AB}}(d_{AB}),c_{C_{BA}}(d_{BA}),c_{C_{AB}}(d_{AB}'),\\
\{k_{AB}=NULL\}, rsg_{R_A},\{k_{AB}\neq NULL\},dec_{k_{AT}}(ENC_{k_{AT}}(B,d_{R_A},k_{AB},T_B)),\\
\{d_{R_A}=R_A\},\{d_{R_A}\neq R_A\},enc_{k_{AB}}(R_B),\{d_{BA}=\bot\},\{d_{BA}\neq\bot\},\\
dec_{k_{AB}}(ENC_{k_{AB}}(R_D)),\{isFresh(R_D)=TRUE\},\{isFresh(R_D)=FALSE\},\\
rsg_{R_D'},enc_{k_{AB}}(R_D',D),rsg_{R_B},enc_{k_{BT}}(R_A,A,T_B),\{d_{AB}\neq\bot\},\\
\{d_{AB}=\bot\},dec_{k_{BT}}(ENC_{k_{BT}}(A,k_{AB},T_B)),dec_{k_{AB}}(ENC_{K_{AB}}(d_{R_B})),\\
\{d_{R_B}\neq R_B\},\{d_{R_B}= R_B\},rsg_{R_D},enc_{k_{AB}}(R_D),\{d_{AB}'=\bot\},\{d_{AB}'\neq\bot\},\\
dec_{k_{AB}}(ENC_{k_{AB}}(R_D',D)),\{isFresh(d_{R_D'})=FLASE\},\{isFresh(d_{R_D'})=TRUE\},\\
dec_{k_{BT}}(ENC_{k_{BT}}(R_A,A,T_B)),rsg_{k_{AB}},enc_{k_{AT}}(B,R_A,k_{AB},T_B),\\
enc_{k_{BT}}(A,k_{AB},T_B)|D\in\Delta\}$.

Then we get the following conclusion on the protocol.

\begin{theorem}
The Neuman-Stubblebine protocol in Figure \ref{NSP27} is secure.
\end{theorem}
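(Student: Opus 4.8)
The plan is to follow the same algebraic strategy used for the earlier Trent-based protocols (Yahalom, Needham--Schroeder, Otway--Rees, Kerberos), adapting the bookkeeping to the seven-message flow of Neuman--Stubblebine. First I would expand the full merge $A\between B\between T$ by repeatedly applying $P1$ (i.e.\ $x\between y=x\parallel y+x\mid y$) together with $RDP$ to unfold the recursive definitions of Alice, Bob and Trent given above. At each step the parallel operator $\parallel$ and the communication merge $\mid$ are resolved using $P6$ and $C14$ (and their guarded counterparts $G12$--$G17$ for the conditional tests), so that only the intended send/receive pairs on channels $C_{AB}$, $C_{BT}$, $C_{TA}$ and $C_{BA}$ contribute, while every mismatched handshake collapses to $\delta$.

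Next I would apply the encapsulation operator $\partial_H$, whose axioms $D1$--$D6$ turn all the loose $s$- and $r$-actions listed in $H$ into $\delta$, forcing genuine synchronization and leaving only the communication actions $c_{C_{AB}}$, $c_{C_{BT}}$, $c_{C_{TA}}$, $c_{C_{BA}}$; using $A6$ and $A7$ to absorb the deadlock summands. The resulting closed term can then be rewritten, via the Elimination Theorem for $APTC_G$ with linear recursion (Theorem \ref{ETRecursionG}), as $\langle X_1|E\rangle$ for an explicitly constructed guarded linear recursive specification $E$. The variables of $E$ index the reachable joint states of the three principals, and each equation records either a communication action or one of the guard tests $\{d_{R_A}=R_A\}$, $\{d_{R_B}=R_B\}$, $\{isFresh(R_D)=TRUE\}$ that govern the verification branches.

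I would then apply the abstraction operator $\tau_I$, renaming every action in $I$ (all the $enc$/$dec$, $rsg$, $c_C$ and guard actions) to $\tau$ by $TI1$--$TI6$ and $G28$--$G29$. The internal cryptographic chain between an external input $r_{C_{AI}}(D)$ and the corresponding external output now consists only of $\tau$-steps, which are removed using $B1$, $B2$ and, where the conditional branches create clusters of internal transitions, $CFAR$; soundness is guaranteed by Theorems \ref{SAPTC_GABSG} and \ref{CCFARG} modulo $\approx_{rbs}$, $\approx_{rbp}$ and $\approx_{rbhp}$. Collapsing the acceptance and rejection branches of the freshness/nonce checks yields the target recursion
$$\tau_I(\partial_H(A\between B\between T))=\sum_{D\in\Delta}(r_{C_{AI}}(D)\cdot(s_{C_{BO}}(\bot)+s_{C_{BO}}(D)))\cdot\tau_I(\partial_H(A\between B\between T)),$$
from which security follows exactly as for the preceding protocols: confidentiality because $k_{AT}$, $k_{BT}$, $k_{AB}$ remain private, and resistance to replay and man-in-the-middle attacks because the nonces $R_A$, $R_B$ and the timestamp $T_B$ force the attacker branch to output $\bot$.

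I expect the main obstacle to be the construction of the guarded linear recursive specification $E$: Neuman--Stubblebine has the largest state space of the symmetric-key protocols treated here, with nested conditional guards on both Alice's and Bob's sides (the $R_A$, $R_B$, $R_D$, $R_D'$ and freshness checks) producing many branch points whose interleavings must be tracked consistently through $\partial_H$ and $\tau_I$. Verifying that these branches genuinely collapse to the two-summand external form --- in particular that every intermediate handshake is forced and that no spurious deadlock or undesired output survives --- is the delicate part; the algebraic manipulations themselves are routine given the axioms of $APTC_G$.
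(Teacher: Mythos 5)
Your proposal is correct and takes essentially the same route as the paper: the paper's own proof simply asserts the identity $\tau_I(\partial_H(A\between B\between T))=\sum_{D\in\Delta}(r_{C_{AI}}(D)\cdot(s_{C_{BO}}(\bot)+s_{C_{BO}}(D)))\cdot\tau_I(\partial_H(A\between B\between T))$ and defers the algebraic expansion to the ABP template of section \ref{app}, which is exactly the $P1$/$RDP$--$\partial_H$--linear-recursion--$\tau_I$/$CFAR$ pipeline you spell out. Your concluding security analysis (confidentiality from the private keys, replay and man-in-the-middle resistance from the nonces and timestamp forcing the $\bot$ branch) matches the paper's enumerated discussion.
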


\begin{proof}
Based on the above state transitions of the above modules, by use of the algebraic laws of $APTC_G$, we can prove that

$\tau_I(\partial_H(A\between B\between T))=\sum_{D\in\Delta}(r_{C_{AI}}(D)\cdot (s_{C_{BO}}(\bot)+s_{C_{BO}}(D)))\cdot
\tau_I(\partial_H(A\between B\between T))$.

For the details of proof, please refer to section \ref{app}, and we omit it.

That is, the Neuman-Stubblebine protocol in Figure \ref{NSP27} $\tau_I(\partial_H(A\between B\between T))$ can exhibit desired external behaviors:

\begin{enumerate}
  \item For information leakage, because $k_{AT}$ is privately shared only between Alice and Trent, $k_{BT}$ is privately shared only between Bob and Trent, $k_{AB}$ is privately shared
  only among Trent, Alice and Bob. For the modeling of confidentiality, it is similar to the protocol in section \ref{confi}, the Neuman-Stubblebine protocol is confidential;
  \item For the man-in-the-middle attack, because $k_{AT}$ is privately shared only between Alice and Trent, $k_{BT}$ is privately shared only between Bob and Trent, $k_{AB}$ is privately shared
  only among Trent, Alice and Bob, and the use of the random numbers $R_A$, $R_B$, $R_D$ and $R_D'$, the protocol would be $\tau_I(\partial_H(A\between B\between T))=\sum_{D\in\Delta}(r_{C_{AI}}(D)\cdot s_{C_{BO}}(\bot))\cdot
  \tau_I(\partial_H(A\between B\between T))$, it is desired, the Neuman-Stubblebine protocol can be against the
  man-in-the-middle attack;
  \item For replay attack, the using of the random numbers $T$, $R_A$, $R_B$, $R_D$ and $R_D'$, makes that $\tau_I(\partial_H(A\between B\between T))=\sum_{D\in\Delta}(r_{C_{AI}}(D)\cdot s_{C_{BO}}(\bot))\cdot
  \tau_I(\partial_H(A\between B\between T))$, it is desired;
  \item Without man-in-the-middle and replay attack, the protocol would be $\tau_I(\partial_H(A\between B\between T))=\sum_{D\in\Delta}(r_{C_{AI}}(D)\cdot s_{C_{BO}}(D))\cdot
  \tau_I(\partial_H(A\between B\between T))$, it is desired;
  \item For the unexpected and non-technical leaking of $k_{AT}$, $k_{BT}$, $k_{AB}$, or they being not strong enough, or Trent being dishonest, they are out of the scope of analyses of security protocols;
  \item For malicious tampering and transmission errors, they are out of the scope of analyses of security protocols.
\end{enumerate}
\end{proof}

\subsection{Denning-Sacco Protocol}\label{dsp}

The Denning-Sacco protocol shown in Figure \ref{DSP7} uses asymmetric keys and symmetric keys for secure communication, that is, the key $k_{AB}$ between Alice and Bob is privately shared to Alice and Bob,
Alice's, Bob's and Trent's public keys $pk_{A}$, $pk_{B}$ and $pk_{T}$ can be publicly gotten.

\begin{figure}
    \centering
    \includegraphics{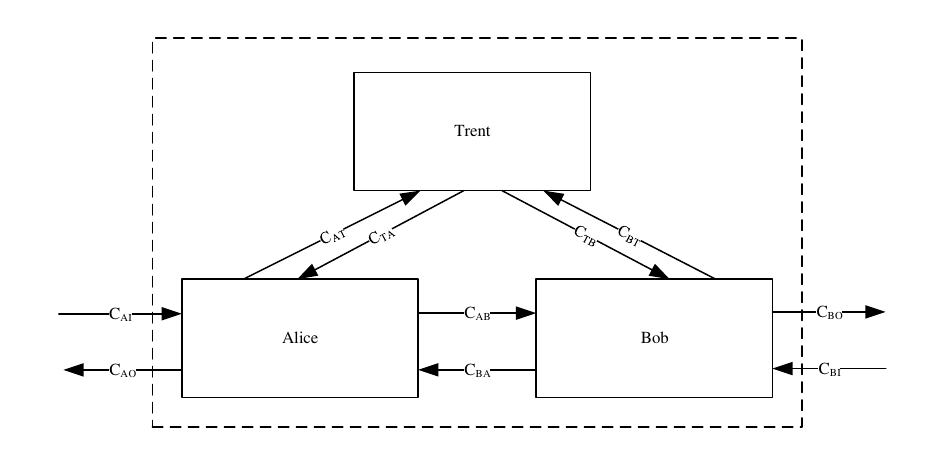}
    \caption{Denning-Sacco protocol}
    \label{DSP7}
\end{figure}

The process of the protocol is as follows.

\begin{enumerate}
  \item Alice receives some messages $D$ from the outside through the channel $C_{AI}$ (the corresponding reading action is denoted $r_{C_{AI}}(D)$), if $k_{AB}$ is not established,
  she sends $A,B$ to Trent through the channel $C_{AT}$ (the corresponding sending action is denoted
  $s_{C_{AT}}(A,B)$);
  \item Trent receives $A,B$ through the channel $C_{AT}$ (the corresponding reading action is denoted
  $r_{C_{AT}}(A,B)$), he signs Alice's and Bob's public keys $pk_A$ and $pk_B$ through the actions $sign_{sk_T}(A,pk_A)$ and $sign_{sk_T}(B,pk_B)$, and sends the signatures to Alice
  through the channel $C_{TA}$ (the corresponding sending action is denoted $s_{C_{TA}}(SIGN_{sk_T}(A,pk_A),SIGN_{sk_T}(B,pk_B))$);
  \item Alice receives the message from Trent through the channel $C_{TA}$ (the corresponding reading action is denoted $r_{C_{TA}}(SIGN_{sk_T}(A,pk_A),SIGN_{sk_T}(B,pk_B))$),
  she de-signs $SIGN_{sk_T}(B,pk_B)$ through an action $de\textrm{-}sign_{pk_T}(SIGN_{sk_T}(B,pk_B))$ to get $pk_B$, generates a random session key $k_{AB}$ through an action
  $rsg_{k_{AB}}$, signs $A,B,k_{AB},T_A$
  through an action $sign_{sk_A}(A,B,k_{AB},T_A)$, and encrypts the signature by $pk_B$ through an action $enc_{pk_B}(SIGN_{sk_A}(A,B,k_{AB},T_A))$, then sends
  $ENC_{pk_B}(SIGN_{sk_A}(A,B,k_{AB},T_A)),SIGN_{sk_T}(A,pk_A),SIGN_{sk_T}(B,pk_B)$
  to Bob through the channel $C_{AB}$ (the corresponding sending action is denoted \\
  $s_{C_{AB}}(ENC_{pk_B}(SIGN_{sk_A}(A,B,k_{AB},T_A)),SIGN_{sk_T}(A,pk_A),SIGN_{sk_T}(B,pk_B))$);
  \item Bob receives $ENC_{pk_B}(SIGN_{sk_A}(A,B,k_{AB},T_A)),SIGN_{sk_T}(A,pk_A),SIGN_{sk_T}(B,pk_B)$ from Alice (the corresponding reading action is denoted\\
  $r_{C_{AB}}(ENC_{pk_B}(SIGN_{sk_A}(A,B,k_{AB},T_A)),SIGN_{sk_T}(A,pk_A),SIGN_{sk_T}(B,pk_B))$), he de-signs $SIGN_{sk_T}(A,pk_A)$ through an action
  $de\textrm{-}sign_{pk_T}(SIGN_{sk_T}(A,pk_A))$
  to get $pk_A$, decrypts $ENC_{pk_B}(SIGN_{sk_A}(A,B,k_{AB},T_A))$ through an action \\
  $dec_{sk_B}(ENC_{pk_B}(SIGN_{sk_A}(A,B,k_{AB},T_A)))$ and de-sign $SIGN_{sk_A}(A,B,k_{AB},T_A)$
  through an action $de\textrm{-}sign_{pk_A}(SIGN_{sk_A}(A,B,k_{AB},T_A))$ to get $k_{AB}$ and $T_A$, if $isValid(T_A)=TRUE$,
  he generates a random number $R_D$ through an action $rsg_{R_D}$, encrypts $R_D$ by $k_{AB}$ through an
  action $enc_{k_{AB}}(R_D)$, and sends it to Alice through the channel $C_{BA}$ (the corresponding sending action is denoted $s_{C_{BA}}(ENC_{k_{AB}}(R_D))$), else if
  $isValid(T_A)=FALSE$, he sends $ENC_{k_{AB}}(\bot)$ to Alice through the channel $C_{BA}$ (the corresponding sending action is denoted $s_{C_{BA}}(ENC_{k_{AB}}(\bot))$);
  \item Alice receives $ENC_{k_{AB}}(d_{BA})$ from Bob (the corresponding reading action is denoted $r_{C_{BA}}(ENC_{k_{AB}}(d_{BA}))$), if $d_{BA}=\bot$, she sends $ENC_{k_{AB}}(\bot)$ to Bob through the channel $C_{AB}$
  (the corresponding sending action is denoted $s_{C_{AB}}(ENC_{k_{AB}}(\bot))$); else if $d_{BA}\neq \bot$, if $isFresh(d_{BA})=TRUE$, she generates a random number $R_D'$
  through an action $rsg_{R_D'}$, encrypts $R_D',D$ by $k_{AB}$ through an
  action $enc_{k_{AB}}(R_D',D)$, and sends it to Bob through the channel $C_{AB}$ (the corresponding sending action is denoted $s_{C_{AB}}(ENC_{k_{AB}}(R_D',D))$), else if
  $isFresh(d_{BA})=FALSE$, he sends $ENC_{k_{AB}}(\bot)$ to Bob through the channel $C_{AB}$ (the corresponding sending action is denoted $s_{C_{AB}}(ENC_{k_{AB}}(\bot))$);
  \item Bob receives $ENC_{k_{AB}}(d_{AB}')$ from Alice (the corresponding reading action is denoted $r_{C_{AB}}(ENC_{k_{AB}}(d_{AB}'))$), if $d_{AB}'=\bot$, he sends $\bot$
  to the outside through the channel $C_{BO}$
  (the corresponding sending action is denoted $s_{C_{BO}}(\bot)$); else if $d_{AB}'\neq \bot$, if $isFresh(d_{R_D'})=TRUE$, she sends $D$ to the outside through the channel $C_{BO}$ (the corresponding sending action is denoted $s_{C_{BO}}(D)$), else if
  $isFresh(d_{R_D}')=FALSE$, he sends $\bot$ to the outside through the channel $C_{BO}$ (the corresponding sending action is denoted $s_{C_{BO}}(\bot)$).
\end{enumerate}

Where $D\in\Delta$, $\Delta$ is the set of data.

Alice's state transitions described by $APTC_G$ are as follows.

$A=\sum_{D\in\Delta}r_{C_{AI}}(D)\cdot A_2$

$A_2=\{k_{AB}=NULL\}\cdot A_3+ \{k_{AB}\neq NULL\}\cdot A_{13}$

$A_3=s_{C_{AT}}(A,B)\cdot A_4$

$A_4=r_{C_{TA}}(SIGN_{sk_T}(A,pk_A),SIGN_{sk_T}(B,pk_B))\cdot A_5$

$A_5=de\textrm{-}sign_{pk_T}(SIGN_{sk_T}(B,pk_B))\cdot A_6$

$A_6=rsg_{k_{AB}}\cdot A_7$

$A_7=sign_{sk_A}(A,B,k_{AB},T_A)\cdot A_8$

$A_8=enc_{pk_B}(SIGN_{sk_A}(A,B,k_{AB},T_A))\cdot A_{9}$

$A_9=s_{C_{AB}}(ENC_{pk_B}(SIGN_{sk_A}(A,B,k_{AB},T_A)),SIGN_{sk_T}(A,pk_A),SIGN_{sk_T}(B,pk_B))\cdot A_{10}$

$A_{10}=r_{C_{BA}}(ENC_{k_{AB}}(d_{BA}))\cdot A_{11}$

$A_{11}=\{d_{BA}\neq\bot\}\cdot A_{12}+\{d_{BA}= \bot\}\cdot s_{C_{AB}}(ENC_{k_{AB}}(\bot))\cdot A$

$A_{12}=\{isFresh(d_{BA})=TRUE\}\cdot A_{13}+\{isFresh(d_{BA})=FALSE\}\cdot s_{C_{AB}}(ENC_{k_{AB}}(\bot))\cdot A$

$A_{13}=rsg_{R_D'}\cdot A_{14}$

$A_{14}=enc_{k_{AB}}(R_D',D)\cdot A_{15}$

$A_{15}=s_{C_{AB}}(ENC_{k_{AB}}(R_D',D))\cdot A$

Bob's state transitions described by $APTC_G$ are as follows.

$B=\{k_{AB}=NULL\}\cdot B_1+ \{k_{AB}\neq NULL\}\cdot B_{9}$

$B_1=r_{C_{AB}}(ENC_{pk_B}(SIGN_{sk_A}(A,B,k_{AB},T_A)),SIGN_{sk_T}(A,pk_A),SIGN_{sk_T}(B,pk_B))\cdot B_2$

$B_2=de\textrm{-}sign_{pk_T}(SIGN_{sk_T}(A,pk_A))\cdot B_3$

$B_3=dec_{sk_B}(ENC_{pk_B}(SIGN_{sk_A}(A,B,k_{AB},T_A)))\cdot B_4$

$B_4=de\textrm{-}sign_{pk_A}(SIGN_{sk_A}(A,B,k_{AB},T_A))\cdot B_5$

$B_5=\{isValid(T_A)=TRUE\}\cdot B_6+\{isValid(T_A)=FALSE\}\cdot s_{C_{BA}}(ENC_{k_{AB}}(\bot))\cdot B_9$

$B_6=rsg_{R_D}\cdot B_7$

$B_7=enc_{k_{AB}}(R_D)\cdot B_8$

$B_8=s_{C_{BA}}(ENC_{k_{AB}}(d_{BA}))\cdot B_9$

$B_9=r_{C_{AB}}(ENC_{k_{AB}}(d_{AB}'))\cdot B_{10}$

$B_{10}=dec_{k_{AB}}(ENC_{k_{AB}}(d_{AB}'))\cdot B_{11}$

$B_{11}=\{d_{AB}'=\bot\}\cdot s_{C_{BO}}(\bot)\cdot B+\{d_{AB}'\neq\bot\}\cdot B_{12}$

$B_{12}=\{isFresh(d_{R_D'})=FLASE\}\cdot s_{C_{BO}}(\bot) B+\{isFresh(d_{R_D'})=TRUE\}\cdot B_{13}$

$B_{13}=s_{C_{BO}}(D)\cdot B$

Trent's state transitions described by $APTC_G$ are as follows.

$T=r_{C_{AT}}(A,B)\cdot T_2$

$T_2=sign_{sk_T}(A,pk_A)\cdot T_3$

$T_3=sign_{sk_T}(B,pk_B)\cdot T_4$

$T_4=s_{C_{TA}}(SIGN_{sk_T}(A,pk_A),SIGN_{sk_T}(B,pk_B))\cdot T$

The sending action and the reading action of the same type data through the same channel can communicate with each other, otherwise, will cause a deadlock $\delta$. We define the following
communication functions.

$\gamma(r_{C_{AT}}(A,B),s_{C_{AT}}(A,B))\triangleq c_{C_{AT}}(A,B)$

$\gamma(r_{C_{TA}}(SIGN_{sk_T}(A,pk_A),SIGN_{sk_T}(B,pk_B)),s_{C_{TA}}(SIGN_{sk_T}(A,pk_A),SIGN_{sk_T}(B,pk_B)))\\
\triangleq c_{C_{TA}}(SIGN_{sk_T}(A,pk_A),SIGN_{sk_T}(B,pk_B))$

$\gamma(r_{C_{AB}}(ENC_{pk_B}(SIGN_{sk_A}(A,B,k_{AB},T_A)),SIGN_{sk_T}(A,pk_A),SIGN_{sk_T}(B,pk_B)),\\
s_{C_{AB}}(ENC_{pk_B}(SIGN_{sk_A}(A,B,k_{AB},T_A)),SIGN_{sk_T}(A,pk_A),SIGN_{sk_T}(B,pk_B)))\\
\triangleq c_{C_{AB}}(ENC_{pk_B}(SIGN_{sk_A}(A,B,k_{AB},T_A)),SIGN_{sk_T}(A,pk_A),SIGN_{sk_T}(B,pk_B))$

$\gamma(r_{C_{BA}}(ENC_{k_{AB}}(d_{BA})),s_{C_{BA}}(ENC_{k_{AB}}(d_{BA})))\triangleq c_{C_{BA}}(ENC_{k_{AB}}(d_{BA}))$

$\gamma(r_{C_{AB}}(ENC_{k_{AB}}(d_{AB}')),s_{C_{AB}}(ENC_{k_{AB}}(d_{AB}')))\triangleq c_{C_{AB}}(ENC_{k_{AB}}(d_{AB}'))$

Let all modules be in parallel, then the protocol $A\quad B\quad T$ can be presented by the following process term.

$$\tau_I(\partial_H(\Theta(A\between B\between T)))=\tau_I(\partial_H(A\between B\between T))$$

where $H=\{r_{C_{AT}}(A,B),s_{C_{AT}}(A,B),r_{C_{BA}}(ENC_{k_{AB}}(d_{BA})),s_{C_{BA}}(ENC_{k_{AB}}(d_{BA})),\\
r_{C_{AB}}(ENC_{k_{AB}}(d_{AB}')),s_{C_{AB}}(ENC_{k_{AB}}(d_{AB}')),\\
r_{C_{TA}}(SIGN_{sk_T}(A,pk_A),SIGN_{sk_T}(B,pk_B)),s_{C_{TA}}(SIGN_{sk_T}(A,pk_A),SIGN_{sk_T}(B,pk_B)),\\
r_{C_{AB}}(ENC_{pk_B}(SIGN_{sk_A}(A,B,k_{AB},T_A)),SIGN_{sk_T}(A,pk_A),SIGN_{sk_T}(B,pk_B)),\\
s_{C_{AB}}(ENC_{pk_B}(SIGN_{sk_A}(A,B,k_{AB},T_A)),SIGN_{sk_T}(A,pk_A),SIGN_{sk_T}(B,pk_B))|D\in\Delta\}$,

$I=\{c_{C_{AT}}(A,B),c_{C_{BA}}(ENC_{k_{AB}}(d_{BA})),c_{C_{AB}}(ENC_{k_{AB}}(d_{AB}')),\\
c_{C_{TA}}(SIGN_{sk_T}(A,pk_A),SIGN_{sk_T}(B,pk_B)),\\
c_{C_{AB}}(ENC_{pk_B}(SIGN_{sk_A}(A,B,k_{AB},T_A)),SIGN_{sk_T}(A,pk_A),SIGN_{sk_T}(B,pk_B)),\\
\{k_{AB}=NULL\},\{k_{AB}\neq NULL\},de\textrm{-}sign_{pk_T}(SIGN_{sk_T}(B,pk_B)),\\
rsg_{k_{AB}},sign_{sk_A}(A,B,k_{AB},T_A),enc_{pk_B}(SIGN_{sk_A}(A,B,k_{AB},T_A)),\\
\{isFresh(d_{BA})=TRUE\},\{isFresh(d_{BA})=FALSE\},\{d_{BA}\neq\bot\},\{d_{BA}=\bot\},\\
rsg_{R_D'},enc_{k_{AB}}(R_D',D),de\textrm{-}sign_{pk_T}(SIGN_{sk_T}(A,pk_A)),\\
dec_{sk_B}(ENC_{pk_B}(SIGN_{sk_A}(A,B,k_{AB},T_A))),de\textrm{-}sign_{pk_A}(SIGN_{sk_A}(A,B,k_{AB},T_A)),\\
\{isValid(T_A)=TRUE\},\{isValid(T_A)=FALSE\},rsg_{R_D},enc_{k_{AB}}(R_D),\\
dec_{k_{AB}}(ENC_{k_{AB}}(d_{AB}')),\{d_{AB}'=\bot\},\{d_{AB}'\neq\bot\},\\
\{isFresh(d_{R_D'})=TRUE\},\{isFresh(d_{R_D'})=FLASE\},sign_{sk_T}(A,pk_A),sign_{sk_T}(B,pk_B)|D\in\Delta\}$.

Then we get the following conclusion on the protocol.

\begin{theorem}
The Denning-Sacco protocol in Figure \ref{DSP7} is secure.
\end{theorem}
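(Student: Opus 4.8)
The plan is to follow the same algebraic strategy used for the ABP protocol in section \ref{app} and for the preceding protocols of this chapter. First I would expand the full parallel composition $A\between B\between T$ by repeatedly applying the expansion law $P1$ ($x\between y = x\parallel y + x\mid y$) together with $RDP$ to unfold the recursive definitions of Alice, Bob and Trent. At each stage only the matching send/receive pairs listed in the communication functions $\gamma$ can synchronize; every other attempted communication merge is sent to $\delta$. Applying the encapsulation operator $\partial_H$ then blocks all the unmatched $s_{C_\ast}$ and $r_{C_\ast}$ actions occurring in $H$, so that $\partial_H(A\between B\between T)$ retains only the communication actions $c_{C_\ast}$ and the local cryptographic and guard actions, forcing the three principals to proceed in lock-step through the intended message flow.

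Next I would package the expansion as a guarded linear recursive specification $E$ and write $\partial_H(A\between B\between T)=\langle X_1|E\rangle$, exactly as the system $R_0S_0=\langle X_1|E\rangle$ was handled in the ABP example. The equations of $E$ record the successive joint states, with the branch points coming from the conditional guards $\{isValid(T_A)=TRUE\}$ versus $\{isValid(T_A)=FALSE\}$, the freshness tests $\{isFresh(d_{BA})=TRUE\}$ and $\{isFresh(d_{R_D'})=TRUE\}$ with their negations, and the $\{d_{BA}=\bot\}$, $\{d_{AB}'=\bot\}$ tests. Applying the abstraction operator $\tau_I$ with $TI1$--$TI6$ and $G28$, $G29$ renames every internal action in $I$ --- all the $c_{C_\ast}$ communications, the encryptions and decryptions, the signings and de-signings, the random-sequence generations, and the guard tests --- into $\tau$, leaving only the external read $r_{C_{AI}}(D)$ and the external emissions $s_{C_{BO}}(D)$ and $s_{C_{BO}}(\bot)$. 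Using $B1$ ($e\cdot\tau=e$) and the guard law $G26$ to absorb the silent steps, the target identity is
\begin{eqnarray}
\tau_I(\partial_H(A\between B\between T))&=&\sum_{D\in\Delta}(r_{C_{AI}}(D)\cdot(s_{C_{BO}}(\bot)+s_{C_{BO}}(D)))\nonumber\\
&&\cdot\tau_I(\partial_H(A\between B\between T))\nonumber
\end{eqnarray}
whose derivability is justified by $RSP$ and whose soundness modulo $\approx_{rbs}$, $\approx_{rbp}$, $\approx_{rbhp}$ is guaranteed by the soundness theorem for $APTC_{G_\tau}$ (Theorem \ref{SAPTC_GABSG}).

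Having obtained this external characterization, I would read off the security guarantees as in the Wide-Mouth Frog and Needham-Schroeder cases: confidentiality holds because $sk_A$, $sk_B$ and the session key $k_{AB}$ are available only to the legitimate parties, so an interceptor cannot recover $D$; resistance to man-in-the-middle is argued as in section \ref{MIM}, reinforced here by the $SIGN_{sk_T}$ signatures binding identities to public keys; and resistance to replay is witnessed by the validity test $isValid(T_A)$ and the freshness tests $isFresh$, which collapse replayed runs onto the $s_{C_{BO}}(\bot)$ summand. I expect the main obstacle to be the bookkeeping of the branching induced by these guards: I must check that $E$ is genuinely a \emph{guarded} linear recursive specification in the sense of Definition \ref{GLRSG}, i.e.\ that abstracting the failure and retry cases with $\tau_I$ introduces no infinite sequence of $\tau$-transitions. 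Should such a $\tau$-loop arise, it would first have to be eliminated via $CFAR$ before the completeness theorem \ref{CCFARG} could be invoked to close the argument.
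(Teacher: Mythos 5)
Your proposal follows essentially the same route as the paper: expand $A\between B\between T$ via $P1$/$RDP$, encapsulate with $\partial_H$, package the result as a guarded linear recursive specification, abstract with $\tau_I$ to obtain $\tau_I(\partial_H(A\between B\between T))=\sum_{D\in\Delta}(r_{C_{AI}}(D)\cdot(s_{C_{BO}}(\bot)+s_{C_{BO}}(D)))\cdot\tau_I(\partial_H(A\between B\between T))$, and then read off confidentiality, man-in-the-middle resistance via the Trent signatures, and replay resistance via $T_A$ and the freshness checks. The paper itself defers the algebraic expansion to section \ref{app} and states the same external characterization and the same six-point security discussion, so your additional care about verifying guardedness and invoking $CFAR$ if a $\tau$-loop arises is a harmless refinement rather than a divergence.
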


\begin{proof}
Based on the above state transitions of the above modules, by use of the algebraic laws of $APTC_G$, we can prove that

$\tau_I(\partial_H(A\between B\between T))=\sum_{D\in\Delta}(r_{C_{AI}}(D)\cdot (s_{C_{BO}}(\bot)+s_{C_{BO}}(D)))\cdot
\tau_I(\partial_H(A\between B\between T))$.

For the details of proof, please refer to section \ref{app}, and we omit it.

That is, the Denning-Sacco protocol in Figure \ref{DSP7} $\tau_I(\partial_H(A\between B\between T))$ can exhibit desired external behaviors:

\begin{enumerate}
  \item For the modeling of confidentiality, it is similar to the protocol in section \ref{confi}, the Denning-Sacco protocol is confidential;
  \item For the man-in-the-middle attack, because $pk_A$ and $pk_B$ are signed by Trent, the protocol would be $\tau_I(\partial_H(A\between B\between T))=\sum_{D\in\Delta}(r_{C_{AI}}(D)\cdot s_{C_{BO}}(\bot))\cdot
  \tau_I(\partial_H(A\between B\between T))$, it is desired, the Denning-Sacco protocol can be against the
  man-in-the-middle attack;
  \item For replay attack, the using of the time stamp $T_A$, random numbers $R_D$ and $R_D'$, makes that $\tau_I(\partial_H(A\between B\between T))=\sum_{D\in\Delta}(r_{C_{AI}}(D)\cdot s_{C_{BO}}(\bot))\cdot
  \tau_I(\partial_H(A\between B\between T))$, it is desired;
  \item Without man-in-the-middle and replay attack, the protocol would be $\tau_I(\partial_H(A\between B\between T))=\sum_{D\in\Delta}(r_{C_{AI}}(D)\cdot s_{C_{BO}}(D))\cdot
  \tau_I(\partial_H(A\between B\between T))$, it is desired;
  \item For the unexpected and non-technical leaking of $sk_{A}$, $sk_{B}$, $k_{AB}$, or they being not strong enough, or Trent being dishonest, they are out of the scope of analyses of security protocols;
  \item For malicious tampering and transmission errors, they are out of the scope of analyses of security protocols.
\end{enumerate}
\end{proof}

\subsection{DASS Protocol}\label{dass}

The DASS (Distributed Authentication Security Service) protocol shown in Figure \ref{DASS7} uses asymmetric keys and symmetric keys for secure communication, that is, the key $k_{AB}$ between Alice and Bob is privately shared to Alice and Bob,
Alice's, Bob's and Trent's public keys $pk_{A}$, $pk_{B}$ and $pk_{T}$ can be publicly gotten.

\begin{figure}
    \centering
    \includegraphics{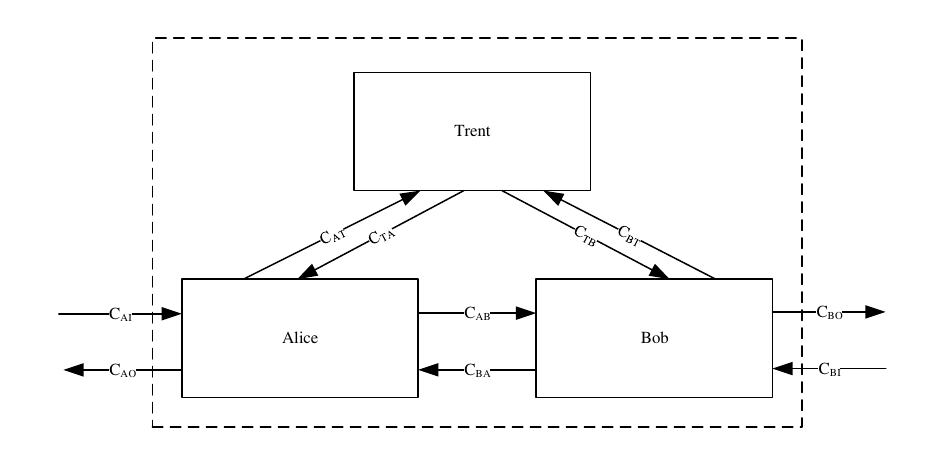}
    \caption{DASS protocol}
    \label{DASS7}
\end{figure}

The process of the protocol is as follows.

\begin{enumerate}
  \item Alice receives some messages $D$ from the outside through the channel $C_{AI}$ (the corresponding reading action is denoted $r_{C_{AI}}(D)$), if $k_{AB}$ is not established,
  she sends $B$ to Trent through the channel $C_{AT}$ (the corresponding sending action is denoted
  $s_{C_{AT}}(B)$);
  \item Trent receives $B$ through the channel $C_{AT}$ (the corresponding reading action is denoted
  $r_{C_{AT}}(B)$), he signs Bob's public key $pk_B$ through the action $sign_{sk_T}(B,pk_B)$, and sends the signature to Alice
  through the channel $C_{TA}$ (the corresponding sending action is denoted $s_{C_{TA}}(SIGN_{sk_T}(B,pk_B))$);
  \item Alice receives the message from Trent through the channel $C_{TA}$ (the corresponding reading action is denoted $r_{C_{TA}}(SIGN_{sk_T}(B,pk_B))$),
  she de-signs $SIGN_{sk_T}(B,pk_B)$ through an action $de\textrm{-}sign_{pk_T}(SIGN_{sk_T}(B,pk_B))$ to get $pk_B$, generates a random session key $k_{AB}$ through an action
  $rsg_{k_{AB}}$, generates a public key $pk_P$ through an action $rsg_{pk_P}$ and generates a private key $sk_P$ through an action $rsg_{sk_P}$, signs $L,A,k_{AB},sk_P,pk_P$
  through an action $sign_{sk_A}(L,A,k_{AB},sk_P,pk_P)$ where $L$ is the life cycle of $k_{AB}$, and encrypts the time stamp $T_A$ by $k_{AB}$ through an action
  $enc_{k_{AB}}(T_A)$, encrypts $k_{AB}$ by $pk_B$ through an action $enc_{pk_B}(k_{AB})$ and then re-encrypts it by $sk_P$ through an action $enc_{sk_P}(ENC_{pk_B}(k_{AB}))$,
  then sends $ENC_{k_{AB}}(T_A), SIGN_{sk_A}(L,A,k_{AB},sk_P,pk_P),ENC_{sk_P}(ENC_{pk_B}(k_{AB}))$
  to Bob through the channel $C_{AB}$ (the corresponding sending action is denoted \\
  $s_{C_{AB}}(ENC_{k_{AB}}(T_A), SIGN_{sk_A}(L,A,k_{AB},sk_P,pk_P),ENC_{sk_P}(ENC_{pk_B}(k_{AB})))$);
  \item Bob receives $ENC_{k_{AB}}(T_A), SIGN_{sk_A}(L,A,k_{AB},sk_P,pk_P),ENC_{sk_P}(ENC_{pk_B}(k_{AB}))$ from Alice (the corresponding reading action is denoted \\
  $r_{C_{AB}}(ENC_{k_{AB}}(T_A), SIGN_{sk_A}(L,A,k_{AB},sk_P,pk_P),ENC_{sk_P}(ENC_{pk_B}(k_{AB})))$), he sends the name of Alice $A$ to Trent through the channel $C_{BT}$
  (the corresponding sending action is denoted $s_{C_{BT}}(A)$);
  \item Trent receives the name of Alice $A$ from Bob through the channel $C_{BT}$ (the corresponding reading action is denoted $r_{C_{BT}}(A)$), signs $A$ and $pk_A$ through an
  action $sign_{sk_T}(A,pk_A)$, and sends the signature $SIGN_{sk_T}(A,pk_A)$ to Bob through the channel $C_{TB}$ (the corresponding sending action is denoted
  $s_{C_{TB}}(SIGN_{sk_T}(A,pk_A))$);
  \item Bob receives the signature from Trent through the channel $C_{TB}$ (the corresponding reading action is denoted $r_{C_{TB}}(SIGN_{sk_T}(A,pk_A))$),
  he de-signs $SIGN_{sk_T}(A,pk_A)$ through an action $de\textrm{-}sign_{pk_T}(SIGN_{sk_T}(A,pk_A))$
  to get $pk_A$, de-signs $SIGN_{sk_A}(L,A,k_{AB},sk_P,pk_P)$ through an action $de\textrm{-}sign{pk_A}(SIGN_{sk_A}(L,A,k_{AB},sk_P,pk_P))$ and decrypts $ENC_{sk_P}(ENC_{pk_B}(k_{AB}))$
  and $ENC_{k_{AB}}(T_A)$ through an action $dec_{pk_P}(ENC_{sk_P}(ENC_{pk_B}(k_{AB})))$ and an action $dec_{sk_B}(ENC_{pk_B}(k_{AB}))$ and an action $dec_{k_{AB}}(T_A)$
  to get $k_{AB}$ and $T_A$, if $isValid(T_A)=TRUE$, he encrypts the time stamp $T_B$ by $k_{AB}$ through an
  action $enc_{k_{AB}}(T_B)$, and sends it to Alice through the channel $C_{BA}$ (the corresponding sending action is denoted $s_{C_{BA}}(ENC_{k_{AB}}(T_B))$), else if
  $isValid(T_A)=FALSE$, he sends $ENC_{k_{AB}}(\bot)$ to Alice through the channel $C_{BA}$ (the corresponding sending action is denoted $s_{C_{BA}}(ENC_{k_{AB}}(\bot))$);
  \item Alice receives $ENC_{k_{AB}}(d_{BA})$ from Bob (the corresponding reading action is denoted $r_{C_{BA}}(ENC_{k_{AB}}(d_{BA}))$), if $d_{BA}=\bot$, she sends $ENC_{k_{AB}}(\bot)$ to Bob through the channel $C_{AB}$
  (the corresponding sending action is denoted $s_{C_{AB}}(ENC_{k_{AB}}(\bot))$); else if $d_{BA}\neq \bot$, if $isFresh(d_{BA})=TRUE$, she generates a random number $R_D$
  through an action $rsg_{R_D}$, encrypts $R_D,D$ by $k_{AB}$ through an
  action $enc_{k_{AB}}(R_D,D)$, and sends it to Bob through the channel $C_{AB}$ (the corresponding sending action is denoted $s_{C_{AB}}(ENC_{k_{AB}}(R_D,D))$), else if
  $isFresh(d_{BA})=FALSE$, he sends $ENC_{k_{AB}}(\bot)$ to Bob through the channel $C_{AB}$ (the corresponding sending action is denoted $s_{C_{AB}}(ENC_{k_{AB}}(\bot))$);
  \item Bob receives $ENC_{k_{AB}}(d_{AB})$ from Alice (the corresponding reading action is denoted $r_{C_{AB}}(ENC_{k_{AB}}(d_{AB}))$), if $d_{AB}=\bot$, he sends $\bot$
  to the outside through the channel $C_{BO}$
  (the corresponding sending action is denoted $s_{C_{BO}}(\bot)$); else if $d_{AB}\neq \bot$, if $isFresh(d_{R_D})=TRUE$, she sends $D$ to the outside through the channel $C_{BO}$ (the corresponding sending action is denoted $s_{C_{BO}}(D)$), else if
  $isFresh(d_{R_D})=FALSE$, he sends $\bot$ to the outside through the channel $C_{BO}$ (the corresponding sending action is denoted $s_{C_{BO}}(\bot)$).
\end{enumerate}

Where $D\in\Delta$, $\Delta$ is the set of data.

Alice's state transitions described by $APTC_G$ are as follows.

$A=\sum_{D\in\Delta}r_{C_{AI}}(D)\cdot A_2$

$A_2=\{k_{AB}=NULL\}\cdot A_3+ \{k_{AB}\neq NULL\}\cdot A_{13}$

$A_3=s_{C_{AT}}(B)\cdot A_4$

$A_4=r_{C_{TA}}(SIGN_{sk_T}(B,pk_B))\cdot A_5$

$A_5=de\textrm{-}sign_{pk_T}(SIGN_{sk_T}(B,pk_B))\cdot A_6$

$A_6=(rsg_{k_{AB}}\parallel rsg_{pk_P}\parallel rsg_{sk_P})\cdot A_7$

$A_7=sign_{sk_A}(L,A,k_{AB},sk_P,pk_P)\cdot A_8$

$A_8=(enc_{sk_P}(ENC_{pk_B}(k_{AB}))\parallel enc_{k_{AB}}(T_A))\cdot A_{9}$

$A_9=s_{C_{AB}}(ENC_{k_{AB}}(T_A), SIGN_{sk_A}(L,A,k_{AB},sk_P,pk_P),ENC_{sk_P}(ENC_{pk_B}(k_{AB})))\cdot A_{10}$

$A_{10}=r_{C_{BA}}(ENC_{k_{AB}}(d_{BA}))\cdot A_{11}$

$A_{11}=\{d_{BA}\neq\bot\}\cdot A_{12}+\{d_{BA}= \bot\}\cdot s_{C_{AB}}(ENC_{k_{AB}}(\bot))\cdot A$

$A_{12}=\{isFresh(d_{BA})=TRUE\}\cdot A_{13}+\{isFresh(d_{BA})=FALSE\}\cdot s_{C_{AB}}(ENC_{k_{AB}}(\bot))\cdot A$

$A_{13}=rsg_{R_D}\cdot A_{14}$

$A_{14}=enc_{k_{AB}}(R_D,D)\cdot A_{15}$

$A_{15}=s_{C_{AB}}(ENC_{k_{AB}}(R_D,D))\cdot A$

Bob's state transitions described by $APTC_G$ are as follows.

$B=\{k_{AB}=NULL\}\cdot B_1+ \{k_{AB}\neq NULL\}\cdot B_{11}$

$B_1=r_{C_{AB}}(ENC_{k_{AB}}(T_A), SIGN_{sk_A}(L,A,k_{AB},sk_P,pk_P),ENC_{sk_P}(ENC_{pk_B}(k_{AB})))\cdot B_2$

$B_2=s_{C_{BT}}(A)\cdot B_3$

$B_3=r_{C_{TB}}(SIGN_{sk_T}(A,pk_A))\cdot B_4$

$B_4=de\textrm{-}sign_{pk_T}(SIGN_{sk_T}(A,pk_A))\cdot B_5$

$B_5=de\textrm{-}sign{pk_A}(SIGN_{sk_A}(L,A,k_{AB},sk_P,pk_P))\cdot B_6$

$B_6=(dec_{pk_P}(ENC_{sk_P}(ENC_{pk_B}(k_{AB})))\parallel dec_{sk_B}(ENC_{pk_B}(k_{AB}))\parallel dec_{k_{AB}}(T_A))\cdot B_7$

$B_7=\{isValid(T_A)=TRUE\}\cdot B_8+\{isValid(T_A)=FALSE\}\cdot s_{C_{BA}}(ENC_{k_{AB}}(\bot))\cdot B_{11}$

$B_8=rsg_{T_B}\cdot B_{9}$

$B_{9}=enc_{k_{AB}}(T_B)\cdot B_{10}$

$B_{10}=s_{C_{BA}}(ENC_{k_{AB}}(d_{BA}))\cdot B_{11}$

$B_{11}=r_{C_{AB}}(ENC_{k_{AB}}(d_{AB}))\cdot B_{12}$

$B_{12}=dec_{k_{AB}}(ENC_{k_{AB}}(d_{AB}))\cdot B_{13}$

$B_{13}=\{d_{AB}=\bot\}\cdot s_{C_{BO}}(\bot)\cdot B+\{d_{AB}\neq\bot\}\cdot B_{14}$

$B_{14}=\{isFresh(d_{R_D})=FLASE\}\cdot s_{C_{BO}}(\bot) B+\{isFresh(d_{R_D})=TRUE\}\cdot B_{15}$

$B_{15}=s_{C_{BO}}(D)\cdot B$

Trent's state transitions described by $APTC_G$ are as follows.

$T=r_{C_{AT}}(B)\cdot T_2$

$T_2=sign_{sk_T}(B,pk_B)\cdot T_3$

$T_3=s_{C_{TA}}(SIGN_{sk_T}(B,pk_B))\cdot T_4$

$T_4=r_{C_{BT}}(A)\cdot T_5$

$T_5=sign_{sk_T}(A,pk_A)\cdot T_6$

$T_6=s_{C_{TA}}(SIGN_{sk_T}(A,pk_A))\cdot T$

The sending action and the reading action of the same type data through the same channel can communicate with each other, otherwise, will cause a deadlock $\delta$. We define the following
communication functions.

$\gamma(r_{C_{AT}}(B),s_{C_{AT}}(B))\triangleq c_{C_{AT}}(B)$

$\gamma(r_{C_{BT}}(A),s_{C_{AT}}(A))\triangleq c_{C_{AT}}(A)$

$\gamma(r_{C_{TA}}(SIGN_{sk_T}(B,pk_B)),s_{C_{TA}}(SIGN_{sk_T}(B,pk_B)))\\
\triangleq c_{C_{TA}}(SIGN_{sk_T}(B,pk_B))$

$\gamma(r_{C_{TB}}(SIGN_{sk_T}(A,pk_A)),s_{C_{TB}}(SIGN_{sk_T}(A,pk_A)))\\
\triangleq c_{C_{TB}}(SIGN_{sk_T}(A,pk_A))$

$\gamma(r_{C_{AB}}(ENC_{k_{AB}}(T_A), SIGN_{sk_A}(L,A,k_{AB},sk_P,pk_P),ENC_{sk_P}(ENC_{pk_B}(k_{AB}))),\\
s_{C_{AB}}(ENC_{k_{AB}}(T_A), SIGN_{sk_A}(L,A,k_{AB},sk_P,pk_P),ENC_{sk_P}(ENC_{pk_B}(k_{AB})))\\
\triangleq c_{C_{AB}}(ENC_{k_{AB}}(T_A), SIGN_{sk_A}(L,A,k_{AB},sk_P,pk_P),ENC_{sk_P}(ENC_{pk_B}(k_{AB})))$

$\gamma(r_{C_{BA}}(ENC_{k_{AB}}(d_{BA})),s_{C_{BA}}(ENC_{k_{AB}}(d_{BA})))\triangleq c_{C_{BA}}(ENC_{k_{AB}}(d_{BA}))$

$\gamma(r_{C_{AB}}(ENC_{k_{AB}}(d_{AB})),s_{C_{AB}}(ENC_{k_{AB}}(d_{AB})))\triangleq c_{C_{AB}}(ENC_{k_{AB}}(d_{AB}))$

Let all modules be in parallel, then the protocol $A\quad B\quad T$ can be presented by the following process term.

$$\tau_I(\partial_H(\Theta(A\between B\between T)))=\tau_I(\partial_H(A\between B\between T))$$

where $H=\{r_{C_{AT}}(B),s_{C_{AT}}(B),r_{C_{BT}}(A),s_{C_{BT}}(A),r_{C_{BA}}(ENC_{k_{AB}}(d_{BA})),s_{C_{BA}}(ENC_{k_{AB}}(d_{BA})),\\
r_{C_{AB}}(ENC_{k_{AB}}(d_{AB})),s_{C_{AB}}(ENC_{k_{AB}}(d_{AB})),\\
r_{C_{TA}}(SIGN_{sk_T}(B,pk_B)),s_{C_{TA}}(SIGN_{sk_T}(B,pk_B)),\\
r_{C_{TB}}(SIGN_{sk_T}(A,pk_A)),s_{C_{TB}}(SIGN_{sk_T}(A,pk_A)),\\
r_{C_{AB}}(ENC_{k_{AB}}(T_A), SIGN_{sk_A}(L,A,k_{AB},sk_P,pk_P),ENC_{sk_P}(ENC_{pk_B}(k_{AB}))),\\
s_{C_{AB}}(ENC_{k_{AB}}(T_A), SIGN_{sk_A}(L,A,k_{AB},sk_P,pk_P),ENC_{sk_P}(ENC_{pk_B}(k_{AB})))|D\in\Delta\}$,

$I=\{c_{C_{AT}}(B),c_{C_{BT}}(A),c_{C_{BA}}(ENC_{k_{AB}}(d_{BA})),c_{C_{AB}}(ENC_{k_{AB}}(d_{AB})),\\
c_{C_{TA}}(SIGN_{sk_T}(B,pk_B)),c_{C_{TB}}(SIGN_{sk_T}(A,pk_A))\\
c_{C_{AB}}(ENC_{k_{AB}}(T_A), SIGN_{sk_A}(L,A,k_{AB},sk_P,pk_P),ENC_{sk_P}(ENC_{pk_B}(k_{AB}))),\\
\{k_{AB}=NULL\},\{k_{AB}\neq NULL\},de\textrm{-}sign_{pk_T}(SIGN_{sk_T}(B,pk_B)),\\
rsg_{k_{AB}},rsg_{pk_p},rsg_{sk_P},sign_{sk_A}(L,A,k_{AB},sk_P,pk_P),enc_{sk_P}(ENC_{pk_B}(k_{AB})), enc_{k_{AB}}(T_A),\\
\{isFresh(d_{BA})=TRUE\},\{isFresh(d_{BA})=FALSE\},\{d_{BA}\neq\bot\},\{d_{BA}=\bot\},\\
rsg_{R_D},enc_{k_{AB}}(R_D,D),de\textrm{-}sign_{pk_T}(SIGN_{sk_T}(A,pk_A)),\\
dec_{sk_B}(ENC_{pk_B}(SIGN_{sk_A}(L,A,k_{AB},sk_P,pk_P))),dec_{pk_P}(ENC_{sk_P}(ENC_{pk_B}(k_{AB}))),\\
dec_{sk_B}(ENC_{pk_B}(k_{AB})), dec_{k_{AB}}(T_A),\{isValid(T_A)=TRUE\},\\
\{isValid(T_A)=FALSE\},rsg_{R_D},enc_{k_{AB}}(R_D),\\
dec_{k_{AB}}(ENC_{k_{AB}}(d_{AB})),\{d_{AB}'=\bot\},\{d_{AB}\neq\bot\},\\
\{isFresh(d_{R_D})=TRUE\},\{isFresh(d_{R_D})=FLASE\},sign_{sk_T}(A,pk_A),sign_{sk_T}(B,pk_B)|D\in\Delta\}$.

Then we get the following conclusion on the protocol.

\begin{theorem}
The DASS protocol in Figure \ref{DASS7} is secure.
\end{theorem}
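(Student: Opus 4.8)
The plan is to follow exactly the verification template established in Section \ref{app} for the ABP protocol and reused throughout this chapter (Yahalom, Kerberos, Otway-Rees, Denning-Sacco). First I would expand the whole-parallel term $A\between B\between T$ by repeatedly applying axiom $P1$ ($x\between y = x\parallel y + x\mid y$) together with $RDP$ to unfold the recursive definitions of Alice, Bob and Trent. The key observation is that only matching send/receive pairs on the same channel (e.g.\ $s_{C_{AT}}(B)$ with $r_{C_{AT}}(B)$, $s_{C_{TA}}(SIGN_{sk_T}(B,pk_B))$ with $r_{C_{TA}}(SIGN_{sk_T}(B,pk_B))$, and so on through every $\gamma$ defined above) can synchronise via the communication merge $\mid$, each yielding the corresponding $c$-action; every other interleaving produced by $\parallel$ and $\mid$ pairs a send or receive with a non-complementary action and is therefore sent to $\delta$ by $C7$, $C8$, $P7$, $P8$ and then absorbed by $A6$, $A7$ once $\partial_H$ is applied.

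Next I would introduce a guarded linear recursive specification $E$ with one recursion variable for each reachable global control state of the three principals, in the same style as the specification $\{X_1,\dots\}$ written out for the first ABP proof. Because the protocol becomes sequential once the handshake is fixed (Alice drives Trent for $pk_B$, sends the signed and nested-encrypted session key to Bob, Bob consults Trent for $pk_A$, and then the validity and freshness tests gate the final data transfer), each equation has the shape of a single $c$-communication or local cryptographic action followed by the next variable, with the two $\bot$-branches appearing as alternative summands at the check $isValid(T_A)$ and at the $isFresh$ checks. I would then push $\partial_H$ through this specification using $D4$, $D5$, $D6$, discharging all blocked handshake actions and leaving only the legal $c$-communications and local actions, and close the loop by $RSP$.

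After that I would apply the abstraction operator $\tau_I$. Since $I$ collects every internal $c$-communication, every cryptographic action ($sign$, $de\textrm{-}sign$, $enc$, $dec$, $rsg$) and every guard, axioms $TI1$--$TI6$ together with $G28$, $G29$ rename all of these to $\tau$, and the silent-step laws $B1$, $B2$, $G26$, $G27$ collapse the resulting $\tau$-prefixed chains. Invoking Theorem \ref{CCFARG} and, where a cluster of internal steps must be contracted, $CFAR$, I expect to obtain
\begin{eqnarray}
\tau_I(\partial_H(A\between B\between T))=\sum_{D\in\Delta}(r_{C_{AI}}(D)\cdot (s_{C_{BO}}(\bot)+s_{C_{BO}}(D)))\cdot\tau_I(\partial_H(A\between B\between T)),\nonumber
\end{eqnarray}
which is the same desired external form as in the Denning-Sacco and Neuman-Stubblebine cases.

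The hard part will be the sheer bookkeeping: DASS has the largest state space of the chapter, with two separate Trent consultations, the nested encryption $ENC_{sk_P}(ENC_{pk_B}(k_{AB}))$ that must be peeled in the correct order on Bob's side, and several parallel local steps such as $(rsg_{k_{AB}}\parallel rsg_{pk_P}\parallel rsg_{sk_P})$ and $(dec_{pk_P}(ENC_{sk_P}(ENC_{pk_B}(k_{AB})))\parallel dec_{sk_B}(ENC_{pk_B}(k_{AB}))\parallel dec_{k_{AB}}(T_A))$, all of which must be shown to reduce correctly under $P4$--$P6$ and the extended communication merge of Section \ref{ec}. Once the external behavior is established, the security claim follows as before: confidentiality reduces to the Section \ref{confi} argument because $sk_A$, $sk_B$, the ephemeral pair $pk_P/sk_P$ and $k_{AB}$ are never exposed; resistance to the man-in-the-middle attack follows because $pk_A$ and $pk_B$ are delivered under Trent's signature, as in Section \ref{MIM}; and resistance to replay follows from the time stamps $T_A$, $T_B$ and the freshness tests, which force any spurious branch into the $s_{C_{BO}}(\bot)$ summand. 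The items outside the threat model (key compromise, a dishonest Trent, tampering and transmission errors) are excluded exactly as in the earlier theorems.
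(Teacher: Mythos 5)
Your proposal follows essentially the same route as the paper: the paper's own proof simply asserts the expansion $\tau_I(\partial_H(A\between B\between T))=\sum_{D\in\Delta}(r_{C_{AI}}(D)\cdot (s_{C_{BO}}(\bot)+s_{C_{BO}}(D)))\cdot\tau_I(\partial_H(A\between B\between T))$, defers the algebraic details to the ABP template of section \ref{app}, and then gives the same itemized security discussion (confidentiality as in section \ref{confi}, man-in-the-middle resistance via Trent's signatures as in section \ref{MIM}, replay resistance via the time stamps and freshness checks, with key compromise and tampering excluded). Your write-up just makes explicit the bookkeeping the paper omits, so it is consistent with the paper's argument.
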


\begin{proof}
Based on the above state transitions of the above modules, by use of the algebraic laws of $APTC_G$, we can prove that

$\tau_I(\partial_H(A\between B\between T))=\sum_{D\in\Delta}(r_{C_{AI}}(D)\cdot (s_{C_{BO}}(\bot)+s_{C_{BO}}(D)))\cdot
\tau_I(\partial_H(A\between B\between T))$.

For the details of proof, please refer to section \ref{app}, and we omit it.

That is, the DASS protocol in Figure \ref{DASS7} $\tau_I(\partial_H(A\between B\between T))$ can exhibit desired external behaviors:

\begin{enumerate}
  \item For the modeling of confidentiality, it is similar to the protocol in section \ref{confi}, the DASS protocol is confidential;
  \item For the man-in-the-middle attack, because $pk_A$ and $pk_B$ are signed by Trent, the protocol would be $\tau_I(\partial_H(A\between B\between T))=\sum_{D\in\Delta}(r_{C_{AI}}(D)\cdot s_{C_{BO}}(\bot))\cdot
  \tau_I(\partial_H(A\between B\between T))$, it is desired, the DASS protocol can be against the
  man-in-the-middle attack;
  \item For replay attack, the using of the time stamp $T_A$, $T_B$, and random number $R_D$, makes that $\tau_I(\partial_H(A\between B\between T))=\sum_{D\in\Delta}(r_{C_{AI}}(D)\cdot s_{C_{BO}}(\bot))\cdot
  \tau_I(\partial_H(A\between B\between T))$, it is desired;
  \item Without man-in-the-middle and replay attack, the protocol would be $\tau_I(\partial_H(A\between B\between T))=\sum_{D\in\Delta}(r_{C_{AI}}(D)\cdot s_{C_{BO}}(D))\cdot
  \tau_I(\partial_H(A\between B\between T))$, it is desired;
  \item For the unexpected and non-technical leaking of $sk_{A}$, $sk_{B}$, $k_{AB}$, or they being not strong enough, or Trent being dishonest, they are out of the scope of analyses of security protocols;
  \item For malicious tampering and transmission errors, they are out of the scope of analyses of security protocols.
\end{enumerate}
\end{proof}

\subsection{Woo-Lam Protocol}\label{wlp}

The Woo-Lam protocol shown in Figure \ref{WLP7} uses asymmetric keys and symmetric keys for secure communication, that is, the key $k_{AB}$ between Alice and Bob is privately shared to Alice and Bob,
Alice's, Bob's and Trent's public keys $pk_{A}$, $pk_{B}$ and $pk_{T}$ can be publicly gotten.

\begin{figure}
    \centering
    \includegraphics{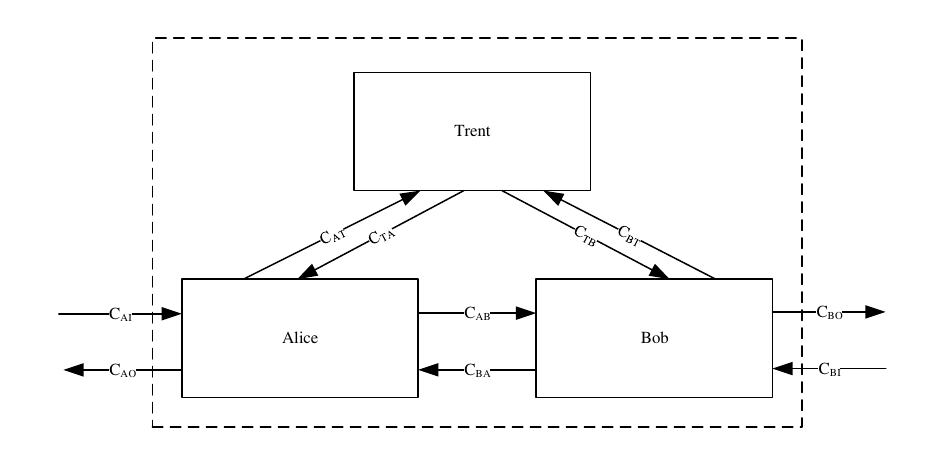}
    \caption{Woo-Lam protocol}
    \label{WLP7}
\end{figure}

The process of the protocol is as follows.

\begin{enumerate}
  \item Alice receives some messages $D$ from the outside through the channel $C_{AI}$ (the corresponding reading action is denoted $r_{C_{AI}}(D)$), if $k_{AB}$ is not established,
  she sends $A,B$ to Trent through the channel $C_{AT}$ (the corresponding sending action is denoted
  $s_{C_{AT}}(A,B)$);
  \item Trent receives $A,B$ through the channel $C_{AT}$ (the corresponding reading action is denoted
  $r_{C_{AT}}(A,B)$), he signs Bob's public key $pk_B$ through the action $sign_{sk_T}(pk_B)$, and sends the signature to Alice
  through the channel $C_{TA}$ (the corresponding sending action is denoted $s_{C_{TA}}(SIGN_{sk_T}(pk_B))$);
  \item Alice receives the message from Trent through the channel $C_{TA}$ (the corresponding reading action is denoted $r_{C_{TA}}(SIGN_{sk_T}(pk_B))$),
  she de-signs $SIGN_{sk_T}(pk_B)$ through an action $de\textrm{-}sign_{pk_T}(SIGN_{sk_T}(pk_B))$ to get $pk_B$, generates a random number $R_A$ through an action $rsg_{R_A}$
  and encrypts $A,R_A$ by $pk_B$
  through an action $enc_{pk_B}(A,R_A)$, and sends $ENC_{pk_B}(A,R_A)$
  to Bob through the channel $C_{AB}$ (the corresponding sending action is denoted $s_{C_{AB}}(ENC_{pk_B}(A,R_A))$);
  \item Bob receives $ENC_{pk_B}(A,R_A)$ from Alice (the corresponding reading action is denoted
  $r_{C_{AB}}(ENC_{pk_B}(A,R_A))$), he decrypts $ENC_{pk_B}(A,R_A)$ through an action $dec_{sk_B}(ENC_{pk_B}(A,R_A))$ to get $A$ and $R_A$, encrypts $R_A$ by $pk_T$ through an action
  $enc_{pk_T}(R_A)$, then sends $A, B, ENC_{pk_T}(R_A)$ to Trent through the channel $C_{BT}$
  (the corresponding sending action is denoted \\
  $s_{C_{BT}}(A, B, ENC_{pk_T}(R_A))$);
  \item Trent receives $A, B, ENC_{pk_T}(R_A)$ from Bob through the channel $C_{BT}$ (the corresponding reading action is denoted $r_{C_{BT}}(A, B, ENC_{pk_T}(R_A))$),
  he decrypts the message through an action $dec_{sk_T}(ENC_{pk_T}(R_A))$, signs $pk_A$ through an
  action $sign_{sk_T}(pk_A)$, generates a random session key $k_{AB}$ through an action $rsg_{k_{AB}}$ and signs $R_A,k_{AB},A,B$ through an action $sign_{sk_T}(R_A,k_{AB},A,B)$,
  encrypts $SIGN_{sk_T}(R_A,k_{AB},A,B)$ through an action $enc_{pk_B}(SIGN_{sk_T}(R_A,k_{AB},A,B))$ and sends the them to Bob through the channel $C_{TB}$
  (the corresponding sending action is denoted \\
  $s_{C_{TB}}(SIGN_{sk_T}(pk_A),ENC_{pk_B}(SIGN_{sk_T}(R_A,k_{AB},A,B)))$);
  \item Bob receives the signatures from Trent through the channel $C_{TB}$ (the corresponding reading action is denoted $r_{C_{TB}}(SIGN_{sk_T}(pk_A),ENC_{pk_B}(SIGN_{sk_T}(R_A,k_{AB},A,B)))$),
  he de-signs $SIGN_{sk_T}(pk_A)$ through an action $de\textrm{-}sign_{pk_T}(SIGN_{sk_T}(pk_A))$
  to get $pk_A$, decrypts $ENC_{pk_B}(SIGN_{sk_T}(R_A,k_{AB},A,B))$ through an action $dec_{sk_B}(ENC_{pk_B}(SIGN_{sk_T}(R_A,k_{AB},A,B)))$, generates a random number $R_B$ through an action $rsg_{R_B}$, encrypts them through an action
  $enc_{pk_A}(SIGN_{sk_T}(R_A,k_{AB},A,B),R_B)$ and sends $ENC_{pk_A}(SIGN_{sk_T}(R_A,k_{AB},A,B),R_B)$ to Alice through the channel $C_{BA}$ (the corresponding sending action is
  denoted \\
  $s_{C_{BA}}(ENC_{pk_A}(SIGN_{sk_T}(R_A,k_{AB},A,B),R_B))$);
  \item Alice receives $ENC_{pk_A}(SIGN_{sk_T}(d_{R_A},k_{AB},A,B),R_B)$ from Bob (the corresponding reading action is denoted $r_{C_{BA}}(ENC_{pk_A}(SIGN_{sk_T}(d_{R_A},k_{AB},A,B),R_B))$),
  she decrypts the message through an action $dec_{sk_A}(ENC_{pk_A}(SIGN_{sk_T}(R_A,k_{AB},A,B),R_B))$, \\de-sign $SIGN_{sk_T}(R_A,k_{AB},A,B)$ through an action
  $de\textrm{-}sign_{pk_T}(SIGN_{sk_T}(R_A,k_{AB},A,B))$,
  if $d_{R_A}\neq R_A$, she sends $ENC_{k_{AB}}(\bot)$ to Bob through the channel $C_{AB}$
  (the corresponding sending action is denoted $s_{C_{AB}}(ENC_{k_{AB}}(\bot))$); else if $d_{R_A}=R_A$, encrypts $R_B,D$ by $k_{AB}$ through an
  action $enc_{k_{AB}}(R_B,D)$, and sends it to Bob through the channel $C_{AB}$ (the corresponding sending action is denoted $s_{C_{AB}}(ENC_{k_{AB}}(R_B,D))$);
  \item Bob receives $ENC_{k_{AB}}(d_{AB})$ from Alice (the corresponding reading action is denoted $r_{C_{AB}}(ENC_{k_{AB}}(d_{AB}))$), if $d_{AB}=\bot$, he sends $\bot$
  to the outside through the channel $C_{BO}$
  (the corresponding sending action is denoted $s_{C_{BO}}(\bot)$); else if $d_{AB}\neq \bot$, if $d_{R_B}=R_B$, she sends $D$ to the outside through the channel $C_{BO}$ (the corresponding sending action is denoted $s_{C_{BO}}(D)$), else if
  $d_{R_B}\neq R_B$, he sends $\bot$ to the outside through the channel $C_{BO}$ (the corresponding sending action is denoted $s_{C_{BO}}(\bot)$).
\end{enumerate}

Where $D\in\Delta$, $\Delta$ is the set of data.

Alice's state transitions described by $APTC_G$ are as follows.

$A=\sum_{D\in\Delta}r_{C_{AI}}(D)\cdot A_2$

$A_2=\{k_{AB}=NULL\}\cdot A_3+ \{k_{AB}\neq NULL\}\cdot A_9$

$A_3=s_{C_{AT}}(A,B)\cdot A_4$

$A_4=r_{C_{TA}}(SIGN_{sk_T}(pk_B))\cdot A_5$

$A_5=de\textrm{-}sign_{pk_T}(SIGN_{sk_T}(pk_B))\cdot A_6$

$A_6=rsg_{R_A}\cdot A_7$

$A_7=enc_{sk_P}(A,R_A)\cdot A_{8}$

$A_8=s_{C_{AB}}(ENC_{sk_P}(A,R_A))\cdot A_{9}$

$A_{9}=r_{C_{BA}}(ENC_{pk_A}(SIGN_{sk_T}(d_{R_A},k_{AB},A,B),R_B))\cdot A_{10}$

$A_{10}=\{d_{R_A}=R_A\}\cdot A_{11}+\{d_{R_A}\neq R_A\}\cdot s_{C_{AB}}(ENC_{k_{AB}}(\bot))\cdot A$

$A_{11}=enc_{k_{AB}}(R_B,D)\cdot A_{12}$

$A_{12}=s_{C_{AB}}(ENC_{k_{AB}}(R_B,D))\cdot A$

Bob's state transitions described by $APTC_G$ are as follows.

$B=\{k_{AB}=NULL\}\cdot B_1+ \{k_{AB}\neq NULL\}\cdot B_{10}$

$B_1=r_{C_{AB}}(ENC_{sk_P}(A,R_A))\cdot B_2$

$B_2=dec_{sk_B}(ENC_{pk_B}(A,R_A))\cdot B_3$

$B_3=s_{C_{BT}}(A, B, ENC_{pk_T}(R_A))\cdot B_4$

$B_4=r_{C_{TB}}(SIGN_{sk_T}(pk_A),ENC_{pk_B}(SIGN_{sk_T}(R_A,k_{AB},A,B)))\cdot B_5$

$B_5=de\textrm{-}sign_{pk_T}(SIGN_{sk_T}(pk_A))\cdot B_6$

$B_6=dec_{sk_B}(ENC_{pk_B}(SIGN_{sk_T}(R_A,k_{AB},A,B)))\cdot B_7$

$B_7=rsg_{R_B}\cdot B_8$

$B_8=enc_{pk_A}(SIGN_{sk_T}(R_A,k_{AB},A,B),R_B)\cdot B_{9}$

$B_9=s_{C_{BA}}(ENC_{pk_A}(SIGN_{sk_T}(R_A,k_{AB},A,B),R_B))\cdot B_{10}$

$B_{10}=r_{C_{AB}}(ENC_{k_{AB}}(d_{AB}))\cdot B_{11}$

$B_{11}=dec_{k_{AB}}(ENC_{k_{AB}}(d_{AB}))\cdot B_{12}$

$B_{12}=\{d_{AB}=\bot\}\cdot s_{C_{BO}}(\bot)\cdot B+\{d_{AB}\neq\bot\}\cdot B_{13}$

$B_{13}=\{d_{R_B}\neq R_B\}\cdot s_{C_{BO}}(\bot) B+\{d_{R_B}=R_B\}\cdot B_{14}$

$B_{14}=s_{C_{BO}}(D)\cdot B$

Trent's state transitions described by $APTC_G$ are as follows.

$T=r_{C_{AT}}(A,B)\cdot T_2$

$T_2=sign_{sk_T}(pk_B)\cdot T_3$

$T_3=s_{C_{TA}}(SIGN_{sk_T}(pk_B))\cdot T_4$

$T_4=r_{C_{BT}}(A, B, ENC_{pk_T}(R_A))\cdot T_5$

$T_5=dec_{sk_T}(ENC_{pk_T}(R_A))\cdot T_6$

$T_6=sign_{sk_T}(pk_A)\cdot T_7$

$T_7=rsg_{k_{AB}}\cdot T_8$

$T_8=sign_{sk_T}(R_A,k_{AB},A,B)\cdot T_9$

$T_9=enc_{pk_B}(SIGN_{sk_T}(R_A,k_{AB},A,B))\cdot T_{10}$

$T_{10}=s_{C_{TB}}(SIGN_{sk_T}(pk_A),ENC_{pk_B}(SIGN_{sk_T}(R_A,k_{AB},A,B)))\cdot T$

The sending action and the reading action of the same type data through the same channel can communicate with each other, otherwise, will cause a deadlock $\delta$. We define the following
communication functions.

$\gamma(r_{C_{AT}}(A,B),s_{C_{AT}}(A,B))\triangleq c_{C_{AT}}(A,B)$

$\gamma(r_{C_{BT}}(A, B, ENC_{pk_T}(R_A)),s_{C_{AT}}(A, B, ENC_{pk_T}(R_A)))\triangleq c_{C_{AT}}(A, B, ENC_{pk_T}(R_A))$

$\gamma(r_{C_{TA}}(SIGN_{sk_T}(pk_B)),s_{C_{TA}}(SIGN_{sk_T}(pk_B)))\\
\triangleq c_{C_{TA}}(SIGN_{sk_T}(pk_B))$

$\gamma(r_{C_{TB}}(SIGN_{sk_T}(pk_A),ENC_{pk_B}(SIGN_{sk_T}(R_A,k_{AB},A,B))),\\
s_{C_{TB}}(SIGN_{sk_T}(pk_A),ENC_{pk_B}(SIGN_{sk_T}(R_A,k_{AB},A,B))))\\
\triangleq c_{C_{TB}}(SIGN_{sk_T}(A,pk_A))$

$\gamma(r_{C_{AB}}(ENC_{sk_P}(A,R_A)),s_{C_{AB}}(ENC_{sk_P}(A,R_A)))\\
\triangleq c_{C_{AB}}(ENC_{sk_P}(A,R_A))$

$\gamma(r_{C_{BA}}(ENC_{pk_A}(SIGN_{sk_T}(d_{R_A},k_{AB},A,B),R_B)),\\
s_{C_{BA}}(ENC_{pk_A}(SIGN_{sk_T}(d_{R_A},k_{AB},A,B),R_B)))\\
\triangleq c_{C_{BA}}(ENC_{pk_A}(SIGN_{sk_T}(d_{R_A},k_{AB},A,B),R_B))$

$\gamma(r_{C_{AB}}(ENC_{k_{AB}}(R_B,D)),s_{C_{AB}}(ENC_{k_{AB}}(R_B,D)))\triangleq c_{C_{AB}}(ENC_{k_{AB}}(R_B,D))$

Let all modules be in parallel, then the protocol $A\quad B\quad T$ can be presented by the following process term.

$$\tau_I(\partial_H(\Theta(A\between B\between T)))=\tau_I(\partial_H(A\between B\between T))$$

where $H=\{r_{C_{AT}}(A,B),s_{C_{AT}}(A,B),r_{C_{BT}}(A, B, ENC_{pk_T}(R_A)),s_{C_{AT}}(A, B, ENC_{pk_T}(R_A)),\\
r_{C_{TA}}(SIGN_{sk_T}(pk_B)),s_{C_{TA}}(SIGN_{sk_T}(pk_B)),\\
r_{C_{TB}}(SIGN_{sk_T}(pk_A),ENC_{pk_B}(SIGN_{sk_T}(R_A,k_{AB},A,B))),\\
s_{C_{TB}}(SIGN_{sk_T}(pk_A),ENC_{pk_B}(SIGN_{sk_T}(R_A,k_{AB},A,B))),\\
r_{C_{AB}}(ENC_{sk_P}(A,R_A)),s_{C_{AB}}(ENC_{sk_P}(A,R_A)),\\
r_{C_{BA}}(ENC_{pk_A}(SIGN_{sk_T}(d_{R_A},k_{AB},A,B),R_B)),\\
s_{C_{BA}}(ENC_{pk_A}(SIGN_{sk_T}(d_{R_A},k_{AB},A,B),R_B)),\\
r_{C_{AB}}(ENC_{k_{AB}}(R_B,D)),s_{C_{AB}}(ENC_{k_{AB}}(R_B,D))|D\in\Delta\}$,

$I=\{c_{C_{AT}}(A,B),c_{C_{AT}}(A, B, ENC_{pk_T}(R_A)),c_{C_{TA}}(SIGN_{sk_T}(pk_B)),\\
c_{C_{TB}}(SIGN_{sk_T}(A,pk_A)),c_{C_{AB}}(ENC_{sk_P}(A,R_A)),\\
c_{C_{BA}}(ENC_{pk_A}(SIGN_{sk_T}(d_{R_A},k_{AB},A,B),R_B)),c_{C_{AB}}(ENC_{k_{AB}}(R_B,D)),\\
\{k_{AB}=NULL\},\{k_{AB}\neq NULL\},de\textrm{-}sign_{pk_T}(SIGN_{sk_T}(pk_B)),\\
rsg_{R_A},enc_{sk_P}(A,R_A),\{d_{R_A}=R_A\},\{d_{R_A}\neq R_A\},enc_{k_{AB}}(R_B,D),\\
dec_{sk_B}(ENC_{pk_B}(A,R_A)),dec_{sk_B}(ENC_{pk_B}(A,R_A)),de\textrm{-}sign_{pk_T}(SIGN_{sk_T}(pk_A)),\\
dec_{sk_B}(ENC_{pk_B}(SIGN_{sk_T}(R_A,k_{AB},A,B))),rsg_{R_B},enc_{pk_A}(SIGN_{sk_T}(R_A,k_{AB},A,B),R_B),\\
dec_{k_{AB}}(ENC_{k_{AB}}(d_{AB})),\{d_{AB}=\bot\},\{d_{AB}\neq\bot\},\\
\{d_{R_B}= R_B\},\{d_{R_B}\neq R_B\},sign_{sk_T}(pk_B),dec_{sk_T}(ENC_{pk_T}(R_A)),\\
sign_{sk_T}(pk_A),rsg_{k_{AB}},sign_{sk_T}(R_A,k_{AB},A,B),enc_{pk_B}(SIGN_{sk_T}(R_A,k_{AB},A,B))|D\in\Delta\}$.

Then we get the following conclusion on the protocol.

\begin{theorem}
The Woo-Lam protocol in Figure \ref{WLP7} is secure.
\end{theorem}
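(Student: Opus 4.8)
The plan is to follow the same computational template used for every protocol in this chapter and worked out in detail for the ABP protocol in section \ref{app}: expand the parallel composition $A \between B \between T$ into a guarded linear recursive specification, encapsulate the unmatched communication actions with $\partial_H$, abstract the internal cryptographic and communication actions with $\tau_I$, and read off the resulting external behavior. First I would apply axiom $P1$ to rewrite $A\between B\between T$ into the sum of the parallel compositions $\parallel$ and the communication merges $\mid$, and then use $RDP$ to unfold the recursive definitions of $A$, $B$, $T$ one step at a time. Using the axioms for $\parallel$ and the communication merge $\mid$ together with the deadlock laws $A6$, $A7$, $C7$, $C8$, each unmatched send/receive pairing collapses to $\delta$ and is absorbed, while each matching pair fires the corresponding $c_C(\cdot)$ action; the guard axioms $G1$--$G25$ resolve the conditional branches such as $\{d_{R_A}=R_A\}$ versus $\{d_{R_A}\neq R_A\}$.

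Next I would collect the intermediate configurations into a guarded linear recursive specification $E$ with recursion variables $X_1, X_2, \dots$ (one per reachable joint state of Alice, Bob, and Trent), checking that it is guarded in the sense of Definition \ref{GLRSG} so that the recursion machinery applies. Applying $\partial_H$ with $H$ as defined above blocks every bare $r_C$ and $s_C$ action of the protocol, forcing the intended handshake, and applying $\tau_I$ with $I$ as defined renames all the internal cryptographic operations ($enc$, $dec$, $sign$, $de\textrm{-}sign$, $rsg$), the guard tests, and the communication actions $c_C$ into $\tau$. Then, using the soundness of $APTC_{G_\tau}$ with guarded linear recursion (Theorem \ref{SAPTC_GABSG}), $CFAR$ to collapse the $\tau$-cycles introduced by $\tau_I$, and $RSP$, I would establish
\begin{eqnarray}
\tau_I(\partial_H(A\between B\between T)) = \sum_{D\in\Delta}(r_{C_{AI}}(D)\cdot (s_{C_{BO}}(\bot)+s_{C_{BO}}(D)))\cdot \tau_I(\partial_H(A\between B\between T)). \nonumber
\end{eqnarray}

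Finally I would interpret this normal form as in the other protocols: the summand $s_{C_{BO}}(D)$ corresponds to a run free of interference, while $s_{C_{BO}}(\bot)$ records detection of the attacker, so confidentiality follows exactly as in section \ref{confi}, and resistance to the man-in-the-middle and replay attacks follows from Trent's signatures on $pk_A$, $pk_B$ together with the nonces $R_A$, $R_B$, exactly as in section \ref{MIM}. I expect the main obstacle to be bookkeeping rather than conceptual: the Woo-Lam protocol has the longest message flow of the chapter, with three-way branching on $d_{R_A}$, $d_{AB}$, and $d_{R_B}$, so the reachable joint state space---and hence the specification $E$---is large, and keeping the guard conditions consistent across the $\partial_H$ and $\tau_I$ rewrites while ensuring that no spurious $\delta$ survives will be the delicate step.
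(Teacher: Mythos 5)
Your proposal follows essentially the same route as the paper: the paper's own proof asserts the identity $\tau_I(\partial_H(A\between B\between T))=\sum_{D\in\Delta}(r_{C_{AI}}(D)\cdot (s_{C_{BO}}(\bot)+s_{C_{BO}}(D)))\cdot \tau_I(\partial_H(A\between B\between T))$ by appeal to the algebraic laws of $APTC_G$, defers the expansion details to the ABP template in section \ref{app}, and then reads off confidentiality and resistance to man-in-the-middle and replay attacks from the signed public keys and the nonces $R_A$, $R_B$, exactly as you describe. Your account is in fact more explicit than the paper's about the intermediate steps (RDP/RSP, CFAR, the guarded linear recursive specification), but it is the same argument.
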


\begin{proof}
Based on the above state transitions of the above modules, by use of the algebraic laws of $APTC_G$, we can prove that

$\tau_I(\partial_H(A\between B\between T))=\sum_{D\in\Delta}(r_{C_{AI}}(D)\cdot (s_{C_{BO}}(\bot)+s_{C_{BO}}(D)))\cdot
\tau_I(\partial_H(A\between B\between T))$.

For the details of proof, please refer to section \ref{app}, and we omit it.

That is, the Woo-Lam protocol in Figure \ref{WLP7} $\tau_I(\partial_H(A\between B\between T))$ can exhibit desired external behaviors:

\begin{enumerate}
  \item For the modeling of confidentiality, it is similar to the protocol in section \ref{confi}, the Woo-Lam protocol is confidential;
  \item For the man-in-the-middle attack, because $pk_A$ and $pk_B$ are signed by Trent, the protocol would be $\tau_I(\partial_H(A\between B\between T))=\sum_{D\in\Delta}(r_{C_{AI}}(D)\cdot s_{C_{BO}}(\bot))\cdot
  \tau_I(\partial_H(A\between B\between T))$, it is desired, the Woo-Lam protocol can be against the
  man-in-the-middle attack;
  \item For replay attack, the using of the random number $R_A$, $R_B$, makes that $\tau_I(\partial_H(A\between B\between T))=\sum_{D\in\Delta}(r_{C_{AI}}(D)\cdot s_{C_{BO}}(\bot))\cdot
  \tau_I(\partial_H(A\between B\between T))$, it is desired;
  \item Without man-in-the-middle and replay attack, the protocol would be $\tau_I(\partial_H(A\between B\between T))=\sum_{D\in\Delta}(r_{C_{AI}}(D)\cdot s_{C_{BO}}(D))\cdot
  \tau_I(\partial_H(A\between B\between T))$, it is desired;
  \item For the unexpected and non-technical leaking of $sk_{A}$, $sk_{B}$, $k_{AB}$, or they being not strong enough, or Trent being dishonest, they are out of the scope of analyses of security protocols;
  \item For malicious tampering and transmission errors, they are out of the scope of analyses of security protocols.
\end{enumerate}
\end{proof}

\newpage\section{Analyses of Other Protocols}\label{aoop}

In this chapter, we will introduce some other useful security protocols, including secret splitting protocols in section \ref{assp}, bit
commitment protocols in section \ref{aobcp1}, anonymous key distribution protocols in section \ref{aoakdp}.

\subsection{Analyses of Secret Splitting Protocols}\label{assp}

The hypothetical secret splitting protocol is shown in Figure \ref{SS8}. Trent receives a message, splits into four parts, and each part is sent to Alice, Bob, Carol and Dave. Then Trent
gathers the four parts from Alice, Bob, Carol and Dave, combines into a message. If the combined message is the original message, then sends out the message.

\begin{figure}
    \centering
    \includegraphics{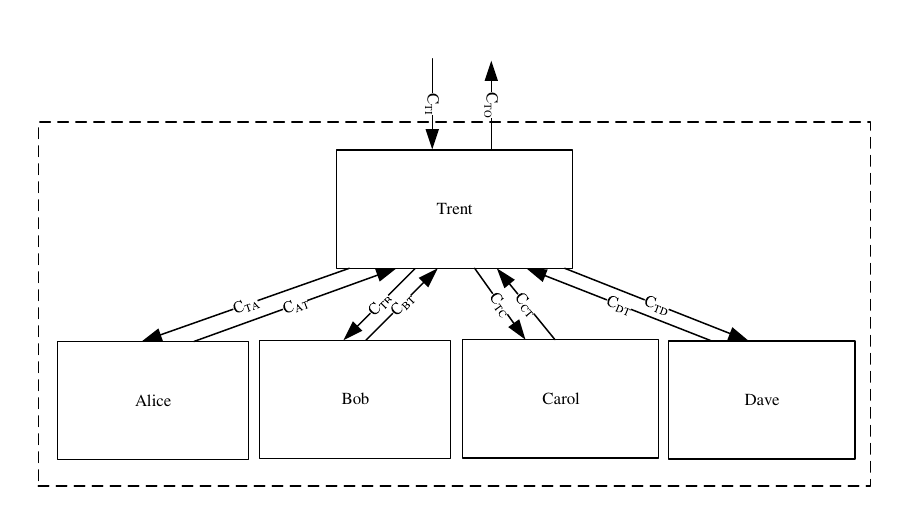}
    \caption{Secret splitting protocol}
    \label{SS8}
\end{figure}

The process of the protocol is as follows.

\begin{enumerate}
  \item Trent receives some messages $D$ from the outside through the channel $C_{TI}$ (the corresponding reading action is denoted $r_{C_{TI}}(D)$), he generates three random numbers
  $R_1,R_2,R_3$ of equal lengths to $D$ through three actions $rsg_{R_1}$, $rsg_{R_2}$ and $rsg_{R_3}$ respectively. Then he does an XOR operation to the data $D$, $R_1$, $R_2$ and $R_3$
  through an XOR action $xor(R_1,R_2,R_3,D)$ to get $R_4=XOR(R_1,R_2,R_3,D)$, he sends $R_1$, $R_2$, $R_3$, $R_4$ to Alice, Bob, Carol, and Dave through the channels $C_{TA}$, $C_{TB}$,
  $C_{TC}$ and $C_{TD}$ respectively (the corresponding sending actions are denoted $s_{C_{TA}}(R_1)$, $s_{C_{TB}}(R_2)$, $s_{C_{TC}}(R_3)$, $s_{C_{TD}}(R_4)$);
  \item Alice receives $R_1$ from Trent through the channel $C_{TA}$ (the corresponding reading action is denoted $r_{C_{TA}}(R_1)$), she may store $R_1$, we assume that she sends $R_1$
  to Trent immediately through the channel $C_{AT}$ (the corresponding sending action is denoted $s_{C_{AT}}(R_1)$);
  \item Bob receives $R_2$ from Trent through the channel $C_{TB}$ (the corresponding reading action is denoted $r_{C_{TB}}(R_2)$), he may store $R_2$, we assume that he sends $R_2$
  to Trent immediately through the channel $C_{BT}$ (the corresponding sending action is denoted $s_{C_{BT}}(R_2)$);
  \item Carol receives $R_3$ from Trent through the channel $C_{TC}$ (the corresponding reading action is denoted $r_{C_{TC}}(R_3)$), he may store $R_3$, we assume that he sends $R_3$
  to Trent immediately through the channel $C_{CT}$ (the corresponding sending action is denoted $s_{C_{CT}}(R_3)$);
  \item Dave receives $R_4$ from Trent through the channel $C_{TD}$ (the corresponding reading action is denoted $r_{C_{TD}}(R_4)$), she may store $R_4$, we assume that he sends $R_4$
  to Trent immediately through the channel $C_{DT}$ (the corresponding sending action is denoted $s_{C_{DT}}(R_4)$);
  \item Trent receives $d_{R_1}$, $d_{R_3}$, $d_{R_3}$, and $d_{R_4}$ from Alice, Bob, Carol and Dave through the channel $C_{AT}$, $C_{BT}$, $C_{CT}$, and $C_{DT}$ respectively
  (the corresponding reading actions are denoted $r_{C_{AT}}(d_{R_1})$, $r_{C_{BT}}(d_{R_2})$, $r_{C_{CT}}(d_{R_3})$, $r_{C_{DT}}(d_{R_4})$), he does an XOR operation to the data $d_{R_1}$,
  $d_{R_2}$, $d_{R_3}$ and $d_{R_4}$ through an XOR action $xor(d_{R_1},d_{R_2},d_{R_3},d_{R_4})$ to get $D'=XOR(d_{R_1},d_{R_2},d_{R_3},d_{R_4})$, if $D=D'$, he sends $D$
  to the outside through the channel $C_{TO}$ (the corresponding sending action is denoted $s_{C_{TO}}(D)$).
\end{enumerate}

Where $D\in\Delta$, $\Delta$ is the set of data.

Trent's state transitions described by $APTC_G$ are as follows.

$T=\sum_{D\in\Delta}r_{C_{TI}}(D)\cdot T_2$

$T_2=(rsg_{R_1}\parallel rsg_{R_2}\parallel rsg_{R_3})\cdot T_3$

$T_3=xor(R_1,R_2,R_3,D)\cdot T_4$

$T_4=(s_{C_{TA}}(R_1)\parallel s_{C_{TB}}(R_2)\parallel s_{C_{TC}}(R_3)\parallel s_{C_{TD}}(R_4))\cdot T_5$

$T_5=(r_{C_{AT}}(d_{R_1})\parallel r_{C_{BT}}(d_{R_2})\parallel r_{C_{CT}}(d_{R_3})\parallel r_{C_{DT}}(d_{R_4}))\cdot T_6$

$T_6=xor(d_{R_1},d_{R_2},d_{R_3},d_{R_4})\cdot T_7$

$T_7=\{D=D'\}\cdot s_{C_{TO}}(D)\cdot T$

Alice's state transitions described by $APTC_G$ are as follows.

$A=r_{C_{TB}}(R_2)\cdot A_2$

$A_2=s_{C_{BT}}(R_2)\cdot A$

Bob's state transitions described by $APTC_G$ are as follows.

$B=r_{C_{TB}}(R_2)\cdot B_2$

$B_2=s_{C_{BT}}(R_2)\cdot B$

Carol's state transitions described by $APTC_G$ are as follows.

$C=r_{C_{TB}}(R_2)\cdot C_2$

$C_2=s_{C_{BT}}(R_2)\cdot C$

Dave's state transitions described by $APTC_G$ are as follows.

$Da=r_{C_{TB}}(R_2)\cdot Da_2$

$Da_2=s_{C_{BT}}(R_2)\cdot Da$

The sending action and the reading action of the same type data through the same channel can communicate with each other, otherwise, will cause a deadlock $\delta$. We define the following
communication functions.

$\gamma(r_{C_{TA}}(R_1),s_{C_{TA}}(R_1))\triangleq c_{C_{TA}}(R_1)$

$\gamma(r_{C_{TB}}(R_2),s_{C_{TB}}(R_2))\triangleq c_{C_{TB}}(R_2)$

$\gamma(r_{C_{TC}}(R_3),s_{C_{TC}}(R_3))\triangleq c_{C_{TC}}(R_3)$

$\gamma(r_{C_{TD}}(R_4),s_{C_{TD}}(R_4))\triangleq c_{C_{TD}}(R_4)$

$\gamma(r_{C_{AT}}(d_{R_1}),s_{C_{AT}}(d_{R_1}))\triangleq c_{C_{AT}}(d_{R_1})$

$\gamma(r_{C_{BT}}(d_{R_2}),s_{C_{BT}}(d_{R_2}))\triangleq c_{C_{BT}}(d_{R_2})$

$\gamma(r_{C_{CT}}(d_{R_3}),s_{C_{CT}}(d_{R_3}))\triangleq c_{C_{CT}}(d_{R_3})$

$\gamma(r_{C_{DT}}(d_{R_4}),s_{C_{DT}}(d_{R_4}))\triangleq c_{C_{DT}}(d_{R_4})$

Let all modules be in parallel, then the protocol $A\quad B\quad C\quad Da\quad T$ can be presented by the following process term.

$$\tau_I(\partial_H(\Theta(A\between B\between C\between Da\between T)))=\tau_I(\partial_H(A\between B\between C\between Da\between T))$$

where $H=\{r_{C_{TA}}(R_1),s_{C_{TA}}(R_1),r_{C_{TB}}(R_2),s_{C_{TB}}(R_2),r_{C_{TC}}(R_3),s_{C_{TC}}(R_3),\\
r_{C_{TD}}(R_4),s_{C_{TD}}(R_4),r_{C_{AT}}(d_{R_1}),s_{C_{AT}}(d_{R_1}),r_{C_{BT}}(d_{R_2}),s_{C_{BT}}(d_{R_2}),\\
r_{C_{CT}}(d_{R_3}),s_{C_{CT}}(d_{R_3}),r_{C_{DT}}(d_{R_4}),s_{C_{DT}}(d_{R_4})|D\in\Delta\}$,

$I=\{c_{C_{TA}}(R_1),c_{C_{TB}}(R_2),c_{C_{TC}}(R_3),c_{C_{TD}}(R_4),c_{C_{AT}}(d_{R_1}),c_{C_{BT}}(d_{R_2}),\\
c_{C_{CT}}(d_{R_3}),c_{C_{DT}}(d_{R_4}),rsg_{R_1}, rsg_{R_2}, rsg_{R_3},xor(R_1,R_2,R_3,D),\\
xor(d_{R_1},d_{R_2},d_{R_3},d_{R_4}),\{D=D'\}|D\in\Delta\}$.

Then we get the following conclusion on the protocol.

\begin{theorem}
The secret splitting protocol in Figure \ref{SS8} is secure.
\end{theorem}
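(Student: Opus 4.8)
The plan is to follow the same scheme used for the protocols in Section~\ref{app} and Section~\ref{asp}: expand the parallel composition $A\between B\between C\between Da\between T$ into a single guarded linear recursive specification by repeated application of the expansion laws of $APTC_G$, and then apply $\partial_H$ and $\tau_I$ to read off the external behaviour. First I would use $RDP$ to unfold each of the five recursion variables, and drive the merge $\between$ down through the components with $P1$ together with the parallel laws $P2$--$P10$ and the communication laws $C1$--$C4$. Because every send action collected in $H$ has a unique matching receive (the four channels $C_{TA},C_{TB},C_{TC},C_{TD}$ outward and $C_{AT},C_{BT},C_{CT},C_{DT}$ inward), encapsulation with $\partial_H$ kills every one-sided summand via $D1$--$D6$ and $A6,A7$, leaving only the successful communications $c_{C_{TA}}(R_1),\ldots,c_{C_{DT}}(d_{R_4})$. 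The net effect is that the run is forced into the strictly sequential shape dictated by Trent: read $D$, generate $R_1,R_2,R_3$, compute $R_4$, distribute the four shares (in parallel), collect them back (in parallel), recombine, test $D=D'$, and output.

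Next I would gather the resulting equations into a guarded linear recursive specification $E$ with $\partial_H(A\between B\between C\between Da\between T)=\langle X_1|E\rangle$, invoking the elimination theorem (Theorem~\ref{ETRecursionG}) to justify that such a specification exists; note no $CFAR$ is needed, since the only loop is guarded by the genuine actions $r_{C_{TI}}(D)$ and $s_{C_{TO}}(D)$. Applying $\tau_I$, every internal atomic action---the generations $rsg_{R_i}$, the two XOR actions, the communications $c_\bullet$, and the guard---lies in $I$ and is renamed to $\tau$ by $TI2$/$G29$, so by the silent-step laws $B1$--$B3$ and the soundness of $APTC_{G_{\tau}}$ with guarded linear recursion (Theorem~\ref{SAPTC_GABSG}) the whole block between input and output collapses. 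This should yield
\begin{eqnarray}
\tau_I(\partial_H(A\between B\between C\between Da\between T))&=&\sum_{D\in\Delta}(r_{C_{TI}}(D)\cdot s_{C_{TO}}(D))\nonumber\\
&&\cdot\,\tau_I(\partial_H(A\between B\between C\between Da\between T))\nonumber
\end{eqnarray}
modulo $\approx_{rbp}$ (and likewise $\approx_{rbs}$, $\approx_{rbhp}$), exhibiting the desired external behaviour: Trent inputs a datum and outputs exactly that datum.

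The one step that is not pure bookkeeping, and which I expect to be the main obstacle, is discharging the guard $\{D=D'\}$. Here $D'=XOR(d_{R_1},d_{R_2},d_{R_3},d_{R_4})$, while the echoing components give $d_{R_i}=R_i$ and Trent sets $R_4=XOR(R_1,R_2,R_3,D)$; I must show $XOR(R_1,R_2,R_3,XOR(R_1,R_2,R_3,D))=D$ so that the test always succeeds and the branch never degenerates to $\delta$. This is exactly where the algebraic identities for XOR from Section~\ref{xor} are needed: associativity and commutativity (properties~1--2) let me regroup the four arguments, and $XOR(D,D)=0$ together with $XOR(D,0)=D$ (properties~4 and 3) cancel each share against its copy, leaving $D$. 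Once this equality is fixed, $G8$ turns the guard into $\epsilon$, which is absorbed by $A8$/$A9$, so no deadlock arises and the security (confidentiality and correctness) claim follows as above.
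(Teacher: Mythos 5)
Your proposal follows essentially the same route as the paper: the paper's own proof simply asserts the recursive equation $\tau_I(\partial_H(A\between B\between C\between Da\between T))=\sum_{D\in\Delta}(r_{C_{TI}}(D)\cdot s_{C_{TO}}(D))\cdot\tau_I(\partial_H(A\between B\between C\between Da\between T))$ and defers the expansion-law computation to the ABP template of Section~\ref{app}, which is exactly the calculation you outline (unfold by $RDP$, distribute $\between$ via the parallel and communication laws, encapsulate, form a guarded linear recursive specification, abstract with $\tau_I$). Your explicit discharge of the guard $\{D=D'\}$ via the XOR identities of Section~\ref{xor} is a detail the paper leaves implicit, and it is correct; the only caveat is that, as the paper itself remarks, the conclusion is about the desired external behaviour rather than confidentiality of the distributed shares.
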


\begin{proof}
Based on the above state transitions of the above modules, by use of the algebraic laws of $APTC_G$, we can prove that

$\tau_I(\partial_H(A\between B\between C\between Da\between T))=\sum_{D\in\Delta}(r_{C_{TI}}(D)\cdot s_{C_{TO}}(D))\cdot
\tau_I(\partial_H(A\between B\between C\between Da\between T))$.

For the details of proof, please refer to section \ref{app}, and we omit it.

That is, the protocol in Figure \ref{SS8} $\tau_I(\partial_H(A\between B\between C\between Da\between T))$ can exhibit desired external behaviors, and satisfies the main goal of secret splitting.
It must be noted that the distribution and gathering of $R_1, R_2, R_3, R_4$ have not any cryptographic assurance, they can be made an information leakage.
\end{proof}

\subsection{Analyses of Bit Commitment Protocols}\label{aobcp1}

In this chapter, we will introduce analyses of bit commitment protocols. We introduce analyses of bit commitment protocol based on symmetric cryptography in section \ref{bcp1},
and bit commitment protocol based on one-way function in section \ref{bcp2}.

\subsubsection{Bit Commitment Protocol 1}\label{bcp1}

The protocol shown in Figure \ref{BCP1} uses symmetric cryptography to implement bit commitment.

\begin{figure}
    \centering
    \includegraphics{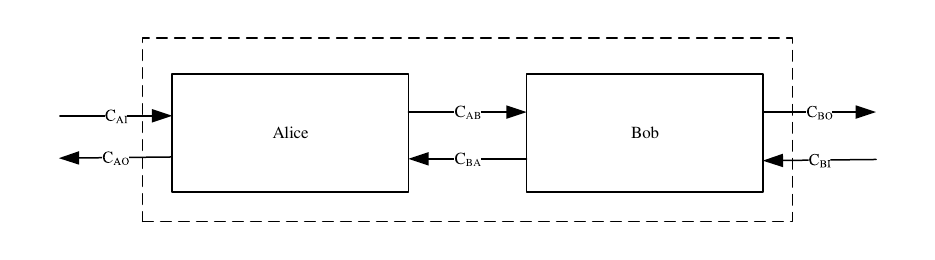}
    \caption{Bit commitment protocol 1}
    \label{BCP1}
\end{figure}

The process of the protocol is as follows.

\begin{enumerate}
  \item Bob receives some requests $D$ from the outside through the channel $C_{BI}$ (the corresponding reading action is denoted $r_{C_{BI}}(D)$), he generates a random sequence $R$
  through an action $rsg_R$,
  then Bob sends $R$ to Alice through the channel $C_{BA}$ (the corresponding sending action is denoted $s_{C_{BA}}(R)$);
  \item Alice receives $R$ from Bob through the channel $C_{BA}$ (the corresponding reading action is denoted $r_{C_{BA}}(R)$), she generates the commitment $b$ through an action
  $rsg_b$ and generate a random key $k$ through an action $rsg_k$, encrypts $b$ and $R$ by $k$ through an action $enc_k(R,b)$, and sends $ENC_k(R,b)$ to Bob through the channel $C_{AB}$
  (the corresponding sending action is denoted $s_{C_{AB}}(ENC_k(R,b))$);
  \item Bob receives the message $ENC_k(R,b)$ from Alice through the channel $C_{AB}$ (the corresponding reading action is denoted $r_{C_{AB}}(ENC_k(R,b))$), he cannot decrypt the message
  for the absence of $k$; after some time, he sends a commitment release request $r$ to Alice through the channel $C_{BA}$ (the corresponding sending action is denoted $s_{C_{BA}}(r)$);
  \item Alice receives $r$ from Bob through the channel $C_{BA}$ (the corresponding reading action is denoted $r_{C_{BA}}(r)$), she sends $k$ to Bob through the channel $C_{AB}$
  (the corresponding sending action is denoted $s_{C_{AB}}(k)$);
  \item Bob receives $k$ from Alice through the channel $C_{AB}$ (the corresponding reading action is denoted $r_{C_{AB}}(k)$), he decrypts $ENC_{k}(d_R,b)$ through an action
  $dec_k(ENC_k(d_R,b))$, if $d_R=R$, he sends $b$
  to the outside through the channel $C_{BO}$ (the corresponding sending action is denoted $s_{C_{BO}}(b)$); else if $d_R\neq R$, he sends $\bot$
  to the outside through the channel $C_{BO}$ (the corresponding sending action is denoted $s_{C_{BO}}(\bot)$).
\end{enumerate}

Where $D\in\Delta$, $\Delta$ is the set of data.

Alice's state transitions described by $APTC_G$ are as follows.

$A=r_{C_{BA}}(R)\cdot A_2$

$A_2=rsg_b\cdot A_3$

$A_3=rsg_k\cdot A_4$

$A_4=s_{C_{AB}}(ENC_k(R,b))\cdot A_5$

$A_5=r_{C_{BA}}(r)\cdot A_6$

$A_6=s_{C_{AB}}(k)\cdot A$

Bob's state transitions described by $APTC_G$ are as follows.

$B=\sum_{D\in\Delta}r_{C_{BI}}(D)\cdot B_2$

$B_2=rsg_R\cdot B_3$

$B_3=s_{C_{BA}}(R)\cdot B_4$

$B_4=r_{C_{AB}}(ENC_k(R,b))\cdot B_5$

$B_5=s_{C_{BA}}(r)\cdot B_6$

$B_6=r_{C_{AB}}(k)\cdot B_7$

$B_7=dec_k(ENC_k(d_R,b))\cdot B_8$

$B_8=\{d_R=R\}\cdot B_9+\{d_R\neq R\}\cdot B_{10}$

$B_9=s_{C_{BO}}(b)\cdot B$

$B_{10}=s_{C_{BO}}(\bot)\cdot B$

The sending action and the reading action of the same type data through the same channel can communicate with each other, otherwise, will cause a deadlock $\delta$. We define the following
communication functions.

$\gamma(r_{C_{BA}}(R),s_{C_{BA}}(R))\triangleq c_{C_{BA}}(R)$

$\gamma(r_{C_{AB}}(ENC_k(R,b)),s_{C_{AB}}(ENC_k(R,b)))\triangleq c_{C_{AB}}(ENC_k(R,b))$

$\gamma(r_{C_{BA}}(r),s_{C_{BA}}(r))\triangleq c_{C_{BA}}(r)$

$\gamma(r_{C_{AB}}(k),s_{C_{AB}}(k))\triangleq c_{C_{AB}}(k)$

Let all modules be in parallel, then the protocol $A\quad B$ can be presented by the following process term.

$$\tau_I(\partial_H(\Theta(A\between B)))=\tau_I(\partial_H(A\between B))$$

where $H=\{r_{C_{BA}}(R),s_{C_{BA}}(R),r_{C_{AB}}(ENC_k(R,b)),s_{C_{AB}}(ENC_k(R,b)),\\
r_{C_{BA}}(r),s_{C_{BA}}(r),r_{C_{AB}}(k),s_{C_{AB}}(k)|D\in\Delta\}$,

$I=\{c_{C_{BA}}(R),c_{C_{AB}}(ENC_k(R,b)),c_{C_{BA}}(r),c_{C_{AB}}(k),\\
rsg_b,rsg_k,rsg_R,dec_k(ENC_k(d_R,b)),\{d_R=R\},\{d_R\neq R\}|D\in\Delta\}$.

Then we get the following conclusion on the protocol.

\begin{theorem}
The bit commitment protocol 1 in Figure \ref{BCP1} is secure.
\end{theorem}

\begin{proof}
Based on the above state transitions of the above modules, by use of the algebraic laws of $APTC_G$, we can prove that

$\tau_I(\partial_H(A\between B))=\sum_{D\in\Delta}(r_{C_{BI}}(D)\cdot (s_{C_{BO}}(b)+s_{C_{BO}}(\bot)))\cdot
\tau_I(\partial_H(A\between B))$.

For the details of proof, please refer to section \ref{app}, and we omit it.

That is, the protocol in Figure \ref{BCP1} $\tau_I(\partial_H(A\between B))$ can exhibit desired external behaviors, that is, if the bits are committed, the system would be
$\tau_I(\partial_H(A\between B))=\sum_{D\in\Delta}(r_{C_{BI}}(D)\cdot s_{C_{BO}}(b))\cdot
\tau_I(\partial_H(A\between B))$; otherwise, the system would be $\tau_I(\partial_H(A\between B))=\sum_{D\in\Delta}(r_{C_{BI}}(D)\cdot s_{C_{BO}}(\bot))\cdot
\tau_I(\partial_H(A\between B))$.

Note that, the main security goals are bit commitment, the the protocol in Figure \ref{BCP1} cannot satisfy other security goals, such as confidentiality.
\end{proof}

\subsubsection{Bit Commitment Protocol 2}\label{bcp2}

The protocol shown in Figure \ref{BCP2} uses one-way function to implement bit commitment.

\begin{figure}
    \centering
    \includegraphics{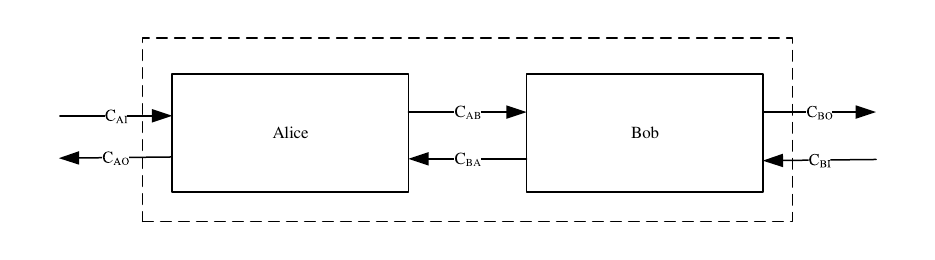}
    \caption{Bit commitment protocol 2}
    \label{BCP2}
\end{figure}

The process of the protocol is as follows.

\begin{enumerate}
  \item Alice receives some requests $D$ from the outside through the channel $C_{AI}$ (the corresponding reading action is denoted $r_{C_{AI}}(D)$), she generates a random sequence $R_1$
  through an action $rsg_{R_1}$, and a random sequence $R_2$ through an action $rsg_{R_2}$, generates the commitment $b$ through an action $rsg_b$, computes the hash of $R_1,R_2,b$ through
  an action $hash(R_1,R_2,b)$, and sends $HASH(R_1,R_2,b),R_1$ to Bob through the channel $C_{AB}$ (the corresponding sending action is denoted $s_{C_{AB}}(HASH(R_1,R_2,b),R_1)$);
  \item Bob receives the message $HASH(R_1,R_2,b),R_1)$ from Alice through the channel $C_{AB}$ (the corresponding reading action is denoted $r_{C_{AB}}(HASH(R_1,R_2,b),R_1)$), after
  some time, he sends a commitment release request $r$ to Alice through the channel $C_{BA}$ (the corresponding sending action is denoted $s_{C_{BA}}(r)$);
  \item Alice receives $r$ from Bob through the channel $C_{BA}$ (the corresponding reading action is denoted $r_{C_{BA}}(r)$), she sends $R_1,R_2,b$ to Bob through the channel $C_{AB}$
  (the corresponding sending action is denoted $s_{C_{AB}}(R_1,R_2,b)$);
  \item Bob receives $d_{R_1},R_2,b$ from Alice through the channel $C_{AB}$ (the corresponding reading action is denoted $r_{C_{AB}}(d_{R_1},R_2,b)$), if $d_{R_1}=R_1$ and
  $HASH(R_1,R_2,b)=HASH(d_{R_1},R_2,b)$, he sends $b$
  to the outside through the channel $C_{BO}$ (the corresponding sending action is denoted $s_{C_{BO}}(b)$); else if $d_{R_1}\neq R_1$ or
  $HASH(R_1,R_2,b)\neq HASH(d_{R_1},R_2,b)$, he sends $\bot$
  to the outside through the channel $C_{BO}$ (the corresponding sending action is denoted $s_{C_{BO}}(\bot)$).
\end{enumerate}

Where $D\in\Delta$, $\Delta$ is the set of data.

Alice's state transitions described by $APTC_G$ are as follows.

$A=\sum_{D\in\Delta}r_{C_{BI}}(D)\cdot A_2$

$A_2=rsg_{R_1}\cdot A_3$

$A_3=rsg_{R_2}\cdot A_4$

$A_4=rsg_b\cdot A_5$

$A_5=hash(R_1,R_2,b)\cdot A_6$

$A_6=s_{C_{AB}}(HASH(R_1,R_2,b),R_1)\cdot A_7$

$A_7=r_{C_{BA}}(r)\cdot A_8$

$A_8=s_{C_{AB}}(R_1,R_2,b)\cdot A$

Bob's state transitions described by $APTC_G$ are as follows.

$B=r_{C_{AB}}(HASH(R_1,R_2,b),R_1)\cdot B_2$

$B_2=s_{C_{BA}}(r)\cdot B_3$

$B_3=r_{C_{AB}}(d_{R_1},R_2,b)\cdot B_4$

$B_4=\{d_{R_1}=R_1\}\cdot \{HASH(R_1,R_2,b)=HASH(d_{R_1},R_2,b)\}\cdot B_5+(\{d_{R_1}\neq R_1\}+ \{HASH(R_1,R_2,b)\neq HASH(d_{R_1},R_2,b)\})\cdot B_{6}$

$B_5=s_{C_{BO}}(b)\cdot B$

$B_{6}=s_{C_{BO}}(\bot)\cdot B$

The sending action and the reading action of the same type data through the same channel can communicate with each other, otherwise, will cause a deadlock $\delta$. We define the following
communication functions.

$\gamma(r_{C_{AB}}(HASH(R_1,R_2,b),R_1),s_{C_{AB}}(HASH(R_1,R_2,b),R_1))\triangleq c_{C_{AB}}(HASH(R_1,R_2,b),R_1)$

$\gamma(r_{C_{BA}}(r),s_{C_{BA}}(r))\triangleq c_{C_{BA}}(r)$

$\gamma(r_{C_{AB}}(d_{R_1},R_2,b),s_{C_{AB}}(d_{R_1},R_2,b))\triangleq c_{C_{AB}}(d_{R_1},R_2,b)$

Let all modules be in parallel, then the protocol $A\quad B$ can be presented by the following process term.

$$\tau_I(\partial_H(\Theta(A\between B)))=\tau_I(\partial_H(A\between B))$$

where $H=\{r_{C_{AB}}(HASH(R_1,R_2,b),R_1),s_{C_{AB}}(HASH(R_1,R_2,b),R_1),r_{C_{BA}}(r),s_{C_{BA}}(r),\\
r_{C_{AB}}(d_{R_1},R_2,b),s_{C_{AB}}(d_{R_1},R_2,b)|D\in\Delta\}$,

$I=\{c_{C_{AB}}(HASH(R_1,R_2,b),R_1),c_{C_{BA}}(r),c_{C_{AB}}(d_{R_1},R_2,b),\\
rsg_{R_1},rsg_{R_2},rsg_{b},hash(R_1,R_2,b),\{d_{R_1}=R_1\}, \{HASH(R_1,R_2,b)=HASH(d_{R_1},R_2,b),\\
\{d_{R_1}\neq R_1\}, \{HASH(R_1,R_2,b)\neq HASH(d_{R_1},R_2,b)\}\}|D\in\Delta\}$.

Then we get the following conclusion on the protocol.

\begin{theorem}
The bit commitment protocol 2 in Figure \ref{BCP2} is secure.
\end{theorem}

\begin{proof}
Based on the above state transitions of the above modules, by use of the algebraic laws of $APTC_G$, we can prove that

$\tau_I(\partial_H(A\between B))=\sum_{D\in\Delta}(r_{C_{AI}}(D)\cdot (s_{C_{BO}}(b)+s_{C_{BO}}(\bot)))\cdot
\tau_I(\partial_H(A\between B))$.

For the details of proof, please refer to section \ref{app}, and we omit it.

That is, the protocol in Figure \ref{BCP2} $\tau_I(\partial_H(A\between B))$ can exhibit desired external behaviors, that is, if the bits are committed, the system would be
$\tau_I(\partial_H(A\between B))=\sum_{D\in\Delta}(r_{C_{BI}}(D)\cdot s_{C_{BO}}(b))\cdot
\tau_I(\partial_H(A\between B))$; otherwise, the system would be $\tau_I(\partial_H(A\between B))=\sum_{D\in\Delta}(r_{C_{BI}}(D)\cdot s_{C_{BO}}(\bot))\cdot
\tau_I(\partial_H(A\between B))$.

Note that, the main security goals are bit commitment, the the protocol in Figure \ref{BCP2} cannot satisfy other security goals, such as confidentiality.
\end{proof}

\subsection{Analyses of Anonymous Key Distribution Protocols}\label{aoakdp}

The protocol shown in Figure \ref{AKDP} uses asymmetric cryptography to implement anonymous key distribution.

\begin{figure}
    \centering
    \includegraphics{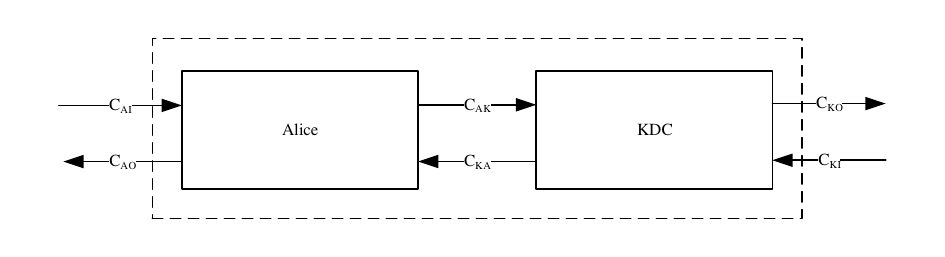}
    \caption{Anonymous key distribution protocol}
    \label{AKDP}
\end{figure}

The process of the protocol is as follows.

\begin{enumerate}
  \item Alice receives some requests $D$ from the outside through the channel $C_{BI}$ (the corresponding reading action is denoted $r_{C_{BI}}(D)$), she generates a public/private key pair
  through an action $rsg_{pk_A,sk_A}$, and sends the key request $r$ to KDC through the channel $C_{AK}$ (the corresponding sending action is denoted $s_{C_{AK}}(r)$);
  \item The KDC receives the key request $r$ from Alice through the channel $C_{AK}$ (the corresponding reading action is denoted $r_{C_{AK}}(r)$), he generates a series of keys $k_i$
  through actions $rsg_{k_i}$ for $1\leq i\leq n$, and encrypts these keys by his public key $pk_K$ through actions $enc_{pk_K}(k_i)$ for $1\leq i\leq n$, then sends these encrypted
  keys to Alice through the channel $C_{KA}$ (the corresponding sending action is denoted $s_{C_{KA}}(ENC_{pk_K}(k_1),\cdots,ENC_{pk_K}(k_n))$);
  \item Alice receives the encrypted keys from the KDC through the channel $C_{KA}$ (the corresponding reading action is denoted $r_{C_{KA}}(ENC_{pk_K}(k_1),\cdots,ENC_{pk_K}(k_n))$),
  she randomly selects one $ENC_{pk_K}(k_j)$ for $1\leq j\leq n$, encrypts it by her public key $pk_A$ through an action $enc_{pk_A}(ENC_{pk_K}(k_j))$, and sends the doubly encrypted
  key $ENC_{pk_A}(ENC_{pk_K}(k_j))$ to the KDC through the channel $C_{AK}$ (the corresponding sending action is denoted $s_{C_{AK}}(ENC_{pk_A}(ENC_{pk_K}(k_j)))$);
  \item The KDC receives the doubly encrypted key from Alice through the channel $C_{AK}$ (the corresponding reading action is denoted $r_{C_{AK}}(ENC_{pk_A}(ENC_{pk_K}(k_j)))$),
  he decrypts it by his private key $sk_K$ through an action $dec_{sk_K}(ENC_{pk_A}(ENC_{pk_K}(k_j)))$ to get $ENC_{pk_A}(k_j)$, and sends $ENC_{pk_A}(k_j)$ to Alice through the channel
  $C_{KA}$ (the corresponding sending action is denoted $s_{C_{KA}}(ENC_{pk_A}(k_j))$);
  \item Alice receives $ENC_{pk_A}(k_j)$ from the KDC through the channel $C_{KA}$ (the corresponding reading action is denoted $r_{C_{KA}}(ENC_{pk_A}(k_j))$), she decrypts it by her private
  key $sk_A$ through an action $dec_{sk_A}(ENC_{pk_A}(k_j))$ to get $k_j$, and sends $k_j$ to the outside through the channel $C_{AO}$ (the corresponding sending action is denoted
  $s_{C_AO}(k_j)$).
\end{enumerate}

Where $D\in\Delta$, $\Delta$ is the set of data.

Alice's state transitions described by $APTC_G$ are as follows.

$A=\sum_{D\in\Delta}r_{C_{AI}}(D)\cdot A_2$

$A_2=rsg_{pk_A,sk_A}\cdot A_3$

$A_3=s_{C_{AK}}(r)\cdot A_4$

$A_4=r_{C_{KA}}(ENC_{pk_K}(k_1),\cdots,ENC_{pk_K}(k_n))\cdot A_5$

$A_5=enc_{pk_A}(ENC_{pk_K}(k_j))\cdot A_6$

$A_6=s_{C_{AK}}(ENC_{pk_A}(ENC_{pk_K}(k_j)))\cdot A_7$

$A_7=r_{C_{KA}}(ENC_{pk_A}(k_j))\cdot A_8$

$A_8=dec_{sk_A}(ENC_{pk_A}(k_j))\cdot A_9$

$A_9=s_{C_AO}(k_j)\cdot A$

The KDC's state transitions described by $APTC_G$ are as follows.

$K=r_{C_{AK}}(r)\cdot K_2$

$K_2=rsg_{k_1}\parallel\cdots\parallel rsg_{k_n}\cdot K_3$

$K_3=enc_{pk_K}(k_1)\parallel\cdots\parallel enc_{pk_K}(k_n)\cdot K_4$

$K_4=s_{C_{KA}}(ENC_{pk_K}(k_1),\cdots,ENC_{pk_K}(k_n))\cdot K_5$

$K_5=r_{C_{AK}}(ENC_{pk_A}(ENC_{pk_K}(k_j)))\cdot K_6$

$K_6=dec_{sk_K}(ENC_{pk_A}(ENC_{pk_K}(k_j)))\cdot K_7$

$K_7=s_{C_{KA}}(ENC_{pk_A}(k_j))\cdot K$

The sending action and the reading action of the same type data through the same channel can communicate with each other, otherwise, will cause a deadlock $\delta$. We define the following
communication functions.

$\gamma(r_{C_{AK}}(r),s_{C_{AK}}(r))\triangleq c_{C_{AK}}(r)$

$\gamma(r_{C_{KA}}(ENC_{pk_K}(k_1),\cdots,ENC_{pk_K}(k_n)),s_{C_{KA}}(ENC_{pk_K}(k_1),\cdots,ENC_{pk_K}(k_n)))\\
\triangleq c_{C_{KA}}(ENC_{pk_K}(k_1),\cdots,ENC_{pk_K}(k_n))$

$\gamma(r_{C_{AK}}(ENC_{pk_A}(ENC_{pk_K}(k_j))),s_{C_{AK}}(ENC_{pk_A}(ENC_{pk_K}(k_j))))\triangleq c_{C_{AK}}(ENC_{pk_A}(ENC_{pk_K}(k_j)))$

$\gamma(r_{C_{KA}}(ENC_{pk_A}(k_j)),s_{C_{KA}}(ENC_{pk_A}(k_j)))\triangleq c_{C_{KA}}(ENC_{pk_A}(k_j))$

Let all modules be in parallel, then the protocol $A\quad K$ can be presented by the following process term.

$$\tau_I(\partial_H(\Theta(A\between K)))=\tau_I(\partial_H(A\between K))$$

where $H=\{r_{C_{AK}}(r),s_{C_{AK}}(r),r_{C_{KA}}(ENC_{pk_A}(k_j)),s_{C_{KA}}(ENC_{pk_A}(k_j)),\\
r_{C_{KA}}(ENC_{pk_K}(k_1),\cdots,ENC_{pk_K}(k_n)),s_{C_{KA}}(ENC_{pk_K}(k_1),\cdots,ENC_{pk_K}(k_n)),\\
r_{C_{AK}}(ENC_{pk_A}(ENC_{pk_K}(k_j))),s_{C_{AK}}(ENC_{pk_A}(ENC_{pk_K}(k_j)))|D\in\Delta\}$,

$I=\{c_{C_{AK}}(r),c_{C_{KA}}(ENC_{pk_A}(k_j)),c_{C_{AK}}(ENC_{pk_A}(ENC_{pk_K}(k_j))),\\
c_{C_{KA}}(ENC_{pk_K}(k_1),\cdots,ENC_{pk_K}(k_n)),rsg_{pk_A,sk_A},enc_{pk_A}(ENC_{pk_K}(k_j)),\\
dec_{sk_A}(ENC_{pk_A}(k_j)),rsg_{k_1},\cdots, rsg_{k_n},enc_{pk_K}(k_1),\cdots, enc_{pk_K}(k_n),\\
dec_{sk_K}(ENC_{pk_A}(ENC_{pk_K}(k_j)))|D\in\Delta\}$.

Then we get the following conclusion on the protocol.

\begin{theorem}
The anonymous key distribution protocol in Figure \ref{AKDP} is secure.
\end{theorem}

\begin{proof}
Based on the above state transitions of the above modules, by use of the algebraic laws of $APTC_G$, we can prove that

$\tau_I(\partial_H(A\between K))=\sum_{D\in\Delta}(r_{C_{AI}}(D)\cdot s_{C_{AO}}(k_j))\cdot
\tau_I(\partial_H(A\between K))$.

For the details of proof, please refer to section \ref{app}, and we omit it.

That is, the protocol in Figure \ref{AKDP} $\tau_I(\partial_H(A\between K))$ can exhibit desired external behaviors, and is secure.
\end{proof}

\newpage\section{Analyses of Digital Cash Protocols}\label{aodcp}

Digital cash makes it possible to use cash digitally. Digital cash maybe have the following six properties:
\begin{enumerate}
  \item Independence. The digital cash is independent on the location, and can be used through the network;
  \item Security. The digital cash cannot be copied and reused;
  \item Privacy. The privacy of the owner of the digital cash is protected;
  \item Off-line payment. The digital cash can be used off line;
  \item Transferability. The digital cash can be transferred to the other users;
  \item Divisibility. The digital cash can be divided into small pieces of digital cashes.
\end{enumerate}

In this chapter, we will introduce four digital cash protocols in the following sections. In the analyses of these four protocols, we will mainly analyze the security and privacy
properties.

\subsection{Digital Cash Protocol 1}\label{dcp1}

The Digital Cash Protocol 1 shown in Figure \ref{DCP1} is the basic digital cash protocol to ensure the anonymity.

\begin{figure}
    \centering
    \includegraphics{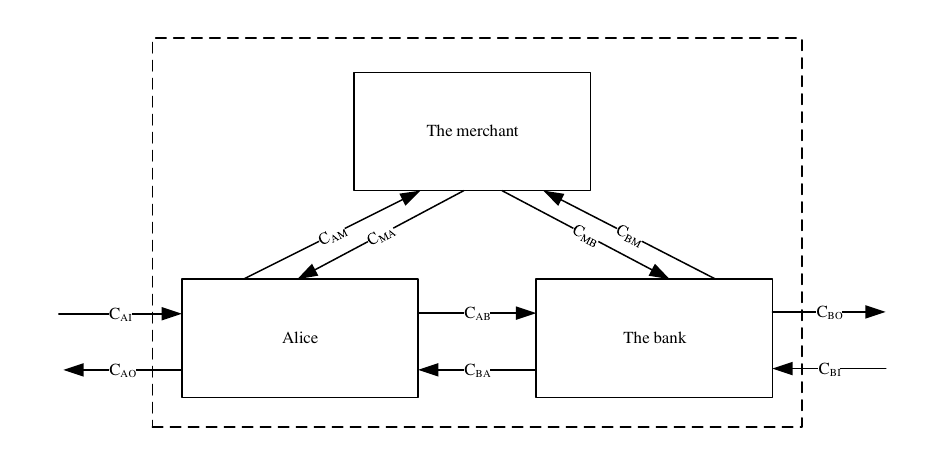}
    \caption{Digital Cash Protocol 1}
    \label{DCP1}
\end{figure}

The process of the protocol is as follows.

\begin{enumerate}
  \item Alice receives some requests $D$ from the outside through the channel $C_{AI}$ (the corresponding reading action is denoted $r_{C_{AI}}(D)$), she generates $n$ \$$m_i$ orders
  with each order encrypted by the bank's public key $pk_B$ through actions $enc_{pk_B}(m_i)$ for $1\leq i\leq n$, and sends them to the bank through the channel $C_{AB}$ (the corresponding
  sending action is denoted $s_{C_{AB}}(ENC_{pk_B}(m_1),\cdots,ENC_{pk_B}(m_n))$);
  \item The bank receives these orders from Alice through the channel $C_{AB}$ (the corresponding reading action is denoted $r_{C_{AB}}(ENC_{pk_B}(m_1),\cdots,ENC_{pk_B}(m_n))$), he
  randomly selects $n-1$ orders and decrypts them through actions $dec_{sk_B}(ENC_{pk_B}(m_j))$ for $1\leq j\leq n-1$ to ensure that each $m_j=m$. Then he sign the left $ENC_{pk_B}(m_k)$
  through an action $sign_{sk_B}(ENC_{pk_B}(m_k))$, checks the identity of Alice and deducts \$$m$ from Alice's account through an action $s_{C_{BO}}(-m)$, then sends \\
  $SIGN_{sk_B}(ENC_{pk_B}(m_k))$ to Alice through the channel $C_{BA}$ (the corresponding
  sending action is denoted $s_{C_{BA}}(SIGN_{sk_B}(ENC_{pk_B}(m_k)))$);
  \item Alice receives the signed order $SIGN_{sk_B}(ENC_{pk_B}(m_k))$ from the bank through the channel $C_{BA}$ (the corresponding reading action is denoted $r_{C_{BA}}(SIGN_{sk_B}(ENC_{pk_B}(m_k)))$),
  she may send the sighed order to some merchant through the channel $C_{AM}$ (the corresponding sending action is denoted $s_{C_{AM}}(SIGN_{sk_B}(ENC_{pk_B}(m_k)))$);
  \item The merchant receives the signed cash from Alice through the channel $C_{AM}$ (the corresponding reading action is denoted $r_{C_{AM}}(SIGN_{sk_B}(ENC_{pk_B}(m_k)))$), he sends it to
  the bank through the channel $C_{MB}$ (the corresponding sending action is denoted \\
  $s_{C_{MB}}(SIGN_{sk_B}(ENC_{pk_B}(m_k)))$);
  \item The bank receives the signed cash from the merchant through the channel $C_{MB}$ (the corresponding reading action is denoted $r_{C_{MB}}(SIGN_{sk_B}(ENC_{pk_B}(m_k)))$), he de-signs
  the cash through an action $de\textrm{-}sign(SIGN_{sk_B}(ENC_{pk_B}(m_k)))$, then decrypts it through an action $dec_{sk_B}(ENC_{pk_B}(m_k))$,  checks the identity of the merchant and
  credits \$$m$ to the merchant's account through an action $s_{C_{BO}}(+m)$.
\end{enumerate}

Where $D\in\Delta$, $\Delta$ is the set of data.

Alice's state transitions described by $APTC_G$ are as follows.

$A=\sum_{D\in\Delta}r_{C_{AI}}(D)\cdot A_2$

$A_2=enc_{pk_B}(m_1)\parallel\cdots\parallel enc_{pk_B}(m_n)\cdot A_3$

$A_3=s_{C_{AB}}(ENC_{pk_B}(m_1),\cdots,ENC_{pk_B}(m_n))\cdot A_4$

$A_4=r_{C_{BA}}(SIGN_{sk_B}(ENC_{pk_B}(m_k)))\cdot A_5$

$A_5=s_{C_{AM}}(SIGN_{sk_B}(ENC_{pk_B}(m_k)))\cdot A$

The bank's state transitions described by $APTC_G$ are as follows.

$B=r_{C_{AB}}(ENC_{pk_B}(m_1),\cdots,ENC_{pk_B}(m_n))\cdot B_2$

$B_2=dec_{sk_B}(ENC_{pk_B}(m_1))\parallel\cdots \parallel dec_{sk_B}(ENC_{pk_B}(m_{n-1}))\cdot B_3$

$B_3=sign_{sk_B}(ENC_{pk_B}(m_k))\cdot B_4$

$B_4=s_{C_{BO}}(-m)\cdot B_5$

$B_5=s_{C_{BA}}(SIGN_{sk_B}(ENC_{pk_B}(m_k)))\cdot B_6$

$B_6=r_{C_{MB}}(SIGN_{sk_B}(ENC_{pk_B}(m_k)))\cdot B_7$

$B_7=de\textrm{-}sign(SIGN_{sk_B}(ENC_{pk_B}(m_k)))\cdot B_8$

$B_8=s_{C_{BO}}(+m)\cdot B$

The merchant's state transitions described by $APTC_G$ are as follows.

$M=r_{C_{AM}}(SIGN_{sk_B}(ENC_{pk_B}(m_k)))\cdot M_2$

$M_2=s_{C_{MB}}(SIGN_{sk_B}(ENC_{pk_B}(m_k)))\cdot M$

The sending action and the reading action of the same type data through the same channel can communicate with each other, otherwise, will cause a deadlock $\delta$. We define the following
communication functions.

$\gamma(r_{C_{AB}}(ENC_{pk_B}(m_1),\cdots,ENC_{pk_B}(m_n)),s_{C_{AB}}(ENC_{pk_B}(m_1),\cdots,ENC_{pk_B}(m_n)))\\
\triangleq c_{C_{AB}}(ENC_{pk_B}(m_1),\cdots,ENC_{pk_B}(m_n))$

$\gamma(r_{C_{BA}}(SIGN_{sk_B}(ENC_{pk_B}(m_k))),s_{C_{BA}}(SIGN_{sk_B}(ENC_{pk_B}(m_k))))\\
\triangleq c_{C_{BA}}(SIGN_{sk_B}(ENC_{pk_B}(m_k)))$

$\gamma(r_{C_{MB}}(SIGN_{sk_B}(ENC_{pk_B}(m_k))),s_{C_{MB}}(SIGN_{sk_B}(ENC_{pk_B}(m_k))))\\
\triangleq c_{C_{MB}}(SIGN_{sk_B}(ENC_{pk_B}(m_k)))$

$\gamma(r_{C_{AM}}(SIGN_{sk_B}(ENC_{pk_B}(m_k))),s_{C_{AM}}(SIGN_{sk_B}(ENC_{pk_B}(m_k))))\\
\triangleq c_{C_{AM}}(SIGN_{sk_B}(ENC_{pk_B}(m_k)))$

Let all modules be in parallel, then the protocol $A\quad B\quad M$ can be presented by the following process term.

$$\tau_I(\partial_H(\Theta(A\between B\between M)))=\tau_I(\partial_H(A\between B\between M))$$

where $H=\{r_{C_{AB}}(ENC_{pk_B}(m_1),\cdots,ENC_{pk_B}(m_n)),s_{C_{AB}}(ENC_{pk_B}(m_1),\cdots,ENC_{pk_B}(m_n)),\\
r_{C_{BA}}(SIGN_{sk_B}(ENC_{pk_B}(m_k))),s_{C_{BA}}(SIGN_{sk_B}(ENC_{pk_B}(m_k))),\\
r_{C_{MB}}(SIGN_{sk_B}(ENC_{pk_B}(m_k))),s_{C_{MB}}(SIGN_{sk_B}(ENC_{pk_B}(m_k))),\\
r_{C_{AM}}(SIGN_{sk_B}(ENC_{pk_B}(m_k))),s_{C_{AM}}(SIGN_{sk_B}(ENC_{pk_B}(m_k)))|D\in\Delta\}$,

$I=\{c_{C_{AB}}(ENC_{pk_B}(m_1),\cdots,ENC_{pk_B}(m_n)),c_{C_{BA}}(SIGN_{sk_B}(ENC_{pk_B}(m_k))),\\
c_{C_{MB}}(SIGN_{sk_B}(ENC_{pk_B}(m_k))),c_{C_{AM}}(SIGN_{sk_B}(ENC_{pk_B}(m_k))),\\
enc_{pk_B}(m_1),\cdots,enc_{pk_B}(m_n),dec_{sk_B}(ENC_{pk_B}(m_1)),\cdots, dec_{sk_B}(ENC_{pk_B}(m_{n-1})),\\
sign_{sk_B}(ENC_{pk_B}(m_k)),de\textrm{-}sign(SIGN_{sk_B}(ENC_{pk_B}(m_k)))|D\in\Delta\}$.

Then we get the following conclusion on the protocol.

\begin{theorem}
The Digital Cash Protocol 1 in Figure \ref{DCP1} is anonymous.
\end{theorem}

\begin{proof}
Based on the above state transitions of the above modules, by use of the algebraic laws of $APTC_G$, we can prove that

$\tau_I(\partial_H(A\between B\between M))=\sum_{D\in\Delta}(r_{C_{AI}}(D)\cdot s_{C_{BO}}(-m)\cdot s_{C_{BO}}(+m))\cdot
\tau_I(\partial_H(A\between B\between M))$.

For the details of proof, please refer to section \ref{app}, and we omit it.

That is, the Digital Cash Protocol 1 in Figure \ref{DCP1} $\tau_I(\partial_H(A\between B\between M))$ can exhibit desired external behaviors:

\begin{enumerate}
  \item The digital cash of Alice $SIGN_{sk_B}(ENC_{pk_B}(m_k))$ is anonymous for the merchant and the bank;
  \item The protocol cannot resist replay attack, for digital cash, this is the so-called double spending problem, either for Alice or the merchant. The system would be
  $\tau_I(\partial_H(A\between B\between M))=\sum_{D\in\Delta}(r_{C_{AI}}(D)\cdot s_{C_{BO}}(-m)\cdot s_{C_{BO}}(+m)\cdot s_{C_{BO}}(+m))\cdot
\tau_I(\partial_H(A\between B\between M))$.
\end{enumerate}
\end{proof}

\subsection{Digital Cash Protocol 2}\label{dcp2}

The Digital Cash Protocol 2 shown in Figure \ref{DCP2} is the basic digital cash protocol to ensure the anonymity and resist replay attacks.

\begin{figure}
    \centering
    \includegraphics{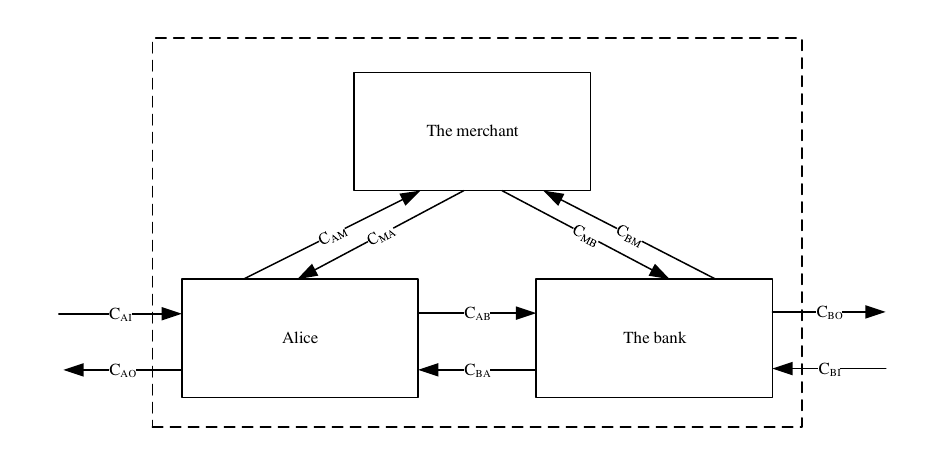}
    \caption{Digital Cash Protocol 2}
    \label{DCP2}
\end{figure}

The process of the protocol is as follows.

\begin{enumerate}
  \item Alice receives some requests $D$ from the outside through the channel $C_{AI}$ (the corresponding reading action is denoted $r_{C_{AI}}(D)$), she generates $n$ \$$m_i$ orders containing
  a random number $R_i$
  with each order encrypted by the bank's public key $pk_B$ through actions $enc_{pk_B}(m_i,R_i)$ for $1\leq i\leq n$, and sends them to the bank through the channel $C_{AB}$ (the corresponding
  sending action is denoted $s_{C_{AB}}(ENC_{pk_B}(m_1,R_1),\cdots,ENC_{pk_B}(m_n,R_n))$);
  \item The bank receives these orders from Alice through the channel $C_{AB}$ (the corresponding reading action is denoted $r_{C_{AB}}(ENC_{pk_B}(m_1,R_1),\cdots,ENC_{pk_B}(m_n,R_n))$), he
  randomly selects $n-1$ orders and decrypts them through actions $dec_{sk_B}(ENC_{pk_B}(m_j,R_j))$ for $1\leq j\leq n-1$ to ensure that each $m_j=m$ and $R_j$ is fresh. Then he sign the left $ENC_{pk_B}(m_k,R_k)$
  through an action $sign_{sk_B}(ENC_{pk_B}(m_k,R_k))$, checks the identity of Alice and deducts \$$m$ from Alice's account through an action $s_{C_{BO}}(-m)$, then sends \\
  $SIGN_{sk_B}(ENC_{pk_B}(m_k,R_k))$ to Alice through the channel $C_{BA}$ (the corresponding
  sending action is denoted $s_{C_{BA}}(SIGN_{sk_B}(ENC_{pk_B}(m_k,R_k)))$);
  \item Alice receives the signed order $SIGN_{sk_B}(ENC_{pk_B}(m_k,R_k))$ from the bank through the channel $C_{BA}$ (the corresponding reading action is denoted $r_{C_{BA}}(SIGN_{sk_B}(ENC_{pk_B}(m_k,R_k)))$),
  she may send the sighed order to some merchant through the channel $C_{AM}$ (the corresponding sending action is denoted $s_{C_{AM}}(SIGN_{sk_B}(ENC_{pk_B}(m_k,R_k)))$);
  \item The merchant receives the signed cash from Alice through the channel $C_{AM}$ (the corresponding reading action is denoted $r_{C_{AM}}(SIGN_{sk_B}(ENC_{pk_B}(m_k,R_k)))$), he sends it to
  the bank through the channel $C_{MB}$ (the corresponding sending action is denoted \\
  $s_{C_{MB}}(SIGN_{sk_B}(ENC_{pk_B}(m_k,R_k)))$);
  \item The bank receives the signed cash from the merchant through the channel $C_{MB}$ (the corresponding reading action is denoted $r_{C_{MB}}(SIGN_{sk_B}(ENC_{pk_B}(m_k,R_k)))$), he de-signs
  the cash through an action $de\textrm{-}sign(SIGN_{sk_B}(ENC_{pk_B}(m_k,R_k)))$, then decrypts it through an action $dec_{sk_B}(ENC_{pk_B}(m_k,R_k))$, if $isFresh(R_k)=TRUE$, he checks the identity of the merchant and
  credits \$$m$ to the merchant's account through an action $s_{C_{BO}}(+m)$; else if $isFresh(R_k)=FALSE$, he sends $\bot$ to the outside through the channel $C_{BO}$ (the corresponding
  sending action is denoted $s_{C_{BO}}(\bot)$).
\end{enumerate}

Where $D\in\Delta$, $\Delta$ is the set of data.

Alice's state transitions described by $APTC_G$ are as follows.

$A=\sum_{D\in\Delta}r_{C_{AI}}(D)\cdot A_2$

$A_2=enc_{pk_B}(m_1,R_1)\parallel\cdots\parallel enc_{pk_B}(m_n,R_n)\cdot A_3$

$A_3=s_{C_{AB}}(ENC_{pk_B}(m_1,R_1),\cdots,ENC_{pk_B}(m_n,R_n))\cdot A_4$

$A_4=r_{C_{BA}}(SIGN_{sk_B}(ENC_{pk_B}(m_k,R_k)))\cdot A_5$

$A_5=s_{C_{AM}}(SIGN_{sk_B}(ENC_{pk_B}(m_k,R_k)))\cdot A$

The bank's state transitions described by $APTC_G$ are as follows.

$B=r_{C_{AB}}(ENC_{pk_B}(m_1,R_1),\cdots,ENC_{pk_B}(m_n,R_n))\cdot B_2$

$B_2=dec_{sk_B}(ENC_{pk_B}(m_1,R_1))\parallel\cdots \parallel dec_{sk_B}(ENC_{pk_B}(m_{n-1},R_{n-1}))\cdot B_3$

$B_3=sign_{sk_B}(ENC_{pk_B}(m_k,R_k))\cdot B_4$

$B_4=s_{C_{BO}}(-m)\cdot B_5$

$B_5=s_{C_{BA}}(SIGN_{sk_B}(ENC_{pk_B}(m_k,R_k)))\cdot B_6$

$B_6=r_{C_{MB}}(SIGN_{sk_B}(ENC_{pk_B}(m_k,R_k)))\cdot B_7$

$B_7=de\textrm{-}sign(SIGN_{sk_B}(ENC_{pk_B}(m_k,R_k)))\cdot B_8$

$B_8=\{isFresh(R_k)=TRUE\}\cdot s_{C_{BO}}(+m)\cdot B+\{isFresh(R_k)=FALSE\}\cdot s_{C_{BO}}(\bot)\cdot B$

The merchant's state transitions described by $APTC_G$ are as follows.

$M=r_{C_{AM}}(SIGN_{sk_B}(ENC_{pk_B}(m_k,R_k)))\cdot M_2$

$M_2=s_{C_{MB}}(SIGN_{sk_B}(ENC_{pk_B}(m_k,R_k)))\cdot M$

The sending action and the reading action of the same type data through the same channel can communicate with each other, otherwise, will cause a deadlock $\delta$. We define the following
communication functions.

$\gamma(r_{C_{AB}}(ENC_{pk_B}(m_1,R_1),\cdots,ENC_{pk_B}(m_n,R_n)),s_{C_{AB}}(ENC_{pk_B}(m_1,R_1),\cdots,ENC_{pk_B}(m_n,R_n)))\\
\triangleq c_{C_{AB}}(ENC_{pk_B}(m_1,R_1),\cdots,ENC_{pk_B}(m_n,R_n))$

$\gamma(r_{C_{BA}}(SIGN_{sk_B}(ENC_{pk_B}(m_k,R_k))),s_{C_{BA}}(SIGN_{sk_B}(ENC_{pk_B}(m_k,R_k))))\\
\triangleq c_{C_{BA}}(SIGN_{sk_B}(ENC_{pk_B}(m_k,R_k)))$

$\gamma(r_{C_{MB}}(SIGN_{sk_B}(ENC_{pk_B}(m_k,R_k))),s_{C_{MB}}(SIGN_{sk_B}(ENC_{pk_B}(m_k,R_k))))\\
\triangleq c_{C_{MB}}(SIGN_{sk_B}(ENC_{pk_B}(m_k,R_k)))$

$\gamma(r_{C_{AM}}(SIGN_{sk_B}(ENC_{pk_B}(m_k,R_k))),s_{C_{AM}}(SIGN_{sk_B}(ENC_{pk_B}(m_k,R_k))))\\
\triangleq c_{C_{AM}}(SIGN_{sk_B}(ENC_{pk_B}(m_k,R_k)))$

Let all modules be in parallel, then the protocol $A\quad B\quad M$ can be presented by the following process term.

$$\tau_I(\partial_H(\Theta(A\between B\between M)))=\tau_I(\partial_H(A\between B\between M))$$

where $H=\{r_{C_{AB}}(ENC_{pk_B}(m_1,R_1),\cdots,ENC_{pk_B}(m_n,R_n)),\\
s_{C_{AB}}(ENC_{pk_B}(m_1,R_1),\cdots,ENC_{pk_B}(m_n,R_n)),\\
r_{C_{BA}}(SIGN_{sk_B}(ENC_{pk_B}(m_k,R_k))),s_{C_{BA}}(SIGN_{sk_B}(ENC_{pk_B}(m_k,R_k))),\\
r_{C_{MB}}(SIGN_{sk_B}(ENC_{pk_B}(m_k,R_k))),s_{C_{MB}}(SIGN_{sk_B}(ENC_{pk_B}(m_k,R_k))),\\
r_{C_{AM}}(SIGN_{sk_B}(ENC_{pk_B}(m_k,R_k))),s_{C_{AM}}(SIGN_{sk_B}(ENC_{pk_B}(m_k,R_k)))|D\in\Delta\}$,

$I=\{c_{C_{AB}}(ENC_{pk_B}(m_1,R_1),\cdots,ENC_{pk_B}(m_n,R_n)),c_{C_{BA}}(SIGN_{sk_B}(ENC_{pk_B}(m_k,R_k))),\\
c_{C_{MB}}(SIGN_{sk_B}(ENC_{pk_B}(m_k,R_k))),c_{C_{AM}}(SIGN_{sk_B}(ENC_{pk_B}(m_k,R_k))),\\
enc_{pk_B}(m_1,R_1),\cdots,enc_{pk_B}(m_n,R_n),dec_{sk_B}(ENC_{pk_B}(m_1,R_1)),\cdots, dec_{sk_B}(ENC_{pk_B}(m_{n-1},R_{n-1})),\\
sign_{sk_B}(ENC_{pk_B}(m_k,R_k)),de\textrm{-}sign(SIGN_{sk_B}(ENC_{pk_B}(m_k,R_k))),\\
\{isFresh(R_k)=TRUE\},\{isFresh(R_k)=FALSE\}|D\in\Delta\}$.

Then we get the following conclusion on the protocol.

\begin{theorem}
The Digital Cash Protocol 2 in Figure \ref{DCP2} is anonymous and resists replaying.
\end{theorem}

\begin{proof}
Based on the above state transitions of the above modules, by use of the algebraic laws of $APTC_G$, we can prove that

$\tau_I(\partial_H(A\between B\between M))=\sum_{D\in\Delta}(r_{C_{AI}}(D)\cdot s_{C_{BO}}(-m)\cdot (s_{C_{BO}}(+m)+s_{C_{BO}}(\bot)))\cdot
\tau_I(\partial_H(A\between B\between M))$.

For the details of proof, please refer to section \ref{app}, and we omit it.

That is, the Digital Cash Protocol 2 in Figure \ref{DCP2} $\tau_I(\partial_H(A\between B\between M))$ can exhibit desired external behaviors:

\begin{enumerate}
  \item The digital cash of Alice $SIGN_{sk_B}(ENC_{pk_B}(m_k))$ is anonymous for the merchant and the bank;
  \item The protocol can resist replay attacks, for the use of the random number in each digital cash;
  \item The bank does not know who cheats him when the double spending problem occurs, either the owner of the cash or the merchant.
\end{enumerate}
\end{proof}

\subsection{Digital Cash Protocol 3}\label{dcp3}

The Digital Cash Protocol 3 shown in Figure \ref{DCP3} is the basic digital cash protocol to ensure the anonymity, resist replay attacks and know who partly.

\begin{figure}
    \centering
    \includegraphics{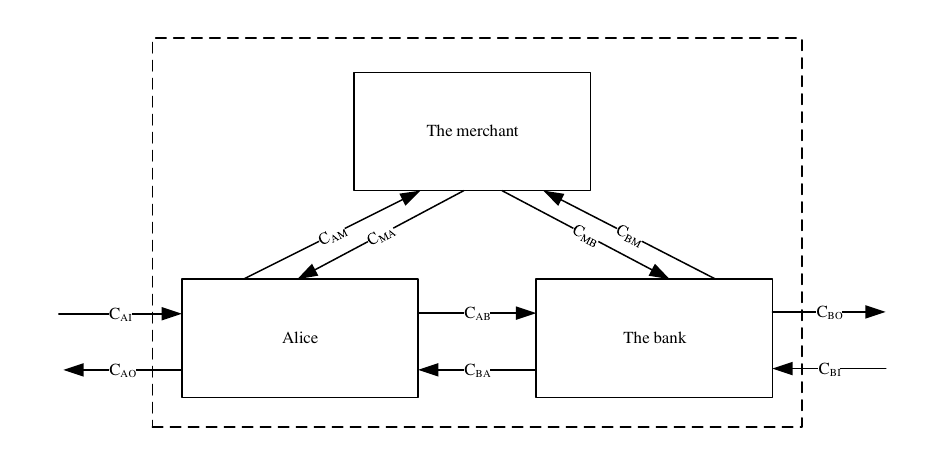}
    \caption{Digital Cash Protocol 3}
    \label{DCP3}
\end{figure}

The process of the protocol is as follows.

\begin{enumerate}
  \item Alice receives some requests $D$ from the outside through the channel $C_{AI}$ (the corresponding reading action is denoted $r_{C_{AI}}(D)$), she generates $n$ \$$m_i$ orders containing
  a random number $R_i$
  with each order encrypted by the bank's public key $pk_B$ through actions $enc_{pk_B}(m_i,R_i)$ for $1\leq i\leq n$, and sends them to the bank through the channel $C_{AB}$ (the corresponding
  sending action is denoted $s_{C_{AB}}(ENC_{pk_B}(m_1,R_1),\cdots,ENC_{pk_B}(m_n,R_n))$);
  \item The bank receives these orders from Alice through the channel $C_{AB}$ (the corresponding reading action is denoted $r_{C_{AB}}(ENC_{pk_B}(m_1,R_1),\cdots,ENC_{pk_B}(m_n,R_n))$), he
  randomly selects $n-1$ orders and decrypts them through actions $dec_{sk_B}(ENC_{pk_B}(m_j,R_j))$ for $1\leq j\leq n-1$ to ensure that each $m_j=m$ and $R_j$ is fresh. Then he sign the left $ENC_{pk_B}(m_k,R_k)$
  through an action $sign_{sk_B}(ENC_{pk_B}(m_k,R_k))$, checks the identity of Alice and deducts \$$m$ from Alice's account through an action $s_{C_{BO}}(-m)$, then sends \\
  $SIGN_{sk_B}(ENC_{pk_B}(m_k,R_k))$ to Alice through the channel $C_{BA}$ (the corresponding
  sending action is denoted $s_{C_{BA}}(SIGN_{sk_B}(ENC_{pk_B}(m_k,R_k)))$);
  \item Alice receives the signed order $SIGN_{sk_B}(ENC_{pk_B}(m_k,R_k))$ from the bank through the channel $C_{BA}$ (the corresponding reading action is denoted $r_{C_{BA}}(SIGN_{sk_B}(ENC_{pk_B}(m_k,R_k)))$),
  she generates a random string $R$ and encrypts it through an action $enc_{pk_B}(R)$,
  she may send the sighed order to some merchant through the channel $C_{AM}$ (the corresponding sending action is denoted $s_{C_{AM}}(SIGN_{sk_B}(ENC_{pk_B}(m_k,R_k)),ENC_{pk_B}(R))$);
  \item The merchant receives the signed cash from Alice through the channel $C_{AM}$ (the corresponding reading action is denoted $r_{C_{AM}}(SIGN_{sk_B}(ENC_{pk_B}(m_k,R_k)),ENC_{pk_B}(R))$), he sends it to
  the bank through the channel $C_{MB}$ (the corresponding sending action is denoted \\
  $s_{C_{MB}}(SIGN_{sk_B}(ENC_{pk_B}(m_k,R_k)))$);
  \item The bank receives the signed cash from the merchant through the channel $C_{MB}$ (the corresponding reading action is denoted $r_{C_{MB}}(SIGN_{sk_B}(ENC_{pk_B}(m_k,R_k)),ENC_{pk_B}(R))$), he de-signs
  the cash through an action $de\textrm{-}sign(SIGN_{sk_B}(ENC_{pk_B}(m_k,R_k)))$, then decrypts it through an action $dec_{sk_B}(ENC_{pk_B}(m_k,R_k))$ and $dec_{sk_B}(ENC_{pk_B}(R))$, if $isFresh(R_k)=TRUE$, he checks the identity of the merchant and
  credits \$$m$ to the merchant's account through an action $s_{C_{BO}}(+m)$; else if $isFresh(R_k)=FALSE$ and $isFresh(R)=TRUE$, he sends $\bot_A$ to the outside through the channel $C_{BO}$ (the corresponding
  sending action is denoted $s_{C_{BO}}(\bot_A)$); else if $isFresh(R_k)=FALSE$ and $isFresh(R)=FALSE$, he sends $\bot_M$ to the outside through the channel $C_{BO}$ (the corresponding
  sending action is denoted $s_{C_{BO}}(\bot_M)$).
\end{enumerate}

Where $D\in\Delta$, $\Delta$ is the set of data.

Alice's state transitions described by $APTC_G$ are as follows.

$A=\sum_{D\in\Delta}r_{C_{AI}}(D)\cdot A_2$

$A_2=enc_{pk_B}(m_1,R_1)\parallel\cdots\parallel enc_{pk_B}(m_n,R_n)\cdot A_3$

$A_3=s_{C_{AB}}(ENC_{pk_B}(m_1,R_1),\cdots,ENC_{pk_B}(m_n,R_n))\cdot A_4$

$A_4=r_{C_{BA}}(SIGN_{sk_B}(ENC_{pk_B}(m_k,R_k)))\cdot A_5$

$A_5=enc_{pk_B}(R)\cdot A_6$

$A_6=s_{C_{AM}}(SIGN_{sk_B}(ENC_{pk_B}(m_k,R_k)),ENC_{pk_B}(R))\cdot A$

The bank's state transitions described by $APTC_G$ are as follows.

$B=r_{C_{AB}}(ENC_{pk_B}(m_1,R_1),\cdots,ENC_{pk_B}(m_n,R_n))\cdot B_2$

$B_2=dec_{sk_B}(ENC_{pk_B}(m_1,R_1))\parallel\cdots \parallel dec_{sk_B}(ENC_{pk_B}(m_{n-1},R_{n-1}))\cdot B_3$

$B_3=sign_{sk_B}(ENC_{pk_B}(m_k,R_k))\cdot B_4$

$B_4=s_{C_{BO}}(-m)\cdot B_5$

$B_5=s_{C_{BA}}(SIGN_{sk_B}(ENC_{pk_B}(m_k,R_k)))\cdot B_6$

$B_6=r_{C_{MB}}(SIGN_{sk_B}(ENC_{pk_B}(m_k,R_k)),ENC_{pk_B}(R))\cdot B_7$

$B_7=de\textrm{-}sign(SIGN_{sk_B}(ENC_{pk_B}(m_k,R_k)))\cdot B_8$

$B_8=dec_{sk_B}(ENC_{pk_B}(R))\cdot B_9$

$B_9=\{isFresh(R_k)=TRUE\}\cdot s_{C_{BO}}(+m)\cdot B+\{isFresh(R_k)=FALSE\}\cdot\{isFresh(R)=TRUE\}\cdot s_{C_{BO}}(\bot_A)\cdot B\\
+\{isFresh(R_k)=FALSE\}\cdot\{isFresh(R)=FALSE\}\cdot s_{C_{BO}}(\bot_M)\cdot B$

The merchant's state transitions described by $APTC_G$ are as follows.

$M=r_{C_{AM}}(SIGN_{sk_B}(ENC_{pk_B}(m_k,R_k)),ENC_{pk_B}(R))\cdot M_2$

$M_2=s_{C_{MB}}(SIGN_{sk_B}(ENC_{pk_B}(m_k,R_k)),ENC_{pk_B}(R))\cdot M$

The sending action and the reading action of the same type data through the same channel can communicate with each other, otherwise, will cause a deadlock $\delta$. We define the following
communication functions.

$\gamma(r_{C_{AB}}(ENC_{pk_B}(m_1,R_1),\cdots,ENC_{pk_B}(m_n,R_n)),s_{C_{AB}}(ENC_{pk_B}(m_1,R_1),\cdots,ENC_{pk_B}(m_n,R_n)))\\
\triangleq c_{C_{AB}}(ENC_{pk_B}(m_1,R_1),\cdots,ENC_{pk_B}(m_n,R_n))$

$\gamma(r_{C_{BA}}(SIGN_{sk_B}(ENC_{pk_B}(m_k,R_k))),s_{C_{BA}}(SIGN_{sk_B}(ENC_{pk_B}(m_k,R_k))))\\
\triangleq c_{C_{BA}}(SIGN_{sk_B}(ENC_{pk_B}(m_k,R_k)))$

$\gamma(r_{C_{MB}}(SIGN_{sk_B}(ENC_{pk_B}(m_k,R_k)),ENC_{pk_B}(R)),\\
s_{C_{MB}}(SIGN_{sk_B}(ENC_{pk_B}(m_k,R_k)),ENC_{pk_B}(R)))\\
\triangleq c_{C_{MB}}(SIGN_{sk_B}(ENC_{pk_B}(m_k,R_k)),ENC_{pk_B}(R))$

$\gamma(r_{C_{AM}}(SIGN_{sk_B}(ENC_{pk_B}(m_k,R_k)),ENC_{pk_B}(R)),\\
s_{C_{AM}}(SIGN_{sk_B}(ENC_{pk_B}(m_k,R_k)),ENC_{pk_B}(R)))\\
\triangleq c_{C_{AM}}(SIGN_{sk_B}(ENC_{pk_B}(m_k,R_k)),ENC_{pk_B}(R))$

Let all modules be in parallel, then the protocol $A\quad B\quad M$ can be presented by the following process term.

$$\tau_I(\partial_H(\Theta(A\between B\between M)))=\tau_I(\partial_H(A\between B\between M))$$

where $H=\{r_{C_{AB}}(ENC_{pk_B}(m_1,R_1),\cdots,ENC_{pk_B}(m_n,R_n)),\\
s_{C_{AB}}(ENC_{pk_B}(m_1,R_1),\cdots,ENC_{pk_B}(m_n,R_n)),\\
r_{C_{BA}}(SIGN_{sk_B}(ENC_{pk_B}(m_k,R_k))),s_{C_{BA}}(SIGN_{sk_B}(ENC_{pk_B}(m_k,R_k))),\\
r_{C_{MB}}(SIGN_{sk_B}(ENC_{pk_B}(m_k,R_k)),ENC_{pk_B}(R)),\\
s_{C_{MB}}(SIGN_{sk_B}(ENC_{pk_B}(m_k,R_k)),ENC_{pk_B}(R)),\\
r_{C_{AM}}(SIGN_{sk_B}(ENC_{pk_B}(m_k,R_k)),ENC_{pk_B}(R)),\\
s_{C_{AM}}(SIGN_{sk_B}(ENC_{pk_B}(m_k,R_k)),ENC_{pk_B}(R))|D\in\Delta\}$,

$I=\{c_{C_{AB}}(ENC_{pk_B}(m_1,R_1),\cdots,ENC_{pk_B}(m_n,R_n)),c_{C_{BA}}(SIGN_{sk_B}(ENC_{pk_B}(m_k,R_k))),\\
c_{C_{MB}}(SIGN_{sk_B}(ENC_{pk_B}(m_k,R_k)),ENC_{pk_B}(R)),c_{C_{AM}}(SIGN_{sk_B}(ENC_{pk_B}(m_k,R_k)),ENC_{pk_B}(R)),\\
enc_{pk_B}(m_1,R_1),\cdots,enc_{pk_B}(m_n,R_n),dec_{sk_B}(ENC_{pk_B}(m_1,R_1)),\cdots, dec_{sk_B}(ENC_{pk_B}(m_{n-1},R_{n-1})),\\
sign_{sk_B}(ENC_{pk_B}(m_k,R_k)),de\textrm{-}sign(SIGN_{sk_B}(ENC_{pk_B}(m_k,R_k))),enc_{pk_B}(R),dec_{sk_B}(ENC_{pk_B}(R)),\\
\{isFresh(R_k)=TRUE\},\{isFresh(R_k)=FALSE\},\{isFresh(R)=TRUE\},\{isFresh(R)=FALSE\}|D\in\Delta\}$.

Then we get the following conclusion on the protocol.

\begin{theorem}
The Digital Cash Protocol 3 in Figure \ref{DCP3} is anonymous, and resists replaying and knowing who partly.
\end{theorem}

\begin{proof}
Based on the above state transitions of the above modules, by use of the algebraic laws of $APTC_G$, we can prove that

$\tau_I(\partial_H(A\between B\between M))=\sum_{D\in\Delta}(r_{C_{AI}}(D)\cdot s_{C_{BO}}(-m)\cdot (s_{C_{BO}}(+m)+s_{C_{BO}}(\bot_A)+s_{C_{BO}}(\bot_M)))\cdot
\tau_I(\partial_H(A\between B\between M))$.

For the details of proof, please refer to section \ref{app}, and we omit it.

That is, the Digital Cash Protocol 3 in Figure \ref{DCP3} $\tau_I(\partial_H(A\between B\between M))$ can exhibit desired external behaviors:

\begin{enumerate}
  \item The digital cash of Alice $SIGN_{sk_B}(ENC_{pk_B}(m_k))$ is anonymous for the merchant and the bank;
  \item The protocol can resist replay attack, for the use of the random number in each digital cash;
  \item The bank know who cheats him when the double spending problem occurs, either the owner of the cash or the merchant. But he does not know exactly the identity of the person.
\end{enumerate}
\end{proof}

\subsection{Digital Cash Protocol 4}\label{dcp4}

The Digital Cash Protocol 4 shown in Figure \ref{DCP4} is the basic digital cash protocol to ensure the anonymity, resist replay attacks and know who exactly.

\begin{figure}
    \centering
    \includegraphics{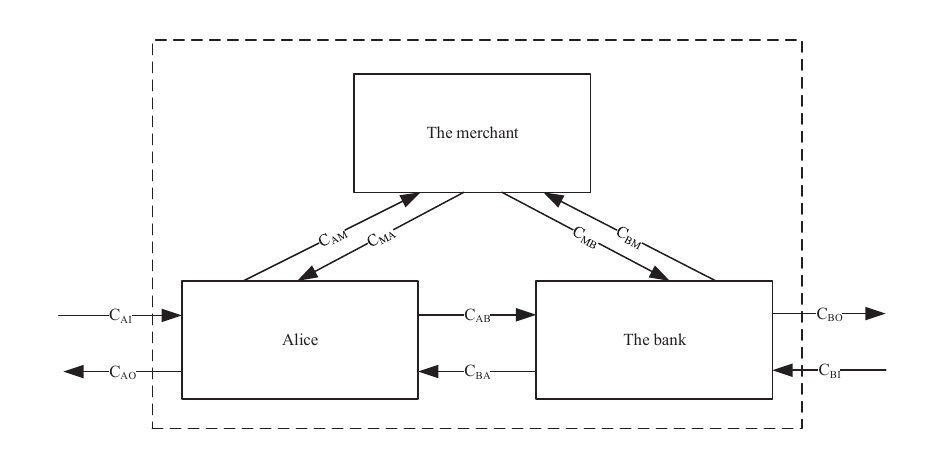}
    \caption{Digital Cash Protocol 4}
    \label{DCP4}
\end{figure}

The process of the protocol is as follows.

\begin{enumerate}
  \item Alice receives some requests $D$ from the outside through the channel $C_{AI}$ (the corresponding reading action is denoted $r_{C_{AI}}(D)$), she generates $n$ \$$m_i$ orders containing
  $m$, a random number $R_i$, and $n$ pair of string $I_{i1L},I_{i1R},\cdots,I_{inL},I_{inR}$,
  with each order blinded through actions $blind_{k_1}(m_i,R_i,I_{i1L},I_{i1R},\cdots,I_{inL},I_{inR})$ for $1\leq i\leq n$, and sends them to the bank through the channel $C_{AB}$ (the corresponding
  sending action is denoted $s_{C_{AB}}(BLIND_{k_1}(m_1,R_1,I_{1L},I_{1R}),\cdots,BLIND_{k_1}(m_n,R_n,I_{nL},I_{nR}))$);
  \item The bank receives these orders from Alice through the channel $C_{AB}$ (the corresponding reading action is denoted \\
  $r_{C_{AB}}(BLIND_{k_1}(m_1,R_1,I_{11L},I_{11R},\cdots,I_{1nL},I_{1nR}),\cdots,BLIND_{k_1}(m_n,R_n,I_{1nL},I_{1nR},\cdots,I_{nnL},I_{nnR}))$), he
  randomly selects $n-1$ orders and asks Alice to unblind them through actions \\
  $unblind_{k_1}(BLIND_{k_1}(m_j,R_j,I_{j1L},I_{j1R},\cdots,I_{jnL},I_{jnR}))$ and to reveal $I_{j1L},I_{j1R},\cdots,I_{jnL},I_{jnR}$ (see section \ref{aobcp1})
  for $1\leq j\leq n-1$ to ensure that each $m_j=m$ and $R_j$ is fresh. Then he sign the left \\
  $BLIND_{k_1}(m_k,R_k,I_{k1L},I_{k1R},\cdots,I_{knL},I_{knR})$
  through an action \\
  $sign_{sk_B}(BLIND_{k_1}(m_k,R_k,I_{k1L},I_{k1R},\cdots,I_{knL},I_{knR}))$, checks the identity of Alice and deducts \$$m$ from Alice's account through an action $s_{C_{BO}}(-m)$, then sends\\
  $SIGN_{sk_B}(BLIND_{k_1}(m_k,R_k,I_{k1L},I_{k1R},\cdots,I_{knL},I_{knR}))$ to Alice through the channel $C_{BA}$ (the corresponding
  sending action is denoted \\
  $s_{C_{BA}}(SIGN_{sk_B}(BLIND_{k_1}(m_k,R_k,I_{k1L},I_{k1R},\cdots,I_{knL},I_{knR})))$);
  \item Alice receives the signed order $SIGN_{sk_B}(BLIND_{k_1}(m_k,R_k,I_{k1L},I_{k1R},\cdots,I_{knL},I_{knR}))$ from the bank through the channel $C_{BA}$ (the corresponding
  reading action is denoted \\
  $r_{C_{BA}}(SIGN_{sk_B}(BLIND_{k_1}(m_k,R_k,I_{k1L},I_{k1R},\cdots,I_{knL},I_{knR})))$),
  she unblinds the sighed order through an action $unblind_{k_1}(SIGN_{sk_B}(BLIND_{k_1}(m_k,R_k,I_{k1L},I_{k1R},\cdots,I_{knL},I_{knR})))$ to get
  $SIGN_{sk_B}(m_k,R_k,I_{k1L},I_{k1R},\cdots,I_{knL},I_{knR})$
  she may send the sighed order to some merchant through the channel $C_{AM}$ (the corresponding sending action is denoted \\
  $s_{C_{AM}}(SIGN_{sk_B}(m_k,R_k,I_{k1L},I_{k1R},\cdots,I_{knL},I_{knR}))$);
  \item The merchant receives the signed cash from Alice through the channel $C_{AM}$ (the corresponding reading action is denoted
  $r_{C_{AM}}(SIGN_{sk_B}(m_k,R_k,I_{k1L},I_{k1R},\cdots,I_{knL},I_{knR}))$), he ask Alice to reveal half of $I_{j1L},I_{j1R},\cdots,I_{jnL},I_{jnR}$ (see section \ref{aobcp1}), and sends it to
  the bank through the channel $C_{MB}$ (the corresponding sending action is denoted \\
  $s_{C_{MB}}(SIGN_{sk_B}(m_k,R_k,I_{k1L},I_{k1R},\cdots,I_{knL},I_{knR}),I'_{k1L},I'_{k1R},\cdots,I'_{knL},I'_{knR})$);
  \item The bank receives the signed cash from the merchant through the channel $C_{MB}$ (the corresponding reading action is denoted \\
  $r_{C_{MB}}(SIGN_{sk_B}(m_k,R_k,I_{k1L},I_{k1R},\cdots,I_{knL},I_{knR}),I'_{k1L},I'_{k1R},\cdots,I'_{knL},I'_{knR})$), he de-signs
  the cash through an action $de\textrm{-}sign(SIGN_{sk_B}(m_k,R_k,I_{k1L},I_{k1R},\cdots,I_{knL},I_{knR}))$, if $isFresh(R_k)=TRUE$, he checks the identity of the merchant and
  credits \$$m$ to the merchant's account through an action $s_{C_{BO}}(+m)$; else if $isFresh(R_k)=FALSE$ and $isFresh(I'_{k1L},I'_{k1R},\cdots,I'_{knL},I'_{knR})=TRUE$,
  he gets the identity of Alice and sends $\bot_A$ to the outside through the channel $C_{BO}$ (the corresponding
  sending action is denoted $s_{C_{BO}}(\bot_A)$); else if $isFresh(R_k)=FALSE$ and $isFresh(I'_{k1L},I'_{k1R},\cdots,I'_{knL},I'_{knR})=FALSE$,
  he sends $\bot_M$ to the outside through the channel $C_{BO}$ (the corresponding sending action is denoted $s_{C_{BO}}(\bot_M)$).
\end{enumerate}

Where $D\in\Delta$, $\Delta$ is the set of data.

Alice's state transitions described by $APTC_G$ are as follows.

$A=\sum_{D\in\Delta}r_{C_{AI}}(D)\cdot A_2$

$A_2=blind_{k_1}(m_1,R_i,I_{11L},I_{11R},\cdots,I_{1nL},I_{1nR})\parallel\cdots\parallel blind_{k_1}(m_n,R_n,I_{n1L},I_{n1R},\cdots,I_{nnL},I_{nnR})\cdot A_3$

$A_3=s_{C_{AB}}(BLIND_{k_1}(m_1,R_1,I_{1L},I_{1R}),\cdots,BLIND_{k_1}(m_n,R_n,I_{nL},I_{nR}))\cdot A_4$

$A_4=r_{C_{BA}}(SIGN_{sk_B}(BLIND_{k_1}(m_k,R_k,I_{k1L},I_{k1R},\cdots,I_{knL},I_{knR})))\cdot A_5$

$A_5=unblind_{k_1}(SIGN_{sk_B}(BLIND_{k_1}(m_k,R_k,I_{k1L},I_{k1R},\cdots,I_{knL},I_{knR})))\cdot A_6$

$A_6=s_{C_{AM}}(SIGN_{sk_B}(m_k,R_k,I_{k1L},I_{k1R},\cdots,I_{knL},I_{knR}))\cdot A$

The bank's state transitions described by $APTC_G$ are as follows.

$B=r_{C_{AB}}(BLIND_{k_1}(m_1,R_1,I_{11L},I_{11R},\cdots,I_{1nL},I_{1nR}),\cdots,\\
BLIND_{k_1}(m_n,R_n,I_{1nL},I_{1nR},\cdots,I_{nnL},I_{nnR}))\cdot B_2$

$B_2=unblind_{k_1}(BLIND_{k_1}(m_1,R_1,I_{11L},I_{11R},\cdots,I_{1nL},I_{1nR}))\parallel\cdots \\
\parallel unblind_{k_1}(BLIND_{k_1}(m_{n-1},R_{n-1},I_{n-11L},I_{n-11R},\cdots,I_{n-1nL},I_{n-1nR}))\cdot B_3$

$B_3=sign_{sk_B}(BLIND_{k_1}(m_k,R_k,I_{k1L},I_{k1R},\cdots,I_{knL},I_{knR}))\cdot B_4$

$B_4=s_{C_{BO}}(-m)\cdot B_5$

$B_5=s_{C_{BA}}(SIGN_{sk_B}(BLIND_{k_1}(m_k,R_k,I_{k1L},I_{k1R},\cdots,I_{knL},I_{knR})))\cdot B_6$

$B_6=r_{C_{MB}}(SIGN_{sk_B}(m_k,R_k,I_{k1L},I_{k1R},\cdots,I_{knL},I_{knR}),I'_{k1L},I'_{k1R},\cdots,I'_{knL},I'_{knR})\cdot B_7$

$B_7=de\textrm{-}sign(SIGN_{sk_B}(m_k,R_k,I_{k1L},I_{k1R},\cdots,I_{knL},I_{knR}))\cdot B_8$

$B_8=\{isFresh(R_k)=TRUE\}\cdot s_{C_{BO}}(+m)\cdot B\\
+\{isFresh(R_k)=FALSE\}\cdot\{isFresh(I'_{k1L},I'_{k1R},\cdots,I'_{knL},I'_{knR})=TRUE\}\cdot s_{C_{BO}}(\bot_A)\cdot B\\
+\{isFresh(R_k)=FALSE\}\cdot\{isFresh(I'_{k1L},I'_{k1R},\cdots,I'_{knL},I'_{knR})=FALSE\}\cdot s_{C_{BO}}(\bot_M)\cdot B$

The merchant's state transitions described by $APTC_G$ are as follows.

$M=r_{C_{AM}}(SIGN_{sk_B}(m_k,R_k,I_{k1L},I_{k1R},\cdots,I_{knL},I_{knR}))\cdot M_2$

$M_2=s_{C_{MB}}(SIGN_{sk_B}(m_k,R_k,I_{k1L},I_{k1R},\cdots,I_{knL},I_{knR}),I'_{k1L},I'_{k1R},\cdots,I'_{knL},I'_{knR})\cdot M$

The sending action and the reading action of the same type data through the same channel can communicate with each other, otherwise, will cause a deadlock $\delta$. We define the following
communication functions.

$\gamma(r_{C_{AB}}(BLIND_{k_1}(m_1,R_1,I_{11L},I_{11R},\cdots,I_{1nL},I_{1nR}),\cdots,BLIND_{k_1}(m_n,R_n,I_{1nL},I_{1nR},\cdots,I_{nnL},I_{nnR})),\\
s_{C_{AB}}(BLIND_{k_1}(m_1,R_1,I_{11L},I_{11R},\cdots,I_{1nL},I_{1nR}),\cdots,BLIND_{k_1}(m_n,R_n,I_{1nL},I_{1nR},\cdots,I_{nnL},I_{nnR})))\\
\triangleq c_{C_{AB}}(BLIND_{k_1}(m_1,R_1,I_{11L},I_{11R},\cdots,I_{1nL},I_{1nR}),\cdots,BLIND_{k_1}(m_n,R_n,I_{1nL},I_{1nR},\cdots,I_{nnL},I_{nnR}))$

$\gamma(r_{C_{BA}}(SIGN_{sk_B}(BLIND_{k_1}(m_k,R_k,I_{k1L},I_{k1R},\cdots,I_{knL},I_{knR}))),\\
s_{C_{BA}}(SIGN_{sk_B}(BLIND_{k_1}(m_k,R_k,I_{k1L},I_{k1R},\cdots,I_{knL},I_{knR}))))\\
\triangleq c_{C_{BA}}(SIGN_{sk_B}(BLIND_{k_1}(m_k,R_k,I_{k1L},I_{k1R},\cdots,I_{knL},I_{knR})))$

$\gamma(r_{C_{MB}}(SIGN_{sk_B}(m_k,R_k,I_{k1L},I_{k1R},\cdots,I_{knL},I_{knR}),I'_{k1L},I'_{k1R},\cdots,I'_{knL},I'_{knR}),\\
s_{C_{MB}}(SIGN_{sk_B}(m_k,R_k,I_{k1L},I_{k1R},\cdots,I_{knL},I_{knR}),I'_{k1L},I'_{k1R},\cdots,I'_{knL},I'_{knR}))\\
\triangleq c_{C_{MB}}(SIGN_{sk_B}(m_k,R_k,I_{k1L},I_{k1R},\cdots,I_{knL},I_{knR}),I'_{k1L},I'_{k1R},\cdots,I'_{knL},I'_{knR})$

$\gamma(r_{C_{AM}}(SIGN_{sk_B}(m_k,R_k,I_{k1L},I_{k1R},\cdots,I_{knL},I_{knR})),\\
s_{C_{AM}}(SIGN_{sk_B}(m_k,R_k,I_{k1L},I_{k1R},\cdots,I_{knL},I_{knR})))\\
\triangleq c_{C_{AM}}(SIGN_{sk_B}(m_k,R_k,I_{k1L},I_{k1R},\cdots,I_{knL},I_{knR}))$

Let all modules be in parallel, then the protocol $A\quad B\quad M$ can be presented by the following process term.

$$\tau_I(\partial_H(\Theta(A\between B\between M)))=\tau_I(\partial_H(A\between B\between M))$$

where $H=\{r_{C_{AB}}(BLIND_{k_1}(m_1,R_1,I_{11L},I_{11R},\cdots,I_{1nL},I_{1nR}),\cdots,\\
BLIND_{k_1}(m_n,R_n,I_{1nL},I_{1nR},\cdots,I_{nnL},I_{nnR})),\\
s_{C_{AB}}(BLIND_{k_1}(m_1,R_1,I_{11L},I_{11R},\cdots,I_{1nL},I_{1nR}),\cdots,BLIND_{k_1}(m_n,R_n,I_{1nL},I_{1nR},\cdots,I_{nnL},I_{nnR})),\\
r_{C_{BA}}(SIGN_{sk_B}(BLIND_{k_1}(m_k,R_k,I_{k1L},I_{k1R},\cdots,I_{knL},I_{knR}))),\\
s_{C_{BA}}(SIGN_{sk_B}(BLIND_{k_1}(m_k,R_k,I_{k1L},I_{k1R},\cdots,I_{knL},I_{knR}))),\\
r_{C_{MB}}(SIGN_{sk_B}(m_k,R_k,I_{k1L},I_{k1R},\cdots,I_{knL},I_{knR}),I'_{k1L},I'_{k1R},\cdots,I'_{knL},I'_{knR}),\\
s_{C_{MB}}(SIGN_{sk_B}(m_k,R_k,I_{k1L},I_{k1R},\cdots,I_{knL},I_{knR}),I'_{k1L},I'_{k1R},\cdots,I'_{knL},I'_{knR}),\\
r_{C_{AM}}(SIGN_{sk_B}(m_k,R_k,I_{k1L},I_{k1R},\cdots,I_{knL},I_{knR})),\\
s_{C_{AM}}(SIGN_{sk_B}(m_k,R_k,I_{k1L},I_{k1R},\cdots,I_{knL},I_{knR}))|D\in\Delta\}$,

$I=\{c_{C_{AB}}(BLIND_{k_1}(m_1,R_1,I_{11L},I_{11R},\cdots,I_{1nL},I_{1nR}),\cdots,BLIND_{k_1}(m_n,R_n,I_{1nL},I_{1nR},\cdots,I_{nnL},I_{nnR})),\\
c_{C_{BA}}(SIGN_{sk_B}(BLIND_{k_1}(m_k,R_k,I_{k1L},I_{k1R},\cdots,I_{knL},I_{knR}))),\\
c_{C_{MB}}(SIGN_{sk_B}(m_k,R_k,I_{k1L},I_{k1R},\cdots,I_{knL},I_{knR}),I'_{k1L},I'_{k1R},\cdots,I'_{knL},I'_{knR}),\\
c_{C_{AM}}(SIGN_{sk_B}(m_k,R_k,I_{k1L},I_{k1R},\cdots,I_{knL},I_{knR})),\\
blind_{k_1}(m_1,R_i,I_{11L},I_{11R},\cdots,I_{1nL},I_{1nR}),\cdots, blind_{k_1}(m_n,R_n,I_{n1L},I_{n1R},\cdots,I_{nnL},I_{nnR}),\\
unblind_{k_1}(SIGN_{sk_B}(BLIND_{k_1}(m_k,R_k,I_{k1L},I_{k1R},\cdots,I_{knL},I_{knR}))),\\
unblind_{k_1}(BLIND_{k_1}(m_1,R_1,I_{11L},I_{11R},\cdots,I_{1nL},I_{1nR})),\\
\cdots, unblind_{k_1}(BLIND_{k_1}(m_{n-1},R_{n-1},I_{n-11L},I_{n-11R},\cdots,I_{n-1nL},I_{n-1nR})),\\
sign_{sk_B}(BLIND_{k_1}(m_k,R_k,I_{k1L},I_{k1R},\cdots,I_{knL},I_{knR})),\\
de\textrm{-}sign(SIGN_{sk_B}(m_k,R_k,I_{k1L},I_{k1R},\cdots,I_{knL},I_{knR})),\\
\{isFresh(R_k)=TRUE\},\{isFresh(R_k)=FALSE\},\\
\{isFresh(I'_{k1L},I'_{k1R},\cdots,I'_{knL},I'_{knR})=TRUE\},\{isFresh(I'_{k1L},I'_{k1R},\cdots,I'_{knL},I'_{knR})=FALSE\}|D\in\Delta\}$.

Then we get the following conclusion on the protocol.

\begin{theorem}
The Digital Cash Protocol 4 in Figure \ref{DCP4} is anonymous, resists replaying, and knowing who exactly.
\end{theorem}

\begin{proof}
Based on the above state transitions of the above modules, by use of the algebraic laws of $APTC_G$, we can prove that

$\tau_I(\partial_H(A\between B\between M))=\sum_{D\in\Delta}(r_{C_{AI}}(D)\cdot s_{C_{BO}}(-m)\cdot (s_{C_{BO}}(+m)+s_{C_{BO}}(\bot_A)+s_{C_{BO}}(\bot_M)))\cdot
\tau_I(\partial_H(A\between B\between M))$.

For the details of proof, please refer to section \ref{app}, and we omit it.

That is, the Digital Cash Protocol 4 in Figure \ref{DCP4} $\tau_I(\partial_H(A\between B\between M))$ can exhibit desired external behaviors:

\begin{enumerate}
  \item The digital cash of Alice $SIGN_{sk_B}(ENC_{pk_B}(m_k))$ is anonymous for the merchant and the bank;
  \item The protocol can resist replay attacks, for the use of the random number in each digital cash;
  \item The bank know who cheats him when the double spending problem occurs, either the owner of the cash or the merchant. And he knows exactly the identity of the person.
\end{enumerate}
\end{proof}

\newpage\section{Analyses of Secure Elections Protocols}\label{aosep}

Secure elections protocols should be able to prevent cheating and maintain the voter's privacy. An ideal secure election protocol should have the following properties:
\begin{enumerate}
  \item Legitimacy: only authorized voters can vote;
  \item Oneness: no one can vote more than once;
  \item Privacy: no one can determine for whom anyone else voted;
  \item Non-replicability: no one can duplicate anyone else's vote;
  \item Non-changeability: no one can change anyone else's vote;
  \item Validness: every voter can make sure that his vote has been taken into account in the final tabulation.
\end{enumerate}

In this chapter, we will introduce seven secure elections protocols in the following sections. In the analyses of these seven protocols, we will mainly analyze the security and privacy
properties.

\subsection{Secure Elections Protocol 1}\label{sep1}

The secure elections protocol 1 is shown in Figure \ref{SEP1}, which is a basic one to implement the basic voting function. In this protocol, there are a CTF (Central Tabulating Facility),
to collect the votes, and four voters: Alice, Bob, Carol and Dave.

\begin{figure}
    \centering
    \includegraphics{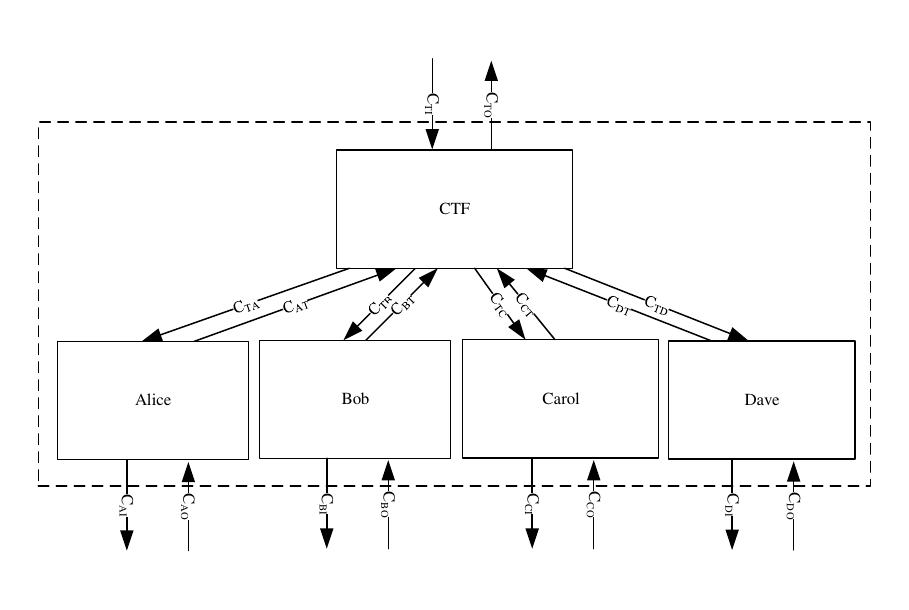}
    \caption{Secure elections protocol 1}
    \label{SEP1}
\end{figure}

The process of the protocol is as follows.

\begin{enumerate}
  \item Alice receives some voting request $D_A$ from the outside through the channel $C_{AI}$ (the corresponding reading action is denoted $r_{C_{AI}}(D_A)$), she generates the votes
  $v_A$, encrypts $v_A$ by CTF's public key $pk_T$ through an action $enc_{pk_T}(v_A)$, and sends it to CTF through the channel $C_{AT}$ (the corresponding sending action is denoted
  $s_{C_{AT}}(ENC_{pk_T}(v_A))$);
  \item Bob receives some voting request $D_B$ from the outside through the channel $C_{BI}$ (the corresponding reading action is denoted $r_{C_{BI}}(D_B)$), he generates the votes
  $v_B$, encrypts $v_B$ by CTF's public key $pk_T$ through an action $enc_{pk_T}(v_B)$, and sends it to CTF through the channel $C_{BT}$ (the corresponding sending action is denoted
  $s_{C_{BT}}(ENC_{pk_T}(v_B))$);
  \item Carol receives some voting request $D_C$ from the outside through the channel $C_{CI}$ (the corresponding reading action is denoted $r_{C_{CI}}(D_C)$), he generates the votes
  $v_C$, encrypts $v_C$ by CTF's public key $pk_T$ through an action $enc_{pk_T}(v_C)$, and sends it to CTF through the channel $C_{CT}$ (the corresponding sending action is denoted
  $s_{C_{CT}}(ENC_{pk_T}(v_C))$);
  \item Dave receives some voting request $D_D$ from the outside through the channel $C_{DI}$ (the corresponding reading action is denoted $r_{C_{DI}}(D_D)$), he generates the votes
  $v_D$, encrypts $v_D$ by CTF's public key $pk_T$ through an action $enc_{pk_T}(v_D)$, and sends it to CTF through the channel $C_{DT}$ (the corresponding sending action is denoted
  $s_{C_{DT}}(ENC_{pk_T}(v_D))$);
  \item CTF receives encrypted votes from Alice, Bob, Carol and Dave through the channels $C_{AT}$, $C_BT$, $C_{CT}$ and $C_{DT}$ (the corresponding reading actions are denoted
  $r_{C_{AT}}(ENC_{pk_T}(v_A))$, $r_{C_{BT}}(ENC_{pk_T}(v_B))$, $r_{C_{CT}}(ENC_{pk_T}(v_C))$ and $r_{C_{DT}}(ENC_{pk_T}(v_D))$ respectively), decrypts the encrypted votes through actions
  $dec_{sk_T}(ENC_{pk_T}(v_A))$, $dec_{sk_T}(ENC_{pk_T}(v_B))$, \\
  $dec_{sk_T}(ENC_{pk_T}(v_C))$, and $dec_{sk_T}(ENC_{pk_T}(v_D))$ to get $v_A$, $v_B$, $v_C$ and $v_D$, then sends
  $v_A+v_B+v_C+v_D$ to the outside through the channel $C_{TO}$ (the corresponding sending action is denoted $s_{C_{TO}}(v_A+v_B+v_C+v_D)$).
\end{enumerate}

Where $D_A,D_B,D_C,D_D\in\Delta$, $\Delta$ is the set of data.

Alice's state transitions described by $APTC_G$ are as follows.

$A=\sum_{D_A\in\Delta}r_{C_{AI}}(D_A)\cdot A_2$

$A_2=enc_{pk_T}(v_A)\cdot A_3$

$A_3=s_{C_{AT}}(ENC_{pk_T}(v_A))\cdot A$

Bob's state transitions described by $APTC_G$ are as follows.

$B=\sum_{D_B\in\Delta}r_{C_{BI}}(D_B)\cdot B_2$

$B_2=enc_{pk_T}(v_B)\cdot B_3$

$B_3=s_{C_{BT}}(ENC_{pk_T}(v_B))\cdot B$

Carol's state transitions described by $APTC_G$ are as follows.

$C=\sum_{D_C\in\Delta}r_{C_{CI}}(D_C)\cdot C_2$

$C_2=enc_{pk_T}(v_C)\cdot C_3$

$C_3=s_{C_{CT}}(ENC_{pk_T}(v_C))\cdot C$

Dave's state transitions described by $APTC_G$ are as follows.

$D=\sum_{D_D\in\Delta}r_{C_{DI}}(D_D)\cdot D_2$

$D_2=enc_{pk_T}(v_D)\cdot D_3$

$D_3=s_{C_{DT}}(ENC_{pk_T}(v_D))\cdot D$

CTF's state transitions described by $APTC_G$ are as follows.

$T=r_{C_{AT}}(ENC_{pk_T}(v_A))\parallel r_{C_{BT}}(ENC_{pk_T}(v_B))\parallel r_{C_{CT}}(ENC_{pk_T}(v_C))\parallel r_{C_{DT}}(ENC_{pk_T}(v_D))\cdot T_2$

$T_2=dec_{sk_T}(ENC_{pk_T}(v_A))\parallel dec_{sk_T}(ENC_{pk_T}(v_B))\parallel dec_{sk_T}(ENC_{pk_T}(v_C))\\
\parallel dec_{sk_T}(ENC_{pk_T}(v_D))\cdot T_3$

$T_3=s_{C_{TO}}(v_A+v_B+v_C+v_D)\cdot T$

The sending action and the reading action of the same type data through the same channel can communicate with each other, otherwise, will cause a deadlock $\delta$. We define the following
communication functions.

$\gamma(r_{C_{AT}}(ENC_{pk_T}(v_A)),s_{C_{AT}}(ENC_{pk_T}(v_A)))\triangleq c_{C_{AT}}(ENC_{pk_T}(v_A))$

$\gamma(r_{C_{BT}}(ENC_{pk_T}(v_B)),s_{C_{BT}}(ENC_{pk_T}(v_B)))\triangleq c_{C_{BT}}(ENC_{pk_T}(v_B))$

$\gamma(r_{C_{CT}}(ENC_{pk_T}(v_C)),s_{C_{CT}}(ENC_{pk_T}(v_C)))\triangleq c_{C_{CT}}(ENC_{pk_T}(v_C))$

$\gamma(r_{C_{DT}}(ENC_{pk_T}(v_D)),s_{C_{DT}}(ENC_{pk_T}(v_D)))\triangleq c_{C_{DT}}(ENC_{pk_T}(v_D))$

Let all modules be in parallel, then the protocol $A\quad B\quad C\quad D\quad T$ can be presented by the following process term.

$$\tau_I(\partial_H(\Theta(A\between B\between C\between D\between T)))=\tau_I(\partial_H(A\between B\between C\between D\between T))$$

where $H=\{r_{C_{AT}}(ENC_{pk_T}(v_A)),s_{C_{AT}}(ENC_{pk_T}(v_A)),\\
r_{C_{BT}}(ENC_{pk_T}(v_B)),s_{C_{BT}}(ENC_{pk_T}(v_B)),\\
r_{C_{CT}}(ENC_{pk_T}(v_C)),s_{C_{CT}}(ENC_{pk_T}(v_C)),\\
r_{C_{DT}}(ENC_{pk_T}(v_D)),s_{C_{DT}}(ENC_{pk_T}(v_D))|D_A,D_B,D_C,D_D\in\Delta\}$,

$I=\{c_{C_{AT}}(ENC_{pk_T}(v_A)),c_{C_{BT}}(ENC_{pk_T}(v_B)),c_{C_{CT}}(ENC_{pk_T}(v_C)),\\
c_{C_{DT}}(ENC_{pk_T}(v_D)),enc_{pk_T}(v_A),enc_{pk_T}(v_B),enc_{pk_T}(v_C),enc_{pk_T}(v_D),\\
dec_{sk_T}(ENC_{pk_T}(v_A)), dec_{sk_T}(ENC_{pk_T}(v_B)), dec_{sk_T}(ENC_{pk_T}(v_C)),\\
dec_{sk_T}(ENC_{pk_T}(v_D))|D_A,D_B,D_C,D_D\in\Delta\}$.

Then we get the following conclusion on the protocol.

\begin{theorem}
The secure elections protocol 1 in Figure \ref{SEP1} is secure, but basic.
\end{theorem}

\begin{proof}
Based on the above state transitions of the above modules, by use of the algebraic laws of $APTC_G$, we can prove that

$\tau_I(\partial_H(A\between B\between C\between D\between T))=\sum_{D_A,D_B,D_C,D_D\in\Delta}(r_{C_{AI}}(D_A)\parallel r_{C_{BI}}(D_B)\parallel r_{C_{CI}}(D_C)\parallel r_{C_{DI}}(D_D)\cdot s_{C_{TO}}(v_A+v_B+v_C+v_D))\cdot
\tau_I(\partial_H(A\between B\between C\between D\between T))$.

For the details of proof, please refer to section \ref{app}, and we omit it.

That is, the protocol in Figure \ref{SEP1} $\tau_I(\partial_H(A\between B\between C\between D\between T))$ can exhibit desired external behaviors, and is secure. But, for the properties of
secure elections protocols:
\begin{enumerate}
  \item Legitimacy: all voters can vote;
  \item Oneness: anyone can vote more than once;
  \item Privacy: no one can determine for whom anyone else voted;
  \item Non-replicability: CTF can duplicate anyone else's vote;
  \item Non-changeability: CTF can change anyone else's vote;
  \item Validness: every voter cannot make sure that his vote has been taken into account in the final tabulation.
\end{enumerate}
\end{proof}

\subsection{Secure Elections Protocol 2}\label{sep2}

The secure elections protocol 2 is shown in Figure \ref{SEP2}, which is a improved one based on the secure elections protocol 1 in section \ref{sep1}. In this protocol, there are a
CTF (Central Tabulating Facility), to check the identity of voters and collect the votes, and four voters: Alice, Bob, Carol and Dave.

\begin{figure}
    \centering
    \includegraphics{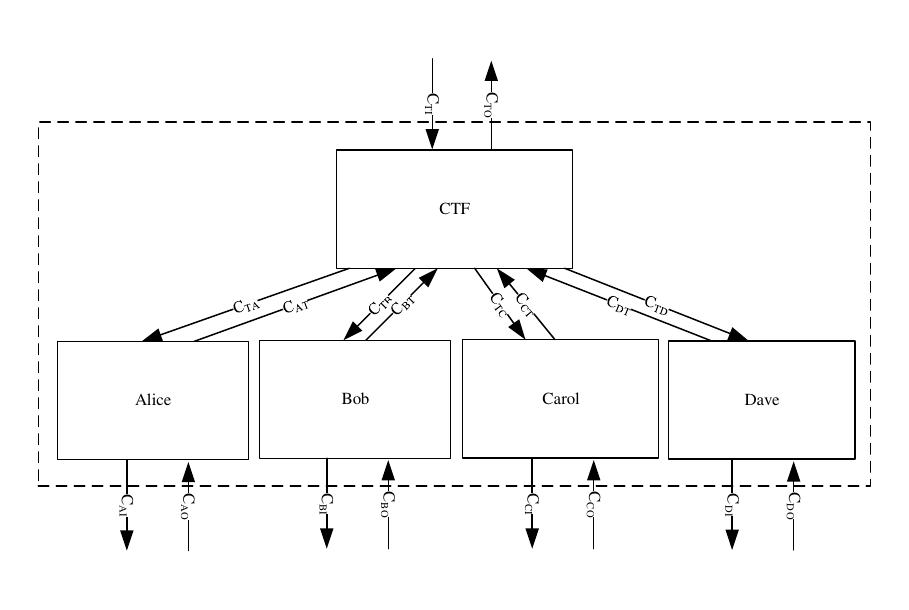}
    \caption{Secure elections protocol 2}
    \label{SEP2}
\end{figure}

The process of the protocol is as follows.

\begin{enumerate}
  \item Alice receives some voting request $D_A$ from the outside through the channel $C_{AI}$ (the corresponding reading action is denoted $r_{C_{AI}}(D_A)$), she generates the votes
  $v_A$, signs $v_A$ by her private key $sk_A$ through an action $sign_{sk_A}(v_A)$, then encrypts it by CTF's public key $pk_T$ through an action $enc_{pk_T}(SIGN_{sk_A}(v_A))$,
  and sends it to CTF through the channel $C_{AT}$ (the corresponding sending action is denoted $s_{C_{AT}}(ENC_{pk_T}(SIGN_{sk_A}(v_A)))$);
  \item Bob receives some voting request $D_B$ from the outside through the channel $C_{BI}$ (the corresponding reading action is denoted $r_{C_{BI}}(D_B)$), he generates the votes
  $v_B$, signs $v_B$ by his private key $sk_B$ through an action $sign_{sk_B}(v_B)$, then encrypts it by CTF's public key $pk_T$ through an action $enc_{pk_T}(SIGN_{sk_B}(v_B))$,
  and sends it to CTF through the channel $C_{BT}$ (the corresponding sending action is denoted $s_{C_{BT}}(ENC_{pk_T}(SIGN_{sk_B}(v_B)))$);
  \item Carol receives some voting request $D_C$ from the outside through the channel $C_{CI}$ (the corresponding reading action is denoted $r_{C_{CI}}(D_C)$), he generates the votes
  $v_C$, signs $v_C$ by his private key $sk_C$ through an action $sign_{sk_C}(v_C)$, then encrypts it by CTF's public key $pk_T$ through an action $enc_{pk_T}(SIGN_{sk_C}(v_C))$,
  and sends it to CTF through the channel $C_{CT}$ (the corresponding sending action is denoted $s_{C_{CT}}(ENC_{pk_T}(SIGN_{sk_C}(v_C)))$);
  \item Dave receives some voting request $D_D$ from the outside through the channel $C_{DI}$ (the corresponding reading action is denoted $r_{C_{DI}}(D_D)$), he generates the votes
  $v_D$, signs $v_D$ by his private key $sk_D$ through an action $sign_{sk_D}(v_D)$, then encrypts it by CTF's public key $pk_T$ through an action $enc_{pk_T}(SIGN_{sk_D}(v_D))$,
  and sends it to CTF through the channel $C_{DT}$ (the corresponding sending action is denoted $s_{C_{DT}}(ENC_{pk_T}(SIGN_{sk_D}(v_D)))$);
  \item CTF receives encrypted votes from Alice, Bob, Carol and Dave through the channels $C_{AT}$, $C_BT$, $C_{CT}$ and $C_{DT}$ (the corresponding reading actions are denoted\\
  $r_{C_{AT}}(ENC_{pk_T}(SIGN_{sk_A}(v_A)))$, $r_{C_{BT}}(ENC_{pk_T}(SIGN_{sk_B}(v_B)))$, \\
  $r_{C_{CT}}(ENC_{pk_T}(SIGN_{sk_C}(v_C)))$ and $r_{C_{DT}}(ENC_{pk_T}(SIGN_{sk_D}(v_D)))$
  respectively), decrypts the encrypted votes through actions
  $dec_{sk_T}(ENC_{pk_T}(SIGN_{sk_A}(v_A)))$, \\
  $dec_{sk_T}(ENC_{pk_T}(SIGN_{sk_B}(v_B)))$,
  $dec_{sk_T}(ENC_{pk_T}(SIGN_{sk_C}(v_C)))$, \\
  and $dec_{sk_T}(ENC_{pk_T}(SIGN_{sk_D}(v_D)))$ , then de-signs them through actions \\
  $de\textrm{-}sign_{pk_A}(SIGN_{sk_A}(v_A))$,
  $de\textrm{-}sign_{pk_B}(SIGN_{sk_B}(v_B))$, $de\textrm{-}sign_{pk_C}(SIGN_{sk_C}(v_C))$, and $de\textrm{-}sign_{pk_D}(SIGN_{sk_D}(v_D))$ to get $v_A$, $v_B$, $v_C$ and $v_D$, then sends
  $v_A+v_B+v_C+v_D,A,B,C,D$ to the outside through the channel $C_{TO}$ (the corresponding sending action is denoted $s_{C_{TO}}(v_A+v_B+v_C+v_D,A,B,C,D)$).
\end{enumerate}

Where $D_A,D_B,D_C,D_D\in\Delta$, $\Delta$ is the set of data.

Alice's state transitions described by $APTC_G$ are as follows.

$A=\sum_{D_A\in\Delta}r_{C_{AI}}(D_A)\cdot A_2$

$A_2=sign_{sk_A}(v_A)\cdot A_3$

$A_3=enc_{pk_T}(SIGN_{sk_A}(v_A))\cdot A_4$

$A_4=s_{C_{AT}}(ENC_{pk_T}(SIGN_{sk_A}(v_A)))\cdot A$

Bob's state transitions described by $APTC_G$ are as follows.

$B=\sum_{D_B\in\Delta}r_{C_{BI}}(D_B)\cdot B_2$

$B_2=sign_{sk_B}(v_B)\cdot B_3$

$B_3=enc_{pk_T}(SIGN_{sk_B}(v_B))\cdot B_4$

$B_4=s_{C_{BT}}(ENC_{pk_T}(SIGN_{sk_B}(v_B)))\cdot B$

Carol's state transitions described by $APTC_G$ are as follows.

$C=\sum_{D_C\in\Delta}r_{C_{CI}}(D_C)\cdot C_2$

$C_2=sign_{sk_C}(v_C)\cdot C_3$

$C_3=enc_{pk_T}(SIGN_{sk_C}(v_C))\cdot C_4$

$C_4=s_{C_{CT}}(ENC_{pk_T}(SIGN_{sk_C}(v_C)))\cdot C$

Dave's state transitions described by $APTC_G$ are as follows.

$D=\sum_{D_D\in\Delta}r_{C_{DI}}(D_D)\cdot D_2$

$D_2=sign_{sk_D}(v_D)\cdot D_3$

$D_3=enc_{pk_T}(SIGN_{sk_D}(v_D))\cdot D_4$

$D_4=s_{C_{DT}}(ENC_{pk_T}(SIGN_{sk_D}(v_D)))\cdot D$

CTF's state transitions described by $APTC_G$ are as follows.

$T=r_{C_{AT}}(ENC_{pk_T}(SIGN_{sk_A}(v_A)))\parallel r_{C_{BT}}(ENC_{pk_T}(SIGN_{sk_B}(v_B)))\\
\parallel r_{C_{CT}}(ENC_{pk_T}(SIGN_{sk_C}(v_C)))\parallel r_{C_{DT}}(ENC_{pk_T}(SIGN_{sk_D}(v_D)))\cdot T_2$

$T_2=dec_{sk_T}(ENC_{pk_T}(SIGN_{sk_A}(v_A)))\parallel dec_{sk_T}(ENC_{pk_T}(SIGN_{sk_B}(v_B)))\\
\parallel dec_{sk_T}(ENC_{pk_T}(SIGN_{sk_C}(v_C)))\parallel dec_{sk_T}(ENC_{pk_T}(SIGN_{sk_D}(v_D)))\cdot T_3$

$T_3=de\textrm{-}sign_{pk_A}(SIGN_{sk_A}(v_A))\parallel de\textrm{-}sign_{pk_B}(SIGN_{sk_B}(v_B))\\
\parallel de\textrm{-}sign_{pk_C}(SIGN_{sk_C}(v_C))\parallel de\textrm{-}sign_{pk_D}(SIGN_{sk_D}(v_D))\cdot T_4$

$T_4=s_{C_{TO}}(v_A+v_B+v_C+v_D,A,B,C,D)\cdot T$

The sending action and the reading action of the same type data through the same channel can communicate with each other, otherwise, will cause a deadlock $\delta$. We define the following
communication functions.

$\gamma(r_{C_{AT}}(ENC_{pk_T}(SIGN_{sk_A}(v_A))),s_{C_{AT}}(ENC_{pk_T}(SIGN_{sk_A}(v_A))))\\
\triangleq c_{C_{AT}}(ENC_{pk_T}(SIGN_{sk_A}(v_A)))$

$\gamma(r_{C_{BT}}(ENC_{pk_T}(SIGN_{sk_B}(v_B))),s_{C_{BT}}(ENC_{pk_T}(SIGN_{sk_B}(v_B))))\\
\triangleq c_{C_{BT}}(ENC_{pk_T}(SIGN_{sk_B}(v_B)))$

$\gamma(r_{C_{CT}}(ENC_{pk_T}(SIGN_{sk_C}(v_C))),s_{C_{CT}}(ENC_{pk_T}(SIGN_{sk_C}(v_C))))\\
\triangleq c_{C_{CT}}(ENC_{pk_T}(SIGN_{sk_C}(v_C)))$

$\gamma(r_{C_{DT}}(ENC_{pk_T}(SIGN_{sk_D}(v_D))),s_{C_{DT}}(ENC_{pk_T}(SIGN_{sk_D}(v_D))))\\
\triangleq c_{C_{DT}}(ENC_{pk_T}(SIGN_{sk_D}(v_D)))$

Let all modules be in parallel, then the protocol $A\quad B\quad C\quad D\quad T$ can be presented by the following process term.

$$\tau_I(\partial_H(\Theta(A\between B\between C\between D\between T)))=\tau_I(\partial_H(A\between B\between C\between D\between T))$$

where $H=\{r_{C_{AT}}(ENC_{pk_T}(SIGN_{sk_A}(v_A))),s_{C_{AT}}(ENC_{pk_T}(SIGN_{sk_A}(v_A))),\\
r_{C_{BT}}(ENC_{pk_T}(SIGN_{sk_B}(v_B))),s_{C_{BT}}(ENC_{pk_T}(SIGN_{sk_B}(v_B))),\\
r_{C_{CT}}(ENC_{pk_T}(SIGN_{sk_C}(v_C))),s_{C_{CT}}(ENC_{pk_T}(SIGN_{sk_C}(v_C))),\\
r_{C_{DT}}(ENC_{pk_T}(SIGN_{sk_D}(v_D))),s_{C_{DT}}(ENC_{pk_T}(SIGN_{sk_D}(v_D)))|D_A,D_B,D_C,D_D\in\Delta\}$,

$I=\{c_{C_{AT}}(ENC_{pk_T}(SIGN_{sk_A}(v_A))),c_{C_{BT}}(ENC_{pk_T}(SIGN_{sk_B}(v_B))),\\
c_{C_{CT}}(ENC_{pk_T}(SIGN_{sk_C}(v_C))),c_{C_{DT}}(ENC_{pk_T}(SIGN_{sk_D}(v_D))),\\
sign_{sk_A}(v_A),sign_{sk_B}(v_B),sign_{sk_C}(v_C),sign_{sk_D}(v_D),\\
enc_{pk_T}(SIGN_{sk_A}(v_A)),enc_{pk_T}(SIGN_{sk_B}(v_B)),enc_{pk_T}(SIGN_{sk_C}(v_C)),\\
enc_{pk_T}(SIGN_{sk_D}(v_D)),dec_{sk_T}(ENC_{pk_T}(SIGN_{sk_A}(v_A))),\\
dec_{sk_T}(ENC_{pk_T}(SIGN_{sk_B}(v_B))),dec_{sk_T}(ENC_{pk_T}(SIGN_{sk_C}(v_C))),\\
dec_{sk_T}(ENC_{pk_T}(SIGN_{sk_D}(v_D))),de\textrm{-}sign_{pk_A}(SIGN_{sk_A}(v_A)),\\
de\textrm{-}sign_{pk_B}(SIGN_{sk_B}(v_B)),de\textrm{-}sign_{pk_C}(SIGN_{sk_C}(v_C)),\\
de\textrm{-}sign_{pk_D}(SIGN_{sk_D}(v_D))|D_A,D_B,D_C,D_D\in\Delta\}$.

Then we get the following conclusion on the protocol.

\begin{theorem}
The secure elections protocol 2 in Figure \ref{SEP2} is improved based on the secure elections protocol 1.
\end{theorem}

\begin{proof}
Based on the above state transitions of the above modules, by use of the algebraic laws of $APTC_G$, we can prove that

$\tau_I(\partial_H(A\between B\between C\between D\between T))=\sum_{D_A,D_B,D_C,D_D\in\Delta}(r_{C_{AI}}(D_A)\parallel r_{C_{BI}}(D_B)\parallel r_{C_{CI}}(D_C)\parallel r_{C_{DI}}(D_D)\cdot s_{C_{TO}}(v_A+v_B+v_C+v_D,A,B,C,D))\cdot
\tau_I(\partial_H(A\between B\between C\between D\between T))$.

For the details of proof, please refer to section \ref{app}, and we omit it.

That is, the protocol in Figure \ref{SEP2} $\tau_I(\partial_H(A\between B\between C\between D\between T))$ can exhibit desired external behaviors, and is secure. But, for the properties of
secure elections protocols:
\begin{enumerate}
  \item Legitimacy: only authorized voters can vote;
  \item Oneness: no one can vote more than once;
  \item Privacy: CTF can determine for whom anyone else voted;
  \item Non-replicability: CTF can duplicate anyone else's vote;
  \item Non-changeability: CTF can change anyone else's vote;
  \item Validness: every voter cannot make sure that his vote has been taken into account in the final tabulation.
\end{enumerate}
\end{proof}

\subsection{Secure Elections Protocol 3}\label{sep3}

The secure elections protocol 3 is shown in Figure \ref{SEP3}, which is a improved one based on the secure elections protocol 2 in section \ref{sep2}. In this protocol, there are a
CTF (Central Tabulating Facility), to check the identity of voters and collect the votes, and four voters: Alice, Bob, Carol and Dave.

\begin{figure}
    \centering
    \includegraphics{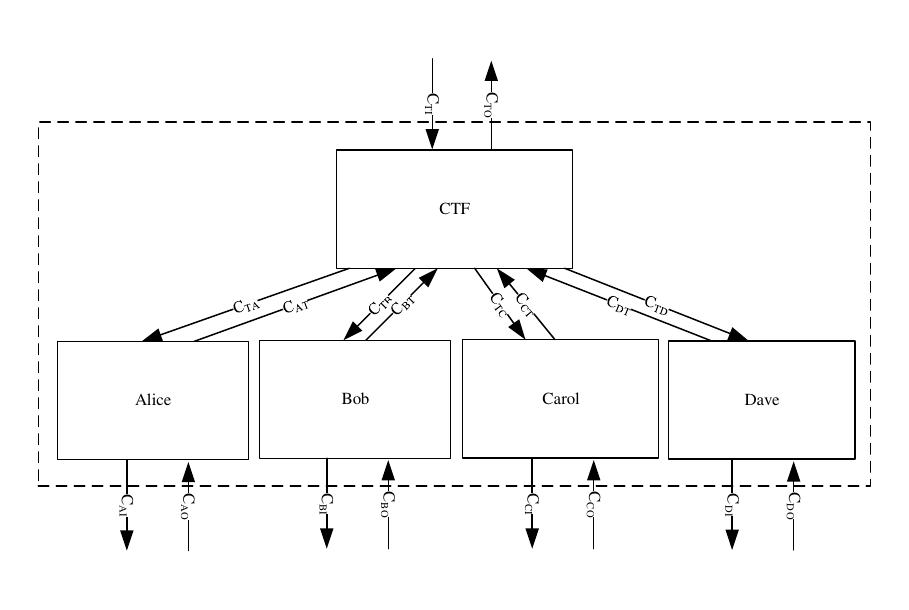}
    \caption{Secure elections protocol 3}
    \label{SEP3}
\end{figure}

The process of the protocol is as follows.

\begin{enumerate}
  \item Alice receives some voting request $D_A$ from the outside through the channel $C_{AI}$ (the corresponding reading action is denoted $r_{C_{AI}}(D_A)$), she generates a message
  containing all possible voting results $V_A$ and a random number $R_A$, blinds this message through an action $blind_{k_A}(V_A,R_A)$, totally there are 10 such messages are generated,
  then she sends these 10 messages to CTF through the channel $C_{AT}$ (the corresponding sending action is denoted $s_{C_{AT}}(10\times BLIND_{k_A}(V_A,R_A),k_A,A)$);
  \item Bob receives some voting request $D_B$ from the outside through the channel $C_{BI}$ (the corresponding reading action is denoted $r_{C_{BI}}(D_B)$), he generates a message
  containing all possible voting results $V_B$ and a random number $R_B$, blinds this message through an action $blind_{k_B}(V_B,R_B)$, totally there are 10 such messages are generated,
  then she sends these 10 messages to CTF through the channel $C_{BT}$ (the corresponding sending action is denoted $s_{C_{BT}}(10\times BLIND_{k_B}(V_B,R_B),k_B,B)$);
  \item Carol receives some voting request $D_C$ from the outside through the channel $C_{CI}$ (the corresponding reading action is denoted $r_{C_{CI}}(D_C)$), he generates a message
  containing all possible voting results $V_C$ and a random number $R_C$, blinds this message through an action $blind_{k_C}(V_C,R_C)$, totally there are 10 such messages are generated,
  then she sends these 10 messages to CTF through the channel $C_{CT}$ (the corresponding sending action is denoted $s_{C_{CT}}(10\times BLIND_{k_C}(V_C,R_C),k_C,C)$);
  \item Dave receives some voting request $D_D$ from the outside through the channel $C_{DI}$ (the corresponding reading action is denoted $r_{C_{DI}}(D_D)$), he generates a message
  containing all possible voting results $V_D$ and a random number $R_D$, blinds this message through an action $blind_{k_D}(V_D,R_D)$, totally there are 10 such messages are generated,
  then she sends these 10 messages to CTF through the channel $C_{DT}$ (the corresponding sending action is denoted $s_{C_{DT}}(10\times BLIND_{k_D}(V_D,R_D),k_D,D)$);
  \item CTF receives the messages from Alice, Bob, Carol and Dave through the channels $C_{AT}$, $C_{BT}$, $C_{CT}$ and $C_{DT}$ (the corresponding reading actions are denoted \\
  $r_{C_{AT}}(10\times BLIND_{k_A}(V_A,R_A),k_A,A)$, $s_{C_{BT}}(10\times BLIND_{k_B}(V_B,R_B),k_B,B)$, $s_{C_{CT}}(10\times BLIND_{k_C}(V_C,R_C),k_C,C)$, and
  $s_{C_{DT}}(10\times BLIND_{k_D}(V_D,R_D),k_D,D)$ respectively), he checks the names of Alice, Bob, Carol and Dave to make sure that they submit the blinded messages in the first time
  and stores the their names through the actions $check(A)$, $check(B)$, $check(C)$ and $check(D)$, then he unblinds randomly their 9 sets of messages to make sure that they are formed
  correctly through the actions $9\times unblind_{k_A}(BLIND_{k_A}(V_A,R_A))$, $9\times unblind_{k_B}(BLIND_{k_B}(V_B,R_B))$, $9\times unblind_{k_C}(BLIND_{k_C}(V_C,R_C))$, \\
  and $9\times unblind_{k_D}(BLIND_{k_D}(V_D,R_D))$. Then he sighs their left message through the actions $sign_{sk_T}(BLIND_{k_A}(V_A,R_A))$, $sign_{sk_T}(BLIND_{k_B}(V_B,R_B))$,\\
  $sign_{sk_T}(BLIND_{k_C}(V_C,R_C))$, and $sign_{sk_T}(BLIND_{k_D}(V_D,R_D))$, and sends them to Alice, Bob, Carol and Dave through the channels $C_{TA}$, $C_{TB}$, $C_{TC}$ and $C_{TD}$
  (the corresponding sending actions is denoted $s_{C_{TA}}(SIGN_{sk_T}(BLIND_{k_A}(V_A,R_A)))$, \\
  $s_{C_{TB}}(SIGN_{sk_T}(BLIND_{k_B}(V_B,R_B)))$,
  $s_{C_{TC}}(SIGN_{sk_T}(BLIND_{k_C}(V_C,R_C)))$, \\
  and $s_{C_{TD}}(SIGN_{sk_T}(BLIND_{k_D}(V_D,R_D)))$);
  \item Alice receives the signed message from CTF through the channel $C_{TA}$ (the corresponding reading action is denoted $r_{C_{TA}}(SIGN_{sk_T}(BLIND_{k_A}(V_A,R_A)))$), she unblinds
  the message through the action $unblind_{k_A}(SIGN_{sk_T}(BLIND_{k_A}(V_A,R_A)))$, selects her vote $v_A$ from $V_A$, encrypts the vote through an action $enc_{pk_T}(SIGN_{sk_T}(v_A,R_A))$,
  and sends her encrypted vote to CTF through the channel $C_{AT}$ (the corresponding sending action is denoted $s_{C_{AT}}(ENC_{pk_T}(SIGN_{sk_T}(v_A,R_A)))$);
  \item Bob receives the signed message from CTF through the channel $C_{TB}$ (the corresponding reading action is denoted $r_{C_{TB}}(SIGN_{sk_T}(BLIND_{k_B}(V_B,R_B)))$), he unblinds
  the message through the action $unblind_{k_B}(SIGN_{sk_T}(BLIND_{k_B}(V_B,R_B)))$, selects his vote $v_B$ from $V_B$, encrypts the vote through an action $enc_{pk_T}(SIGN_{sk_T}(v_B,R_B))$,
  and sends his encrypted vote to CTF through the channel $C_{BT}$ (the corresponding sending action is denoted $s_{C_{BT}}(ENC_{pk_T}(SIGN_{sk_T}(v_B,R_B)))$);
  \item Carol receives the signed message from CTF through the channel $C_{TC}$ (the corresponding reading action is denoted $r_{C_{TC}}(SIGN_{sk_T}(BLIND_{k_C}(V_C,R_C)))$), he unblinds
  the message through the action $unblind_{k_C}(SIGN_{sk_T}(BLIND_{k_C}(V_C,R_C)))$, selects his vote $v_C$ from $V_C$, encrypts the vote through an action $enc_{pk_T}(SIGN_{sk_T}(v_C,R_C))$,
  and sends his encrypted vote to CTF through the channel $C_{CT}$ (the corresponding sending action is denoted $s_{C_{CT}}(ENC_{pk_T}(SIGN_{sk_T}(v_C,R_C)))$);
  \item Dave receives the signed message from CTF through the channel $C_{TD}$ (the corresponding reading action is denoted $r_{C_{TD}}(SIGN_{sk_T}(BLIND_{k_D}(V_D,R_D)))$), he unblinds
  the message through the action $unblind_{k_D}(SIGN_{sk_T}(BLIND_{k_D}(V_D,R_D)))$, selects his vote $v_D$ from $V_D$, encrypts the vote through an action $enc_{pk_T}(SIGN_{sk_T}(v_D,R_D))$,
  and sends his encrypted vote to CTF through the channel $C_{DT}$ (the corresponding sending action is denoted $s_{C_{DT}}(ENC_{pk_T}(SIGN_{sk_T}(v_D,R_D)))$);
  \item CTF receives the votes from Alice, Bob, Carol and Dave through the channels $C_{AT}$, $C_{BT}$, $C_{CT}$ and $C_{DT}$ (the corresponding reading actions are denoted \\
  $r_{C_{AT}}(ENC_{pk_T}(SIGN_{sk_T}(v_A,R_A)))$, $r_{C_{BT}}(ENC_{pk_T}(SIGN_{sk_T}(v_B,R_B)))$, \\
  $r_{C_{CT}}(ENC_{pk_T}(SIGN_{sk_T}(v_C,R_C)))$, and
  $r_{C_{DT}}(ENC_{pk_T}(SIGN_{sk_T}(v_D,R_D)))$ respectively), he decrypts and de-signs these votes through the actions $dec_{sk_T}(ENC_{pk_T}(SIGN_{sk_T}(v_A,R_A)))$,
  $dec_{sk_T}(ENC_{pk_T}(SIGN_{sk_T}(v_B,R_B)))$, $dec_{sk_T}(ENC_{pk_T}(SIGN_{sk_T}(v_C,R_C)))$, \\
  $dec_{sk_T}(ENC_{pk_T}(SIGN_{sk_T}(v_D,R_D)))$ and
  $de\textrm{-}sign_{pk_T}(SIGN_{sk_T}(v_A,R_A))$, \\
  $de\textrm{-}sign_{pk_T}(SIGN_{sk_T}(v_B,R_B))$, $de\textrm{-}sign_{pk_T}(SIGN_{sk_T}(v_C,R_C))$, \\
  $de\textrm{-}sign_{pk_T}(SIGN_{sk_T}(v_D,R_D))$. If $isFresh(R_A)=TRUE$, he tabulates $v_A$ through an action $tab(v_A)$, else $tab(0)$; if $isFresh(R_B)=TRUE$, he tabulates $v_B$ through
  an action $tab(v_B)$, else $tab(0)$; if $isFresh(R_C)=TRUE$, he tabulates $v_C$ through an action $tab(v_C)$, else $tab(0)$; if $isFresh(R_D)=TRUE$, he tabulates $v_D$ through an action
  $tab(v_D)$, else $tab(0)$. Finally, he sends the voting results $TAB$ to the outside through the channel $C_{TO}$ (the corresponding sending action is denoted $s_{C_{TO}}(TAB)$).
\end{enumerate}

Where $D_A,D_B,D_C,D_D\in\Delta$, $\Delta$ is the set of data.

Alice's state transitions described by $APTC_G$ are as follows.

$A=\sum_{D_A\in\Delta}r_{C_{AI}}(D_A)\cdot A_2$

$A_2=blind_{k_A}(V_A,R_A)\cdot A_3$

$A_3=s_{C_{AT}}(10\times BLIND_{k_A}(V_A,R_A),k_A,A)\cdot A_4$

$A_4=r_{C_{TA}}(SIGN_{sk_T}(BLIND_{k_A}(V_A,R_A)))\cdot A_5$

$A_5=unblind_{k_A}(SIGN_{sk_T}(BLIND_{k_A}(V_A,R_A)))\cdot A_6$

$A_6=enc_{pk_T}(SIGN_{sk_T}(v_A,R_A))\cdot A_7$

$A_7=s_{C_{AT}}(ENC_{pk_T}(SIGN_{sk_T}(v_A,R_A)))\cdot A$

Bob's state transitions described by $APTC_G$ are as follows.

$B=\sum_{D_B\in\Delta}r_{C_{BI}}(D_B)\cdot B_2$

$B_2=blind_{k_B}(V_B,R_B)\cdot B_3$

$B_3=s_{C_{BT}}(10\times BLIND_{k_B}(V_B,R_B),k_B,B)\cdot B_4$

$B_4=r_{C_{TB}}(SIGN_{sk_T}(BLIND_{k_B}(V_B,R_B)))\cdot B_5$

$B_5=unblind_{k_B}(SIGN_{sk_T}(BLIND_{k_B}(V_B,R_B)))\cdot B_6$

$B_6=enc_{pk_T}(SIGN_{sk_T}(v_B,R_B))\cdot B_7$

$B_7=s_{C_{BT}}(ENC_{pk_T}(SIGN_{sk_T}(v_B,R_B)))\cdot B$

Carol's state transitions described by $APTC_G$ are as follows.

$C=\sum_{D_C\in\Delta}r_{C_{CI}}(D_C)\cdot C_2$

$C_2=blind_{k_C}(V_C,R_C)\cdot C_3$

$C_3=s_{C_{CT}}(10\times BLIND_{k_C}(V_C,R_C),k_C,C)\cdot C_4$

$C_4=r_{C_{TC}}(SIGN_{sk_T}(BLIND_{k_C}(V_C,R_C)))\cdot C_5$

$C_5=unblind_{k_C}(SIGN_{sk_T}(BLIND_{k_C}(V_C,R_C)))\cdot C_6$

$C_6=enc_{pk_T}(SIGN_{sk_T}(v_C,R_C))\cdot C_7$

$C_7=s_{C_{CT}}(ENC_{pk_T}(SIGN_{sk_T}(v_C,R_C)))\cdot C$

Dave's state transitions described by $APTC_G$ are as follows.

$D=\sum_{D_D\in\Delta}r_{C_{DI}}(D_D)\cdot D_2$

$D_2=blind_{k_D}(V_D,R_D)\cdot D_3$

$D_3=s_{C_{DT}}(10\times BLIND_{k_D}(V_D,R_D),k_D,D)\cdot D_4$

$D_4=r_{C_{TD}}(SIGN_{sk_T}(BLIND_{k_D}(V_D,R_D)))\cdot D_5$

$D_5=unblind_{k_D}(SIGN_{sk_T}(BLIND_{k_D}(V_D,R_D)))\cdot D_6$

$D_6=enc_{pk_T}(SIGN_{sk_T}(v_D,R_D))\cdot D_7$

$D_7=s_{C_{DT}}(ENC_{pk_T}(SIGN_{sk_T}(v_D,R_D)))\cdot D$

CTF's state transitions described by $APTC_G$ are as follows.

$T=r_{C_{AT}}(10\times BLIND_{k_A}(V_A,R_A),k_A,A)\parallel s_{C_{BT}}(10\times BLIND_{k_B}(V_B,R_B),k_B,B)\\
\parallel s_{C_{CT}}(10\times BLIND_{k_C}(V_C,R_C),k_C,C)
\parallel s_{C_{DT}}(10\times BLIND_{k_D}(V_D,R_D),k_D,D)\cdot T_2$

$T_2=check(A)\parallel check(B)\parallel check(C)\parallel check(D)\cdot T_3$

$T_3=9\times unblind_{k_A}(BLIND_{k_A}(V_A,R_A))\parallel 9\times unblind_{k_B}(BLIND_{k_B}(V_B,R_B))\\
\parallel 9\times unblind_{k_C}(BLIND_{k_C}(V_C,R_C))
\parallel 9\times unblind_{k_D}(BLIND_{k_D}(V_D,R_D))\cdot T_4$

$T_4=sign_{sk_T}(BLIND_{k_A}(V_A,R_A))\parallel sign_{sk_T}(BLIND_{k_B}(V_B,R_B))\\
\parallel sign_{sk_T}(BLIND_{k_C}(V_C,R_C))
\parallel sign_{sk_T}(BLIND_{k_D}(V_D,R_D))\cdot T_5$

$T_5=s_{C_{TA}}(SIGN_{sk_T}(BLIND_{k_A}(V_A,R_A)))\parallel s_{C_{TB}}(SIGN_{sk_T}(BLIND_{k_B}(V_B,R_B)))\\
\parallel s_{C_{TC}}(SIGN_{sk_T}(BLIND_{k_C}(V_C,R_C)))\parallel s_{C_{TD}}(SIGN_{sk_T}(BLIND_{k_D}(V_D,R_D)))\cdot T_6$

$T_6=r_{C_{AT}}(ENC_{pk_T}(SIGN_{sk_T}(v_A,R_A)))\parallel r_{C_{BT}}(ENC_{pk_T}(SIGN_{sk_T}(v_B,R_B)))\\
\parallel r_{C_{CT}}(ENC_{pk_T}(SIGN_{sk_T}(v_C,R_C)))\parallel r_{C_{DT}}(ENC_{pk_T}(SIGN_{sk_T}(v_D,R_D)))\cdot T_7$

$T_7=dec_{sk_T}(ENC_{pk_T}(SIGN_{sk_T}(v_A,R_A)))\parallel dec_{sk_T}(ENC_{pk_T}(SIGN_{sk_T}(v_B,R_B)))\\
\parallel dec_{sk_T}(ENC_{pk_T}(SIGN_{sk_T}(v_C,R_C)))\parallel dec_{sk_T}(ENC_{pk_T}(SIGN_{sk_T}(v_D,R_D)))\cdot T_8$

$T_8=de\textrm{-}sign_{pk_T}(SIGN_{sk_T}(v_A,R_A))\parallel de\textrm{-}sign_{pk_T}(SIGN_{sk_T}(v_B,R_B))\\
\parallel de\textrm{-}sign_{pk_T}(SIGN_{sk_T}(v_C,R_C))\parallel de\textrm{-}sign_{pk_T}(SIGN_{sk_T}(v_D,R_D))\cdot T_9$

$T_9=((\{isFresh(R_A)=TRUE\}\cdot tab(v_A)+\{isFresh(R_A)=FALSE\}\cdot tab(0))\\
\parallel (\{isFresh(R_B)=TRUE\}\cdot tab(v_B)+\{isFresh(R_B)=FALSE\}\cdot tab(0))\\
\parallel (\{isFresh(R_C)=TRUE\}\cdot tab(v_C)+\{isFresh(R_C)=FALSE\}\cdot tab(0))\\
\parallel (\{isFresh(R_D)=TRUE\}\cdot tab(v_D)+\{isFresh(R_D)=FALSE\}\cdot tab(0)))\cdot T_{10}$

$T_{10}=s_{C_{TO}}(TAB)\cdot T$

The sending action and the reading action of the same type data through the same channel can communicate with each other, otherwise, will cause a deadlock $\delta$. We define the following
communication functions.

$\gamma(r_{C_{AT}}(10\times BLIND_{k_A}(V_A,R_A),k_A,A),s_{C_{AT}}(10\times BLIND_{k_A}(V_A,R_A),k_A,A))\\
\triangleq c_{C_{AT}}(10\times BLIND_{k_A}(V_A,R_A),k_A,A)$

$\gamma(r_{C_{TA}}(SIGN_{sk_T}(BLIND_{k_A}(V_A,R_A))),s_{C_{TA}}(SIGN_{sk_T}(BLIND_{k_A}(V_A,R_A))))\\
\triangleq c_{C_{TA}}(SIGN_{sk_T}(BLIND_{k_A}(V_A,R_A)))$

$\gamma(r_{C_{AT}}(ENC_{pk_T}(SIGN_{sk_T}(v_A,R_A))),s_{C_{AT}}(ENC_{pk_T}(SIGN_{sk_T}(v_A,R_A))))\\
\triangleq c_{C_{AT}}(ENC_{pk_T}(SIGN_{sk_T}(v_A,R_A)))$

$\gamma(r_{C_{BT}}(10\times BLIND_{k_B}(V_B,R_B),k_B,B),s_{C_{BT}}(10\times BLIND_{k_B}(V_B,R_B),k_B,B))\\
\triangleq c_{C_{BT}}(10\times BLIND_{k_B}(V_B,R_B),k_B,B)$

$\gamma(r_{C_{TB}}(SIGN_{sk_T}(BLIND_{k_B}(V_B,R_B))),s_{C_{TB}}(SIGN_{sk_T}(BLIND_{k_B}(V_B,R_B))))\\
\triangleq c_{C_{TB}}(SIGN_{sk_T}(BLIND_{k_B}(V_B,R_B)))$

$\gamma(r_{C_{BT}}(ENC_{pk_T}(SIGN_{sk_T}(v_B,R_B))),s_{C_{BT}}(ENC_{pk_T}(SIGN_{sk_T}(v_B,R_B))))\\
\triangleq c_{C_{BT}}(ENC_{pk_T}(SIGN_{sk_T}(v_B,R_B)))$

$\gamma(r_{C_{CT}}(10\times BLIND_{k_C}(V_C,R_C),k_C,C),s_{C_{CT}}(10\times BLIND_{k_C}(V_C,R_C),k_C,C))\\
\triangleq c_{C_{CT}}(10\times BLIND_{k_C}(V_C,R_C),k_C,C)$

$\gamma(r_{C_{TC}}(SIGN_{sk_T}(BLIND_{k_C}(V_C,R_C))),s_{C_{TC}}(SIGN_{sk_T}(BLIND_{k_C}(V_C,R_C))))\\
\triangleq c_{C_{TC}}(SIGN_{sk_T}(BLIND_{k_C}(V_C,R_C)))$

$\gamma(r_{C_{CT}}(ENC_{pk_T}(SIGN_{sk_T}(v_C,R_C))),s_{C_{CT}}(ENC_{pk_T}(SIGN_{sk_T}(v_C,R_C))))\\
\triangleq c_{C_{CT}}(ENC_{pk_T}(SIGN_{sk_T}(v_C,R_C)))$

$\gamma(r_{C_{DT}}(10\times BLIND_{k_D}(V_D,R_D),k_D,D),s_{C_{DT}}(10\times BLIND_{k_D}(V_D,R_D),k_D,D))\\
\triangleq c_{C_{DT}}(10\times BLIND_{k_D}(V_D,R_D),k_D,D)$

$\gamma(r_{C_{TD}}(SIGN_{sk_T}(BLIND_{k_D}(V_D,R_D))),s_{C_{TD}}(SIGN_{sk_T}(BLIND_{k_D}(V_D,R_D))))\\
\triangleq c_{C_{TD}}(SIGN_{sk_T}(BLIND_{k_D}(V_D,R_D)))$

$\gamma(r_{C_{DT}}(ENC_{pk_T}(SIGN_{sk_T}(v_D,R_D))),s_{C_{DT}}(ENC_{pk_T}(SIGN_{sk_T}(v_D,R_D))))\\
\triangleq c_{C_{DT}}(ENC_{pk_T}(SIGN_{sk_T}(v_D,R_D)))$

Let all modules be in parallel, then the protocol $A\quad B\quad C\quad D\quad T$ can be presented by the following process term.

$$\tau_I(\partial_H(\Theta(A\between B\between C\between D\between T)))=\tau_I(\partial_H(A\between B\between C\between D\between T))$$

where $H=\{r_{C_{AT}}(10\times BLIND_{k_A}(V_A,R_A),k_A,A),s_{C_{AT}}(10\times BLIND_{k_A}(V_A,R_A),k_A,A),\\
r_{C_{TA}}(SIGN_{sk_T}(BLIND_{k_A}(V_A,R_A))),s_{C_{TA}}(SIGN_{sk_T}(BLIND_{k_A}(V_A,R_A))),\\
r_{C_{AT}}(ENC_{pk_T}(SIGN_{sk_T}(v_A,R_A))),s_{C_{AT}}(ENC_{pk_T}(SIGN_{sk_T}(v_A,R_A))),\\
r_{C_{BT}}(10\times BLIND_{k_B}(V_B,R_B),k_B,B),s_{C_{BT}}(10\times BLIND_{k_B}(V_B,R_B),k_B,B),\\
r_{C_{TB}}(SIGN_{sk_T}(BLIND_{k_B}(V_B,R_B))),s_{C_{TB}}(SIGN_{sk_T}(BLIND_{k_B}(V_B,R_B))),\\
r_{C_{BT}}(ENC_{pk_T}(SIGN_{sk_T}(v_B,R_B))),s_{C_{BT}}(ENC_{pk_T}(SIGN_{sk_T}(v_B,R_B))),\\
r_{C_{CT}}(10\times BLIND_{k_C}(V_C,R_C),k_C,C),s_{C_{CT}}(10\times BLIND_{k_C}(V_C,R_C),k_C,C),\\
r_{C_{TC}}(SIGN_{sk_T}(BLIND_{k_C}(V_C,R_C))),s_{C_{TC}}(SIGN_{sk_T}(BLIND_{k_C}(V_C,R_C))),\\
r_{C_{CT}}(ENC_{pk_T}(SIGN_{sk_T}(v_C,R_C))),s_{C_{CT}}(ENC_{pk_T}(SIGN_{sk_T}(v_C,R_C))),\\
r_{C_{DT}}(10\times BLIND_{k_D}(V_D,R_D),k_D,D),s_{C_{DT}}(10\times BLIND_{k_D}(V_D,R_D),k_D,D),\\
r_{C_{TD}}(SIGN_{sk_T}(BLIND_{k_D}(V_D,R_D))),s_{C_{TD}}(SIGN_{sk_T}(BLIND_{k_D}(V_D,R_D))),\\
r_{C_{DT}}(ENC_{pk_T}(SIGN_{sk_T}(v_D,R_D))),s_{C_{DT}}(ENC_{pk_T}(SIGN_{sk_T}(v_D,R_D)))|D_A,D_B,D_C,D_D\in\Delta\}$,

$I=\{c_{C_{AT}}(10\times BLIND_{k_A}(V_A,R_A),k_A,A),c_{C_{TA}}(SIGN_{sk_T}(BLIND_{k_A}(V_A,R_A))),\\
c_{C_{AT}}(ENC_{pk_T}(SIGN_{sk_T}(v_A,R_A))),c_{C_{BT}}(10\times BLIND_{k_B}(V_B,R_B),k_B,B),\\
c_{C_{TB}}(SIGN_{sk_T}(BLIND_{k_B}(V_B,R_B))),c_{C_{BT}}(ENC_{pk_T}(SIGN_{sk_T}(v_B,R_B))),\\
c_{C_{CT}}(10\times BLIND_{k_C}(V_C,R_C),k_C,C),c_{C_{TC}}(SIGN_{sk_T}(BLIND_{k_C}(V_C,R_C))),\\
c_{C_{CT}}(ENC_{pk_T}(SIGN_{sk_T}(v_C,R_C))),c_{C_{DT}}(10\times BLIND_{k_D}(V_D,R_D),k_D,D),\\
c_{C_{TD}}(SIGN_{sk_T}(BLIND_{k_D}(V_D,R_D))),c_{C_{DT}}(ENC_{pk_T}(SIGN_{sk_T}(v_D,R_D))),\\
blind_{k_A}(V_A,R_A),blind_{k_B}(V_B,R_B),blind_{k_C}(V_C,R_C),blind_{k_D}(V_D,R_D),\\
unblind_{k_A}(SIGN_{sk_T}(BLIND_{k_A}(V_A,R_A))),unblind_{k_B}(SIGN_{sk_T}(BLIND_{k_B}(V_B,R_B))),\\
unblind_{k_C}(SIGN_{sk_T}(BLIND_{k_C}(V_C,R_C))),unblind_{k_D}(SIGN_{sk_T}(BLIND_{k_D}(V_D,R_D))),\\
enc_{pk_T}(SIGN_{sk_T}(v_A,R_A)),enc_{pk_T}(SIGN_{sk_T}(v_B,R_B)),enc_{pk_T}(SIGN_{sk_T}(v_C,R_C)),\\
enc_{pk_T}(SIGN_{sk_T}(v_D,R_D)),check(A), check(B), check(C), check(D),\\
9\times unblind_{k_A}(BLIND_{k_A}(V_A,R_A)), 9\times unblind_{k_B}(BLIND_{k_B}(V_B,R_B)),\\
9\times unblind_{k_C}(BLIND_{k_C}(V_C,R_C)),9\times unblind_{k_D}(BLIND_{k_D}(V_D,R_D)),\\
sign_{sk_T}(BLIND_{k_A}(V_A,R_A)), sign_{sk_T}(BLIND_{k_B}(V_B,R_B)),\\
sign_{sk_T}(BLIND_{k_C}(V_C,R_C)),sign_{sk_T}(BLIND_{k_D}(V_D,R_D)),\\
dec_{sk_T}(ENC_{pk_T}(SIGN_{sk_T}(v_A,R_A))), dec_{sk_T}(ENC_{pk_T}(SIGN_{sk_T}(v_B,R_B))),\\
dec_{sk_T}(ENC_{pk_T}(SIGN_{sk_T}(v_C,R_C))), dec_{sk_T}(ENC_{pk_T}(SIGN_{sk_T}(v_D,R_D))),\\
de\textrm{-}sign_{pk_T}(SIGN_{sk_T}(v_A,R_A)), de\textrm{-}sign_{pk_T}(SIGN_{sk_T}(v_B,R_B)),\\
de\textrm{-}sign_{pk_T}(SIGN_{sk_T}(v_C,R_C)), de\textrm{-}sign_{pk_T}(SIGN_{sk_T}(v_D,R_D)),\\
\{isFresh(R_A)=TRUE\}, tab(v_A),\{isFresh(R_A)=FALSE\}, tab(0),\\
\{isFresh(R_B)=TRUE\}, tab(v_B),\{isFresh(R_B)=FALSE\},\\
\{isFresh(R_C)=TRUE\}, tab(v_C),\{isFresh(R_C)=FALSE\},\\
\{isFresh(R_D)=TRUE\}, tab(v_D),\{isFresh(R_D)=FALSE\}|D_A,D_B,D_C,D_D\in\Delta\}$.

Then we get the following conclusion on the protocol.

\begin{theorem}
The secure elections protocol 3 in Figure \ref{SEP3} is improved based on the secure elections protocol 2.
\end{theorem}

\begin{proof}
Based on the above state transitions of the above modules, by use of the algebraic laws of $APTC_G$, we can prove that

$\tau_I(\partial_H(A\between B\between C\between D\between T))=\sum_{D_A,D_B,D_C,D_D\in\Delta}(r_{C_{AI}}(D_A)\parallel r_{C_{BI}}(D_B)\parallel r_{C_{CI}}(D_C)\parallel r_{C_{DI}}(D_D)\cdot s_{C_{TO}}(TAB))\cdot
\tau_I(\partial_H(A\between B\between C\between D\between T))$.

For the details of proof, please refer to section \ref{app}, and we omit it.

That is, the protocol in Figure \ref{SEP3} $\tau_I(\partial_H(A\between B\between C\between D\between T))$ can exhibit desired external behaviors, and is secure. But, for the properties of
secure elections protocols:
\begin{enumerate}
  \item Legitimacy: only authorized voters can vote;
  \item Oneness: no one can vote more than once;
  \item Privacy: no one can determine for whom anyone else voted;
  \item Non-replicability: no one can duplicate anyone else's vote;
  \item Non-changeability: no one can change anyone else's vote;
  \item Validness: every voter can make sure that his vote has been taken into account in the final tabulation, if CTF is trustworthy.
\end{enumerate}

But, CTF still can make valid signatures to cheat.
\end{proof}

\subsection{Secure Elections Protocol 4}\label{sep4}

The secure elections protocol 4 is shown in Figure \ref{SEP4}, which is a improved one based on the secure elections protocol 3 in section \ref{sep3}. In this protocol, there are a CLA
(Central Legitimization Agency) to check the identity of voters and a CTF (Central Tabulating Facility) to collect the votes, and four voters: Alice, Bob, Carol and Dave.

\begin{figure}
    \centering
    \includegraphics{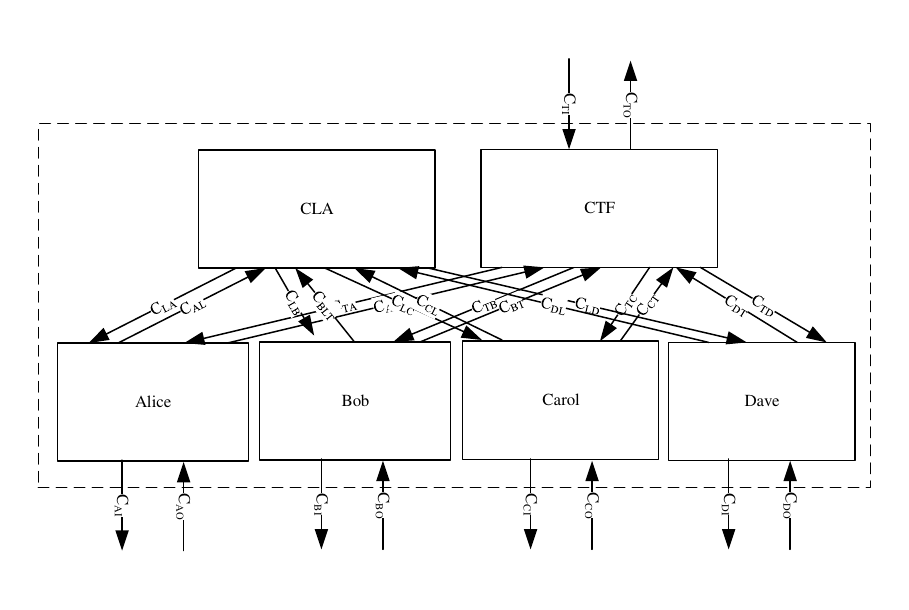}
    \caption{Secure elections protocol 4}
    \label{SEP4}
\end{figure}

The process of the protocol is as follows.

\begin{enumerate}
  \item Alice receives some voting request $D_A$ from the outside through the channel $C_{AI}$ (the corresponding reading action is denoted $r_{C_{AI}}(D_A)$), she generates a request
  $r_A$, encrypts it by CLA's public key through an action $enc_{pk_L}(r_A)$, and sends it to CLA through the channel $C_{AL}$ (the corresponding sending action is denoted
  $s_{C_{AL}}(ENC_{pk_L}(r_A))$);
  \item Bob receives some voting request $D_B$ from the outside through the channel $C_{BI}$ (the corresponding reading action is denoted $r_{C_{BI}}(D_B)$), he generates a request
  $r_B$, encrypts it by CLA's public key through an action $enc_{pk_L}(r_B)$, and sends it to CLA through the channel $C_{BL}$ (the corresponding sending action is denoted
  $s_{C_{BL}}(ENC_{pk_L}(r_B))$);
  \item Carol receives some voting request $D_C$ from the outside through the channel $C_{CI}$ (the corresponding reading action is denoted $r_{C_{CI}}(D_C)$), he generates a request
  $r_C$, encrypts it by CLA's public key through an action $enc_{pk_L}(r_C)$, and sends it to CLA through the channel $C_{CL}$ (the corresponding sending action is denoted
  $s_{C_{CL}}(ENC_{pk_L}(r_C))$);
  \item Dave receives some voting request $D_D$ from the outside through the channel $C_{DI}$ (the corresponding reading action is denoted $r_{C_{DI}}(D_D)$), he generates a request
  $r_D$, encrypts it by CLA's public key through an action $enc_{pk_L}(r_D)$, and sends it to CLA through the channel $C_{DL}$ (the corresponding sending action is denoted
  $s_{C_{DL}}(ENC_{pk_L}(r_D))$);
  \item CLA receives the requests from Alice, Bob, Carol and Dave through the channels $C_{AL}$, $C_{BL}$, $C_{CL}$ and $C_{DL}$ (the corresponding reading actions are denoted
  $s_{C_{AL}}(ENC_{pk_L}(r_A))$, $s_{C_{BL}}(ENC_{pk_L}(r_B))$, $s_{C_{CL}}(ENC_{pk_L}(r_C))$, and $s_{C_{DL}}(ENC_{pk_L}(r_D))$ respectively), he decrypts these encrypted requests
  through the actions $dec_{sk_L}(ENC_{pk_L}(r_A))$, $dec_{sk_L}(ENC_{pk_L}(r_B))$, $dec_{sk_L}(ENC_{pk_L}(r_C))$, and $dec_{sk_L}(ENC_{pk_L}(r_D))$ to get $r_A$, $r_B$, $r_C$ and $r_D$,
  records the names of Alice, Bob, Carol and Dave through actions $rec(A)$, $rec(B)$, $rec(C)$ and $rec(D)$; Both CLA and CTF maintain a table of valid numbers, and the table of CTF is
  obtained from that of CLA; then CLA randomly selects numbers $R_A$, $R_B$, $R_C$ and $R_D$, encrypts them through actions $enc_{pk_A}(R_A)$, $enc_{pk_B}(R_B)$, $enc_{pk_C}(R_C)$ and
  $enc_{pk_D}(R_D)$ and sends them to Alice, Bob, Carol and Dave through the channels $C_{LA}$, $C_{LB}$,
  $C_{LC}$ and $C_{LD}$ respectively (the corresponding sending action is denoted $s_{C_{LA}}(ENC_{pk_A}(R_A))$, $s_{C_{LB}}(ENC_{pk_B}(R_B))$, $s_{C_{LC}}(ENC_{pk_C}(R_C))$, and
  $s_{C_{LD}}(ENC_{pk_D}(R_D))$);
  \item Alice receives the encrypted number from CLA through the channel $C_{LA}$ (the corresponding reading action is denoted $r_{C_{LA}}(ENC_{pk_A}(R_A))$), she decrypts the encrypted
  number through an action $dec_{sk_A}(ENC_{pk_A}(R_A))$ to get $R_A$, generates a random identity number $I_A$ through an action $rsg_{I_A}$ and her vote $v_A$, encrypted $I_A,R_A,v_A$
  by CTF's public key through an action $enc_{pk_T}(I_A,R_A,v_A)$ and sends the encrypted message to CTF through the channel $C_{AT}$ (the corresponding sending action is denoted
  $s_{C_{AT}}(ENC_{pk_T}(I_A,R_A,v_A))$);
  \item Bob receives the encrypted number from CLA through the channel $C_{LB}$ (the corresponding reading action is denoted $r_{C_{LB}}(ENC_{pk_B}(R_B))$), he decrypts the encrypted
  number through an action $dec_{sk_B}(ENC_{pk_B}(R_B))$ to get $R_B$, generates a random identity number $I_B$ through an action $rsg_{I_B}$ and his vote $v_B$, encrypted $I_B,R_B,v_B$
  by CTF's public key through an action $enc_{pk_T}(I_B,R_B,v_B)$ and sends the encrypted message to CTF through the channel $C_{BT}$ (the corresponding sending action is denoted
  $s_{C_{BT}}(ENC_{pk_T}(I_B,R_B,v_B))$);
  \item Carol receives the encrypted number from CLA through the channel $C_{LC}$ (the corresponding reading action is denoted $r_{C_{LC}}(ENC_{pk_C}(R_C))$), he decrypts the encrypted
  number through an action $dec_{sk_C}(ENC_{pk_C}(R_C))$ to get $R_C$, generates a random identity number $I_C$ through an action $rsg_{I_C}$ and his vote $v_C$, encrypted $I_C,R_C,v_C$
  by CTF's public key through an action $enc_{pk_T}(I_C,R_C,v_C)$ and sends the encrypted message to CTF through the channel $C_{CT}$ (the corresponding sending action is denoted
  $s_{C_{CT}}(ENC_{pk_T}(I_C,R_C,v_C))$);
  \item Dave receives the encrypted number from CLA through the channel $C_{LD}$ (the corresponding reading action is denoted $r_{C_{LD}}(ENC_{pk_A}(R_D))$), he decrypts the encrypted
  number through an action $dec_{sk_D}(ENC_{pk_D}(R_D))$ to get $R_D$, generates a random identity number $I_D$ through an action $rsg_{I_D}$ and his vote $v_D$, encrypted $I_D,R_D,v_D$
  by CTF's public key through an action $enc_{pk_T}(I_D,R_D,v_D)$ and sends the encrypted message to CTF through the channel $C_{DT}$ (the corresponding sending action is denoted
  $s_{C_{DT}}(ENC_{pk_T}(I_D,R_D,v_D))$);
  \item CTF receives the encrypted messages from Alice, Bob, Carol and Dave through the channels $C_{AT}$, $C_{BT}$, $C_{CT}$ and $C_{DT}$ (the corresponding reading actions are denoted\\
  $r_{C_{AT}}(ENC_{pk_T}(I_A,R_A,v_A))$, $r_{C_{BT}}(ENC_{pk_T}(I_B,R_B,v_B))$, $r_{C_{CT}}(ENC_{pk_T}(I_C,R_C,v_C))$, and $r_{C_{DT}}(ENC_{pk_T}(I_D,R_D,v_D))$ respectively), he
  decrypts these encrypted messages through actions $dec_{sk_T}(ENC_{pk_T}(I_A,R_A,v_A))$, $dec_{sk_T}(ENC_{pk_T}(I_B,R_B,v_B))$, \\
  $dec_{sk_T}(ENC_{pk_T}(I_C,R_C,v_C))$, and
  $dec_{sk_T}(ENC_{pk_T}(I_D,R_D,v_D))$. If $isExisted(R_A)=TRUE$, he removes $R_A$ from its table through an action $remove(R_A)$, records the vote $v_A$ and the pair of $I_A$ and $v_A$
  into the voting results $TAB$ through an action $rec(I_A,v_A)$, else he does nothing; if $isExisted(R_B)=TRUE$, he removes $R_B$ from its table through an action $remove(R_B)$, records the vote $v_B$ and the pair of $I_B$ and $v_B$
  into the voting results $TAB$ through an action $rec(I_B,v_B)$, else he does nothing; if $isExisted(R_C)=TRUE$, he removes $R_C$ from its table through an action $remove(R_C)$, records the vote $v_C$ and the pair of $I_C$ and $v_C$
  into the voting results $TAB$ through an action $rec(I_C,v_C)$, else he does nothing; if $isExisted(R_D)=TRUE$, he removes $R_D$ from its table through an action $remove(R_D)$, records the vote $v_D$ and the pair of $I_D$ and $v_D$
  into the voting results $TAB$ through an action $rec(I_D,v_D)$, else he does nothing. Finally, he sends the voting results $TAB$ to the outside through the channel $C_{TO}$ (the corresponding sending action is denoted
  $s_{C_{TO}}(TAB)$).
\end{enumerate}

Where $D_A,D_B,D_C,D_D\in\Delta$, $\Delta$ is the set of data.

Alice's state transitions described by $APTC_G$ are as follows.

$A=\sum_{D_A\in\Delta}r_{C_{AI}}(D_A)\cdot A_2$

$A_2=enc_{pk_L}(r_A)\cdot A_3$

$A_3=s_{C_{AL}}(ENC_{pk_L}(r_A))\cdot A_4$

$A_4=r_{C_{LA}}(ENC_{pk_A}(R_A))\cdot A_5$

$A_5=dec_{sk_A}(ENC_{pk_A}(R_A))\cdot A_6$

$A_6=rsg_{I_A}\cdot A_7$

$A_7=enc_{pk_T}(I_A,R_A,v_A)\cdot A_8$

$A_8=s_{C_{AT}}(ENC_{pk_T}(I_A,R_A,v_A))\cdot A$

Bob's state transitions described by $APTC_G$ are as follows.

$B=\sum_{D_B\in\Delta}r_{C_{BI}}(D_B)\cdot B_2$

$B_2=enc_{pk_L}(r_B)\cdot B_3$

$B_3=s_{C_{BL}}(ENC_{pk_L}(r_B))\cdot B_4$

$B_4=r_{C_{LB}}(ENC_{pk_B}(R_B))\cdot B_5$

$B_5=dec_{sk_B}(ENC_{pk_B}(R_B))\cdot B_6$

$B_6=rsg_{I_B}\cdot B_7$

$B_7=enc_{pk_T}(I_B,R_B,v_B)\cdot B_8$

$B_8=s_{C_{BT}}(ENC_{pk_T}(I_B,R_B,v_B))\cdot B$

Carol's state transitions described by $APTC_G$ are as follows.

$C=\sum_{D_C\in\Delta}r_{C_{CI}}(D_C)\cdot C_2$

$C_2=enc_{pk_L}(r_C)\cdot C_3$

$C_3=s_{C_{CL}}(ENC_{pk_L}(r_C))\cdot C_4$

$C_4=r_{C_{LC}}(ENC_{pk_C}(R_C))\cdot C_5$

$C_5=dec_{sk_C}(ENC_{pk_C}(R_C))\cdot C_6$

$C_6=rsg_{I_C}\cdot C_7$

$C_7=enc_{pk_T}(I_C,R_C,v_C)\cdot C_8$

$C_8=s_{C_{CT}}(ENC_{pk_T}(I_C,R_C,v_C))\cdot C$

Dave's state transitions described by $APTC_G$ are as follows.

$D=\sum_{D_D\in\Delta}r_{C_{DI}}(D_D)\cdot D_2$

$D_2=enc_{pk_L}(r_D)\cdot D_3$

$D_3=s_{C_{DL}}(ENC_{pk_L}(r_D))\cdot D_4$

$D_4=r_{C_{LD}}(ENC_{pk_D}(R_D))\cdot D_5$

$D_5=dec_{sk_D}(ENC_{pk_D}(R_D))\cdot D_6$

$D_6=rsg_{I_D}\cdot D_7$

$D_7=enc_{pk_T}(I_D,R_D,v_D)\cdot D_8$

$D_8=s_{C_{DT}}(ENC_{pk_T}(I_D,R_D,v_D))\cdot D$

CLA's state transitions described by $APTC_G$ are as follows.

$L=s_{C_{AL}}(ENC_{pk_L}(r_A))\parallel s_{C_{BL}}(ENC_{pk_L}(r_B))\\
\parallel s_{C_{CL}}(ENC_{pk_L}(r_C))\parallel s_{C_{DL}}(ENC_{pk_L}(r_D))\cdot L_2$

$L_2=dec_{sk_L}(ENC_{pk_L}(r_A))\parallel dec_{sk_L}(ENC_{pk_L}(r_B))\\
\parallel dec_{sk_L}(ENC_{pk_L}(r_C))\parallel dec_{sk_L}(ENC_{pk_L}(r_D))\cdot L_3$

$L_3=rec(A)\parallel rec(B)\parallel rec(C)\parallel rec(D)\cdot L_4$

$L_4=enc_{pk_A}(R_A)\parallel enc_{pk_B}(R_B)\parallel enc_{pk_C}(R_C)\parallel enc_{pk_D}(R_D)\cdot L_5$

$L_5=s_{C_{LA}}(ENC_{pk_A}(R_A))\parallel s_{C_{LB}}(ENC_{pk_B}(R_B))\\
\parallel s_{C_{LC}}(ENC_{pk_C}(R_C))\parallel s_{C_{LD}}(ENC_{pk_D}(R_D))\cdot L$

CTF's state transitions described by $APTC_G$ are as follows.

$T=r_{C_{AT}}(ENC_{pk_T}(I_A,R_A,v_A))\parallel r_{C_{BT}}(ENC_{pk_T}(I_B,R_B,v_B))\\
\parallel r_{C_{CT}}(ENC_{pk_T}(I_C,R_C,v_C))\parallel r_{C_{DT}}(ENC_{pk_T}(I_D,R_D,v_D))\cdot T_2$

$T_2=dec_{sk_T}(ENC_{pk_T}(I_A,R_A,v_A))\parallel dec_{sk_T}(ENC_{pk_T}(I_B,R_B,v_B))\\
\parallel dec_{sk_T}(ENC_{pk_T}(I_C,R_C,v_C))\parallel dec_{sk_T}(ENC_{pk_T}(I_D,R_D,v_D))\cdot T_3$

$T_3=((\{isExisted(R_A)=TRUE\}\cdot remove(R_A)\cdot rec(I_A,v_A)+\{isExisted(R_A)=FALSE\})\\
\parallel (\{isExisted(R_B)=TRUE\}\cdot remove(R_B)\cdot rec(I_B,v_B)+\{isExisted(R_B)=FALSE\})\\
\parallel (\{isExisted(R_C)=TRUE\}\cdot remove(R_C)\cdot rec(I_C,v_C)+\{isExisted(R_C)=FALSE\})\\
\parallel (\{isExisted(R_D)=TRUE\}\cdot remove(R_D)\cdot rec(I_D,v_D)+\{isExisted(R_D)=FALSE\}))\cdot T_{4}$

$T_{4}=s_{C_{TO}}(TAB)\cdot T$

The sending action and the reading action of the same type data through the same channel can communicate with each other, otherwise, will cause a deadlock $\delta$. We define the following
communication functions.

$\gamma(r_{C_{AL}}(ENC_{pk_L}(r_A)),s_{C_{AL}}(ENC_{pk_L}(r_A)))\triangleq c_{C_{AL}}(ENC_{pk_L}(r_A))$

$\gamma(r_{C_{LA}}(ENC_{pk_A}(R_A)),s_{C_{LA}}(ENC_{pk_A}(R_A)))\triangleq c_{C_{LA}}(ENC_{pk_A}(R_A))$

$\gamma(r_{C_{AT}}(ENC_{pk_T}(I_A,R_A,v_A)),s_{C_{AT}}(ENC_{pk_T}(I_A,R_A,v_A)))\triangleq c_{C_{AT}}(ENC_{pk_T}(I_A,R_A,v_A))$

$\gamma(r_{C_{BL}}(ENC_{pk_L}(r_B)),s_{C_{BL}}(ENC_{pk_L}(r_B)))\triangleq c_{C_{BL}}(ENC_{pk_L}(r_B))$

$\gamma(r_{C_{LB}}(ENC_{pk_B}(R_B)),s_{C_{LB}}(ENC_{pk_B}(R_B)))\triangleq c_{C_{LB}}(ENC_{pk_B}(R_B))$

$\gamma(r_{C_{BT}}(ENC_{pk_T}(I_B,R_B,v_B)),s_{C_{BT}}(ENC_{pk_T}(I_B,R_B,v_B)))\triangleq c_{C_{BT}}(ENC_{pk_T}(I_B,R_B,v_B))$

$\gamma(r_{C_{CL}}(ENC_{pk_L}(r_C)),s_{C_{CL}}(ENC_{pk_L}(r_C)))\triangleq c_{C_{CL}}(ENC_{pk_L}(r_C))$

$\gamma(r_{C_{LC}}(ENC_{pk_C}(R_C)),s_{C_{LC}}(ENC_{pk_C}(R_C)))\triangleq c_{C_{LC}}(ENC_{pk_C}(R_C))$

$\gamma(r_{C_{CT}}(ENC_{pk_T}(I_C,R_C,v_C)),s_{C_{CT}}(ENC_{pk_T}(I_C,R_C,v_C)))\triangleq c_{C_{CT}}(ENC_{pk_T}(I_C,R_C,v_C))$

$\gamma(r_{C_{DL}}(ENC_{pk_L}(r_D)),s_{C_{DL}}(ENC_{pk_L}(r_D)))\triangleq c_{C_{DL}}(ENC_{pk_L}(r_D))$

$\gamma(r_{C_{LD}}(ENC_{pk_D}(R_D)),s_{C_{LD}}(ENC_{pk_D}(R_D)))\triangleq c_{C_{LD}}(ENC_{pk_D}(R_D))$

$\gamma(r_{C_{DT}}(ENC_{pk_T}(I_D,R_D,v_D)),s_{C_{DT}}(ENC_{pk_T}(I_D,R_D,v_D)))\triangleq c_{C_{DT}}(ENC_{pk_T}(I_D,R_D,v_D))$

Let all modules be in parallel, then the protocol $A\quad B\quad C\quad D\quad L\quad T$ can be presented by the following process term.

$$\tau_I(\partial_H(\Theta(A\between B\between C\between D\between L \between T)))=\tau_I(\partial_H(A\between B\between C\between D\between L\between T))$$

where $H=\{r_{C_{AL}}(ENC_{pk_L}(r_A)),s_{C_{AL}}(ENC_{pk_L}(r_A)),\\
r_{C_{LA}}(ENC_{pk_A}(R_A)),s_{C_{LA}}(ENC_{pk_A}(R_A)),\\
r_{C_{AT}}(ENC_{pk_T}(I_A,R_A,v_A)),s_{C_{AT}}(ENC_{pk_T}(I_A,R_A,v_A)),\\
r_{C_{BL}}(ENC_{pk_L}(r_B)),s_{C_{BL}}(ENC_{pk_L}(r_B)),\\
r_{C_{LB}}(ENC_{pk_B}(R_B)),s_{C_{LB}}(ENC_{pk_B}(R_B)),\\
r_{C_{BT}}(ENC_{pk_T}(I_B,R_B,v_B)),s_{C_{BT}}(ENC_{pk_T}(I_B,R_B,v_B)),\\
r_{C_{CL}}(ENC_{pk_L}(r_C)),s_{C_{CL}}(ENC_{pk_L}(r_C)),\\
r_{C_{LC}}(ENC_{pk_C}(R_C)),s_{C_{LC}}(ENC_{pk_C}(R_C)),\\
r_{C_{CT}}(ENC_{pk_T}(I_C,R_C,v_C)),s_{C_{CT}}(ENC_{pk_T}(I_C,R_C,v_C)),\\
r_{C_{DL}}(ENC_{pk_L}(r_D)),s_{C_{DL}}(ENC_{pk_L}(r_D)),\\
r_{C_{LD}}(ENC_{pk_D}(R_D)),s_{C_{LD}}(ENC_{pk_D}(R_D)),\\
r_{C_{DT}}(ENC_{pk_T}(I_D,R_D,v_D)),s_{C_{DT}}(ENC_{pk_T}(I_D,R_D,v_D))|D_A,D_B,D_C,D_D\in\Delta\}$,

$I=\{c_{C_{AL}}(ENC_{pk_L}(r_A)),c_{C_{LA}}(ENC_{pk_A}(R_A)),\\
c_{C_{AT}}(ENC_{pk_T}(I_A,R_A,v_A)),c_{C_{BL}}(ENC_{pk_L}(r_B)),\\
c_{C_{LB}}(ENC_{pk_B}(R_B)),c_{C_{BT}}(ENC_{pk_T}(I_B,R_B,v_B)),\\
c_{C_{CL}}(ENC_{pk_L}(r_C)),c_{C_{LC}}(ENC_{pk_C}(R_C)),\\
c_{C_{CT}}(ENC_{pk_T}(I_C,R_C,v_C)),c_{C_{DL}}(ENC_{pk_L}(r_D)),\\
c_{C_{LD}}(ENC_{pk_D}(R_D)),c_{C_{DT}}(ENC_{pk_T}(I_D,R_D,v_D)),\\
enc_{pk_L}(r_A),enc_{pk_L}(r_B),enc_{pk_L}(r_C),enc_{pk_L}(r_D),\\
dec_{sk_A}(ENC_{pk_A}(R_A)),dec_{sk_B}(ENC_{pk_B}(R_B)),dec_{sk_C}(ENC_{pk_C}(R_C)),\\
dec_{sk_D}(ENC_{pk_D}(R_D)),rsg_{I_A},rsg_{I_B},rsg_{I_C},rsg_{I_D},\\
enc_{pk_T}(I_A,R_A,v_A),enc_{pk_T}(I_B,R_B,v_B),enc_{pk_T}(I_C,R_C,v_C),enc_{pk_T}(I_D,R_D,v_D),\\
dec_{sk_L}(ENC_{pk_L}(r_A)), dec_{sk_L}(ENC_{pk_L}(r_B)),\\
dec_{sk_L}(ENC_{pk_L}(r_C)), dec_{sk_L}(ENC_{pk_L}(r_D)),\\
rec(A), rec(B), rec(C), rec(D),enc_{pk_A}(R_A), enc_{pk_B}(R_B),\\
enc_{pk_C}(R_C), enc_{pk_D}(R_D),dec_{sk_T}(ENC_{pk_T}(I_A,R_A,v_A)),\\
dec_{sk_T}(ENC_{pk_T}(I_B,R_B,v_B)),dec_{sk_T}(ENC_{pk_T}(I_C,R_C,v_C)),dec_{sk_T}(ENC_{pk_T}(I_D,R_D,v_D)),\\
\{isExisted(R_A)=TRUE\}, remove(R_A), rec(I_A,v_A),\{isExisted(R_A)=FALSE\},\\
\{isExisted(R_B)=TRUE\}, remove(R_B), rec(I_B,v_B),\{isExisted(R_B)=FALSE\},\\
\{isExisted(R_C)=TRUE\}, remove(R_C), rec(I_C,v_C),\{isExisted(R_C)=FALSE\},\\
\{isExisted(R_D)=TRUE\}, remove(R_D), rec(I_D,v_D),\{isExisted(R_D)=FALSE\}\\
|D_A,D_B,D_C,D_D\in\Delta\}$.

Then we get the following conclusion on the protocol.

\begin{theorem}
The secure elections protocol 4 in Figure \ref{SEP4} is improved based on the secure elections protocol 3.
\end{theorem}

\begin{proof}
Based on the above state transitions of the above modules, by use of the algebraic laws of $APTC_G$, we can prove that

$\tau_I(\partial_H(A\between B\between C\between D\between L\between T))=\sum_{D_A,D_B,D_C,D_D\in\Delta}(r_{C_{AI}}(D_A)\parallel r_{C_{BI}}(D_B)\parallel r_{C_{CI}}(D_C)\parallel r_{C_{DI}}(D_D)\cdot s_{C_{TO}}(TAB))\cdot
\tau_I(\partial_H(A\between B\between C\between D\between L\between T))$.

For the details of proof, please refer to section \ref{app}, and we omit it.

That is, the protocol in Figure \ref{SEP4} $\tau_I(\partial_H(A\between B\between C\between D\between L\between T))$ can exhibit desired external behaviors, and is secure. But, for the properties of
secure elections protocols:
\begin{enumerate}
  \item Legitimacy: only authorized voters can vote;
  \item Oneness: no one can vote more than once;
  \item Privacy: no one can determine for whom anyone else voted;
  \item Non-replicability: no one can duplicate anyone else's vote;
  \item Non-changeability: no one can change anyone else's vote;
  \item Validness: every voter can make sure that his vote has been taken into account in the final tabulation, if CLA and CTF are trustworthy.
\end{enumerate}

But, CLA and CTF still can conspire to distribute valid numbers to illegal voters.
\end{proof}

\subsection{Secure Elections Protocol 5}\label{sep5}

The secure elections protocol 5 is shown in Figure \ref{SEP5}, which is a improved one based on the secure elections protocol 4 in section \ref{sep4}. In this protocol, there are
a CTF (Central Tabulating Facility) to check the identity of voters and collect the votes, and four voters: Alice, Bob, Carol and Dave.

\begin{figure}
    \centering
    \includegraphics{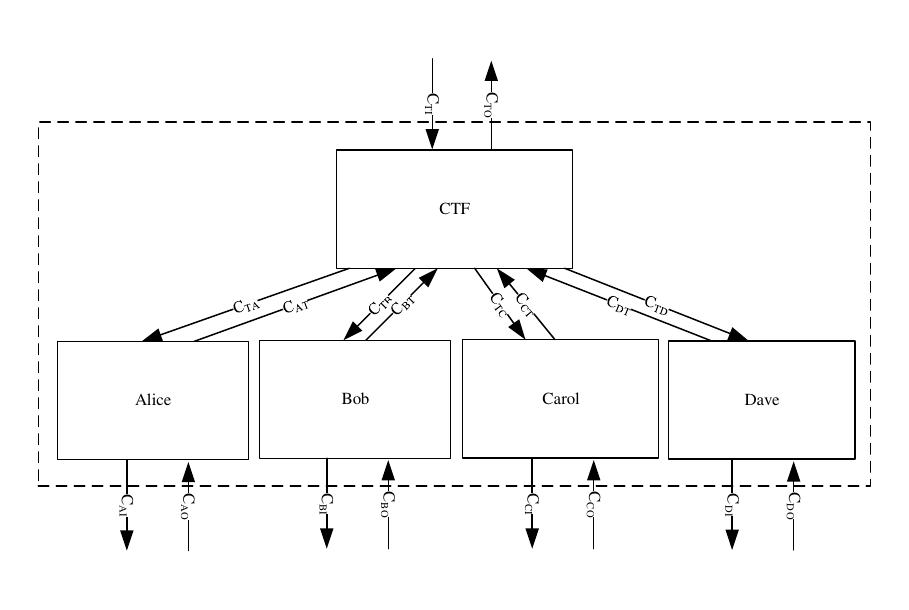}
    \caption{Secure elections protocol 5}
    \label{SEP5}
\end{figure}

The process of the protocol is as follows.

\begin{enumerate}
  \item Alice receives some voting request $D_A$ from the outside through the channel $C_{AI}$ (the corresponding reading action is denoted $r_{C_{AI}}(D_A)$), she generates a request
  $r_A$, encrypts it by CTF's public key through an action $enc_{pk_T}(r_A)$, and sends it to CTF through the channel $C_{AT}$ (the corresponding sending action is denoted
  $s_{C_{AT}}(ENC_{pk_T}(r_A))$);
  \item Bob receives some voting request $D_B$ from the outside through the channel $C_{BI}$ (the corresponding reading action is denoted $r_{C_{BI}}(D_B)$), he generates a request
  $r_B$, encrypts it by CTF's public key through an action $enc_{pk_T}(r_B)$, and sends it to CTF through the channel $C_{BT}$ (the corresponding sending action is denoted
  $s_{C_{BT}}(ENC_{pk_T}(r_B))$);
  \item Carol receives some voting request $D_C$ from the outside through the channel $C_{CI}$ (the corresponding reading action is denoted $r_{C_{CI}}(D_C)$), he generates a request
  $r_C$, encrypts it by CTF's public key through an action $enc_{pk_T}(r_C)$, and sends it to CTF through the channel $C_{CT}$ (the corresponding sending action is denoted
  $s_{C_{CT}}(ENC_{pk_T}(r_C))$);
  \item Dave receives some voting request $D_D$ from the outside through the channel $C_{DI}$ (the corresponding reading action is denoted $r_{C_{DI}}(D_D)$), he generates a request
  $r_D$, encrypts it by CTF's public key through an action $enc_{pk_T}(r_D)$, and sends it to CTF through the channel $C_{DT}$ (the corresponding sending action is denoted
  $s_{C_{DT}}(ENC_{pk_T}(r_D))$);
  \item CTF receives the requests from Alice, Bob, Carol and Dave through the channels $C_{AT}$, $C_{BT}$, $C_{CT}$ and $C_{DT}$ (the corresponding reading actions are denoted
  $s_{C_{AT}}(ENC_{pk_T}(r_A))$, $s_{C_{BT}}(ENC_{pk_T}(r_B))$, $s_{C_{CT}}(ENC_{pk_T}(r_C))$, and $s_{C_{DT}}(ENC_{pk_T}(r_D))$ respectively), he decrypts these encrypted requests
  through the actions $dec_{sk_T}(ENC_{pk_T}(r_A))$, $dec_{sk_T}(ENC_{pk_T}(r_B))$, $dec_{sk_T}(ENC_{pk_T}(r_C))$, and $dec_{sk_T}(ENC_{pk_T}(r_D))$ to get $r_A$, $r_B$, $r_C$ and $r_D$,
  records the names of Alice, Bob, Carol and Dave through actions $rec(A)$, $rec(B)$, $rec(C)$ and $rec(D)$; CTF maintain a table of valid numbers; then CTF
  encrypts all numbers $R$ through actions $enc_{pk_A}(R)$, $enc_{pk_B}(R)$, $enc_{pk_C}(R)$ and
  $enc_{pk_D}(R)$ and sends them to Alice, Bob, Carol and Dave through the channels $C_{TA}$, $C_{TB}$,
  $C_{TC}$ and $C_{TD}$ respectively (the corresponding sending action is denoted $s_{C_{TA}}(ENC_{pk_A}(R))$, $s_{C_{TB}}(ENC_{pk_B}(R))$, $s_{C_{TC}}(ENC_{pk_C}(R))$, and
  $s_{C_{TD}}(ENC_{pk_D}(R))$);
  \item Alice receives the encrypted number from CTF through the channel $C_{TA}$ (the corresponding reading action is denoted $r_{C_{TA}}(ENC_{pk_A}(R))$), she decrypts the encrypted
  number through an action $dec_{sk_A}(ENC_{pk_A}(R))$ to randomly select one $R_A$, generates a random identity number $I_A$ through an action $rsg_{I_A}$ and her vote $v_A$, encrypted $I_A,R_A,v_A$
  by CTF's public key through an action $enc_{pk_T}(I_A,R_A,v_A)$ and sends the encrypted message to CTF through the channel $C_{AT}$ (the corresponding sending action is denoted
  $s_{C_{AT}}(ENC_{pk_T}(I_A,R_A,v_A))$);
  \item Bob receives the encrypted number from CTF through the channel $C_{TB}$ (the corresponding reading action is denoted $r_{C_{TB}}(ENC_{pk_B}(R))$), he decrypts the encrypted
  number through an action $dec_{sk_B}(ENC_{pk_B}(R))$ to randomly select one $R_B$, generates a random identity number $I_B$ through an action $rsg_{I_B}$ and his vote $v_B$, encrypted $I_B,R_B,v_B$
  by CTF's public key through an action $enc_{pk_T}(I_B,R_B,v_B)$ and sends the encrypted message to CTF through the channel $C_{BT}$ (the corresponding sending action is denoted
  $s_{C_{BT}}(ENC_{pk_T}(I_B,R_B,v_B))$);
  \item Carol receives the encrypted number from CTF through the channel $C_{TC}$ (the corresponding reading action is denoted $r_{C_{TC}}(ENC_{pk_C}(R))$), he decrypts the encrypted
  number through an action $dec_{sk_C}(ENC_{pk_C}(R))$ to randomly select one $R_C$, generates a random identity number $I_C$ through an action $rsg_{I_C}$ and his vote $v_C$, encrypted $I_C,R_C,v_C$
  by CTF's public key through an action $enc_{pk_T}(I_C,R_C,v_C)$ and sends the encrypted message to CTF through the channel $C_{CT}$ (the corresponding sending action is denoted
  $s_{C_{CT}}(ENC_{pk_T}(I_C,R_C,v_C))$);
  \item Dave receives the encrypted number from CTF through the channel $C_{TD}$ (the corresponding reading action is denoted $r_{C_{TD}}(ENC_{pk_A}(R))$), he decrypts the encrypted
  number through an action $dec_{sk_D}(ENC_{pk_D}(R))$ to randomly select one $R_D$, generates a random identity number $I_D$ through an action $rsg_{I_D}$ and his vote $v_D$, encrypted $I_D,R_D,v_D$
  by CTF's public key through an action $enc_{pk_T}(I_D,R_D,v_D)$ and sends the encrypted message to CTF through the channel $C_{DT}$ (the corresponding sending action is denoted
  $s_{C_{DT}}(ENC_{pk_T}(I_D,R_D,v_D))$);
  \item CTF receives the encrypted messages from Alice, Bob, Carol and Dave through the channels $C_{AT}$, $C_{BT}$, $C_{CT}$ and $C_{DT}$ (the corresponding reading actions are denoted\\
  $r_{C_{AT}}(ENC_{pk_T}(I_A,R_A,v_A))$, $r_{C_{BT}}(ENC_{pk_T}(I_B,R_B,v_B))$, $r_{C_{CT}}(ENC_{pk_T}(I_C,R_C,v_C))$, and $r_{C_{DT}}(ENC_{pk_T}(I_D,R_D,v_D))$ respectively), he
  decrypts these encrypted messages through actions $dec_{sk_T}(ENC_{pk_T}(I_A,R_A,v_A))$, $dec_{sk_T}(ENC_{pk_T}(I_B,R_B,v_B))$, \\
  $dec_{sk_T}(ENC_{pk_T}(I_C,R_C,v_C))$, and
  $dec_{sk_T}(ENC_{pk_T}(I_D,R_D,v_D))$. If $isExisted(R_A)=TRUE$, he removes $R_A$ from its table through an action $remove(R_A)$, records the vote $v_A$ and the pair of $I_A$ and $v_A$
  into the voting results $TAB$ through an action $rec(I_A,v_A)$, else he does nothing; if $isExisted(R_B)=TRUE$, he removes $R_B$ from its table through an action $remove(R_B)$, records the vote $v_B$ and the pair of $I_B$ and $v_B$
  into the voting results $TAB$ through an action $rec(I_B,v_B)$, else he does nothing; if $isExisted(R_C)=TRUE$, he removes $R_C$ from its table through an action $remove(R_C)$, records the vote $v_C$ and the pair of $I_C$ and $v_C$
  into the voting results $TAB$ through an action $rec(I_C,v_C)$, else he does nothing; if $isExisted(R_D)=TRUE$, he removes $R_D$ from its table through an action $remove(R_D)$, records the vote $v_D$ and the pair of $I_D$ and $v_D$
  into the voting results $TAB$ through an action $rec(I_D,v_D)$, else he does nothing. Finally, he sends the voting results $TAB$ to the outside through the channel $C_{TO}$ (the corresponding sending action is denoted
  $s_{C_{TO}}(TAB)$).
\end{enumerate}

Where $D_A,D_B,D_C,D_D\in\Delta$, $\Delta$ is the set of data.

Alice's state transitions described by $APTC_G$ are as follows.

$A=\sum_{D_A\in\Delta}r_{C_{AI}}(D_A)\cdot A_2$

$A_2=enc_{pk_T}(r_A)\cdot A_3$

$A_3=s_{C_{AT}}(ENC_{pk_T}(r_A))\cdot A_4$

$A_4=r_{C_{TA}}(ENC_{pk_A}(R))\cdot A_5$

$A_5=dec_{sk_A}(ENC_{pk_A}(R))\cdot A_6$

$A_6=rsg_{I_A}\cdot A_7$

$A_7=enc_{pk_T}(I_A,R_A,v_A)\cdot A_8$

$A_8=s_{C_{AT}}(ENC_{pk_T}(I_A,R_A,v_A))\cdot A$

Bob's state transitions described by $APTC_G$ are as follows.

$B=\sum_{D_B\in\Delta}r_{C_{BI}}(D_B)\cdot B_2$

$B_2=enc_{pk_T}(r_B)\cdot B_3$

$B_3=s_{C_{BT}}(ENC_{pk_T}(r_B))\cdot B_4$

$B_4=r_{C_{TB}}(ENC_{pk_B}(R))\cdot B_5$

$B_5=dec_{sk_B}(ENC_{pk_B}(R))\cdot B_6$

$B_6=rsg_{I_B}\cdot B_7$

$B_7=enc_{pk_T}(I_B,R_B,v_B)\cdot B_8$

$B_8=s_{C_{BT}}(ENC_{pk_T}(I_B,R_B,v_B))\cdot B$

Carol's state transitions described by $APTC_G$ are as follows.

$C=\sum_{D_C\in\Delta}r_{C_{CI}}(D_C)\cdot C_2$

$C_2=enc_{pk_T}(r_C)\cdot C_3$

$C_3=s_{C_{CT}}(ENC_{pk_T}(r_C))\cdot C_4$

$C_4=r_{C_{TC}}(ENC_{pk_C}(R))\cdot C_5$

$C_5=dec_{sk_C}(ENC_{pk_C}(R))\cdot C_6$

$C_6=rsg_{I_C}\cdot C_7$

$C_7=enc_{pk_T}(I_C,R_C,v_C)\cdot C_8$

$C_8=s_{C_{CT}}(ENC_{pk_T}(I_C,R_C,v_C))\cdot C$

Dave's state transitions described by $APTC_G$ are as follows.

$D=\sum_{D_D\in\Delta}r_{C_{DI}}(D_D)\cdot D_2$

$D_2=enc_{pk_T}(r_D)\cdot D_3$

$D_3=s_{C_{DT}}(ENC_{pk_T}(r_D))\cdot D_4$

$D_4=r_{C_{TD}}(ENC_{pk_D}(R))\cdot D_5$

$D_5=dec_{sk_D}(ENC_{pk_D}(R))\cdot D_6$

$D_6=rsg_{I_D}\cdot D_7$

$D_7=enc_{pk_T}(I_D,R_D,v_D)\cdot D_8$

$D_8=s_{C_{DT}}(ENC_{pk_T}(I_D,R_D,v_D))\cdot D$

CTF's state transitions described by $APTC_G$ are as follows.

$T=s_{C_{AT}}(ENC_{pk_T}(r_A))\parallel s_{C_{BT}}(ENC_{pk_T}(r_B))\\
\parallel s_{C_{CT}}(ENC_{pk_T}(r_C))\parallel s_{C_{DT}}(ENC_{pk_T}(r_D))\cdot T_2$

$T_2=dec_{sk_T}(ENC_{pk_T}(r_A))\parallel dec_{sk_T}(ENC_{pk_T}(r_B))\\
\parallel dec_{sk_T}(ENC_{pk_T}(r_C))\parallel dec_{sk_T}(ENC_{pk_T}(r_D))\cdot T_3$

$T_3=rec(A)\parallel rec(B)\parallel rec(C)\parallel rec(D)\cdot T_4$

$T_4=enc_{pk_A}(R)\parallel enc_{pk_B}(R)\parallel enc_{pk_C}(R)\parallel enc_{pk_D}(R)\cdot T_5$

$T_5=s_{C_{TA}}(ENC_{pk_A}(R))\parallel s_{C_{TB}}(ENC_{pk_B}(R))\\
\parallel s_{C_{TC}}(ENC_{pk_C}(R))\parallel s_{C_{TD}}(ENC_{pk_D}(R))\cdot T_6$

$T_6=r_{C_{AT}}(ENC_{pk_T}(I_A,R_A,v_A))\parallel r_{C_{BT}}(ENC_{pk_T}(I_B,R_B,v_B))\\
\parallel r_{C_{CT}}(ENC_{pk_T}(I_C,R_C,v_C))\parallel r_{C_{DT}}(ENC_{pk_T}(I_D,R_D,v_D))\cdot T_7$

$T_7=dec_{sk_T}(ENC_{pk_T}(I_A,R_A,v_A))\parallel dec_{sk_T}(ENC_{pk_T}(I_B,R_B,v_B))\\
\parallel dec_{sk_T}(ENC_{pk_T}(I_C,R_C,v_C))\parallel dec_{sk_T}(ENC_{pk_T}(I_D,R_D,v_D))\cdot T_8$

$T_8=((\{isExisted(R_A)=TRUE\}\cdot remove(R_A)\cdot rec(I_A,v_A)+\{isExisted(R_A)=FALSE\})\\
\parallel (\{isExisted(R_B)=TRUE\}\cdot remove(R_B)\cdot rec(I_B,v_B)+\{isExisted(R_B)=FALSE\})\\
\parallel (\{isExisted(R_C)=TRUE\}\cdot remove(R_C)\cdot rec(I_C,v_C)+\{isExisted(R_C)=FALSE\})\\
\parallel (\{isExisted(R_D)=TRUE\}\cdot remove(R_D)\cdot rec(I_D,v_D)+\{isExisted(R_D)=FALSE\}))\cdot T_{9}$

$T_{9}=s_{C_{TO}}(TAB)\cdot T$

The sending action and the reading action of the same type data through the same channel can communicate with each other, otherwise, will cause a deadlock $\delta$. We define the following
communication functions.

$\gamma(r_{C_{AT}}(ENC_{pk_T}(r_A)),s_{C_{AT}}(ENC_{pk_T}(r_A)))\triangleq c_{C_{AT}}(ENC_{pk_T}(r_A))$

$\gamma(r_{C_{TA}}(ENC_{pk_A}(R)),s_{C_{TA}}(ENC_{pk_A}(R)))\triangleq c_{C_{TA}}(ENC_{pk_A}(R))$

$\gamma(r_{C_{AT}}(ENC_{pk_T}(I_A,R_A,v_A)),s_{C_{AT}}(ENC_{pk_T}(I_A,R_A,v_A)))\triangleq c_{C_{AT}}(ENC_{pk_T}(I_A,R_A,v_A))$

$\gamma(r_{C_{BT}}(ENC_{pk_T}(r_B)),s_{C_{BT}}(ENC_{pk_T}(r_B)))\triangleq c_{C_{BT}}(ENC_{pk_T}(r_B))$

$\gamma(r_{C_{TB}}(ENC_{pk_B}(R)),s_{C_{TB}}(ENC_{pk_B}(R)))\triangleq c_{C_{TB}}(ENC_{pk_B}(R))$

$\gamma(r_{C_{BT}}(ENC_{pk_T}(I_B,R_B,v_B)),s_{C_{BT}}(ENC_{pk_T}(I_B,R_B,v_B)))\triangleq c_{C_{BT}}(ENC_{pk_T}(I_B,R_B,v_B))$

$\gamma(r_{C_{CT}}(ENC_{pk_T}(r_C)),s_{C_{CT}}(ENC_{pk_T}(r_C)))\triangleq c_{C_{CT}}(ENC_{pk_T}(r_C))$

$\gamma(r_{C_{TC}}(ENC_{pk_C}(R)),s_{C_{TC}}(ENC_{pk_C}(R)))\triangleq c_{C_{TC}}(ENC_{pk_C}(R))$

$\gamma(r_{C_{CT}}(ENC_{pk_T}(I_C,R_C,v_C)),s_{C_{CT}}(ENC_{pk_T}(I_C,R_C,v_C)))\triangleq c_{C_{CT}}(ENC_{pk_T}(I_C,R_C,v_C))$

$\gamma(r_{C_{DT}}(ENC_{pk_T}(r_D)),s_{C_{DT}}(ENC_{pk_T}(r_D)))\triangleq c_{C_{DT}}(ENC_{pk_T}(r_D))$

$\gamma(r_{C_{TD}}(ENC_{pk_D}(R)),s_{C_{TD}}(ENC_{pk_D}(R)))\triangleq c_{C_{TD}}(ENC_{pk_D}(R))$

$\gamma(r_{C_{DT}}(ENC_{pk_T}(I_D,R_D,v_D)),s_{C_{DT}}(ENC_{pk_T}(I_D,R_D,v_D)))\triangleq c_{C_{DT}}(ENC_{pk_T}(I_D,R_D,v_D))$

Let all modules be in parallel, then the protocol $A\quad B\quad C\quad D\quad T$ can be presented by the following process term.

$$\tau_I(\partial_H(\Theta(A\between B\between C\between D\between T)))=\tau_I(\partial_H(A\between B\between C\between D\between T))$$

where $H=\{r_{C_{AT}}(ENC_{pk_T}(r_A)),s_{C_{AT}}(ENC_{pk_T}(r_A)),\\
r_{C_{TA}}(ENC_{pk_A}(R)),s_{C_{TA}}(ENC_{pk_A}(R)),\\
r_{C_{AT}}(ENC_{pk_T}(I_A,R_A,v_A)),s_{C_{AT}}(ENC_{pk_T}(I_A,R_A,v_A)),\\
r_{C_{BT}}(ENC_{pk_T}(r_B)),s_{C_{BT}}(ENC_{pk_T}(r_B)),\\
r_{C_{TB}}(ENC_{pk_B}(R)),s_{C_{TB}}(ENC_{pk_B}(R)),\\
r_{C_{BT}}(ENC_{pk_T}(I_B,R_B,v_B)),s_{C_{BT}}(ENC_{pk_T}(I_B,R_B,v_B)),\\
r_{C_{CT}}(ENC_{pk_T}(r_C)),s_{C_{CT}}(ENC_{pk_T}(r_C)),\\
r_{C_{TC}}(ENC_{pk_C}(R)),s_{C_{TC}}(ENC_{pk_C}(R)),\\
r_{C_{CT}}(ENC_{pk_T}(I_C,R_C,v_C)),s_{C_{CT}}(ENC_{pk_T}(I_C,R_C,v_C)),\\
r_{C_{DT}}(ENC_{pk_T}(r_D)),s_{C_{DT}}(ENC_{pk_T}(r_D)),\\
r_{C_{TD}}(ENC_{pk_D}(R)),s_{C_{TD}}(ENC_{pk_D}(R)),\\
r_{C_{DT}}(ENC_{pk_T}(I_D,R_D,v_D)),s_{C_{DT}}(ENC_{pk_T}(I_D,R_D,v_D))|D_A,D_B,D_C,D_D\in\Delta\}$,

$I=\{c_{C_{AT}}(ENC_{pk_T}(r_A)),c_{C_{TA}}(ENC_{pk_A}(R)),\\
c_{C_{AT}}(ENC_{pk_T}(I_A,R_A,v_A)),c_{C_{BT}}(ENC_{pk_T}(r_B)),\\
c_{C_{TB}}(ENC_{pk_B}(R)),c_{C_{BT}}(ENC_{pk_T}(I_B,R_B,v_B)),\\
c_{C_{CT}}(ENC_{pk_T}(r_C)),c_{C_{TC}}(ENC_{pk_C}(R)),\\
c_{C_{CT}}(ENC_{pk_T}(I_C,R_C,v_C)),c_{C_{DT}}(ENC_{pk_T}(r_D)),\\
c_{C_{TD}}(ENC_{pk_D}(R)),c_{C_{DT}}(ENC_{pk_T}(I_D,R_D,v_D)),\\
enc_{pk_T}(r_A),enc_{pk_T}(r_B),enc_{pk_T}(r_C),enc_{pk_T}(r_D),\\
dec_{sk_A}(ENC_{pk_A}(R)),dec_{sk_B}(ENC_{pk_B}(R)),dec_{sk_C}(ENC_{pk_C}(R)),\\
dec_{sk_D}(ENC_{pk_D}(R)),rsg_{I_A},rsg_{I_B},rsg_{I_C},rsg_{I_D},\\
enc_{pk_T}(I_A,R_A,v_A),enc_{pk_T}(I_B,R_B,v_B),enc_{pk_T}(I_C,R_C,v_C),enc_{pk_T}(I_D,R_D,v_D),\\
dec_{sk_T}(ENC_{pk_T}(r_A)), dec_{sk_T}(ENC_{pk_T}(r_B)),\\
dec_{sk_T}(ENC_{pk_T}(r_C)), dec_{sk_T}(ENC_{pk_T}(r_D)),\\
rec(A), rec(B), rec(C), rec(D),enc_{pk_A}(R), enc_{pk_B}(R),\\
enc_{pk_C}(R), enc_{pk_D}(R),dec_{sk_T}(ENC_{pk_T}(I_A,R_A,v_A)),\\
dec_{sk_T}(ENC_{pk_T}(I_B,R_B,v_B)),dec_{sk_T}(ENC_{pk_T}(I_C,R_C,v_C)),dec_{sk_T}(ENC_{pk_T}(I_D,R_D,v_D)),\\
\{isExisted(R_A)=TRUE\}, remove(R_A), rec(I_A,v_A),\{isExisted(R_A)=FALSE\},\\
\{isExisted(R_B)=TRUE\}, remove(R_B), rec(I_B,v_B),\{isExisted(R_B)=FALSE\},\\
\{isExisted(R_C)=TRUE\}, remove(R_C), rec(I_C,v_C),\{isExisted(R_C)=FALSE\},\\
\{isExisted(R_D)=TRUE\}, remove(R_D), rec(I_D,v_D),\{isExisted(R_D)=FALSE\}\\
|D_A,D_B,D_C,D_D\in\Delta\}$.

Then we get the following conclusion on the protocol.

\begin{theorem}
The secure elections protocol 5 in Figure \ref{SEP5} is improved based on the secure elections protocol 4.
\end{theorem}

\begin{proof}
Based on the above state transitions of the above modules, by use of the algebraic laws of $APTC_G$, we can prove that

$\tau_I(\partial_H(A\between B\between C\between D\between T))=\sum_{D_A,D_B,D_C,D_D\in\Delta}(r_{C_{AI}}(D_A)\parallel r_{C_{BI}}(D_B)\parallel r_{C_{CI}}(D_C)\parallel r_{C_{DI}}(D_D)\cdot s_{C_{TO}}(TAB))\cdot
\tau_I(\partial_H(A\between B\between C\between D\between T))$.

For the details of proof, please refer to section \ref{app}, and we omit it.

That is, the protocol in Figure \ref{SEP5} $\tau_I(\partial_H(A\between B\between C\between D\between T))$ can exhibit desired external behaviors, and is secure. But, for the properties of
secure elections protocols:
\begin{enumerate}
  \item Legitimacy: only authorized voters can vote;
  \item Oneness: no one can vote more than once;
  \item Privacy: no one can determine for whom anyone else voted;
  \item Non-replicability: no one can duplicate anyone else's vote;
  \item Non-changeability: no one can change anyone else's vote;
  \item Validness: every voter can make sure that his vote has been taken into account in the final tabulation, if CTF is trustworthy.
\end{enumerate}

The anonymous valid numbers distribution can avoid the distribution of valid numbers to illegal voters.
\end{proof}

\subsection{Secure Elections Protocol 6}\label{sep6}

The secure elections protocol 6 is shown in Figure \ref{SEP6}, which is a improved one based on the secure elections protocol 5 in section \ref{sep5}. In this protocol, there are
a CTF (Central Tabulating Facility) to check the identity of voters and collect the votes, and four voters: Alice, Bob, Carol and Dave.

\begin{figure}
    \centering
    \includegraphics{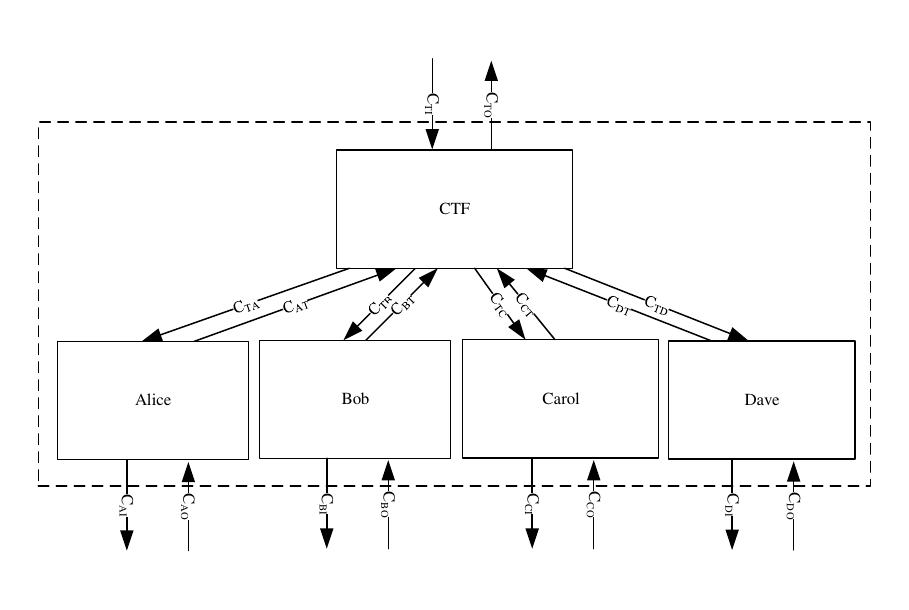}
    \caption{Secure elections protocol 6}
    \label{SEP6}
\end{figure}

The process of the protocol is as follows.

\begin{enumerate}
  \item Alice receives some voting request $D_A$ from the outside through the channel $C_{AI}$ (the corresponding reading action is denoted $r_{C_{AI}}(D_A)$), she generates a request
  $r_A$, encrypts it by CTF's public key through an action $enc_{pk_T}(r_A)$, and sends it to CTF through the channel $C_{AT}$ (the corresponding sending action is denoted
  $s_{C_{AT}}(ENC_{pk_T}(r_A))$);
  \item Bob receives some voting request $D_B$ from the outside through the channel $C_{BI}$ (the corresponding reading action is denoted $r_{C_{BI}}(D_B)$), he generates a request
  $r_B$, encrypts it by CTF's public key through an action $enc_{pk_T}(r_B)$, and sends it to CTF through the channel $C_{BT}$ (the corresponding sending action is denoted
  $s_{C_{BT}}(ENC_{pk_T}(r_B))$);
  \item Carol receives some voting request $D_C$ from the outside through the channel $C_{CI}$ (the corresponding reading action is denoted $r_{C_{CI}}(D_C)$), he generates a request
  $r_C$, encrypts it by CTF's public key through an action $enc_{pk_T}(r_C)$, and sends it to CTF through the channel $C_{CT}$ (the corresponding sending action is denoted
  $s_{C_{CT}}(ENC_{pk_T}(r_C))$);
  \item Dave receives some voting request $D_D$ from the outside through the channel $C_{DI}$ (the corresponding reading action is denoted $r_{C_{DI}}(D_D)$), he generates a request
  $r_D$, encrypts it by CTF's public key through an action $enc_{pk_T}(r_D)$, and sends it to CTF through the channel $C_{DT}$ (the corresponding sending action is denoted
  $s_{C_{DT}}(ENC_{pk_T}(r_D))$);
  \item CTF receives the requests from Alice, Bob, Carol and Dave through the channels $C_{AT}$, $C_{BT}$, $C_{CT}$ and $C_{DT}$ (the corresponding reading actions are denoted
  $s_{C_{AT}}(ENC_{pk_T}(r_A))$, $s_{C_{BT}}(ENC_{pk_T}(r_B))$, $s_{C_{CT}}(ENC_{pk_T}(r_C))$, and $s_{C_{DT}}(ENC_{pk_T}(r_D))$ respectively), he decrypts these encrypted requests
  through the actions $dec_{sk_T}(ENC_{pk_T}(r_A))$, $dec_{sk_T}(ENC_{pk_T}(r_B))$, $dec_{sk_T}(ENC_{pk_T}(r_C))$, and $dec_{sk_T}(ENC_{pk_T}(r_D))$ to get $r_A$, $r_B$, $r_C$ and $r_D$,
  records the names of Alice, Bob, Carol and Dave through actions $rec(A)$, $rec(B)$, $rec(C)$ and $rec(D)$; CTF maintain a table of valid numbers; then CTF
  encrypts all numbers $R$ through actions $enc_{pk_A}(R)$, $enc_{pk_B}(R)$, $enc_{pk_C}(R)$ and
  $enc_{pk_D}(R)$ and sends them to Alice, Bob, Carol and Dave through the channels $C_{TA}$, $C_{TB}$,
  $C_{TC}$ and $C_{TD}$ respectively (the corresponding sending action is denoted $s_{C_{TA}}(ENC_{pk_A}(R))$, $s_{C_{TB}}(ENC_{pk_B}(R))$, $s_{C_{TC}}(ENC_{pk_C}(R))$, and
  $s_{C_{TD}}(ENC_{pk_D}(R))$);
  \item Alice receives the encrypted number from CTF through the channel $C_{TA}$ (the corresponding reading action is denoted $r_{C_{TA}}(ENC_{pk_A}(R))$), she decrypts the encrypted
  number through an action $dec_{sk_A}(ENC_{pk_A}(R))$ to randomly select one $R_A$, generates a random identity number $I_A$ through an action $rsg_{I_A}$ and her vote $v_A$,
  generates a pair of public/private keys through an action $rsg_{pk'_A,sk'_A}$, encrypted $I_A,R_A,v_A$
  through an action $enc_{pk'_A}(I_A,R_A,v_A)$ and sends the encrypted message to CTF through the channel $C_{AT}$ (the corresponding sending action is denoted
  $s_{C_{AT}}(ENC_{pk'_A}(I_A,R_A,v_A))$);
  \item Bob receives the encrypted number from CTF through the channel $C_{TB}$ (the corresponding reading action is denoted $r_{C_{TB}}(ENC_{pk_B}(R))$), he decrypts the encrypted
  number through an action $dec_{sk_B}(ENC_{pk_B}(R))$ to randomly select one $R_B$, generates a random identity number $I_B$ through an action $rsg_{I_B}$ and his vote $v_B$,
  generates a pair of public/private keys through an action $rsg_{pk'_B,sk'_B}$, encrypted $I_A,R_A,v_A$
  through an action $enc_{pk'_B}(I_A,R_A,v_A)$ and sends the encrypted message to CTF through the channel $C_{BT}$ (the corresponding sending action is denoted
  $s_{C_{BT}}(ENC_{pk'_B}(I_B,R_B,v_B))$);
  \item Carol receives the encrypted number from CTF through the channel $C_{TC}$ (the corresponding reading action is denoted $r_{C_{TC}}(ENC_{pk_C}(R))$), he decrypts the encrypted
  number through an action $dec_{sk_C}(ENC_{pk_C}(R))$ to randomly select one $R_C$, generates a random identity number $I_C$ through an action $rsg_{I_C}$ and his vote $v_C$,
  generates a pair of public/private keys through an action $rsg_{pk'_C,sk'_C}$, encrypted $I_A,R_A,v_A$
  through an action $enc_{pk'_C}(I_A,R_A,v_A)$ and sends the encrypted message to CTF through the channel $C_{CT}$ (the corresponding sending action is denoted
  $s_{C_{CT}}(ENC_{pk'_C}(I_C,R_C,v_C))$);
  \item Dave receives the encrypted number from CTF through the channel $C_{TD}$ (the corresponding reading action is denoted $r_{C_{TD}}(ENC_{pk_A}(R))$), he decrypts the encrypted
  number through an action $dec_{sk_D}(ENC_{pk_D}(R))$ to randomly select one $R_D$, generates a random identity number $I_D$ through an action $rsg_{I_D}$ and his vote $v_D$,
  generates a pair of public/private keys through an action $rsg_{pk'_D,sk'_D}$, encrypted $I_A,R_A,v_A$
  through an action $enc_{pk'_D}(I_A,R_A,v_A)$ and sends the encrypted message to CTF through the channel $C_{DT}$ (the corresponding sending action is denoted
  $s_{C_{DT}}(ENC_{pk'_D}(I_D,R_D,v_D))$);
  \item CTF receives the encrypted messages from Alice, Bob, Carol and Dave through the channels $C_{AT}$, $C_{BT}$, $C_{CT}$ and $C_{DT}$ (the corresponding reading actions are denoted\\
  $r_{C_{AT}}(ENC_{pk'_A}(I_A,R_A,v_A))$, $r_{C_{BT}}(ENC_{pk'_B}(I_B,R_B,v_B))$, $r_{C_{CT}}(ENC_{pk'_C}(I_C,R_C,v_C))$, and $r_{C_{DT}}(ENC_{pk'_D}(I_D,R_D,v_D))$ respectively),
  he sends them to the outside through the channel $C_{TO}$ (the corresponding sending actions are denoted $s_{C_{TO}}(ENC_{pk'_A}(I_A,R_A,v_A))$, \\
  $s_{C_{TO}}(ENC_{pk'_B}(I_B,R_B,v_B))$,
  $s_{C_{TO}}(ENC_{pk'_C}(I_C,R_C,v_C))$, and $s_{C_{TO}}(ENC_{pk'_D}(I_D,R_D,v_D))$ respectively); then he sends the request $r_T$ to request the voter to reveal their votes through
  the channels $C_{TA}$, $C_{TB}$, $C_{TC}$ and $C_{TD}$ (the corresponding sending actions are denoted $s_{C_{TA}}(r_T)$, $s_{C_{TB}}(r_T)$, $s_{C_{TC}}(r_T)$, and $s_{C_{TD}}(r_T)$);
  \item Alice receives the request $r_T$ from CTF through the channel $C_{TA}$ (the corresponding reading action is denoted $r_{C_{TA}}(r_T)$), she sends $R_A,I_A.sk'_A$ to CTF through
  the channel $C_{AT}$ (the corresponding sending action is denoted $s_{C_{AT}}(R_A,I_A,sk'_A)$);
  \item Bob receives the request $r_T$ from CTF through the channel $C_{TB}$ (the corresponding reading action is denoted $r_{C_{TB}}(r_T)$), she sends $R_B,I_B.sk'_B$ to CTF through
  the channel $C_{BT}$ (the corresponding sending action is denoted $s_{C_{BT}}(R_B,I_B,sk'_B)$);
  \item Carol receives the request $r_T$ from CTF through the channel $C_{TC}$ (the corresponding reading action is denoted $r_{C_{TC}}(r_T)$), he sends $R_C,I_C.sk'_C$ to CTF through
  the channel $C_{CT}$ (the corresponding sending action is denoted $s_{C_{CT}}(R_C,I_C,sk'_C)$);
  \item Dave receives the request $r_T$ from CTF through the channel $C_{TD}$ (the corresponding reading action is denoted $r_{C_{TD}}(r_T)$), he sends $R_D,I_D.sk'_D$ to CTF through
  the channel $C_{DT}$ (the corresponding sending action is denoted $s_{C_{DT}}(R_D,I_D,sk'_D)$);
  \item CTF receives the message from Alice, Bob, Carol and Dave through the channels $C_{AT}$, $C_{BT}$, $C_{CT}$ and $C_{DT}$ (the corresponding reading actions are denoted
  $r_{C_{AT}}(R_A,I_A,sk'_A)$, $r_{C_{BT}}(R_B,I_B,sk'_B)$, $r_{C_{CT}}(R_C,I_C,sk'_C)$, and $r_{C_{DT}}(R_D,I_D,sk'_D)$ respectively), he decrypts the above encrypted messages
  through actions $dec_{sk'_A}(ENC_{pk'_A}(I_A,R_A,v_A))$, \\
  $dec_{sk'_B}(ENC_{pk'_B}(I_B,R_B,v_B))$, $dec_{sk'_C}(ENC_{pk'_C}(I_C,R_C,v_C))$, and\\
  $dec_{sk'_D}(ENC_{pk'_D}(I_D,R_D,v_D))$. If $isExisted(R_A)=TRUE$, he removes $R_A$ from its table through an action $remove(R_A)$, records the vote $v_A$ and the pair of $I_A$ and $v_A$
  into the voting results $TAB$ through an action $rec(I_A,v_A)$, else he does nothing; if $isExisted(R_B)=TRUE$, he removes $R_B$ from its table through an action $remove(R_B)$, records the vote $v_B$ and the pair of $I_B$ and $v_B$
  into the voting results $TAB$ through an action $rec(I_B,v_B)$, else he does nothing; if $isExisted(R_C)=TRUE$, he removes $R_C$ from its table through an action $remove(R_C)$, records the vote $v_C$ and the pair of $I_C$ and $v_C$
  into the voting results $TAB$ through an action $rec(I_C,v_C)$, else he does nothing; if $isExisted(R_D)=TRUE$, he removes $R_D$ from its table through an action $remove(R_D)$, records the vote $v_D$ and the pair of $I_D$ and $v_D$
  into the voting results $TAB$ through an action $rec(I_D,v_D)$, else he does nothing. Finally, he sends the voting results $TAB$ to the outside through the channel $C_{TO}$ (the corresponding sending action is denoted
  $s_{C_{TO}}(TAB)$).
\end{enumerate}

Where $D_A,D_B,D_C,D_D\in\Delta$, $\Delta$ is the set of data.

Alice's state transitions described by $APTC_G$ are as follows.

$A=\sum_{D_A\in\Delta}r_{C_{AI}}(D_A)\cdot A_2$

$A_2=enc_{pk_T}(r_A)\cdot A_3$

$A_3=s_{C_{AT}}(ENC_{pk_T}(r_A))\cdot A_4$

$A_4=r_{C_{TA}}(ENC_{pk_A}(R))\cdot A_5$

$A_5=dec_{sk_A}(ENC_{pk_A}(R))\cdot A_6$

$A_6=rsg_{I_A}\cdot A_7$

$A_7=rsg_{pk'_A,sk'_A}\cdot A_8$

$A_8=enc_{pk'_A}(I_A,R_A,v_A)\cdot A_9$

$A_9=s_{C_{AT}}(ENC_{pk'_A}(I_A,R_A,v_A))\cdot A_{10}$

$A_{10}=r_{C_{TA}}(r_T)\cdot A_{11}$

$A_{11}=s_{C_{AT}}(R_A,I_A,sk'_A)\cdot A$

Bob's state transitions described by $APTC_G$ are as follows.

$B=\sum_{D_B\in\Delta}r_{C_{BI}}(D_B)\cdot B_2$

$B_2=enc_{pk_T}(r_B)\cdot B_3$

$B_3=s_{C_{BT}}(ENC_{pk_T}(r_B))\cdot B_4$

$B_4=r_{C_{TB}}(ENC_{pk_B}(R))\cdot B_5$

$B_5=dec_{sk_B}(ENC_{pk_B}(R))\cdot B_6$

$B_6=rsg_{I_B}\cdot B_7$

$B_7=rsg_{pk'_B,sk'_B}\cdot B_8$

$B_8=enc_{pk'_B}(I_B,R_B,v_B)\cdot B_9$

$B_9=s_{C_{BT}}(ENC_{pk'_B}(I_B,R_B,v_B))\cdot B_{10}$

$B_{10}=r_{C_{TB}}(r_T)\cdot B_{11}$

$B_{11}=s_{C_{BT}}(R_B,I_B,sk'_B)\cdot B$

Carol's state transitions described by $APTC_G$ are as follows.

$C=\sum_{D_C\in\Delta}r_{C_{CI}}(D_C)\cdot C_2$

$C_2=enc_{pk_T}(r_C)\cdot C_3$

$C_3=s_{C_{CT}}(ENC_{pk_T}(r_C))\cdot C_4$

$C_4=r_{C_{TC}}(ENC_{pk_C}(R))\cdot C_5$

$C_5=dec_{sk_C}(ENC_{pk_C}(R))\cdot C_6$

$C_6=rsg_{I_C}\cdot C_7$

$C_7=rsg_{pk'_C,sk'_C}\cdot C_8$

$C_8=enc_{pk'_C}(I_C,R_C,v_C)\cdot C_9$

$C_9=s_{C_{CT}}(ENC_{pk'_C}(I_C,R_C,v_C))\cdot C_{10}$

$C_{10}=r_{C_{TC}}(r_T)\cdot C_{11}$

$C_{11}=s_{C_{CT}}(R_C,I_C,sk'_C)\cdot C$

Dave's state transitions described by $APTC_G$ are as follows.

$D=\sum_{D_D\in\Delta}r_{C_{DI}}(D_D)\cdot D_2$

$D_2=enc_{pk_T}(r_D)\cdot D_3$

$D_3=s_{C_{DT}}(ENC_{pk_T}(r_D))\cdot D_4$

$D_4=r_{C_{TD}}(ENC_{pk_D}(R))\cdot D_5$

$D_5=dec_{sk_D}(ENC_{pk_D}(R))\cdot D_6$

$D_6=rsg_{I_D}\cdot D_7$

$D_7=rsg_{pk'_D,sk'_D}\cdot D_8$

$D_8=enc_{pk'_D}(I_D,R_D,v_D)\cdot D_9$

$D_9=s_{C_{DT}}(ENC_{pk'_D}(I_D,R_D,v_D))\cdot D_{10}$

$D_{10}=r_{C_{TD}}(r_T)\cdot D_{11}$

$D_{11}=s_{C_{DT}}(R_D,I_D,sk'_D)\cdot D$

CTF's state transitions described by $APTC_G$ are as follows.

$T=s_{C_{AT}}(ENC_{pk_T}(r_A))\parallel s_{C_{BT}}(ENC_{pk_T}(r_B))\\
\parallel s_{C_{CT}}(ENC_{pk_T}(r_C))\parallel s_{C_{DT}}(ENC_{pk_T}(r_D))\cdot T_2$

$T_2=dec_{sk_T}(ENC_{pk_T}(r_A))\parallel dec_{sk_T}(ENC_{pk_T}(r_B))\\
\parallel dec_{sk_T}(ENC_{pk_T}(r_C))\parallel dec_{sk_T}(ENC_{pk_T}(r_D))\cdot T_3$

$T_3=rec(A)\parallel rec(B)\parallel rec(C)\parallel rec(D)\cdot T_4$

$T_4=enc_{pk_A}(R)\parallel enc_{pk_B}(R)\parallel enc_{pk_C}(R)\parallel enc_{pk_D}(R)\cdot T_5$

$T_5=s_{C_{TA}}(ENC_{pk_A}(R))\parallel s_{C_{TB}}(ENC_{pk_B}(R))\\
\parallel s_{C_{TC}}(ENC_{pk_C}(R))\parallel s_{C_{TD}}(ENC_{pk_D}(R))\cdot T_6$

$T_6=r_{C_{AT}}(ENC_{pk'_A}(I_A,R_A,v_A))\parallel r_{C_{BT}}(ENC_{pk'_B}(I_B,R_B,v_B))\\
\parallel r_{C_{CT}}(ENC_{pk'_C}(I_C,R_C,v_C))\parallel r_{C_{DT}}(ENC_{pk'_D}(I_D,R_D,v_D))\cdot T_7$

$T_7=s_{C_{TO}}(ENC_{pk'_A}(I_A,R_A,v_A))\parallel s_{C_{TO}}(ENC_{pk'_B}(I_B,R_B,v_B))\\
\parallel s_{C_{TO}}(ENC_{pk'_C}(I_C,R_C,v_C))\parallel s_{C_{TO}}(ENC_{pk'_D}(I_D,R_D,v_D))\cdot T_8$

$T_8=s_{C_{TA}}(r_T)\parallel s_{C_{TB}}(r_T)\parallel s_{C_{TC}}(r_T)\parallel s_{C_{TD}}(r_T)\cdot T_9$

$T_9=r_{C_{AT}}(R_A,I_A,sk'_A)\parallel r_{C_{BT}}(R_B,I_B,sk'_B)\\
\parallel r_{C_{CT}}(R_C,I_C,sk'_C)\parallel r_{C_{DT}}(R_D,I_D,sk'_D)\cdot T_{10}$

$T_{10}=dec_{sk'_A}(ENC_{pk'_A}(I_A,R_A,v_A))\parallel dec_{sk'_b}(ENC_{pk'_B}(I_B,R_B,v_B))\\
\parallel dec_{sk'_C}(ENC_{pk'_C}(I_C,R_C,v_C))\parallel dec_{sk'_D}(ENC_{pk'_D}(I_D,R_D,v_D))\cdot T_{11}$

$T_{11}=((\{isExisted(R_A)=TRUE\}\cdot remove(R_A)\cdot rec(I_A,v_A)+\{isExisted(R_A)=FALSE\})\\
\parallel (\{isExisted(R_B)=TRUE\}\cdot remove(R_B)\cdot rec(I_B,v_B)+\{isExisted(R_B)=FALSE\})\\
\parallel (\{isExisted(R_C)=TRUE\}\cdot remove(R_C)\cdot rec(I_C,v_C)+\{isExisted(R_C)=FALSE\})\\
\parallel (\{isExisted(R_D)=TRUE\}\cdot remove(R_D)\cdot rec(I_D,v_D)+\{isExisted(R_D)=FALSE\}))\cdot T_{12}$

$T_{12}=s_{C_{TO}}(TAB)\cdot T$

The sending action and the reading action of the same type data through the same channel can communicate with each other, otherwise, will cause a deadlock $\delta$. We define the following
communication functions.

$\gamma(r_{C_{AT}}(ENC_{pk_T}(r_A)),s_{C_{AT}}(ENC_{pk_T}(r_A)))\triangleq c_{C_{AT}}(ENC_{pk_T}(r_A))$

$\gamma(r_{C_{TA}}(ENC_{pk_A}(R)),s_{C_{TA}}(ENC_{pk_A}(R)))\triangleq c_{C_{TA}}(ENC_{pk_A}(R))$

$\gamma(r_{C_{AT}}(ENC_{pk'_A}(I_A,R_A,v_A)),s_{C_{AT}}(ENC_{pk'_A}(I_A,R_A,v_A)))\triangleq c_{C_{AT}}(ENC_{pk'_A}(I_A,R_A,v_A))$

$\gamma(r_{C_{TA}}(r_T),s_{C_{TA}}(r_T))\triangleq c_{C_{TA}}(r_T)$

$\gamma(r_{C_{AT}}(R_A,I_A,sk'_A),s_{C_{AT}}(R_A,I_A,sk'_A))\triangleq c_{C_{AT}}(R_A,I_A,sk'_A)$

$\gamma(r_{C_{BT}}(ENC_{pk_T}(r_B)),s_{C_{BT}}(ENC_{pk_T}(r_B)))\triangleq c_{C_{BT}}(ENC_{pk_T}(r_B))$

$\gamma(r_{C_{TB}}(ENC_{pk_B}(R)),s_{C_{TB}}(ENC_{pk_B}(R)))\triangleq c_{C_{TB}}(ENC_{pk_B}(R))$

$\gamma(r_{C_{BT}}(ENC_{pk'_B}(I_B,R_B,v_B)),s_{C_{BT}}(ENC_{pk'_B}(I_B,R_B,v_B)))\triangleq c_{C_{BT}}(ENC_{pk'_B}(I_B,R_B,v_B))$

$\gamma(r_{C_{TB}}(r_T),s_{C_{TB}}(r_T))\triangleq c_{C_{TB}}(r_T)$

$\gamma(r_{C_{BT}}(R_B,I_B,sk'_B),s_{C_{BT}}(R_B,I_B,sk'_B))\triangleq c_{C_{BT}}(R_B,I_B,sk'_B)$

$\gamma(r_{C_{CT}}(ENC_{pk_T}(r_C)),s_{C_{CT}}(ENC_{pk_T}(r_C)))\triangleq c_{C_{CT}}(ENC_{pk_T}(r_C))$

$\gamma(r_{C_{TC}}(ENC_{pk_C}(R)),s_{C_{TC}}(ENC_{pk_C}(R)))\triangleq c_{C_{TC}}(ENC_{pk_C}(R))$

$\gamma(r_{C_{CT}}(ENC_{pk'_C}(I_C,R_C,v_C)),s_{C_{CT}}(ENC_{pk'_C}(I_C,R_C,v_C)))\triangleq c_{C_{CT}}(ENC_{pk'_C}(I_C,R_C,v_C))$

$\gamma(r_{C_{TC}}(r_T),s_{C_{TC}}(r_T))\triangleq c_{C_{TC}}(r_T)$

$\gamma(r_{C_{CT}}(R_C,I_C,sk'_C),s_{C_{CT}}(R_C,I_C,sk'_C))\triangleq c_{C_{CT}}(R_C,I_C,sk'_C)$

$\gamma(r_{C_{DT}}(ENC_{pk_T}(r_D)),s_{C_{DT}}(ENC_{pk_T}(r_D)))\triangleq c_{C_{DT}}(ENC_{pk_T}(r_D))$

$\gamma(r_{C_{TD}}(ENC_{pk_D}(R)),s_{C_{TD}}(ENC_{pk_D}(R)))\triangleq c_{C_{TD}}(ENC_{pk_D}(R))$

$\gamma(r_{C_{DT}}(ENC_{pk'_D}(I_D,R_D,v_D)),s_{C_{DT}}(ENC_{pk'_D}(I_D,R_D,v_D)))\triangleq c_{C_{DT}}(ENC_{pk'_D}(I_D,R_D,v_D))$

$\gamma(r_{C_{TD}}(r_T),s_{C_{TD}}(r_T))\triangleq c_{C_{TD}}(r_T)$

$\gamma(r_{C_{DT}}(R_D,I_D,sk'_D),s_{C_{DT}}(R_D,I_D,sk'_D))\triangleq c_{C_{DT}}(R_D,I_D,sk'_D)$

Let all modules be in parallel, then the protocol $A\quad B\quad C\quad D\quad T$ can be presented by the following process term.

$$\tau_I(\partial_H(\Theta(A\between B\between C\between D\between T)))=\tau_I(\partial_H(A\between B\between C\between D\between T))$$

where $H=\{r_{C_{AT}}(ENC_{pk_T}(r_A)),s_{C_{AT}}(ENC_{pk_T}(r_A)),\\
r_{C_{TA}}(ENC_{pk_A}(R)),s_{C_{TA}}(ENC_{pk_A}(R)),\\
r_{C_{AT}}(ENC_{pk'_A}(I_A,R_A,v_A)),s_{C_{AT}}(ENC_{pk'_A}(I_A,R_A,v_A)),\\
r_{C_{TA}}(r_T),s_{C_{TA}}(r_T),\\
r_{C_{AT}}(R_A,I_A,sk'_A),s_{C_{AT}}(R_A,I_A,sk'_A),\\
r_{C_{BT}}(ENC_{pk_T}(r_B)),s_{C_{BT}}(ENC_{pk_T}(r_B)),\\
r_{C_{TB}}(ENC_{pk_B}(R)),s_{C_{TB}}(ENC_{pk_B}(R)),\\
r_{C_{BT}}(ENC_{pk'_B}(I_B,R_B,v_B)),s_{C_{BT}}(ENC_{pk'_B}(I_B,R_B,v_B)),\\
r_{C_{TB}}(r_T),s_{C_{TB}}(r_T),\\
r_{C_{BT}}(R_B,I_B,sk'_B),s_{C_{BT}}(R_B,I_B,sk'_B),\\
r_{C_{CT}}(ENC_{pk_T}(r_C)),s_{C_{CT}}(ENC_{pk_T}(r_C)),\\
r_{C_{TC}}(ENC_{pk_C}(R)),s_{C_{TC}}(ENC_{pk_C}(R)),\\
r_{C_{CT}}(ENC_{pk'_C}(I_C,R_C,v_C)),s_{C_{CT}}(ENC_{pk'_C}(I_C,R_C,v_C)),\\
r_{C_{TC}}(r_T),s_{C_{TC}}(r_T),\\
r_{C_{CT}}(R_C,I_C,sk'_C),s_{C_{CT}}(R_C,I_C,sk'_C),\\
r_{C_{DT}}(ENC_{pk_T}(r_D)),s_{C_{DT}}(ENC_{pk_T}(r_D)),\\
r_{C_{TD}}(ENC_{pk_D}(R)),s_{C_{TD}}(ENC_{pk_D}(R)),\\
r_{C_{DT}}(ENC_{pk'_D}(I_D,R_D,v_D)),s_{C_{DT}}(ENC_{pk'_D}(I_D,R_D,v_D)),\\
r_{C_{TD}}(r_T),s_{C_{TD}}(r_T),\\
r_{C_{DT}}(R_D,I_D,sk'_D),s_{C_{DT}}(R_D,I_D,sk'_D)|D_A,D_B,D_C,D_D\in\Delta\}$,

$I=\{c_{C_{AT}}(ENC_{pk_T}(r_A)),c_{C_{TA}}(ENC_{pk_A}(R)),\\
c_{C_{AT}}(ENC_{pk'_A}(I_A,R_A,v_A)),c_{C_{BT}}(ENC_{pk_T}(r_B)),\\
c_{C_{TB}}(ENC_{pk_B}(R)),c_{C_{BT}}(ENC_{pk'_B}(I_B,R_B,v_B)),\\
c_{C_{CT}}(ENC_{pk_T}(r_C)),c_{C_{TC}}(ENC_{pk_C}(R)),\\
c_{C_{CT}}(ENC_{pk'_C}(I_C,R_C,v_C)),c_{C_{DT}}(ENC_{pk_T}(r_D)),\\
c_{C_{TD}}(ENC_{pk_D}(R)),c_{C_{DT}}(ENC_{pk'_D}(I_D,R_D,v_D)),\\
c_{C_{TA}}(r_T),c_{C_{AT}}(R_A,I_A,sk'_A),c_{C_{TB}}(r_T),c_{C_{BT}}(R_B,I_B,sk'_B),\\
c_{C_{TC}}(r_T),c_{C_{CT}}(R_C,I_C,sk'_C),c_{C_{TD}}(r_T),c_{C_{DT}}(R_D,I_D,sk'_D),\\
enc_{pk_T}(r_A),enc_{pk_T}(r_B),enc_{pk_T}(r_C),enc_{pk_T}(r_D),\\
dec_{sk_A}(ENC_{pk_A}(R)),dec_{sk_B}(ENC_{pk_B}(R)),dec_{sk_C}(ENC_{pk_C}(R)),\\
dec_{sk_D}(ENC_{pk_D}(R)),rsg_{I_A},rsg_{I_B},rsg_{I_C},rsg_{I_D},\\
rsg_{pk'_A,sk'_A},rsg_{pk'_B,sk'_B},rsg_{pk'_C,sk'_C},rsg_{pk'_D,sk'_D},\\
enc_{pk'_A}(I_A,R_A,v_A),enc_{pk'_B}(I_B,R_B,v_B),enc_{pk'_C}(I_C,R_C,v_C),enc_{pk'_D}(I_D,R_D,v_D),\\
dec_{sk_T}(ENC_{pk_T}(r_A)), dec_{sk_T}(ENC_{pk_T}(r_B)),\\
dec_{sk_T}(ENC_{pk_T}(r_C)), dec_{sk_T}(ENC_{pk_T}(r_D)),\\
rec(A), rec(B), rec(C), rec(D),enc_{pk_A}(R), enc_{pk_B}(R),\\
enc_{pk_C}(R), enc_{pk_D}(R),dec_{sk'_A}(ENC_{pk'_B}(I_A,R_A,v_A)),\\
dec_{sk'_B}(ENC_{pk'_B}(I_B,R_B,v_B)),dec_{sk'_C}(ENC_{pk'_C}(I_C,R_C,v_C)),dec_{sk'_D}(ENC_{pk'_D}(I_D,R_D,v_D)),\\
\{isExisted(R_A)=TRUE\}, remove(R_A), rec(I_A,v_A),\{isExisted(R_A)=FALSE\},\\
\{isExisted(R_B)=TRUE\}, remove(R_B), rec(I_B,v_B),\{isExisted(R_B)=FALSE\},\\
\{isExisted(R_C)=TRUE\}, remove(R_C), rec(I_C,v_C),\{isExisted(R_C)=FALSE\},\\
\{isExisted(R_D)=TRUE\}, remove(R_D), rec(I_D,v_D),\{isExisted(R_D)=FALSE\}\\
|D_A,D_B,D_C,D_D\in\Delta\}$.

Then we get the following conclusion on the protocol.

\begin{theorem}
The secure elections protocol 6 in Figure \ref{SEP6} is improved based on the secure elections protocol 5.
\end{theorem}

\begin{proof}
Based on the above state transitions of the above modules, by use of the algebraic laws of $APTC_G$, we can prove that

$\tau_I(\partial_H(A\between B\between C\between D\between T))=\sum_{D_A,D_B,D_C,D_D\in\Delta}((r_{C_{AI}}(D_A)\parallel r_{C_{BI}}(D_B)\parallel r_{C_{CI}}(D_C)\parallel r_{C_{DI}}(D_D))
\cdot(s_{C_{TO}}(ENC_{pk'_A}(I_A,R_A,v_A))\parallel s_{C_{TO}}(ENC_{pk'_B}(I_B,R_B,v_B))\parallel s_{C_{TO}}(ENC_{pk'_C}(I_C,R_C,v_C))\parallel s_{C_{TO}}(ENC_{pk'_D}(I_D,R_D,v_D)))\cdot s_{C_{TO}}(TAB))\cdot
\tau_I(\partial_H(A\between B\between C\between D\between T))$.

For the details of proof, please refer to section \ref{app}, and we omit it.

That is, the protocol in Figure \ref{SEP6} $\tau_I(\partial_H(A\between B\between C\between D\between T))$ can exhibit desired external behaviors, and is secure. But, for the properties of
secure elections protocols:
\begin{enumerate}
  \item Legitimacy: only authorized voters can vote;
  \item Oneness: no one can vote more than once;
  \item Privacy: no one can determine for whom anyone else voted;
  \item Non-replicability: no one can duplicate anyone else's vote;
  \item Non-changeability: no one can change anyone else's vote;
  \item Validness: every voter can make sure that his vote has been taken into account in the final tabulation, if CTF is trustworthy.
\end{enumerate}

Additionally, (1) If a voter observes that his vote is not properly counted, he can protest; (2) A voter can change his votes later.
\end{proof}

%\newpage\input{section11.tex}
%\newpage\input{section12.tex}
\newpage

\end{document}